\documentclass[pra, aps,superscriptaddress,nofootinbib,10pt]{revtex4-2}

\usepackage[a4paper, left=1in, right=1in, top=1in, bottom=1in]{geometry}

\usepackage{cmap} 
\usepackage[utf8]{inputenc}
\usepackage[english]{babel}
\usepackage[T1]{fontenc}
\usepackage{appendix}
\usepackage{braket}
\usepackage{url, amsfonts, tikz, physics, listings, xcolor, graphicx, float}
\usepackage[shortlabels]{enumitem}
\usepackage{dsfont}
\usepackage{amsmath, amssymb, amstext, amscd, amsthm, makeidx, graphicx, url, mathrsfs, mathtools, longdivision, polynom, bbm, complexity}
\usepackage{nicefrac}
\usepackage[linesnumbered,ruled,vlined]{algorithm2e}
\usepackage{xcolor}
\usepackage[normalem]{ulem}
\usepackage{booktabs}
\usepackage{algorithmicx}
\usepackage{natbib}

\definecolor{blueviolet}{rgb}{0.2, 0.2, 0.6}
\definecolor{webgreen}{rgb}{0,.5,0}
\definecolor{webbrown}{rgb}{.6,0,0}
\usepackage[pdftex,
  bookmarks=false,
  colorlinks=true, 
  urlcolor=webbrown,
  linkcolor=blueviolet, 
  citecolor=webgreen,
  pdfstartpage=1,
  pdfstartview={FitH},  
  bookmarksopen=false
  ]{hyperref}
\usepackage{cleveref}

\makeatletter
\newcommand\RedeclareMathOperator{%
  \@ifstar{\def\rmo@s{m}\rmo@redeclare}{\def\rmo@s{o}\rmo@redeclare}%
}

\newcommand\rmo@redeclare[2]{%
  \begingroup \escapechar\m@ne\xdef\@gtempa{{\string#1}}\endgroup
  \expandafter\@ifundefined\@gtempa
     {\@latex@error{\noexpand#1undefined}\@ehc}%
     \relax
  \expandafter\rmo@declmathop\rmo@s{#1}{#2}}
\newcommand\rmo@declmathop[3]{%
  \DeclareRobustCommand{#2}{\qopname\newmcodes@#1{#3}}%
}
\@onlypreamble\RedeclareMathOperator
\makeatother

\allowdisplaybreaks

\RedeclareMathOperator*{\E}{{\mathbb{E}}}


\makeatletter
\DeclareFontFamily{OMX}{MnSymbolE}{}
\DeclareSymbolFont{MnLargeSymbols}{OMX}{MnSymbolE}{m}{n}
\SetSymbolFont{MnLargeSymbols}{bold}{OMX}{MnSymbolE}{b}{n}
\DeclareFontShape{OMX}{MnSymbolE}{m}{n}{
    <-6>  MnSymbolE5
   <6-7>  MnSymbolE6
   <7-8>  MnSymbolE7
   <8-9>  MnSymbolE8
   <9-10> MnSymbolE9
  <10-12> MnSymbolE10
  <12->   MnSymbolE12
}{}
\DeclareFontShape{OMX}{MnSymbolE}{b}{n}{
    <-6>  MnSymbolE-Bold5
   <6-7>  MnSymbolE-Bold6
   <7-8>  MnSymbolE-Bold7
   <8-9>  MnSymbolE-Bold8
   <9-10> MnSymbolE-Bold9
  <10-12> MnSymbolE-Bold10
  <12->   MnSymbolE-Bold12
}{}

\let\llangle\@undefined
\let\rrangle\@undefined
\DeclareMathDelimiter{\llangle}{\mathopen}%
                     {MnLargeSymbols}{'164}{MnLargeSymbols}{'164}
\DeclareMathDelimiter{\rrangle}{\mathclose}%
                     {MnLargeSymbols}{'171}{MnLargeSymbols}{'171}
\makeatother

\newtheorem{theorem}{Theorem}
\newtheorem{prop}{Proposition}
\newtheorem{lemma}{Lemma}
\newtheorem{corollary}{Corollary}
\newtheorem{definition}{Definition}

\newtheorem{claim}{Claim}

\newcommand{\indicator}{\mathds{1}}

\newcommand{\nocontentsline}[3]{}
\let\origcontentsline\addcontentsline
\newcommand\stoptoc{\let\addcontentsline\nocontentsline}
\newcommand\resumetoc{\let\addcontentsline\origcontentsline}

\begin{document}

\title{Quantum advantage for learning shallow neural networks with natural data distributions}

\author{Laura Lewis}
\email{llewis@alumni.caltech.edu}
\affiliation{Google Quantum AI, Venice, CA, USA}
\affiliation{University of Cambridge, Cambridge, UK}
\affiliation{University of Edinburgh, Edinburgh, UK}

\author{Dar Gilboa}
\affiliation{Google Quantum AI, Venice, CA, USA}

\author{Jarrod R. McClean}
\affiliation{Google Quantum AI, Venice, CA, USA}

\begin{abstract}
Without large quantum computers to empirically evaluate performance, theoretical frameworks such as the quantum statistical query (QSQ) are a primary tool to study quantum algorithms for learning classical functions and search for quantum advantage in machine learning tasks.
However, we only understand quantum advantage in this model at two extremes: either exponential advantages for uniform input distributions or no advantage for arbitrary distributions.
Our work helps close the gap between these two regimes by designing an efficient quantum algorithm for learning periodic neurons in the QSQ model over a variety of non-uniform distributions and the first explicit treatment of real-valued functions.
We prove that this problem is hard not only for classical gradient-based algorithms, which are the workhorses of machine learning, but also for a more general class of SQ algorithms, establishing an exponential quantum advantage.
\end{abstract}

\maketitle

\stoptoc
\section{Introduction}

Machine learning (ML) is currently experiencing explosive success, made possible by an overwhelming growth of compute power, data availability, and improved models~\cite{LeCun2015-ot, Brown2020-uq, wei2022emergent, jumper2021highly}.
In parallel, quantum technology is also witnessing remarkable progress, including breakthroughs in quantum error correction~\cite{lacroix2024scaling,acharya2024quantum,eickbusch2024demonstrating,rodriguez2024experimental,reichardt2024logical,bravyi2024high,reichardt2024demonstration,da2024demonstration,caune2024demonstrating,zhou2024algorithmic,putterman2024hardware,lee2024low,wills2024constant,nguyen2024quantum,gidney2024magic} and demonstrations of computations beyond the known limits of classical computers~\cite{arute2019quantum,zhong2020quantum,wu2021strong,zhu2022quantum,acharya2024quantum,huang2022quantum,zhu2023interactive,lewis2024experimental}.
Given that our universe is inherently quantum, it is natural to consider leveraging powerful quantum computers for ML tasks, in hopes of new scientific advancements~\cite{huang2022quantum,aimeur2006machine,aimeur2013quantum,wiebe2012quantum,wiebe2014quantum,harrow2009quantum,kapoor2016quantum,lloyd2013quantum,lloyd2014quantum,rebentrost2014quantum,lloyd2016quantum,cong2016quantum,kerenidis2016quantum,brandao2019quantum,rebentrost2018quantum,zhao2019quantum}.
However, modern classical ML is mainly driven by empirical success, extending far beyond our theoretical understanding.
In contrast, quantum technologies are still in their infancy, where we cannot yet accurately train and test large quantum ML models.
Thus, we must rely on the rigorous frameworks of learning theory to characterize the performance of quantum learning algorithms and their potential advantage over classical learners.

One possible avenue for quantum advantage is to use quantum algorithms to learn classical objects, e.g., classical functions~\cite{bshouty1995learning,jackson2002quantum,arunachalam2020quantum,atici2007quantum,cross2015quantum,bernstein1993quantum,arunachalam2021two,grilo2019learning,kanade2018learning,caro2020quantum,nadimpalli2024pauli,arunachalam2024learning,montanaro2012quantum,servedio2004equivalences,gavinsky2008quantum,caro2024testing} or distributions~\cite{hinsche2021learnability,hinsche2022single,nietner2023average}.
Such results commonly consider the quantum counterparts of frameworks such as probably approximately correct (PAC)~\cite{valiant1984theory} and statistical query (SQ) learning~\cite{kearns1998efficient}, appropriately called quantum PAC~\cite{bshouty1995learning} and quantum SQ (QSQ)~\cite{arunachalam2020quantum}, respectively.
In particular, some exciting results show that there exist function classes for which quantum PAC/QSQ algorithms can provide exponential sample complexity advantages over classical learners when the input data distribution is uniform~\cite{bshouty1995learning,arunachalam2020quantum,atici2007quantum,cross2015quantum,bernstein1993quantum,arunachalam2021two,grilo2019learning,nadimpalli2024pauli,arunachalam2024learning}.
This is in stark contrast to the seminal result proving there is no quantum advantage for arbitrary distributions~\cite{arunachalam2018optimal,atici2005improved,servedio2004equivalences,zhang2010improved}.
The void between exponential advantages on idealized uniform distributions and no advantage on potentially adversarial distributions leaves a large gap in our understanding of quantum learning advantages.
These results also highlight the challenges in analyzing quantum advantage for empirical data distributions and mirror results in classical ML, where there exists problems that are \textsf{NP}-complete for arbitrary distributions but easy for the distribution-specific case~\cite{brutzkus2017globally, Safran2018-xn, Daniely2020-eu,  Kiani2024-qt}.
Moreover, to our knowledge, all results in quantum learning theory to date focus on Boolean or discrete functions, while the majority of large-scale ML focuses on real-valued functions.
Together, these two points raise our central question:

\begin{center}
    \textit{Are there classes of real-valued functions and non-uniform\\distributions for which quantum data is advantageous?}
\end{center}

These are also stated as two open questions in~\cite{arunachalam2017survey}.
Here, by quantum data, we mean classical functions over distributions encoded into so-called quantum example states, as in quantum PAC and QSQ learning.
We provide a new perspective on when these states might arise naturally later in the work.

While some results consider learning Boolean functions over $c$-bounded product distributions~\cite{kanade2018learning,caro2020quantum}, proving quantum advantages for more general non-uniform distributions still remains open.
Moreover, for other forms of quantum data, such as expectation values of ground states, quantum advantages for learning over non-uniform distributions have been explored~\cite{molteni2024exponential}. However, this is incomparable to the present work, where we focus on classical functions encoded in quantum example states.

In this work, we provide a positive answer to our central question by efficiently learning real-valued functions that are a composition of a periodic function and a linear function in the QSQ model over a broad range of non-uniform distributions, which includes Gaussian, generalized Gaussian~\cite{subbotin1923law}, and logistic distributions.
These distributions are practically relevant with generalized Gaussian and logistic distributions finding applications in, e.g., image processing~\cite{do2002wavelet,mallat1989theory,moulin1999analysis} and population growth~\cite{pearl1920rate,pearl1940logistic,schultz1930standard,plackett1959analysis}, respectively.
Moreover, note that success in the QSQ model automatically implies success in the quantum PAC model, as the QSQ model is strictly weaker because it does not allow entangled measurements~\cite{arunachalam2024role}.

We highlight that the function class we consider is well-studied in the classical ML literature~\cite{shalev2017failures,shamir2018distribution,song2017complexity,song2021cryptographic}.
There, such functions --- called \emph{cosine neurons} or, more generally, \emph{periodic neurons} --- are commonly analyzed, as they form the basic structure of neural networks with periodic activation functions~\cite{sitzmann2020implicit, Meronen2021-he, Chan2021-jl, Faroughi2023-th, Mommert2024-sj, Liu2024-pr} and can be seen as an extension of generalized linear models~\cite{muller2012generalized,nelder1972generalized}.
In particular, Ref.~\cite{shamir2018distribution} proves that any gradient-based classical algorithm cannot learn periodic neurons when the input data distribution has a sufficiently sparse Fourier transform, which is satisfied by many natural distributions, e.g., Gaussians, mixtures of Gaussians, Schwartz functions~\cite{hunter2001applied}, etc.
We strengthen their proof to apply to our specific parameter choices that focus on the regime of quantum advantage.
Furthermore, although gradient methods are perhaps the most popular in classical ML, there is strong evidence for classical hardness beyond gradient methods.
In fact, we extend the classical hardness to hold for a more general class of algorithms performing correlational SQs~\cite{bshouty2002using,yang2005new}.
Additionally, Ref.~\cite{song2017complexity} shows an exponential lower bound for any classical SQ algorithm learning this function class with respect to any log-concave distribution.
Ref.~\cite{song2021cryptographic} extends the hardness to any polynomial time classical algorithm learning under small amounts of noise and over Gaussian distributions, assuming the hardness of solving worst-case lattice problems~\cite{regev2009lattices,micciancio2009lattice}.
These results~\cite{song2017complexity,song2021cryptographic} do not directly apply to our setting due to a difference between the parameter regimes needed for quantum advantage versus classical hardness, but we expect classical hardness to still hold in this regime and leave this generalization open to future work.

Our algorithm uses a polynomial number of QSQs and iterations of gradient descent, resulting in a quantum advantage over any classical gradient-based algorithm for sufficiently Fourier-sparse input distributions~\cite{shamir2018distribution}.
Here, the classical algorithms considered are any algorithms that utilize approximate gradients of an average loss function, which includes, e.g., neural networks.
Concretely, we obtain an exponential quantum advantage against classical gradient methods for Gaussian, generalized Gaussian, and logistic distributions.
For Gaussian distributions, we additionally strengthen classical hardness to hold against a natural restriction of SQ algorithms (namely, correlational SQ algorithms~\cite{bshouty2002using,yang2005new}), which includes gradient methods, dimension reduction, and moment-based methods.
To our knowledge, this is the first result in quantum learning of classical functions that explicitly considers real-valued functions.
\Cref{fig:summary} illustrates a schematic overview of our work.

\begin{figure}
    \centering
    \includegraphics[width=0.9\linewidth]{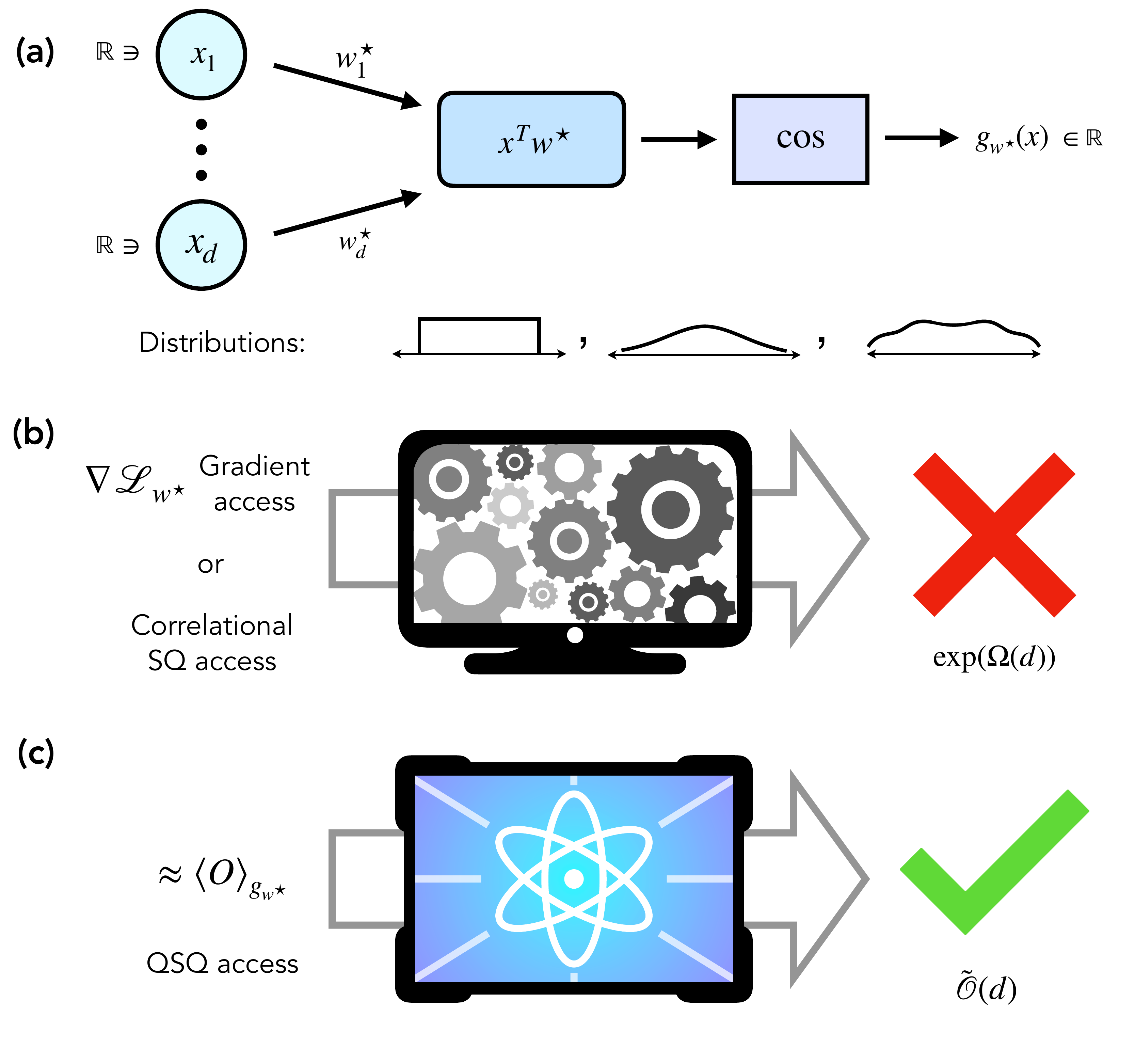}
    \caption{\textbf{Overview of results.} \textbf{(a) Target function and input distributions.} Given an input vector $x \in \mathbb{R}^d$, we consider learning functions of the form $g_{w^\star}(x) = \cos(x^\intercal w^\star)$, where $w^\star \in \mathbb{R}^d$ is an unknown vector.
    Our illustration emphasizes their connection with classical deep learning, where they are called cosine neurons.
    We also consider more general periodic neurons, which one can view as linear combinations of cosine neurons with unknown weights.
    We consider input distributions such as uniform, Gaussians, and more general distributions which are sufficiently flat, as characterized by technical conditions specified in Appendix~\ref{sec:non-unif}.
    \textbf{(b) Classical hardness.} We strengthen the arguments of~\cite{shamir2018distribution} to show that classical gradient methods require an exponential number of iterations (i.e., an exponential number of gradient samples) in the dimension of the problem and the norm $R_w$ of $w^\star$ to learn these functions.
    \textbf{(c) Quantum algorithm.} In contrast, our new quantum algorithm using QSQs is exponentially more efficient with respect to both time and sample complexity.}
    \label{fig:summary}
\end{figure}

\section{Results}

In this section, we introduce the task of learning periodic neurons and show that it is classically hard for a broad class of powerful algorithms. Then, we detail our quantum algorithm that solves this problem efficiently, exhibiting an exponential quantum advantage.

\subsection{Problem definition}
In this section, we define common access models in (quantum) learning theory and describe our learning problem more formally.
We refer to Appendices~\ref{sec:learning} and~\ref{sec:detail-prob} for further details.

We aim to learn a collection of functions $\mathcal{C} \subseteq \{c: \mathcal{X} \to \mathcal{Y}\}$ called the \emph{concept class}, where $\mathcal{X}, \mathcal{Y}$ are the input/output spaces, respectively.
In particular, given some form of access to an unknown concept $c^\star \in \mathcal{C}$, we want to learn an approximation of $c^\star$ with high probability.
Typically in learning theory, one considers Boolean functions with $\mathcal{X} = \{0,1\}^d, \mathcal{Y} = \{0,1\}$.
Importantly, in this work, we instead consider $\mathcal{X} = \mathbb{R}^d, \mathcal{Y} = \mathbb{R}$.

In the classical PAC model~\cite{valiant1984theory}, the learning algorithm is given labeled random examples $(x_i, c^\star(x_i))_{i=1}^N$, where the $x_i$ are sampled from a distribution $\mathcal{D}$ over $\mathcal{X}$ and $c^\star \in \mathcal{C}$ is an unknown target function.
The SQ model~\cite{kearns1998efficient} is weaker than PAC, where, instead of direct access to the examples, the learning algorithm can only obtain noisy expectation values of functions of the data.
This was originally proposed to model learning given noisy examples, and commonly used algorithms such as stochastic gradient descent~\cite{Feldman2015-hj}, Markov chain Monte Carlo methods~\cite{tanner1987calculation,gelfand1990sampling}, and simulated annealing~\cite{kirkpatrick1983optimization,vcerny1985thermodynamical} can be implemented in this model.

We also consider the \emph{correlational} SQ model~\cite{bshouty2002using,yang2005new}.
This is a restriction of general SQs in which queries are only allowed to act on the input space $\mathcal{X}$, not the labeled output space.
We define this more precisely in Appendix~\ref{sec:learning}.
Correlational SQs include gradient methods, dimension reduction, and moment-based methods as special cases.
In the case of Boolean functions, correlational SQs and general SQs are in fact equivalent~\cite{bshouty2002using}, but there exist separations between them for real functions~\cite{andoni2014learning,andoni2019attribute,chen2022learning}.

The quantum PAC and QSQ models are natural generalizations of these settings.
In the quantum PAC model~\cite{bshouty1995learning}, the learning algorithm is given copies of the quantum example state
\begin{equation}
    \label{eq:example-state-main}
    \ket{c^\star} \triangleq \sum_{x \in \mathcal{X}} \sqrt{\mathcal{D}(x)}\ket{x}\ket{c^\star(x)},
\end{equation}
where $\mathcal{D}$ is again some probability distribution.
We note that there are some restrictions on the distributions $\mathcal{D}$ for which one can efficiently prepare this state and discuss this later.
Also notice that upon measuring a quantum example state, one obtains $(x, c^\star(x))$ for $x$ sampled from the distribution $\mathcal{D}$, hence recovering the classical PAC examples.
For QSQ access~\cite{arunachalam2020quantum}, the learner queries an observable $O$ and receives an approximation of the expectation value $\expval{O}{c^\star}$.
This is weaker than the quantum PAC model due to the inability to take entangled measurements across multiple copies of $\ket{c^\star}$~\cite{arunachalam2024role}.
Notice also that because $\mathcal{X}$ is a continuous space in our setting, these definitions require discretization/truncation, which we discuss further in the Methods and Appendix~\ref{sec:learning}.
In all aforementioned cases, the goal is to learn the unknown function $c^\star$ approximately with high probability using as few examples/queries as possible.

We are interested in learning a concept class consisting of functions that are a composition of a periodic function and a linear function.
In other words, these are functions that can be represented as a single-layer neural network with a periodic activation function, hence dubbed periodic neurons.
This ansatz is quite powerful and in some cases is able to achieve universal function approximation~\cite{Cybenko1989-zh, Maiorov1999-qk, Guliyev2018-qt}.
Moreover, the periodic neuron has known relationships to important complexity theoretic problems~\cite{song2021cryptographic,gupte2022continuous}.

Explicitly, let $d \geq 1$ denote the input dimension and let $\mathbb{S}^{d-1}$ denote the $(d-1)$-dimensional unit sphere.
Then, our concept class is defined as
\begin{equation}
    \label{eq:concept-class}
    \mathcal{C} \triangleq \{g_{w^\star}: \mathbb{R}^d \to [-1,1] : g_{w^\star}(x) = \tilde{g}(x^\intercal w^\star), w^\star \in R_w \mathbb{S}^{d-1}\},
\end{equation}
where $R_w > 0$ is the norm of the unknown vector $w^\star$ and $\tilde{g}: \mathbb{R} \to [-1,1]$ is a periodic function of period $1$ that can be written as
\begin{equation}
    \label{eq:g-tilde-main}
    \tilde{g}(y) = \sum_{j=1}^D \beta_j^\star \cos(2\pi j y),\quad \norm{\beta^\star}_1 = 1
\end{equation}
for some constant $D > 0$ and unknown parameters $\beta_j^\star \in \mathbb{R}$.
In other words, our target functions $g_{w^\star}$ are defined as follows.
First, consider an unknown vector $w^\star$ of norm $R_w$, and consider the linear function $x^\intercal w^\star$ defined by this coefficient vector.
Then, compose this linear function with a linear combination of cosines, where the weights $\beta_j^\star$ are unknown.
In our analysis, we have additional constraints on the vector $w^\star$, e.g., restricted to the positive orthant and bounded away from $0$, but for simplicity of presentation, we omit this detail in the main text.
We direct the reader to Appendix~\ref{sec:detail-prob} for more details.

To learn a target concept $g_{w^\star}$ with respect to a distribution $\mathcal{D}$, we want to find a good predictor $f_\theta(x)$ which minimizes the objective function
\begin{equation}
  \label{eq:loss-main}
  \min_{\theta \in \Theta} \mathcal{L}_{w^\star}(\theta)\triangleq \min_{\theta \in \Theta}\mathop{\mathbb{E}}_{x \sim \mathcal{D}}[(f_\theta(x) - g_{w^\star}(x))^2],
\end{equation}
where $\theta$ are some tunable parameters.
Namely, for a given $\epsilon > 0$, we want to find parameters $\hat{\theta}$ such that $\mathcal{L}_{w^\star}(\hat{\theta}) \leq \epsilon$.
Classically, we consider algorithms that have access to gradients of this loss function and can compute it for a given choice of parameters $\theta$.
Our quantum algorithm additionally has QSQ access to the (discretized/truncated) example state $\ket{g_{w^\star}}$.
Here, discretization is necessary to encode the continuous outputs of the target function into a discrete quantum state, and we similarly require truncation to ensure that the superposition is not over an infinite space.

While gradient access is more restrictive than general classical SQ algorithms, SQ algorithms include gradient methods as a special case~\cite{Feldman2015-hj}.
Gradient-based algorithms are also the most widely used methods to train neural networks in practice.
Moreover, for Gaussian distributions, we extend classical hardness to hold against correlational SQ algorithms.
These are more restrictive than general SQ algorithms~\cite{andoni2014learning,andoni2019attribute,chen2022learning}, but we nevertheless view this as an important step towards proving SQ hardness.
As discussed above, there is also strong evidence that the problem remains hard for general SQ algorithms and even all efficient classical algorithms~\cite{song2017complexity,song2021cryptographic}.
In fact, the techniques for proving hardness against gradient methods~\cite{shamir2018distribution} are similar to those for existing SQ hardness results~\cite{kearns1998efficient,blum1994weakly}.

\subsection{Classical hardness}
Previous work from the classical literature~\cite{shalev2017failures,shamir2018distribution} shows that learning periodic neurons as described in the previous section is hard for classical gradient methods, which includes powerful algorithms such as classical neural networks.
This result holds for any input distribution that is sufficiently sparse in Fourier space, defined by the notion of $\epsilon(r)$-Fourier-concentration.
Intuitively, $\epsilon(r)$ is a function which characterizes how quickly the Fourier transform of the density function decays.
We define Fourier concentration formally in \Cref{def:cont_fourier_conc} in Appendix~\ref{sec:classical-hardness}.
In Appendix~\ref{sec:classical-hardness}, we strengthen the proof from~\cite{shamir2018distribution} to show that classical hardness still holds for our additional constraints on the vector $w^\star$.

\begin{theorem}[A variant of Theorem 4 in~\cite{shamir2018distribution}; Informal]
    \label{thm:class-hardness-main}
    Let $g_{w^\star}: \mathbb{R}^d \to [-1,1] \sim \mathrm{Unif}(\mathcal{C})$ be a uniformly sampled target function, where the unknown vector $w^\star \in \mathbb{R}^d$ has norm $R_w$.
    Consider an input distribution whose density $\varphi^2$ can be written as a square of a function $\varphi$ and is $\epsilon(r)$-Fourier-concentrated.
    Let $\epsilon' = \sqrt[3]{c_1(\exp(-c_2 d) + \sum_{n=1}^\infty \epsilon(n R_w/4)}$ for constants $c_1,c_2$.
    Then, any classical gradient-based algorithm requires at least $p/\epsilon'$ gradient samples with $\epsilon'$ precision to learn $g_{w^\star}$ with probability $1-p$ over the choice of $g_{w^\star}$.
\end{theorem}

Note that there are similar classical hardness results which hold for any $1$-Lipschitz loss function~\cite{shalev2017failures}, rather than the squared loss $\mathcal{L}_{w^\star}$ from \Cref{eq:loss-main}.
This theorem tells us that if the function $\epsilon(r)$ decays rapidly with $r$, then unless the number of gradient samples is extremely large or the noise in the problem is unrealistically small, a classical gradient-based algorithm cannot learn the concept class $\mathcal{C}$ from \Cref{eq:concept-class}.
We note that we only obtain a meaningful lower bound when $\epsilon(r)$ decays sufficiently quickly such that the infinite sum in the expression for $\epsilon'$ converges.
This is guaranteed when the Fourier transform of the input distribution has sharply decreasing tails.
For instance, for Gaussian distributions, the number of samples must scale as $\exp(\Omega(\min(d, R_w^2)))$.
Here, the classical hardness stems from the gradient of the loss function concentrating around a fixed value, which, in turn, is due to the Fourier sparsity of the input distribution and target functions.

Furthermore, for the case of Gaussian distributions, we strengthen the classical hardness to hold against any classical algorithm which has access to correlational SQs~\cite{bshouty2002using,bendavid1995learning}.
This model is more general than gradient methods but is still a restriction of general SQs.
We view our proof of classical hardness against such algorithms as an important step towards general SQ hardness.

Additionally, it is interesting to observe that only one type of query made by our quantum algorithm is not a correlational QSQ, i.e., observables of the form $O \otimes I$, where the identity acts on the output register.
Thus, one might argue that considering only correlational SQs for classical hardness is not a significantly unfair comparison.
We prove the following theorem.

\begin{theorem}[Correlational SQ Hardness; Informal]
    \label{thm:csq-hardness-main}
    Consider a Gaussian distribution with a sufficiently large variance.
    Then, any classical algorithm using correlational SQs to $\mathcal{C}$ with respect to this distribution requires at least $2^{\Omega(d)}$ queries to learn $\mathcal{C}$ to error $\epsilon$.
\end{theorem}

The full theorem is stated in \Cref{thm:csq-hardness} in Appendix~\ref{sec:csq-hardness}.
Importantly, we highlight that the condition on the variance of the Gaussian distribution is satisfied by our quantum algorithm presented in the next section.
Thus, classical hardness holds in the same regime as our efficient quantum algorithm.
We also remark that, previously, classical learning theorists have shown similar correlational SQ lower bounds for learning single-layer neural networks~\cite{goel2020superpolynomial,diakonikolas2020algorithms}.
However, these works consider different activation functions, so their results are not immediately applicable.
We prove \Cref{thm:csq-hardness-main} by lower bounding the statistical dimension~\cite{blum1994weakly,yang2005new,feldman2017statistical} of $\mathcal{C}$, which captures the difficulty of learning a concept class, similarly to the more commonly known VC dimension.
The proof is provided in Appendix~\ref{sec:csq-hardness}.

\subsection{Quantum algorithm}
In contrast, the complexity of our quantum algorithm scales only polynomially in $d$ and polylogarithmically in $R_w$, since quantum algorithms can overcome and in fact leverage this Fourier sparsity via the quantum Fourier transform.
We state our guarantee first for the uniform distribution.
We highlight that while classical hardness for gradient methods holds on average over a uniform choice of $g_{w^\star}$ from the concept class, the guarantee for our quantum algorithm applies in the stronger worst-case setting, i.e., it holds for any fixed $g_{w^\star}$. Our correlational SQ hardness result holds in the worst-case setting as well.

\begin{theorem}[Uniform distribution; Informal Version of \Cref{thm:unif} in Appendix~\ref{sec:uniform}]
    \label{thm:unif-main}
    Let $\epsilon > 0$, and let $\varphi^2$ be the uniform distribution.
    Let $g_{w^\star}: \mathbb{R}^d \to [-1,1] \in \mathcal{C}$ be a target function for an unknown vector $w^\star \in \mathbb{R}^d$ with norm $R_w$.
    Then, there exists a quantum algorithm with QSQ access to a suitably discretized quantum example state $\ket{g_{w^\star}}$ that can efficiently find parameters $\hat{\beta} \in \mathbb{R}^D$ such that $\mathcal{L}_{w^\star}(\hat{\beta}) \leq \epsilon$ with high probability using
    \begin{equation}
        N = \mathcal{O}\left(d D \,\mathrm{polylog}(d, D, R_w, 1/\epsilon)\right)
    \end{equation}
    QSQs and $t = \Theta(\log(D/\epsilon))$ iterations of gradient descent.
\end{theorem}

The detailed theorem statement is given in \Cref{thm:unif} in Appendix~\ref{sec:uniform}.
There, we also specify the QSQ noise tolerance needed explicitly. We highlight that it is only required to scale inverse polynomially in all parameters.
We are also able to achieve a similar complexity for learning with respect to ``sufficiently flat'' non-uniform distributions.
The precise technical conditions needed are stated in Appendix~\ref{sec:non-unif}.
In particular, we show in Appendix~\ref{sec:non-unif} that these conditions are satisfied for three practically-relevant classes of distributions: Gaussians, generalized Gaussians~\cite{subbotin1923law}, and logistic distributions.
The wide applicability of Gaussian distributions is clear, and generalized Gaussians and logistic distributions have applications in image processing~\cite{do2002wavelet,mallat1989theory,moulin1999analysis} and population growth~\cite{pearl1920rate,pearl1940logistic,schultz1930standard,plackett1959analysis}, respectively.

The flatness property we require is typically satisfied by taking a distribution's scale parameter large enough.
Nevertheless, the condition still permits distributions that deviate significantly from uniform.
For example, for Gaussian distributions, we need that the variance $\sigma$ is large enough such that $e^{-x^2/\sigma^2}$ is point-wise close to $1$, e.g., $|e^{-x^2/\sigma^2} - 1| \leq 1/10$ over our truncated space, and that the derivative of the density function is not too large.
These conditions are satisfied by $d$-dimensional Gaussians with covariance $\Sigma=4\pi R^{2}I$, where $R$ is the size of the truncated space.
This still leads to a density that decays exponentially with $d$ away from the mean, in contrast to uniform distributions.
There are also some conditions regarding the shape of the distribution, which are detailed in Appendix~\ref{sec:non-unif}.

With this, we obtain the following theorem.
The detailed statement is given in \Cref{thm:non-unif-guarantee} in Appendix~\ref{sec:non-unif}.

\begin{theorem}[Non-uniform distributions; Informal Version of \Cref{thm:non-unif-guarantee} in Appendix~\ref{sec:non-unif}]
    \label{thm:non-unif-main}
    Let $\epsilon > 0$, and let $\varphi^2$ be a sufficiently flat distribution.
    Let $g_{w^\star}: \mathbb{R}^d \to [-1,1] \in \mathcal{C}$ be a target function for an unknown vector $w^\star \in \mathbb{R}^d$ with norm $R_w$.
    Then, there exists a quantum algorithm with QSQ access to a suitably discretized quantum example state $\ket{g_{w^\star}}$ that can efficiently find parameters $\hat{\beta} \in \mathbb{R}^D$ such that $\mathcal{L}_{w^\star}(\hat{\beta}) \leq \epsilon$ with high probability using
    \begin{equation}
        N = \mathcal{O}\left(d D \,\mathrm{polylog}(d, D, R_w, 1/\epsilon)\right)
    \end{equation}
    QSQs and $t = \Theta(\log(D/\epsilon))$ iterations of gradient descent.
\end{theorem}

As a special case, we obtain the same guarantee for the natural distributions of Gaussians, generalized Gaussians, and logistic distributions.
We prove that these distributions are also Fourier-concentrated and hence give us significant quantum advantages, specified in the following corollary.

\begin{corollary}[Informal]
    The guarantee of \Cref{thm:non-unif-main} holds taking $\varphi^2$ as Gaussian, generalized Gaussian, or logistic distributions with large enough scale parameters.
    Meanwhile, any classical gradient-based algorithm requires
    \begin{itemize}
        \item $\exp(\Omega(\min(d, R_w^2)))$ samples for Gaussian distributions.
        \item $\Omega(\min(\exp(d), \mathrm{superpoly}(R_w)))$ samples for generalized Gaussian distributions.
        \item $\exp(\Omega(dR_w))$ samples for logistic distributions.
    \end{itemize}
\end{corollary}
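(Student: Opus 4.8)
The plan is to obtain the corollary by substituting three textbook densities into the two theorems already in hand: \Cref{thm:non-unif-main} for the quantum upper bound and \Cref{thm:class-hardness-main} for the classical lower bound. Concretely, the whole argument reduces to two density-by-density checks: (i) that each family satisfies the ``sufficiently flat'' hypotheses of \Cref{sec:non-unif} for a suitable scale parameter, and (ii) that its Fourier-concentration function $\epsilon(r)$ from \Cref{def:cont_fourier_conc} decays at the claimed rate. Everything else is a substitution.

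For (i), I would write the density as $\varphi^2$ and fix the scale parameter to be a large enough multiple of the truncation radius $R$: covariance $\Sigma = 4\pi R^2 I$ for the Gaussian, scale $\alpha = \Theta(R)$ for the generalized Gaussian $\propto \exp(-\|x/\alpha\|_\beta^\beta)$, and scale $s = \Theta(R)$ for the logistic $\propto \prod_i \operatorname{sech}^2(x_i/2s)$. Elementary Taylor estimates of $e^{-t}$ and $\operatorname{sech}^2(t)$ near $0$ then show that over the truncated region the density stays within a $1 \pm \tfrac1{10}$ factor of its peak and has bounded log-gradient, and the remaining shape requirements of \Cref{sec:non-unif} (symmetry, unimodality/log-concavity of the one-dimensional marginals, monotone tails) are immediate for these densities. \Cref{thm:non-unif-main} then applies verbatim and yields $N = \bigo(dD\,\mathrm{polylog})$ QSQs and $\Theta(\log(D/\epsilon))$ gradient steps for all three families.

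For (ii), the key convenience is that $\varphi$ is again explicit — a Gaussian of doubled variance, a generalized Gaussian of the same exponent, a product of $\operatorname{sech}$'s — so $\hat\varphi$ is either closed-form (Gaussian $\mapsto$ Gaussian; $\operatorname{sech}(at) \mapsto \tfrac{\pi}{a}\operatorname{sech}(\tfrac{\pi \xi}{2a})$, so $|\hat\varphi(\xi)|$ decays like $e^{-\Omega(s\|\xi\|_1)}$ in the product case) or controlled by a saddle-point/stable-law estimate (generalized Gaussian). Bounding the $L^2$ mass of $\hat\varphi$ outside a radius-$r$ ball gives $\epsilon(r) = \exp(-\Omega(r^2))$ for the Gaussian, $\epsilon(r) = \exp(-\Omega(r))$ up to polynomial prefactors for the logistic, and $\epsilon(r) = 1/\mathrm{superpoly}(r)$ for the generalized Gaussian (in the exponent regime of \Cref{sec:non-unif}, where $\hat\varphi$ decays as a stretched exponential $\exp(-\Omega(r^{\beta/(\beta-1)}))$ rather than merely polynomially). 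Substituting into $\epsilon' = \sqrt[3]{c_1(\exp(-c_2 d) + \sum_{n\ge1}\epsilon(nR_w/4))}$, where the series is dominated by its $n=1$ term, and reading off the iteration lower bound $p/\epsilon'$ from \Cref{thm:class-hardness-main}, yields $\exp(\Omega(\min(d,R_w^2)))$, $\exp(\Omega(dR_w))$, and $\Omega(\min(\exp(d),\mathrm{superpoly}(R_w)))$ respectively; the $d$ inside these bounds enters through the $\exp(-c_2 d)$ term and through the scale parameters, which must be taken mildly growing in $d$.

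The flatness check in (i) is routine. The delicate points are the generalized-Gaussian Fourier estimate in (ii) — no elementary closed form, so one must invoke uniform-in-dimension asymptotics for the Fourier transform of $e^{-|t|^\beta}$ — and, more importantly, showing that the scale-parameter window ``flat enough for \Cref{thm:non-unif-main}'' (an upper constraint tied to $R$) overlaps the window ``concentrated enough for \Cref{thm:class-hardness-main}'' (a lower constraint tied to $R_w$), i.e.\ that $R$, $R_w$, $d$, and $1/\epsilon$ can be chosen together so the truncation still captures enough mass while $\epsilon'$ is small enough to force the stated iteration count. Pinning down this overlap for each family is where I expect most of the work in \Cref{sec:non-unif} to lie.
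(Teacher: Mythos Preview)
Your plan matches the paper's: the corollary is obtained exactly by (i) checking that each of the three families satisfies the flatness Assumptions~\ref{assum:fourier-conc}--\ref{assum:crit-points} of \Cref{sec:non-unif} for a scale parameter proportional to the truncation radius $R$ (this is the content of Propositions~\ref{prop:gen-gauss-satisfies-assum}, \ref{prop:logistic-satisfies-assum} and Corollary~\ref{coro:gauss-satisfies-assum}), and (ii) reading off the Fourier-concentration rate and substituting into \Cref{thm:class-hardness-main}. Two points of comparison are worth noting.

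First, for the generalized Gaussian the paper does not carry out a saddle-point analysis; since the shape parameters are even integers $\alpha_j\ge 2$, each $p_j$ is Schwartz, so $\varphi$ and hence $\hat\varphi$ are Schwartz, and superpolynomial decay of $\epsilon(r)$ follows immediately. Your stretched-exponential estimate $\exp(-\Omega(r^{\beta/(\beta-1)}))$ is sharper but unnecessary for the stated bound.

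Second, your logistic computation has a small gap. From $\widehat{\operatorname{sech}}(at)\propto\operatorname{sech}(\pi\xi/2a)$ you correctly get pointwise decay $e^{-\Omega(s\|\xi\|_1)}$, but then you collapse this to $\epsilon(r)=\exp(-\Omega(r))$. In the paper the product over $d$ coordinates is kept, and integrating over the complement of the inscribed cube $[-r/\sqrt d,r/\sqrt d]^d$ yields $\epsilon(r)=\exp(-\Omega(rd))$. This extra factor of $d$ in the exponent is what produces the $\exp(\Omega(dR_w))$ lower bound after substitution into $\epsilon'$; it does not come from the $\exp(-c_2 d)$ term or from letting the scale grow in $d$, as you suggest. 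With only $\epsilon(r)=\exp(-\Omega(r))$ you would get $\exp(\Omega(\min(d,R_w)))$ instead.

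Your concern about the overlap of the ``flat enough'' and ``concentrated enough'' windows is handled in the paper by fixing $s_j$ as an explicit multiple of $R$ and then verifying each assumption directly; this is mechanical rather than delicate.
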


This is a direct implication of the previous theorem combined with Propositions~\ref{prop:gen-gauss-satisfies-assum} and~\ref{prop:logistic-satisfies-assum} and Corollary~\ref{coro:gauss-satisfies-assum} in Appendix~\ref{sec:non-unif}.
Thus, we see that for Gaussian, generalized Gaussian, and logistic distributions, we obtain an exponential quantum advantage over classical gradient methods.
For Gaussian distributions, we retain this exponential advantage over correlational SQ algorithms.

Our key observation is that the classical hardness of~\cite{shamir2018distribution} stems from the objective function $\mathcal{L}_{w^\star}$ being sparse in Fourier space.
This implies that the objective function is difficult to optimize using gradient-based methods.
On the other hand, quantum algorithms can typically take advantage of Fourier-sparsity by leveraging the quantum Fourier transform (QFT).
In fact, we notice that the target functions $g_{w^\star}$ are periodic in each coordinate with period $1/w_j^\star$:
\begin{equation}
    g_{w^\star}\left(x + \frac{e_j}{w_j^\star}\right) = \tilde{g}\left(\left(x + \frac{e_j}{w_j^\star}\right)^\intercal w^\star\right) = \tilde{g}(x^\intercal w^\star + 1) = \tilde{g}(x^\intercal w^\star) = g_{w^\star}(x),
\end{equation}
where we use that $\tilde{g}$ has period $1$ and use $e_j$ to denote the unit vector for coordinate $j \in [d]$.
Thus, information about the unknown vector $w^\star$ is contained in the period of $g_{w^\star}$.
This observation yields a simple quantum algorithm: (1) Perform period finding by encoding the QFT into QSQs to learn the vector $w^\star$ one component at a time, (2) Learn the unknown parameters $\beta_j^\star$ defining the periodic activation function (\Cref{eq:g-tilde-main}) using classical gradient methods.
Note that once we have an approximation of $w^\star$ from Step (1), Step (2) is effectively a regression problem, allowing it to be solved via gradient methods.

Despite the initial simplicity of this algorithm, there are several nontrivial issues that arise, particularly in Step (1).
First, recall that the quantum example state (\Cref{eq:example-state-main}) must be suitably discretized because our target function is real.
However, there exist pathological examples in which discretization eliminates any information about the period of the original function (see, e.g., Section 10 of \cite{jozsa2003notes}).
Thus, it is important to choose the correct discretization such that the period is sufficiently preserved.
Another problem is that the standard period finding algorithm does not apply because the period $1/w_j^\star$ is not necessarily an integer.
Additionally, standard period finding is only analyzed for uniform superpositions, whereas we are primarily interested in non-uniform superpositions.

To resolve these problems, we carefully discretize the target function such that it satisfies \emph{pseudoperiodicity}~\cite{hallgren2007polynomial} with a period proportional to $1/w_j^\star$ in each coordinate.
For a period $S$, instead of requiring that $h(k) = h(k + \ell S)$ for an integer $\ell$, pseudoperiodicity dictates that $h(k) = h(k + [\ell S])$, where $[\ell S]$ denotes rounding $\ell S$ either up or down to the nearest integer.
This ensures that the period of the discretized function still contains useful information, thus excluding pathological discretizations.
Then, for uniform distributions, we can use Hallgren's algorithm~\cite{hallgren2007polynomial}, which finds the (potentially irrational) period of pseudoperiodic functions.
It is still nontrivial to apply Hallgren's algorithm, as it crucially assumes the existence of an efficient verification subroutine to check if a given guess is close to the period of a pseudoperiodic function.
Unlike for periodic functions, such verification is not straightforward for pseudoperiodic functions.
We design a suitable verification procedure which uses $D$ QSQs in \Cref{thm:verification,thm:verification-non-unif} in Appendices~\ref{sec:general-uniform} and~\ref{sec:general-non-unif}, respectively.

Moreover, Hallgren's algorithm does not apply for non-uniform distributions.
To this end, we design a new period finding algorithm that works for sufficiently flat non-uniform distributions, which could be of independent interest.
The sufficiently flat condition on the distributions stems from our generalization of Hallgren's algorithm as well as several integral bounds needed for Step (2) of the algorithm.
We expand on these ideas in the Methods and Appendices~\ref{sec:uniform} and~\ref{sec:non-unif}.

\section{Discussion}

Numerous works have shown exponential quantum advantages for learning Boolean functions when the input data is uniformly distributed.
However, little is known about distributions other than uniform, and settings in classical ML commonly consider real-valued functions, leaving a large gap between known quantum advantages and classical ML in practice.
Our work makes significant progress towards understanding quantum advantage for learning real functions over non-uniform distributions.
Moreover, the function class of periodic neurons that we consider is well-studied in the deep learning theory literature.

One question that has persisted around many quantum learning results, including the present work, is the practical origin of the quantum example state $\ket{c^\star}$ for a target function $c^\star$.
Creating an example state is straightforward when efficient classical descriptions of $c^\star(x)$ and the distribution $\mathcal{D}(x)$ are known, and some conditions on the distribution $\mathcal{D}$ are satisfied~\cite{rattew2022preparingarbitrarycontinuousfunctions,rattew2023nonlineartransformationsquantumamplitudes,rosenkranz2024quantumstatepreparationmultivariate,grover2002creating,izdebski2020improved}.
However, by definition of the problem, $c^\star$ is unknown and is precisely what we wish to learn.
Instead, one may consider coherently loading the data from known classical examples, but this can be costly and eliminate an end-to-end quantum advantage.
For instance, the spacetime volume of the loading circuit is likely to scale exponentially in $d$~\cite{jaques2023qramsurveycritique,aaronson2014quantum}, erasing any practical advantage.

Alternatively, we consider the following perspective on how learning may still be valuable even when a description of $c^\star$ is known.
Suppose we know some complex classical circuit/function that simulates classical physics.
One may instead hope to learn a simpler circuit that can approximately compute the same dynamics more efficiently (e.g.,~\cite{kasim2021building}).
Here, the simpler circuit is unknown to the learner, but the algorithm has access to it through the known, complicated circuit that simulates the same dynamics.
In this case, because $c^\star$ is known, one can construct the quantum example state straightforwardly (albeit with some overhead), which may be helpful in learning a simpler description of the target.

As a practically-relevant example in ML, one can consider the complicated object as a trained neural network.
Such models are highly complex, and while there are heuristic methods for constructing them, there is limited understanding of how neural networks compute their outputs.
As in the subfield of interpretability in ML~\cite{linardatos2020explainable, Bricken2023-lz, Templeton2024-wt}, one may hope to use quantum access to the (known) trained neural network to extract information about this complex model.
By representing it differently or learning a simpler model that performs approximately the same function, this could help us better understand opaque large ML models and, in turn, design better ones using our new knowledge of their inner workings.
One may object that it is not clear if the type of structure that quantum algorithms typically leverage to obtain advantages are present in this setting.
However, we note that there is some evidence of periodic structure in the features of large language models~\cite{Engels2024-ow}.
Regardless, we hope that this perspective provides new insight into scenarios when quantum example states may occur naturally and be efficiently preparable.

Our work also raises many interesting open questions.
First, the classical hardness for our results only holds against classical gradient methods and correlational SQ algorithms (for Gaussian distributions).
While there are results proving hardness for classical SQ algorithms or even general classical algorithms~\cite{song2017complexity,song2021cryptographic}, these results do not directly apply to our parameter regimes.
Can the classical hardness be strengthened for our setting?
We expect the hardness to still hold and leave this generalization to future work.

Second, we assume that the periodic neuron takes a specific form given by $\tilde{g}$.
Could our results be generalized to apply for any periodic function?
In addition, while our results hold for a broad class of non-uniform distributions including Gaussians, generalized Gaussians, and logistic distributions, one may wonder if similar results can be obtained for other natural non-uniform distributions.
We conjecture that our results could be modified to apply to generalized logistic distributions~\cite{balakrishnan1988order} or stable distributions~\cite{levy1925calcul}.
It is also possible that the conditions needed for the non-uniform distributions we consider, i.e., the sufficiently flat condition, could be relaxed, although this would require a significantly different analysis.
More generally, can one obtain a quantum advantage for this task when learning over any Fourier-concentrated distribution?
The main part of the proof that requires modification is the analysis of the non-uniform period finding algorithm.

Finally, while we consider quantum access to real functions via discretized quantum example states, one may consider alternative models for learning classical functions encoded in quantum states, e.g., continuous variable states. Would different models provide new capabilities for quantum learning algorithms?

\section{Methods}

\subsection{Classical hardness}

In this section, we give an overview of the proofs of \Cref{thm:class-hardness-main,thm:csq-hardness-main}.
First, to prove \Cref{thm:class-hardness-main}, we adapt the proof of Theorem 4 from~\cite{shamir2018distribution} to hold for our concept class $\mathcal{C}$, which imposes additional constraints on the vector $w^\star$.
Namely, for our setting, $w^\star$ is restricted to the positive orthant and bounded away from $0$; meanwhile, in~\cite{shamir2018distribution}, $w^\star$ can be any $d$-dimensional vector with norm $R_w$.
We must argue that these restrictions do not make the problem easier for classical algorithms.

The crux of the argument of~\cite{shamir2018distribution} shows that for any function $h$, for a random choice of $w^\star$, the Fourier transform of the target function $g_{w^\star}$ does not correlate well with $h$ (see \Cref{lem:5} in Appendix~\ref{sec:classical-hardness}).
Thus, no matter what our hypothesis function is, obtaining information about $w^\star$ is difficult.
However, their proof crucially uses that $w^\star$ is chosen randomly from an exponentially large set of nearly orthogonal vectors, and their construction of this set does not adhere to our requirements for $w^\star$.
Our main technical contribution for this proof is showing the existence of a large set $\tilde{\mathcal{S}}_w$ of nearly orthogonal vectors which lie in the positive orthant and are bounded away from zero.
We do so by leveraging tools from high-dimensional geometry~\cite{vershynin2018high}.

In \Cref{thm:csq-hardness-main}, we strengthen the classical hardness to hold against any algorithms using correlational SQs when learning with respect to Gaussian distributions.
A standard method for proving correlational SQ lower bounds is via the \emph{statistical dimension}~\cite{blum1994weakly,yang2005new,feldman2017statistical}, which captures the difficulty of learning a concept class, similarly to the more commonly known VC dimension.
Informally, the statistical dimension quantifies the size of the largest subset of the concept class whose elements have ``low correlation'' (see \Cref{def:sd} in Appendix~\ref{sec:csq-hardness} for a formal definition).
Intuitively, functions in this low correlation subset should be hard to distinguish and thus hard to learn.
Several works~\cite{feldman2017statistical,szorenyi2009characterizing,feldman2012complete} have formalized this relationship, proving, roughly, that lower bounds on the statistical dimension imply correlational SQ lower bounds (see \Cref{thm:sda-implies-lower} in Appendix~\ref{sec:csq-hardness}).
Our key contribution is proving an exponential lower bound on the statistical dimension of our concept class $\mathcal{C}$, where we utilize properties of Gaussian integrals and the construction of the set $\tilde{\mathcal{S}}_w$ from the proof of \Cref{thm:class-hardness-main} described above.
Thus, this results in an exponential lower bound for correlational SQ algorithms.

\subsection{Quantum algorithm}

In this section, we describe the ideas behind the proofs of \Cref{thm:unif-main,thm:non-unif-main}.
As discussed above, our key observation is that the target function $g_{w^\star}$ is periodic in each coordinate with period $1/w_j^\star$.
This informs our quantum algorithm, which is as follows: (1) Perform period finding by encoding the QFT into QSQs to learn the vector $w^\star$ one component at a time; (2) Learn the unknown parameters $\beta_j^\star$ defining the periodic activation function (\Cref{eq:g-tilde-main}) using classical gradient methods.
Thus, the proofs are separated into two main parts, each analyzing the sample complexities for these two algorithmic steps.
The proofs for the uniform distribution are in Appendix~\ref{sec:uniform}, and those for the non-uniform distributions are in Appendix~\ref{sec:non-unif}.
In particular, Step (1) is analyzed in detail in Appendices~\ref{sec:linear-uniform} and~\ref{sec:linear-non-unif}, and Step (2) is examined in Appendices~\ref{sec:outer-uniform} and~\ref{sec:outer-non-unif}.
In the following, we give an overview of the proofs.

\subsubsection{Learning the linear function}

We want to apply period finding to our target function $g_{w^\star}: \mathbb{R}^d \to [-1,1]$ to approximate the unknown vector $w^\star$ that defines the inner linear function.
First, because $g_{w^\star}$ has real inputs and outputs, we need to discretize it so that it can be represented by a (discrete) quantum example state (\Cref{eq:example-state-main}).
We require that the chosen discretization is pseudoperiodic, a condition which is weaker than periodicity but still ensures that the discretized function retains information about the period of $g_{w^\star}$.
Specifically, for $d = 1$, a function $h: \mathbb{Z} \to \mathbb{R}$ is pseudoperiodic with period $S \in \mathbb{R}$ if $h(k) = h(k + [\ell S])$ for any integer $\ell$, where $[\ell S]$ denotes rounding $\ell S$ either up or down to the nearest integer.
One should compare this to periodicity, where the necessary condition is instead $h(k) = h(k + \ell S)$.
We choose the following discretization, considering $d = 1$ for simplicity.
We generalize to arbitrary $d \geq 1$ in the discussion surrounding \Cref{eq:general-d}.

\begin{lemma}[Discretization; Informal]
    Let $M_1,M_2$ be suitably chosen discretization parameters with $M_1 > M_2$.
    Consider the discretized function $h_{M_1,M_2}: \mathbb{Z} \to \frac{1}{M_2}\mathbb{Z}$ defined by
    \begin{equation}
        h_{M_1,M_2}(k) \triangleq \left\lfloor g_{w^\star}\left(\frac{k}{M_1}\right)\right\rfloor_{M_2},
    \end{equation}
    where $\lfloor \cdot \rfloor_{M_2}$ denotes rounding down to the nearest multiple of $1/M_2$.
    Then, $h_{M_1,M_2}$ is pseudoperiodic with period $M_1/w^\star$ for a large proportion of the inputs.
\end{lemma}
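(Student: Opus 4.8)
The plan is to reduce pseudoperiodicity of $h_{M_1,M_2}$ with period $S := M_1/w^\star$ to two ingredients: the exact $1$-periodicity of $\tilde g$, which absorbs the ``integer part'' of $\ell S$, and a Lipschitz estimate for $\tilde g$ combined with a counting bound, which together control both the rounding error $[\ell S]-\ell S$ and the outer quantization $\floor{\,\cdot\,}_{M_2}$. The phrase ``for a large proportion of the inputs'' will come precisely from the counting bound.

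First I would compute the shifted argument. For any integer $\ell$, write $[\ell S] = \ell S + \eta_\ell$ with $\abs{\eta_\ell}\le 1$. Since $S = M_1/w^\star$,
\[
  g_{w^\star}\!\left(\frac{k+[\ell S]}{M_1}\right)
  = \tilde g\!\left(\frac{(k+[\ell S])\,w^\star}{M_1}\right)
  = \tilde g\!\left(\frac{k w^\star}{M_1} + \ell + \frac{\eta_\ell w^\star}{M_1}\right)
  = \tilde g\!\left(\frac{k w^\star}{M_1} + \frac{\eta_\ell w^\star}{M_1}\right),
\]
where the last equality uses that $\tilde g$ has period $1$. Hence the only discrepancy between $g_{w^\star}((k+[\ell S])/M_1)$ and $g_{w^\star}(k/M_1)$ is caused by a perturbation of magnitude at most $w^\star/M_1$ in the argument of $\tilde g$.

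Next I would bound this discrepancy. Since $\tilde g(y) = \sum_{j=1}^D \beta_j^\star\cos(2\pi j y)$ with $\norm{\beta^\star}_1 = 1$, its derivative obeys $\abs{\tilde g'(y)}\le 2\pi\sum_{j}j\,\abs{\beta_j^\star}\le 2\pi D$, so $\tilde g$ is $2\pi D$-Lipschitz, giving
\[
  \abs{\,g_{w^\star}\!\left(\tfrac{k+[\ell S]}{M_1}\right) - g_{w^\star}\!\left(\tfrac{k}{M_1}\right)\,}
  \le \frac{2\pi D\, w^\star}{M_1} =: \epsilon_0 .
\]
If $M_1$ is chosen suitably large relative to $D\,M_2\,w^\star$, then $\epsilon_0 \ll 1/M_2$, so the two real values above round to the \emph{same} multiple of $1/M_2$ under $\floor{\,\cdot\,}_{M_2}$ \emph{unless} the (short) interval between them contains a point of $\tfrac1{M_2}\mathbb Z$; equivalently, $h_{M_1,M_2}(k) = h_{M_1,M_2}(k+[\ell S])$ fails only for $k$ with $\tilde g(k w^\star/M_1)$ within $\epsilon_0$ of the grid $\tfrac1{M_2}\mathbb Z$. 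This is the ``bad'' set to be excluded.

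The main obstacle is bounding the size of this bad set inside the relevant (truncated) domain $\{0,1,\dots,N-1\}$. I would count, over each maximal monotone arc of $\tilde g$, how many lattice points $k w^\star/M_1$ fall in the $\tilde g$-preimage of an $\epsilon_0$-neighborhood of the grid. Because $\tilde g$ is a degree-$D$ trigonometric polynomial, it has $O(D)$ critical points per period and hence $O(D N w^\star/M_1)$ monotone arcs across the domain; away from small neighborhoods of the critical points $\abs{\tilde g'}$ is bounded below, which bounds the lattice count in each thin preimage, while analyticity of $\tilde g$ controls the (few) arcs near critical points since the level sets cannot degenerate on an interval. Summing shows the bad set has cardinality $O\!\big(D\,M_2\,w^\star/M_1\big)\cdot N$, a vanishing fraction of the inputs once $M_1$ is taken large enough — this is the quantitative content of ``a large proportion of the inputs'' and completes the argument. (Alternatively, one can first estimate the Lebesgue measure of the continuous bad set via the coarea formula for the trig polynomial $\tilde g$ and then transfer to the lattice count; I expect either route to work.)
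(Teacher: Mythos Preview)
Your reduction via periodicity of $\tilde g$ and the Lipschitz bound $\epsilon_0 = 2\pi D w^\star/M_1$ is correct and matches the paper. Where you diverge is in the definition of the ``bad'' set and the mechanism on the good set. You take the bad set to be $\{k:\operatorname{dist}(\tilde g(kw^\star/M_1),\tfrac{1}{M_2}\mathbb Z)<\epsilon_0\}$ and argue ``same grid cell'' on its complement. The paper instead takes the bad set to be the $k$ for which $g_{w^\star}$ fails to be monotone on $((k-1)/M_1,(k+1)/M_1)$; on the complement it uses monotonicity to obtain the sandwich $h_-(k)\le h(k)\le h_+(k)$ (where $h_\pm$ use the ceiling/floor shift), then shows $|h_+(k)-h_-(k)|\le 1/M_2$, forcing one of $h_\pm(k)$ to equal $h(k)$. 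This buys a trivially countable bad set: at most $2D$ critical points of $\tilde g$ per period give a bad fraction of at most $4DR_w/M_1$, with no preimage analysis needed.

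Your counting sketch has a real gap. The claimed bound $O(DM_2 w^\star/M_1)\cdot N$ comes from the regime where $|\tilde g'|$ is bounded below, but near a critical value $v_0$ of $\tilde g$ the preimage of $(v_0-\epsilon_0,v_0+\epsilon_0)$ has measure of order $\sqrt{\epsilon_0}$ (quadratic behavior), not $\epsilon_0$; invoking analyticity to say ``level sets cannot degenerate'' is qualitative and does not yield this estimate. Whenever a grid point of $\tfrac{1}{M_2}\mathbb Z$ lies near a critical value (as $\pm 1$ always do for $\tilde g=\cos(2\pi\,\cdot)$), you pick up an extra $O(D\sqrt{\epsilon_0})$ contribution that your stated bound omits. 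Your approach can be completed by either carrying this term or simply excluding $O(D)$ neighborhoods of length $\delta$ around critical points and optimizing $\delta$ against the derivative lower bound $c\sim\delta$, but at that point you are essentially rebuilding the paper's ``exclude near critical points'' step with extra work on top. The paper's monotonicity–sandwich argument avoids the grid-preimage counting entirely and is the cleaner route.
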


Requiring $M_1 > M_2$ at an appropriate ratio makes the discretization more coarse on the outputs than the inputs, ensuring that pseudoperiodicity is satisfied.
In our proof, we choose $M_1, M_2$ to scale polynomially in the problem parameters, i.e., $\mathrm{poly}(\epsilon, D, d, R_w)$, where $\epsilon$ is the desired error, $D$ is the number of cosine terms in the periodic activation function (\Cref{eq:g-tilde-main}), $d$ is the input dimension, and $R_w$ the norm of $w^\star$.

Recall that $g_{w^\star}$ has period $1/w_j^\star$ in the $j$th coordinate.
Thus, learning the period of $h_{M_1, M_2}$ also allows us to approximate the period of $g_{w^\star}$.
However, straightforwardly applying standard period finding algorithms to $h_{M_1, M_2}$ fails because $h_{M_1, M_2}$ is only pseudoperiodic rather than periodic and its period $M_1/w_j^\star$ is not necessarily an integer.
Instead, we turn to Hallgren's algorithm~\cite{hallgren2007polynomial}, which determines the period of pseudoperiodic functions and applies to real periods.
Note that Hallgren's algorithm only applies for the uniform distribution, so we consider this case for now.
At a high level, Hallgren's algorithm first quantum Fourier samples twice and computes the continued fraction expansion of the quotient of the results.
Then, it constructs a guess for the period for each convergent of the expansion and iterates through each guess, checking which one approximates the period.
Ref.~\cite{hallgren2007polynomial} shows that one guess is guaranteed to be close to the period.
We discuss Hallgren's algorithm in more detail in Appendix~\ref{sec:hallgren}.

Notice that a crucial subroutine necessary for Hallgren's algorithm is a verification procedure to check if a given guess is close to the period of a pseudoperiodic function.
Unlike for periodic functions, where such verification is straightforward, this is nontrivial for pseudoperiodic functions.
In fact, Ref.~\cite{hallgren2007polynomial} leaves this as an assumption to be instantiated upon applying the guarantee of Hallgren's algorithm.

We design a suitable verification procedure which uses $D$ QSQs (see \Cref{thm:verification,thm:verification-non-unif} in Appendices~\ref{sec:general-uniform} and~\ref{sec:general-non-unif}, respectively).
The main idea is to compute the inner product between $h_{M_1,M_2}$ and $h_{M_1, M_2}(\cdot + T)$, where $T$ is a guess for the period.
Intuitively, this inner product should be large for a guess that approximates the period well.
We define an observable that allows us to compute this inner product using QSQs.
Then, we identify a suitable threshold which the inner product surpasses if and only if the guess is indeed close to the period.
The majority of the technical work for the verification procedure lies in finding such a threshold.
With this, the only remaining quantum part of Hallgren's algorithm is quantum Fourier sampling, which can be accomplished using QSQs by encoding the QFT into the queried observable.
Because we need to repeat this algorithm for each entry in the vector $w^\star \in \mathbb{R}^d$, we use $\tilde{\mathcal{O}}(dD)$ QSQs, where the polylogarithmic factors come from amplifying the success probability of Hallgren's algorithm.

Thus far, we discussed how to utilize Hallgren's algorithm for our problem, which only applies for uniform distributions.
We generalize these ideas to perform period finding for non-uniform input distributions.
This algorithm can be found explicitly in the appendices in \Cref{alg:hallgren-non-unif} in Appendix~\ref{sec:general-non-unif}, and the verification procedure is presented in \Cref{alg:verification-non-unif} in Appendix~\ref{sec:general-non-unif}.
Our algorithm follows the same structure as Hallgren's algorithm but requires a new analysis due to the different input distribution.
Here, we crucially use that the non-uniform distributions we consider are sufficiently flat, e.g., they are pointwise-close to uniform.
As discussed previously, our flatness condition only requires the univariate (unnormalized) marginals to be close to uniform, but the overall density can decay exponentially in $d$.

\subsubsection{Learning the periodic activation function}

In the previous section, we showed how to obtain an approximation $\hat{w}$ of the unknown vector $w^\star$ using quantum period finding.
Using this approximation, we can learn the unknown parameters $\beta_j^\star$, which determines the periodic activation function $\tilde{g}$ given in \Cref{eq:g-tilde-main}.
This step of the algorithm is purely classical.

With the approximation $\hat{w}$, we can consider predictors $f_\beta$ defined by
\begin{equation}
    f_\beta(x) \triangleq \sum_{j=1}^D \beta_j \cos(2\pi j x^\intercal \hat{w}),
\end{equation}
where $\beta \in \mathbb{R}^d$ is a vector of trainable parameters.
These predictors have the same form as the target function $g_{w^\star}$ but replace $w^\star$ and $\beta_j^\star$ with $\hat{w}$ and $\beta_j$, respectively.
Thus, the loss function from~\Cref{eq:loss-main} can be written more explicitly as
\begin{equation}
  \mathcal{L}_{w^\star}(\beta) = \int\limits_{x\sim \varphi^2}\left(\sum_{j=1}^D \beta_j^\star \cos(2\pi j x^\intercal w^\star) - \sum_{j=1}^D \beta_j \cos(2\pi j x^\intercal \hat{w})\right)^2\,dx.
\end{equation}
We use (approximate) gradient access to this loss function to learn parameters $\hat{\beta}$ such that $\mathcal{L}_{w^\star}(\hat{\beta}) \leq \epsilon$.
We acknowledge there may be other approaches to solve for the parameters, but we believe gradient descent is the most straightforward.
First, we show that the gradients are informative, i.e., the derivative of the objective function $\partial \mathcal{L}_{w^\star}/\partial \beta_k$ indeed reflects how far $\beta_k$ is from the true parameter $\beta_k^\star$.
With this, we can simply apply gradient descent (see, e.g.,~\cite{nesterov2018lectures}), where we show that the iterates converge to the true parameters within $t = \Theta(\log(D/\epsilon))$ steps.
Most of the work in this step goes into carefully choosing the hyperparameters (e.g., the number of iterations to run gradient descent, how accurate the approximation $\hat{w}$ is required to be, etc.) to guarantee that the value of the loss function is small.
In this step, the proofs for uniform and non-uniform distributions are very similar.
The full proofs are provided in Appendix~\ref{sec:outer-uniform} for the uniform case and Appendix~\ref{sec:outer-non-unif} for the non-uniform case.

\section*{Data Availability}

No data are generated or analyzed in this theoretical work.

\section*{Acknowledgments}

The authors thank Andrew Childs, András Gilyén, Hsin-Yuan (Robert) Huang, Robbie King, Robin Kothari, and Chirag Wadhwa for helpful discussions. L.L. was supported by a Marshall Scholarship. This work was done (in part) while a subset of the authors were visiting the Simons Institute for the Theory of Computing.

\section*{Author Contributions}

D.G. and J.R.M. conceived the project.
L.L. and D.G. developed the mathematical aspects of the work.
All authors contributed to the writing of the manuscript.

\section*{Competing Interests}

The authors declare no competing interests.

\newpage
\bibliographystyle{unsrt}
\bibliography{refs, paperpile}

@ARTICLE{Engels2024-ow,
title = "Not all language model features are linear",
author = "Engels, Joshua and Liao, Isaac and Michaud, Eric J and Gurnee, Wes and Tegmark, Max",
journal = "arXiv [cs.LG]",
year =  2024,
archivePrefix = "arXiv",
primaryClass = "cs.LG"
}

@ARTICLE{Maiorov1999-qk,
title = "Lower bounds for approximation by {MLP} neural networks",
author = "Maiorov, Vitaly and Pinkus, Allan",
journal = "Neurocomputing",
volume =  25,
number =  1,
pages = "81--91",
year =  1999
}

@ARTICLE{Feldman2015-hj,
title = "Statistical query algorithms for mean vector estimation and stochastic convex optimization",
author = "Feldman, Vitaly and Guzman, Cristobal and Vempala, Santosh",
journal = "arXiv [cs.LG]",
year =  2015,
archivePrefix = "arXiv",
primaryClass = "cs.LG"
}

@ARTICLE{Daniely2020-eu,
title = "Hardness of Learning Neural Networks with Natural Weights",
author = "Daniely, Amit and Vardi, Gal",
journal = "Advances in Neural Information Processing Systems",
volume =  33,
pages = "930--940",
year =  2020
}

@INPROCEEDINGS{Safran2018-xn,
title = "Spurious Local Minima are Common in Two-Layer {ReLU} Neural Networks",
author = "Safran, Itay and Shamir, Ohad",
booktitle = "International Conference on Machine Learning",
publisher = "PMLR",
pages = "4433--4441",
year =  2018,
language = "en"
}

@ARTICLE{Brown2020-uq,
title = "Language Models are Few-Shot Learners",
author = "Brown, Tom and Mann, Benjamin and Ryder, Nick and Subbiah, Melanie and Kaplan, Jared D and Dhariwal, Prafulla and Neelakantan, Arvind and Shyam, Pranav and Sastry, Girish and Askell, Amanda and Agarwal, Sandhini and Herbert-Voss, Ariel and Krueger, Gretchen and Henighan, Tom and Child, Rewon and Ramesh, Aditya and Ziegler, Daniel and Wu, Jeffrey and Winter, Clemens and Hesse, Chris and Chen, Mark and Sigler, Eric and Litwin, Mateusz and Gray, Scott and Chess, Benjamin and Clark, Jack and Berner, Christopher and McCandlish, Sam and Radford, Alec and Sutskever, Ilya and Amodei, Dario",
journal = "Advances in Neural Information Processing Systems",
volume =  33,
pages = "1877--1901",
year =  2020
}

@ARTICLE{Kiani2024-qt,
title = "Hardness of learning neural networks under the manifold hypothesis",
author = "Kiani, Bobak T and Wang, Jason and Weber, Melanie",
journal = "arXiv [cs.LG]",
year =  2024,
archivePrefix = "arXiv",
primaryClass = "cs.LG"
}

@ARTICLE{LeCun2015-ot,
title = "Deep learning",
author = "LeCun, Yann and Bengio, Yoshua and Hinton, Geoffrey",
journal = "Nature",
publisher = "Springer Science and Business Media LLC",
volume =  521,
number =  7553,
pages = "436--444",
year =  2015,
language = "en"
}

@ARTICLE{Templeton2024-wt,
title = "Scaling Monosemanticity: Extracting Interpretable Features from Claude 3 Sonnet",
author = "Templeton, Adly and Conerly, Tom and Marcus, Jonathan and Lindsey, Jack and Bricken, Trenton and Chen, Brian and Pearce, Adam and Citro, Craig and Ameisen, Emmanuel and Jones, Andy and Cunningham, Hoagy and Turner, Nicholas L and McDougall, Callum and MacDiarmid, Monte and Freeman, C Daniel and Sumers, Theodore R and Rees, Edward and Batson, Joshua and Jermyn, Adam and Carter, Shan and Olah, Chris and Henighan, Tom",
journal = "Transformer Circuits Thread",
year =  2024
}

@ARTICLE{Bricken2023-lz,
title = "Towards Monosemanticity: Decomposing Language Models With Dictionary Learning",
author = "Bricken, Trenton and Templeton, Adly and Batson, Joshua and Chen, Brian and Jermyn, Adam and Conerly, Tom and Turner, Nick and Anil, Cem and Denison, Carson and Askell, Amanda and Lasenby, Robert and Wu, Yifan and Kravec, Shauna and Schiefer, Nicholas and Maxwell, Tim and Joseph, Nicholas and Hatfield-Dodds, Zac and Tamkin, Alex and Nguyen, Karina and McLean, Brayden and Burke, Josiah E and Hume, Tristan and Carter, Shan and Henighan, Tom and Olah, Christopher",
journal = "Transformer Circuits Thread",
year =  2023
}

@ARTICLE{Cybenko1989-zh,
title = "Approximation by superpositions of a sigmoidal function",
author = "Cybenko, G",
journal = "Math. Control Signals Systems",
publisher = "Springer Science and Business Media LLC",
volume =  2,
number =  4,
pages = "303--314",
year =  1989,
language = "en"
}

@ARTICLE{Guliyev2018-qt,
title = "On the approximation by single hidden layer feedforward neural networks with fixed weights",
author = "Guliyev, Namig J and Ismailov, Vugar E",
journal = "Neural Netw.",
publisher = "Pergamon",
volume =  98,
pages = "296--304",
year =  2018,
language = "en"
}

@INPROCEEDINGS{Liu2024-pr,
title = "{FINER}: Flexible Spectral-bias Tuning in Implicit {NEural} Representation by Variable-periodic Activation Functions",
author = "Liu, Zhen and Zhu, Hao and Zhang, Qi and Fu, Jingde and Deng, Weibing and Ma, Zhan and Guo, Yanwen and Cao, Xun",
booktitle = "Proceedings of the IEEE/CVF Conference on Computer Vision and Pattern Recognition",
pages = "2713--2722",
year =  2024
}

@ARTICLE{Faroughi2023-th,
title = "Physics-informed Neural Networks with periodic activation functions for solute transport in heterogeneous porous media",
author = "Faroughi, Salah A and Soltanmohammadi, Ramin and Datta, Pingki and Mahjour, Seyed Kourosh and Faroughi, Shirko",
journal = "Mathematics",
publisher = "MDPI AG",
volume =  12,
number =  1,
pages =  63,
year =  2023,
language = "en"
}

@ARTICLE{Meronen2021-he,
title = "Periodic Activation Functions Induce Stationarity",
author = "Meronen, Lassi and Trapp, Martin and Solin, Arno",
journal = "Advances in Neural Information Processing Systems",
volume =  34,
pages = "1673--1685",
year =  2021
}

@ARTICLE{Mommert2024-sj,
title = "Periodically activated physics-informed neural networks for assimilation tasks for three-dimensional Rayleigh--B\'{e}nard convection",
author = "Mommert, Michael and Barta, Robin and Bauer, Christian and Volk, Marie-Christine and Wagner, Claus",
journal = "Comput. Fluids",
publisher = "Elsevier BV",
volume =  283,
number =  106419,
pages =  106419,
year =  2024,
language = "en"
}

@INPROCEEDINGS{Chan2021-jl,
title = "Pi-{GAN}: Periodic Implicit Generative Adversarial Networks for {3D}-Aware Image Synthesis",
author = "Chan, Eric R and Monteiro, Marco and Kellnhofer, Petr and Wu, Jiajun and Wetzstein, Gordon",
booktitle = "Proceedings of the IEEE/CVF Conference on Computer Vision and Pattern Recognition",
pages = "5799--5809",
year =  2021
}

@article{valiant1984theory,
  title={A theory of the learnable},
  author={Valiant, Leslie G},
  journal={Communications of the ACM},
  volume={27},
  number={11},
  pages={1134--1142},
  year={1984},
  publisher={ACM New York, NY, USA}
}

@article{kearns1998efficient,
  title={Efficient noise-tolerant learning from statistical queries},
  author={Kearns, Michael},
  journal={Journal of the ACM (JACM)},
  volume={45},
  number={6},
  pages={983--1006},
  year={1998},
  publisher={ACM New York, NY, USA}
}

@inproceedings{bshouty1995learning,
  title={Learning DNF over the uniform distribution using a quantum example oracle},
  author={Bshouty, Nader H and Jackson, Jeffrey C},
  booktitle={Proceedings of the eighth annual conference on Computational learning theory},
  pages={118--127},
  year={1995}
}

@article{kanade2018learning,
  title={Learning DNFs under product distributions via $\{$$\backslash$mu$\}$-biased quantum Fourier sampling},
  author={Kanade, Varun and Rocchetto, Andrea and Severini, Simone},
  journal={arXiv preprint arXiv:1802.05690},
  year={2018}
}

@article{arunachalam2020quantum,
  title={Quantum statistical query learning},
  author={Arunachalam, Srinivasan and Grilo, Alex B and Yuen, Henry},
  journal={arXiv preprint arXiv:2002.08240},
  year={2020}
}

@article{arunachalam2018optimal,
  title={Optimal quantum sample complexity of learning algorithms},
  author={Arunachalam, Srinivasan and De Wolf, Ronald},
  journal={Journal of Machine Learning Research},
  volume={19},
  number={71},
  pages={1--36},
  year={2018}
}

@article{atici2007quantum,
  title={Quantum algorithms for learning and testing juntas},
  author={At{\i}c{\i}, Alp and Servedio, Rocco A},
  journal={Quantum Information Processing},
  volume={6},
  number={5},
  pages={323--348},
  year={2007},
  publisher={Springer}
}

@article{cross2015quantum,
  title={Quantum learning robust against noise},
  author={Cross, Andrew W and Smith, Graeme and Smolin, John A},
  journal={Physical Review A},
  volume={92},
  number={1},
  pages={012327},
  year={2015},
  publisher={APS}
}

@inproceedings{bernstein1993quantum,
  title={Quantum complexity theory},
  author={Bernstein, Ethan and Vazirani, Umesh},
  booktitle={Proceedings of the twenty-fifth annual ACM symposium on Theory of computing},
  pages={11--20},
  year={1993}
}

@article{arunachalam2021two,
  title={Two new results about quantum exact learning},
  author={Arunachalam, Srinivasan and Chakraborty, Sourav and Lee, Troy and Paraashar, Manaswi and De Wolf, Ronald},
  journal={Quantum},
  volume={5},
  pages={587},
  year={2021},
  publisher={Verein zur F{\"o}rderung des Open Access Publizierens in den Quantenwissenschaften}
}

@article{grilo2019learning,
  title={Learning-with-errors problem is easy with quantum samples},
  author={Grilo, Alex B and Kerenidis, Iordanis and Zijlstra, Timo},
  journal={Physical Review A},
  volume={99},
  number={3},
  pages={032314},
  year={2019},
  publisher={APS}
}

@article{caro2020quantum,
  title={Quantum learning Boolean linear functions wrt product distributions},
  author={Caro, Matthias C},
  journal={Quantum Information Processing},
  volume={19},
  number={6},
  pages={172},
  year={2020},
  publisher={Springer}
}

@article{shamir2018distribution,
  title={Distribution-specific hardness of learning neural networks},
  author={Shamir, Ohad},
  journal={Journal of Machine Learning Research},
  volume={19},
  number={32},
  pages={1--29},
  year={2018}
}

@inproceedings{shalev2017failures,
  title={Failures of gradient-based deep learning},
  author={Shalev-Shwartz, Shai and Shamir, Ohad and Shammah, Shaked},
  booktitle={International Conference on Machine Learning},
  pages={3067--3075},
  year={2017},
  organization={PMLR}
}

@article{arunachalam2017survey,
  title={A survey of quantum learning theory},
  author={Arunachalam, Srinivasan and de Wolf, Ronald},
  journal={arXiv preprint arXiv:1701.06806},
  year={2017}
}

@article{hallgren2007polynomial,
  title={Polynomial-time quantum algorithms for Pell's equation and the principal ideal problem},
  author={Hallgren, Sean},
  journal={Journal of the ACM (JACM)},
  volume={54},
  number={1},
  pages={1--19},
  year={2007},
  publisher={ACM New York, NY, USA}
}

@article{jozsa2003notes,
  title={Notes on Hallgren's efficient quantum algorithm for solving Pell's equation},
  author={Jozsa, Richard},
  journal={arXiv preprint quant-ph/0302134},
  year={2003}
}

@inproceedings{blum1994weakly,
  title={Weakly learning DNF and characterizing statistical query learning using Fourier analysis},
  author={Blum, Avrim and Furst, Merrick and Jackson, Jeffrey and Kearns, Michael and Mansour, Yishay and Rudich, Steven},
  booktitle={Proceedings of the twenty-sixth annual ACM symposium on Theory of computing},
  pages={253--262},
  year={1994}
}

@book{nesterov2018lectures,
  title={Lectures on convex optimization},
  author={Nesterov, Yurii and others},
  volume={137},
  year={2018},
  publisher={Springer}
}

@book{hunter2001applied,
  title={Applied analysis},
  author={Hunter, John K and Nachtergaele, Bruno},
  year={2001},
  publisher={World Scientific Publishing Company}
}

@article{sitzmann2020implicit,
  title={Implicit neural representations with periodic activation functions},
  author={Sitzmann, Vincent and Martel, Julien and Bergman, Alexander and Lindell, David and Wetzstein, Gordon},
  journal={Advances in neural information processing systems},
  volume={33},
  pages={7462--7473},
  year={2020}
}

@article{atici2005improved,
  title={Improved bounds on quantum learning algorithms},
  author={Atici, Alp and Servedio, Rocco A},
  journal={Quantum Information Processing},
  volume={4},
  number={5},
  pages={355--386},
  year={2005},
  publisher={Springer}
}

@article{servedio2004equivalences,
  title={Equivalences and separations between quantum and classical learnability},
  author={Servedio, Rocco A and Gortler, Steven J},
  journal={SIAM Journal on Computing},
  volume={33},
  number={5},
  pages={1067--1092},
  year={2004},
  publisher={SIAM}
}

@article{zhang2010improved,
  title={An improved lower bound on query complexity for quantum PAC learning},
  author={Zhang, Chi},
  journal={Information Processing Letters},
  volume={111},
  number={1},
  pages={40--45},
  year={2010},
  publisher={Elsevier}
}

@article{hinsche2021learnability,
  title={Learnability of the output distributions of local quantum circuits},
  author={Hinsche, Marcel and Ioannou, Marios and Nietner, Alexander and Haferkamp, Jonas and Quek, Yihui and Hangleiter, Dominik and Seifert, Jean-Pierre and Eisert, Jens and Sweke, Ryan},
  journal={arXiv preprint arXiv:2110.05517},
  year={2021}
}

@article{hinsche2022single,
  title={A single $ T $-gate makes distribution learning hard},
  author={Hinsche, Marcel and Ioannou, Marios and Nietner, Alexander and Haferkamp, Jonas and Quek, Yihui and Hangleiter, Dominik and Seifert, Jean-Pierre and Eisert, Jens and Sweke, Ryan},
  journal={arXiv preprint arXiv:2207.03140},
  year={2022}
}

@article{nietner2023average,
  title={On the average-case complexity of learning output distributions of quantum circuits},
  author={Nietner, Alexander and Ioannou, Marios and Sweke, Ryan and Kueng, Richard and Eisert, Jens and Hinsche, Marcel and Haferkamp, Jonas},
  journal={arXiv preprint arXiv:2305.05765},
  year={2023}
}

@inproceedings{nadimpalli2024pauli,
  title={On the Pauli Spectrum of QAC0},
  author={Nadimpalli, Shivam and Parham, Natalie and Vasconcelos, Francisca and Yuen, Henry},
  booktitle={Proceedings of the 56th Annual ACM Symposium on Theory of Computing},
  pages={1498--1506},
  year={2024}
}

@article{montanaro2012quantum,
  title={The quantum query complexity of learning multilinear polynomials},
  author={Montanaro, Ashley},
  journal={Information Processing Letters},
  volume={112},
  number={11},
  pages={438--442},
  year={2012},
  publisher={Elsevier}
}

@article{arunachalam2024learning,
  title={Learning low-degree quantum objects},
  author={Arunachalam, Srinivasan and Dutt, Arkopal and Guti{\'e}rrez, Francisco Escudero and Palazuelos, Carlos},
  journal={arXiv preprint arXiv:2405.10933},
  year={2024}
}

@article{harrow2009quantum,
  title={Quantum algorithm for linear systems of equations},
  author={Harrow, Aram W and Hassidim, Avinatan and Lloyd, Seth},
  journal={Physical review letters},
  volume={103},
  number={15},
  pages={150502},
  year={2009},
  publisher={APS}
}

@article{wiebe2012quantum,
  title={Quantum algorithm for data fitting},
  author={Wiebe, Nathan and Braun, Daniel and Lloyd, Seth},
  journal={Physical review letters},
  volume={109},
  number={5},
  pages={050505},
  year={2012},
  publisher={APS}
}

@article{lloyd2013quantum,
  title={Quantum algorithms for supervised and unsupervised machine learning},
  author={Lloyd, Seth and Mohseni, Masoud and Rebentrost, Patrick},
  journal={arXiv preprint arXiv:1307.0411},
  year={2013}
}

@article{lloyd2014quantum,
  title={Quantum principal component analysis},
  author={Lloyd, Seth and Mohseni, Masoud and Rebentrost, Patrick},
  journal={Nature Physics},
  volume={10},
  number={9},
  pages={631--633},
  year={2014},
  publisher={Nature Publishing Group}
}

@article{rebentrost2014quantum,
  title={Quantum support vector machine for big data classification},
  author={Rebentrost, Patrick and Mohseni, Masoud and Lloyd, Seth},
  journal={Physical review letters},
  volume={113},
  number={13},
  pages={130503},
  year={2014},
  publisher={APS}
}

@article{lloyd2016quantum,
  title={Quantum algorithms for topological and geometric analysis of data},
  author={Lloyd, Seth and Garnerone, Silvano and Zanardi, Paolo},
  journal={Nature communications},
  volume={7},
  number={1},
  pages={10138},
  year={2016},
  publisher={Nature Publishing Group UK London}
}

@article{cong2016quantum,
  title={Quantum discriminant analysis for dimensionality reduction and classification},
  author={Cong, Iris and Duan, Luming},
  journal={New Journal of Physics},
  volume={18},
  number={7},
  pages={073011},
  year={2016},
  publisher={IOP Publishing}
}

@article{kerenidis2016quantum,
  title={Quantum recommendation systems},
  author={Kerenidis, Iordanis and Prakash, Anupam},
  journal={arXiv preprint arXiv:1603.08675},
  year={2016}
}

@inproceedings{brandao2019quantum,
  title={Quantum SDP solvers: Large speed-ups, optimality, and applications to quantum learning},
  author={Brand{\~a}o, Fernando GSL and Kalev, Amir and Li, Tongyang and Lin, Cedric Yen-Yu and Svore, Krysta M and Wu, Xiaodi},
  booktitle={46th International Colloquium on Automata, Languages, and Programming (ICALP 2019)},
  year={2019},
  organization={Schloss-Dagstuhl-Leibniz Zentrum f{\"u}r Informatik}
}

@article{rebentrost2018quantum,
  title={Quantum singular-value decomposition of nonsparse low-rank matrices},
  author={Rebentrost, Patrick and Steffens, Adrian and Marvian, Iman and Lloyd, Seth},
  journal={Physical review A},
  volume={97},
  number={1},
  pages={012327},
  year={2018},
  publisher={APS}
}

@article{zhao2019quantum,
  title={Quantum-assisted Gaussian process regression},
  author={Zhao, Zhikuan and Fitzsimons, Jack K and Fitzsimons, Joseph F},
  journal={Physical Review A},
  volume={99},
  number={5},
  pages={052331},
  year={2019},
  publisher={APS}
}

@inproceedings{aimeur2006machine,
  title={Machine learning in a quantum world},
  author={A{\"\i}meur, Esma and Brassard, Gilles and Gambs, S{\'e}bastien},
  booktitle={Advances in Artificial Intelligence: 19th Conference of the Canadian Society for Computational Studies of Intelligence, Canadian AI 2006, Qu{\'e}bec City, Qu{\'e}bec, Canada, June 7-9, 2006. Proceedings 19},
  pages={431--442},
  year={2006},
  organization={Springer}
}

@article{aimeur2013quantum,
  title={Quantum speed-up for unsupervised learning},
  author={A{\"\i}meur, Esma and Brassard, Gilles and Gambs, S{\'e}bastien},
  journal={Machine Learning},
  volume={90},
  pages={261--287},
  year={2013},
  publisher={Springer}
}

@article{wiebe2014quantum,
  title={Quantum deep learning},
  author={Wiebe, Nathan and Kapoor, Ashish and Svore, Krysta M},
  journal={arXiv preprint arXiv:1412.3489},
  year={2014}
}

@article{kapoor2016quantum,
  title={Quantum perceptron models},
  author={Kapoor, Ashish and Wiebe, Nathan and Svore, Krysta},
  journal={Advances in neural information processing systems},
  volume={29},
  year={2016}
}

@article{aaronson2014quantum,
  title={Quantum Machine Learning Algorithms: Read the Fine Print},
  author={Aaronson, Scott},
  journal={Nature Physics},
  pages={5},
  year={2014},
  publisher={Citeseer}
}

@inproceedings{jackson2002quantum,
  title={Quantum DNF learnability revisited},
  author={Jackson, Jeffrey C and Tamon, Christino and Yamakami, Tomoyuki},
  booktitle={Computing and Combinatorics: 8th Annual International Conference, COCOON 2002 Singapore, August 15--17, 2002 Proceedings 8},
  pages={595--604},
  year={2002},
  organization={Springer}
}

@article{gavinsky2008quantum,
  title={Quantum predictive learning and communication complexity with single input},
  author={Gavinsky, Dmytro},
  journal={arXiv preprint arXiv:0812.3429},
  year={2008}
}

@inproceedings{shor1994algorithms,
  title={Algorithms for quantum computation: discrete logarithms and factoring},
  author={Shor, Peter W},
  booktitle={Proceedings 35th annual symposium on foundations of computer science},
  pages={124--134},
  year={1994},
  organization={Ieee}
}

@book{nielsen2010quantum,
  title={Quantum computation and quantum information},
  author={Nielsen, Michael A and Chuang, Isaac L},
  year={2010},
  publisher={Cambridge university press}
}

@book{taylor1982introduction,
  title={An introduction to error analysis: the study of uncertainties in physical measurements},
  author={Taylor, John Robert and Thompson, William},
  volume={2},
  year={1982},
  publisher={Springer}
}

@book{hardy1979introduction,
  title={An introduction to the theory of numbers},
  author={Hardy, Godfrey Harold and Wright, Edward Maitland},
  year={1979},
  publisher={Oxford university press}
}

@article{chen2024quantum,
  title={A Quantum Speed-Up for Approximating the Top Eigenvectors of a Matrix},
  author={Chen, Yanlin and Gily{\'e}n, Andr{\'a}s and de Wolf, Ronald},
  journal={arXiv preprint arXiv:2405.14765},
  year={2024}
}

@article{chen2021quantum,
  title={Quantum algorithms and lower bounds for linear regression with norm constraints},
  author={Chen, Yanlin and de Wolf, Ronald},
  journal={arXiv preprint arXiv:2110.13086},
  year={2021}
}

@article{iaconis2024quantum,
  title={Quantum state preparation of normal distributions using matrix product states},
  author={Iaconis, Jason and Johri, Sonika and Zhu, Elton Yechao},
  journal={npj Quantum Information},
  volume={10},
  number={1},
  pages={15},
  year={2024},
  publisher={Nature Publishing Group UK London}
}

@article{grover2002creating,
  title={Creating superpositions that correspond to efficiently integrable probability distributions},
  author={Grover, Lov and Rudolph, Terry},
  journal={arXiv preprint quant-ph/0208112},
  year={2002}
}

@article{mcardle2022quantum,
  title={Quantum state preparation without coherent arithmetic},
  author={McArdle, Sam and Gily{\'e}n, Andr{\'a}s and Berta, Mario},
  journal={arXiv preprint arXiv:2210.14892},
  year={2022}
}

@article{rattew2021efficient,
  title={The efficient preparation of normal distributions in quantum registers},
  author={Rattew, Arthur G and Sun, Yue and Minssen, Pierre and Pistoia, Marco},
  journal={Quantum},
  volume={5},
  pages={609},
  year={2021},
  publisher={Verein zur F{\"o}rderung des Open Access Publizierens in den Quantenwissenschaften}
}

@inproceedings{holmes2020efficient,
  title={Efficient quantum circuits for accurate state preparation of smooth, differentiable functions},
  author={Holmes, Adam and Matsuura, Anne Y},
  booktitle={2020 IEEE International Conference on Quantum Computing and Engineering (QCE)},
  pages={169--179},
  year={2020},
  organization={IEEE}
}

@book{schrijver1998theory,
  title={Theory of linear and integer programming},
  author={Schrijver, Alexander},
  year={1998},
  publisher={John Wiley \& Sons}
}

@article{li2010concise,
  title={Concise formulas for the area and volume of a hyperspherical cap},
  author={Li, Shengqiao},
  journal={Asian Journal of Mathematics \& Statistics},
  volume={4},
  number={1},
  pages={66--70},
  year={2010},
  publisher={Science Alert}
}

@article{wendel1948note,
  title={Note on the gamma function},
  author={Wendel, James G},
  journal={The American Mathematical Monthly},
  volume={55},
  number={9},
  pages={563},
  year={1948},
  publisher={Informa UK Limited}
}

@book{vershynin2018high,
  title={High-dimensional probability: An introduction with applications in data science},
  author={Vershynin, Roman},
  volume={47},
  year={2018},
  publisher={Cambridge university press}
}

@book{powell1981approximation,
  title={Approximation theory and methods},
  author={Powell, Michael James David},
  year={1981},
  publisher={Cambridge university press}
}

@article{childs2010quantum,
  title={Quantum algorithms for algebraic problems},
  author={Childs, Andrew M and Van Dam, Wim},
  journal={Reviews of Modern Physics},
  volume={82},
  number={1},
  pages={1--52},
  year={2010},
  publisher={APS}
}

@inproceedings{brutzkus2017globally,
  title={Globally optimal gradient descent for a convnet with gaussian inputs},
  author={Brutzkus, Alon and Globerson, Amir},
  booktitle={International conference on machine learning},
  pages={605--614},
  year={2017},
  organization={PMLR}
}

@article{lacroix2024scaling,
  title={Scaling and logic in the color code on a superconducting quantum processor},
  author={Lacroix, Nathan and Bourassa, Alexandre and Heras, Francisco JH and Zhang, Lei M and Bausch, Johannes and Senior, Andrew W and Edlich, Thomas and Shutty, Noah and Sivak, Volodymyr and Bengtsson, Andreas and others},
  journal={arXiv preprint arXiv:2412.14256},
  year={2024}
}

@article{acharya2024quantum,
  title={Quantum error correction below the surface code threshold},
  author={Acharya, Rajeev and Aghababaie-Beni, Laleh and Aleiner, Igor and Andersen, Trond I and Ansmann, Markus and Arute, Frank and Arya, Kunal and Asfaw, Abraham and Astrakhantsev, Nikita and Atalaya, Juan and others},
  journal={Nature},
  year={2024}
}

@article{eickbusch2024demonstrating,
  title={Demonstrating dynamic surface codes},
  author={Eickbusch, Alec and McEwen, Matt and Sivak, Volodymyr and Bourassa, Alexandre and Atalaya, Juan and Claes, Jahan and Kafri, Dvir and Gidney, Craig and Warren, Christopher W and Gross, Jonathan and others},
  journal={arXiv preprint arXiv:2412.14360},
  year={2024}
}

@article{rodriguez2024experimental,
  title={Experimental demonstration of logical magic state distillation},
  author={Rodriguez, Pedro Sales and Robinson, John M and Jepsen, Paul Niklas and He, Zhiyang and Duckering, Casey and Zhao, Chen and Wu, Kai-Hsin and Campo, Joseph and Bagnall, Kevin and Kwon, Minho and others},
  journal={arXiv preprint arXiv:2412.15165},
  year={2024}
}

@article{reichardt2024logical,
  title={Logical computation demonstrated with a neutral atom quantum processor},
  author={Reichardt, Ben W and Paetznick, Adam and Aasen, David and Basov, Ivan and Bello-Rivas, Juan M and Bonderson, Parsa and Chao, Rui and van Dam, Wim and Hastings, Matthew B and Paz, Andres and others},
  journal={arXiv preprint arXiv:2411.11822},
  year={2024}
}

@article{reichardt2024demonstration,
  title={Demonstration of quantum computation and error correction with a tesseract code},
  author={Reichardt, Ben W and Aasen, David and Chao, Rui and Chernoguzov, Alex and van Dam, Wim and Gaebler, John P and Gresh, Dan and Lucchetti, Dominic and Mills, Michael and Moses, Steven A and others},
  journal={arXiv preprint arXiv:2409.04628},
  year={2024}
}

@article{bravyi2024high,
  title={High-threshold and low-overhead fault-tolerant quantum memory},
  author={Bravyi, Sergey and Cross, Andrew W and Gambetta, Jay M and Maslov, Dmitri and Rall, Patrick and Yoder, Theodore J},
  journal={Nature},
  volume={627},
  number={8005},
  pages={778--782},
  year={2024},
  publisher={Nature Publishing Group UK London}
}

@article{da2024demonstration,
  title={Demonstration of logical qubits and repeated error correction with better-than-physical error rates},
  author={Da Silva, MP and Ryan-Anderson, C and Bello-Rivas, JM and Chernoguzov, A and Dreiling, JM and Foltz, C and Frachon, F and Gaebler, JP and Gatterman, TM and Grans-Samuelsson, L and others},
  journal={arXiv preprint arXiv:2404.02280},
  year={2024}
}

@article{caune2024demonstrating,
  title={Demonstrating real-time and low-latency quantum error correction with superconducting qubits},
  author={Caune, Laura and Skoric, Luka and Blunt, Nick S and Ruban, Archibald and McDaniel, Jimmy and Valery, Joseph A and Patterson, Andrew D and Gramolin, Alexander V and Majaniemi, Joonas and Barnes, Kenton M and others},
  journal={arXiv preprint arXiv:2410.05202},
  year={2024}
}

@article{zhou2024algorithmic,
  title={Algorithmic fault tolerance for fast quantum computing},
  author={Zhou, Hengyun and Zhao, Chen and Cain, Madelyn and Bluvstein, Dolev and Duckering, Casey and Hu, Hong-Ye and Wang, Sheng-Tao and Kubica, Aleksander and Lukin, Mikhail D},
  journal={arXiv preprint arXiv:2406.17653},
  year={2024}
}

@article{putterman2024hardware,
  title={Hardware-efficient quantum error correction using concatenated bosonic qubits},
  author={Putterman, Harald and Noh, Kyungjoo and Hann, Connor T and MacCabe, Gregory S and Aghaeimeibodi, Shahriar and Patel, Rishi N and Lee, Menyoung and Jones, William M and Moradinejad, Hesam and Rodriguez, Roberto and others},
  journal={arXiv preprint arXiv:2409.13025},
  year={2024}
}

@article{lee2024low,
  title={Low-overhead magic state distillation with color codes},
  author={Lee, Seok-Hyung and Thomsen, Felix and Fazio, Nicholas and Brown, Benjamin J and Bartlett, Stephen D},
  journal={arXiv preprint arXiv:2409.07707},
  year={2024}
}

@article{nguyen2024quantum,
  title={Quantum fault tolerance with constant-space and logarithmic-time overheads},
  author={Nguyen, Quynh T and Pattison, Christopher A},
  journal={arXiv preprint arXiv:2411.03632},
  year={2024}
}

@article{wills2024constant,
  title={Constant-overhead magic state distillation},
  author={Wills, Adam and Hsieh, Min-Hsiu and Yamasaki, Hayata},
  journal={arXiv preprint arXiv:2408.07764},
  year={2024}
}

@article{gidney2024magic,
  title={Magic state cultivation: growing T states as cheap as CNOT gates},
  author={Gidney, Craig and Shutty, Noah and Jones, Cody},
  journal={arXiv preprint arXiv:2409.17595},
  year={2024}
}

@article{huang2022quantum,
  title={Quantum advantage in learning from experiments},
  author={Huang, Hsin-Yuan and Broughton, Michael and Cotler, Jordan and Chen, Sitan and Li, Jerry and Mohseni, Masoud and Neven, Hartmut and Babbush, Ryan and Kueng, Richard and Preskill, John and others},
  journal={Science},
  volume={376},
  number={6598},
  pages={1182--1186},
  year={2022},
  publisher={American Association for the Advancement of Science}
}

@article{zhu2023interactive,
  title={Interactive cryptographic proofs of quantumness using mid-circuit measurements},
  author={Zhu, Daiwei and Kahanamoku-Meyer, Gregory D and Lewis, Laura and Noel, Crystal and Katz, Or and Harraz, Bahaa and Wang, Qingfeng and Risinger, Andrew and Feng, Lei and Biswas, Debopriyo and others},
  journal={Nature Physics},
  volume={19},
  number={11},
  pages={1725--1731},
  year={2023},
  publisher={Nature Publishing Group UK London}
}

@article{lewis2024experimental,
  title={Experimental implementation of an efficient test of quantumness},
  author={Lewis, Laura and Zhu, Daiwei and Gheorghiu, Alexandru and Noel, Crystal and Katz, Or and Harraz, Bahaa and Wang, Qingfeng and Risinger, Andrew and Feng, Lei and Biswas, Debopriyo and others},
  journal={Physical Review A},
  volume={109},
  number={1},
  pages={012610},
  year={2024},
  publisher={APS}
}

@article{arute2019quantum,
  title={Quantum supremacy using a programmable superconducting processor},
  author={Arute, Frank and Arya, Kunal and Babbush, Ryan and Bacon, Dave and Bardin, Joseph C and Barends, Rami and Biswas, Rupak and Boixo, Sergio and Brandao, Fernando GSL and Buell, David A and others},
  journal={Nature},
  volume={574},
  number={7779},
  pages={505--510},
  year={2019},
  publisher={Nature Publishing Group}
}

@article{zhong2020quantum,
  title={Quantum computational advantage using photons},
  author={Zhong, Han-Sen and Wang, Hui and Deng, Yu-Hao and Chen, Ming-Cheng and Peng, Li-Chao and Luo, Yi-Han and Qin, Jian and Wu, Dian and Ding, Xing and Hu, Yi and others},
  journal={Science},
  volume={370},
  number={6523},
  pages={1460--1463},
  year={2020},
  publisher={American Association for the Advancement of Science}
}

@article{wu2021strong,
  title={Strong quantum computational advantage using a superconducting quantum processor},
  author={Wu, Yulin and Bao, Wan-Su and Cao, Sirui and Chen, Fusheng and Chen, Ming-Cheng and Chen, Xiawei and Chung, Tung-Hsun and Deng, Hui and Du, Yajie and Fan, Daojin and others},
  journal={Physical review letters},
  volume={127},
  number={18},
  pages={180501},
  year={2021},
  publisher={APS}
}

@article{zhu2022quantum,
  title={Quantum computational advantage via 60-qubit 24-cycle random circuit sampling},
  author={Zhu, Qingling and Cao, Sirui and Chen, Fusheng and Chen, Ming-Cheng and Chen, Xiawei and Chung, Tung-Hsun and Deng, Hui and Du, Yajie and Fan, Daojin and Gong, Ming and others},
  journal={Science bulletin},
  volume={67},
  number={3},
  pages={240--245},
  year={2022},
  publisher={Elsevier}
}

@article{song2021cryptographic,
  title={On the cryptographic hardness of learning single periodic neurons},
  author={Song, Min Jae and Zadik, Ilias and Bruna, Joan},
  journal={Advances in neural information processing systems},
  volume={34},
  pages={29602--29615},
  year={2021}
}

@article{song2017complexity,
  title={On the complexity of learning neural networks},
  author={Song, Le and Vempala, Santosh and Wilmes, John and Xie, Bo},
  journal={Advances in neural information processing systems},
  volume={30},
  year={2017}
}

@article{regev2009lattices,
  title={On lattices, learning with errors, random linear codes, and cryptography},
  author={Regev, Oded},
  journal={Journal of the ACM (JACM)},
  volume={56},
  number={6},
  pages={1--40},
  year={2009},
  publisher={ACM New York, NY, USA}
}

@incollection{micciancio2009lattice,
  title={Lattice-based cryptography},
  author={Micciancio, Daniele and Regev, Oded},
  booktitle={Post-quantum cryptography},
  pages={147--191},
  year={2009},
  publisher={Springer}
}

@article{nelder1972generalized,
  title={Generalized linear models},
  author={Nelder, John Ashworth and Wedderburn, Robert WM},
  journal={Journal of the Royal Statistical Society Series A: Statistics in Society},
  volume={135},
  number={3},
  pages={370--384},
  year={1972},
  publisher={Oxford University Press}
}

@article{muller2012generalized,
  title={Generalized linear models},
  author={M{\"u}ller, Marlene},
  journal={Handbook of Computational Statistics: Concepts and Methods},
  pages={681--709},
  year={2012},
  publisher={Springer}
}

@article{jumper2021highly,
  title={Highly accurate protein structure prediction with AlphaFold},
  author={Jumper, John and Evans, Richard and Pritzel, Alexander and Green, Tim and Figurnov, Michael and Ronneberger, Olaf and Tunyasuvunakool, Kathryn and Bates, Russ and {\v{Z}}{\'\i}dek, Augustin and Potapenko, Anna and others},
  journal={Nature},
  volume={596},
  number={7873},
  pages={583--589},
  year={2021},
  publisher={Nature Publishing Group}
}

@inproceedings{bubeck2020entanglement,
  title={Entanglement is necessary for optimal quantum property testing},
  author={Bubeck, Sebastien and Chen, Sitan and Li, Jerry},
  booktitle={2020 IEEE 61st Annual Symposium on Foundations of Computer Science (FOCS)},
  pages={692--703},
  year={2020},
  organization={IEEE}
}

@article{arunachalam2024role,
  title={On the role of entanglement and statistics in learning},
  author={Arunachalam, Srinivasan and Havlicek, Vojtech and Schatzki, Louis},
  journal={Advances in Neural Information Processing Systems},
  volume={36},
  year={2024}
}

@article{subbotin1923law,
  title={On the law of frequency of error},
  author={Subbotin, M Th},
  journal={Matematicheskii},
  volume={31},
  number={2},
  pages={296--301},
  year={1923},
  publisher={Российская академия наук, Математический институт им. ВА Стеклова Российской~…}
}

@article{do2002wavelet,
  title={Wavelet-based texture retrieval using generalized Gaussian density and Kullback-Leibler distance},
  author={Do, Minh N and Vetterli, Martin},
  journal={IEEE transactions on image processing},
  volume={11},
  number={2},
  pages={146--158},
  year={2002},
  publisher={IEEE}
}

@article{mallat1989theory,
  title={A theory for multiresolution signal decomposition: the wavelet representation},
  author={Mallat, Stephane G},
  journal={IEEE transactions on pattern analysis and machine intelligence},
  volume={11},
  number={7},
  pages={674--693},
  year={1989},
  publisher={Ieee}
}

@article{moulin1999analysis,
  title={Analysis of multiresolution image denoising schemes using generalized Gaussian and complexity priors},
  author={Moulin, Pierre and Liu, Juan},
  journal={IEEE transactions on Information Theory},
  volume={45},
  number={3},
  pages={909--919},
  year={1999},
  publisher={IEEE}
}

@article{pearl1920rate,
  title={On the rate of growth of the population of the United States since 1790 and its mathematical representation},
  author={Pearl, Raymond and Reed, Lowell J},
  journal={Proceedings of the national academy of sciences},
  volume={6},
  number={6},
  pages={275--288},
  year={1920},
  publisher={National Acad Sciences}
}

@article{pearl1940logistic,
  title={The logistic curve and the census count of 1940},
  author={Pearl, Raymond and Reed, Lowell J and Kish, Joseph F},
  journal={Science},
  volume={92},
  number={2395},
  pages={486--488},
  year={1940},
  publisher={American Association for the Advancement of Science}
}

@article{schultz1930standard,
  title={The standard error of a forecast from a curve},
  author={Schultz, Henry},
  journal={Journal of the American Statistical Association},
  volume={25},
  number={170},
  pages={139--185},
  year={1930},
  publisher={Taylor \& Francis}
}

@article{plackett1959analysis,
  title={The analysis of life test data},
  author={Plackett, Robin L},
  journal={Technometrics},
  volume={1},
  number={1},
  pages={9--19},
  year={1959},
  publisher={Taylor \& Francis}
}

@book{levy1925calcul,
  title={Calcul des probabilit{\'e}s},
  author={L{\'e}vy, Paul},
  year={1925},
  publisher={Gauthier-Villars}
}

@article{balakrishnan1988order,
  title={Order statistics from the type I generalized logistic distribution},
  author={Balakrishnan, Narayanaswamy and Leung, MY},
  journal={Communications in Statistics-Simulation and Computation},
  volume={17},
  number={1},
  pages={25--50},
  year={1988},
  publisher={Taylor \& Francis}
}

@article{caro2024testing,
  title={Testing classical properties from quantum data},
  author={Caro, Matthias C and Naik, Preksha and Slote, Joseph},
  journal={arXiv preprint arXiv:2411.12730},
  year={2024}
}

@article{tanner1987calculation,
  title={The calculation of posterior distributions by data augmentation},
  author={Tanner, Martin A and Wong, Wing Hung},
  journal={Journal of the American statistical Association},
  volume={82},
  number={398},
  pages={528--540},
  year={1987},
  publisher={Taylor \& Francis}
}

@article{gelfand1990sampling,
  title={Sampling-based approaches to calculating marginal densities},
  author={Gelfand, Alan E and Smith, Adrian FM},
  journal={Journal of the American statistical association},
  volume={85},
  number={410},
  pages={398--409},
  year={1990},
  publisher={Taylor \& Francis}
}

@article{kirkpatrick1983optimization,
  title={Optimization by simulated annealing},
  author={Kirkpatrick, Scott and Gelatt Jr, C Daniel and Vecchi, Mario P},
  journal={science},
  volume={220},
  number={4598},
  pages={671--680},
  year={1983},
  publisher={American association for the advancement of science}
}

@article{vcerny1985thermodynamical,
  title={Thermodynamical approach to the traveling salesman problem: An efficient simulation algorithm},
  author={{\v{C}}ern{\`y}, Vladim{\'\i}r},
  journal={Journal of optimization theory and applications},
  volume={45},
  pages={41--51},
  year={1985},
  publisher={Springer}
}

@article{wei2022emergent,
  title={Emergent abilities of large language models},
  author={Wei, Jason and Tay, Yi and Bommasani, Rishi and Raffel, Colin and Zoph, Barret and Borgeaud, Sebastian and Yogatama, Dani and Bosma, Maarten and Zhou, Denny and Metzler, Donald and others},
  journal={arXiv preprint arXiv:2206.07682},
  year={2022}
}

@misc{rattew2022preparingarbitrarycontinuousfunctions,
      title={Preparing Arbitrary Continuous Functions in Quantum Registers With Logarithmic Complexity}, 
      author={Arthur G. Rattew and Bálint Koczor},
      year={2022},
      eprint={2205.00519},
      archivePrefix={arXiv},
      primaryClass={quant-ph},
      url={https://arxiv.org/abs/2205.00519}, 
}

@misc{jaques2023qramsurveycritique,
      title={QRAM: A Survey and Critique}, 
      author={Samuel Jaques and Arthur G. Rattew},
      year={2023},
      eprint={2305.10310},
      archivePrefix={arXiv},
      primaryClass={quant-ph},
      url={https://arxiv.org/abs/2305.10310}, 
}

@misc{rattew2023nonlineartransformationsquantumamplitudes,
      title={Non-Linear Transformations of Quantum Amplitudes: Exponential Improvement, Generalization, and Applications}, 
      author={Arthur G. Rattew and Patrick Rebentrost},
      year={2023},
      eprint={2309.09839},
      archivePrefix={arXiv},
      primaryClass={quant-ph},
      url={https://arxiv.org/abs/2309.09839}, 
}

@misc{rosenkranz2024quantumstatepreparationmultivariate,
      title={Quantum state preparation for multivariate functions}, 
      author={Matthias Rosenkranz and Eric Brunner and Gabriel Marin-Sanchez and Nathan Fitzpatrick and Silas Dilkes and Yao Tang and Yuta Kikuchi and Marcello Benedetti},
      year={2024},
      eprint={2405.21058},
      archivePrefix={arXiv},
      primaryClass={quant-ph},
      url={https://arxiv.org/abs/2405.21058}, 
}

@article{linardatos2020explainable,
  title={Explainable ai: A review of machine learning interpretability methods},
  author={Linardatos, Pantelis and Papastefanopoulos, Vasilis and Kotsiantis, Sotiris},
  journal={Entropy},
  volume={23},
  number={1},
  pages={18},
  year={2020},
  publisher={MDPI}
}

@article{kasim2021building,
  title={Building high accuracy emulators for scientific simulations with deep neural architecture search},
  author={Kasim, Muhammad Firmansyah and Watson-Parris, Duncan and Deaconu, Lucia and Oliver, Sophy and Hatfield, P and Froula, Dustin H and Gregori, Gianluca and Jarvis, Matt and Khatiwala, Samar and Korenaga, Jun and others},
  journal={Machine Learning: Science and Technology},
  volume={3},
  number={1},
  pages={015013},
  year={2021},
  publisher={IOP Publishing}
}

@inproceedings{gupte2022continuous,
  title={Continuous lwe is as hard as lwe \& applications to learning gaussian mixtures},
  author={Gupte, Aparna and Vafa, Neekon and Vaikuntanathan, Vinod},
  booktitle={2022 IEEE 63rd Annual Symposium on Foundations of Computer Science (FOCS)},
  pages={1162--1173},
  year={2022},
  organization={IEEE}
}

@article{izdebski2020improved,
  title={Improved quantum boosting},
  author={Izdebski, Adam and de Wolf, Ronald},
  journal={arXiv preprint arXiv:2009.08360},
  year={2020}
}

@article{bshouty2002using,
  title={On using extended statistical queries to avoid membership queries},
  author={Bshouty, Nader H and Feldman, Vitaly},
  journal={Journal of Machine Learning Research},
  volume={2},
  number={Feb},
  pages={359--395},
  year={2002}
}

@article{bendavid1995learning,
  title={Learning by distances},
  author={Bendavid, Shai and Itai, Alon and Kushilevitz, Eyal},
  journal={Information and Computation},
  volume={117},
  number={2},
  pages={240--250},
  year={1995},
  publisher={Elsevier}
}

@article{feldman2017statistical,
  title={Statistical algorithms and a lower bound for detecting planted cliques},
  author={Feldman, Vitaly and Grigorescu, Elena and Reyzin, Lev and Vempala, Santosh S and Xiao, Ying},
  journal={Journal of the ACM (JACM)},
  volume={64},
  number={2},
  pages={1--37},
  year={2017},
  publisher={ACM New York, NY, USA}
}

@article{yang2005new,
  title={New lower bounds for statistical query learning},
  author={Yang, Ke},
  journal={Journal of Computer and System Sciences},
  volume={70},
  number={4},
  pages={485--509},
  year={2005},
  publisher={Elsevier}
}

@inproceedings{szorenyi2009characterizing,
  title={Characterizing statistical query learning: simplified notions and proofs},
  author={Sz{\"o}r{\'e}nyi, Bal{\'a}zs},
  booktitle={International Conference on Algorithmic Learning Theory},
  pages={186--200},
  year={2009},
  organization={Springer}
}

@article{feldman2012complete,
  title={A complete characterization of statistical query learning with applications to evolvability},
  author={Feldman, Vitaly},
  journal={Journal of Computer and System Sciences},
  volume={78},
  number={5},
  pages={1444--1459},
  year={2012},
  publisher={Elsevier}
}

@inproceedings{goel2020superpolynomial,
  title={Superpolynomial lower bounds for learning one-layer neural networks using gradient descent},
  author={Goel, Surbhi and Gollakota, Aravind and Jin, Zhihan and Karmalkar, Sushrut and Klivans, Adam},
  booktitle={International Conference on Machine Learning},
  pages={3587--3596},
  year={2020},
  organization={PMLR}
}

@inproceedings{diakonikolas2020algorithms,
  title={Algorithms and sq lower bounds for pac learning one-hidden-layer relu networks},
  author={Diakonikolas, Ilias and Kane, Daniel M and Kontonis, Vasilis and Zarifis, Nikos},
  booktitle={Conference on Learning Theory},
  pages={1514--1539},
  year={2020},
  organization={PMLR}
}

@inproceedings{andoni2014learning,
  title={Learning sparse polynomial functions},
  author={Andoni, Alexandr and Panigrahy, Rina and Valiant, Gregory and Zhang, Li},
  booktitle={Proceedings of the twenty-fifth annual ACM-SIAM symposium on Discrete algorithms},
  pages={500--510},
  year={2014},
  organization={SIAM}
}

@inproceedings{andoni2019attribute,
  title={Attribute-efficient learning of monomials over highly-correlated variables},
  author={Andoni, Alexandr and Dudeja, Rishabh and Hsu, Daniel and Vodrahalli, Kiran},
  booktitle={Algorithmic Learning Theory},
  pages={127--161},
  year={2019},
  organization={PMLR}
}

@inproceedings{chen2022learning,
  title={Learning deep relu networks is fixed-parameter tractable},
  author={Chen, Sitan and Klivans, Adam R and Meka, Raghu},
  booktitle={2021 IEEE 62nd Annual Symposium on Foundations of Computer Science (FOCS)},
  pages={696--707},
  year={2022},
  organization={IEEE}
}

@article{molteni2024exponential,
  title={Exponential quantum advantages in learning quantum observables from classical data},
  author={Molteni, Riccardo and Gyurik, Casper and Dunjko, Vedran},
  journal={arXiv preprint arXiv:2405.02027},
  year={2024}
}

\newpage
\resumetoc
\appendix
\appendixpage

\tableofcontents

\section{Preliminaries}
\label{sec:prelim}

\subsection{Quantum learning theory}
\label{sec:learning}

In classical learning theory, the goal is to learn a collection of functions $\mathcal{C} \subseteq \{c : \mathcal{X} \to \mathcal{Y}\}$ with input space $\mathcal{X}$ and output space $\mathcal{Y}$.
Typically, for Boolean functions, $\mathcal{X} = \{0,1\}^d, \mathcal{Y} = \{0,1\}$, where $d$ is the input dimension, but in general, one could have any $\mathcal{X} \subseteq \mathbb{R}^d, \mathcal{Y} \subseteq \mathbb{R}$.
This collection $\mathcal{C}$ is called a \emph{concept class}.
Two common models used in classical learning theory are the \emph{probably approximately correct (PAC) model}~\cite{valiant1984theory} and the \emph{statistical query (SQ) model}~\cite{kearns1998efficient}.
In classical PAC learning, a learning algorithm is given labeled random examples $\{(x_i, c^\star(x_i))\}_{i=1}^N$, where the $x_i$ are sampled i.i.d. according to an unknown distribution $\mathcal{D}$ over the input space $\mathcal{X}$.
The goal is to learn the unknown target function $c^\star$ up to some error with high probability.
More precisely, an \emph{$(\epsilon,\delta)$-PAC learner} for $c^\star$ outputs a hypothesis function $h:\mathcal{X} \to \mathcal{Y}$ such that
\begin{equation}
  \label{eq:pac-err}
  \mathcal{L}(h) \leq \epsilon
\end{equation}
with probability at least $1-\delta$ for some loss function $\mathcal{L}$.
Typically, the loss function is chosen as the squared loss $\mathbb{E}_{x \sim \mathcal{D}}(h(x)-  c^\star(x))^2$ or the misclassification error $\Pr_{x \sim \mathcal{D}}(h(x) \neq c^\star(x))$.
One often wants to minimize the amount of training data $N$, or the \emph{sample complexity}, needed to learn any unknown target function $c^\star$ from the concept class $\mathcal{C}$ for any unknown distribution $\mathcal{D}$.
Meanwhile, in classical SQ learning, rather than having direct access to the examples, a learning algorithm only has access to noisy expectation values of functions of the data.
In particular, an SQ learner has access to a statistical query oracle, which takes as input a tolerance parameter $\tau \geq 0$ and a function $\phi: \mathcal{X} \times \mathcal{Y} \to \mathcal{Y}$ and outputs a number $\alpha$ such that
\begin{equation}
  \left|\alpha - \mathop{\mathbb{E}}_{x \sim \mathcal{D}}[\phi(x, c^\star(x))]\right| \leq \tau.
\end{equation}
Then, an \emph{$(\epsilon, \delta)$-SQ learner} outputs a hypothesis function satisfying \Cref{eq:pac-err} with probability $1-\delta$.
In the some definitions of statistical query learning, the parameter $\delta$ is not present.
Here, we include it to allow for a probability of failure in randomized learning algorithms, as noted in~\cite{kearns1998efficient}.
In this case, the measure of complexity is the number of queries, or the \emph{query complexity}, needed to learn any unknown target function $c^\star$ from the concept class $\mathcal{C}$ for any unknown distribution $\mathcal{D}$.

Both PAC and SQ learning have been extended to the quantum setting in the quantum PAC model~\cite{bshouty1995learning} and quantum statistical query (QSQ) model~\cite{arunachalam2020quantum}, respectively.
Here, the only difference is the access model, in which quantum learning algorithms are given access to quantum data instead.
Specifically, in quantum PAC learning~\cite{bshouty1995learning}, a quantum learner is given copies of the quantum example state
\begin{equation}
  \label{eq:example-state}
  \ket{c^\star} \triangleq \sum_{x \in \mathcal{X}}\sqrt{\mathcal{D}(x)}\ket{x,c^\star(x)}.
\end{equation}
The learning algorithm is allowed to perform (potentially entangled) measurements on the example states, and in this case, one wants to minimize the number of copies of the example states used to learn the concept class.
Finally, in the QSQ model~\cite{arunachalam2020quantum}, a learner has access to a QSQ oracle, which takes as input a tolerance parameter $\tau \geq 0$ and an observable $O$ such that $\norm{O}\leq 1$ and outputs a number $\alpha$ such that
\begin{equation}
  \left|\alpha - \expval{O}{c^\star}\right| \leq \tau.
\end{equation}
The goal is again to minimize the number of queries to the QSQ oracle needed to learn the concept class $\mathcal{C}$.
A key difference between the quantum PAC setting and the QSQ setting is that in the PAC setting, the learner may perform entangled measurements across multiple copies of the quantum example state~\cite{bubeck2020entanglement,arunachalam2024role}.

In this work, we focus on the QSQ access model with noise tolerance $\tau \geq 0$ for learning a particular concept class (defined in Appendix~\ref{sec:detail-prob}) in the distribution-specific setting, where $\mathcal{D}$ is known to be either uniform or a discrete Gaussian with a diagonal covariance matrix.
Moreover, we consider functions with real inputs and outputs, so we redefine QSQ access for real functions.

\begin{definition}[Quantum statistical query access for real functions]
    \label{def:qsq}
    Let $\mathcal{C} \subseteq \{c: \mathbb{R}^d \to \mathbb{R}\}$ be a concept class, where $d \geq 1$ is the input dimension.
    Let $\mathcal{D}$ be a probability distribution over $\mathbb{R}^d$.
    A quantum statistical query oracle for some $c^\star \in \mathcal{C}$ receives as input a tolerance parameter $\tau \geq 0$, discretization/truncation parameters $M, R\geq 1$, respectively, and an observable $O$ such that $\norm{O} \leq 1$, and outputs a number $\alpha$ such that
    \begin{equation}
        |\alpha - \expval{O}{h^*_{M}}| \leq \tau,
    \end{equation}
    where $\ket{h_{M}^*}$ is the quantum example state
    \begin{equation}
        \ket{h_{M}^*} = \sum_{x_1,\dots, x_d = -R}^{R-1} \sqrt{\mathcal{D}(x)} \ket{x}\ket{h_{M}^*(x)}
    \end{equation}
    and $h_M^*$ is a suitable discretization of the target $c^\star$ and $\mathcal{D}$ must be suitably renormalized.
\end{definition}
Without loss of generality, beyond $\tau > 0$, we consider the QSQ model in which the output $\alpha$ is a rational number.
We can do this because the rational numbers are dense in $\mathbb{R}$.
Then, if a QSQ outputs an irrational number, we can find a rational number close to it and consider the error in this approximation as a part of the tolerance of the QSQ.

One may also consider multiple discretization parameters if necessary.
We remark that allowing one to specify the discretization/truncation parameters rather than fixing them throughout should not be too powerful.
Notably, classical SQ access can approximate expectation values of a real target function itself, without needing the intermediary step of discretization at all.

\subsection{Hallgren's irrational period finding algorithm}
\label{sec:hallgren}

In this section, we give an overview of Hallgren's irrational period finding algorithm~\cite{hallgren2007polynomial}.
For more detailed presentations, we refer the reader to~\cite{jozsa2003notes,childs2010quantum}.
This algorithm was originally a subroutine for a quantum algorithm for solving Pell's equation in number theory.
However, we will only focus on this subroutine, which is sufficient for our purposes.

One of the most well-known quantum algorithms is Shor's period finding algorithm~\cite{shor1994algorithms}.
Given access to a function $f: \mathbb{Z}_N \to \mathbb{Z}_M$ which is periodic with period $S \in \mathbb{N}$, this algorithm can identify $S$ up to some precision.
However, the algorithm crucially relies on the fact that the period is an integer.
Namely, recall that Shor's algorithm utilizes the continued fractions algorithm to recover the period from the quantum measurement outcomes.
Without the assumption that $S \in \mathbb{N}$, directly using continued fractions is not guaranteed to recover an approximation of $S$.
Thus, if one hopes to generalize Shor's algorithm to real functions with real periods, one must do something more complicated.
This is exactly what Hallgren's algorithm does.

Consider a function $f: \mathbb{R} \to X$ which is periodic with period $S \in \mathbb{R}$.
Here, $X$ is some output space, which may be continuous-valued.
In order to access $f$ on a quantum computer, we must suitably discretize it.
However, this must be done with some care, as ``bad'' discretizations can cause us to lose all information about the period in the new discretized function.
The notion of pseudoperiodicity defined below excludes this possibility.

\begin{definition}[Pseudoperiodic~\cite{hallgren2007polynomial}]
\label{def:pseudoperiod}
A function $f: \mathbb{Z} \to X$ for some output space $X$ is \emph{pseudoperiodic} with period $S \in \mathbb{R}$ if for each $0 \leq k \leq \lfloor S \rfloor$ and each $\ell \in \mathbb{Z}$, either $f(k + \lfloor \ell S \rfloor)$ or $f(k + \lceil \ell S \rceil)$ equals $f(k)$.
$f$ is \emph{$\eta$-pseudoperiodic} with period $S$ if this condition holds for at least an $\eta$-fraction of inputs $0 \leq k \leq \lfloor S \rfloor$.
\end{definition}

This ensures that the discretization still encodes sufficient information about the period of the original function.
Thus, from here, we consider a pseudoperiodic discretization of the real function we want to learn the period of.
Hallgren's algorithm provides a guarantee for recovering the period of a pseudoperiodic function, which we restate below.
We also present the algorithm in \Cref{alg:hallgren}.

\begin{theorem}[Lemma 3.1 in~\cite{hallgren2007polynomial}]
\label{thm:hallgren}
Let $f$ be an $\eta$-pseudoperiodic function with period $S \in \mathbb{R}$.
Suppose that, given an integer $T$, we can efficiently check (in time $\mathrm{polylog}(S)$) whether or not $|\ell S - T| < 1$ for some $\ell \in \mathbb{Z}$.
Additionally, suppose that we have an upper bound $A$ on $S$.
Then, there exists a quantum algorithm that outputs an integer $a$ such that $|S - a| \leq 1$ with probability $\Omega(\eta^2/(\log A)^4)$.
Moreover, the algorithm runs in time $\mathrm{polylog}(A)$.
\end{theorem}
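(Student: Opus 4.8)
The plan is to follow Hallgren's original strategy, which is essentially a two-step quantum-Fourier-sampling argument combined with continued fractions, adapted to tolerate the irrationality of $S$ via the pseudoperiodicity hypothesis. First I would set up the quantum register: pick a working modulus $q$ that is $\mathrm{poly}(A)$ and a power of two (so the quantum Fourier transform over $\mathbb{Z}_q$ is efficient), prepare the uniform superposition $\frac{1}{\sqrt q}\sum_{k=0}^{q-1}\ket{k}\ket{0}$, apply the oracle for the $\eta$-pseudoperiodic function $f$ to get $\frac{1}{\sqrt q}\sum_{k}\ket{k}\ket{f(k)}$, measure (or discard) the second register, and apply the QFT to the first. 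The key structural fact to establish is that, because $f$ is pseudoperiodic with period $S$, the post-measurement state on the first register is supported (up to the $\eta$-fraction of "bad" inputs, which contribute only $O(1-\eta)$ in amplitude-squared) on an arithmetic-progression-like set $\{k_0 + \lfloor \ell S\rfloor \text{ or } k_0 + \lceil \ell S\rceil\}$, so that after the QFT the measurement outcome $c$ concentrates near integer multiples of $q/S$. Quantitatively, I would show that with probability $\Omega(\eta^2)$ the outcome $c$ satisfies $|c/q - m/S| \le 1/q$ for some integer $m$, i.e. $c$ is within $1$ of $mq/S$. This is the step that requires the most care: one must bound the Fourier transform of the (slightly irregular, rounded) arithmetic progression and show the rounding errors $\lfloor \ell S\rfloor$ vs $\lceil \ell S \rceil$ do not wash out the peaks — this is exactly where Hallgren's analysis differs from Shor's and where I expect the main obstacle to lie.

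Next I would run the sampling twice independently, obtaining $c_1 \approx m_1 q/S$ and $c_2 \approx m_2 q/S$, and form the ratio $c_1/c_2 \approx m_1/m_2$. By a standard coprimality argument, with constant probability $\gcd(m_1,m_2)=1$, and with the $m_i$ bounded by $S \le A$; I would then run the continued-fraction expansion of $c_1/c_2$ to recover the reduced fraction $m_1/m_2$, hence $m_1$, from which $q/(c_1) \cdot m_1$ — more precisely the convergent denominators — yield a candidate for $S$ with additive error $O(1)$. Because continued fractions produce a short list of convergents, I would iterate over all of them, and for each candidate integer $T$ invoke the assumed verification subroutine, which checks in time $\mathrm{polylog}(S)$ whether $|\ell S - T| < 1$ for some $\ell \in \mathbb{Z}$; by Hallgren's analysis one of the $O(\log A)$ convergents passes this test and is the desired $a$ with $|S-a|\le 1$.

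Finally I would assemble the probability and runtime bookkeeping: each Fourier sampling round succeeds (produces a usable $c_i$) with probability $\Omega(\eta^2)$, losing a further $1/\mathrm{polylog}(A)$ from the continued-fraction/coprimality step and from the fact that only one convergent out of $O((\log A)^{?})$ need be checked, giving overall success probability $\Omega(\eta^2/(\log A)^4)$ as claimed; boosting to high probability is then a matter of $O((\log A)^4/\eta^2)$ repetitions. The runtime is dominated by the QFT over $\mathbb{Z}_q$ with $q = \mathrm{poly}(A)$, the oracle call, the continued-fraction computation, and $O(\log A)$ calls to the verification subroutine, each $\mathrm{polylog}(A)$, so the total is $\mathrm{polylog}(A)$ as stated. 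The one genuinely delicate estimate — controlling how the up/down rounding in the definition of pseudoperiodicity perturbs the Fourier peaks, so that the $\Omega(\eta^2)$ concentration survives — is the heart of the argument, and I would isolate it as a lemma bounding $\big|\sum_{\ell} \omega_q^{c(k_0 + [\ell S])}\big|$ from below for $c$ near $mq/S$.
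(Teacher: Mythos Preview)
Your proposal is correct and follows essentially the same approach as the paper's sketch: quantum Fourier sampling on the pseudoperiodic state, two independent samples whose ratio is fed to continued fractions, and iteration over the convergents with the assumed verification subroutine; you also correctly isolate the delicate step as lower-bounding $\bigl|\sum_\ell \omega_q^{\,c\,[\ell S]}\bigr|$ for $c$ near $mq/S$. One minor correction to your bookkeeping: a \emph{single} sampling round succeeds with probability $\Omega(\eta/\log A)$, not $\Omega(\eta^2)$ --- the $\eta$ is the pseudoperiodic fraction and the $1/\log A$ comes from restricting to outcomes $y < q/\log A$ (precisely the condition that keeps the rounding perturbation in the phases $o(1)$); two rounds then give $\Omega(\eta^2/\log^2 A)$, and coprimality of $m_1,m_2$ (via the prime number theorem) contributes the remaining $1/\log^2 A$ for the stated $\Omega(\eta^2/(\log A)^4)$.
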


\begin{algorithm}
   \caption{Hallgren's Algorithm} 
   \label{alg:hallgren}
   \begin{algorithmic}[1]
   \State Choose an integer $q \geq 3S^2$ (this can be satisfied by choosing $q \geq 3A^2$).
   \State Apply quantum Fourier sampling to the function $f$ over $\mathbb{Z}_q$ twice. Let $b, c \in \mathbb{Z}$ be the outputs.
   \State Compute the continued fraction expansion of $b/c$.
   \State For each convergent $b_i/c_i$ in the continued fraction expansion, use the verification procedure to check whether $\lfloor b_i q/b \rfloor$ or $\lceil b_i q/b \rceil$ is an integer multiple of the period $S$.
   \State \Return the smallest value that passed the test from the previous step.
   \end{algorithmic}
\end{algorithm}

We note that there are two key subroutines in Hallgren's algorithm: quantum Fourier sampling (as in the standard period finding algorithm) and the verification procedure to check if a given guess is indeed close to the period.
For a periodic function $f$, checking if a given guess is a multiple of the period is simple with query access to $f$.
However, for $\eta$-pseudoperiodic functions, this is nontrivial.
Hence, in order to apply \Cref{thm:hallgren}, one must ensure that this condition is satisfied.

We give a brief sketch the proof of \Cref{thm:hallgren}, as our proofs in Appendices~\ref{sec:general-uniform} and~\ref{sec:general-non-unif} rely on similar ideas.

\begin{proof}[Proof Sketch of \Cref{thm:hallgren}]
We consider $f$ to be pseudoperiodic on the whole domain for simplicity, as this only affects the success probability, which we will incorporate later.
Querying the pseudoperiodic function $f$ in superposition and measuring the last register, we get
\begin{equation}
    \frac{1}{\sqrt{p}}\sum_{k=0}^{p-1}\ket{x_0 + [kS]},
\end{equation}
where $[kS]$ denotes one of $\lfloor k S \rfloor$ or $\lceil k S \rceil$, $0 \leq x_0 \leq \lfloor S \rfloor$, and $p = \lfloor q/S\rfloor$.
By the shift invariance property of the Fourier transform, we can assume without loss of generality that $x_0 = 0$.
Then, applying the quantum Fourier transform mod $q$, we have
\begin{equation}
    \frac{1}{\sqrt{pq}}\sum_{k=0}^{p-1}\sum_{y=0}^{q-1} e^{2\pi i y [k S]/q}\ket{y}.
\end{equation}
Thus, the probability of measuring some $y$ is $(1/pq)\left|\sum_{k=0}^{p-1} e^{2\pi iy [kS]/q} \right|^2$.
Using this, \cite{hallgren2007polynomial} lower bounds the probability of measuring some $y = \lfloor a q/ S \rceil$ such that $y < q/\log A$, where $a$ is an integer and $\lfloor \cdot \rceil$ denotes rounding to the closest integer.
In particular, they show that one can lower bound this probability by $\Omega(1/S)$.
In total, the probability that quantum Fourier sampling produces two such values (as in Step 2 of \Cref{alg:hallgren}) that are also relatively prime is then $\Omega(\eta^2/\log^4(A))$.

Now, consider obtaining two values $b \triangleq \lfloor k q/S \rceil$ and $c \triangleq \lfloor \ell q / S\rceil$ from this quantum Fourier sampling.
\cite{hallgren2007polynomial} shows that $k/\ell$ is a convergent in the continued fraction expansion of $b/c$.
This is shown by proving that $|b/c - k/\ell| \leq 1/(2\ell^2)$, as this implies the desired result~\cite{schrijver1998theory}.
Finally, the proof concludes by showing that $\lfloor kq/S \rceil$ is close to an integer multiple of the period $S$.
This justifies Steps 3-5 of \Cref{alg:hallgren}, which iterates through all convergents in the continued fractions expansion of $b/c$ and checks which one is close to an integer multiple of the period.
The proof guarantees that at least one such convergent will indeed be close to the period.
\end{proof}

\section{Detailed problem statement}
\label{sec:detail-prob}

In this section, we define the concept class we wish to learn precisely.
We want to learn functions that are a composition of a periodic function and a linear function, as these are classically hard to learn via gradient methods~\cite{shamir2018distribution,shalev2017failures}.
Moreover, previous works have shown that this class is hard to learn classically even for SQ algorithms and efficient classical algorithms learning under small amounts of noise~\cite{song2017complexity,song2021cryptographic}.
We consider a slightly restricted setting, which we show is still hard for classical gradient methods in Appendix~\ref{sec:classical-hardness}.
\cite{song2017complexity,song2021cryptographic} do not directly apply to our parameter regimes, but nevertheless, these works constitute strong evidence that the problem is hard for broader classes of classical algorithms.

Let $d \geq 1$ denote the input dimension, and define the set of vectors with fixed norm $R_w > 0$ satisfying $w_j \geq R_w/d^2$:
\begin{equation}
  \label{eq:sw}
  \mathcal{S}_w \triangleq \left\{w \in R_w \mathbb{S}^{d-1} : w_j \geq \frac{R_w}{d^2},\;\forall j \in [d]\right\}.
\end{equation}
Here, $\mathbb{S}^{d-1}$ denotes the $(d-1)$-dimensional unit sphere, which lives in $\mathbb{R}^d$.
Let $\tilde{\mathcal{S}}_w$ be a $0.51$-packing net of the set $\mathcal{S}_w$, i.e., $\tilde{\mathcal{S}}_w \subseteq \mathcal{S}_w$ such that each point in $\tilde{\mathcal{S}}_w$ is separated by a geodesic angle of at least $0.51$.
Let $w^\star \in \tilde{\mathcal{S}}_w$ be a vector in $\tilde{\mathcal{S}}_w$.
We remark that~\cite{shamir2018distribution} considers $w^\star$ in $R_w \mathbb{S}^{d-1}$ directly, without requiring that $w_j \geq R_w/d^2$ or that $w^\star$ is taken from a packing net over this set.
We extend their proof of classical hardness to our setting in Appendix~\ref{sec:classical-hardness}\footnote{We note that our classical hardness in fact holds when considering $w_j \geq R_w/d^v$ where $v$ is any constant greater than $3/2$, but we choose $v = 2$ for simplicity.}.

Let $\tilde{g}: \mathbb{R} \to [-1,1]$ be a periodic function of period $1$ which has bounded variation on every finite interval.
In particular, we assume that $\tilde{g}$ can be written as
\begin{equation}
  \label{eq:g-tilde}
  \tilde{g}(y) = \sum_{j=1}^D \beta_j^\star \cos(2\pi jy),\quad \norm{\beta^\star}_1 = 1,
\end{equation}
for some constant $D > 0$.
It is clear that a function of this form has period $1$ and has bounded variation on every finite interval\footnote{One could also choose to write $\tilde{g}$ as a linear combination of sines and cosines to resemble a Fourier series with a finite number of nonzero terms, but adding sines makes the analysis more cumbersome than instructive and does not affect the classical hardness.}.
Here, the condition on the norm of the $\beta^\star$ coefficients ensures that the range of $\tilde{g}$ is in $[-1,1]$.
This is an additional assumption to those considered in~\cite{shamir2018distribution,shalev2017failures}, but we do not expect this to affect the classical hardness.
Namely, the hardness stems from $\tilde{g}$ preserving the Fourier sparsity of the input distribution, and this property is still preserved when taking $\tilde{g}$ to have this specific form.
Concretely,~\cite{shamir2018distribution} also considers an example where $\tilde{g}$ takes this form (in particular, where $\tilde{g}$ is simply a cosine, i.e., $D = 1$), and the hardness result still holds.
Our concept class consists of these functions
\begin{equation}
  \mathcal{C} \triangleq \{g_{w^\star}: \mathbb{R}^d \to [-1,1] : g_{w^\star}(x) = \tilde{g}(x^\intercal w^\star), \; w^\star \in \tilde{\mathcal{S}}_w\},
\end{equation}
with $\tilde{g}$ defined in \Cref{eq:g-tilde}.
Hence, to learn a target function $g_{w^\star}$ in the concept class, it would be sufficient, but perhaps not necessary, to identify $w^\star$ and $\beta^\star$.

We devise a quantum learning algorithm given QSQ access (see \Cref{def:qsq}) to functions in this concept class when the distribution $\mathcal{D}$ is fixed to be either uniform or a discrete Gaussian with a diagonal covariance matrix $\Sigma = \mathrm{diag}(\sigma_1^2,\dots,\sigma_d^2)$ for sufficiently large $\sigma_j$.
In particular, for QSQ access with respect to a truncation parameter $R$, we require $\sigma_j = \Omega(R)$.
We specify the discretization and truncation parameters in more detail in later sections.

To learn a target concept $g_{w^\star}$ with respect to a distribution $\mathcal{D}$, we want to find a good predictor $f_\theta(x)$ which minimizes the objective function
\begin{equation}
  \min_{\theta \in \Theta} \mathcal{L}_{w^\star}(\theta)\triangleq \min_{\theta \in \Theta}\mathop{\mathbb{E}}_{x \sim \mathcal{D}}[(f_\theta(x) - g_{w^\star}(x))^2],
\end{equation}
where $\theta$ are some parameters that we want to learn.
Here, we use the squared loss to align with the classical hardness results~\cite{shamir2018distribution,shalev2017failures}.
As in the classical case, we assume that we have access to this loss function and can compute it for a given choice of parameters $\theta$.
Here, SQ access~\cite{kearns1998efficient} is more general than only having access to (gradients of) the loss function, as it allows the learning algorithm to access expectations of arbitrary functions of the data.
Nonetheless, the SQ setting is a natural generalization of the gradient access model due to the similarities of the arguments used to prove hardness in~\cite{shamir2018distribution} with those of~\cite{kearns1998efficient,blum1994weakly}.
This is discussed in~\cite{shamir2018distribution}.
Thus, we find that the most natural quantum analogue for learning is the QSQ model with noise tolerance $\tau \geq 0$.
For a given precision $\epsilon > 0$, our quantum algorithm will find parameters $\theta$ such that $\mathcal{L}_{w^\star}(\theta) \leq \epsilon$.

To quantify the performance of our quantum algorithm, we count any accesses to the unknown function $g_{w^\star}$.
Namely, we consider both the number of QSQs and the number of (classical) queries to the gradient of the objective function $\mathcal{L}_{w^\star}$.
This is the most fair comparison to the classical lower bound from~\cite{shamir2018distribution}, which is also in terms of the number of queries to the gradient of the objective function.

\section{Classical hardness}

\subsection{Classical hardness for gradient-based methods}
\label{sec:classical-hardness}

In this section, we discuss the hardness of the task detailed in Appendix~\ref{sec:detail-prob} for classical gradient-based methods.
This hardness result was already proven in Ref.~\cite{shamir2018distribution} under a different setting.
Notably, the classical hardness results~\cite{shamir2018distribution,shalev2017failures} hold for any distribution whose density is Fourier-concentrated, in the sense of the following definition.
\begin{definition}[Fourier-concentrated~\cite{shamir2018distribution}] \label{def:cont_fourier_conc}
Let $\epsilon(r)$ be some function from $[0,\infty) \to [0,1]$. A density function $\varphi^2:\mathbb{R}^d \to \mathbb{R}$ is \emph{$\epsilon(r)$-Fourier-concentrated} if its square root $\varphi$ belongs to $L^2(\mathbb{R}^d)$ (square integrable) and satisfies
\begin{equation}
\norm{\hat{\varphi} \cdot \mathbf{1}_{\geq r}}_2 \leq \norm{\hat{\varphi}}_2 \epsilon(r),
\end{equation}
where $\mathbf{1}_{\geq r}$ is the indicator function of $\{x : \norm{x}_2 \geq r\}$.
\end{definition}
Several common distributions are Fourier concentrated. For instance, $\epsilon(r)$ will decay subexponentially when $\varphi$ is a member of various classes of smooth functions such as Gaussians.

For the task detailed in Appendix~\ref{sec:detail-prob}, we have two additional assumptions compared to~\cite{shamir2018distribution}, designed to facilitate error analysis under finite precision, which we argue here do not affect the classical hardness.
First, we sample the vector $w^\star$ from a $0.51$-packing net $\tilde{\mathcal{S}}_w$ of the set $\mathcal{S}_w$ defined by
\begin{equation}
  \mathcal{S}_w \triangleq \left\{w \in R_w \mathbb{S}^{d-1} : w_j \geq \frac{R_w}{d^2}, \; \forall j \in [d]\right\},
\end{equation}
where $\mathbb{S}^{d-1} \subseteq \mathbb{R}^d$ is the $(d-1)$-dimensional unit sphere.
Second, we consider the function $\tilde{g}$ to be of a specific form given in \Cref{eq:g-tilde}.

Instead, Ref.~\cite{shamir2018distribution} considers $w^\star$ sampled from $R_w \mathbb{S}^{d-1}$ and $\tilde{g}$ as an arbitrary function with period $1$ and with bounded variation on every finite interval.
Note that the latter should not affect classical hardness, as our choice of $\tilde{g}$ still preserves the crucial property of Fourier-concentration.
Moreover,~\cite{shamir2018distribution} considers an example where $\tilde{g}$ takes this form (namely when $\tilde{g}$ is simply a cosine), and the classical hardness still holds.
Thus, we do not concern ourselves with the form of $\tilde{g}$ and mainly focus on the former case.

The key result in~\cite{shamir2018distribution} that proves classical hardness is their Theorem 3.
Examining the proof, we notice that the only part that relies on $w^\star$ being sampled from $R_w\mathbb{S}^{d-1}$ is Lemma 5 in~\cite{shamir2018distribution}, which we restate below.
Informally, Lemma 5 tells us that for any function $h$, for a random choice of $w^\star$, the Fourier transform of the target function does not correlate well with $h$.
Thus, no matter what our hypothesis function is, obtaining information about $w^\star$ should be difficult.
The crux of the classical hardness says that, in particular, the gradient of the loss function does not contain much information about $w^\star$.

\begin{lemma}[Lemma 5 in~\cite{shamir2018distribution}]
\label{lem:5}
Let $\varphi^2$ be a density function on $\mathbb{R}^d$ that is $\epsilon(r)$-Fourier-concentrated.
For any square integrable function $h : \mathbb{R}^d \to \mathbb{R}$, if $d \geq c'$ (for some universal constant $c'$) and we sample $w^\star$ uniformly at random from $R_w \mathbb{S}^{d-1}$, then 
\begin{equation}
  \E\left[\left(\langle h, \widehat{g_{w^\star} \varphi}\rangle - a_0\langle h, \hat{\varphi} \rangle\right)^2\right] \leq 10 \norm{h}^2 \left(\exp(-cd) + \sum_{n=1}^\infty\epsilon\left(\frac{nR_w}{2}\right)\right),
\end{equation}
where $a_0, c$ are constants and $\widehat{g_{w^\star} \varphi}$ denotes the Fourier transform of the pointwise product of $g_{w^\star}$ and $\varphi$.
\end{lemma}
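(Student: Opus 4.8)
The plan is to adapt the argument of~\cite{shamir2018distribution}, exploiting that the periodic activation $\tilde g$ in \Cref{eq:g-tilde} has a finite Fourier expansion $\tilde g(y) = \sum_{n \in \mathbb Z} a_n e^{2\pi i n y}$ with mean $a_0$ and $\sum_{n \neq 0}|a_n| = \norm{\beta^\star}_1 = 1$, so that $g_{w^\star}(x) = \sum_n a_n e^{2\pi i n x^\intercal w^\star}$. First I would note that multiplying $\varphi$ by the character $e^{2\pi i n x^\intercal w^\star}$ translates its Fourier transform by $n w^\star$, hence $\widehat{g_{w^\star}\varphi} = \sum_{n \in \mathbb Z} a_n\,\hat\varphi(\cdot - n w^\star)$. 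The $n=0$ term contributes exactly $a_0\langle h, \hat\varphi\rangle$, so the quantity to control is $\langle h, \widehat{g_{w^\star}\varphi}\rangle - a_0\langle h, \hat\varphi\rangle = \sum_{n \neq 0} a_n \langle h, \hat\varphi(\cdot - n w^\star)\rangle$, and one uses throughout $\norm{\hat\varphi}_2 = \norm{\varphi}_2 = 1$ since $\varphi^2$ is a density.

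Next, for each $n \neq 0$ I would split $\hat\varphi$ into its restriction to the ball $\{\norm{\xi}_2 < n R_w/2\}$ and its tail. By Cauchy--Schwarz together with $\epsilon(r)$-Fourier-concentration, the tail part contributes at most $\norm{h}_2\,\norm{\hat\varphi \cdot \mathbf{1}_{\geq n R_w/2}}_2 \leq \norm{h}_2\,\epsilon(n R_w/2)$ to $\langle h, \hat\varphi(\cdot - n w^\star)\rangle$ --- a bound independent of $w^\star$. The remaining low-frequency piece $\langle h, (\hat\varphi\,\mathbf{1}_{< n R_w/2})(\cdot - n w^\star)\rangle$ is an integral supported on $B(n w^\star, n R_w/2)$, so Cauchy--Schwarz bounds it by $\norm{h \cdot \mathbf{1}_{B(n w^\star,\, n R_w/2)}}_2$.

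Finally I would take the expectation over $w^\star$. Using $(a+b)^2 \le 2a^2 + 2b^2$ to separate the tail and low-frequency contributions, and a weighted Cauchy--Schwarz over $n$ with weights $|a_n|$ (which sum to $1$), the tail part reduces to $\big(\sum_{n \geq 1}|\beta_n^\star|\,\epsilon(n R_w/2)\big)^2 \le \sum_{n \geq 1}\epsilon(n R_w/2)$ (another Cauchy--Schwarz, using $|\beta_n^\star| \le 1$ and $\epsilon \le 1$), while the low-frequency part reduces to $\sum_{n \neq 0}|a_n|\,\E_{w^\star}\norm{h\,\mathbf{1}_{B(n w^\star, n R_w/2)}}_2^2 = \sum_{n\neq0}|a_n|\int |h(\xi)|^2 \Pr_{w^\star}\big[\norm{\xi - n w^\star}_2 < n R_w/2\big]\,d\xi$. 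After rescaling, $\Pr_{w^\star}[\norm{\xi - n w^\star}_2 < n R_w/2] = \Pr_{u \sim \mathrm{Unif}(\mathbb S^{d-1})}[\norm{u - \xi/(n R_w)}_2 < 1/2]$, and the main obstacle is to show this probability is at most $\exp(-cd)$ uniformly in $\xi$ and $n$: if the target point is too far from or too close to the unit sphere the probability is $0$, and otherwise, rotating the target onto a coordinate axis, the event forces one coordinate of $u$ to exceed a fixed positive constant, whence the standard measure-concentration estimate for the sphere gives $\exp(-cd)$ (this is where the hypothesis $d \ge c'$ enters). Substituting $\E_{w^\star}\norm{h\,\mathbf{1}_{B}}_2^2 \le \exp(-cd)\norm{h}_2^2$ back in produces the $\exp(-cd)$ term, and collecting the numerical constants yields the overall factor $10$. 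For the general bounded-variation $\tilde g$ of~\cite{shamir2018distribution} one has only $|a_n| = O(1/|n|)$ rather than a finite sum, but the same split works, the extra factors being absorbed into the two terms on the right-hand side.
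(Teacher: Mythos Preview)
Your approach is correct for the $\tilde g$ of \Cref{eq:g-tilde} but departs substantially from the argument in~\cite{shamir2018distribution}, which the paper does not reproduce for \Cref{lem:5} but does follow in proving the variant \Cref{lem:5-new}. There one does not split $\hat\varphi$ mode by mode; instead one writes $h = h\,\mathbf{1}_{A_{w^\star}} + h\,\mathbf{1}_{A_{w^\star}^C}$ for the \emph{fixed}-radius union of balls $A_{w^\star,r} = \bigcup_{z\neq 0} B(zw^\star, r)$ with $r = R_w/2$. The $\exp(-cd)$ term then comes from $\E_{w^\star}\norm{h\,\mathbf{1}_{A_{w^\star}}}_2^2$ (via spherical concentration in the continuous case, or a packing-net cardinality count in the paper's discrete variant), paired against the crude bound $\norm{\widehat{g_{w^\star}\varphi} - a_0\hat\varphi}_2 \le 2$ that never touches the individual $a_n$. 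The Fourier-concentration term arises instead from expanding the square of $\mathbf{1}_{A_{w^\star}^C}\sum_z a_z\hat\varphi(\cdot - zw^\star)$ and bounding diagonal ($\le \epsilon^2(R_w/2)$) and off-diagonal ($\le 4\sum_n\epsilon(nR_w/2)$) contributions separately. Your route is tidier for the specific $\tilde g$ at hand: the per-mode tail yields $\epsilon(|n|R_w/2)$ immediately, and weighted Cauchy--Schwarz with $\sum_{n\neq0}|a_n|=1$ dispatches the low-frequency part without any cross-term analysis. The price is your last sentence: for general bounded-variation $\tilde g$ one may have $\sum_{n}|a_n|=\infty$, so the weighted Cauchy--Schwarz step on the low-frequency part does not go through as written, and recovering the $\exp(-cd)$ bound pushes you back toward the fixed-radius split of~\cite{shamir2018distribution}, whose inside-$A$ estimate uses only $\norm{\widehat{g\varphi}}_2\le 1$ rather than summability of the $a_n$.
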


Here, the inner product is defined as
\begin{equation}
  \langle f, h \rangle = \int_x f(x) \overline{h(x)}\,dx
\end{equation}
and the norm is $\norm{f} = \sqrt{\langle f, f\rangle}$.
Also, the hat denotes the Fourier transform defined via
\begin{equation}
  \hat{f}(y) = \int \exp(-2\pi i x^\intercal y) f(x)\,dx. 
\end{equation}
Instead, we prove the following similar result.

\begin{lemma}
\label{lem:5-new}
Let $\varphi^2$ be a density function on $\mathbb{R}^d$ that is $\epsilon(r)$-Fourier-concentrated.
For any square integrable function $h : \mathbb{R}^d \to \mathbb{R}$, if $d \geq c'$ (for some universal constant $c'$) and we sample $w^\star$ uniformly at random from $\tilde{\mathcal{S}}_w$, then 
\begin{equation}
  \E_{w^\star \sim \tilde{\mathcal{S}}_w}\left[\left(\langle h, \widehat{g_{w^\star} \varphi}\rangle - a_0\langle h, \hat{\varphi} \rangle\right)^2\right] \leq 10 \norm{h}^2 \left(\exp(-cd) + \sum_{n=1}^\infty\epsilon\left(\frac{nR_w}{4}\right)\right),
\end{equation}
where $a_0, c$ are constants.
\end{lemma}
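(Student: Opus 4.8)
The plan is to reduce Lemma~\ref{lem:5-new} to Lemma~\ref{lem:5} by a simple averaging-over-subsets argument rather than re-doing the Fourier analysis from scratch. The only place the sampling distribution of $w^\star$ enters the proof of Lemma~\ref{lem:5} is in controlling $\E[(\langle h, \widehat{g_{w^\star}\varphi}\rangle - a_0\langle h,\hat\varphi\rangle)^2]$; everything else (Fourier-concentration, the expansion of $g_{w^\star}\varphi$ in terms of the cosine harmonics of $\tilde g$, the decay estimates) is oblivious to where $w^\star$ lives. So I want to show that the uniform measure on $\tilde{\mathcal{S}}_w$ is, up to a bounded-density factor, dominated by the uniform measure on $R_w\mathbb{S}^{d-1}$, and then transfer the bound at the cost of worsening the constants (which is exactly what the change from $nR_w/2$ to $nR_w/4$ is buying us).

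\textbf{Step 1: restate the core estimate of~\cite{shamir2018distribution} in a distribution-agnostic form.} Inside the proof of Lemma~\ref{lem:5}, before the expectation over $w^\star$ is taken, one has a pointwise bound of the form $\bigl(\langle h, \widehat{g_{w^\star}\varphi}\rangle - a_0\langle h,\hat\varphi\rangle\bigr)^2 \leq \norm{h}^2 \cdot \Psi(w^\star)$ for an explicit nonnegative functional $\Psi$, and the claimed inequality follows from $\E_{w^\star\sim R_w\mathbb{S}^{d-1}}[\Psi(w^\star)] \leq 10(\exp(-cd) + \sum_{n\geq 1}\epsilon(nR_w/2))$. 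I would isolate this and then just need to bound $\E_{w^\star\sim\tilde{\mathcal{S}}_w}[\Psi(w^\star)]$.

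\textbf{Step 2: control the packing net by the full sphere.} Since $\tilde{\mathcal{S}}_w$ is a $0.51$-packing net of $\mathcal{S}_w\subseteq R_w\mathbb{S}^{d-1}$, the geodesic balls of radius $0.255$ around its points are disjoint and each has spherical measure at least $c_0^d$ relative to the sphere for an explicit $c_0\in(0,1)$; hence $|\tilde{\mathcal{S}}_w|\leq c_0^{-d}$, and more usefully the uniform average over $\tilde{\mathcal{S}}_w$ of any nonnegative $\Psi$ is at most $c_0^{-d}$ times a suitable average over the corresponding disjoint caps on the sphere. Combined with a Lipschitz-type or slowly-varying bound on $\Psi$ over each $0.255$-cap (which should follow from the structure of $\Psi$: it is built from $\langle h, \widehat{g_{w^\star}\varphi}\rangle$, and the dependence of $g_{w^\star}$ on $w^\star$ is through $\cos(2\pi j x^\intercal w^\star)$, so small perturbations of $w^\star$ move $\Psi$ by a controlled amount that can be absorbed), one gets $\E_{\tilde{\mathcal{S}}_w}[\Psi]\lesssim c_0^{-d}\,\E_{R_w\mathbb{S}^{d-1}}[\Psi] + (\text{small})$. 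The factor $c_0^{-d}$ is then killed by replacing $nR_w/2$ with $nR_w/4$ in the Fourier-concentration sum and enlarging the $\exp(-cd)$ term: because $\epsilon$ is a decay function, $\sum_n \epsilon(nR_w/4)$ dominates $c_0^{-d}\sum_n\epsilon(nR_w/2)$ once one tracks the argument of~\cite{shamir2018distribution} at the finer scale $R_w/4$ from the start rather than $R_w/2$. Concretely I expect that redoing the relevant lattice/covering step of~\cite{shamir2018distribution} with a grid of spacing $R_w/4$ instead of $R_w/2$ yields exactly the stated right-hand side, and the restriction to $\mathcal{S}_w$ (the positive orthant, coordinates $\geq R_w/d^2$) only \emph{removes} mass, so it cannot increase the expectation beyond the $c_0^{-d}$ overhead.

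\textbf{The main obstacle} is Step 2: cleanly bounding $\E_{\tilde{\mathcal{S}}_w}[\Psi]$ by $\E_{R_w\mathbb{S}^{d-1}}[\Psi]$ without the $c_0^{-d}$ blow-up overwhelming the gain from $R_w/2\to R_w/4$. The honest way to do this is probably not a black-box measure comparison but a re-run of the proof of~\cite{shamir2018distribution} with $w^\star$ ranging over $\tilde{\mathcal{S}}_w$, where the key quantity is a sum over integer vectors $n$ of $\epsilon$ evaluated at distances between scaled copies of $w^\star$; the packing-net separation of $0.51$ and the lower bound $w_j\geq R_w/d^2$ ensure those distances stay bounded below by something like $nR_w/4$, and the finite cardinality of $\tilde{\mathcal{S}}_w$ replaces the integral over the sphere by a sum with only a mild (polynomial, or at worst absorbed) correction. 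I would check carefully that the constant $c$ in $\exp(-cd)$ and the factor $10$ survive; I anticipate they do, possibly after shrinking $c$, which is why the statement keeps them as unspecified constants. The rest of the adaptation of Theorem~4 of~\cite{shamir2018distribution} — feeding Lemma~\ref{lem:5-new} into the barren-plateau argument to conclude Theorem~\ref{thm:class-hardness-main} — is then mechanical and identical to the original.
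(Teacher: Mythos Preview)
Your Step~1 premise is false, and this is the real obstruction. In the proof of Lemma~\ref{lem:5} (and in the paper's proof of Lemma~\ref{lem:5-new}), there is \emph{no} useful pointwise bound $\bigl(\langle h,\widehat{g_{w^\star}\varphi}\rangle - a_0\langle h,\hat\varphi\rangle\bigr)^2 \le \norm{h}^2\,\Psi(w^\star)$ with small expectation. The expression is split via the indicator of the set $A_{w^\star,r}=\{x:\exists z\in\mathbb{Z}\setminus\{0\},\ \norm{x-zw^\star}<r\}$; the piece supported on $A_{w^\star,r}$ is only bounded, pointwise in $w^\star$, by $4\norm{\indicator_{A_{w^\star,r}}h}^2$, which can be as large as $4\norm{h}^2$. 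The smallness of its \emph{average} comes from the fact that the sets $\{A_{w,r}\}_w$ are pairwise disjoint as $w$ ranges over the sampling set, so $\sum_w \indicator_{A_{w,r}}\le 1$ and hence the average is at most $4\norm{h}^2/|\tilde{\mathcal S}_w|$. No black-box measure comparison can extract this.

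Consequently your Step~2 is built on sand: the $c_0^{-d}$ density factor is never there to be absorbed, and in any case the inequality $c_0^{-d}\sum_n\epsilon(nR_w/2)\le\sum_n\epsilon(nR_w/4)$ you rely on is false for generic Fourier-concentration profiles (e.g.\ $\epsilon(r)=e^{-r}$ gives a ratio depending on $R_w$, not $d$). The paper's actual proof does exactly what you call the ``honest way'' in your last paragraph, but the two ingredients you are missing are: (i) a geometric argument that the sets $A_{w,R_w/4}$ are disjoint for $w\in\tilde{\mathcal S}_w$---this uses both the $0.51$-packing separation and the positive-orthant restriction, and is precisely why $R_w/2$ must be replaced by $R_w/4$ (at radius $R_w/2$ disjointness fails for the net, whereas on the full sphere it held); and (ii) an exponential \emph{lower} bound $|\tilde{\mathcal S}_w|\ge e^{cd}$ (your argument only contemplates upper bounds), obtained by a volume comparison of $\mathcal{S}_w$ against spherical caps, which is what produces the $\exp(-cd)$ term. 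The complement term is then bounded deterministically by Fourier concentration at scale $R_w/4$, and that part is straightforward.
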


Note that the difference from \Cref{lem:5} resulting from sampling from the packing net instead of the continuous space is that $R_w/2$ is replaced by $R_w/4$.
Before proving \Cref{lem:5-new}, we need to show the following lemma, which says that there exists a large $0.51$-packing net of $\mathcal{S}_w$.
The choice of $0.51$ is made for convenience, and other choices are possible.

\begin{lemma}
\label{lem:packing_card}
For $v > 3/2$ and $d$ sufficiently large, there exists a $0.51$-packing net $\tilde{\mathcal{S}}_w$ of the set $\mathcal{S}_w$ such that $|\tilde{\mathcal{S}}_w| > e^{cd}$, where is $c$ is an absolute constant.
\end{lemma}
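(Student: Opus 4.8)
The plan is to construct the packing net $\tilde{\mathcal{S}}_w$ greedily and lower bound its cardinality by a standard volume/measure argument on the sphere. Concretely, I would take $\tilde{\mathcal{S}}_w$ to be a maximal subset of $\mathcal{S}_w$ whose points are pairwise separated by geodesic angle at least $0.51$; maximality guarantees that the geodesic balls of radius $0.51$ centered at the points of $\tilde{\mathcal{S}}_w$ cover $\mathcal{S}_w$ (otherwise one could add another point), so
\begin{equation}
  |\tilde{\mathcal{S}}_w| \;\geq\; \frac{\mu(\mathcal{S}_w)}{\max_{w}\mu\big(B_{\mathrm{geo}}(w,0.51)\big)},
\end{equation}
where $\mu$ is the uniform (Haar) probability measure on $R_w\mathbb{S}^{d-1}$. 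The denominator is at most the measure of a spherical cap of angular radius $0.51$, which is a fixed constant bounded away from $1$ and in fact decays like $e^{-\Omega(d)}$; more to the point, the denominator is at most $1$, and to get the $e^{cd}$ bound I really only need an exponentially small \emph{upper} bound on the cap measure together with a not-too-small \emph{lower} bound on $\mu(\mathcal{S}_w)$. The key quantitative input is the standard estimate that a spherical cap of angular radius $\theta$ (with $\theta < \pi/2$) has normalized measure $\Theta((\sin\theta)^{d})$ up to polynomial factors, so a cap of radius $0.51$ has measure at most $C\,(\sin 0.51)^{d}$ with $\sin 0.51 < 1$ a fixed constant.

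The main work — and the step I expect to be the real obstacle — is lower bounding $\mu(\mathcal{S}_w)$, i.e., showing that the positive-orthant-with-margin set $\{w \in R_w\mathbb{S}^{d-1} : w_j \geq R_w/d^v \ \forall j\}$ carries at least an $e^{-\alpha d}$ fraction of the sphere for some $\alpha < \log(1/\sin 0.51)$, using $v > 3/2$. The orthant condition $w_j \geq 0$ alone costs exactly a factor $2^{-d}$, which is fine since $2^{-d} \gg (\sin 0.51)^d$ is false in general — so here is where $v>3/2$ matters: the additional margin requirement $w_j \geq R_w/d^v$ must cost only a subexponential (or sufficiently mildly exponential) factor. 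I would estimate $\Pr_{w\sim \mathrm{Unif}(\mathbb{S}^{d-1})}[\,\forall j:\ w_j/R_w \geq d^{-v}\,]$ by modeling $w$ as $G/\|G\|$ with $G$ a standard Gaussian vector; then $\|G\| = \sqrt{d}(1+o(1))$ with overwhelming probability, so the event is essentially $\{\forall j:\ G_j \geq d^{1/2-v}\}$, and since $d^{1/2-v}\to 0$ for $v>1/2$, each coordinate constraint is satisfied with probability $\tfrac12 - O(d^{1/2-v})$; multiplying over the (weakly dependent, but handled by the Gaussian trick) coordinates gives $\big(\tfrac12 - O(d^{1/2-v})\big)^d = 2^{-d}\,e^{-O(d^{3/2-v})}$, which for $v>3/2$ is $2^{-d}(1+o(1))$. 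Combining, $|\tilde{\mathcal{S}}_w| \geq c_1 \cdot 2^{-d} e^{-o(d)} / (C(\sin 0.51)^d) = e^{cd}$ for $c = \log\!\big(\tfrac{1}{2\sin 0.51}\big) - o(1) > 0$, since $2\sin 0.51 \approx 0.976 < 1$; hence the claim holds for $d$ sufficiently large. I would also note in passing that the argument degrades gracefully as $v \downarrow 3/2$ (the $e^{-O(d^{3/2-v})}$ factor becomes $e^{-O(1)}$ at $v=3/2$, still leaving a positive constant in the exponent if one is slightly more careful, which is why the paper remarks $v>3/2$ suffices), but for cleanliness I would just fix $v=2$ and absorb everything into $o(d)$.

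One subtlety to flag: the cap-measure upper bound and the orthant-probability lower bound should be carried out in the same normalization (normalized Haar measure on the unit sphere, scale-invariant so $R_w$ drops out), and one must be careful that the greedy/maximal net is nonempty and that $\mathcal{S}_w$ is itself nonempty — the center $w = R_w \mathbf{1}/\sqrt{d}$ has all $w_j = R_w/\sqrt{d} \geq R_w/d^2$, so this is immediate. With these pieces assembled the proof is short; essentially all the content is the Gaussian computation of $\mu(\mathcal{S}_w)$ and the classical spherical-cap volume bound.
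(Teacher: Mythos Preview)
Your proposal is correct and follows the same overall strategy as the paper: construct a maximal $0.51$-packing, lower-bound its size by $\mu(\mathcal{S}_w)$ divided by the measure of a spherical cap, show $\mu(\mathcal{S}_w)\gtrsim 2^{-d}$ using the hypothesis $v>3/2$, and conclude from $2\sin(0.51)<1$. The only substantive difference is in how $\mu(\mathcal{S}_w)$ is estimated: the paper works directly with spherical-cap volume formulas (the regularized incomplete Beta function), bounding the measure of each thin equatorial annulus $\{|w_j|<R_w/d^v\}$ by $O(d^{1/2-v})$, union-bounding over $j$, and then passing to the positive orthant by symmetry for the factor $2^{-d}$; you instead use the Gaussian representation $w=R_wG/\|G\|$ and independence of the $G_j$. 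Both routes yield $\mu(\mathcal{S}_w)=2^{-d}(1+o(1))$; yours is arguably more elementary, while the paper's computation makes the role of $v>3/2$ (namely $d\cdot d^{1/2-v}\to 0$) equally transparent via the union bound. One minor remark: the paper divides by a cap of half-angle $0.51/2=0.255$ (citing Vershynin), whereas your maximal-packing-is-a-covering argument uses radius $0.51$; your version is the standard one and still suffices since $2\sin(0.51)\approx 0.976<1$, just with a smaller constant $c$ in the exponent.
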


\begin{proof}
We first prove a lower bound on the volume of $\mathcal{S}_w$, and then show that this implies that a large packing net exists.
Define the annulus of width $R_w/d^v$ around the equator as
\begin{equation}
  \mathrm{Ann}(d-1, R_w, R_w/d^v) \triangleq \left\{w \in R_w \mathbb{S}^{d-1} : |w_1| \leq R_w/d^v\right\}.
\end{equation}
The complement of this annulus on the hypersphere is the union of two antipodal spherical caps, where a spherical cap is a portion of a sphere cut off by a plane.
Note that spherical caps can be defined via the angle between the rays from the center of the sphere to the pole and to the edge of the base of the cap, called the half angle.
The half angle $\theta$ subtended by each of these antipodal spherical caps satisfies $\cos \theta = 1/d^v$.
Moreover, it is known~\cite{li2010concise} that the volume of a hyperspherical cap with half angle $\theta$ can be computed as
\begin{equation}
  \label{eq:vol-cap}
  \mathrm{Vol}(\mathrm{Cap}(d, R_w, \theta)) = \frac{1}{2}\mathrm{Vol}(R_w \mathbb{S}^{d-1}) I_{\sin^2\theta}\left(\frac{d}{2}, \frac{1}{2}\right),
\end{equation}
where $\mathrm{Cap}(d, R_w, \theta)$ denotes a hyperspherical cap with half angle $\theta$ of a sphere in $\mathbb{R}^d$ with radius $R_w$.
Also, $I_x(a,b)$ denotes the normalized incomplete Beta function
\begin{equation}
  I_x(a,b) \triangleq \frac{B_x(a,b)}{B_1(a,b)},\quad B_x(a,b) \triangleq \int_0^x t^{a-1}(1-t)^{b-1}\,dt.
\end{equation}
Using that the annulus defined is the complement of the union of two antipodal spherical caps, we can compute its volume as
\begin{align}
\mathrm{Vol}\left(\mathrm{Ann}\left(d-1, R_w, R_w/d^v\right)\right) &= \mathrm{Vol}(R_w \mathbb{S}^{d-1}) - 2\mathrm{Vol}\left(\mathrm{Cap}\left(d, R_w, \arccos\left(1/d^v\right)\right)\right)\\
&= \mathrm{Vol}(R_w \mathbb{S}^{d-1})\left(1 - I_{\sin^2(\arccos(1/d^v))}\left(d/2, 1/2\right)\right)\\
&= \mathrm{Vol}(R_w \mathbb{S}^{d-1})\left(1 - I_{1 - 1/d^{2v}}(d/2, 1/2)\right).
\end{align}
We can bound the second term above. First, expanding in terms of the definition, we have:
\begin{align}
1 - I_{1-1/d^{2v}}(d/2, 1/2) &= 1 - \frac{1}{B_1(d/2, 1/2)}\left(\int_0^t t^{d/2 - 1}(1-t)^{-1/2}\,dt - \int_{1-1/d^{2v}}^1 t^{d/2-1}(1-t)^{-1/2}\,dt\right)\\
&= \frac{1}{B_1(d/2,1/2)}\int_{1-1/d^{2v}}^1 t^{d/2-1}(1-t)^{-1/2}\,dt.
\end{align}
We can bound the integral as
\begin{equation}
  \int_{1-1/d^{2v}}^1 t^{d/2-1}(1-t)^{-1/2}\,dt \leq \int_{1-1/d^{2v}}^1 (1-t)^{-1/2}\,dt = \frac{2}{d^v}.
\end{equation}
Moreover, we can lower bound the beta function. Recall that the Beta function can be written in terms of Gamma functions:
\begin{equation}
  B(d/2, 1/2) = \frac{\Gamma(d/2)\Gamma(1/2)}{\Gamma(d/2 + 1/2)}.
\end{equation}
Standard bounds on ratios of Gamma functions~\cite{wendel1948note} give
\begin{equation}
  \label{eq:beta_bound}
  B(d/2,1/2) \geq cd^{-1/2}
\end{equation}
for some absolute constant $c$.
Putting everything together, we see that
\begin{equation}
  \mathrm{Vol}(\mathrm{Ann}(d-1, R_w, R_w/d^v)) \leq C\, \mathrm{Vol}(R_w \mathbb{S}^{d-1}) d^{1/2-s}
\end{equation}
for some absolute constant $C$.
Denote
\begin{equation}
  \mathcal{S}_{w,\pm} \triangleq \left\{w \in R_w\mathbb{S}^{d-1} : |w_j| \geq R_w/d^v,\; \forall j \in [d]\right\}.
\end{equation}
Using our previous work, we can lower bound the volume of this set:
\begin{align}
  \mathrm{Vol}(\mathcal{S}_{w,\pm}) &\geq \mathrm{Vol}(R_w \mathbb{S}^{d-1})\left(1 - d\,\mathrm{Vol}(\mathrm{Ann}(d-1, R_w, R_w/d^v))\right)\\
  &\geq \mathrm{Vol}(R_w \mathbb{S}^{d-1})\left(1 - Cd^{1/2-s}\right)\\
  &\geq \frac{1}{2}\mathrm{Vol}(R_w \mathbb{S}^{d-1}),
\end{align}
where in the last line we used $s > 3/2$ and $d$ sufficiently large.
Thus, it follows that
\begin{equation}
  \label{eq:volW}
  \mathrm{Vol}(\mathcal{S}_w) \geq \frac{\mathrm{Vol}(R_w \mathbb{S}^{d-1})}{2^{d+1}}.
\end{equation}
In order to lower bound $|\tilde{\mathcal{S}}_w|$, we use a lower bound in terms of the ratio of $\mathrm{Vol}(\mathcal{S}_w)$ and the volume of a spherical cap with angle $0.51/2 = 0.255$ (see, e.g., Proposition 4.2.12 of~\cite{vershynin2018high}). This gives
\begin{align}
|\tilde{\mathcal{S}}_w| &\geq \frac{\mathrm{Vol}(\mathcal{S}_w)}{\mathrm{Vol}(\mathrm{Cap}(d, R_w, 0.255))}\\
&\geq \frac{\mathrm{Vol}(R_w\mathbb{S}^{d-1})}{2^{d+1} \mathrm{Vol}(\mathrm{Cap}(d, R_w, 0.255))}\\
&= \frac{1}{2^d I_{\sin^2(0.255)}(d/2, 1/2)}\\
&= \frac{B(d/2, 1/2)}{2^d \int_0^{\sin^2(0.255)}t^{d/2 - 1} (1-t)^{-1/2}\,dt}\\
&\geq \frac{cd^{-1/2}}{2^d \int_0^{\sin^2(0.255)}t^{d/2-1} (1-t)^{-1/2}\,dt}\\
&\geq \frac{c'd^{-1/2}}{2^d \int_0^{\sin^2(0.255)} t^{d/2-1}\,dt}\\
&\geq \frac{c'\sqrt{d}}{2}\frac{1}{(2\sin(0.255))^d}\\
&\geq e^{c''d}.
\end{align}
In the second line, we use \Cref{eq:volW}.
In the third line, we use \Cref{eq:vol-cap}.
In the fourth line, we use the definition of $I_x(a,b)$.
In the fifth line, we use \Cref{eq:beta_bound}.
In the sixth line, we redefine the constant by absorbing a factor of $1/(1-\sin^2(0.255))^{-1/2}$.
Finally, in the last line, we assume that $d$ is sufficiently large in order to absorb the polynomial factor in $d$ and use that $2\sin(0.255) < 1$.
\end{proof}

With this result, we can prove \Cref{lem:5-new}.

\begin{proof}[Proof of~\Cref{lem:5-new}]
We follow the proof of Lemma 5 in~\cite{shamir2018distribution} but make appropriate changes.
Note that Lemma 2 from~\cite{shamir2018distribution} proves that for any $w$,
\begin{equation}
  \label{eq:lemma2}
  \widehat{g_{w^\star} \varphi}(x) = \sum_{z\in \mathbb{Z}} a_z \cdot \hat{\varphi}(x - zw^\star),
\end{equation}
where $a_z$ are complex coefficients corresponding to the Fourier series expansion of $\tilde{g}$.
Using this, we can write
\begin{align}
\E_{w^\star \sim \tilde{\mathcal{S}}_w}\left[\left(\langle h, \widehat{g_{w^\star} \varphi} \rangle - a_0 \langle h, \hat{\varphi}\rangle\right)^2\right] &= \E_{w^\star \sim \tilde{\mathcal{S}}_w}\left[\left(\left\langle h, \sum_{z \in \mathbb{Z}} a_z \hat{\varphi}(\cdot - zw^\star) \right\rangle - a_0 \langle h, \hat{\varphi}\rangle\right)^2\right]\\
&= \E_{w^\star \sim \tilde{\mathcal{S}}_w}\left[\left\langle h, \sum_{z \in \mathbb{Z} \setminus \{0\}} a_z \hat{\varphi}(\cdot - zw^\star) \right\rangle^2\right].
\end{align}
For any $w \in \tilde{\mathcal{S}}_w$, define
\begin{equation}
  A_{w,r} \triangleq \{x \in \mathbb{R}^d : \exists z \in \mathbb{Z} \setminus \{0\} \text{ s.t. } \norm{x - zw}_2 < r\}.
\end{equation}
Let $\indicator_{A_{w,r}}$ denote the indicator function to the set $A_{w,r}$ and $\indicator_{A_{w,r}^C}$ denote the indicator of its complement.
Using that $(a+b)^2 \leq 2(a^2 + b^2)$, we can upper bound our previous expression by
\begin{align}
&\E_{w^\star \sim \tilde{\mathcal{S}}_w}\left[\left(\langle h, \widehat{g_{w^\star} \varphi} \rangle - a_0 \langle h, \hat{\varphi}\rangle\right)^2\right]\\
&\leq 2\E_{w^\star \sim \tilde{\mathcal{S}}_w}\left[\left\langle h, \indicator_{A_{w^\star,R_w/4}} \sum_{z \in \mathbb{Z} \setminus \{0\}} a_z \hat{\varphi}(\cdot - zw^\star) \right\rangle^2 \right] + 2\E_{w^\star \sim \tilde{\mathcal{S}}_w}\left[\left\langle h, \indicator_{A_{w^\star,R_w/4}^C} \sum_{z \in \mathbb{Z} \setminus \{0\}} a_z \hat{\varphi}(\cdot - zw^\star) \right\rangle^2\right].
\label{eq:exp-sum}
\end{align}
Note that this is slightly different from the proof in~\cite{shamir2018distribution}, where we use the set $A_{w,R_w/4}$ instead of $A_{w,R_w/2}$.
This is because, as we show shortly, for $w\in \tilde{\mathcal{S}}_w$, the sets $A_{w,R_w/4}$ are disjoint.
In contrast, for the set $\mathcal{W}$ chosen in~\cite{shamir2018distribution}, $A_{w,R_w/2}$ are disjoint instead.

First, let us show that $A_{w,R_w/4}$ are disjoint for $w \in \tilde{\mathcal{S}}_w$.
Suppose for the sake of contradiction that the $A_{w,R_w/4}$ are not disjoint, i.e., there exists some $x \in \mathbb{R}^d$ such that $\norm{x - zw}_2 < R_w/4$ and $\norm{x - z'w'}_2 < R_w/4$ for $z,z' \in \mathbb{Z} \setminus \{0\}$ and $w,w' \in \tilde{\mathcal{S}}_w$.
By triangle inequality, we have
\begin{equation}
  \label{eq:zwz'w'}
  \norm{zw - z'w'}_2 \leq \norm{x-zw}_2 + \norm{x - z'w'}_2 \leq R_w/2.
\end{equation}
Since $w,w'$ are both in $R_w \mathbb{S}^{d-1} \cap \mathbb{R}_+^d$, if the signs of $z$ and $z'$ are different, then the angle between the segments $wz$ and $wz'$ is greater than $\pi/2$.
This implies that the cosine of this angle $\theta$ is negative.
Since $\norm{zw}_2 \geq R_w$ and $\norm{z'w'}_2 \geq R_w$, this implies
\begin{align}
\norm{zw - z'w'}_2^2 &= \norm{zw}_2^2 + \norm{z'w'}_2^2 - 2\norm{z'w}_2 \norm{z'w'}_2 \cos\theta\\
&\geq \norm{zw}_2^2 + \norm{z'w'}_2^2\\
&\geq R_w^2,
\end{align}
which contradicts \Cref{eq:zwz'w'}.
Thus, we can henceforth assume that the signs of $z,z'$ are the same.

Note that $\norm{x-zw}_2 < R_w/4$ implies that $x$ lies on a spherical shell of width $R_w/2$, centered at radius $z R_w$.
Similarly, $\norm{x - z'w'}_2 < R_w/4$ implies that $x$ lies on a spherical shell of width $R_w/2$ centered at radius $z' R_w$.
Since these do not intersect when $z \neq z'$, there can be no such $x$ in the intersection of these two sets.

Finally, it remains to consider the case of $z = z'$.
Note that
\begin{equation}
  \norm{zw - zw'}_2^2 \geq \norm{w - w'}_2^2.
\end{equation}
From the definition of $\tilde{\mathcal{S}}_w$ as a $0.51$-packing net of $\mathcal{S}_w$, we then have
\begin{equation}
  \sin \frac{0.51}{2} = \frac{\norm{w - w'}_2}{2R_w}.
\end{equation}
This implies that
\begin{equation}
  \norm{w - w'}_2 \geq 2R_w \sin \frac{0.51}{2} > \frac{R_w}{2},
\end{equation}
contradicting \Cref{eq:zwz'w'}.
It follows that no such $x$ can exist, and thus $A_{w,R_w/4}$ are disjoint for all $w \in \tilde{\mathcal{S}}_w$.

Now, using this, we want to bound the expression in~\Cref{eq:exp-sum}.
For the first term in \Cref{eq:exp-sum}, the same argument as in~\cite{shamir2018distribution} holds for our case.
We reproduce the argument here.
\begin{align}
&\E_{w^\star \sim \tilde{\mathcal{S}}_w}\left[\left\langle h, \indicator_{A_{w^\star,R_w/4}} \sum_{z \in \mathbb{Z} \setminus \{0\}} a_z \hat{\varphi}(\cdot - zw^\star) \right\rangle^2\right]\\
&= \E_{w^\star \sim \tilde{\mathcal{S}}_w}\left[\left\langle h, \indicator_{A_{w^\star,R_w/4}}\left(\widehat{g_{w^\star}\varphi} - a_0 \hat{\varphi}\right) \right\rangle^2\right]\\
&= \E_{w^\star \sim \tilde{\mathcal{S}}_w}\left[\left\langle \indicator_{A_{w,R_w/4}} h, \widehat{g_{w^\star}\varphi} - a_0 \hat{\varphi}\right\rangle^2\right]\\
&\leq \E_{w^\star \sim \tilde{\mathcal{S}}_w}\left[\norm{\indicator_{A_{w^\star,R_w/4}} h}_2^2 \norm{\widehat{g_{w^\star}\varphi} - a_0\hat{\varphi}}_2^2\right]\\
&\leq 2\E_{w^\star \sim \tilde{\mathcal{S}}_w}\left[\norm{\indicator_{A_{w^\star,R_w/4} h}}_2^2 \left(\norm{\widehat{g_{w^\star} \varphi}}_2^2 + \norm{a_0 \hat{\varphi}}_2^2\right)\right]\\
&= 2\E_{w^\star \sim \tilde{\mathcal{S}}_w}\left[\norm{\indicator_{A_{w^\star,R_w/4} h}}_2^2 \left(\norm{g_{w^\star} \varphi}_2^2 + |a_0|^2\norm{\hat{\varphi}}_2^2\right)\right]\\
&\leq 4 \E_{w^\star \sim \tilde{\mathcal{S}}_w}\left[\norm{\indicator_{A_{w^\star,R_w/4}} h}_2^2\right]\\
&\leq \frac{4}{|\tilde{\mathcal{S}}_w|} \sum_{w^\star \in \tilde{\mathcal{S}}_w} \int \indicator_{A_{w^\star, R_w/4}}|h(x)|^2 \,dx\\
&= \frac{4}{|\tilde{\mathcal{S}}_w|} \int \left(\sum_{w^\star \in \tilde{\mathcal{S}}_w}\indicator_{A_{w^\star, R_w/4}}\right)|h(x)|^2 \,dx\\
&\leq \frac{4}{|\tilde{\mathcal{S}}_w|} \int |h(x)|^2 \,dx\\
&\leq 4e^{-cd}\norm{h}_2^2,\label{eq:first-exp}
\end{align}
where in the second line, we use \Cref{eq:lemma2}.
In the fourth line, we use the Cauchy-Schwarz inequality.
In the seventh line, we use that $\norm{\hat{\varphi}}_2 = \norm{\varphi}_2 = 1$, $|a_0|^2 \leq \sum_z |a_z|^2 \leq 1$, and
\begin{equation}
  \norm{g_{w^\star} \varphi}_2^2 = \int g_{w^\star}^2(x)\varphi^2(x)\,dx \leq \int \varphi^2(x)\,dx = 1.
\end{equation}
In the second to last line, we use that $A_{w,R_w/4}$ are disjoint sets for $w \in \tilde{\mathcal{S}}_w$, as previously argued, so that $\sum_{w \in \tilde{\mathcal{S}}_w} \indicator_{A_{w,R_w/4}}(x) \leq 1$ for any $x$.
The last line follows by \Cref{lem:packing_card}.

Finally, it remains to bound the second term in \Cref{eq:exp-sum}.
We will upper bound the expression deterministically for any $w^\star$, so we may drop the expectation
By Cauchy-Schwarz,
\begin{align}
\left\langle h, \indicator_{A_{w^\star,R_w/4}^C} \sum_{z \in \mathbb{Z} \setminus \{0\}} a_z \hat{\varphi}(\cdot - zw^\star) \right\rangle^2 &\leq \norm{h}_2^2 \cdot \norm{\indicator_{A_{w^\star, R_w/4}^C} \sum_{z \in \mathbb{Z} \setminus \{0\}} a_z \hat{\varphi}(\cdot - zw^\star)}_2^2\\
&= \norm{h}_2^2\left(\sum_{z_1,z_2 \in \mathbb{Z} \setminus \{0\}} a_{z_1} a_{z_2}^* \langle \indicator_{A_{w^\star, R_w/4}^C} \hat{\varphi}(\cdot - z_1w^\star), \hat{\varphi}(\cdot - z_2w^\star) \rangle \right).
\end{align}
First, consider the terms in the above sum with $z_1 = z_2$.
Then, we have
\begin{align}
  \langle \indicator_{A_{w^\star, R_w/4}^C} \hat{\varphi}(\cdot - z_1w^\star), \hat{\varphi}(\cdot - z_2w^\star) \rangle &= \int \indicator_{A_{w^\star, R_w/4}^C} |\hat{\varphi}(x - z_1 w^\star)|^2\,dx\\
  &= \int \indicator_{A_{w^\star, R_w/4}^C}(x + z_1w^\star)|\hat{\varphi}(x)|^2\,dx\\
  &\leq \int_{x : \norm{x}_2 \geq R_w/4} |\hat{\varphi}(x)|^2\,dx\\
  &\leq \epsilon^2(R_w/4).
\end{align}
Here, the third line follows by definition of $A_{w^\star, R_w/4}^C$ and the assumption that $z_1\neq 0$ so that $\indicator_{A_{w^\star, R_w/4}^C}(x + z_1w^\star) = 1$ only if $\norm{x}_2 \geq R_w/4$.
The last line follows since $\varphi$ is $\epsilon(r)$-Fourier-concentrated.

For terms such that $z_1 \neq z_2$, the exact same argument as in~\cite{shamir2018distribution} holds, so we do not reproduce it here.
This gives a bound of
\begin{equation}
  \sum_{\substack{z_1,z_2 \in \mathbb{Z} \setminus \{0\} \\ z_1 \neq z_2}} a_{z_1} a_{z_2}^* \langle \indicator_{A_{w^\star, R_w/4}^C} \hat{\varphi}(\cdot - z_1w^\star), \hat{\varphi}(\cdot - z_2w^\star) \rangle \leq 4 \sum_{n=1}^\infty \epsilon(nR_w/2).
\end{equation}
Thus, putting everything together, we have
\begin{equation}
  \left\langle h, \indicator_{A_{w^\star,R_w/4}^C} \sum_{z \in \mathbb{Z} \setminus \{0\}} a_z \hat{\varphi}(\cdot - zw^\star) \right\rangle^2 \leq \norm{h}_2^2 \left(\epsilon^2(R_w/4) + 4\sum_{n=1}^\infty \epsilon(nR_w/2)\right) \leq 5\norm{h}_2^2 \sum_{n=1}^\infty \epsilon(nR_w/4),
\end{equation}
where we used the fact that $\epsilon^2(R_w/4) \leq \epsilon^2(R_w/4) \leq \sum_{n=1}^\infty \epsilon(n R_w/4)$ and $\epsilon$ is a non-increasing function for distributions of interest.
Together with \Cref{eq:first-exp}, plugging into \Cref{eq:exp-sum}, we obtain the claim.
\end{proof}

\subsection{Correlational SQ lower bound}
\label{sec:csq-hardness}

We extend the classical hardness argument from the previous section to hold against any classical algorithm utilizing correlational SQs.
Recall from Appendix~\ref{sec:learning} that in the correlational SQ model~\cite{bshouty2002using,bendavid1995learning}, queries are restricted to acting only on the input space, i.e., for a query $\phi$, algorithms receive estimates of $\mathop{\mathbb{E}}_{x\sim\mathcal{D}}[\phi(x)g_{w^\star}(x)]$.
In this section, we focus on Gaussian distributions and prove hardness for the simplest case of when $\tilde{g}$ consists of a single cosine.
This clearly implies hardness for the more general $\tilde{g}$ considered in the remainder of the paper.
We prove the following theorem.

\begin{theorem}[Correlational SQ Hardness]
\label{thm:csq-hardness}
Let $\epsilon \in (0,1)$.
Let $\mathcal{C} = \{x\mapsto \cos(2\pi x^\intercal w^\star) : w^\star \in \tilde{\mathcal{S}}_w \subseteq \mathbb{R}^d\}$ be a concept class, where $\tilde{\mathcal{S}}_w$ is a 0.51-packing net of the set $\mathcal{S}_w$ given in \Cref{eq:sw}.
Consider a Gaussian distribution $\mathcal{D}$ with a diagonal covariance matrix $\Sigma = \sigma^2 I$, where $\sigma \geq \tilde{\Omega}(d^2)$.
Then, any classical algorithm using correlational SQs to learn an unknown $c^\star \in \mathcal{C}$ with respect to the distribution $\mathcal{D}$ requires at least $2^{\Omega(d)}$ queries of tolerance $\mathcal{O}(1/d^4)$ to learn $\mathcal{C}$ to error $\epsilon$.
\end{theorem}

First, we note that while some of the parameters, e.g., the variance and tolerance, appear arbitrary, we have in fact carefully chosen these to align with the parameters of our quantum algorithm.
Namely, $\sigma \geq \tilde{\Omega}(d^2)$ is also satisfied for our efficient quantum algorithm solving this problem via \Cref{coro:gauss-satisfies-assum}.
Also, as seen in \Cref{thm:non-unif-guarantee}, a QSQ tolerance of $\mathcal{O}(1/d^4)$ is also sufficient for our quantum algorithm to learn successfully\footnote{The scaling for the variance can be seen by $\sigma \geq \tilde{\Omega}(\tau M_1^2 d^4/R_w^2)$, $M_1 \geq R_w^2/\epsilon_1 \geq R_w^2d$ and taking $\tau = \mathcal{O}(1/(d^4 R_w^2))$. These conditions are all satisfied by the choices of parameters in \Cref{thm:non-unif-guarantee} and \Cref{coro:gauss-satisfies-assum}. To be precise, as stated, the tolerance of the classical hardness only matches the QSQ tolerance with respect to the $d$ scaling. The proof can be extended in the same way such that the tolerances match precisely, but we focus on $d$ scaling for simplicity.}.

Moreover, this lower bound only holds against classical algorithms using \emph{correlational} SQs, rather than general SQs.
In the case of Boolean functions, these two models are in fact equivalent~\cite{bshouty2002using}.
However, for real functions, there exist separations between correlational SQs and general SQs~\cite{andoni2014learning,andoni2019attribute,chen2022learning}.
Nevertheless, \Cref{thm:csq-hardness} is a strengthening of the classical hardness proven in Appendix~\ref{sec:classical-hardness}, and we view it as an important step towards general SQ hardness.
Additionally, it is interesting to observe that only one type of query made by our quantum algorithm is not a ``correlational QSQ.''
Correlational QSQs have not been studied before in the literature, but a clear natural analogue is that queried observables $O$ can only act on the input register, i.e., $O = O_1 \otimes I$, where the identity acts on the output register.
Then, the only queries that our quantum algorithms in Appendices~\ref{sec:uniform} and \ref{sec:non-unif} make that are not correlational in this sense are of the form given in \Cref{eq:Am}.
As all other QSQs are correlational, considering classical algorithms with access only to correlational SQs is arguably not significantly restrictive.

Previously, classical learning theorists have shown similar correlational SQ lower bounds for learning single-layer neural networks~\cite{goel2020superpolynomial,diakonikolas2020algorithms}.
However, the functions for which they show hardness of learning are not the quite same as those considered here.
In particular, they differ in the activation function (cosine vs. ReLu), and also do not restrict the valid affine functions to $w^\star \in \tilde{\mathcal{S}_w}$.
Thus, the above theorem does not follow immediately from existing results and needs to be analyzed separately.

One way to prove correlational SQ lower bounds is via the \emph{statistical dimension}~\cite{blum1994weakly,yang2005new,feldman2017statistical}, which captures the difficulty of learning a concept class, similarly to the more commonly known VC dimension.
Informally, the statistical dimension quantifies the size of the largest subset of the concept class whose elements have ``low correlation.''
More precisely, we state the following definition, following the presentation of~\cite{goel2020superpolynomial}.

\begin{definition}[Statistical dimension]
\label{def:sd}
Let $\mathcal{C}$ be a concept class, and let $\mathcal{D}$ be a distribution on the same domain as functions in $\mathcal{C}$. Define the \emph{average correlation} of $\mathcal{C}$ as
\begin{equation}
  \rho_{\mathcal{D}}(\mathcal{C}) \triangleq \frac{1}{|\mathcal{C}|^2} \sum_{c,c' \in \mathcal{C}} \left| \mathop{\mathbb{E}}_{x \sim \mathcal{D}}[c(x) c'(x)]\right|.
\end{equation}
Then, the \emph{statistical dimension} of $\mathcal{C}$ at threshold $\gamma$, denoted $\mathrm{SDA}_{\mathcal{D}}(\mathcal{C}, \gamma)$ is the largest $A$ such that for all $\mathcal{C}' \subseteq \mathcal{C}$ with $|\mathcal{C}'| \geq |\mathcal{C}|/A$, then $\rho_{\mathcal{D}}(\mathcal{C}') \leq \gamma$.
\end{definition}

Intuitively, functions in the subset $\mathcal{C}'$ have low correlation and thus should be hard to distinguish and hard to learn.
This fundamental relationship between the statistical dimension and SQ lower bounds has been formalized~\cite{feldman2017statistical,szorenyi2009characterizing,feldman2012complete}.
We state a version of this result, following the presentation of~\cite{goel2020superpolynomial}.

\begin{theorem}[Theorem 4.1 in~\cite{goel2020superpolynomial}]
\label{thm:sda-implies-lower}
Let $\epsilon \in (0,1)$, and let $\gamma > 0$.
Let $\mathcal{C}$ be a concept class, and let $\mathcal{D}$ be a distribution on the same domain as functions in $\mathcal{C}$.
Suppose that $\mathop{\mathbb{E}}_{x\sim\mathcal{D}}[c^2(x)] > \epsilon^2$ for all $c \in \mathcal{C}$.
Let $A = \mathrm{SDA}_{\mathcal{D}}(\mathcal{C}, \gamma)$.
Then, any SQ learning making only correlational SQs to some unknown $c \in \mathcal{C}$ requires at least $\Omega(A)$ queries of tolerance $\sqrt{\gamma}$ to learn $\mathcal{C}$ up to error $\epsilon$.
\end{theorem}

\Cref{thm:csq-hardness} follows as a consequence of this theorem.
In order to apply it, we first need to lower bound the statistical dimension of our concept class.

\begin{lemma}
\label{lem:sda}
Let $\mathcal{C} = \{x \mapsto \cos(2\pi x^\intercal w^\star) : w^\star \in \tilde{\mathcal{S}}_w \subseteq \mathbb{R}^d\}$ be a concept class, where $\tilde{\mathcal{S}}_w$ is a 0.51 packing net of the set $\mathcal{S}_w$ given in \Cref{eq:sw}.
Let $\mathcal{D}$ be a Gaussian distribution with a digaonal covariance matrix $\Sigma = \sigma^2 I$, where $\sigma \geq \tilde{\Omega}(d^2)$.
Then, $\mathrm{SDA}_{\mathcal{D}}(\mathcal{C}, \gamma) \geq 2^{\Omega(d)}$ for $\gamma = \mathcal{O}(1/d^8)$.
\end{lemma}

\begin{proof}
By definition of the statistical dimension, we need to consider the average correlation.
In particular, we need to control expectations of the form
\begin{equation}
  \mathop{\mathbb{E}}_{x\sim\mathcal{D}}[\cos(2\pi x^\intercal w^\star) \cos(2\pi x^\intercal v^\star)].
\end{equation}
Using the sum-product formula for cosines, we have
\begin{equation}
  \mathop{\mathbb{E}}_{x\sim\mathcal{D}}[\cos(2\pi x^\intercal w^\star) \cos(2\pi x^\intercal v^\star)] = \frac{1}{2}\mathop{\mathbb{E}}_{x \sim \mathcal{D}}[\cos(2\pi x^\intercal (w^\star + v^\star))] + \frac{1}{2}\mathop{\mathbb{E}}_{x \sim \mathcal{D}}[\cos(2\pi x^\intercal (w^\star - v^\star))].
\end{equation}
Now, to evaluate these expectations, recall the definition of a characteristic function $\varphi(t) = \mathop{\mathbb{E}}[e^{-it^\intercal X}]$, where $X$ is a random vector.
When the distribution of $X$ is a multivariate normal distribution with mean vector $\mu$ and covariance matrix $\Sigma$, it is well known that the characteristic function is $\varphi(t) = e^{it^\intercal \mu - t^\intercal \Sigma t/2}$.
In our case, this simplifies to $\varphi(t) = e^{-\sigma^2 \norm{t}_2^2/2}$.
Meanwhile, expanding the definiton of $\varphi(t)$, we have
\begin{equation}
  \varphi(t) = \mathbb{E}[e^{-it^\intercal X}] = \mathbb{E}[\cos(t^\intercal X) - i\sin(t^\intercal X)].
\end{equation}
Setting this equal to the known expression for Gaussian distributions, i.e., $\varphi(t) = e^{-\sigma^2 \norm{t}_2^2/2}$, then notice that this expression is only real.
Thus, one can conclude that the imaginary part of the characteristic function is zero and hence
\begin{equation}
  \mathbb{E}[\cos(t^\intercal X)] = e^{-\sigma^2 \norm{t}_2^2/2}
\end{equation}
for a random vector $X$ distributed according to a multivariate Gaussian.
Plugging this into our previous expression, we have
\begin{equation}
  \left|\mathop{\mathbb{E}}_{x\sim\mathcal{D}}[\cos(2\pi x^\intercal w^\star) \cos(2\pi x^\intercal v^\star)]\right| \leq \frac{1}{2}e^{-\sigma^2\norm{2\pi (w^\star + v^\star)}_2^2/2} + \frac{1}{2}e^{-\sigma^2\norm{2\pi (w^\star - v^\star)}_2^2/2}.
\end{equation}
It remains to lower bound these two norms.
This can be done via a simple calculation while using the fact that $w^\star, v^\star \in \tilde{\mathcal{S}}_w$, which is a 0.51-packing net of $\mathcal{S}_w$.
In particular, this means that each vector in $\tilde{\mathcal{S}}_w$ is separated by a geodesic angle of at least $0.51$, which implies that $|(w^\star)^\intercal v^\star| \leq R_w^2 \cos(0.51)$.
Hence, we have
\begin{align}
  \norm{2\pi(w^\star + v^\star)}_2^2 &= (2\pi)^2 (\norm{w^\star}_2^2 + \norm{v^\star}_2^2 + 2(w^\star)^\intercal v^\star)\\ 
  &\geq (2\pi)^2(2R_w^2 - 2R_w^2\cos(0.51))\\
  &= 2(2\pi R_w)^2(1 - \cos(0.51))\\
  &= (4\pi R_w)^2 \sin^2(0.255)
\end{align}
and similarly
\begin{align}
  \norm{2\pi(w^\star - v^\star)}_2^2 &= (2\pi)^2 (\norm{w^\star}_2^2 + \norm{v^\star}_2^2 - 2(w^\star)^\intercal v^\star)\\ 
  &\geq (2\pi)^2(2R_w^2 - 2R_w^2\cos(0.51))\\
  &= (4\pi R_w)^2 \sin^2(0.255).
\end{align}
Thus, we have
\begin{equation}
  \left|\mathop{\mathbb{E}}_{x\sim\mathcal{D}}[\cos(2\pi x^\intercal w^\star) \cos(2\pi x^\intercal v^\star)]\right| \leq e^{-\frac{1}{2}\sigma^2(4\pi R_w)^2 \sin^2(0.255)} \leq e^{-\tilde{\mathcal{O}}(d^4R_w^2)},
\end{equation}
where in the last inequality, we use our condition that $\sigma \geq \tilde{\Omega}(d^2)$.
Using this, we can bound the average correlation.
Here, we write $g_{w^\star}(x) = \cos(2\pi x^\intercal w^\star)$.
\begin{equation}
  \rho_{\mathcal{D}}(\mathcal{C}') = \frac{1}{|\mathcal{C}'|^2} \sum_{g_{w^\star}, g_{v^\star} \in \mathcal{C}'} \left|\mathop{\mathbb{E}}_{x \sim \mathcal{D}}[g_{w^\star}(x)g_{v^\star}(x)]\right| \leq \frac{1}{|\mathcal{C}'|^2}\left(|\mathcal{C}'| + \sum_{\substack{g_{w^\star}, g_{v^\star} \in \mathcal{C}'\\w^\star \neq v^\star}} e^{-\tilde{\mathcal{O}}(d^4R_w^2)}\right) \leq \frac{1}{|\mathcal{C}'|} + e^{-\tilde{\mathcal{O}}(d^4R_w^2)}.
\end{equation}
This is less than $\mathcal{O}(1/d^8)$ if $|\mathcal{C}'| \geq d^8$.
The size of $\mathcal{C}'$ is in turn greater than $|\mathcal{C}|/A = e^{cd}/A$, where $A$ is the statistical dimension when $A \geq e^{cd}/d^8 = 2^{\Omega(d)}$.
This completes the proof of the lemma.
\end{proof}

With this, we can prove \Cref{thm:csq-hardness} by applying \Cref{thm:sda-implies-lower}.

\begin{proof}[Proof of~\Cref{thm:csq-hardness}]
Now that we have a lower bound on the statistical dimension, it remains to check the conditions of \Cref{thm:sda-implies-lower} and apply the theorem.
We only need to check the condition that $\mathop{\mathbb{E}}_{x \sim \mathcal{D}}[\cos^2(2\pi x^\intercal w^\star)] > \epsilon^2$ for all $w^\star \in \tilde{\mathcal{S}}_w$.
This follows easily by the same manipulations as in the proof of~\Cref{lem:sda}.
\begin{equation}
  \mathop{\mathbb{E}}_{x \sim\mathcal{D}}[\cos^2(2\pi x^\intercal w^\star)] = \mathop{\mathbb{E}}_{x \sim\mathcal{D}}[1 + \cos(4\pi x^\intercal w^\star)] = 1 + e^{-\sigma^2 \norm{4\pi w^\star}_2^2/2} = 1 + e^{-(4\pi \sigma)^2R_w^2/2}.
\end{equation}
This is clearly greater than $\epsilon^2$ for any $\epsilon \in (0,1)$.
Thus, by \Cref{thm:sda-implies-lower} and \Cref{lem:sda}, then the theorem clearly follows.
\end{proof}

\section{Uniform data distribution}

\label{sec:uniform}

In this section, we consider learning our concept class defined in Appendix~\ref{sec:detail-prob} when given QSQ access to quantum example states with respect to the uniform distribution.

\begin{theorem}[Guarantee; Uniform Case]
\label{thm:unif}
Let $\epsilon, \delta > 0, \tau \geq 0$.
Let $\varphi^2$ be the uniform distribution.
Let $w^\star \in \mathbb{R}^d$ be unknown with norm $R_w > 0$ and $w_j^\star \geq R_w/d^2$, for all $j \in [d]$.
Let $g_{w^\star}: \mathbb{R}^d \to [-1,1]$ be defined as $g_{w^\star}(x) = \tilde{g}(x^\intercal w^\star)$, where $\tilde{g}: \mathbb{R} \to [-1,1]$ is a function defined in \Cref{eq:g-tilde}.
Consider parameters $M_1 = \max(70\pi dD^3 R_w, R_w^2/\epsilon_1)$, $M_2 = c M_1$, where $c$ is any constant such that $M_2$ is an integer and $c < 1/(8\pi D R_w)$, and
\begin{equation}
    \tilde{R} = \tilde{\Omega}\left(\max\left(\frac{\tau M_1^2d^4}{R_w^2}, \frac{D^2}{\epsilon}, \frac{D^2\sqrt{d}}{R_w \epsilon}, \frac{D^{5/2}}{\sqrt{\epsilon}}, \frac{D^{3/2}\sqrt{d}}{R_w \sqrt{\epsilon}}\right)\right),\quad \epsilon_1 = \tilde{\mathcal{O}}\left(\min\left(\frac{\epsilon^3}{D^6 d}, \frac{\epsilon^{3/2}}{D^{13/2}d}, \frac{R_w}{D\sqrt{d}}\right)\right)
\end{equation}
Suppose we have QSQ access (see \Cref{def:qsq}) with respect to discretization parameters $M_{1,m} \triangleq m M_1$, $M_{2,m} \triangleq m M_2$ and a truncation parameter $R \geq \tilde{R}$, for $m \in \{1,\dots, D\}$.
Then, there exists a quantum algorithm with this QSQ access that can efficiently find parameters $\hat{\beta} \in \mathbb{R}^D$ such that $\mathcal{L}_{w^\star}(\hat{\beta}) \leq \epsilon$ with probability at least $1-\delta$.
Moreover, this algorithm uses
\begin{equation}
   N = \mathcal{O}\left(d D \log\left(\frac{1}{\delta}\right) \log^5\left(\frac{M_1 d^2}{R_w}\right)\right)
\end{equation}
quantum statistical queries with tolerance $\tau \leq \min\left(\frac{1}{M_2^2}\left(\frac{7}{40D} - \frac{1}{M_2}\right), \frac{1}{2D^2M_2^2}\left(\frac{2}{15} - \frac{1}{8}\left(\frac{2\pi R_w}{M_1}\right)^2 + \frac{2D^2}{M_2}\right)\right)$ and
\begin{equation}
  t = \Theta\left(\log\left(\sqrt{\frac{D}{\epsilon}}\right)\right)
\end{equation}
iterations of gradient descent.
\end{theorem}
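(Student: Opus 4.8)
The plan is to follow the two-stage structure from the Proof Ideas: first recover an approximation $\hat w$ of the hidden vector $w^\star$ coordinate by coordinate via quantum period finding encoded into QSQs, then run classical gradient descent to recover the Fourier coefficients $\beta^\star$ of the activation $\tilde g$. I would work the error budget backwards from the target $\mathcal{L}_{w^\star}(\hat\beta) \le \epsilon$: this dictates how small $\norm{\hat w - w^\star}$ must be, which fixes the discretization scales $M_1, M_2$ and the truncation $R$, which in turn fix the admissible QSQ tolerance $\tau$.

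\textbf{Step 1a (discretization).} Fix a coordinate $j$ and restrict to the slice $t \mapsto g_{w^\star}(t e_j / M_1) = \tilde g(t w_j^\star / M_1)$. Since $\tilde g$ is a $D$-term cosine series with $\norm{\beta^\star}_1 = 1$, it is Lipschitz with constant $O(D^2)$, so consecutive samples $g_{w^\star}(k/M_1)$ and $g_{w^\star}((k+1)/M_1)$ differ by $O(D^2 w_j^\star / M_1)$; with $w_j^\star \ge R_w/d^2$ and $M_1 = \max(70\pi d D^3 R_w, R_w^2/\epsilon_1)$ this is tiny. Choosing $M_2 = c M_1$ with $c < 1/(8\pi D R_w)$ makes the output grid of $\lfloor \cdot \rfloor_{M_2}$ coarser than this variation everywhere except in narrow bands around its rounding thresholds; bounding the total measure of those bands by (number of level crossings of $\tilde g$) times (band width) shows that $h_{M_1, M_2}(k) = \lfloor g_{w^\star}(k/M_1) \rfloor_{M_2}$ is $\eta$-pseudoperiodic (in the sense of \Cref{def:pseudoperiod}) with period $M_1/w_j^\star$ for a constant $\eta$.

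\textbf{Step 1b (Fourier sampling and verification via QSQs).} Measuring the label register of $\ket{g_{w^\star}}$ collapses the input register to a coset state of the form $\frac{1}{\sqrt p}\sum_k \ket{x_0 + [k M_1/w_j^\star]}$, and applying $\mathrm{QFT}_q$ then measuring amounts to evaluating a norm-$\le 1$ projector-type observable, so each Fourier sample in Step 2 of \Cref{alg:hallgren} costs $O(1)$ QSQs. The verification subroutine demanded by the hypothesis of \Cref{thm:hallgren} --- deciding whether a candidate integer $T$ satisfies $|\ell M_1/w_j^\star - T| < 1$ for some $\ell$ --- I would implement by querying an observable whose expectation equals the normalized overlap $\langle h_{M_1,M_2}, h_{M_1,M_2}(\cdot + T) \rangle$. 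Expanding $h$ in $\tilde g$'s cosines and using the approximate orthogonality of $\{\cos(2\pi j \cdot)\}_{j \le D}$ over the discretized, truncated uniform grid shows this overlap is close to $1$ when $T$ is near a multiple of $M_1/w_j^\star$ and is bounded away from $1$ by a $D$-dependent gap otherwise; the queries at the dilated scales $M_{1,m}, M_{2,m}$ for $m \in \{1,\dots,D\}$ are what let us resolve all $D$ harmonics, which is why each verification call uses $D$ QSQs. \textbf{Pinning down the exact threshold, showing it is robust to the pseudoperiodicity defect $1-\eta$ and to the additive QSQ noise $\tau$, and controlling the truncation error by taking $R \ge \tilde R$, is the most delicate part} --- this is the source of the intricate bounds on $\tau$ in the statement, and I expect it to be the main obstacle.

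\textbf{Step 1c (assembling $\hat w$) and Step 2 (gradient descent).} Plugging the verification procedure into \Cref{thm:hallgren} with a priori period bound $A = O(M_1 d^2 / R_w)$ gives, per coordinate, an integer $a_j$ with $|M_1/w_j^\star - a_j| \le 1$ with probability $\Omega(\eta^2/\log^4 A)$; repeating $O(\log(d/\delta)\log^4 A)$ times with a consensus vote boosts this to $1 - \delta/d$, and a union bound over the $d$ coordinates yields $N = O(dD\log(1/\delta)\log^5(M_1 d^2/R_w))$ QSQs. Setting $\hat w_j = M_1/a_j$ gives relative error $O(w_j^\star/M_1)$, so $M_1 \ge R_w^2/\epsilon_1$ forces $\norm{\hat w - w^\star} \le \epsilon_1$. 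For Step 2, $\mathcal{L}_{w^\star}(\beta)$ is a convex quadratic in $\beta$ whose Hessian is essentially the Gram matrix of $\{\cos(2\pi j x^\intercal \hat w)\}_{j \le D}$ under the uniform law; expanding products into integer harmonics of $x^\intercal \hat w$ and using that $\hat w$ has all coordinates bounded away from $0$ (so no resonances survive over the truncated cube, again needing $R \ge \tilde R$ to kill boundary terms) shows this matrix has eigenvalues bounded below and above by constants, and that its gradient at $\beta$ points toward $\beta^\star$ up to a bias $O(\mathrm{poly}(D)\,\norm{\hat w - w^\star})$. Standard linear-convergence analysis of gradient descent on a smooth strongly convex function (e.g.\ \cite{nesterov2018lectures}) then gives, after $t = \Theta(\log\sqrt{D/\epsilon})$ steps, an iterate $\hat\beta$ with $\norm{\hat\beta - \beta^\star}^2 = O(\epsilon/D) + O(\mathrm{poly}(D)\,\epsilon_1)$, which translates to $\mathcal{L}_{w^\star}(\hat\beta) \le \epsilon$ once $\epsilon_1$ is as small as in the statement. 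Collecting all constraints from Steps 1 and 2 pins $M_1, M_2, \tilde R, \epsilon_1$ and $\tau$ to the stated values, completing the proof.
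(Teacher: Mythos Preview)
Your proposal is correct and follows essentially the same two-stage architecture as the paper: pseudoperiodic discretization plus Hallgren-style period finding encoded into QSQs (with the inner-product--based verification at the dilated scales $M_{1,m}$) to recover $\hat w$, followed by coordinate-wise gradient descent on the convex quadratic in $\beta$. Two small discrepancies worth noting: $\tilde g$ is $2\pi D$-Lipschitz rather than $O(D^2)$, and the paper's verification actually uses a two-sided test (a lower bound on $\alpha_1$ together with an upper bound on $\sum_{m=1}^D \alpha_m$) rather than a single overlap threshold, but you have correctly flagged this as the delicate step and your description of why the $D$ scales are needed is on target.
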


In particular, our algorithm uses QSQs with different choices of discretization/truncation parameters for the different parts of Hallgren's algorithm (Appendix~\ref{sec:hallgren}).
Recall that Hallgren's algorithm has two subroutines: quantum Fourier sampling and a verification procedure.
In the quantum Fourier sampling part, we use QSQs with respect to discretization parameters $M_1, M_2$ and truncation parameter $R = \tilde{R}$.
For verification, we use discretization parameters $M_{1,m} \triangleq m M_1, M_{2,m} \triangleq mM_2$ and truncation parameter $R = \tilde{R}M_{1,m}$ for $m \in \{1,\dots, D\}$.

Recall that our target functions $g_{w^\star}(x)$ have the nice property that they are periodic with period $e_j/|w_j^\star| = e_j/w_j^\star$ (since $w_j > 0$), where $e_j$ is the unit vector for coordinate $j \in [d]$:
\begin{equation}
  g_{w^\star}\left(x + \frac{e_j}{w_j^\star}\right) = \tilde{g}\left(\left(x + \frac{e_j}{w_j^\star}\right)^\intercal w^\star\right) = \tilde{g}\left(x^\intercal w^\star + 1\right) = \tilde{g}(x^\intercal w^\star) = g_{w^\star}(x).
\end{equation}
The second to last equality holds because $\tilde{g}$ is periodic with period $1$.
In other words, each individual coordinate of $g_{w^\star}$ is periodic with period $1/w_j^\star$.
To quantumly learn $g_{w^\star}$, then we can first perform period finding to find $w^\star$ one component at a time.
Then, given the specific form of $\tilde{g}$ (\Cref{eq:g-tilde}), we can find the parameters $\beta_j^\star$, which can be done via gradient methods, as this is effectively a regression problem.

Despite the simplicity of this algorithm, there are several nontrivial issues that arise.
First, recall that the quantum example states are given by \Cref{eq:example-state}, which we reproduce below for convenience taking the distribution $\mathcal{D}$ to be the uniform distribution
\begin{equation}
  \ket{c^\star} = \frac{1}{\sqrt{|\mathcal{X}|}}\sum_{x \in \mathcal{X}} \ket{x}\ket{c^\star(x)}.
\end{equation}
We are given access to expectations with respect to these states.
In our case, notice that the target function $g_{w^\star}: \mathbb{R}^d \to [-1,1]$ takes inputs and outputs in a continuous and uncountably infinite space.
As we should not have a superposition over this large space, we must truncate and discretize our target function.
However, discretization can cause a loss of information about the period of the function, which is problematic.
Thus, it is important to choose the correct discretization in such a way that information about the period is sufficiently preserved.

In Appendix~\ref{sec:linear-uniform}, we discuss in detail how to apply period finding to our problem and mitigate these discretization issues.
In Appendix~\ref{sec:outer-uniform}, we show how one can use gradient descent to learn the outer function $\tilde{g}$ given knowledge of $w^\star$.

\subsection{Learning the linear function}
\label{sec:linear-uniform}

In this section, we discuss how to use period finding to learn the inner linear function, i.e., how to learn the vector of coefficients $w^\star$.
First, we need to suitably discretize $g_{w^\star}$ such that this discretization satisfies pseudoperiodicity (\Cref{def:pseudoperiod}).
In Appendix~\ref{sec:warmup-uniform}, we consider a simple special case to illustrate the idea behind the discretization and application of period finding for pedagogical purposes.
In Appendix~\ref{sec:general-uniform}, we prove the general case.
Throughout, we will assume uniform discretizations of the intervals in the sense that they will be equal size and not adaptively refined in any way.
The size of the discretization will be defined to be the number of bins in which the function is represented.

\subsubsection{Warmup}
\label{sec:warmup-uniform}

First, let us consider the case of $d = 1$, i.e., the input $x$ to the function and the unknown vector $w^\star$ are both real numbers instead of vectors.
We will later generalize this to the case of general $d \geq 1$.
The main simplifying assumption made in this section is that $1/w^\star \in \mathbb{Z}$ is an integer.
This will allow us to present this step of the algorithm without being hindered by too many approximations in the first instance.
If $1/w^\star \in \mathbb{Z}$, we have the following lemma, which tells us the correct discretization that satisfies pseudoperiodicity.

\begin{lemma}[Discretization; Simple Case]
\label{lem:discrete-simple-unif}
Let $w^\star \in \mathbb{R}$ be unknown with $1/w^\star \in \mathbb{Z}$ and $w^\star > 0$.
Let $g_{w^\star}:\mathbb{R} \to [-1,1]$ be defined as $g_{w^\star}(x) = \tilde{g}(x w^\star)$, where $\tilde{g}: \mathbb{R} \to [-1,1]$ is a function with period $1$ which has bounded variation on every finite interval.
Let $M \in \mathbb{Z}$ be the size of the discretization. Consider the discretized function $h_{w^\star, M} : \mathbb{Z} \to \frac{1}{M}\mathbb{Z}$ defined by
\begin{equation}
  h_{w^\star, M}(k) = \left\lfloor g_{w^\star}\left(\frac{k}{M}\right)\right\rfloor_M,
\end{equation}
where $\lfloor \cdot \rfloor_M$ denotes rounding down to the nearest multiple of $1/M$. Then, $h_{w^\star, M}$ is pseudoperiodic (in fact, periodic) with period $M/w^\star$.
\end{lemma}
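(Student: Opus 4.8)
The plan is to exploit the hypothesis $1/w^\star \in \mathbb{Z}$ to observe that the claimed period $S \triangleq M/w^\star$ is itself an integer. Once $S \in \mathbb{Z}$, rounding $\ell S$ up or down returns $\ell S$ for every $\ell \in \mathbb{Z}$, so the pseudoperiodicity condition of \Cref{def:pseudoperiod} collapses to ordinary periodicity: it suffices to check $h_{w^\star,M}(k + \ell S) = h_{w^\star,M}(k)$ for all $k,\ell \in \mathbb{Z}$. This in turn reduces to the elementary fact that $g_{w^\star}$ has period $1/w^\star$ in the $d=1$ case.

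Concretely, I would proceed in three short steps. First, set $P \triangleq 1/w^\star$, a positive integer, and record that $g_{w^\star}(x + P) = \tilde g((x+P)w^\star) = \tilde g(xw^\star + 1) = \tilde g(xw^\star) = g_{w^\star}(x)$, using that $\tilde g$ has period $1$; hence $g_{w^\star}(x + \ell P) = g_{w^\star}(x)$ for every $\ell \in \mathbb{Z}$. Second, since $M \in \mathbb{Z}$, the quantity $S = M/w^\star = MP$ is an integer, so for every $\ell \in \mathbb{Z}$ we have $\lfloor \ell S \rfloor = \lceil \ell S \rceil = \ell S$ and the pseudoperiodicity requirement is exactly periodicity with period $S$. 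Third, verify the latter directly: for any $k, \ell \in \mathbb{Z}$,
\[
h_{w^\star,M}(k + \ell S) = \left\lfloor g_{w^\star}\!\left(\tfrac{k + \ell S}{M}\right)\right\rfloor_M = \left\lfloor g_{w^\star}\!\left(\tfrac{k}{M} + \ell P\right)\right\rfloor_M = \left\lfloor g_{w^\star}\!\left(\tfrac{k}{M}\right)\right\rfloor_M = h_{w^\star,M}(k),
\]
where the middle equality uses $S/M = P$ and the third uses the first step; the outer map $\lfloor \cdot \rfloor_M$ is a fixed function of its argument, so equal inputs yield equal outputs. This shows $h_{w^\star,M}$ is periodic, hence a fortiori pseudoperiodic, with period $S = M/w^\star$.

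I do not expect any genuine obstacle here: this is a warm-up whose whole purpose is that the integrality of $1/w^\star$ makes $S$ integral and thereby removes the rounding subtleties that dominate the general argument. The only things to be careful about are purely bookkeeping --- confirming that $\ell S/M = \ell P$ is an integer multiple of the period $P$ of $g_{w^\star}$ (as opposed to the period of $\tilde g$), and noting that the output quantization $\lfloor \cdot \rfloor_M$ cannot break periodicity. The real difficulty, handling non-integer $1/w^\star$ where $\lfloor \ell S \rfloor \ne \lceil \ell S \rceil$ and one must tune the ratio $M_1/M_2$ so that the coarser output rounding absorbs the $\mathcal{O}(1/M)$ discrepancy between $g_{w^\star}$ evaluated at the two candidate shifts, is deferred to the general treatment in \Cref{sec:general-uniform}.
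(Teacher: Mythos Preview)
Your proposal is correct and follows the same overall arc as the paper: both observe that $1/w^\star \in \mathbb{Z}$ forces $S = M/w^\star \in \mathbb{Z}$, so $[\ell S] = \ell S$ and pseudoperiodicity reduces to ordinary periodicity, which is then verified directly. The only difference is in how the periodicity of $g_{w^\star}$ is invoked: you use $\tilde g(y+1)=\tilde g(y)$ directly, whereas the paper passes through the Fourier series $\tilde g(y)=\sum_z a_z e^{2\pi i z y}$ (invoking the bounded-variation hypothesis) and uses $e^{2\pi i z\ell}=1$. Your route is slightly more elementary and in fact shows that the bounded-variation assumption is not needed for this warm-up case; the paper's detour through Fourier coefficients buys nothing extra here.
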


Note that for our choice of $\tilde{g}$ as a linear combination of cosines (\Cref{eq:g-tilde}), the conditions in the lemma are clearly satisfied.

\begin{proof}
We need to show that $h_{w^\star, M}(k + [ \ell M/w^\star]) = h_{w^\star, M}(k)$ for all $\ell \in \mathbb{Z}$, where we use $[\cdot ]$ to denote one of either $\lfloor \cdot \rfloor$ or $\lceil \cdot \rceil$.
In fact, because $1/w^\star \in \mathbb{Z}$ and $\ell, N \in \mathbb{Z}$, then $[\ell M /w^\star] = \ell M /w^\star$.
Since $\tilde{g}$ is a bounded variation function, it is equal everywhere to its Fourier series expansion:
\begin{equation}
  \tilde{g}(x) = \sum_{z \in \mathbb{Z}} a_z e^{2\pi i zx}.
\end{equation}
Using this to expand out $h_{w^\star, M}(k + \ell M /w^\star)$, we have
\begin{align}
h_{w^\star, M}(k + \ell M /w^\star) &= \left\lfloor g_{w^\star}\left(\frac{k}{M} + \frac{\ell M/w^\star}{M}\right)\right\rfloor_M\\
&= \left\lfloor \tilde{g}\left(\left(\frac{k}{M} + \frac{\ell}{w^\star}\right)w^\star\right)\right\rfloor_M\\
&= \left\lfloor \sum_{z \in \mathbb{Z}} a_z \exp\left(2\pi i z\left(\frac{kw^\star}{M} + \ell\right)\right) \right\rfloor_M\\
&= \left\lfloor \sum_{z \in \mathbb{Z}} a_z \exp\left(\frac{2\pi i z kw^\star}{M}\right) \exp\left(2\pi iz \ell\right) \right\rfloor_M\\
&= \left\lfloor \sum_{z \in \mathbb{Z}} a_z \exp\left(\frac{2\pi i z kw^\star}{M}\right) \right\rfloor_M\\
&= \left\lfloor \tilde{g}\left(\frac{kw^\star}{M}\right)\right\rfloor_M\\
&= \left\lfloor g_{w^\star}\left(\frac{k}{M}\right)\right\rfloor_M\\
&= h_{w^\star, M}(k).
\end{align}
Here, in the fifth line, we use that $z, \ell \in \mathbb{Z}$ so that $\exp(2\pi i z \ell) = 1$.
This gives the claim.
\end{proof}

Now, we have a suitable discretization, but we also need to truncate the domain of the function so that it is not all of $\mathbb{Z}$.
Let $R$ be this truncation parameter.
Then, as defined in \Cref{def:qsq}, we should have QSQ access to the quantum example state with respect to the truncated and discretized target function, i.e.,
\begin{equation}
  \label{eq:trunc}
  \ket{h_{w^\star}} = \frac{1}{(2R)^{d/2}}\sum_{x_1,\dots, x_d = -R}^{R-1} \ket{x}\ket{h_{w^\star, M}(x)}.
\end{equation}

Before proving our guarantee on learning $w^\star$, first notice that by the definition of $\mathcal{S}_w$ (\Cref{eq:sw}), we have an upper bound on the period of $h_{w^\star, M}$.
In particular, $w_j^\star \geq R_w/d^2$.
Then, the period satisfies
\begin{equation}
  \frac{M e_j}{w^\star_j} \leq \frac{Me_jd^2}{R_w}.
\end{equation}
This is useful for choosing our truncation parameter $R$ in the following result.

\begin{prop}[Linear Function Guarantee; Simple Uniform Case]
  \label{prop:warmup-linear-uniform}
  Let $\varphi^2$ be the uniform distribution.
  Let $\tau \geq 0$. Let $w^\star \in \mathbb{R}^d$ be unknown with $1/w^\star_j \in \mathbb{Z}$ for all $j \in [d]$.
  Also suppose that $w_j^\star \geq R_w/d^2$, for all $j \in [d]$.
  Let $g_{w^\star}: \mathbb{R}^d \to [-1,1]$ be defined as $g_{w^\star}(x) = \tilde{g}(x^\intercal w^\star)$, where $\tilde{g}:\mathbb{R} \to [-1,1]$ is a function with period $1$ which has bounded variation on every finite interval.
  Then, for any choice of discretization parameter $M \geq 1$ and truncation parameter $R \geq (1+2\tau)M^2 d^{4}/R_w^2$, there exists a quantum algorithm that learns $w^\star$ exactly with constant probability using
  \begin{equation}
    N = d
  \end{equation}
  quantum statistical queries with tolerance $\tau$ (with respect to the discretized and truncated example state).
\end{prop}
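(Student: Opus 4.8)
The key simplification in this regime is that, by \Cref{lem:discrete-simple-unif}, the discretization $h_{w^\star,M}$ is not merely pseudoperiodic but genuinely periodic: fixing all input coordinates except the $j$th (to arbitrary values) and varying only the $j$th, $h_{w^\star,M}$ has period exactly $r_j \triangleq M/w_j^\star$, and $r_j \in \mathbb{Z}$ since $1/w_j^\star \in \mathbb{Z}$ and $M \in \mathbb{Z}$. Moreover the constraint $w_j^\star \geq R_w/d^2$ from \Cref{eq:sw} gives the a priori upper bound $r_j \leq M d^2/R_w$ on every one of these periods. Since $w_j^\star$ is recovered exactly from the integer $r_j$ via $w_j^\star = M/r_j$, the whole task reduces to recovering, for each $j \in [d]$, the integer $r_j$ — and because $r_j$ is an honest integer, \emph{standard} integer period finding suffices here, in place of the full machinery (pseudoperiodicity, Hallgren's algorithm, a verification subroutine) needed in \Cref{sec:general-uniform}.

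The plan is then to extract each $r_j$ with a single QSQ, spending $N = d$ queries in total. The obstacle is that the QSQ oracle returns expectation values, not measurement outcomes, so one cannot literally ``quantum Fourier sample and apply continued fractions''. Instead, for each coordinate $j$ I would query an observable $O_j$ that acts as the identity on all input registers other than $j$ and on the output register, and on register $j$ acts as $\mathrm{QFT}^\dagger D\,\mathrm{QFT}$ for a fixed diagonal operator $D$ in the frequency basis with $\|D\| \leq 1$ (so $\|O_j\| \leq 1$ and $O_j$ does not depend on the unknown $w^\star$); concrete candidates are $D = \ketbra{0}{0}$ or $D = \sum_y (y/q)\ketbra{y}{y}$ with $q = 2R$. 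A short calculation using \Cref{eq:trunc} shows that $\expval{O_j}{h_{w^\star}}$ equals the corresponding statistic (e.g.\ the weight at $0$, or the mean) of the marginal quantum-Fourier-sampling distribution $q_j$ on register $j$, the marginalization being over the other input coordinates and over the output value, all of which stay entangled with register $j$. By the standard Fourier-sampling analysis behind period finding — the one sketched in the proof of \Cref{thm:hallgren} — $q_j$ is concentrated on the integer multiples of $q/r_j$; hence the chosen statistic is an explicit, monotone, easily invertible function of $r_j$ up to an error term that vanishes as $R \to \infty$, and one recovers an estimate $\hat r_j$ and rounds to the nearest integer.

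It remains to certify that rounding is exact. The estimate $\hat r_j$ incurs the QSQ tolerance $\tau$ plus a truncation error, coming from the fact that the domain $\{-R,\dots,R-1\}$ is not an exact union of periods (and from the entanglement with the other registers); after inverting the defining relation between the statistic and $r_j$, the resulting error in $\hat r_j$ is of order $r_j^2$ times that combined error. Substituting $r_j \leq M d^2/R_w$ and $R \geq (1+2\tau)M^2 d^4/R_w^2$ — which is exactly what the hypothesis provides — drives this below $1/2$, so $\hat r_j$ rounds to $r_j$ and $w_j^\star = M/r_j$ is obtained exactly. Doing this for all $j \in [d]$ uses exactly $N = d$ queries; the procedure has no internal randomness and the only imprecision is the bounded QSQ tolerance, which we have accounted for, so it succeeds with constant probability (indeed, with certainty).

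The main obstacle is the middle step: carrying out period finding inside the QSQ model using only $O(1)$ queries per coordinate. Concretely, one must (i) choose the diagonal operator $D$ so that $\expval{O_j}{h_{w^\star}}$ is an invertible function of $r_j$ with an error term small enough to fit the truncation budget in the hypothesis, and (ii) control two sources of blurring of $q_j$ away from the ideal comb on multiples of $q/r_j$: the fact that $q/r_j$ need not be an integer, and the fact that in $\ket{h_{w^\star}}$ the $j$th register remains entangled with the remaining input coordinates and with the output register, so that $q_j$ is a weighted average of the per-setting Fourier distributions whose weights depend on how the values of $\tilde g$ collide across a period. A secondary caveat, not relevant in the ``simple'' regime but flagged for completeness, is that period finding returns the \emph{fundamental} period of $g_{w^\star}$ along coordinate $j$, which coincides with $1/w_j^\star$ precisely when $\tilde g$ has fundamental period $1$; the general case is exactly what the verification subroutine in \Cref{sec:general-uniform} is built to handle.
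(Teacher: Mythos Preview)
Your plan shares the paper's architecture---one QFT-based observable $O_j$ per coordinate, $N=d$ queries total---but diverges at the recovery step, and that divergence is a real gap. The paper does \emph{not} invert a scalar statistic of the Fourier distribution. It queries $O_j = \bigl(\mathsf{QFT}_q^{-1}\sum_\ell \tfrac{\ell}{M}\ketbra{\ell}\,\mathsf{QFT}_q\bigr)\otimes I$ and treats the returned $\alpha$ as a noisy Fourier \emph{sample}: with constant probability there is an integer $k$ with $|\alpha - kRw_j^\star/M| \le \tau + \tfrac12$, so $|\alpha/R - kw_j^\star/M| \le (\tau+\tfrac12)/R$. Since $kw_j^\star/M$ is a rational with denominator at most $A = Md^2/R_w$ and distinct such rationals are at least $1/A^2$ apart, the hypothesis $R\ge(1+2\tau)A^2$ forces $(\tau+\tfrac12)/R \le 1/(2A^2)$, and continued fractions on $\alpha/R$ recovers the exact fraction. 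The decisive feature is that the tolerance $\tau$ is \emph{divided by $R$} before being compared to the $1/(2A^2)$ threshold; this is exactly why enlarging $R$ absorbs arbitrary $\tau$.

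Your route throws that scaling away. Both candidate statistics take values in $[0,1]$ independently of $R$: the weight at frequency $0$ is (ideally) $1/r_j$, and the normalized mean $\sum_y(y/q)\Pr[y]$ is (ideally) $\tfrac12 - \tfrac{1}{2r_j}$. Inverting either amplifies error by $\Theta(r_j^2)$, so you need $|\alpha - \langle O_j\rangle| = O(1/A^2)$; but the QSQ tolerance $\tau$ is a fixed input, not reduced by any choice of $R$, and the proposition imposes no upper bound on $\tau$. Your sentence ``substituting $R\ge(1+2\tau)M^2d^4/R_w^2$ drives this below $1/2$'' is therefore incorrect---nothing in the hypothesis rules out, say, $\tau=1$. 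Separately, the weight-at-$0$ statistic equals $1/r_j$ only when $h_{w^\star,M}$ is injective on one period along coordinate $j$; for $\tilde g$ as general as allowed here (already $\tilde g(y)=\cos(2\pi y)$ is two-to-one on a period) there are extra collisions that inflate it in a $\tilde g$-dependent way, so it is not a function of $r_j$ alone. Your observation that a QSQ returns an expectation rather than a sample is well taken, but to match the stated bound you should follow the paper's continued-fractions-on-$\alpha/R$ argument rather than attempting to invert a bounded scalar statistic.
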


\begin{proof}
We first consider the case of $d =1$.
Consider the state in \Cref{eq:trunc}, with respect to which we have access to expectations.
By \Cref{lem:discrete-simple-unif}, we know that $h_{w^\star, M}$ is periodic with period $M/w^\star$ via our choice of discretization.
With the truncation, $h_{w^\star, M}$ is still periodic with period $M/w^\star$.
Moreover, since $1/w^\star \in \mathbb{Z}$ in this simple case, then the period is an integer.
Thus, we can simply apply standard period finding~\cite{shor1994algorithms} to solve for $M/w^\star$, i.e., apply the quantum Fourier transform (QFT) and measure.
We can encode this algorithm into an observable $O$ with $\norm{O} \leq 1$ as follows
\begin{equation}
  \label{eq:o}
  O = \left(\mathsf{QFT}_q^{-1} \sum_{\ell \in [M]} \frac{\ell}{M}\ketbra{\ell} \mathsf{QFT}_q\right) \otimes I.
\end{equation}
Here, $\mathsf{QFT}_q$ denotes the QFT in a dimension of size $q = 2R$ (since the input $x$ is between $-R$ and $R$ due to truncation), and $I$ is the identity operator acting on the qubits encoding the output $h_{w^\star, M}(x)$.
$O$ is simply applying a QFT on the first register and measuring these qubits with proper normalization factors to ensure that $\norm{O} \leq 1$.
By a standard analysis of the period finding algorithm (see, e.g.,~\cite{nielsen2010quantum}), if our QSQs were noiseless ($\tau = 0$), this allows us to recover the period $M/w^\star$ exactly with constant success probability using only one (noiseless) QSQ.

However, we consider the case of a general noise tolerance $\tau \geq 0$ for our QSQs.
By the standard analysis of period finding, with constant probability, the output of the QSQ is some number $\alpha$ such that $|\alpha - y| \leq \tau$, where $|y - kRw^\star/M| \leq 1/2$ for some integer $k \geq 0$.
By the reverse triangle inequality, this implies that $|\alpha - kR w^\star/M| \leq \tau + 1/2$.
Then, dividing by $R$, we see that
\begin{equation}
  \label{eq:close}
  \left| \frac{\alpha}{R} - \frac{kw^\star}{M}\right| \leq \frac{\tau + 1/2}{R}.
\end{equation}
Notice that $k w^\star/M$ can be thought of as a fraction with denominator $M/w^\star$ (since $1/w^\star \in \mathbb{Z}$ in this case), which is the period of our target function.
Let $A \triangleq Md^{2}/R_w$ be an upper bound on the period $M/w^\star$.
Then, $kw^\star/M$ is a fraction with denominator less than $A$.
Two distinct rational numbers with denominator less than $A$ must be at least a distance of $1/A^2 \geq (1+2\tau)/R$ apart, where the inequality comes from our choice of $R \geq (1+2\tau)A^2$.
Thus, \Cref{eq:close} implies that there exists a unique fraction $kw^\star/M$ that is determined by $\alpha/R$.
Moreover, by our choice of $R$ again,
\begin{equation}
  \left| \frac{\alpha}{R} - \frac{kw^\star}{M}\right| \leq \frac{\tau + 1/2}{R} \leq \frac{1}{2A^2}.
\end{equation}
Thus, by standard results for the continued fractions expansion~\cite{schrijver1998theory}, we can recover the unique $kw^\star/M$ from $\alpha/R$.
The rest of the analysis follows in the same way as the usual period finding algorithm.
This tells us that we can recover the period $M/w^\star$ exactly with constant probability.
Moreover, from $M/w^\star$, we can recover $w^\star$ exactly as well since $M$ is known.

Thus far in this section, we have only considered the case of $d = 1$.
Our above discussion is easily generalized to arbitrary $d \geq 1$.
In particular, our simplifying assumption is now that $1/w^\star_j \in \mathbb{Z}$ for all $j \in [d]$.
Recall that for general $d$, our target function is $g_{w^\star}(x) = \tilde{g}(x^\intercal w^\star)$, where $\tilde{g}: \mathbb{R} \to [-1,1]$ is again a function with period $1$ and now $w^\star \in \mathbb{R}^d$.
Then, we can define the discretized function as before but this time $h_{w^\star, M}: \mathbb{Z}^d \to \frac{1}{M}\mathbb{Z}$.
By essentially the same proof as \Cref{lem:discrete-simple-unif}, one can show that $h_{w^\star, M}$ is periodic with period $Me_j/w^\star_j$, where $e_j$ is the unit vector for coordinate $j \in [d]$.
Then, we can perform period finding one coordinate at a time, encoding in the QSQ operator
\begin{equation}
  \label{eq:oj}
  O_j = \left(\mathsf{QFT}_{j,q}^{-1} \sum_{\ell \in [M]} \frac{\ell}{M}\ketbra{\ell} \mathsf{QFT}_{j,q}\right)\otimes I_{-j} \otimes I,
\end{equation}
where $\mathsf{QFT}_{j,q}$ denotes the QFT in a dimension $q = 2R$ acting only on the qubits that encode the $j$th coordinate of the input $x_j$.
Also, $I_{-j}$ is the identity operator acting on all qubits that do not encode $x_j$ and $I$ is the identity operator acting on the qubits encoding the output of the function.
By the same argument as before, applying $O_j$ for each coordinate $j \in [d]$, we can recover the whole vector $w^\star$ exactly.
Moreover, this uses $N = d$ QSQs to learn $w^\star$ exactly with constant probability.
\end{proof}

From this learned $w^\star$, we can use classical gradient methods to learn the $\tilde{g}$, assuming it has the form given in \Cref{eq:g-tilde}.
This is discussed in Appendix~\ref{sec:outer-uniform}.

\subsubsection{General Case}
\label{sec:general-uniform}

In the previous section, we proved that $w^\star$ can be recovered exactly in a simple case.
We presented this first to give an overview of the algorithm without becoming overwhelmed by the technicalities involved for the general case.
In this section, we prove the general case, with the formal statement given below.

\begin{theorem}[Linear Function Guarantee; Uniform Case]
\label{thm:linear-uniform}
Let $\varphi^2$ be the uniform distribution.
Let $1 > \epsilon_1 > 0, \delta > 0, \tau \geq 0$.
Let $w^\star \in \mathbb{R}^d$ be unknown with norm $R_w > 0$ and $w_j^\star \geq R_w/d^2$, for all $j \in [d]$.
Let $g_{w^\star}: \mathbb{R}^d \to [-1,1]$ be defined as $g_{w^\star}(x) = \tilde{g}(x^\intercal w^\star)$, where $\tilde{g}: \mathbb{R} \to [-1,1]$ is given in \Cref{eq:g-tilde}.
Consider parameters $M_1 = \lceil\max(70 \pi d D^3 R_w, R_w^2/\epsilon
_1)\rceil$, $M_2 = c M_1$, where $c$ is any constant such that $M_2$ is an integer and $c < 1/(8\pi D R_w)$, and
\begin{equation}
    \tilde{R} = \tilde{\Omega}\left(\max\left(\frac{\tau M_1^2d^4}{R_w^2}, \frac{D^2}{\epsilon}, \frac{D^2\sqrt{d}}{R_w \epsilon}, \frac{D^{5/2}}{\sqrt{\epsilon}}, \frac{D^{3/2}\sqrt{d}}{R_w \sqrt{\epsilon}}\right)\right).
\end{equation}
Suppose we have QSQ access (\Cref{def:qsq}) with respect to discretization parameters $M_{1,m} \triangleq m M_1$, $M_{2,m} \triangleq m M_2$, and a truncation parameter $R \geq \tilde{R}$, for $m \in \{1,\dots, D\}$.
Then, there exists a quantum algorithm with this QSQ access that can learn an approximation $\hat{w}$ of $w^\star$ such that $\norm{\hat{w} - w^\star}_\infty \leq \epsilon_1$ with probability at least $1-\delta$ using 
\begin{equation}
  N = \mathcal{O}\left(d D \log\left(\frac{1}{\delta}\right) \log^5\left(\frac{M_1 d^2}{R_w}\right)\right)
\end{equation}
quantum statistical queries with tolerance $\tau \leq \min\left(\frac{1}{M_2^2}\left(\frac{7}{40D} - \frac{1}{M_2}\right), \frac{1}{2D^2M_2^2}\left(\frac{2}{15} - \frac{1}{8}\left(\frac{2\pi R_w}{M_1}\right)^2 + \frac{2D^2}{M_2}\right)\right)$.
\end{theorem}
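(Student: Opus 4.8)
The plan is to reduce the $d$-dimensional problem to $d$ independent one-dimensional pseudoperiodic period-finding problems, one per coordinate, solve each with a QSQ implementation of Hallgren's algorithm (\Cref{thm:hallgren}), and convert each recovered period into an estimate of $w_j^\star$. \textbf{Reduction.} Following the warmup (\Cref{prop:warmup-linear-uniform}), I would discretize $g_{w^\star}$ coordinate-wise as $h(k)\triangleq\lfloor g_{w^\star}(k/M_1)\rfloor_{M_2}$ on $\mathbb{Z}^d$ truncated to $\{-R,\dots,R-1\}^d$. Fixing all but the $j$th coordinate, the map $k\mapsto h(\cdots,k,\cdots)$ is no longer exactly periodic (since $S_j\triangleq M_1/w_j^\star$ need not be an integer), but the analogue of \Cref{lem:discrete-simple-unif} shows it is $\eta$-pseudoperiodic with period $S_j$: because $\tilde g$ is $2\pi D$-Lipschitz, the one-input-step variation of $g_{w^\star}$ along coordinate $j$ is at most $2\pi D w_j^\star/M_1\le 2\pi D R_w/M_1$, and the choice $M_2=cM_1$ with $c<1/(8\pi D R_w)$ makes this less than a quarter of the output mesh $1/M_2$; hence shifting the input by the rounded period moves $g_{w^\star}$ by less than a quarter of a rounding cell, so $h(k)=h(k+[\ell S_j])$ for all but an $\eta$-fraction of $k$, with $\eta$ a positive constant once $M_1=\Omega(dD^3R_w)$ (the role of the $70\pi dD^3R_w$ term). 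The constraint $w_j^\star\ge R_w/d^2$ gives the upper bound $S_j\le A\triangleq M_1 d^2/R_w$ needed as the bound ``$A$'' in \Cref{thm:hallgren}.

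\textbf{QSQ subroutines.} Quantum Fourier sampling on the $j$th input register modulo $q=2R$ is encoded by the bounded observable $O_j$ of \eqref{eq:oj}, so $\norm{O_j}\le 1$; a $\tau$-noisy QSQ returns the sampled outcome up to additive error $\tau$, and the continued-fraction step recovering the convergents of $b/c$ goes through provided $R=\Omega((1+\tau)A^2)$, so that distinct rationals with denominator $<A$ remain resolvable under noise --- this is the first term of $\tilde R$ and also subsumes Hallgren's requirement $q\ge 3S^2$. The delicate ingredient is the verification subroutine: given a candidate integer $T$, decide whether $|\ell S_j - T|<1$ for some $\ell\in\mathbb{Z}$. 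Since $h$ is only pseudoperiodic, comparing $h(k)$ with $h(k+T)$ pointwise does not suffice; instead I would estimate, via a bounded observable that translates the input register by $T$ and reads off agreement of the output registers, the normalized overlap $\langle h_{M_{1,m},M_{2,m}}, h_{M_{1,m},M_{2,m}}(\cdot+T)\rangle$ at the $D$ refined scales $m=1,\dots,D$ (this is where the $D$ QSQs per candidate, the parameters $M_{1,m}=mM_1$, $M_{2,m}=mM_2$, and the inflated truncation $R=\tilde R M_{1,m}$ enter). One then exhibits a threshold that this overlap provably exceeds iff $T$ lies within distance $1$ of a multiple of $S_j$, as in \Cref{thm:verification}; controlling the gap of this threshold against the QSQ noise and the $\mathcal{O}(1/M_2)$ rounding error is precisely what yields the stated upper bound on $\tau$.

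\textbf{Amplification and readout.} With this verification, \Cref{thm:hallgren} outputs for coordinate $j$ an integer $a_j$ with $|S_j-a_j|\le 1$ with probability $\Omega(\eta^2/\log^4 A)$; one run costs $\mathcal{O}(1)$ Fourier-sampling QSQs plus $D$ QSQs for each of the $\mathcal{O}(\log q)=\mathcal{O}(\log A)$ convergents. Repeating $\mathrm{polylog}(A)$ times per coordinate, keeping the smallest candidate that passes verification, and union-bounding over $j\in[d]$ drives the total failure probability below $\delta$ at cost $N=\mathcal{O}(dD\log(1/\delta)\log^5(M_1 d^2/R_w))$ QSQs --- the $\log^4$ coming from Hallgren's success probability and the extra $\log$ from iterating the convergents. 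Finally I set $\hat w_j=M_1/a_j$; from $|S_j-a_j|\le 1$, $S_j=M_1/w_j^\star$, and $S_j$ large (since $M_1\gg R_w$) one gets $|\hat w_j-w_j^\star|=M_1|S_j-a_j|/(a_jS_j)\le (w_j^\star)^2/M_1\,(1+o(1))\le R_w^2/M_1\le\epsilon_1$, hence $\norm{\hat w-w^\star}_\infty\le\epsilon_1$.

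\textbf{Main obstacle.} The hard part is the verification procedure: for pseudoperiodic rather than periodic functions, testing whether $T$ is near a multiple of the period requires estimating shifted overlaps instead of point values, and pinning down a threshold that is provably crossed if and only if $T$ lies within distance $1$ of some $\ell S_j$ --- while tracking the combined effects of output rounding, truncation, and QSQ noise --- is exactly where the quantitative conditions on $M_1$, $M_2$, and $\tau$ originate. Everything else is a careful but essentially routine adaptation of the standard period-finding / continued-fraction analysis and of Hallgren's proof sketch recalled in \Cref{sec:hallgren}.
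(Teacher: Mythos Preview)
Your proposal is correct and follows essentially the same approach as the paper: reduce to $d$ one-dimensional problems, establish $\eta$-pseudoperiodicity of the discretization via the Lipschitz/output-mesh comparison (the paper's \Cref{lem:discrete-general-unif}), run Hallgren's algorithm with QSQ-encoded Fourier sampling via $O_j$, use the $D$-scale shifted-overlap verification of \Cref{thm:verification}, amplify, and invert $S_j\mapsto \hat w_j=M_1/a_j$ using $M_1\ge R_w^2/\epsilon_1$. You also correctly identify the verification procedure as the main technical obstacle and account for the QSQ budget (the $\log^4$ from Hallgren's success probability plus one $\log$ for the number of convergents) exactly as the paper does.
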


As stated before, our algorithm has two subroutines as in Hallgren's algorithm: quantum Fourier sampling and the verification procedure.
For quantum Fourier sampling, we use QSQs with respect to discretization parameters $M_1, M_2$ and truncation parameter $R = \tilde{R}$.
For verification, we use discretization parameters $M_{1,m} \triangleq m M_1, M_{2,m} \triangleq mM_2$ and truncation parameter $R = \tilde{R}M_{1,m}$ for $m \in \{1,\dots, D\}$.

There are two main differences with the presentation in Appendix~\ref{sec:warmup-uniform}.
First, in \Cref{lem:discrete-simple-unif}, it was fortuitous that the pseudoperiodicity property (\Cref{def:pseudoperiod}) required of the discretization turned out to simply be periodicity under our simplifying assumption.
However, in general, this is not the case, so we will need to prove a new version of \Cref{lem:discrete-simple-unif}.
Second, even with the discretization, the period of the discretized function may not be an integer in general.
Thus, the standard period finding algorithm~\cite{shor1994algorithms} does not apply.
Instead, we turn to a subroutine of Hallgren's algorithm~\cite{hallgren2007polynomial} which performs irrational period finding for pseudoperiodic functions.
This has some additional conditions that we must fulfill, as discussed in Appendix~\ref{sec:hallgren}.

To address the first point, we have the following lemma, which is a generalization of \Cref{lem:discrete-simple-unif}.
Again, we first consider $d = 1$ and later generalize to $d \geq 1$.
In this case, we need two discretization parameters: one to control the fineness of the discretization of the input to $g_{w^\star}$ and another to control the outer rounding.
We consider the case when the latter is more coarse than the former to obtain pseudoperiodicity.

\begin{lemma}[Discretization; General Case]
\label{lem:discrete-general-unif}
Let $w^\star \in \mathbb{R}$ be unknown with $|w^\star| \leq R_w$ for some $R_w > 0$.
Let $g_{w^\star}:\mathbb{R} \to [-1,1]$ be defined as $g_{w^\star}(x) = \tilde{g}(x w^\star)$, where $\tilde{g}: \mathbb{R} \to [-1,1]$ is a function with period $1$ which has bounded variation on every finite interval, is given by a trigonometric polynomial of degree at most $D$, and is $\lambda$-Lipschitz.
Let $M \geq 1$ and consider discretization parameters $M_1 = M$, $M_2 = c M$, where $c$ is any constant such that $M_2$ is an integer and $c < 1/(4\lambda R_w)$.
Consider the discretized function $h_{w^\star, M_1, M_2} : \mathbb{Z} \to \frac{1}{M_2}\mathbb{Z}$ defined by
\begin{equation}
  h_{w^\star, M_1, M_2}(k) = \left\lfloor g_{w^\star}\left(\frac{k}{M_1}\right)\right\rfloor_{M_2},
\end{equation}
where $\lfloor \cdot \rfloor_{M_2}$ denotes rounding down to the nearest multiple of $1/M_2$. Then, $h_{w^\star, M_1, M_2}$ is $(1-4DR_w/M)$-pseudoperiodic with period $M_1/w^\star$.

In particular, when $\tilde{g}$ is given by \Cref{eq:g-tilde} and $M \geq 70 \pi d D^3 R_w$, then for discretization parameters $M_1 = M$, $M_2 = c M$ with $c < 1/(8\pi D R_w)$, then $h_{w^\star, M_1, M_2}$ is $(33/35)$-pseudoperiodic with period $M_1/w^\star$.
\end{lemma}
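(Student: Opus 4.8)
The plan is to verify the pseudoperiodicity condition of \Cref{def:pseudoperiod} directly: first reduce it to a statement about the single-variable function $\tilde g$ (mirroring \Cref{lem:discrete-simple-unif}, except that now $\lfloor\ell S\rfloor$ and $\lceil\ell S\rceil$ genuinely differ), and then bound the number of ``bad'' inputs by exploiting that $\tilde g$, being a trigonometric polynomial of degree at most $D$, has few critical points. Write $S:=M_1/w^\star$ for the claimed period, $s:=w^\star/M_1$, and $y_k:=ks$; I assume $w^\star>0$ (the only case with $S>0$; for the $\tilde g$ of \Cref{eq:g-tilde} this is without loss of generality since that $\tilde g$ is even) and $M>4DR_w$ (otherwise $1-4DR_w/M\le 0$ and nothing is to prove).

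\emph{Reduction.} Since $h_{w^\star,M_1,M_2}(n)=\lfloor\tilde g(ns)\rfloor_{M_2}$ and $Ss=1$, writing $\lfloor\ell S\rfloor=\ell S-\{\ell S\}$ and $\lceil\ell S\rceil=\ell S+(1-\{\ell S\})$ (for $\{\ell S\}\neq 0$; the case $\{\ell S\}=0$ is immediate from $\tilde g$ having period $1$) and using periodicity of $\tilde g$, I get $h_{w^\star,M_1,M_2}(k+\lfloor\ell S\rfloor)=\lfloor\tilde g(y_k-\tau)\rfloor_{M_2}$ and $h_{w^\star,M_1,M_2}(k+\lceil\ell S\rceil)=\lfloor\tilde g(y_k+s-\tau)\rfloor_{M_2}$, with $\tau:=\{\ell S\}\,s\in[0,s)$. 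Both evaluation points lie in $[y_k-s,y_k+s]$, so it suffices to show that for at least a $(1-4DR_w/M)$-fraction of $k\in\{0,\dots,\lfloor S\rfloor\}$ and all $\tau\in[0,s)$, one of $\lfloor\tilde g(y_k-\tau)\rfloor_{M_2}$, $\lfloor\tilde g(y_k+s-\tau)\rfloor_{M_2}$ equals $\lfloor\tilde g(y_k)\rfloor_{M_2}$.

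\emph{Key dichotomy (the crux).} I claim that if $\tilde g$ is monotonic on $[y_k-s,y_k+s]$ then $k$ is good for every $\ell$. Assuming $\tilde g$ non-decreasing there, $\tilde g(y_k-\tau)\le\tilde g(y_k)\le\tilde g(y_k+s-\tau)$, and the spread $\tilde g(y_k+s-\tau)-\tilde g(y_k-\tau)\le\lambda s\le\lambda R_w/M_1<1/(4M_2)$, where the last inequality is exactly the hypothesis $c<1/(4\lambda R_w)$. Hence $M_2\tilde g(y_k-\tau)\le M_2\tilde g(y_k)\le M_2\tilde g(y_k+s-\tau)$ all lie in an interval of length $<1$, so their integer parts take at most two values; being ordered, $\lfloor M_2\tilde g(y_k)\rfloor$ must agree with $\lfloor M_2\tilde g(y_k-\tau)\rfloor$ or with $\lfloor M_2\tilde g(y_k+s-\tau)\rfloor$, as required (the non-increasing case is symmetric). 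Contrapositively, a bad $k$ forces $\tilde g$ to be non-monotonic on $[y_k-s,y_k+s]$, hence (since $\tilde g\in C^1$) forces a zero of $\tilde g'$ in the open interval $(y_k-s,y_k+s)$. I expect this step — recognizing that monotonicity on a radius-$s$ window is enough, and that the coarse outer rounding ($c<1/(4\lambda R_w)$, i.e.\ $M_2$ much smaller than $M_1$) is precisely what sandwiches $\lfloor\tilde g(y_k)\rfloor_{M_2}$ between the two candidate floors — to be the main obstacle; the reduction and the count below are routine.

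\emph{Counting and conclusion.} If $\tilde g$ is constant the claim is trivial; otherwise $\tilde g'$ is a trigonometric polynomial of degree at most $D$ and hence has at most $2D$ zeros per period. The sample points $y_0,\dots,y_{\lfloor S\rfloor}$ form an arithmetic progression of step $s$ in $[0,1)$ (indeed $\lfloor S\rfloor s<Ss=1$), so on the circle $\mathbb{R}/\mathbb{Z}$ each zero of $\tilde g'$ lies strictly between two circularly-consecutive sample points and therefore in the danger interval $(y_k-s,y_k+s)$ of at most two indices $k$ (a little care is needed only for the single irregular ``wrap-around'' gap). Thus there are at most $4D$ bad indices, out of $\lfloor S\rfloor+1>S=M_1/w^\star=M/w^\star$ total, so the bad fraction is $<4D/(M/w^\star)=4Dw^\star/M\le 4DR_w/M$ (using $w^\star\le R_w$), giving the first claim. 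For the ``in particular'' statement, the $\tilde g$ of \Cref{eq:g-tilde} is a degree-$D$ trigonometric polynomial with $|\tilde g'(y)|\le\sum_{j=1}^D 2\pi j|\beta_j^\star|\le 2\pi D\norm{\beta^\star}_1=2\pi D$, so it is $\lambda$-Lipschitz with $\lambda=2\pi D$ and the hypothesis $c<1/(8\pi DR_w)=1/(4\lambda R_w)$ lets us invoke the first part; moreover $M\ge 70\pi dD^3R_w$ gives $4DR_w/M\le 2/(35\pi dD^2)\le 2/35$, so $h_{w^\star,M_1,M_2}$ is $(1-4DR_w/M)$-pseudoperiodic with $1-4DR_w/M\ge 33/35$.
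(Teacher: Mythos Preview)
Your proposal is correct and follows essentially the same approach as the paper: reduce pseudoperiodicity to monotonicity of $\tilde g$ on a window of radius $s=w^\star/M_1$ around $y_k$, use the Lipschitz bound together with $c<1/(4\lambda R_w)$ to sandwich the floor of $\tilde g(y_k)$ between those of the two shifted values, and then bound the number of bad $k$ by the $\le 2D$ zeros of $\tilde g'$ per period. Your floor--sandwich argument (three ordered reals in an interval of length $<1$ have at most two distinct integer parts) is a bit cleaner than the paper's explicit $h_+,h_-$ computation, and your counting (each zero hits at most two overlapping windows, with the acknowledged wrap-around caveat) is slightly more careful than the paper's, but the structure and all key ideas coincide.
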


\begin{proof}
We prove the first statement first, so we want to show that $h_{w^\star, M_1, M_2}$ is $(1-4DR_w/M)$-pseudoperiodic.
In other words, we want to show that $h_{w^\star, M_1, M_2}(k + \lfloor \ell M_1 / w^\star\rfloor)$ or $h_{w^\star, M_1, M_2}(k + \lceil \ell M_1 / w^\star \rceil)$ equals $h_{w^\star, M_1, M_2}(k)$ for at least a $(1-4DR_w/M)$-fraction of the inputs $k$, for all $\ell \in \mathbb{Z}$.
Fixing some $\ell \in \mathbb{Z}$, denote
\begin{equation}
  h_+(k) \triangleq h_{w^\star, M_1, M_2}\left(k + \left\lceil \frac{\ell M_1}{w^\star}\right\rceil\right),\quad h_-(k) \triangleq h_{w^\star, M_1, M_2}\left(k + \left\lfloor \frac{\ell M_1}{w^\star}\right\rfloor\right).
\end{equation}
Also fix some input $k$ such that $0 \leq k \leq \lfloor M / w^\star \rfloor$.
Suppose for now that $g_{w^\star}$ is monotonically increasing in the interval $( (k-1)M_1, (k+1)M_1)$.
We will show that in this case, either $h_+(k) = h_{w^\star, M_1, M_2}(k)$ or $h_-(k) = h_{w^\star, M_1, M_2}(k)$.
We have the following upper bound on $h_+(k)$:
\begin{align}
  h_+(k) &= h_{w^\star, M_1, M_2}\left(k + \left\lceil \frac{\ell M_1}{w^\star} \right\rceil\right)\\
  &= \left\lfloor g_{w^\star}\left(\frac{k + \left\lceil \frac{\ell M_1}{w^\star} \right\rceil}{M_1}\right) M_2\right\rfloor/ M_2\\
  &= \left\lfloor g_{w^\star}\left(\frac{k + \frac{\ell M_1}{w^\star} + \Delta}{M_1}\right) c M_1\right\rfloor/ (c M_1)\\
  &= \left\lfloor g_{w^\star}\left(\frac{k + \Delta}{M_1}\right) c M_1\right\rfloor/ (c M_1)\\
  &\leq \left\lfloor g_{w^\star}\left(\frac{k}{M_1}\right) c M_1 + \Delta \lambda w^\star c\right\rfloor/ (c M_1)\\
  &\leq \left\lfloor g_{w^\star}\left(\frac{k}{M_1}\right) c M_1 +\lambda w^\star c\right\rfloor/ (c M_1).
\end{align}
Here, the first line follows by the definition of $h_+(k)$.
The second line follows by the definition of $h_{w^\star, M_1, M_2}$.
In the third line, we define $\Delta$ such that $0 \leq \Delta < 1$ and use $M_2 = c M_1$.
In the fourth line, we use that $g_{w^\star}$ has period $1/w^\star$.
In the fifth line, because $\tilde{g}$ is $\lambda$-Lipschitz, then $g_{w^\star}$ is $(\lambda w^\star)$-Lipschitz.
Finally, in the last line, we use that $\Delta < 1$.
We can also lower bound $h_+(k)$ using the fourth line of the above calculation and our assumption that $g_{w^\star}$ is monotonically increasing.
\begin{equation}
  h_+(k) = \left\lfloor g_{w^\star}\left(\frac{k + \Delta}{M_1}\right) c M_1\right\rfloor/ (c M_1) \geq \left\lfloor g\left(\frac{k}{M_1}\right) c M_1 \right\rfloor/(c M_1) = h_{w^\star, M_1, M_2}(k).
\end{equation}
Similarly, one can show that
\begin{equation}
  \left\lfloor g_{w^\star}\left(\frac{k}{M_1}\right) c M_1 -\lambda w^\star c\right\rfloor/ (c M_1) \leq h_-(k) \leq h_{w^\star, M_1, M_2}(k).
\end{equation}
Using that $c < 1/(4\lambda R_w)$, then
\begin{equation}
  2\lambda w^\star c \leq 2\lambda R_w c \leq \frac{1}{2}.
\end{equation}
Then,
\begin{align}
  |h_+(k) - h_-(k)| &\leq \left| \frac{\lfloor g_{w^\star}(k/M_1)c M_1 + \lambda w^\star c \rfloor}{c M_1} -  \frac{\lfloor g_{w^\star}(k/M_1)c M_1 - \lambda w^\star c \rfloor}{c M_1}\right|\\
  &= \left| \frac{g_{w^\star}(k/M_1)c M_1 + \lambda w^\star c - \Delta_+ -  g_{w^\star}(k/M_1)c M_1 + \lambda w^\star c + \Delta_-}{c M_1}\right|\\
  &= \left|\frac{2\lambda w^\star c + \Delta_- + \Delta_+}{c M_1} \right|\\
  &\leq \frac{3}{2c M_1},
\end{align}
where in the last line we use that $0 \leq \Delta_-, \Delta_+ < 1$.
Because the outputs of $h_+(k)$ and $h_-(k)$ are discretized in steps of $1/(c M_1)$, this implies that
\begin{equation}
  |h_+(k) - h_-(k)| \leq \frac{1}{c M_1} = \frac{1}{M_2}.
\end{equation}
Moreover, by the above work, we know that $h_-(k) \leq h_{w^\star, M_1, M_2}(k) \leq h_+(k)$.
Thus, because all three functions have outputs discretized in units of $1/(c M_2)$, it follows that either $h_-(k) = h_{w^\star, M_1, M_2}(k)$ or $h_+(k) = h_{w^\star, M_1, M_2}(k)$.
A similar argument holds when $g_{w^\star}$ is instead assumed to be monotonically decreasing in the interval $( (k-1)/M_1, (k+1)/M_1)$.

Thus, we have shown that if $g_{w^\star}$ is monotone, then $h_{w^\star, M_1, M_2}$ satisfies the property required for pseudoperiodicity.
It suffices to show that $g_{w^\star}$ is monotone in regions $( (k -1)/M_1, (k+1)/M_1)$ for all except a $4DR_w/M$-fraction of the inputs $k$ within a single period $0 \leq k \leq \lfloor M_1/w^\star \rfloor$.
Note that these intervals are just neighborhoods of size $2/M_1$ centered around some $k/M_1$ with $0 \leq k \leq \lfloor M_1/w^\star \rfloor$.
Thus, we can instead consider neighborhoods of size $2/M_1$ around points $k$ with $0 \leq k \leq 1/w^\star$, i.e., within a single period of $g_{w^\star}$.
Note that $g_{w^\star}$ will be monotone in the interval unless it contains a critical point.
Thus, it remains to consider neighborhoods of the critical points of $g_{w^\star}$.

By assumption, $\tilde{g}$ is a trigonometric polynomial of degree at most $D$.
Moreover, it is known that trigonometric polynomials with degree at most $D$ have at most $2D$ zeroes in a single period (see, e.g., Chapter 13 of~\cite{powell1981approximation}).
The derivative of a trigonometric polynomial with degree at most $D$ is clearly still a trigonometric polynomial of degree at most $D$.
Thus, $\tilde{g}$ must have at most $2D$ critical points in a single period.
The same holds for $g_{w^\star}$ since $w^\star \neq 0$.

Recall that the period of $g_{w^\star}$ is $1/w^\star$, so there are at most $1/w^\star$ integer values to consider within one period of $g_{w^\star}$.
Thus, there are at most $M_1/(2w^\star)$ intervals of size $2/M_1$ around these $1/w^\star$ values.
Now, since there are at most $2D$ critical points, at most $2D$ of these intervals contain a critical point.
Hence, the proportion of intervals (and hence inputs) for which $g_{w^\star}$ will not be monotone is at most
\begin{equation}
  \frac{2D}{M_1/(2w^\star)} = \frac{4Dw^\star}{M_1} \leq \frac{4D R_w}{M_1}.
\end{equation}
Outside of this proportion, we have already shown that $h_{w^\star, M_1, M_2}$ is pseudoperiodic.
Thus, we can conclude that $h_{w^\star, M_1, M_2}$ is $(1-4DR_w/M_1)$-pseudoperiodic.

In the specific case where $\tilde{g}$ is given by \Cref{eq:g-tilde}, $\tilde{g}$ clearly has bounded variation on every finite interval and is a trigonometric polynomial of degree at least $D$.
Moreover, it is $\lambda$-Lipschitz with $\lambda = 2\pi D$:
\begin{equation}
  \tilde{g}'(y) = -\sum_{j=1}^D \beta_j^\star \sin(2\pi j y) \cdot 2\pi j
\end{equation}
\begin{equation}
  |\tilde{g}'(y)| \leq 2\pi \left|\sum_{j=1} j \beta_j^\star \sin(2\pi j y)\right| \leq 2\pi D \sum_{j=1}^D |\beta_j^\star| = 2\pi D,
\end{equation}
where we used that $j \leq D$ and $\norm{\beta^\star}_1 = 1$.
Thus, we can apply the result we just proved for the case of $\lambda = 2\pi D$.
Consider $M \geq 70 \pi d D^3 R_w$, and take $M_1 = M$ and $M_2 = c M$ for $c < 1/(8\pi D R_w)$.
Then,
\begin{equation}
  \frac{4DR_w}{M_1} \leq \frac{4 DR_w}{70 D R_w} = \frac{2}{35}.
\end{equation}
Hence, $h_{w^\star, M_1, M_2}$ is $(33/35)$-pseudoperiodic.
\end{proof}

Thus, we see that this discretization still contains information about the period of the original function.
We also truncate the domain of the function as well with truncation parameter $R$.
Then, for $d =1$, we require QSQ access to
\begin{equation}
    \ket{h_{w^\star}} = \frac{1}{\sqrt{2R}}\sum_{x=-R}^{+R-1}\ket{x}\ket{h_{w^\star, M_1, M_2}(x)}.
\end{equation}
With this truncation, $h_{w^\star, M_1, M_2}$ is still $(33/35)$-pseudoperiodic with period $M_1/w^\star$.
Moreover, \Cref{lem:discrete-general-unif} also implies that for $d \geq 1$, $h_{w^\star, M_1, M_2}$ is $(33/35)$-pseudoperiodic in each coordinate with period $M_1/w^\star_j$.
We discuss this in more detail later.
However, note that because $1/w^\star_j$ is not necessarily an integer, the period $M_1/w^\star_j$ may also not be an integer.
Hence, the standard period finding algorithm~\cite{shor1994algorithms} does not apply.
Instead, we want use an irrational period finding algorithm~\cite{hallgren2007polynomial}, which works even if $M_1/w^\star_j$ is irrational.
We review Hallgren's algorithm in Appendix~\ref{sec:hallgren}.

\Cref{lem:discrete-general-unif} guarantees that $h_{w^\star, M_1, M_2}$ is $\eta$-pseudoperiodic with $\eta = 33/35$.
Moreover, note that \Cref{thm:hallgren} requires an upper bound on the period, which we have because $w_j \geq R_w/d^2$ by the definition of $\mathcal{S}_w$.
The final condition of \Cref{thm:hallgren} that we need is this verification procedure to check if a given $T$ is close to an integer multiple of the true period.
We design such a verification procedure in \Cref{alg:verification} and analyze it in \Cref{thm:verification}.
Note that in \Cref{alg:verification}, we must restrict the noise tolerance of our QSQs to be inverse polynomial in some of our parameters.
Classically, the hardness results have access to gradients that are exponentially accurate, so requiring the tolerance parameter to scale inverse polynomially is not particularly strong.

\begin{algorithm}
   \caption{Verification Procedure; Uniform Case} 
   \label{alg:verification}
   \begin{algorithmic}[1]
   \State Choose parameters $M_1 = \max(70\pi d D^3 R_w, R_w^2/\epsilon_1)$, $M_2 = c M_1$ for some $c$ such that $M_2 \in \mathbb{Z}$ and $c < 1/(8\pi D R_w)$, and $\tilde{R} = \tilde{\Omega}\left(\max\left(\frac{D^2}{\epsilon}, \frac{D^2\sqrt{d}}{R_w \epsilon}, \frac{D^{5/2}}{\sqrt{\epsilon}}, \frac{D^{3/2}\sqrt{d}}{R_w \sqrt{\epsilon}}\right)\right)$.
   \State For $m \in \{1,\dots, D\}$, query the QSQ oracle with observable $O_{k,m}$ (defined in \Cref{eq:ov}), discretization parameters $M_{1,m} \triangleq m M_1$, $M_{2,m} \triangleq mM_2$, truncation parameter $R \triangleq \tilde{R}M_{1,m}$, and tolerance $\tau \leq \min\left(\frac{1}{M_2^2}\left(\frac{7}{40D} - \frac{1}{M_2}\right), \frac{1}{2D^2M_2^2}\left(\frac{2}{15} - \frac{1}{8}\left(\frac{2\pi R_w}{M_1}\right)^2 + \frac{2D^2}{M_2}\right)\right)$ to obtain values $\alpha_m$.
   \State Check if $\alpha_1 \geq \frac{1}{M_2^2}\left(\frac{21}{40D} - \frac{3}{M_2}\right)$.
   \State Check if $\sum_{m=1}^D \alpha_m \leq \frac{1}{M_2^2}\left(\frac{20}{39}D + \frac{1}{2D}\left(\frac{2}{15} - \frac{1}{8}\left(\frac{2\pi R_w}{M_1}\right)^2 +\frac{2D^2}{M_2}\right)\right)$.
   \State \Return ``yes'' iff both conditions in Steps 3 and 4 are satisfied.
   \end{algorithmic}
\end{algorithm}

\begin{theorem}[Verification Procedure; Uniform Case]
    \label{thm:verification}
    Let $\varphi^2$ be the uniform distribution.
    Let $1 > \epsilon_1 > 0$.
    Let $w^\star \in \mathbb{R}^d$ be unknown with norm $R_w > 0$ and $w_j^\star \geq R_w/d^2$ for all $j \in [d]$.
    Let $g_{w^\star} : \mathbb{R}^d \to [-1,1]$ be defined as $g_{w^\star}(x) = \tilde{g}(x^\intercal w^\star)$ for $\tilde{g}$ given in \Cref{eq:g-tilde}.
    Consider parameters $M_1 = \max(70\pi d D^3 R_w, R_w^2/\epsilon_1)$, $M_2 = c M_1$ for some constant $c$ such that $c < 1/(8\pi D R_w)$ and $M_2 \in \mathbb{Z}$, and 
    \begin{equation}
        \tilde{R} = \tilde{\Omega}\left(\max\left(\frac{D^2}{\epsilon}, \frac{D^2\sqrt{d}}{R_w \epsilon}, \frac{D^{5/2}}{\sqrt{\epsilon}}, \frac{D^{3/2}\sqrt{d}}{R_w \sqrt{\epsilon}}\right)\right).
    \end{equation}
    Suppose we have QSQ access (see \Cref{def:qsq}) with respect to discretization parameters $M_{1,m} \triangleq m M_1$, $M_{2,m} \triangleq mM_2$ and truncation parameter $R \triangleq \tilde{R}M_{1,m}$ for $m \in \{1,\dots, D\}$.
    Then, given an integer $T$ and $k \in [d]$, \Cref{alg:verification} can check whether or not $|T - \frac{\ell M_1}{w_k^\star}| \leq 1$ for some integer $\ell$ using $D$ QSQs with tolerance $\tau \leq \min\left(\frac{1}{M_2^2}\left(\frac{7}{40D} - \frac{1}{M_2}\right), \frac{1}{2D^2M_2^2}\left(\frac{2}{15} - \frac{1}{8}\left(\frac{2\pi R_w}{M_1}\right)^2 + \frac{2D^2}{M_2}\right)\right)$.
\end{theorem}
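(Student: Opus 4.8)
The plan is to turn the verification question into a measurement of the lag-$T$ autocorrelation of the discretized target along coordinate $k$. Observe first that the condition to be checked, ``$|T - \ell M_1/w_k^\star| \le 1$ for some $\ell \in \mathbb{Z}$'', is equivalent to $\mathrm{dist}(T w_k^\star/M_1,\mathbb{Z}) \le w_k^\star/M_1$, and since $w_k^\star \le R_w$ and $M_1 \ge 70\pi d D^3 R_w$ the right-hand side is at most $1/(70\pi d D^3)$; so the task is really to decide whether the single real number $\xi \triangleq T w_k^\star/M_1$ is very close to an integer. This is exactly the ``efficient check'' hypothesis of \Cref{thm:hallgren} for the pseudoperiodic function $h_{w^\star,M_1,M_2}$ (period $M_1/w_k^\star$), so establishing it completes the instantiation of Hallgren's algorithm. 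I would realize it with the observable $O_{k,m}$ of \Cref{eq:ov}: a cyclic shift-by-$T$ unitary on the register carrying the $k$-th input coordinate, composed with a bounded readout operator on the register carrying the (discretized) function value, tensored with the identity elsewhere and normalized so that $\norm{O_{k,m}} \le 1$. Unwinding the definition of the example state $\ket{h^*_{M}}$ in \Cref{def:qsq}, its expectation in $\ket{h_{w^\star,M_{1,m},M_{2,m}}^*}$ (with truncation $R = \tilde R M_{1,m}$) is a normalized lag-$T$ autocorrelation of $h_{w^\star,M_{1,m},M_{2,m}}$ in the $k$-th coordinate averaged over the others; hence each $\alpha_m$ is within the QSQ tolerance $\tau$ of this quantity, and \Cref{alg:verification} uses exactly $D$ queries (one per scale $m$), matching the claimed count.

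The second step is to compute this autocorrelation in closed form up to controlled error. I would replace $h_{w^\star,M_{1,m},M_{2,m}}$ by the exact trigonometric polynomial $g_{w^\star}$ from \Cref{eq:g-tilde}: the outer rounding $\lfloor\cdot\rfloor_{M_{2,m}}$ contributes a pointwise perturbation of size $O(1/M_{2,m})$, and by \Cref{lem:discrete-general-unif} the discretization acts like a genuine shift of $g_{w^\star}$ on at least a $33/35$-fraction of inputs (the rest clustered near the at most $2D$ critical points of $\tilde g$), both of which I carry through as additive error terms. Expanding $g_{w^\star} = \sum_{j=1}^D \beta_j^\star \cos(2\pi j\, x^\intercal w^\star)$ and invoking near-orthogonality of the $D$ distinct harmonics over the truncated box of side $2R$ --- the cross terms $j \ne j'$ are oscillatory and average to $O(1/R)$, which is precisely why the truncation parameter must satisfy $R \ge \tilde R$ with $\tilde R$ polynomial in $D, d, R_w, 1/\epsilon$ --- collapses the autocorrelation to a diagonal trigonometric sum of the form $\sum_{j=1}^D \beta_j^{\star 2}\cos(2\pi j\,\theta_m)$ with $\theta_m$ a known multiple of $\xi$ fixed by the scale $m$. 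The point of running $D$ scales is that these $D$ numbers jointly probe the alignment of $T$ against all $D$ relevant sub-periods $M_1/(j w_k^\star)$, not only the fundamental $M_1/w_k^\star$.

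The third step is the dichotomy. In the ``yes'' case $\mathrm{dist}(\xi,\mathbb{Z}) \le R_w/M_1 \le 1/(70\pi d D^3)$, so every phase $2\pi j\theta_m$ is within $O(D R_w/M_1)$ of a multiple of $2\pi$ and every cosine above is $1 - O((D R_w/M_1)^2)$; using $\norm{\beta^\star}_1 = 1$ and the resulting range $\norm{\beta^\star}_2^2 \in [1/D,1]$, one finds that $\alpha_1$ exceeds the Step-3 lower threshold while $\sum_{m=1}^D \alpha_m$ stays below the Step-4 upper threshold, so \Cref{alg:verification} returns ``yes''. In the ``no'' case $\xi$ is bounded away from every integer, and I would split on whether $\xi$ is nonetheless within $R_w/M_1$ of a rational $p/q$ in lowest terms with $2 \le q \le D$ (spurious partial alignment with a higher sub-period) or far from all such rationals. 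In the first sub-case enough higher-harmonic cosines remain near $1$ that $\sum_m \alpha_m$ is pushed past the Step-4 threshold; in the second even the fundamental cosine is bounded away from $1$, so the diagonal sum entering $\alpha_1$ is bounded away from its maximum and $\alpha_1$ falls below the Step-3 threshold. Either way at least one check fails and the procedure correctly returns ``no''.

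The main obstacle is the last step's bookkeeping: one must choose the two thresholds and the noise tolerance $\tau$ so that the ``yes''-case and ``no''-case ranges of $\alpha_1$ and of $\sum_m \alpha_m$ are strictly separated after simultaneously absorbing all four error sources --- the QSQ tolerance $\tau$, the $O(1/M_{2,m})$ outer-rounding perturbation, the $\le 2/35$ non-pseudoperiodic fraction from \Cref{lem:discrete-general-unif}, and the $O(1/R)$ truncation-cum-orthogonality error --- uniformly over all admissible $\beta^\star$ (so the worst case $\norm{\beta^\star}_2^2 = 1/D$ is the binding one) and all $w_k^\star \in [R_w/d^2, R_w]$. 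This is what forces $\tau$ to be inverse-polynomially small, of order $1/(M_2^2 D)$ as in the statement, and what produces the explicit constants $7/40$, $21/40$, $20/39$, $2/15$ in the thresholds; carrying out this accounting, along with a careful proof of the near-orthogonality estimate over the truncated box, is the bulk of the technical work.
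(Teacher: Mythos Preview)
Your setup matches the paper exactly: the shift-plus-readout observable giving a lag-$T$ autocorrelation (this is Claim~\ref{claim:inner-prod}), replacing the discretized function by the trigonometric polynomial with an $O(1/M_{2,m})$ rounding error (Claim~\ref{claim:eps-d}), passing from the sum to an integral (Claim~\ref{claim:sum-to-int}), and killing the $j\ne j'$ cross terms with the bounds of \Cref{sec:int-bounds}. So far so good.

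The gap is in your ``no''-case dichotomy, where you misread the role of the $D$ scales. You interpret them as probing the $D$ sub-periods $M_1/(jw_k^\star)$ and claim that a near-rational spurious alignment pushes $\sum_m\alpha_m$ \emph{above} the Step-4 threshold. This cannot happen: since $M_{2,m}=mM_2$, each $\alpha_m$ carries weight $1/(m^2M_2^2)$, so $\sum_{m=1}^D\alpha_m \lesssim \tfrac{1}{2M_2^2}\sum_m m^{-2}<\tfrac{\pi^2}{12M_2^2}$, which is always far below threshold4 $\approx\tfrac{20D}{39M_2^2}$. Step~4 therefore never fails, and your first sub-case has no rejection mechanism. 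Your second sub-case (``$\xi$ far from every $p/q$ with $q\le D$ $\Rightarrow$ $\alpha_1$ small'') also does not close: the ``far'' margin is only $O(R_w/M_1)=O(1/(dD^3))$, so each $\cos(2\pi j\xi)$ can still be $1-O((dD^3)^{-2})$, and with $\|\beta^\star\|_2^2$ near $1$ one gets $\alpha_1\approx\tfrac{1}{2M_2^2}$, well above threshold3 $\approx\tfrac{21}{40DM_2^2}$ once $D\ge 2$.

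The paper's actual mechanism for the $D$ scales is different and is what you are missing. At scale $m$, the $j=m$ harmonic has phase $\tfrac{2\pi m T w_k^\star}{mM_1}=2\pi\xi$, \emph{the same fundamental $\xi$ for every $m$}. Bounding all $j\ne m$ harmonics crudely by $\cos\le 1$ gives $\langle O_{k,m}\rangle\lesssim\tfrac{1}{M_{2,m}^2}\big[\tfrac12\big(\|\beta^\star\|_2^2-(\beta_m^\star)^2(1-\cos 2\pi\xi)\big)\big]$; summing over $m$ with $1/M_{2,m}^2\le 1/M_2^2$ then isolates the correction $\|\beta^\star\|_2^2(1-\cos 2\pi\xi)\ge\tfrac1D(1-\cos 2\pi\xi)$, which is bounded away from zero precisely in the ``no'' case. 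This diagonal $j=m$ trick---not any sub-period probing---is the content of Claim~\ref{claim:ver-only-if}, and it is what drives the Step-4 bound.
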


\begin{proof}
Explicitly, the example state for our QSQ access is
\begin{equation}
  \label{eq:example-detail}
  \ket{h_{w^\star, M_{1,m}, M_{2,m}}} = \frac{1}{\sqrt{(2\tilde{R}M_{1,m})^d}} \sum_{x_1,\dots, x_d = -\tilde{R}M_{1,m}}^{\tilde{R}M_{1,m} - 1} \ket{x}\ket{h_{w^\star, M_{1,m}, M_{2,m}}(x)},
\end{equation}
where $h_{w^\star, M_{1,m}, M_{2,m}}$ is a discretization of $g_{w^\star}$ from \Cref{lem:discrete-general-unif}.
We query $D$ QSQs, each with the different parameters indexed by $m$ as specified previously.

The main idea behind our verification procedure is to compute the inner product between $h_{w^\star, M_{1,m}, M_{2,m}}$ and this function with its input shifted by the guess $T$ for the period.
This inner product should be large for a good guess.
The technical work behind this theorem goes into defining an observable to approximate this inner product and finding a suitable threshold for the inner product to surpass such that $T$ is close to the true period.

Consider defining the observable
\begin{equation}
  \label{eq:Am}
  A_m \triangleq I \otimes 2\ketbra{-} \otimes \left(\frac{1}{M_{2,m}^2} \sum_{i,j=0}^{M_{2,m} - 1} ij \ketbra{i}{j}\right),
\end{equation}
where the identity is on the first $\log(\tilde{R}M_{1,m}) + d$ qubits (the extra $d$ qubits are to represent the sign of each entry of $x_1,\dots, x_d$).
Also define an operator $S_{k,a}$ that cyclically shifts the $k$th entry of the input register by $a$.
In particular, this acts as
\begin{equation}
  S_{k,a} : \ket{x}\ket{h_{w^\star, M_{1,m}, M_{2,m}}(x)} \mapsto \ket{x + ae_k}\ket{h_{w^\star, M_{1,m}, M_{2,m}}(x)},
\end{equation}
where we use $e_k$ to denote the unit vector with a one in the $k$th coordinate and zeros elsewhere.
Then, we query the following observable as our QSQ to verify the period of the $k$th coordinate:
\begin{equation}
  \label{eq:ov}
  O_{k,m} \triangleq A_m S_{k, -T}.
\end{equation}
First, we claim that this observable does indeed reflect our idea about computing the inner product between $h_{w^\star, M_{1,m}, M_{2,m}}$ and this function with its input shifted by $T$.

\begin{claim}[Approximating inner product]
\label{claim:inner-prod}
For $m \in \{1,\dots, D\}$, consider parameters $M_{1,m}, M_{2,m}$ as defined above.
Also consider a parameter $\tilde{R}$ and an observable $O_{k,m}$ as defined above.
Then, the expectation value of $O_m$ with respect to the example state in \Cref{eq:example-detail} is given by
\begin{align}
  &\expval{O_{k,m}}{h_{w^\star, M_{1,m}, M_{2,m}}}\\
  &= \frac{1}{(2\tilde{R}M_{1,m})^d M_{2,m}^2} \sum_{x_1,\dots, x_d = -\tilde{R}M_{1,m}}^{\tilde{R}M_{1,m}-1} h_{w^\star, M_{1,m}, M_{2,m}}(x) h_{w^\star, M_{1,m}, M_{2,m}}(x + Te_k),
\end{align}
where $e_k$ denotes the unit vector with a single one in the $k$th coordinate.
\end{claim}

\begin{proof}[Proof of \Cref{claim:inner-prod}]
This follows by a simple calculation.
\begin{align}
  &\expval{O_{k,m}}{h_{w^\star, M_{1,m}, M_{2,m}}}\\
  &=\expval{A_m S_{k,-T}}{h_{w^\star, M_{1,m}, M_{2,m}}}\\
  &= \frac{1}{(2\tilde{R} M_{1,m})^d} \left(\sum_{x_1,\dots, x_d = -\tilde{R}M_{1,m}}^{\tilde{R}M_{1,m}-1} \bra{x}\bra{h_{w^\star, M_{1,m}, M_{2,m}}(x)}\right) A_m \left(\sum_{x_1',\dots, x_d' = -\tilde{R}M_{1,m}}^{\tilde{R}M_{1,m}-1} \ket{x' - Te_k}\ket{h_{w^\star, M_{1,m}, M_{2,m}}(x')}\right)\\
  &= \frac{1}{(2\tilde{R}M_{1,m})^d} \sum_{\substack{x_1,\dots, x_d = -\tilde{R}M_{1,m} \\ x_1',\dots, x_d' = -\tilde{R}M_{1,m}}}^{\tilde{R}M_{1,m}-1}\bra{x}\bra{h_{w^\star, M_{1,m}, M_{2,m}}(x)} A_m \ket{x'} \ket{h_{w^\star, M_{1,m}, M_{2,m}}(x' + Te_k)}\\
  &= \frac{1}{(2\tilde{R}M_{1,m})^d} \sum_{x_1,\dots, x_d =-\tilde{R}M_{1,m}}^{\tilde{R}M_{1,m} - 1} \bra{h_{w^\star, M_{1,m}, M_{2,m}}(x)}\left(2 \ketbra{-} \otimes \frac{1}{M_{2,m}^2} \sum_{i,j=0}^{M_{2,m}-1} ij\ketbra{i}{j}\right)\ket{h_{w^\star, M_{1,m}, M_{2,m}}(x + Te_k)}\\
  &= \frac{1}{(2\tilde{R}M_{1,m})^d} \sum_{x_1,\dots, x_d=-\tilde{R}M_{1,m}}^{\tilde{R}M_{1,m}-1} h_{w^\star, M_{1,m}, M_{2,m}}(x) h_{w^\star, M_{1,m}, M_{2,m}}(x + Te_k).
\end{align}
In the second line, we use the definition of $O_{k,m}$.
In the third line, we use the definition of $S_{k,-T}$.
In the fourth line, we relabel the $x'$ indices in the summation $x' \mapsto x' - Te_k$.
This still results in summing over the same values because $S_{k,-T}$ is defined to be a cyclical shift.
In the fifth line, we use the definition of $A_m$ and collapse the second summation by evaluating $\braket{x}{x'}$.
In the last line, we use the following calculation.
For any two computational basis states $\ket{a}, \ket{b}$, where $a,b \in \{0,\dots, M_{2,m} - 1\}$, it is clear that
\begin{equation}
  \bra{a}\left(\frac{1}{M_{2,m}^2} \sum_{i,j=0}^{M_{2,m} - 1} ij\ketbra{i}{j}\right)\ket{b} = \frac{1}{M_{2,m}^2} ab.
\end{equation}
Similarly, if $\ket{a}, \ket{b}$ are instead representations of numbers in $[-1,1]$ using $\log(M_{2,m}) + 1$ bits, where the first qubit encodes the sign, then 
\begin{equation}
  \bra{a}\left(2\ketbra{-} \otimes \frac{1}{M_{2,m}^2} \sum_{i,j=0}^{M_{2,m} - 1} ij\ketbra{i}{j}\right)\ket{b} = \frac{1}{M_{2,m}^2} ab.
\end{equation}
If the sign qubits are the same for both $\ket{a}$ and $\ket{b}$, then the $2\ketbra{-}$ term does not affect the overall sign.
However, if the sign qubits are different, then the $2\ketbra{-}$ term gives an extra minus sign, as required.
Thus, we have proven the claim.
\end{proof}

Now, we want to show that the conditions checked in Steps 3 and 4 in \Cref{alg:verification} are satisfied if and only if $|T - \ell M_1/w_k^\star| \leq 1$.
To do so, we first simplify our approximate inner product from \Cref{claim:inner-prod} further using the particular form of $h_{w^\star, M_{1,m}, M_{2,m}}$ from \Cref{lem:discrete-general-unif} and $\tilde{g}$ from \Cref{eq:g-tilde}.
\begin{align}
  &\expval{O_{k,m}}{h_{w^\star, M_{1,m}, M_{2,m}}}\\
  &= \frac{1}{(2\tilde{R}M_{1,m})^d M_{2,m}^2} \sum_{x_1,\dots, x_d=-\tilde{R}M_{1,m}}^{\tilde{R}M_{1,m}-1} h_{w^\star, M_{1,m}, M_{2,m}}(x) h_{w^\star, M_{1,m}, M_{2,m}}(x + Te_k)\\
  &= \frac{1}{(2\tilde{R}M_{1,m})^d M_{2,m}^2} \sum_{x_1,\dots, x_d=-\tilde{R}M_{1,m}}^{\tilde{R}M_{1,m}-1} \sum_{j,j'=1}^D \beta_j^\star \beta_{j'}^\star \left\lfloor \cos\left(\frac{2\pi j x^\intercal w^\star}{M_{1,m}}\right)\right\rfloor_{M_{2,m}} \left\lfloor \cos\left(\frac{2\pi j' (x + Te_k)^\intercal w^\star}{M_{1,m}}\right)\right\rfloor_{M_{2,m}}\\
  &= \frac{1}{(2\tilde{R}M_{1,m})^d M_{2,m}^2} \sum_{x_1,\dots, x_d=-\tilde{R}M_{1,m}}^{\tilde{R}M_{1,m}-1} \sum_{j,j'=1}^D \beta_j^\star \beta_{j'}^\star \cos\left(\frac{2\pi j x^\intercal w^\star}{M_{1,m}}\right) \cos\left(\frac{2\pi j' x^\intercal w^\star}{M_{1,m}} + \frac{2\pi j' T w_k^\star}{M_{1,m}}\right) + \epsilon_d\\
  &\begin{aligned}
    =\frac{1}{(2\tilde{R}M_{1,m})^d M_{2,m}^2} \sum_{x_1,\dots, x_d=-\tilde{R}M_{1,m}}^{\tilde{R}M_{1,m}-1} \sum_{j,j'=1}^D \beta_j^\star \beta_{j'}^\star \cos\left(\frac{2\pi j x^\intercal w^\star}{M_{1,m}}\right)&\left(\cos\left(\frac{2\pi j' x^\intercal w^\star}{M_{1,m}}\right) \cos\left(\frac{2\pi j' T w_k^\star}{M_{1,m}}\right)\right.\\
    &\left.- \sin\left(\frac{2\pi j' x^\intercal w^\star}{M_{1,m}}\right)\sin\left(\frac{2\pi j' Tw_k^\star}{M_{1,m}}\right)\right) + \epsilon_d
  \end{aligned}\\
  &\begin{aligned}
    =\frac{1}{(2\tilde{R}M_{1,m})^d M_{2,m}^2} \sum_{x_1,\dots, x_d=-\tilde{R}M_{1,m}}^{\tilde{R}M_{1,m}-1} \sum_{j=1}^D (\beta_j^\star)^2 &\left(\cos^2\left(\frac{2\pi j x^\intercal w^\star}{M_{1,m}}\right)\cos\left(\frac{2\pi j T w_k^\star}{M_{1,m}}\right)\right.\\
    &\left.- \cos\left(\frac{2\pi j x^\intercal w^\star}{M_{1,m}}\right)\sin\left(\frac{2\pi j x^\intercal w^\star}{M_{1,m}}\right)\sin\left(\frac{2\pi j T w_k^\star}{M_{1,m}}\right) \right)
    \end{aligned}\label{eq:h-o-expval}\\
  &\begin{aligned}
    +\frac{1}{(2\tilde{R}M_{1,m})^d M_{2,m}^2} \sum_{x_1,\dots, x_d=-\tilde{R}M_{1,m}}^{\tilde{R}M_{1,m}-1} \sum_{\substack{j,j'=1\\j\neq j'}}^D \beta_j^\star \beta_{j'}^\star&\left(\cos\left(\frac{2\pi j x^\intercal w^\star}{M_{1,m}}\right)\cos\left(\frac{2\pi j' x^\intercal w^\star}{M_{1,m}}\right)\cos\left(\frac{2\pi j' Tw_k^\star}{M_{1,m}}\right)\right.\\
    &\left.- \cos\left(\frac{2\pi j x^\intercal w^\star}{M_{1,m}}\right)\sin\left(\frac{2\pi j' x^\intercal w^\star}{M_{1,m}}\right)\sin\left(\frac{2\pi j' T w_k^\star}{M_{1,m}}\right)\right) + \epsilon_d
    \label{eq:h-o-expval2}
  \end{aligned}
\end{align}
In the second line, we use \Cref{claim:inner-prod}.
In the third line, we use the definition of $h_{w^\star, M_{1,m}, M_{2,m}}$ from \Cref{lem:discrete-general-unif} and \Cref{eq:g-tilde}.
Here, recall that $\lfloor \cdot \rfloor_{M_{2,m}}$ denotes rounding to the nearest integer multiple of $M_{2,m}$.
In the fourth line, we define a discretization error, denoted by $\epsilon_d$, which accounts for the error in getting rid of the rounding.
In the fifth line, we use the sum formula for cosine.
In the last equality, we split up the sum into the cases when $j = j'$ and $j\neq j'$.

We want to upper and lower bound this expression.
To do so, we find it easier to work with integrals over $x$ instead of these discrete sums.
We can then bound the integrals, which we relegate to Appendix~\ref{sec:int-bounds}.
To this end, we first need to bound the error from approximating our summation by an integral.

\begin{claim}[Sum-to-integral error]
\label{claim:sum-to-int}
For $m \in \{1,\dots, D\}$, consider parameters $M_{1,m}, M_{2,m}$ as defined above. Also consider a parameter $\tilde{R}$ defined above. Then, for an integer $1 \leq j \leq D$,
\begin{equation}
  \frac{1}{(2\tilde{R})^d}\left|\int_{[-\tilde{R},\tilde{R}]^d} \cos^2\left(2\pi j x^\intercal w^\star\right)\,dx - \frac{1}{M_{1,m}^d} \sum_{x_1,\dots, x_d = -\tilde{R}M_{1,m}}^{\tilde{R}M_{1,m} - 1} \cos^2\left(\frac{2\pi j x^\intercal w^\star}{M_{1,m}}\right) \right|\leq \frac{1}{35 D^2}.
\end{equation}
\end{claim}

\begin{proof}[Proof of \Cref{claim:sum-to-int}]
We prove this by induction on the dimension $d$.
In particular, denoting $f(x) \triangleq \cos^2(2\pi j x^\intercal w^\star)$, we will prove the following by induction:
\begin{equation}
  \label{eq:induct}
  \frac{1}{(2\tilde{R})^d} \left|\int_{[-\tilde{R},\tilde{R}]^d} f(x_1,\dots, x_d)\,dx - \frac{1}{M_{1,m}^d} \sum_{x_1,\dots, x_d=-\tilde{R}M_{1,m}}^{\tilde{R}M_{1,m}-1} f\left(\frac{x_1}{M_{1,m}},\dots, \frac{x_d}{M_{1,m}}\right) \right| \leq \frac{2\pi d DR_w}{M_{1,m}}.
\end{equation}
Note that this implies our claim by our choice of $M_{1,m} = mM_1 \geq 70m \pi d D^3 R_w \geq 70\pi d D^3 R_w$.
Thus, it suffices to prove \Cref{eq:induct}.
In fact, we will use induction to prove that
\begin{align}
  \label{eq:induct2}
  &\frac{1}{(2\tilde{R})^{d-1}} \left|\int_{[-\tilde{R},\tilde{R}]^{d-1}} f(x_1,\dots, x_{d-1},y)\,dx - \frac{1}{M_{1,m}^{d-1}} \sum_{x_1,\dots, x_{d-1}=-\tilde{R}M_{1,m}}^{\tilde{R}M_{1,m}-1} f\left(\frac{x_1}{M_{1,m}},\dots, \frac{x_{d-1}}{M_{1,m}},y\right) \right|\\
  &\leq \frac{2\pi (d-1) DR_w}{M_{1,m}}
\end{align}
for some fixed $y$.
In the process, we show that \Cref{eq:induct} follows from this.

First, consider the base case.
Then, we want to prove
\begin{equation}
  \label{eq:induct-base}
  \frac{1}{2\tilde{R}} \left|\int_{-\tilde{R}}^{+\tilde{R}} f(x)\,dx - \frac{1}{M_{1,m}} \sum_{x=-\tilde{R}M_{1,m}}^{\tilde{R}M_{1,m}-1} f\left(\frac{x}{M_{1,m}}\right) \right| \leq \frac{2\pi D R_w}{M_{1,m}}
\end{equation}
and 
\begin{equation}
    \label{eq:induct-base2}
    \frac{1}{2\tilde{R}}\left|\int_{-\tilde{R}}^{+\tilde{R}} f(x,y)\,dx - \frac{1}{M_{1,m}}\sum_{x=-\tilde{R}M_{1,m}}^{\tilde{R}M_{1,m}-1} f\left(\frac{x}{M_{1,m}},y\right)\right| \leq \frac{2\pi D R_w}{M_{1,m}},
\end{equation}
for some fixed $y$ and $f(x,y) \triangleq \cos^2(2\pi j(xw_1^\star + yw_2^\star))$.
First, consider \Cref{eq:induct-base}.
Notice that the sum in \Cref{eq:induct-base} is just the lefthand Riemann sum for the integral.
In particular, we approximate the integral by $2\tilde{R}M_{1,m}$ rectangles of width $2\tilde{R}/(2\tilde{R}M_{1,m}) = 1/M_{1,m}$.
Thus, we have
\begin{equation}
  \int_{-\tilde{R}}^{+\tilde{R}} f(x)\,dx \approx \frac{1}{M_{1,m}}\sum_{i=0}^{2\tilde{R}M_{1,m}-1} f\left(-\tilde{R} + \frac{i}{M_{1,m}}\right) = \frac{1}{M_{1,m}}\sum_{x=-\tilde{R}M_{1,m}}^{\tilde{R}M_{1,m} - 1}f\left(\frac{x}{M_{1,m}}\right).
\end{equation}
Moreover, the error in this approximation can be bounded by standard results:
\begin{equation}
  \label{eq:sum-to-int-d1}
  \left|\int_{-\tilde{R}}^{+\tilde{R}} f(x)\,dx - \frac{1}{M_{1,m}}\sum_{x=-\tilde{R}M_{1,m}}^{\tilde{R}M_{1,m}-1} f\left(\frac{x}{M_{1,m}}\right)\right| \leq \frac{L\tilde{R}}{M_{1,m}},
\end{equation}
where $L \triangleq \max_{x \in [-\tilde{R},\tilde{R}]} |f'(x)|$.
For our choice of $f(x) = \cos^2(2\pi j x w^\star)$, then
\begin{equation}
  f'(x) = -4\pi j w^\star\cos(2\pi j x w^\star)\sin(2\pi j x w^\star)
\end{equation}
so that $|f'(x)| \leq 4\pi j R_w \leq 4\pi D R_w$.
Thus, $L \leq 4\pi D R_w$.
Dividing both sides by $2\tilde{R}$, we obtain the claim.
Note that \Cref{eq:induct-base2} also follows by the same argument as above for $\tilde{f}(x) \triangleq f(x,y)$ for a fixed $y$, where $f(x,y) = \cos^2(2\pi j (xw_1^\star + yw_2^\star))$.
Namely, the only part of the above argument that relies on properties of the function $f$ was a bound on the derivative.
For $\tilde{f}$, we have the same bound:
\begin{equation}
    \tilde{f}'(x) = -4\pi j w_1^\star \cos(2\pi j(xw_1^\star + yw_2^\star))\sin(2\pi j(xw_1^\star + yw_2^\star))
\end{equation}
so that $|\tilde{f}'(x)| \leq 4\pi D R_w$.

Now, for the inductive step, suppose for $\ell$ such that $d -1 \geq \ell \geq 1$ that
\begin{equation}
  \frac{1}{(2\tilde{R})^\ell}\left|\int_{[-\tilde{R},\tilde{R}]^\ell} f(x_1,\dots, x_\ell, y)\,dx - \frac{1}{M_{1,m}^\ell} \sum_{x_1,\dots, x_\ell = -\tilde{R}M_{1,m}}^{\tilde{R}M_{1,m}-1} f\left(\frac{x_1}{M_{1,m}},\dots, \frac{x_\ell}{M_{1,m}}, y\right)\right| \leq \frac{4\pi \ell D \tilde{R}_w}{M_{1,m}},
\end{equation}
for some fixed $y$ and where $f(x_1,\dots, x_\ell,y) = \cos^2(2\pi j (x_1 w_1^\star + \cdots + x_\ell w_\ell^\star + yw_{\ell+1}^\star))$.
We first show that \Cref{eq:induct} holds for $\ell + 1$.
\begin{align}
  &\frac{1}{(2\tilde{R})^{\ell+1}} \int_{[-\tilde{R},\tilde{R}]^{\ell+1}} f(x)\,dx\\
  &= \frac{1}{2\tilde{R}}\int_{-\tilde{R}}^{+\tilde{R}} \left(\frac{1}{(2\tilde{R})^\ell} \int_{[-\tilde{R},\tilde{R}]^\ell}f(x_1,\dots, x_{\ell+1})\,dx_1\cdots dx_\ell\right)\,dx_{\ell+1}\\
  &\leq \frac{1}{(2\tilde{R})^{\ell+1} M_{1,m}^\ell} \sum_{x_1,\dots, x_\ell=-\tilde{R}M_{1,m}}^{\tilde{R}M_{1,m} -1} \int_{-\tilde{R}}^{+\tilde{R}} f\left(\frac{x_1}{M_{1,m}},\dots, \frac{x_\ell}{M_{1,m}}, x_{\ell+1}\right)\,dx_{\ell+1} + \frac{1}{2\tilde{R}}\int_{-\tilde{R}}^{+\tilde{R}} \frac{2\pi \ell D R_w}{M_{1,m}}\,dx_{\ell+1}\\
  &\leq \frac{1}{(2\tilde{R})^{\ell+1} M_{1,m}^\ell} \sum_{x_1,\dots, x_\ell=-\tilde{R}M_{1,m}}^{\tilde{R}M_{1,m} -1} \left(\frac{1}{M_{1,m}} \sum_{x_{\ell+1}=-\tilde{R}M_{1,m}}^{\tilde{R}M_{1,m}} f\left(\frac{x_1}{M_{1,m}},\dots, \frac{x_{\ell+1}}{M_{1,m}}\right) + \frac{L'\tilde{R}}{M_{1,m}}\right) + \frac{2\pi \ell D R_w}{M_{1,m}}\\
  &=\frac{1}{(2\tilde{R})^{\ell+1} M_{1,m}^{\ell+1}} \sum_{x_1,\dots, x_{\ell+1}=-\tilde{R}M_{1,m}}^{\tilde{R}M_{1,m} -1} f\left(\frac{x_1}{M_{1,m}},\dots, \frac{x_{\ell+1}}{M_{1,m}}\right) + \frac{L'}{2M_{1,m}} + \frac{2\pi \ell D R_w}{M_{1,m}},
\end{align}
where in the third line, we use the inductive hypothesis.
In the fourth line, we apply \Cref{eq:sum-to-int-d1} for the function $\tilde{f}(y) \triangleq f(x_1/M_{1,m},\dots, x_\ell/M_{1,m}, y)$.
Also, here, $L' \triangleq \max_{y\in[-\tilde{R},\tilde{R}]} |\tilde{f}'(y)|$.
For $f(x_1,\dots, x_{\ell+1}) = \cos^2(2\pi j (x_1w_1^\star + \cdots x_{\ell+1} w_{\ell+1}^\star))$, then
\begin{equation}
  \tilde{f}'(y) = -4\pi j w_{\ell+1}^\star \cos\left(2\pi j \left(yw_{\ell+1}^\star + \sum_{i=1}^{\ell} \frac{x_i}{M_{1,m}}w^\star_i\right)\right) \sin\left(2\pi j\left(yw_{\ell+1}^\star + \sum_{i=1}^{\ell} \frac{x_i}{M_{1,m}} w^\star_i\right)\right).
\end{equation}
Thus, $|\tilde{f}'(y)| \leq 4\pi j R_w \leq 4 \pi D R_w$ so that $L' \leq 4\pi D R_w$.
Plugging this back into the above, we have
\begin{equation}
  \frac{1}{(2\tilde{R})^{\ell+1}} \int_{[-\tilde{R},\tilde{R}]^{\ell+1}} f(x)\,dx \leq \frac{1}{(2\tilde{R})^{\ell+1} M_{1,m}^{\ell+1}} \sum_{x_1,\dots, x_{\ell+1}=-\tilde{R}M_{1,m}}^{\tilde{R}M_{1,m} -1} f\left(\frac{x_1}{M_{1,m}},\dots, \frac{x_{\ell+1}}{M_{1,m}}\right) + \frac{4\pi (k+1) D R_w}{M_{1,m}}.
\end{equation}
One can argue similarly for the lower bound.
Thus, we have shown that \Cref{eq:induct} holds for $\ell + 1$.

Now, to complete our induction, we need to show that \Cref{eq:induct2} holds for $\ell + 1$.
Namely, we want to show
\begin{equation}
  \frac{1}{(2\tilde{R})^{\ell+1}}\left|\int_{[-\tilde{R},\tilde{R}]^{\ell+1}} f(x_1,\dots, x_{\ell+1}, z)\,dx - \frac{1}{M_{1,m}^{\ell+1}} \sum_{x_1,\dots, x_{\ell+1} = -\tilde{R}M_{1,m}}^{\tilde{R}M_{1,m}-1} f\left(\frac{x_1}{M_{1,m}},\dots, \frac{x_{\ell+1}}{M_{1,m}}, z\right)\right| \leq \frac{4\pi (\ell+1) D R_w}{M_{1,m}}
\end{equation}
for some fixed $z$ and where $f(x_1,\dots, x_{\ell+1}, z) = \cos^2(2\pi j (x_1w_1^\star + \cdots + x_{\ell+1}w_{\ell+1}^\star + zw_{\ell+2}^\star))$.
This follows by the same argument as above.
Note that the inductive hypothesis can still be applied by taking $y = x_{\ell+1} + z (w_{\ell+2}^\star/w_{\ell+1}^\star)$, which is fixed when integrating with respect to $x_1,\dots, x_\ell$.
Moreover, when applying \Cref{eq:sum-to-int-d1}, we instead consider the function $\tilde{f}(x_{\ell+1}) \triangleq f(x_1/M_{1,m}, \dots, x_\ell/M_{1,m},x_{\ell+1}, z)$.
The bound on the derivative of this function is clearly the same since $z$ is fixed.
Thus, the same argument as above applies, completing the induction.
\end{proof}

Note that the same result can be shown for the cross terms $\cos(2\pi j x^\intercal w^\star/M_{1,m})\cos(2\pi j' x^\intercal w^\star)$ and $\cos(2\pi j x^\intercal w^\star/M_{1,m}) \sin(2\pi j' x^\intercal w^\star/M_{1,m})$ by the same argument.
This is clear because these terms have the same bound on their gradients.

We can also bound the discretization error $\epsilon_d$.
Note that this discretization error is defined as
\begin{align}
  \label{eq:eps-d}
  \epsilon_d \triangleq\frac{1}{(2\tilde{R}M_{1,m})^d M_{2,m}^2} \sum_{x_1,\dots, x_d=-\tilde{R}M_{1,m}}^{\tilde{R}M_{1,m}-1} &\sum_{j,j'=1}^D \beta_j^\star \beta_{j'}^\star \left(\cos\left(\frac{2\pi j x^\intercal w^\star}{M_{1,m}}\right)\cos\left(\frac{2\pi j'(x + Te_k)^\intercal w^\star}{M_{1,m}}\right)\right.\\
  &- \left.\left\lfloor \cos\left(\frac{2\pi j x^\intercal w^\star}{M_{1,m}}\right) \right\rfloor_{M_{2,m}} \left\lfloor \cos\left(\frac{2\pi j'(x+Te_k)^\intercal w^\star}{M_{1,m}}\right) \right\rfloor_{M_{2,m}}\right).
\end{align}

\begin{claim}[Discretization error]
  \label{claim:eps-d}
  For $m \in \{1,\dots, D\}$, consider parameters $M_{1,m}, M_{2,m}$ as defined above.
  Also, consider a parameter $\tilde{R}$ defined above.
  Then, we can bound the discretization error $\epsilon_d$ defined in \Cref{eq:eps-d} as
  \begin{equation}
    |\epsilon_d| \leq \frac{2}{M_{2,m}^3}.
  \end{equation}
\end{claim}

\begin{proof}[Proof of \Cref{claim:eps-d}]
This follows by a simple calculation.
First, we can add and subtract an intermediate term in which $\cos(2\pi j x^\intercal w^\star/M_{1,m})$ is rounded while $\cos(2\pi j' (x+Te_k)^\intercal w^\star/M_{1,m})$.
\begin{align}
  |\epsilon_d| &\leq \frac{1}{(2\tilde{R}M_{1,m})^d M_{2,m}^2} \sum_{x_1,\dots, x_d=-\tilde{R}M_{1,m}}^{\tilde{R}M_{1,m} - 1} \sum_{j,j'=1}^D |\beta_j^\star| |\beta_{j'}^\star|\\
  &\cdot \left(\left| \cos\left(\frac{2\pi j x^\intercal w^\star}{M_{1,m}}\right)\cos\left(\frac{2\pi j'(x+Te_k)^\intercal w^\star}{M_{1,m}}\right) - \left\lfloor \cos\left(\frac{2\pi j x^\intercal w^\star}{M_{1,m}}\right) \right\rfloor_{M_{2,m}} \cos\left(\frac{2\pi j' (x+ Te_k)^\intercal w^\star}{M_{1,m}}\right) \right|\right.\\
  & \left.+ \left|\left\lfloor \cos\left(\frac{2\pi j x^\intercal w^\star}{M_{1,m}}\right)\right\rfloor_{M_{2,m}} \cos\left(\frac{2\pi j'(x+Te_k)^\intercal w^\star}{M_{1,m}}\right)\right.\right.\\
  &-  \left.\left.\left\lfloor \cos\left(\frac{2\pi j x^\intercal w^\star}{M_{1,m}}\right)\right\rfloor_{M_{2,m}} \left\lfloor \cos\left(\frac{2\pi j'(x+Te_k)^\intercal w^\star}{M_{1,m}}\right) \right\rfloor_{M_{2,m}}\right| \right).
\end{align}
Simplifying, we have
\begin{align}
  &|\epsilon_d|\\
  &\begin{aligned}
    \leq \frac{1}{(2\tilde{R}M_{1,m})^d M_{2,m}^2} \sum_{x_1,\dots, x_d=-\tilde{R}M_{1,m}}^{\tilde{R}M_{1,m} - 1} & \sum_{j,j'=1}^D |\beta_j^\star| |\beta_{j'}^\star|\left(\left|\cos\left(\frac{2\pi j x^\intercal w^\star}{M_{1,m}}\right) - \left\lfloor \cos\left(\frac{2\pi j x^\intercal w^\star}{M_{1,m}}\right) \right\rfloor_{M_{2,m}}\right|\right.\\
    &\left.+ \left|\cos\left(\frac{2\pi j' (x + Te_k)^\intercal w^\star}{M_{1,m}}\right) - \left\lfloor \cos\left(\frac{2\pi j' (x + Te_k)^\intercal w^\star}{M_{1,m}}\right) \right\rfloor_{M_{2,m}} \right| \right)
  \end{aligned}\\
  &\leq \frac{1}{(2\tilde{R}M_{1,m})^d M_{2,m}^2} \sum_{x_1,\dots, x_d=-\tilde{R}M_{1,m}}^{\tilde{R}M_{1,m} - 1} \sum_{j,j'=1}^D |\beta_j^\star| |\beta_{j'}^\star|\frac{2}{M_{2,m}}\\
  &= \frac{2}{M_{2,m}^3}.
\end{align}
In the first inequality, we use that $|\cos(x)| \leq 1$.
In the second inequality, we use that $\lfloor \cdot \rfloor_{M_{2,m}}$ means rounding to the nearest integer multiple of $M_{2,m}$.
Thus, the difference between a rounded and unrounded quantity must be at most $1/M_{2,m}$.
Finally, in the last line, we use that $\norm{\beta^\star}_1 = 1$.
\end{proof}

With \Cref{claim:sum-to-int} and \Cref{claim:eps-d}, in \Cref{eq:h-o-expval,eq:h-o-expval2}, we now have
\begin{align}
  &\expval{O_{k,m}}{h_{w^\star, M_{1,m}, M_{2,m}}}\\
  &= \frac{1}{M_{2,m}^2}\int_{x \sim \varphi^2} \sum_{j=1}^D (\beta_j^\star)^2 \left(\cos^2(2\pi j x^\intercal w^\star) \cos\left(\frac{2\pi j T w_k^\star}{M_1}\right) - \cos(2\pi j x^\intercal w^\star) \sin(2\pi j x^\intercal w^\star) \sin\left(\frac{2\pi j T w_k^\star}{M_{1,m}}\right)\right)\,dx\\
  &\begin{aligned}
    +\frac{1}{M_{2,m}^2} \int_{x \sim\varphi^2} \sum_{\substack{j,j'=1\\j\neq j'}}^D \beta_j^\star \beta_{j'}^\star &\left(\cos(2\pi j x^\intercal w^\star) \cos(2\pi j' x^\intercal w^\star) \cos\left(\frac{2\pi j' T w_k^\star}{M_{1,m}}\right)\right.\\
    &\left.- \cos(2\pi j x^\intercal w^\star)\sin(2\pi j' x^\intercal w^\star) \sin\left(\frac{2\pi j' T w_k^\star}{M_{1,m}}\right)\right)\,dx + \epsilon_d + \frac{4}{M_{2,m}^2}\epsilon_{\mathrm{int}},
  \end{aligned}
\end{align}
where $|\epsilon_d| \leq 2/M_{2,m}^3$ and $|\epsilon_{\mathrm{int}}| \leq 1/(35D^2)$.
Here, the integrals are with respect to the uniform distribution over $[-\tilde{R},\tilde{R}]^d$.
We can simplify this using the fact that an integral of an odd function, e.g., $\sin(x) \cos(x)$, over an even interval is zero:
\begin{equation}
  \label{eq:h-o-expval3}
\begin{split}
  &\expval{O_{k,m}}{h_{w^\star, M_{1,m}, M_{2,m}}}\\
  &= \frac{1}{M_{2,m}^2} \sum_{j=1}^D (\beta_j^\star)^2 \cos\left(\frac{2\pi j T w_k^\star}{M_{1,m}}\right) \int_{x\sim\varphi^2} \cos^2(2\pi j x^\intercal w^\star)\,dx\\
  &+ \frac{1}{M_{2,m}^2} \sum_{\substack{j,j'=1\\j\neq j'}}^D \beta_j^\star \beta_{j'}^\star \left(\cos\left(\frac{2\pi j' T w_k^\star}{M_{1,m}}\right) \int_{x \sim \varphi^2} \cos(2\pi j x^\intercal w^\star)\cos(2\pi j' x^\intercal w^\star)\,dx\right.\\
  &\left.- \sin\left(\frac{2\pi j' T w_k^\star}{M_{1,m}}\right) \int_{x \sim \varphi^2} \cos(2\pi j x^\intercal w^\star) \sin(2\pi j' x^\intercal w^\star)\,dx\right) + \epsilon_d + \frac{4}{M_{2,m}^2}\epsilon_{\mathrm{int}}.
\end{split}
\end{equation}

With this, we can finally move on to show that the conditions checked in Steps 3 and 4 of \Cref{alg:verification} are satisfied if and only if $|T - \ell M_1/w_k^\star| \leq 1$.
To do so, we leverage integral bounds from Appendix~\ref{sec:int-bounds}.
The following two claims show this for each direction of the if and only if.

\begin{claim}[Correctness of Step 3 in \Cref{alg:verification}]
  \label{claim:ver-if}
  Consider parameters $M_1, M_2, \tilde{R}$ defined above and the observable $O_{k,1}$ defined in \Cref{eq:ov}.
  Let $\alpha_1$ denote the result of querying the QSQ oracle with observable $O_{k,1}$ with discretization parameters $M_1, M_2$, truncation parameter $R\triangleq \tilde{R}M_1$, and tolerance $\tau \leq \frac{1}{M_2^2}\left(\frac{7}{40D} - \frac{1}{M_2}\right)$.
  If $|T - \ell M_1/w_k^\star| \leq 1$ for some integer $\ell$, then
  \begin{equation}
    \alpha_1 \geq \frac{1}{M_2^2}\left(\frac{21}{40D} - \frac{3}{M_2}\right).
  \end{equation}
\end{claim}

\begin{claim}[Correctness of Step 4 in \Cref{alg:verification}]
  \label{claim:ver-only-if}
  For $m \in \{1,\dots, D\}$, consider parameters $M_{1,m}, M_{2,m}, \tilde{R}$ defined above and the observables $O_{k,m}$ defined in \Cref{eq:ov}.
  Let $\alpha_m$ denote the result of querying the QSQ oracle with observable $O_{k,m}$ with discretization parameters $M_{1,m}, M_{2,m}$, truncation parameter $R \triangleq \tilde{R}M_{1,m}$, and tolerance $\tau \leq \frac{1}{2D^2M_2^2}\left(\frac{2}{15} - \frac{1}{8}\left(\frac{2\pi R_w}{M_1}\right)^2 + \frac{2D^2}{M_2}\right)$.
  If $|T - \ell M_1 / w_k^\star|$ is not less than $1$ for any integer $\ell$, then
  \begin{equation}
    \sum_{m=1}^D \alpha_m \leq \frac{1}{M_2^2}\left(\frac{20}{39}D + \frac{1}{2D}\left(\frac{2}{15} - \frac{1}{8}\left(\frac{2\pi R_w}{M_1}\right)^2 +\frac{2D^2}{M_2}\right)\right).
  \end{equation}
\end{claim}

It suffices to prove these two claims to finish the proof.
Our starting point for both proofs is \Cref{eq:h-o-expval3}.

\begin{proof}[Proof of \Cref{claim:ver-if}]
We can lower bound $\expval{O_{k,1}}{h_{w^\star, M_1, M_2}}$ using \Cref{coro:integral,coro:integral2-wstar,coro:integral2-wstar-sin} and \Cref{eq:h-o-expval3}:
\begin{align}
  \expval{O_{k,1}}{h_{w^\star, M_1, M_2}} &\geq \frac{1}{M_2^2}\sum_{j=1}^D (\beta_j^\star)^2 \left(\frac{1}{2} - \frac{\sqrt{d}}{8\pi R_w \tilde{R}}\right) \cos\left(\frac{2\pi j T w_k^\star}{M_1}\right)\\
  &- \frac{1}{M_2^2}\sum_{\substack{j,j'=1\\j\neq j'}}^D \beta_j^\star \beta_{j'}^\star \left(\frac{\sqrt{d}}{\pi R_w \tilde{R}}\right) + \epsilon_d + \frac{4}{M_{2}^2}\epsilon_{\mathrm{int}}\\
  &\geq \frac{1}{M_2^2}\left(\sum_{j=1}^D (\beta_j^\star)^2 \left(\frac{1}{2} - \frac{\sqrt{d}}{8\pi R_w \tilde{R}}\right) \cos\left(\frac{2\pi j T w_k^\star}{M_1}\right) - \frac{\sqrt{d}}{\pi R_w \tilde{R}} - \frac{2}{M_2} - \frac{4}{35D^2}\right)\\
  &\geq \frac{1}{M_2^2}\left(\frac{19}{39}\sum_{j=1}^D (\beta_j^\star)^2 \cos\left(\frac{2\pi j T w_k^\star}{M_1}\right) - \frac{1}{54D^2} - \frac{2}{M_2} - \frac{4}{35D^2}\right).\label{eq:h-o-expval-lower}
\end{align}
In the second to last line, we use that $\norm{\beta^\star}_2^2 \leq 1$ since $\norm{\beta^\star}_1 = 1$.
We also used that $|\epsilon_d| \leq 2/M_2^3$ by \Cref{claim:eps-d} and $|\epsilon_{\mathrm{int}}| \leq 1/(35 D^2)$ by \Cref{claim:sum-to-int}.
In the last line, we use that $\tilde{R} \geq \max(39\sqrt{d}/(4\pi R_w),\\54D^2\sqrt{d}/(\pi R_w))$ in our choice of $\tilde{R}$.

Here, the key is that the summation over these cosine terms is peaked aroud multiples of $M_1/w_k^\star$.
Thus, this sum should be bounded away from $0$ when the guess $T$ is close to an integer multiple of the period $M_1/w_k^\star$.
The rest of the terms in this expression are error terms.
Suppose that $T = \ell M_1 /w_k^\star + \epsilon$ for some $|\epsilon| \leq 1$.
Then, we have
\begin{align}
  \sum_{j=1}^D (\beta_j^\star)^2 \cos\left(\frac{2\pi j T w_k^\star}{M_1}\right) &= \sum_{j=1}^D (\beta_j^\star)^2 \cos\left(\frac{2\pi j w_k^\star}{M_1}\left(\frac{\ell M_1}{w_k^\star} + \epsilon\right)\right)\\
  &= \sum_{j=1}^D (\beta_j^\star)^2 \cos\left(\frac{2\pi j w^\star_k}{M_1}\epsilon\right)\\
  &\geq \sum_{j=1}^D (\beta_j^\star)^2 \left(1 - \frac{1}{2}\left(2\pi j \frac{w_k^\star}{M_1}\epsilon\right)^2\right)\\
  &\geq \sum_{j=1}^D (\beta_j^\star)^2 \left(1 - \frac{1}{2}\left(\frac{2\pi j w_k^\star}{M_1}\right)^2\right)\\
  &\geq \frac{1}{D} - \frac{1}{2}\sum_{j=1}^D (\beta_j^\star)^2 \left(\frac{2\pi D R_w}{M_1}\right)^2\\
  &\geq \frac{1}{D} - \frac{1}{2}\left(\frac{2\pi D R_w}{M_1}\right)^2\\
  &\geq \frac{1}{D} - \frac{1}{2 \cdot 35^2 D^4}\\
  &\geq \frac{2449}{2450 D}.\label{eq:cos-T-bound}
\end{align}
In the second line, we use the periodicity of cosine.
In the third line, we use that $\cos(x) \geq 1-x/2$.
In the fourth line, we use that $|\epsilon| \leq 1$.
In the fifth line, we use that $\norm{\beta^\star}_2^2 \geq 1/D$ since $\norm{\beta^\star}_1 = 1$.
In the sixth line, we use that $\norm{\beta^\star}_2^2 \leq 1$.
In the seventh line, we use that $M_1 \geq 70 \pi D^3 R_w$.
Finally, in the last line, we use that $D \geq 1$ so that $D^4 \geq D$.

Plugging this into \Cref{eq:h-o-expval-lower}, we have
\begin{align}
  \expval{O_{k,1}}{h_{w^\star, M_1, M_2}} &\geq \frac{1}{M_2^2}\left(\frac{19}{39}\cdot \frac{2449}{2450D} - \frac{1}{54D^2} - \frac{2}{M_2} - \frac{4}{35D^2}\right)\\
  &\geq \frac{1}{M_2^2}\left(\frac{7}{20D} - \frac{2}{M_2}\right).
\end{align}
In the second line, we use that $D^2 \geq D$ and simplify.

Thus, we see that if $|T - \ell M_1/w_k^\star| \leq 1$, then this lower bound on the expectation value must be satisfied.
Recall that QSQs only approximate the expectation value up to some tolerance $\tau$.
By our choice of $\tau$, we have
\begin{equation}
  |\alpha_1 - \expval{O_{k,1}}{h_{w^\star, M_1, M_2}}| \leq \frac{1}{M_2^2}\left(\frac{7}{40D} - \frac{1}{M_2}\right).
\end{equation}
By choosing the condition
\begin{equation}
  \alpha_1 \geq \frac{1}{M_2^2}\left(\frac{21}{40D} - \frac{3}{M_2}\right),
\end{equation}
we can ensure that 
\begin{equation}
  \expval{O_{k,1}}{h_{w^\star, M_1, M_2}} \geq \alpha_1 - \frac{1}{M_2^2}\left(\frac{7}{40D} - \frac{1}{M_2}\right) \geq \frac{1}{M_2^2}\left(\frac{7}{20D} - \frac{2}{M_2}\right),
\end{equation}
as required.
\end{proof}

\begin{proof}[Proof of \Cref{claim:ver-only-if}]
This time, we can upper bound $\expval{O_{k,m}}{h_{w^\star, M_{1,m}, M_{2,m}}}$ for any $m \in \{1,\dots, D\}$ using \Cref{coro:integral-upper,coro:integral2-wstar,coro:integral2-wstar-sin} and \Cref{eq:h-o-expval3}:
\begin{align}
  &\expval{O_{k,m}}{h_{w^\star, M_{1,m}, M_{2,m}}}\\
  &\leq \frac{1}{M_{2,m}^2}\sum_{j=1}^D (\beta_j^\star)^2\left(\frac{1}{2} + \frac{\sqrt{d}}{8\pi R_w \tilde{R}}\right)\cos\left(\frac{2\pi j T w_k^\star}{M_{1,m}}\right) + \frac{1}{M_{2,m}^2}\sum_{\substack{j,j'=1\\j\neq j'}}^D \beta_j^\star \beta_{j'}^\star \left(\frac{\sqrt{d}}{\pi R_w \tilde{R}}\right) + \epsilon_d + \frac{4}{M_{2,m}^2}\epsilon_{\mathrm{int}}\\
  &\leq \frac{1}{M_{2,m}^2}\left(\left(\frac{1}{2} + \frac{\sqrt{d}}{8\pi R_w \tilde{R}}\right)\left( (\beta_m^\star)^2 \cos\left(\frac{2\pi T w_k^\star}{M_1}\right) + \sum_{\substack{j=1\\j\neq m}}^D (\beta_j^\star)^2\right) + \frac{\sqrt{d}}{\pi R_w \tilde{R}} + \frac{2}{M_{2,m}} + \frac{4}{35D^2}\right)\\
  &\leq \frac{1}{M_{2,m}^2}\left(\frac{20}{39}(\beta_m^\star)^2 \cos\left(\frac{2\pi T w_k^\star}{M_1}\right) + \frac{20}{39}\sum_{\substack{j=1\\j\neq m}}^D (\beta_j^\star)^2 + \frac{1}{54D^2} + \frac{2}{M_{2,m}} + \frac{4}{35D^2}\right).\label{eq:h-o-expval-upper}
\end{align}
In the third line, we split up the sum over $j$ into cases where $j =m$ and $j \neq m$.
In the $j = m$ case, we use that $M_{1,m} = m M_1$ by definition.
In the $j\neq m$ case, we bound $\cos(x) \leq 1$.
We also use that $|\epsilon_d| \leq 2/M_{2,m}^3$ by \Cref{claim:eps-d} and $|\epsilon_{\mathrm{int}}| \leq 1/(35D^2)$ by \Cref{claim:sum-to-int} and $\norm{\beta^\star}_1 = 1$.
In the last line, we use that $\tilde{R} \geq \max(39\sqrt{d}/4\pi R_w, 54D^2\sqrt{d}/\pi R_w)$ by our choice of $\tilde{R}$.

Now, suppose that there does not exist any integer $\ell$ such that $|T - \ell M_1/w_k^\star| \leq 1$.
Then, we can write $T = \ell' M_1/w_k^\star + c$ for some $c$ satisfying $1 < c < M_1/w_k^\star - 1$.
Then,
\begin{equation}
  \cos\left(\frac{2\pi T w_k^\star}{M_1}\right) = \cos\left(\frac{2\pi w_k^\star}{M_1}\left(\frac{\ell' M_1}{w_k^\star} + c\right)\right) = \cos\left(\frac{2\pi w_k^\star}{M_1}c\right).
\end{equation}
Without loss of generality, we can assume that $w_k^\star c/M_1 \leq 1/2$.
Otherwise, we can write
\begin{equation}
  \cos\left(\frac{2\pi w_k^\star}{M_1}c\right) = \cos\left(\frac{2\pi w_k^\star}{M_1}\left(\frac{M_1}{w_k^\star} - c'\right)\right) = \cos\left(\frac{2\pi w_k^\star}{M_1} c'\right)
\end{equation}
for some $c'$ such that $w_k^\star c'/M_1 \leq 1/2$.
Then, we can bound this cosine term:
\begin{align}
  \cos\left(\frac{2\pi T w_k^\star}{M_1}\right) &= \cos\left(\frac{2\pi w_k^\star}{M_1}c\right)\\
  &\leq 1 - \frac{1}{8}\left(\frac{2\pi w_k^\star}{M_1}c\right)^2\\
  &\leq 1 - \frac{1}{8}\left(\frac{2\pi R_w}{M_1}\right)^2.\label{eq:cos-upper}
\end{align}
Here, in the second line, we use that $\cos(x) \leq 1-x/8$ for $x \in [0,\pi]$, which is satisfied because we can assume that $w_k^\star c/M_1 \leq 1/2$ as discussed above.
In the last line, we use that $c > 1$ and $w^\star_k \leq R_w$.

Plugging this into \Cref{eq:h-o-expval-upper}, we have
\begin{align}
  &\expval{O_{k,m}}{h_{w^\star, M_{1,m}, M_{2,m}}}\\
  &\leq \frac{1}{M_{2,m}^2}\left(\frac{20}{39}(\beta_m^\star)^2 \left(1 - \frac{1}{8}\left(\frac{2\pi R_w}{M_1}\right)^2\right) + \frac{20}{39}\sum_{\substack{j=1\\j\neq m}}^D (\beta_j^\star)^2 + \frac{1}{54D^2} + \frac{2}{M_{2,m}} + \frac{4}{35D^2}\right)\\
  &\leq \frac{1}{M_{2,m}^2}\left(\frac{20}{39}\left(1 - \frac{1}{8}\left(\frac{2\pi R_w}{M_1}\right)^2 (\beta_m^\star)^2\right) + \frac{1}{54D^2} + \frac{2}{M_{2,m}} + \frac{4}{35D^2}\right).
\end{align}
In the last line, we use that $\norm{\beta^\star}_2^2 \leq 1$.
Summing over all $m \in \{1,\dots, D\}$, then we have
\begin{align}
  &\sum_{m=1}^D\expval{O_{k,m}}{h_{w^\star, M_{1,m}, M_{2,m}}}\\
  &\leq \sum_{m=1}^D\left(\frac{1}{M_{2,m}^2}\left(\frac{20}{39}\left(1 - \frac{1}{8}\left(\frac{2\pi R_w}{M_1}\right)^2 (\beta_m^\star)^2\right) + \frac{1}{54D^2} + \frac{2}{M_{2,m}} + \frac{4}{35D^2}\right)\right)\\
  &\leq \frac{1}{M_2^2}\sum_{m=1}^D\left(\frac{20}{39}\left(1 - \frac{1}{8}\left(\frac{2\pi R_w}{M_1}\right)^2(\beta_m^\star)^2\right) + \frac{1}{54D^2} + \frac{2}{M_2} + \frac{4}{35D^2}\right)\\
  &\leq \frac{1}{M_2^2}\left(\frac{20}{39}D -\frac{1}{8D}\left(\frac{2\pi R_w}{M_1}\right)^2 + \frac{2}{15D} + \frac{2D}{M_2}\right).
\end{align}
In the third line, we use that $M_{2,m} = m M_2$ by definition and $m \geq 1$.
In the last line, we use that $D^2 \geq D$ and $\norm{\beta^\star}_2^2 \geq 1/D$.

Thus, we see that if $|T - \ell M_1/w_k^\star| \not\leq 1$ for any integer $\ell$, then this upper bound on the sum of expectation values must be satisfied.
Recall that QSQs only approximate the expectation value up to some tolerance $\tau$.
By our choice of $\tau$, we have
\begin{equation}
  |\alpha_m - \expval{O_{k,m}}{h_{w^\star, M_{1,m}, M_{2,m}}}| \leq \frac{1}{2D^2 M_2^2}\left(\frac{2}{15} - \frac{1}{8}\left(\frac{2\pi R_w}{M_1}\right)^2 + \frac{2D^2}{M_2}\right).
\end{equation}
By choosing the condition
\begin{equation}
  \sum_{m=1}^D \alpha_m \leq \frac{1}{M_2^2}\left(\frac{20}{39}D + \frac{1}{2D}\left(\frac{2}{15} - \frac{1}{8}\left(\frac{2\pi R_w}{M_1}\right)^2 +\frac{2D^2}{M_2}\right)\right),
\end{equation}
we can ensure that
\begin{align}
  \sum_{m=1}^D \expval{O_{k,m}}{h_{w^\star, M_{1,m}, M_{2,m}}} &\leq \sum_{m=1}^D \alpha_m + \frac{1}{2D M_2^2}\left(\frac{2}{15} - \frac{1}{8}\left(\frac{2\pi R_w}{M_1}\right)^2 + \frac{2D^2}{M_2}\right)\\
  &\leq \frac{1}{M_2^2}\left(\frac{20}{39}D + \frac{1}{D}\left(\frac{2}{15} - \frac{1}{8}\left(\frac{2\pi R_w}{M_1}\right)^2\right) + \frac{2D}{M_2}\right),
\end{align}
as required.
\end{proof}
\end{proof}

With each of these parts, we can put everything together to prove \Cref{thm:linear-uniform}.

\begin{proof}[Proof of \Cref{thm:linear-uniform}]
We first consider the case of $d = 1$.
Our algorithm is simply to apply Hallgren's algorithm (Appendix~\ref{sec:hallgren}) to our setting using QSQs.
Choose the discretization parameters to be $M_1 = \max(70 \pi d D^3 R_w, R_w^2/\epsilon
_1)$ and $M_2 = c M_1$ for some constant $c$ such that $M_2 \in \mathbb{Z}$ and $c < 1/(8\pi D R_w)$.
By \Cref{lem:discrete-general-unif}, we know that there exists a discretization $h_{w^\star, M_1, M_2}$ of the target function $g_{w^\star}$ such that $h_{w^\star, M_1, M_2}$ is $(33/35)$-pseudoperiodic with period $M_1/w^\star$ by our choice of $M_1, M_2$.

Recall that by definition of $\mathcal{S}_w$ that $w_j \geq R_w/d^2$.
This gives an upper bound on the period, which we denote as $A \triangleq M_1d^2/R_w$.
We carry this $d$ factor through to avoid losing track of it.
Choose the truncation parameter $R \geq 6(1/2 + \tau)A^2$.

Then, we want to apply period finding to $h_{w^\star, M_1, M_2}$ using our QSQ access for discretization/truncation parameters $M_1, M_2, R$ as chosen above.
With \Cref{thm:verification}, we fulfill all of the conditions to apply the irrational period finding subroutine from Hallgren's algorithm~\cite{hallgren2007polynomial} reviewed in Appendix~\ref{sec:hallgren}.
Note that the main quantum part of the algorithm (Step 2 in \Cref{alg:hallgren}) is the same as standard period finding, i.e., simply quantum Fourier sampling.
The classical postprocessing and analysis is mainly what differs.
Thus, we can use the same QSQ operator as from Appendix~\ref{sec:warmup-uniform}, namely $O$ given in \Cref{eq:o}, to apply the quantum part of this algorithm.
In particular, this applies the QFT over $q = 2R$ and measures.

We can repeat the analysis of Hallgren's algorithm (\Cref{alg:hallgren}) Steps 3-5 to account for the noise $\tau \geq 0$ in the QSQs.
From the analysis of Hallgren's algorithm~\cite{hallgren2007polynomial}, the outputs of the QSQs are some numbers $\alpha, \beta$ such that 
\begin{equation}
  \label{eq:hallgren-output1}
  |\alpha - b| \leq \tau, \quad \left| b - \frac{kRw^\star}{M_1}\right| \leq \frac{1}{2}
\end{equation}
\begin{equation}
  |\beta - c| \leq \tau, \quad \left| c - \frac{\ell Rw^\star}{M_1}\right| \leq \frac{1}{2}
\end{equation}
for some integers $k, \ell \geq 1$.
We want to show that $k/\ell$ is a convergent in the continued fraction expansion of $\alpha/\beta$.
We use the fact that if $x$ is any irrational number, $e/f \in \mathbb{Q}$, and $|x - e/f| \leq 1/(2f^2)$, then $e/f$ is a convergent in the continued fraction expansion of $x$~\cite{schrijver1998theory}.
We write
\begin{equation}
  \label{eq:alpha}
  \alpha = b + \tau_k, \quad b = \frac{kR}{S} + \epsilon_k, \quad |\tau_k| \leq \tau, |\epsilon_k| \leq \frac{1}{2}
\end{equation}
\begin{equation}
  \beta = c + \tau_\ell,\quad c = \frac{\ell R}{S} + \epsilon_\ell, \quad |\tau_\ell| \leq \tau, |\epsilon_k| \leq \frac{1}{2},
\end{equation}
where we denote $S \triangleq M_1/w^\star$ as the period of our target function for simplicity.
Without loss of generality, suppose that $1 \leq k \leq \ell \leq S$.
Then, we have
\begin{align}
\left| \frac{\alpha}{\beta} - \frac{k}{\ell}\right| &= \left| \frac{kR + S(\epsilon_k + \tau_k)}{\ell R + S(\epsilon_\ell + \tau_\ell)} - \frac{k}{\ell}\right|\\
&= \left|\frac{S(\ell(\epsilon_k + \tau_k) - k(\epsilon_\ell + \tau_\ell))}{\ell^2 R - S(\epsilon_\ell + \tau_\ell) \ell}\right|\\
&\leq \left|\frac{S(\ell + k)}{\frac{1}{1/2 + \tau}(\ell^2 R - S(1/2 + \tau)\ell)}\right|\\
&\leq \left|\frac{2\ell S}{6\ell^2 S^2 -S\ell}\right|\\
&= \left|\frac{2\ell S}{2\ell(3\ell S^2 - S/2)}\right|\\
&= \left|\frac{1}{3\ell S - 1/2}\right|\\
&\leq \frac{1}{3\ell^2 - 1/2}\\
&\leq \frac{1}{2\ell^2}.
\end{align}
Here, in the third line, we use $|\epsilon_k + \tau_k| \leq 1/2 + \tau$.
In the fourth line, we use our choice of $R \geq 6(1/2 + \tau)A^2 \geq 6(1/2 + \tau)S^2$ since $A \geq S$ and $k \leq \ell$.
In the seventh line, we use $\ell \leq S$.
Finally, in the last line, we use that $3\ell^2 - 1/2 \geq 2\ell^2$ for $\ell \geq 1$.
This shows that $k/\ell$ is a convergent in the continued fraction expansion of $\alpha/\beta$.

Now, by Step 4 of \Cref{alg:hallgren}, when $k/\ell$ is a convergent in the continued fraction expansion of $\alpha/\beta$, we want to show that either $\lfloor k R/ \alpha \rfloor$ or $\lceil kR /\alpha \rceil$ is close to the period $S$ for some $k$.
We denote $\lfloor kR / \alpha \rceil$ to denote rounding to the closest integer.
In particular, we will show that $|S - \lfloor kR/\alpha \rceil| \leq 1$.
Again, we write $\alpha$ as in \Cref{eq:alpha}.
Then,
\begin{equation}
  \frac{kR}{\alpha} = kR\left(\frac{1}{\frac{kR}{S} + \epsilon_k + \tau_k}\right) = \frac{S}{1 + \frac{(\epsilon_k + \tau_k) S}{kR}} = \frac{S}{1 + \gamma},
\end{equation}
where in the last equality, we define
\begin{equation}
  \gamma \triangleq \frac{(\epsilon_k + \tau_k)S}{kR}.
\end{equation}
Notice that
\begin{equation}
  |\gamma| \leq \frac{(1/2 + \tau)S}{kR} \leq \frac{S}{6kS^2} = \frac{1}{6kS} \leq \frac{1}{6S},
\end{equation}
where in the first inequality, we use that $|\epsilon_k + \tau_k| \leq 1/2 + \tau$.
In the second inequality, we use our choice of $R \geq 6(1/2 + \tau)S^2$.
In the last inequality, we use $k \geq 1$.
Now, we can write
\begin{equation}
  \frac{kR}{\alpha} = \frac{S}{1+\gamma} = S - \frac{S\gamma}{1 + \gamma},\quad \left|\frac{S\gamma}{1 + \gamma}\right| < \frac{1}{2}.
\end{equation}
Thus, we see that $|S - \lfloor k R/\alpha \rceil| \leq 1$, as required.
Overall, this shows that Hallgren's algorithm correctly recovers the period $S$ even with noise from QSQs, as long as $R$ is chosen large enough.

Now, we analyze the number of QSQs that the algorithm requires.
Step 2 of \Cref{alg:hallgren} requires two QSQs as we are applying quantum Fourier sampling twice.
The only other part of the algorithm that requires QSQs is the verification subroutine, which uses $D$ QSQs each time it is called.
In Step 4 of \Cref{alg:hallgren}, this verification procedure must be repeated for each convergent in the continued fraction expansion of $\alpha/\beta$, where $\alpha$ and $\beta$ are the outputs from quantum Fourier sampling via the noisy QSQs.
Since we assume that the QSQs output rational numbers\footnote{As discussed in Appendix~\ref{sec:detail-prob}, the rational numbers are dense in $\mathbb{R}$. Then, if a QSQ outputs an irrational number, we can find a rational number close to it. We can then consider the error in this approximation as a part of the tolerance of the QSQ.}, then $\alpha/\beta$ is a rational number, which has a finite continued fraction expansion.
In fact, it is well known that the continued fraction expansion for rational numbers $\alpha/\beta$ can be computed via the steps of Euclid's algorithm on the numerator and denominator (see, e.g., the discussion after Theorem 161 in~\cite{hardy1979introduction}).
Moreover, Euclid's algorithm requires a number of steps scaling logarithmically in the numbers it is run on.
Thus, in our case, then we must run the verification procedure at most $\mathcal{O}(\log(S)) = \mathcal{O}(\log A) = \mathcal{O}\left(\log (M_1 d^2/R_w)\right)$ times, which uses $\mathcal{O}(D \log (M_1d^2 /R_w))$ QSQs in total.

Overall, this shows that we can find an integer $a$ within $1$ of $M_1/w^\star$ with some probability using $\mathcal{O}(D \log (M_1d^2 /R_w))$ QSQs.
In particular, $a$ satisfies
\begin{equation}
  \frac{a}{M_1} \in \left[\frac{1}{w^\star} \pm \frac{1}{M_1}\right]
\end{equation}
with probability $\Omega(\eta^2/\log^4(A))$, where $\eta = 33/35$.
We want to choose $M_1$ such that $M_1/a$ is close to $w^\star$.
For this, we use the fact that the relative error for $z = 1/x$ is the same as the relative error for $x$ (see, e.g.,~\cite{taylor1982introduction}), i.e., $(\Delta z)/z = (\Delta x)/x$, where $\Delta z$ and $\Delta x$ are the uncertainties in $z$ and $x$, respectively.
Thus, taking $z = w^\star, x = 1/w^\star$, we have
\begin{equation}
  \frac{\Delta z}{w^\star} = \frac{1/M_1}{1/w^\star}.
\end{equation}
Solving for $\Delta z$, we clearly see that $\Delta z = (w^\star)^2/M_1$. Hence, using the $a$ output from \Cref{thm:hallgren}, we can compute $\hat{w} = M_1/a$ satisfying
\begin{equation}
  \frac{M_1}{a} \in \left[w^\star \pm \frac{(w^\star)^2}{M_1}\right]
\end{equation}
with probability $\Omega(\eta^2/\log^4(A))$, where $\eta = 33/35$.
Here, in order to guarantee that $|\hat{w} - w^\star| \leq \epsilon_1$ for some $\epsilon_1 > 0$, we should choose the discretization parameter $M_1$ as $M_1 \geq R_w^2/\epsilon_1$, which is satisfied by our choice of $M_1$.
Then, the success probability simplifies to
\begin{equation}
  p = \Omega\left(\frac{1}{\log^4(A)}\right) = \Omega\left(\frac{1}{\log^4(M_1 d^2/R_w)}\right).
\end{equation}
To boost the success probability to at least $1-\delta$ (using the verification procedure to check if the period is correct), for some $\delta > 0$, we can repeat this $\mathcal{O}(\log(1/\delta)/p)$ times.
In total, this is
\begin{equation}
  \mathcal{O}\left(\log\left(\frac{1}{\delta}\right) \log^4\left(\frac{M_1 d^2}{R_w}\right)\right)
\end{equation}
repetitions, where in each repetition, we use $\mathcal{O}(D \log (M_1 d^2 /R_w))$ QSQs from the above analysis.

Finally, the generalization to arbitrary $d \geq 1$ is straightforward, using the observable $O_j$ from \Cref{eq:oj}.
Here, we only perform quantum Fourier sampling one coordinate at a time.
In this case, the function we are Fourier sampling from is effectively
\begin{equation}
    \label{eq:general-d}
    g_{w^\star, j}(x_j; x_{-j}) \triangleq \sum_{k=1}^D \beta_k^\star \cos(2\pi k (x_j w_j^\star + x_{-j}^\intercal w_{-j}^\star)),
\end{equation}
where $x_{-j}$ denotes the vector $x$ with all coordinates except the $j$-th one.
Here, $x_{-j}$ is a fixed vector because the observable $O_j$ collapses the register storing all but the $j$-th coordinate of the input.
Thus, we can consider the function
\begin{equation}
    \tilde{g}_j(z; x_{-j}) \triangleq \sum_{k=1}^D \beta_k^\star \cos(2\pi (z + x_{-j}^\intercal w_{-j}^\star)).
\end{equation}
This function clearly satisfies the conditions of \Cref{lem:discrete-general-unif}.
Thus, the resulting discretized function $h_{w^\star, M_1, M_2}$ with the $x_{-j}$ coordinates fixed is also $(33/35)$-pseudoperiodic with period $M_1/w_j^\star$.
Hence, we can apply Hallgren's algorithm one coordinate at a time, learning $M_1/w_j^\star$.
It is clear that the argument above still holds for this case as well.
For this, we need to repeat the algorithm to learn each entry of the vector $w^\star \in \mathbb{R}^d$ at a time.
Altogether, this gives the bound from \Cref{thm:linear-uniform}.
\end{proof}

\subsection{Learning the outer function via gradient methods}
\label{sec:outer-uniform}

From the previous section (in particular, \Cref{thm:linear-uniform}), we have seen that we can obtain an approximation $\hat{w}$ of $w^\star$ such that $\norm{\hat{w} - w^\star}_\infty \leq \epsilon_1$ with high probability, for some $\epsilon_1 > 0$.
In this section, we complete the algorithm for the uniform case by leveraging this approximation of $w^\star$ to learn the outer periodic function $\tilde{g}: \mathbb{R} \to [-1,1]$ via classical gradient methods.
We emphasize here that this portion of the algorithm is purely classical, where we have classical access to the loss function and its gradients.
Recall that we assume that $\tilde{g}$ takes the specific form given in \Cref{eq:g-tilde}, reproduced here for convenience:
\begin{equation}
  \tilde{g}(y) = \sum_{j=1}^D \beta_j^\star \cos(2\pi j y), \quad \norm{\beta^\star}_1 = 1,
\end{equation}
for some constant $D > 0$.
In this way, then our target function can be written as
\begin{equation}
  g_{w^\star}(x) = \tilde{g}(x^\intercal w^\star) = \sum_{j=1}^D \beta_j^\star \cos(2\pi j x^\intercal w^\star).
\end{equation}
Also recall that our ultimate goal is to find a good predictor $f_\theta(x)$ that minimizes the objective function given by
\begin{equation}
  \mathcal{L}_{w^\star}(\theta) = \mathop{\mathbb{E}}_{x \sim \varphi^2}[(f_\theta(x) - g_{w^\star}(x))^2],
\end{equation}
where $\theta$ are some parameters that we want to learn and $\varphi^2$ in this case is a uniform distribution.
Here, because we assume this simple form of $\tilde{g}$, then the predictors take a similar form
\begin{equation}
  f_\beta(x) = \sum_{j=1}^D \beta_j \cos(2\pi j x^\intercal \hat{w}),
\end{equation}
where $\hat{w}$ is our approximation of $w^\star$ from \Cref{thm:linear-uniform}.
Thus, the parameters that we want to learn here are given by the $\beta \in \mathbb{R}^d$.
Then, our loss function can be written more explicitly as
\begin{equation}
  \label{eq:loss}
  \mathcal{L}_{w^\star}(\beta) = \int\limits_{x\sim \varphi^2}\left(\sum_{j=1}^D \beta_j^\star \cos(2\pi j x^\intercal w^\star) - \sum_{j=1}^D \beta_j \cos(2\pi j x^\intercal \hat{w})\right)^2\,dx.
\end{equation}
As in the classical hardness result~\cite{shamir2018distribution}, our algorithm is given access to this loss function and its gradients.
Using this, we design a classical algorithm that can efficiently find a predictor specified by parameters $\hat{\beta}$ such that $\mathcal{L}_{w^\star}(\hat{\beta}) \leq \epsilon$ for a given precision $\epsilon > 0$.

Recall in the previous section that we needed to discretize and truncate our access to the target function $g_{w^\star}$.
We no longer require discretization since classically we can perform computations up to arbitrary precision, but we still truncate with truncation parameter $R$.
Namely, we consider $\varphi^2$ as the uniform distribution over an $\ell_1$-ball of radius $R$ centered at the origin in $\mathbb{R}^d$.
To show that $\mathcal{L}_{w^\star}(\hat{\beta}) \leq \epsilon$, we appropriately choose $R$ and $\epsilon_1$ sufficiently large/small enough, respectively.

\begin{theorem}[Learning $\tilde{g}$ Guarantee; Uniform Case]
\label{thm:g-tilde-uniform}
Let $\epsilon > 0$.
Let $w^\star \in \mathbb{R}^d$ be unknown with norm $R_w > 0$.
Let $g_{w^\star}:\mathbb{R}^d \to [-1,1]$ be defined as $g_{w^\star}(x) = \tilde{g}(x^\intercal w^\star)$ for $\tilde{g}$ given in \Cref{eq:g-tilde}.
Choose
\begin{equation}
  \label{eq:final-R-bounds}
  R = \tilde{\Omega}\left(\max\left(\frac{D^2}{\epsilon}, \frac{D^2\sqrt{d}}{R_w \epsilon}, \frac{D^{5/2}}{\sqrt{\epsilon}}, \frac{D^{3/2}\sqrt{d}}{R_w \sqrt{\epsilon}}\right)\right),
\end{equation}
\begin{equation}
  \label{eq:final-eps1-bounds}
  \epsilon_1 = \tilde{\mathcal{O}}\left(\min\left(\frac{\epsilon^3}{D^6 d}, \frac{\epsilon^{3/2}}{D^{13/2}d}, \frac{R_w}{D\sqrt{d}}\right)\right).
\end{equation}
Suppose we have an approximation $\hat{w} \in \mathbb{R}^d$ such that $\norm{\hat{w} - w^\star}_\infty \leq \epsilon_1$.
Then, there exists a classical algorithm with access to the loss function from \Cref{eq:loss} and its derivatives that can efficiently find a parameters $\hat{\beta} \in \mathbb{R}^d$ such that $\mathcal{L}_{w^\star}(\hat{\beta}) \leq \epsilon$.
Moreover, this algorithm requires at most
\begin{equation}
t = \Theta\left(\log\left(\sqrt{\frac{D}{\epsilon}}\right)\right)
\end{equation}
iterations of gradient descent.
\end{theorem}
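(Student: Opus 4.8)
The plan is to exploit the fact that, once the estimate $\hat w$ is fixed, $\mathcal{L}_{w^\star}$ is a convex quadratic in $\beta$, and then to control its minimum value using the closeness of $\hat w$ to $w^\star$.

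\textbf{Step 1 (quadratic structure and conditioning).} First I would expand the square in \Cref{eq:loss} and write $\mathcal{L}_{w^\star}(\beta) = \beta^\intercal A \beta - 2 b^\intercal \beta + c$, where $A \in \mathbb{R}^{D\times D}$ has entries $A_{jk} = \int_{x\sim\varphi^2}\cos(2\pi j x^\intercal \hat w)\cos(2\pi k x^\intercal \hat w)\,dx$, the vector $b$ has entries $b_j = \int_{x\sim\varphi^2}\big(\sum_m \beta_m^\star\cos(2\pi m x^\intercal w^\star)\big)\cos(2\pi j x^\intercal \hat w)\,dx$, and $c = \int_{x\sim\varphi^2}\big(\sum_m\beta_m^\star\cos(2\pi m x^\intercal w^\star)\big)^2\,dx$. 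Applying product-to-sum identities reduces every entry of $A$ and $b$ to a bounded sum of integrals of $\cos(2\pi \ell x^\intercal v)$ and $\sin(2\pi \ell x^\intercal v)$ over the truncated $\ell_1$-ball; invoking the integral estimates of \Cref{sec:int-bounds} (as in \Cref{coro:integral,coro:integral2-wstar,coro:integral2-wstar-sin}) — which apply to $\hat w$ as well, since $\hat w_j \geq R_w/d^2 - \epsilon_1 \geq R_w/(2d^2) > 0$ for $\epsilon_1$ small — gives $A_{jj} = \tfrac12 + O(\sqrt d/(R_w R))$ and $A_{jk} = O(\sqrt d/(R_w R))$ for $j\neq k$. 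Hence, with $R$ chosen as in \Cref{eq:final-R-bounds}, $A = \tfrac12 I + E$ with $\norm{E}$ a small constant, so there are universal constants $0 < \mu \leq L$ (both near $\tfrac12$) with $\mu I \preceq A \preceq L I$; in particular $\mathcal{L}_{w^\star}$ is $\mu$-strongly convex and $L$-smooth with a unique minimizer $\beta^{\mathrm{opt}} = A^{-1}b$. Also $|b_j|\leq 1$ and $c \leq 1$ because $\norm{\beta^\star}_1 = 1$, so $\norm{b}_2 \leq \sqrt D$ and $\norm{\beta^{\mathrm{opt}}}_2 \leq \norm{b}_2/\mu = O(\sqrt D)$.

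\textbf{Step 2 (minimum loss value).} Next I would bound $\mathcal{L}^\star \triangleq \min_\beta \mathcal{L}_{w^\star}(\beta)$. Since $\beta^{\mathrm{opt}}$ is optimal, $\mathcal{L}^\star \leq \mathcal{L}_{w^\star}(\beta^\star) = \int_{x\sim\varphi^2}\big(\sum_j \beta_j^\star[\cos(2\pi j x^\intercal w^\star) - \cos(2\pi j x^\intercal \hat w)]\big)^2\,dx$. Using $|\cos a - \cos b|\leq |a-b|$, the bound $|x^\intercal(w^\star - \hat w)| \leq \norm{x}_1\norm{w^\star - \hat w}_\infty \leq R\epsilon_1$ valid on the support $\norm{x}_1 \leq R$, and $\norm{\beta^\star}_1 = 1$, the integrand is pointwise at most $(2\pi D R \epsilon_1)^2$, so $\mathcal{L}^\star \leq (2\pi D R \epsilon_1)^2$. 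Choosing $\epsilon_1$ as small as in \Cref{eq:final-eps1-bounds} forces $\mathcal{L}^\star \leq \epsilon/2$.

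\textbf{Step 3 (gradient descent) and conclusion.} Finally I would run gradient descent on the quadratic $\mathcal{L}_{w^\star}$, whose gradient $\nabla\mathcal{L}_{w^\star}(\beta) = 2(A\beta - b)$ is available by assumption, starting from $\beta^{(0)} = 0$ with step size $1/L$. By the standard linear-rate convergence of gradient descent on an $L$-smooth, $\mu$-strongly convex function, $\mathcal{L}_{w^\star}(\beta^{(t)}) - \mathcal{L}^\star \leq \tfrac{L}{2}(1-\mu/L)^t\norm{\beta^{\mathrm{opt}}}_2^2 = O\big(D\,(1-\mu/L)^t\big)$. Since $\mu/L$ is a constant, after $t = \Theta(\log(D/\epsilon)) = \Theta(\log\sqrt{D/\epsilon})$ iterations the output $\hat\beta$ satisfies $\mathcal{L}_{w^\star}(\hat\beta) - \mathcal{L}^\star \leq \epsilon/2$, and combining with Step 2 gives $\mathcal{L}_{w^\star}(\hat\beta) \leq \epsilon$. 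Each iteration uses $O(1)$ queries to the loss/gradient oracle, so the procedure is efficient, which yields the theorem. The main obstacle is Step 1 — showing that the cosine-feature Gram matrix $A$ is well-conditioned \emph{uniformly} over all approximate weight vectors $\hat w$ near $w^\star$, which is precisely what dictates the large truncation radius $R$; this rests on the integral estimates of \Cref{sec:int-bounds}, which must show $\int_{x\sim\varphi^2}\cos(2\pi\ell x^\intercal v)\,dx$ over the truncated $\ell_1$-ball has error decaying in $R$ while only picking up the stated $\sqrt d$ factor. Everything downstream — the quadratic form, the Lipschitz bound on $\mathcal{L}^\star$, and the convergence of gradient descent — is standard, the only care being to track the dependence on $D$, $d$, $R_w$, and $\epsilon$ through the choices of $R$ and $\epsilon_1$.
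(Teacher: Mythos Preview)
Your approach is correct and genuinely different from the paper's, though both rest on the same integral estimates in \Cref{sec:int-bounds}.

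The paper works coordinate-wise: it shows $\partial\mathcal{L}_{w^\star}/\partial\beta_k \approx (\beta_k-\beta_k^\star)$ up to an error that depends on $\max(|\beta_k|,|\beta_k^\star|)$ and $\sum_{j\neq k}|\beta_j|$ (\Cref{lem:grad-inform}), then proves by induction over iterations that the iterates stay bounded ($|\beta_k^{(t)}|<2$, \Cref{lem:param-bound}) so that this error is uniformly small, obtains coordinate-wise linear convergence of $\beta^{(t)}$ to $\beta^\star$ (\Cref{lem:grad-converge,coro:grad-steps}), and finally expands the loss at $\hat\beta$ directly to bound it by $\tfrac12\norm{\hat\beta-\beta^\star}_2^2$ plus integral-estimate error terms (\Cref{lem:loss-bound}). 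You instead package everything into the Gram matrix $A$: the same integral bounds give $A=\tfrac12 I+E$ with $\norm{E}=O(D\sqrt d/(R_wR))$ small, so $\mathcal{L}_{w^\star}$ is strongly convex and smooth with condition number $O(1)$, and standard gradient-descent theory gives linear convergence to the global minimum in $\Theta(\log(D/\epsilon))$ steps; you then control the minimum value $\mathcal{L}^\star$ by the crude Lipschitz bound $\mathcal{L}_{w^\star}(\beta^\star)\leq (2\pi D\norm{x}_1\epsilon_1)^2$. Your route is shorter and avoids the inductive iterate bound and the term-by-term loss expansion; the paper's route is more explicit about which error terms dictate the scalings of $R$ and $\epsilon_1$.

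One small point to fix: in Step~2 you use $\norm{x}_1\leq R$, citing the $\ell_1$-ball support, but the integral bounds you invoke in Step~1 are proved for the cube $[-R,R]^d$ (see the statements in \Cref{sec:int-bounds}); on that support $\norm{x}_1\leq dR$, so your bound becomes $\mathcal{L}^\star\leq (2\pi DdR\epsilon_1)^2$. This is still $\leq\epsilon/2$ with the stated $\epsilon_1$, since the paper's choice already forces the larger quantity $D^2dR^2\epsilon_1=O(\epsilon)$ and $d\epsilon_1\leq 1$ under \Cref{eq:final-eps1-bounds}.
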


The rest of this section is dedicated to proving this theorem.
The algorithm is simple: just run gradient descent using the loss function to estimate the parameters $\beta^\star$.
We prove this using arguments from convex optimization (see, e.g.,~\cite{nesterov2018lectures}).
Throughout the proof, we require some technical lemmas bounding integrals of exponential functions over our truncated domain, which we relegate to Appendix~\ref{sec:int-bounds}.

\textbf{Proof sketch.} The proof of \Cref{thm:g-tilde-uniform} is fairly technical, but the idea is simple.
First, we show that the gradients are informative, i.e., taking the derivative of our loss function with respect to each of the parameters $\beta_k$ indeed reflects how far $\beta_k$ is from the true parameter $\beta^\star_k$.
Then, we can just apply the standard gradient descent algorithm (see, e.g.,~\cite{nesterov2018lectures}).
Much of the work then goes into choosing the parameters (e.g., number of iterations to run gradient descent, how accurate we need period finding to be, etc.) to guarantee that the value of the loss function is small.
Throughout, we use the following notation: $\epsilon_1$ denotes the error for our estimate of $w^\star$ (in $\ell_\infty$-norm), $\epsilon_2$ quantifies how informative the gradients are, $\epsilon_3$ denotes the error for our estimate of $\beta^\star$ (in $\ell_2$-norm), and $\epsilon$ is the desired value of the loss function.

First, we show that the gradients are informative in the following lemma.
The idea is that we can choose $R$ sufficiently large and $\epsilon_1$ sufficiently small so that $\partial \mathcal{L}_{w^\star}/\partial \beta_k$ is close to $(\beta_k - \beta_k^\star)$.

\begin{lemma}[Informative gradients]
\label{lem:grad-inform}
Let $w^\star \in \mathbb{R}^d$ be unknown with norm $R_w > 0$.
Let $g_{w^\star}:\mathbb{R}^d \to [-1,1]$ be defined as $g_{w^\star}(x) = \tilde{g}(x^\intercal w^\star)$ for $\tilde{g}$ given in \Cref{eq:g-tilde}.
Suppose we have an approximation $\hat{w} \in \mathbb{R}^d$ such that $\norm{\hat{w} - w^\star}_\infty \leq \epsilon_1$, for $0 < \epsilon_1 \leq R_w/(D\sqrt{d})$.
Then for any $k \in [D]$,
\begin{align}
  \left| \frac{\partial \mathcal{L}_{w^\star}}{\partial \beta_k} - (\beta_k - \beta_k^\star) \right| &\leq \left(\frac{\sqrt{d}}{2\pi R(R_w - \sqrt{d}\epsilon_1)} + \frac{10\pi^2 D^2 dR^2 \epsilon_1}{3}\right) \max(|\beta_k^\star|, |\beta_k|)\\
  &+ \frac{\sqrt{d}}{\pi R(R_w - D\sqrt{d}\epsilon_1)} + \sum_{\substack{j=1\\j\neq k}}^D |\beta_j| \frac{\sqrt{d}}{\pi R (R_w - \sqrt{d}\epsilon_1)}.
\end{align}
\end{lemma}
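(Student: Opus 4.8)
I would begin by differentiating the loss in \Cref{eq:loss} under the integral sign. Since $f_\beta(x)=\sum_{j=1}^D\beta_j\cos(2\pi jx^\intercal\hat w)$, this gives
\begin{equation}
  \frac{\partial\mathcal{L}_{w^\star}}{\partial\beta_k}
  = 2\sum_{j=1}^D\beta_j\int\cos(2\pi jx^\intercal\hat w)\cos(2\pi kx^\intercal\hat w)\,dx
  \;-\;2\sum_{j=1}^D\beta_j^\star\int\cos(2\pi jx^\intercal w^\star)\cos(2\pi kx^\intercal\hat w)\,dx,
\end{equation}
where every integral is against the uniform measure on the truncated $\ell_1$-ball of radius $R$. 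Applying the product-to-sum identity $\cos A\cos B=\tfrac12\cos(A-B)+\tfrac12\cos(A+B)$, each integrand splits into two pure cosines whose frequency vectors are $2\pi(j\pm k)\hat w$ in the first family and $2\pi(jw^\star\pm k\hat w)$ in the second. The key point is that a \emph{constant} (non-oscillatory) contribution arises only from the ``difference'' cosine in the $j=k$ terms: $\int\cos^2(2\pi kx^\intercal\hat w)\,dx=\tfrac12+\tfrac12\int\cos(4\pi kx^\intercal\hat w)\,dx$ and $\int\cos(2\pi kx^\intercal w^\star)\cos(2\pi kx^\intercal\hat w)\,dx=\tfrac12\int\cos(2\pi kx^\intercal(w^\star-\hat w))\,dx+\tfrac12\int\cos(2\pi kx^\intercal(w^\star+\hat w))\,dx$. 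Thus $2\beta_k\cdot\tfrac12-2\beta_k^\star\cdot\tfrac12=\beta_k-\beta_k^\star$ is the leading term, and everything else must be absorbed into the error.

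\textbf{Bounding the diagonal and off-diagonal errors.} For the diagonal error I would argue as follows. The leftover pieces $\tfrac12\beta_k\int\cos(4\pi kx^\intercal\hat w)\,dx$ and $\tfrac12\beta_k^\star\int\cos(2\pi kx^\intercal(w^\star+\hat w))\,dx$ are oscillatory integrals whose frequency vectors have norm at least $2\norm{\hat w}_2\ge 2(R_w-\sqrt d\,\epsilon_1)$ (using $\norm{\hat w-w^\star}_\infty\le\epsilon_1\Rightarrow\norm{\hat w-w^\star}_2\le\sqrt d\,\epsilon_1\Rightarrow\norm{\hat w}_2\ge R_w-\sqrt d\,\epsilon_1$); the integral estimates in \Cref{sec:int-bounds} bound them by quantities of the form $\sqrt d/(\pi R\cdot\text{frequency norm})$, producing the $\tfrac{\sqrt d}{2\pi R(R_w-\sqrt d\epsilon_1)}\max(\abs{\beta_k^\star},\abs{\beta_k})$ term. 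For $\tfrac12\int\cos(2\pi kx^\intercal(w^\star-\hat w))\,dx$ I would write it as $\tfrac12-\tfrac12\int\bigl(1-\cos(2\pi kx^\intercal(w^\star-\hat w))\bigr)\,dx$ and use $1-\cos\theta\le\theta^2/2$ together with $\abs{x^\intercal(w^\star-\hat w)}\le\norm{x}_2\,\norm{w^\star-\hat w}_2\le R\sqrt d\,\epsilon_1$ on the ball and $k\le D$; integrating the resulting quadratic bound contributes the $\epsilon_1$-linear term $\tfrac{10\pi^2D^2dR^2\epsilon_1}{3}\max(\abs{\beta_k^\star},\abs{\beta_k})$. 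For the off-diagonal terms, when $j\neq k$ the two frequency vectors in the first sum are $j\hat w,k\hat w$ with $\norm{(j-k)\hat w}_2\ge\norm{\hat w}_2\ge R_w-\sqrt d\,\epsilon_1$, and in the second sum they are $jw^\star,k\hat w$ with $\norm{jw^\star-k\hat w}_2\ge\abs{j-k}R_w-k\sqrt d\,\epsilon_1\ge R_w-D\sqrt d\,\epsilon_1$ (the ``sum'' frequencies being even larger); each such integral is again of order $\sqrt d/(R\cdot\text{separation})$ by \Cref{sec:int-bounds}, so summing against $\abs{\beta_j}$ gives the $\sum_{j\neq k}\abs{\beta_j}\tfrac{\sqrt d}{\pi R(R_w-\sqrt d\epsilon_1)}$ term, while summing against $\abs{\beta_j^\star}$ and using $\norm{\beta^\star}_1=1$ collapses that sum into the single $\tfrac{\sqrt d}{\pi R(R_w-D\sqrt d\epsilon_1)}$ term. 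Adding everything and invoking $\epsilon_1\le R_w/(D\sqrt d)$ to keep all denominators positive gives the claimed inequality.

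\textbf{Main obstacle.} The only genuinely technical input is the package of oscillatory-integral estimates in \Cref{sec:int-bounds}: one needs sharp decay bounds of the form ``$\abs{\int_{\norm{x}_1\le R}\cos(2\pi\xi^\intercal x)\,dx}$ is $O(\sqrt d/(R\norm{\xi}_2))$'' uniformly, with the various $\pm$-combinations of $j\hat w$, $k\hat w$, $jw^\star$ plugged in as $\xi$, and one must be careful to identify the worst-case frequency separation so that it comes out as $R_w-D\sqrt d\,\epsilon_1$ (worst case $\abs{j-k}=1$, $k=D$) rather than as an expression that could degenerate. Given those estimates, the rest --- the product-to-sum expansion, the isolation of the $j=k$ terms, the $1-\cos\theta\le\theta^2/2$ Taylor bound, and the $\norm{\beta^\star}_1=1$ normalization --- is routine bookkeeping of constants.
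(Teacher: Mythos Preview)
Your proposal is correct and follows essentially the same route as the paper: differentiate the loss, split into the $j=k$ diagonal and $j\neq k$ off-diagonal contributions, and control each resulting integral via the oscillatory-integral estimates of \Cref{sec:int-bounds} together with the $1-\cos\theta\le\theta^2/2$ bound for the low-frequency ``difference'' cosine and the $\norm{\beta^\star}_1=1$ collapse. The only difference is organizational---the paper invokes \Cref{coro:integral-upper}, \Cref{lem:integral}, \Cref{lem:integral-upper}, \Cref{lem:integral2}, and \Cref{coro:integral2-wstar-hat} as black boxes (and for the diagonal cross-term uses the angle-addition expansion $\cos(2\pi kx^\intercal\hat w)=\cos(2\pi kx^\intercal w^\star)\cos(\cdot)-\sin(2\pi kx^\intercal w^\star)\sin(\cdot)$ followed by exact evaluation of $\int(x^\intercal(\hat w-w^\star))^2\,dx$ on the cube, rather than your pointwise Cauchy--Schwarz bound), whereas you unpack the same arguments inline.
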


\begin{proof}
Recall that our loss function is
\begin{equation}
  \mathcal{L}_{w^\star}(\beta) = \int\limits_{x\sim \varphi^2}\left(\sum_{j=1}^D \beta_j^\star \cos(2\pi j x^\intercal w^\star) - \sum_{j=1}^D \beta_j \cos(2\pi j x^\intercal \hat{w})\right)^2\,dx.
\end{equation}
Taking the derivative of this with respect to $\beta_k$, we have
\begin{equation}
  \frac{\partial \mathcal{L}_{w^\star}}{\partial \beta_k} = -2\int_{x \sim \varphi^2}\cos(2\pi kx^\intercal \hat{w}) \left(\sum_{j=1}^D \beta_j^\star \cos(2\pi j x^\intercal w^\star) - \sum_{j=1}^D \beta_j \cos(2\pi j x^\intercal \hat{w})\right)\,dx.
\end{equation}
Separating out terms with $k \neq j$, we have
\begin{align}
  \frac{\partial \mathcal{L}_{w^\star}}{\partial \beta_k} &= 2\beta_k \int_{x \sim \varphi^2} \cos^2(2\pi kx^\intercal \hat{w})\,dx -2 \beta_k^\star\int_{x \sim \varphi^2} \cos(2\pi kx^\intercal \hat{w}) \cos(2\pi k x^\intercal w^\star)\,dx\\
  &- 2\int_{x\sim\varphi^2} \sum_{\substack{j=1\\j\neq k}}^D \beta_j^\star \cos(2\pi k x^\intercal \hat{w})\cos(2\pi jx^\intercal w^\star)\,dx\\
  &+ 2\int_{x\sim\varphi^2} \sum_{\substack{j=1\\j\neq k}}^D \beta_j \cos(2\pi kx^\intercal \hat{w})\cos(2\pi j x^\intercal \hat{w})\,dx.
\end{align}
We can upper and lower bound this expression using the integral bounds from Appendix~\ref{sec:int-bounds}.
First, to upper bound, we can use \Cref{coro:integral-upper}, \Cref{lem:integral}, \Cref{coro:integral2-wstar-hat}, and \Cref{lem:integral2}, for each of the terms respectively.
Then, we have
\begin{align}
  \frac{\partial \mathcal{L}_{w^\star}}{\partial \beta_k} &\leq 2\beta_k \int_{x \sim \varphi^2} \cos^2(2\pi kx^\intercal \hat{w})\,dx -2 \beta_k^\star\int_{x \sim \varphi^2} \cos(2\pi kx^\intercal \hat{w}) \cos(2\pi k x^\intercal w^\star)\,dx\\
  &+ 2\sum_{\substack{j=1\\j\neq k}}^D |\beta_k^\star| \left|\int_{x \sim \varphi^2} \cos(2\pi k x^\intercal \hat{w}) \cos(2\pi j x^\intercal w^\star)\,dx \right|+ 2\sum_{\substack{j=1\\j\neq k}}^D |\beta_k| \left|\int_{x \sim \varphi^2} \cos(2\pi k x^\intercal \hat{w}) \cos(2\pi j x^\intercal \hat{w})\,dx \right|\\
  &\leq \beta_k - \beta_k^\star + \frac{\sqrt{d}}{4\pi R(R_w - \sqrt{d}\epsilon_1)}|\beta_k| + \left(\frac{\sqrt{d}}{4\pi R_w R} + \frac{10\pi^2 D^2 dR^2 \epsilon_1}{3}\right) |\beta_k^\star| \\
  &+\sum_{\substack{j=1\\j\neq k}}^D |\beta_j^\star| \frac{\sqrt{d}}{\pi R(R_w - D\sqrt{d}\epsilon_1)} + \sum_{\substack{j=1\\j\neq k}}^D |\beta_j| \frac{\sqrt{d}}{\pi R(R_w - \sqrt{d}\epsilon_1)}\\
  &\leq (\beta_k - \beta_k^\star) + \left(\frac{\sqrt{d}}{2\pi R(R_w - \sqrt{d}\epsilon_1)} + \frac{10\pi^2 D^2 dR^2 \epsilon_1}{3}\right) \max(|\beta_k^\star|, |\beta_k|)\\
  &+ \frac{\sqrt{d}}{\pi R (R_w - D\sqrt{d} \epsilon_1)} + \sum_{\substack{j=1\\j\neq k}}^D |\beta_j| \frac{\sqrt{d}}{\pi R(R_w - \sqrt{d}\epsilon_1)}
\end{align}
where in the second inequality, we use \Cref{coro:integral-upper}, \Cref{lem:integral}, \Cref{coro:integral2-wstar-hat}, and \Cref{lem:integral2} for each term respectively.
In the third inequality, we use that $\max(|\beta_k^\star|, |\beta_k|) \geq |\beta_k^\star|, |\beta_k|$ and $\norm{\beta^\star}_1 = 1$ so that $\sum_{j\neq k} |\beta_j^\star| \leq 1$.
We also use that $R_w \geq R_w - \sqrt{d}\epsilon_1$.

We can also obtain a similar lower bound using \Cref{coro:integral}, \Cref{lem:integral-upper}, \Cref{coro:integral2-wstar-hat}, and \Cref{lem:integral2}.
\begin{align}
  \frac{\partial \mathcal{L}_{w^\star}}{\partial \beta_k} &\geq (\beta_k - \beta_k^\star) - \frac{\sqrt{d}}{4\pi R(R_w - \sqrt{d}\epsilon_1)}|\beta_k| - \left(\frac{\sqrt{d}}{4\pi R_w R} + 2\pi D d\epsilon_1 R\right)|\beta_k^\star|\\
  &- 2\int_{x\sim\varphi^2} \sum_{\substack{j=1\\j\neq k}}^D \beta_j^\star \cos(2\pi k x^\intercal \hat{w})\cos(2\pi jx^\intercal w^\star)\,dx + 2\int_{x\sim\varphi^2} \sum_{\substack{j=1\\j\neq k}}^D \beta_j \cos(2\pi kx^\intercal \hat{w})\cos(2\pi j x^\intercal \hat{w})\,dx\\
  &\geq (\beta_k - \beta_k^\star) - \left(\frac{\sqrt{d}}{2\pi R(R_w - \sqrt{d}\epsilon_1)} + 2\pi D d\epsilon_1 R\right)\max(|\beta_k^\star|, |\beta_k|)\\
  &-\left|-2\int_{x\sim\varphi^2} \sum_{\substack{j=1\\j\neq k}}^D \beta_j^\star \cos(2\pi k x^\intercal \hat{w})\cos(2\pi jx^\intercal w^\star)\,dx\right| - \left|2\int_{x\sim\varphi^2} \sum_{\substack{j=1\\j\neq k}}^D \beta_j \cos(2\pi kx^\intercal \hat{w})\cos(2\pi j x^\intercal \hat{w})\,dx\right|\\
  &\geq (\beta_k - \beta_k^\star) - \left(\frac{\sqrt{d}}{2\pi R(R_w - \sqrt{d}\epsilon_1)} + 2\pi D d\epsilon_1 R\right)\max(|\beta_k^\star|, |\beta_k|)\\
  &-\frac{\sqrt{d}}{\pi R(R_w - D\sqrt{d}\epsilon_1)} - \sum_{\substack{j=1\\j\neq k}}^D |\beta_j| \frac{\sqrt{d}}{\pi R (R_w - \sqrt{d}\epsilon_1)}\\
  &\geq (\beta_k - \beta_k^\star) - \left(\frac{\sqrt{d}}{2\pi R(R_w - \sqrt{d}\epsilon_1)} + \frac{10\pi^2 D^2dR^2\epsilon_1}{3}\right)\max(|\beta_k^\star|, |\beta_k|)\\
  &-\frac{\sqrt{d}}{\pi R(R_w - D\sqrt{d}\epsilon_1)} - \sum_{\substack{j=1\\j\neq k}}^D |\beta_j| \frac{\sqrt{d}}{\pi R (R_w - \sqrt{d}\epsilon_1)}.
\end{align}
In the first inequality, we use \Cref{coro:integral} and \Cref{lem:integral-upper}.
In the second inequality, we use that $\max(|\beta_k^\star|, |\beta_k|) \geq |\beta_k|, |\beta_k^\star|$ and $R_w \geq R_w - \sqrt{d}\epsilon_1$.
In the third inequality, we use \Cref{lem:integral2} and \Cref{coro:integral2-wstar-hat}.
We also use that $\norm{\beta^\star}_1 = 1$ so that $\sum_{j \neq k}|\beta_j^\star| \leq 1$.
In the last inequality, we use that $2\pi D d\epsilon_1 R \leq 10\pi^2 D^2 dR^2 \epsilon_1/3$.

Combining these two inequalities, we have that
\begin{align}
  \left| \frac{\partial \mathcal{L}_{w^\star}}{\partial \beta_k} - (\beta_k - \beta_k^\star) \right| &\leq \left(\frac{\sqrt{d}}{2\pi R(R_w - \sqrt{d}\epsilon_1)} + \frac{10\pi^2 D^2 dR^2 \epsilon_1}{3}\right) \max(|\beta_k^\star|, |\beta_k|)\\
  &+ \frac{\sqrt{d}}{\pi R(R_w - D\sqrt{d}\epsilon_1)} + \sum_{\substack{j=1\\j\neq k}}^D |\beta_j| \frac{\sqrt{d}}{\pi R (R_w - \sqrt{d}\epsilon_1)}.
\end{align}
\end{proof}

Now, we can use standard gradient descent, which converges as follows.

\begin{lemma}[Gradient descent convergence]
\label{lem:grad-converge}
Let $\epsilon_1, \epsilon_2 > 0$.
Let $w^\star \in \mathbb{R}^d$ be unknown with norm $R_w > 0$.
Let $g_{w^\star}:\mathbb{R}^d \to [-1,1]$ be defined as $g_{w^\star}(x) = \tilde{g}(x^\intercal w^\star)$ for $\tilde{g}$ given in \Cref{eq:g-tilde}.
Suppose we have an approximation $\hat{w} \in \mathbb{R}^d$ such that $\norm{\hat{w} - w^\star}_\infty \leq \epsilon_1$.
Also, suppose that
\begin{equation}
\label{eq:grad-inform}
\left|\frac{\partial \mathcal{L}_{w^\star}}{\partial \beta_k}(\beta_k^{(t)}) - (\beta_k^{(t)} - \beta_k^\star)\right| < \epsilon_2
\end{equation}
for all $k \in [D]$. Here, $t$ denotes the step of gradient descent. Then, gradient descent with step size $\eta = \mathcal{O}(1)$ with $0 < \eta < 1$ and initial point $\beta^{(0)} = 0$ converges as follows:
\begin{equation}
|\beta_k^{(t+1)} - \beta_k^\star| \leq (1-\eta)^t + \eta t\epsilon_2.
\end{equation}
\end{lemma}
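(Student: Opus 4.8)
The plan is a textbook contraction argument, carried out coordinate by coordinate. One step of gradient descent on $\mathcal{L}_{w^\star}$ with step size $\eta$ and initial point $\beta^{(0)} = 0$ is $\beta_k^{(t+1)} = \beta_k^{(t)} - \eta\,\partial\mathcal{L}_{w^\star}/\partial\beta_k(\beta^{(t)})$ for each $k \in [D]$. First I would rewrite the informative-gradients bound \eqref{eq:grad-inform} as $\partial\mathcal{L}_{w^\star}/\partial\beta_k(\beta^{(t)}) = (\beta_k^{(t)} - \beta_k^\star) + \delta_k^{(t)}$ with $|\delta_k^{(t)}| < \epsilon_2$, so that the update reads
\begin{equation}
  \beta_k^{(t+1)} - \beta_k^\star = (1-\eta)\bigl(\beta_k^{(t)} - \beta_k^\star\bigr) - \eta\,\delta_k^{(t)}.
\end{equation}
Taking absolute values and applying the triangle inequality yields the scalar recursion $\bigl|\beta_k^{(t+1)} - \beta_k^\star\bigr| \leq (1-\eta)\bigl|\beta_k^{(t)} - \beta_k^\star\bigr| + \eta\epsilon_2$, where the contraction factor satisfies $0 < 1-\eta < 1$ by the hypothesis on $\eta$.

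Next I would unroll this recursion by induction on $t$. At $t=0$ the error is $|\beta_k^{(0)} - \beta_k^\star| = |\beta_k^\star| \leq \norm{\beta^\star}_1 = 1$. Iterating gives
\begin{equation}
  \bigl|\beta_k^{(t)} - \beta_k^\star\bigr| \;\leq\; (1-\eta)^t\,\bigl|\beta_k^{(0)} - \beta_k^\star\bigr| + \eta\epsilon_2\sum_{i=0}^{t-1}(1-\eta)^i \;\leq\; (1-\eta)^t + \eta t\epsilon_2,
\end{equation}
where the geometric sum is bounded crudely by its number of terms (using instead $\sum_{i=0}^{t-1}(1-\eta)^i \leq 1/\eta$ would give the sharper, $t$-independent error $(1-\eta)^t + \epsilon_2$, which is what one really wants downstream). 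Matching the off-by-one in the stated index, $\bigl|\beta_k^{(t+1)} - \beta_k^\star\bigr| \leq (1-\eta)^t + \eta t\epsilon_2$, is harmless since $(1-\eta)^{t+1} \leq (1-\eta)^t$ and the surplus geometric term contributes only $O(\epsilon_2)$.

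The one subtlety worth flagging is that I should not pretend the $k$-th coordinate evolves in isolation: the perturbation $\delta_k^{(t)}$ may depend on the full iterate $\beta^{(t)}$, hence on the other coordinates. But \eqref{eq:grad-inform} supplies the uniform bound $|\delta_k^{(t)}| < \epsilon_2$ at \emph{every} iterate, so the scalar recursion above closes for each $k$ independently and no joint control of the trajectory is needed. I do not expect any genuine obstacle here: the lemma is just the standard linear convergence of gradient descent for an objective whose Hessian is, up to $O(\epsilon_2)$ slack, the identity. Essentially all the real work for this step of the algorithm lies in \Cref{lem:grad-inform} and in the subsequent choice of $R$ and $\epsilon_1$ (and hence $\epsilon_2$) that makes $\eta t\epsilon_2$ small enough to force $\mathcal{L}_{w^\star}(\hat\beta) \leq \epsilon$ after $t = \Theta(\log\sqrt{D/\epsilon})$ iterations.
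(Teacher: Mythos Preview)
Your proposal is correct and follows essentially the same contraction argument as the paper: write the update as $\beta_k^{(t+1)} - \beta_k^\star = (1-\eta)(\beta_k^{(t)} - \beta_k^\star) - \eta\delta_k^{(t)}$, take absolute values to get the scalar recursion, unroll it, and use $|\beta_k^{(0)} - \beta_k^\star| = |\beta_k^\star| \leq \|\beta^\star\|_1 = 1$ together with the crude bound $\sum_i (1-\eta)^i \leq t$. The paper does not explicitly discuss the off-by-one or the possible coupling of coordinates through $\delta_k^{(t)}$, but your remarks on both points are valid and do not change the argument.
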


\begin{proof}
This proof is straightforward following the standard gradient descent rule
\begin{equation}
\beta_k^{(t+1)} = \beta_k^{(t)} - \eta \frac{\partial \mathcal{L}_{w^\star}}{\partial \beta_k}(\beta_k^{(t)}).
\end{equation}
Plugging this in and applying Eq.~\eqref{eq:grad-inform}, we have
\begin{align}
|\beta_k^{(t+1)} - \beta_k^\star| &= \left|\beta_k^{(t)} - \eta\frac{\partial \mathcal{L}_{w^\star}}{\partial \beta_k}(\beta_k^{(t)}) - \beta_k^\star\right|\\
&\leq \left|\beta_k^{(t)} - \beta_k^\star - \eta(\beta_k^{(t)} - \beta_k^\star) + \eta \epsilon_2\right|\\
&\leq (1-\eta)|\beta_k^{(t)} - \beta_k^\star| + \eta\epsilon_2.
\end{align}
Applying this inequality recursively, we have
\begin{align}
|\beta_k^{(t+1)} - \beta_k^\star| &\leq (1-\eta)^t |\beta_k^{(0)} - \beta_k^\star| + \eta\sum_{i=1}^t (1-\eta)^i \epsilon_2\\
&\leq (1-\eta)^t |\beta_k^{(0)} - \beta_k^\star| + \eta t \epsilon_2,
\end{align}
where the last line follows because $0 < \eta < 1$ so that $0 < 1-\eta < 1$.
Now, because we initialize to $\beta^{(0)} = 0$, then
\begin{equation}
|\beta_k^{(0)} - \beta_k^\star| = |\beta_k^\star| \leq 1,
\end{equation}
where $|\beta_k^\star| \leq 1$ because $\norm{\beta_k^\star}_1 =1$.
Thus, we have
\begin{equation}
|\beta_k^{(t+1)} - \beta_k^\star| \leq (1-\eta)^t + \eta t \epsilon_2,
\end{equation}
as claimed.
\end{proof}

To help us choose parameters such as $\epsilon_1, \epsilon_2,$ and $t$ properly, we also need to show that the updated parameters via gradient descent do not become too large.
In particular, recall from \Cref{eq:g-tilde} that the true parameters satisfy $|\beta^\star_k| < 1$ because $\norm{\beta^\star}_1 = 1$.
The following lemma states that the parameters found via gradient descent are not much larger than this.

\begin{lemma}[Parameter bound]
\label{lem:param-bound}
Let $w^\star \in \mathbb{R}^d$ be unknown with norm $R_w > 0$. Let $g_{w^\star}:\mathbb{R}^d \to [-1,1]$ be defined as $g_{w^\star}(x) = \tilde{g}(x^\intercal w^\star)$ for $\tilde{g}$ given in \Cref{eq:g-tilde}.
Suppose we have an approximation $\hat{w} \in \mathbb{R}^d$ such that $\norm{\hat{w} - w^\star}_\infty \leq \epsilon_1$.
Suppose
\begin{equation}
  R \geq \max\left(D^2, \frac{16 D\sqrt{d}}{\pi R_w}\right), \quad \epsilon_1 \leq \min\left(\frac{3}{40 \pi^2 D^6 d}, \frac{R_w}{2D\sqrt{d}}\right).
\end{equation}
Then, 
\begin{equation}
|\beta_k^{(t)}| < 2
\end{equation}
for all $k \in [D]$. Here, $\beta_k^{(t)}$ denotes the parameters at the $t$-th step of gradient descent.
\end{lemma}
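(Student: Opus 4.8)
The plan is to prove the bound by induction on the gradient-descent step $t$, using Lemma~\ref{lem:grad-inform} (informative gradients) at each step to control the update. The base case $t=0$ is immediate, since we initialize at $\beta^{(0)}=0$, so $|\beta_k^{(0)}|=0<2$ for every $k\in[D]$.

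For the inductive step, assume $|\beta_k^{(t)}|<2$ for all $k\in[D]$. First note $\epsilon_1\le R_w/(2D\sqrt d)\le R_w/(D\sqrt d)$, so Lemma~\ref{lem:grad-inform} applies at the point $\beta^{(t)}$; the same bound on $\epsilon_1$ also gives $R_w-\sqrt d\epsilon_1\ge R_w/2$ and $R_w-D\sqrt d\epsilon_1\ge R_w/2$, which clears those denominators. Using the inductive hypothesis in the form $\max(|\beta_k^\star|,|\beta_k^{(t)}|)<2$ and $\sum_{j\ne k}|\beta_j^{(t)}|<2D$, together with $R\ge 16D\sqrt d/(\pi R_w)$ for the three terms of Lemma~\ref{lem:grad-inform} carrying an $R$ in the denominator and $\epsilon_1\le 3/(40\pi^2 D^6 d)$ for the remaining term $\tfrac{10\pi^2 D^2 d R^2\epsilon_1}{3}\max(|\beta_k^\star|,|\beta_k^{(t)}|)$, each of the four contributions is bounded by a small explicit constant, so there is an $\epsilon_2<1$ with
\begin{equation}
  \left|\frac{\partial \mathcal{L}_{w^\star}}{\partial \beta_k}(\beta^{(t)}) - (\beta_k^{(t)}-\beta_k^\star)\right| \le \epsilon_2 < 1 \qquad \text{for all } k\in[D].
\end{equation}

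It then remains to feed this into one step of the gradient-descent recursion, exactly as in the proof of Lemma~\ref{lem:grad-converge}: writing $\xi_k \triangleq \partial_{\beta_k}\mathcal{L}_{w^\star}(\beta^{(t)}) - (\beta_k^{(t)}-\beta_k^\star)$ so that $|\xi_k|\le\epsilon_2$, the update becomes $\beta_k^{(t+1)} = (1-\eta)\beta_k^{(t)} + \eta\beta_k^\star - \eta\xi_k$, and using $0<\eta<1$, $|\beta_k^\star|\le 1$ (since $\|\beta^\star\|_1=1$) and $|\beta_k^{(t)}|<2$ we get
\begin{equation}
  |\beta_k^{(t+1)}| \le (1-\eta)|\beta_k^{(t)}| + \eta|\beta_k^\star| + \eta|\xi_k| < 2(1-\eta) + \eta(1+\epsilon_2) = 2-\eta(1-\epsilon_2) < 2,
\end{equation}
where the last inequality uses $\eta>0$ and $\epsilon_2<1$. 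This closes the induction; since the argument imposes no constraint on $t$, the bound $|\beta_k^{(t)}|<2$ holds for every $k\in[D]$ and every iteration $t$.

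The main obstacle is the term of Lemma~\ref{lem:grad-inform} that grows with the truncation radius, namely $\tfrac{10\pi^2 D^2 d R^2\epsilon_1}{3}\max(|\beta_k^\star|,|\beta_k|)$: unlike the other three error terms it is \emph{increasing} in $R$, so keeping the total perturbation below $1$ forces the bound on $\epsilon_1$ to absorb the $R^2$ factor. This is precisely why the hypotheses pair $R\ge D^2$ with $\epsilon_1\le 3/(40\pi^2 D^6 d)$, and the one genuinely careful point is checking that the chosen constants yield $\epsilon_2$ \emph{strictly} below $1$ rather than merely $\le 1$, since the contraction above only gives $|\beta_k^{(t+1)}|<2$ when $1-\epsilon_2>0$. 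Everything else is a routine substitution into Lemma~\ref{lem:grad-inform} followed by the one-step estimate just described.
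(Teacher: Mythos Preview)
Your proposal is correct and follows essentially the same approach as the paper: induction on $t$ with the trivial base case $\beta^{(0)}=0$, then at the inductive step invoke Lemma~\ref{lem:grad-inform} under the inductive hypothesis $|\beta_j^{(t)}|<2$ to get a perturbation bound on the gradient, and feed this into one gradient-descent step to obtain $|\beta_k^{(t+1)}|<2(1-\eta)+\eta(1+\epsilon_2)<2$. The paper carries out the same computation, only making the intermediate bound slightly more explicit (combining the four terms of Lemma~\ref{lem:grad-inform} into two and checking each is at most $1/2$); your identification of the $R^2\epsilon_1$ term as the only delicate one matches the paper's treatment exactly.
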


\begin{proof}
We prove this by induction on the $t$ steps of gradient descent. For the base case of $t = 0$, this is clearly satisfied by our choice of initialization. Namely, we initialize to $\beta_k^{(0)} = 0$ for all $k$. Thus, we clearly have $|\beta_k^{(0)}| = 0 < 2$ for all $k \in [D]$.

For the inductive step, suppose that for some step $t > 0$ that $|\beta_k^{(t)}| < 2$ for all $k \in [D]$.
We want to prove that $|\beta_k^{(t + 1)}| < 2$ for all $k \in [D]$. Let $k \in [D]$. 
By~\Cref{lem:grad-inform},
\begin{align}
  \left| \frac{\partial \mathcal{L}_{w^\star}}{\partial \beta_k} - (\beta_k - \beta_k^\star) \right| &\leq \left(\frac{\sqrt{d}}{2\pi R(R_w - \sqrt{d}\epsilon_1)} + \frac{10\pi^2 D^2 dR^2 \epsilon_1}{3}\right) \max(|\beta_k^\star|, |\beta_k|)\\
  &+ \frac{\sqrt{d}}{\pi R(R_w - D\sqrt{d}\epsilon_1)} + \sum_{\substack{j=1\\j\neq k}}^D |\beta_j| \frac{\sqrt{d}}{\pi R (R_w - \sqrt{d}\epsilon_1)}.
\end{align}
Note that the condition needed for \Cref{lem:grad-inform} (i.e., $\epsilon_1 \leq R_w/(D\sqrt{d})$) is satisfied for our choice of $\epsilon_1$.
Using that $D\epsilon_1 \geq \epsilon_1$ (since $D \geq 1$), we can simplify this:
\begin{align}
  \left| \frac{\partial \mathcal{L}_{w^\star}}{\partial \beta_k} - (\beta_k - \beta_k^\star) \right| &\leq \left(\frac{\sqrt{d}}{2\pi R(R_w - D\sqrt{d}\epsilon_1)} + \frac{10\pi^2 D^2 dR^2 \epsilon_1}{3}\right) \max(|\beta_k^\star|, |\beta_k|)\\
  &+ \frac{\sqrt{d}}{\pi R(R_w - D\sqrt{d}\epsilon_1)} + \sum_{\substack{j=1\\j\neq k}}^D |\beta_j| \frac{\sqrt{d}}{\pi R (R_w - D\sqrt{d}\epsilon_1)}\\
  &= \left(\frac{\sqrt{d}}{2\pi R(R_w - D\sqrt{d}\epsilon_1)} + \frac{10\pi^2 D^2 dR^2 \epsilon_1}{3}\right) \max(|\beta_k^\star|, |\beta_k|) + \frac{\sqrt{d}\sum_{\substack{j=1,j\neq k}}^D |\beta_j| + \sqrt{d}}{\pi R(R_w - D\sqrt{d}\epsilon_1)}.
\end{align}
Evaluating at $\beta_k = \beta_k^{(t)}$, we have
\begin{align}
  &\left|\frac{\partial \mathcal{L}_{w^\star}}{\partial \beta_k}(\beta_k^{(t)}) - (\beta_k^{(t)} - \beta_k^\star) \right|\\
  &\leq \left(\frac{\sqrt{d}}{2\pi R(R_w - D\sqrt{d}\epsilon_1)} + \frac{10\pi^2 D^2 dR^2 \epsilon_1}{3}\right) \max(|\beta_k^\star|, |\beta_k^{(t)}|) + \frac{\sqrt{d}\sum_{\substack{j=1,j\neq k}}^D |\beta_j^{(t)}| + \sqrt{d}}{\pi R(R_w - D\sqrt{d}\epsilon_1)}\\
  &\leq \frac{\sqrt{d}}{\pi R(R_w - D\sqrt{d}\epsilon_1)} + \frac{20\pi^2 D^2 dR^2 \epsilon_1}{3} + \frac{2D+1}{\pi R(R_w/\sqrt{d} - D\epsilon_1)}\\
  &= \frac{20\pi^2 D^2 dR^2 \epsilon_1}{3} + \frac{2D+2}{\pi R(R_w/\sqrt{d} - D\epsilon_1)},\label{eq:grad-bound}
\end{align}
where in the second to last line we used the inductive hypothesis.
We will use this to bound the parameters after one step of gradient descent. Recall that the update rule for gradient descent is
\begin{equation}
\beta_k^{(t+1)} = \beta_k^{(t)} - \eta \frac{\partial \mathcal{L}_{w^\star}}{\partial \beta_k}(\beta_k^{(t)})
\end{equation}
for a step size $\eta = \mathcal{O}(1)$. Then, using the above inequality, we have
\begin{align}
  |\beta_k^{(t+1)}| &= \left|\beta_k^{(t)} - \eta \frac{\partial \mathcal{L}_{w^\star}}{\partial \beta_k}(\beta_k^{(t)})\right|\\
  &\leq \left|\beta_k^{(t)} - \eta(\beta_k^{(t)} - \beta_k^\star) + \eta\left(\frac{20\pi^2 D^2 dR^2 \epsilon_1}{3} + \frac{2D+2}{\pi R(R_w/\sqrt{d} - D\epsilon_1)}\right)\right|\\
  &< 2(1-\eta) + \eta + \eta\left(\frac{20\pi^2 D^2 dR^2 \epsilon_1}{3} + \frac{2D+2}{\pi R(R_w/\sqrt{d} - D\epsilon_1)}\right)\\
  &= 2 - \eta + \eta\left(\frac{20\pi^2 D^2 dR^2 \epsilon_1}{3} + \frac{2D+2}{\pi R(R_w/\sqrt{d} - D\epsilon_1)}\right)\\
  &\leq 2 - \eta + \eta\left(\frac{20\pi^2 D^2 dR^2 \epsilon_1}{3} + \frac{4D}{\pi R(R_w/\sqrt{d} - D\epsilon_1)}\right),
\end{align}
where in the second line, we used~\Cref{eq:grad-bound}.
In the third line, we used triangle inequality, the inductive hypothesis that $|\beta_k^{(t)}| < 2$, and $|\beta_k^\star| < 1$.
In the last line, we use $D \geq 1$.
In order to achieve the result, we need
\begin{equation}
-\eta + \eta\left(\frac{20\pi^2 D^2 dR^2 \epsilon_1}{3} + \frac{4D}{\pi R(R_w/\sqrt{d} - D\epsilon_1)}\right) \leq 0.
\end{equation}
Rearranging, we need to show that
\begin{equation}
  \label{eq:param-to-bound}
  \frac{20\pi^2 D^2 dR^2 \epsilon_1}{3} + \frac{4D}{\pi R(R_w/\sqrt{d} - D\epsilon_1)} \leq 1.
\end{equation}
Consider taking
\begin{equation}
  \label{eq:param-eps1-R-bounds}
  R \geq \max\left(D^2, \frac{16 D\sqrt{d}}{\pi R_w}\right), \quad \epsilon_1 \leq \min\left(\frac{3}{40 \pi^2 D^6 d}, \frac{R_w}{2D\sqrt{d}}\right).
\end{equation}
We want to show that these choices of $R,\epsilon_1$ allow us to bound each term on the lefthand side by $1/2$ to obtain the required bound.
For the first term, consider taking $\epsilon_1 \leq 3/(40\pi^2 R^3 d)$ and $R$ as in the first element in the max of \Cref{eq:param-eps1-R-bounds}, we have
\begin{equation}
  \frac{20\pi^2 D^2 dR^2 \epsilon_1}{3} \leq \frac{D^2}{2R} \leq \frac{1}{2}.
\end{equation}
Finally, for the last term, using $\epsilon_1 \leq R_w/(2D\sqrt{d})$ and $R \geq 16D\sqrt{d}/(\pi R_w)$, we have
\begin{equation}
  \frac{4D}{\pi R(R_w/\sqrt{d} - D\epsilon_1)} \leq \frac{8D\sqrt{d}}{\pi R R_w} \leq \frac{1}{2}.
\end{equation}
\end{proof}

With the past three lemmas, we can now begin to set the parameters involved to obtain the desired guarantees.
As a corollary of \Cref{lem:param-bound}, we can obtain the number of steps $t$ and accuracy of the gradient $\epsilon_2$ needed to achieve a desired accuracy for gradient descent.

\begin{corollary}[Convergence steps and accuracy]
\label{coro:grad-steps}
Let $w^\star \in \mathbb{R}^d$ be unknown with norm $R_w > 0$ and $|w_i^\star| \geq R_w/d^2$ for all $i \in [d]$.
Let $g_{w^\star}:\mathbb{R}^d \to [-1,1]$ be defined as $g_{w^\star}(x) = \tilde{g}(x^\intercal w^\star)$ for $\tilde{g}$ given in \Cref{eq:g-tilde}.
Suppose that
\begin{equation}
\left|\frac{\partial \mathcal{L}_{w^\star}}{\partial \beta_k}(\beta_k^{(t)}) - (\beta_k^{(t)} - \beta_k^\star)\right| < \epsilon_2
\end{equation}
for $\epsilon_2 > 0$, for all $k \in [D]$. Here, $t$ denotes the $t$-th step of gradient descent.
Let $\epsilon_3 > 0$. Then, gradient descent with step size $\eta = \mathcal{O}(1)$ with $0 < \eta < 1$ and initial point $\beta^{(0)} = 0$ requires
\begin{equation}
t = \Theta\left(\log(\sqrt{D}/\epsilon_3)\right)
\end{equation}
and
\begin{equation}
\epsilon_2 = \mathcal{O}\left(\frac{\epsilon_3}{\sqrt{D}\log(\sqrt{D}/\epsilon_3)}\right)
\end{equation}
to converge such that
\begin{equation}
\norm{\beta^{(t+1)} - \beta^\star}_2 \leq \epsilon_3.
\end{equation}
\end{corollary}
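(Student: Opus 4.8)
The plan is to combine the per-coordinate convergence bound from \Cref{lem:grad-converge} with a union over the $D$ coordinates, and then to balance the two sources of error so that each contributes at most $\epsilon_3/2$ to the final $\ell_2$ distance.

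First I would invoke \Cref{lem:grad-converge}, whose hypothesis --- the informative-gradient condition \eqref{eq:grad-inform} --- is exactly what is assumed in the statement of the corollary, to obtain $|\beta_k^{(t+1)} - \beta_k^\star| \leq (1-\eta)^t + \eta t \epsilon_2$ for every $k \in [D]$. Squaring, summing over $k$, and taking square roots gives
\[
  \norm{\beta^{(t+1)} - \beta^\star}_2 \leq \sqrt{D}\left((1-\eta)^t + \eta t \epsilon_2\right).
\]
It then suffices to guarantee the two bounds $\sqrt{D}(1-\eta)^t \leq \epsilon_3/2$ and $\sqrt{D}\,\eta t\,\epsilon_2 \leq \epsilon_3/2$ separately.

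For the first bound, since $0 < \eta < 1$ is a constant we have $\log(1/(1-\eta)) = \Theta(1)$, so $\sqrt{D}(1-\eta)^t \leq \epsilon_3/2$ holds as soon as $t \geq \log(2\sqrt{D}/\epsilon_3)/\log(1/(1-\eta))$, i.e.\ $t = \Theta\!\left(\log(\sqrt{D}/\epsilon_3)\right)$. Fixing such a $t$, the second bound becomes a constraint on $\epsilon_2$ alone, namely $\epsilon_2 \leq \epsilon_3/(2\sqrt{D}\,\eta\, t)$; substituting $\eta = \Theta(1)$ and the value of $t$ just chosen yields $\epsilon_2 = \mathcal{O}\!\left(\epsilon_3 / (\sqrt{D}\log(\sqrt{D}/\epsilon_3))\right)$, as claimed.

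The only subtlety --- which I would flag as the main obstacle, though it is a mild one --- is the apparent circularity that the admissible bound on $\epsilon_2$ depends on $t$, while $t$ is itself chosen to kill the first error term; this is resolved by fixing $t$ first from the (purely $\eta$-dependent) geometric decay and only then reading off the requirement on $\epsilon_2$. One should also keep in mind that attaining the informative-gradient hypothesis with this $\epsilon_2$ for the actual loss relies on \Cref{lem:grad-inform} together with the trajectory bound $\max(|\beta_k^\star|,|\beta_k^{(t)}|) < 2$ from \Cref{lem:param-bound}, but within the corollary itself the gradient condition is taken as given, so no further estimation is needed here.
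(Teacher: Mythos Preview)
Your proposal is correct and follows essentially the same approach as the paper: invoke \Cref{lem:grad-converge} for the per-coordinate bound, then split the error budget between the geometric decay term and the accumulated gradient-noise term so that each contributes $\epsilon_3/(2\sqrt D)$ per coordinate (equivalently $\epsilon_3/2$ in $\ell_2$), solving first for $t$ and then for $\epsilon_2$. The only cosmetic difference is that the paper bounds each coordinate by $\epsilon_3/\sqrt{D}$ and then passes to the $\ell_2$ norm at the end, whereas you aggregate to $\ell_2$ first and then split --- these are the same computation.
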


\begin{proof}
By~\Cref{lem:grad-converge}, we have
\begin{equation}
  |\beta_k^{(t+1)} - \beta_k^\star| \leq (1-\eta)^t + \eta t \epsilon_2.
\end{equation}
Then, in order to have $(1-\eta)^t \leq \epsilon_3/(2\sqrt{D})$, we can use
\begin{equation}
  t\log(1-\eta) = \log\left(\frac{\epsilon_3}{2\sqrt{D}}\right).
\end{equation}
Solving for $t$, we obtain
\begin{equation}
  t = \frac{\log(2\sqrt{D}/\epsilon_3)}{\log(1/c)}
\end{equation}
for $c = 1-\eta < 1$. Since $\eta$ is a constant, then we obtain the claim.
It remains to find $\epsilon_2$ such that
\begin{equation}
  \eta t \epsilon_2 < \frac{\epsilon_3}{2\sqrt{D}}.
\end{equation}
Plugging in our previously found $t$, then we arrive at
\begin{equation}
\label{eq:eps2-upper-eps3}
  \epsilon_2 \leq \frac{\log(1/c)\epsilon_3}{2\eta \sqrt{D}\log(2\sqrt{D}/\epsilon_3)},
\end{equation}
where again taking $\eta =\mathcal{O}(1)$ gives the claim.
Putting these two pieces together, we have
\begin{equation}
  |\beta_k^{(t+1)} - \beta_k^\star| \leq (1-\eta)^t+ \eta t \epsilon_2 \leq \frac{\epsilon_3}{\sqrt{D}}.
\end{equation}
Finally, we obtain the $2$-norm bound
\begin{equation}
  \norm{\beta^{(t+1)} - \beta^\star}_2 = \sqrt{\sum_{k=1}^D |\beta^{(t+1)}_k - \beta_k^\star|^2} \leq \epsilon_3.
\end{equation}
\end{proof}

With this, we have set an accuracy $\epsilon_2$, which we need the gradients to satisfy.
Using \Cref{lem:grad-inform} and \Cref{lem:param-bound}, we show that we can achieve this $\epsilon_2$ accuracy from \Cref{coro:grad-steps} by setting the parameters $R, \epsilon_1$ appropriately.

\begin{corollary}[Achieving required gradient accuracy]
\label{coro:grad-accuracy}
Let $1 > \epsilon_2, \epsilon_3 > 0$.
Let $w^\star \in \mathbb{R}^d$ be unknown with norm $R_w > 0$ and $|w_i^\star| \geq R_w/d^2$.
Let $g_{w^\star}:\mathbb{R}^d \to [-1,1]$ be defined as $g_{w^\star}(x) = \tilde{g}(x^\intercal w^\star)$ for $\tilde{g}$ given in \Cref{eq:g-tilde}.
Suppose we have an approximation $\hat{w} \in \mathbb{R}^d$ such that $\norm{\hat{w} - w^\star}_\infty \leq \epsilon_1$.
Suppose that
\begin{equation}
  R \geq \max\left(\frac{D^2}{\epsilon_2}, \frac{16D\sqrt{d}}{\pi R_w \epsilon_2}\right),\quad \epsilon_1 \leq \min\left(\frac{3\epsilon_2^3}{40\pi^2 D^6 d}, \frac{R_w}{2D\sqrt{d}}\right).
\end{equation}
Then, we can achieve
\begin{equation}
  \left|\frac{\partial \mathcal{L}_{w^\star}}{\partial \beta_k}(\beta_k^{(t)}) - (\beta_k^{(t)} - \beta_k^\star)\right| < \epsilon_2
\end{equation}
for all $k \in [D]$, where
\begin{equation}
  \epsilon_2 = \mathcal{O}\left(\frac{\epsilon_3}{\sqrt{D}\log(\sqrt{D}/\epsilon_3)}\right).
\end{equation}
Here, $t$ denotes the $t$-th step of gradient descent. 
Writing the bounds on $R$ and $\epsilon_1$ in terms of $\epsilon_3$, we have
\begin{equation}
  R = \tilde{\Omega}\left(\max\left(\frac{D^{5/2}}{\epsilon_3}, \frac{D^{3/2}\sqrt{d}}{R_w \epsilon_3}\right)\right),\quad \epsilon_1 = \tilde{\mathcal{O}}\left(\min\left(\frac{\epsilon_3^2}{D^{13/2}d}, \frac{R_w}{D\sqrt{d}}\right)\right).
\end{equation}
\end{corollary}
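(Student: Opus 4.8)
The plan is to assemble the three preceding ingredients: \Cref{lem:grad-inform} controls how close $\partial\mathcal{L}_{w^\star}/\partial\beta_k$ is to $\beta_k-\beta_k^\star$ at an arbitrary point, \Cref{lem:param-bound} guarantees that the gradient-descent iterates never leave the region $|\beta_k^{(t)}|<2$, and \Cref{coro:grad-steps} fixes the accuracy $\epsilon_2$ and step count $t=\Theta(\log(\sqrt{D}/\epsilon_3))$ needed to drive the $\ell_2$-error of the iterates below $\epsilon_3$. The corollary is then obtained by substituting these into one another and checking that the stated choices of $R$ and $\epsilon_1$ make every error term smaller than $\epsilon_2$.

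Concretely, I would first note that since $\epsilon_2<1$, the assumed bounds $R\geq\max(D^2/\epsilon_2,\,16D\sqrt{d}/(\pi R_w\epsilon_2))$ and $\epsilon_1\leq\min(3\epsilon_2^3/(40\pi^2D^6d),\,R_w/(2D\sqrt{d}))$ imply the hypotheses of \Cref{lem:param-bound} for the chosen value of $R$: the cubic bound $\epsilon_1\lesssim\epsilon_2^3/(D^6d)$ is exactly what the proof of \Cref{lem:param-bound} requires of $\epsilon_1$ when $R\asymp D^2/\epsilon_2$, namely $\epsilon_1\lesssim 1/(R^3d)$. Hence $|\beta_k^{(t)}|<2$ for all $k\in[D]$ and all $t$. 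Next, at any iterate $\beta^{(t)}$, \Cref{lem:grad-inform} applies since $\epsilon_1\leq R_w/(2D\sqrt{d})\leq R_w/(D\sqrt{d})$; plugging in $\max(|\beta_k^\star|,|\beta_k^{(t)}|)<2$, $\sum_{j\neq k}|\beta_j^{(t)}|<2D$, and $R_w-D\sqrt{d}\epsilon_1\geq R_w/2$ reduces its bound to the form $C_1\,D\sqrt{d}/(RR_w)+C_2\,D^2dR^2\epsilon_1$ for absolute constants $C_1,C_2$. The first term is below $\epsilon_2/2$ precisely because $R\geq 16D\sqrt{d}/(\pi R_w\epsilon_2)$; the second is below $\epsilon_2/2$ because $\epsilon_1\lesssim 1/(R^3d)$ turns it into $\lesssim D^2/R$, which is at most $\epsilon_2$ since $R\geq D^2/\epsilon_2$. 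This yields $|\partial\mathcal{L}_{w^\star}/\partial\beta_k(\beta_k^{(t)})-(\beta_k^{(t)}-\beta_k^\star)|<\epsilon_2$ for every $k$.

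Finally, I would invoke \Cref{coro:grad-steps} to fix $\epsilon_2=\mathcal{O}(\epsilon_3/(\sqrt{D}\log(\sqrt{D}/\epsilon_3)))$ and $t=\Theta(\log(\sqrt{D}/\epsilon_3))$ as the settings that give $\|\beta^{(t+1)}-\beta^\star\|_2\leq\epsilon_3$, and then re-express the conditions on $R$ and $\epsilon_1$ with this $\epsilon_2$ substituted in, absorbing the logarithmic factors into the $\tilde{\Omega}$/$\tilde{\mathcal{O}}$ notation to obtain the final statement in terms of $\epsilon_3$. I expect the main obstacle to be not any single inequality but keeping this chain of dependencies consistent: $\epsilon_1$ must be small relative to $R^3$ (from \Cref{lem:param-bound}), while $R$ itself grows inversely in the accuracy, so $\epsilon_1$ is forced to be polynomially (in fact cubically) small in $\epsilon_2$, hence in $\epsilon_3$. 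Care is also needed because the corollary only lower-bounds $R$; the argument should be read as first choosing $R$ equal, up to logarithmic factors, to the stated lower bound, and only then imposing the constraint on $\epsilon_1$.
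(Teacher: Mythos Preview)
Your plan is correct and mirrors the paper's proof almost exactly: the paper also invokes the gradient bound derived inside \Cref{lem:param-bound} (which is just \Cref{lem:grad-inform} with the iterate bounds substituted), splits the resulting two-term expression, and shows each term is at most $\epsilon_2/2$ under precisely the stated conditions on $R$ and $\epsilon_1$, before re-expressing everything via \Cref{coro:grad-steps}. Your closing remark about reading $R$ as being set to its lower bound before constraining $\epsilon_1$ is a genuine subtlety that the paper leaves implicit.
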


\begin{proof}
We need to show that we can indeed achieve this $\epsilon_2$ error for the gradients.
This introduces some constraints on $R$ and $\epsilon_1$.
By \Cref{eq:grad-bound} (since we already proved this parameter bound in \Cref{lem:param-bound} and this result holds given our choice of $R, \epsilon_1$), we have
\begin{equation}
  \left|\frac{\partial\mathcal{L}_{w^\star}}{\partial \beta_k}(\beta_k^{(t)}) - (\beta_k^{(t)} - \beta_k^\star)\right| \leq \frac{20\pi^2 D^2 dR^2 \epsilon_1}{3} + \frac{2D+2}{\pi R(R_w/\sqrt{d} - D\epsilon_1)}.
\end{equation}
In order for gradient descent to converge well, as shown in \Cref{coro:grad-steps}, we need
\begin{equation}
  \left|\frac{\partial\mathcal{L}_{w^\star}}{\partial \beta_k}(\beta_k^{(t)}) - (\beta_k^{(t)} - \beta_k^\star)\right| \leq \epsilon_2 = \mathcal{O}\left(\frac{\epsilon_3}{\sqrt{D}\log\left(\sqrt{D}/\epsilon_3\right)}\right).
\end{equation}
Thus, we must set $R, \epsilon_1$ such that
\begin{equation}
  \label{eq:grad-to-bound}
  \frac{20\pi^2 D^2 dR^2 \epsilon_1}{3} + \frac{4D}{\pi R(R_w/\sqrt{d} - D\epsilon_1)} \leq \epsilon_2 =  \mathcal{O}\left(\frac{\epsilon_3}{\sqrt{D}\log\left(\sqrt{D}/\epsilon_3\right)}\right).
\end{equation}
This can be satisfied by taking 
\begin{equation}
  \label{eq:grad-R-bounds}
  R \geq \max\left(\frac{D^2}{\epsilon_2}, \frac{16D\sqrt{d}}{\pi R_w \epsilon_2}, D^2, \frac{16 D\sqrt{d}}{\pi R_w}\right) = \max\left(\frac{D^2}{\epsilon_2}, \frac{16D\sqrt{d}}{\pi R_w \epsilon_2}\right).
\end{equation}
\begin{equation}
  \label{eq:grad-eps1-bounds}
  \epsilon_1 \leq \min\left(\frac{3\epsilon_2^3}{40 \pi^2 D^6 d}, \frac{3}{40 \pi^2 D^6 d}, \frac{R_w}{2D\sqrt{d}}\right) = \min\left(\frac{3\epsilon_2^3}{40\pi^2 D^6 d}, \frac{R_w}{2D\sqrt{d}}\right).
\end{equation}
Note that the last two terms in the maximum for $R$ in \Cref{eq:grad-R-bounds} and in the minimum for $\epsilon_1$ in \Cref{eq:grad-eps1-bounds} are from the constraints on $R,\epsilon_1$ in \Cref{lem:param-bound}.
The equalities follow because $0 < \epsilon_2 < 1$.
We can write this in terms of $\epsilon_3$ by using upper bound of $\epsilon_2$ in terms of $\epsilon_3$ (\Cref{eq:eps2-upper-eps3})
\begin{align}
  \label{eq:grad-R-bounds2}
  R &\geq \max\left(\frac{2\eta D^{5/2}\log(2\sqrt{D}/\epsilon_3)}{\log(1/c)\epsilon_3}, \frac{32\eta D^{3/2} \sqrt{d} \log(2\sqrt{D}/\epsilon_3)}{\pi R_w \log(1/c)\epsilon_3}\right) = \tilde{\Omega}\left(\max\left(\frac{D^{5/2}}{\epsilon_3}, \frac{D^{3/2}\sqrt{d}}{R_w \epsilon_3}\right)\right).
\end{align}
\begin{align}
  \label{eq:grad-eps1-bounds2}
  \epsilon_1 \leq \min\left(\frac{3\log^3(1/c) \epsilon_3^3}{80 \pi^2 \eta D^{13/2} d \log(2\sqrt{D}/\epsilon_3)}, \frac{R_w}{2D\sqrt{d}}\right) = \tilde{\mathcal{O}}\left(\min\left(\frac{\epsilon_3^3}{D^{13/2} d}, \frac{R_w}{D\sqrt{d}}\right)\right).
\end{align}
where $\eta = \mathcal{O}(1)$ is the step size of gradient descent and $c = 1-\eta$.
We will prove that \Cref{eq:grad-to-bound} holds for the $\epsilon_2$ dependence.
Writing in terms of $\epsilon_3$ follows simply from the upper bound of $\epsilon_2$ in terms of $\epsilon_3$ in \Cref{eq:eps2-upper-eps3}.
We bound each term on the lefthand side of \Cref{eq:grad-to-bound} by $\epsilon_2/2$ to obtain the required bound.

For the first term, using that $R \geq D^2/\epsilon_2$ and $\epsilon_1 \leq 3/(40 \pi^2 R^3 d) \leq 3\epsilon_2^3/(40\pi^2 D^6 d)$, we have
\begin{equation}
  \frac{20\pi^2D^2 dR^2 \epsilon_1}{3} \leq \frac{D^2}{2R} \leq \frac{\epsilon_2}{2}.
\end{equation}
Finally, for the second term, using that $\epsilon_1 \leq R_w/(2D\sqrt{d})$ and $R \geq 16D\sqrt{d}/(\pi R_w \epsilon_2)$, we have
\begin{equation}
  \frac{4D}{\pi R(R_w/\sqrt{d} - D\epsilon_1)} \leq \frac{8D\sqrt{d}}{\pi R R_w} \leq \frac{\epsilon_2}{2}.
\end{equation}
This completes the proof.
\end{proof}

With these choices of parameters, we can plug them in to determine the value of the loss function.

\begin{lemma}[Loss bound]
\label{lem:loss-bound}
Let $\epsilon_3 > 0$. Let
\begin{equation}
  R = \tilde{\Omega}\left(\max\left(\frac{D^{5/2}}{\epsilon_3}, \frac{D^{3/2}\sqrt{d}}{R_w \epsilon_3}\right)\right),\quad \epsilon_1 = \tilde{\mathcal{O}}\left(\min\left(\frac{\epsilon_3^2}{D^{13/2}d}, \frac{R_w}{D\sqrt{d}}\right)\right).
\end{equation}
as in \Cref{coro:grad-accuracy}.
Let $w^\star \in \mathbb{R}^d$ be unknown with norm $R_w > 0$.
Let $g_{w^\star}:\mathbb{R}^d \to [-1,1]$ be defined as $g_{w^\star}(x) = \tilde{g}(x^\intercal w^\star)$ for $\tilde{g}$ given in \Cref{eq:g-tilde}.
Suppose we have an approximation $\hat{w} \in \mathbb{R}^d$ such that $\norm{\hat{w} - w^\star}_\infty \leq \epsilon_1$.
Then, gradient descent can find a predictor $\hat{\beta}$ such that
\begin{align}
  \mathcal{L}_{w^\star}(\hat{\beta}) &\leq \frac{\epsilon_3^2}{2} + \frac{13\sqrt{d}}{8\pi R_w R} + \frac{32\pi^2 D^2 dR^2 \epsilon_1}{3} + \frac{9D^2\sqrt{d}}{2\pi R(R_w - D\sqrt{d}\epsilon_1)}.
\end{align}
\end{lemma}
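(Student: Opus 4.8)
The plan is to read off the output accuracy of gradient descent from the corollaries already established and then convert it into a bound on the loss using the integral estimates of \Cref{sec:int-bounds}. First I would check that the stated choices of $R$ and $\epsilon_1$ are exactly the hypotheses of \Cref{coro:grad-accuracy} (with target parameter $\epsilon_3$), so that the informative-gradient condition $|\partial \mathcal{L}_{w^\star}/\partial\beta_k(\beta_k^{(t)}) - (\beta_k^{(t)}-\beta_k^\star)| < \epsilon_2$ holds at every iterate for $\epsilon_2 = \mathcal{O}(\epsilon_3/(\sqrt{D}\log(\sqrt{D}/\epsilon_3)))$. Feeding this into \Cref{coro:grad-steps}, running gradient descent from $\beta^{(0)}=0$ for $t=\Theta(\log(\sqrt{D}/\epsilon_3))$ steps yields $\hat{\beta}\triangleq\beta^{(t+1)}$ with $\|\hat{\beta}-\beta^\star\|_2 \le \epsilon_3$; moreover \Cref{lem:param-bound} gives $|\hat{\beta}_k| < 2$ for all $k$, which I will need to control the residual error terms.

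Next I would expand the loss from \Cref{eq:loss} at $\hat{\beta}$. Writing $f_{\hat{\beta}}(x)-g_{w^\star}(x) = \sum_{j=1}^D \hat{\beta}_j\cos(2\pi j x^\intercal\hat{w}) - \sum_{j=1}^D \beta_j^\star\cos(2\pi j x^\intercal w^\star)$ and squaring, $\mathcal{L}_{w^\star}(\hat{\beta})$ becomes a sum of $O(D^2)$ integrals of products of two cosines, which I would split into diagonal ($j=j'$) and off-diagonal ($j\ne j'$) parts. For the diagonal part I would invoke \Cref{coro:integral,coro:integral-upper,lem:integral,coro:integral2-wstar-hat} to write $\int_{x\sim\varphi^2}\cos^2(2\pi j x^\intercal\hat{w})\,dx = \tfrac12 + \delta_1$, $\int_{x\sim\varphi^2}\cos(2\pi j x^\intercal\hat{w})\cos(2\pi j x^\intercal w^\star)\,dx = \tfrac12 + \delta_2$, and $\int_{x\sim\varphi^2}\cos^2(2\pi j x^\intercal w^\star)\,dx = \tfrac12+\delta_3$, where $|\delta_1|,|\delta_3| = \mathcal{O}(\sqrt{d}/(R_w R))$ and $\delta_2$ additionally carries the $\mathcal{O}(\pi^2 D^2 d R^2\epsilon_1)$ contribution coming from the $w^\star$-versus-$\hat{w}$ mismatch. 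Collecting the leading $\tfrac12$'s produces exactly $\tfrac12\sum_j(\hat{\beta}_j-\beta_j^\star)^2 = \tfrac12\|\hat{\beta}-\beta^\star\|_2^2 \le \tfrac{\epsilon_3^2}{2}$, while the diagonal remainder is at most $\sum_j(\hat{\beta}_j^2 + 2|\hat{\beta}_j\beta_j^\star| + (\beta_j^\star)^2)\max(|\delta_1|,|\delta_2|,|\delta_3|)$, which I bound using $|\hat{\beta}_j|<2$ and $\|\beta^\star\|_1 = 1$. For the off-diagonal part I would apply \Cref{lem:integral2} and \Cref{coro:integral2-wstar-hat} together with their sine/cross analogues, each showing the relevant integral is $\mathcal{O}(\sqrt{d}/(R_w R))$ in magnitude; summing over the at most $D^2$ pairs with the same coefficient bounds contributes another $\mathcal{O}(D^2\sqrt{d}/(R_w R))$, with the $R_w - D\sqrt{d}\epsilon_1$ denominators arising precisely because the $\hat{w}$-integrals are stated in terms of the perturbed frequency vector $\hat{w}$.

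Finally I would tally the three error contributions and verify they are dominated by $\tfrac{13\sqrt{d}}{8\pi R_w R} + \tfrac{32\pi^2 D^2 d R^2\epsilon_1}{3} + \tfrac{9D^2\sqrt{d}}{2\pi R(R_w - D\sqrt{d}\epsilon_1)}$, tracking the numerical constants through the inequalities $\sum_j|\hat{\beta}_j| \le 2D$ and $\sum_j|\hat{\beta}_j\beta_j^\star| \le 2$. I expect the only real obstacle to be this bookkeeping: there is no conceptual difficulty, since $\mathcal{L}_{w^\star}$ is nonnegative so only an upper bound is required and no matching lower bound is needed, but one must be careful that the $\epsilon_1$-linear term acquires the factor $32/3$ (rather than the $10/3$ appearing in \Cref{lem:grad-inform}) because here \emph{two} cosine factors are shifted from $w^\star$ to $\hat{w}$ inside the squared loss, roughly doubling the mismatch error, and that the constants $13/8$ and $9/2$ survive after absorbing the coefficient bounds and the various $R_w$-versus-$(R_w-D\sqrt{d}\epsilon_1)$ simplifications.
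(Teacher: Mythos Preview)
Your proposal is correct and follows essentially the same route as the paper: expand the squared loss, split into diagonal ($j=j'$) and off-diagonal ($j\ne j'$) integrals, apply the integral bounds of \Cref{sec:int-bounds} to extract $\tfrac12\|\hat\beta-\beta^\star\|_2^2$ as the main term and control the residuals via $|\hat\beta_k|<2$ and $\|\beta^\star\|_1=1$, then invoke \Cref{coro:grad-accuracy}/\Cref{coro:grad-steps} for $\|\hat\beta-\beta^\star\|_2\le\epsilon_3$. Two small remarks: no sine analogues are needed here since the squared loss involves only cosine products; and where you write each diagonal integral as $\tfrac12+\delta$ and bound $|\delta|$, the paper instead splits the cross term $-2\beta_j^\star\beta_j\int\cos\cos$ by the sign of $\beta_j^\star\beta_j$ before applying the matching upper/lower integral bound---an equivalent piece of bookkeeping.
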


\begin{proof}
This proof will be somewhat similar to~\Cref{lem:grad-inform}.
First, let us expand the loss function:
\begin{align}
  \mathcal{L}_{w^\star}(\beta) &= \int\limits_{x\sim \varphi^2}\left(\sum_{j=1}^D \beta_j^\star \cos(2\pi j x^\intercal w^\star) - \sum_{j=1}^D \beta_j \cos(2\pi j x^\intercal \hat{w})\right)^2\,dx\\
  &\begin{aligned}
  = \int\limits_{x \sim \varphi^2} \sum_{j,j'=1}^D &\beta_j^\star \beta_{j'}^\star \cos(2\pi j x^\intercal w^\star) \cos(2\pi j' x^\intercal w^\star) + \beta_j \beta_{j'} \cos(2\pi j x^\intercal \hat{w}) \cos(2\pi j' x^\intercal \hat{w})\\& - 2\beta_j^\star \beta_{j'} \cos(2\pi j x^\intercal w^\star) \cos(2\pi j' x^\intercal \hat{w})\,dx
  \end{aligned}
\end{align}
Separating out terms with $j \neq j'$, we have
\begin{align}
  \mathcal{L}_{w^\star}(\beta) &= \sum_{j=1}^D \left((\beta_j^\star)^2 \int_{x\sim\varphi^2}\cos^2(2\pi j x^\intercal w^\star)\,dx + \beta_j^2 \int_{x \sim \varphi^2}\cos^2(2\pi j x^\intercal \hat{w})\,dx\right.\label{eq:term-squared}\\
  & \left.- 2\beta_j^\star \beta_j \int_{x \sim \varphi^2}\cos(2\pi jx^\intercal w^\star)\cos(2\pi j x^\intercal \hat{w})\,dx\right)\label{eq:termjj'equal}\\
  &+ \sum_{\substack{j,j'=1\\j \neq j'}}^D \beta_j^\star \beta_{j'}^\star \int_{x \sim \varphi^2} \cos(2\pi j x^\intercal w^\star)\cos(2\pi j' x^\intercal w^\star)\,dx\label{eq:term1-l}\\
  &+ \sum_{\substack{j,j'=1\\j \neq j'}}^D \beta_j \beta_{j'} \int_{x \sim \varphi^2}  \cos(2\pi j x^\intercal \hat{w}) \cos(2\pi j' x^\intercal \hat{w})\,dx\label{eq:term2-l}\\
  &-2\sum_{\substack{j,j'=1\\j \neq j'}}^D \beta_j^\star \beta_{j'} \int_{x \sim \varphi^2} \cos(2\pi j x^\intercal w^\star) \cos(2\pi j' x^\intercal \hat{w})\,dx\label{eq:term3-l}
\end{align}
We can upper bound the absolute values of the last three terms.
For the term in \Cref{eq:term1-l}, by \Cref{coro:integral2-wstar}, we have
\begin{align}
  &\left|\sum_{\substack{j,j'=1\\j \neq j'}}^D \beta_j^\star \beta_{j'}^\star \int_{x \sim \varphi^2} \cos(2\pi j x^\intercal w^\star)\cos(2\pi j' x^\intercal w^\star)\,dx\right|\\
  &\leq \sum_{\substack{j,j'=1\\j \neq j'}}^D |\beta_j^\star| |\beta_{j'}^\star| \left|\int_{x \sim \varphi^2} \cos(2\pi j x^\intercal w^\star)\cos(2\pi j' x^\intercal w^\star)\,dx\right|\\
  &\leq \sum_{\substack{j,j'=1\\j \neq j'}}^D |\beta_j^\star| |\beta_{j'}^\star| \frac{\sqrt{d}}{2\pi R R_w}\\
  &\leq \frac{\sqrt{d}}{2\pi R R_w}.\label{eq:term1-l-upper}
\end{align}
In the last line, we use that $\norm{\beta^\star}_1 = 1$.
Similarly, we can upper bound the absolute value of \Cref{eq:term2-l}:
\begin{align}
  \left|\sum_{\substack{j,j'=1\\j \neq j'}}^D \beta_j \beta_{j'} \int_{x \sim \varphi^2}  \cos(2\pi j x^\intercal \hat{w}) \cos(2\pi j' x^\intercal \hat{w})\,dx\right| &\leq \sum_{\substack{j,j'=1\\j \neq j'}}^D |\beta_j| |\beta_{j'}| \frac{\sqrt{d}}{2\pi R(R_w - \sqrt{d}\epsilon_1)}\\
  &\leq \frac{2D^2\sqrt{d}}{\pi R(R_w - \sqrt{d}\epsilon_1)}.\label{eq:term2-l-upper}
\end{align}
In the first inequality, we use \Cref{lem:integral2}, and in the second line we use \Cref{lem:param-bound} and our choice of $R,\epsilon_1$.
We can also upper bound the absolute value of \Cref{eq:term3-l}:
\begin{align}
  \left|2\sum_{\substack{j,j'=1\\j \neq j'}}^D \beta_j^\star \beta_{j'} \int_{x \sim \varphi^2} \cos(2\pi j x^\intercal w^\star) \cos(2\pi j' x^\intercal \hat{w})\,dx\right| &\leq 2\sum_{\substack{j,j'=1\\j \neq j'}}^D |\beta_j^\star| |\beta_{j'}| \frac{\sqrt{d}}{2\pi R (R_w - D\sqrt{d}\epsilon_1)}\\
  &\leq \frac{2D\sqrt{d}}{\pi R (R_w - D\sqrt{d}\epsilon_1)}.\label{eq:term3-l-upper}
\end{align}
In the first inequality, we use \Cref{coro:integral2-wstar-hat}, and in the second line, we use \Cref{lem:param-bound} and our choice of $R,\epsilon_1$ as well as $\norm{\beta^\star}_1 = 1$.
Combining \Cref{eq:term1-l-upper,eq:term2-l-upper,eq:term3-l-upper}, we have
\begin{align}
  \mathcal{L}_{w^\star}(\beta) &\leq \sum_{j=1}^D \left((\beta_j^\star)^2 \int_{x\sim\varphi^2}\cos^2(2\pi j x^\intercal w^\star)\,dx + \beta_j^2 \int_{x \sim \varphi^2}\cos^2(2\pi j x^\intercal \hat{w})\,dx\right.\\
  & \left.- 2\beta_j^\star \beta_j \int_{x \sim \varphi^2}\cos(2\pi jx^\intercal w^\star)\cos(2\pi j x^\intercal \hat{w})\,dx\right)\\
  &+ \frac{\sqrt{d}}{2\pi R R_w} + \frac{2D^2 \sqrt{d}}{\pi R(R_w - \sqrt{d}\epsilon_1)} + \frac{2D\sqrt{d}}{\pi R(R_w - D\sqrt{d}\epsilon_1)}.
\end{align}
It remains to bound the terms involving the integral of cosine.
By the proof of \Cref{lem:integral-upper} and \Cref{coro:integral-upper}, then
\begin{align}
  \mathcal{L}_{w^\star}(\beta) &\leq \left(\frac{1}{2} + \frac{\sqrt{d}}{8\pi R_w R}\right)\norm{\beta^\star}_2^2 + \left(\frac{1}{2} + \frac{\sqrt{d}}{8\pi R (R_w - \sqrt{d}\epsilon_1)}\right)\norm{\beta}_2^2\\
  &- 2\beta_j^\star \beta_j \int_{x \sim \varphi^2}\cos(2\pi jx^\intercal w^\star)\cos(2\pi j x^\intercal \hat{w})\,dx\\
  &+ \frac{\sqrt{d}}{2\pi R R_w} + \frac{2D^2 \sqrt{d}}{\pi R(R_w - \sqrt{d}\epsilon_1)} + \frac{2D\sqrt{d}}{\pi R(R_w - D\sqrt{d}\epsilon_1)}\\
  &\leq \frac{1}{2}\norm{\beta^\star}_2^2 + \frac{1}{2}\norm{\beta}_2^2 - 2\beta_j^\star \beta_j \int_{x \sim \varphi^2}\cos(2\pi jx^\intercal w^\star)\cos(2\pi j x^\intercal \hat{w})\,dx\\
  &+ \frac{5\sqrt{d}}{8\pi R_w R} + \frac{(4D^2 + D)\sqrt{d}}{2\pi R(R_w - \sqrt{d}\epsilon_1)} + \frac{2D\sqrt{d}}{\pi R(R_w - D\sqrt{d}\epsilon_1)}.
\end{align}
In the inequality, we use that $\norm{\beta^\star}_1 = 1$ and \Cref{lem:param-bound}.
For the last remaining integral term, we have the following
\begin{align}
  &2\beta_j^\star \beta_j \int_{x \sim \varphi^2}\cos(2\pi jx^\intercal w^\star)\cos(2\pi j x^\intercal \hat{w})\,dx\\
  &= - 2\sum_{\substack{j=1\\\mathrm{sign}(\beta_j) = \mathrm{sign}(\beta_j^\star)}}^D \beta_j^\star \beta_j \int_{x \sim \varphi^2}\cos(2\pi jx^\intercal w^\star)\cos(2\pi j x^\intercal \hat{w})\,dx\\
  &- 2\sum_{\substack{j=1\\\mathrm{sign}(\beta_j) \neq \mathrm{sign}(\beta_j^\star)}}^D \beta_j^\star \beta_j \int_{x \sim \varphi^2}\cos(2\pi jx^\intercal w^\star)\cos(2\pi j x^\intercal \hat{w})\,dx\\
  &\leq - 2\sum_{\substack{j=1\\\mathrm{sign}(\beta_j) = \mathrm{sign}(\beta_j^\star)}}^D \beta_j^\star \beta_j \left(\frac{1}{2} - \frac{\sqrt{d}}{8\pi R_w R} - \frac{5\pi^2 D^2 dR^2\epsilon_1}{3}\right) \\
  &- 2\sum_{\substack{j=1\\\mathrm{sign}(\beta_j) \neq \mathrm{sign}(\beta_j^\star)}}^D \beta_j^\star \beta_j\left(\frac{1}{2} + \frac{\sqrt{d}}{8\pi R_w R} + \pi D d\epsilon_1 R\right)\\
  &= -\sum_{j=1}^D \beta_j^\star \beta_j + \left(\frac{\sqrt{d}}{4\pi R_w R} + \frac{10\pi^2 D^2 dR^2\epsilon_1}{3}\right)\sum_{\substack{j=1\\\mathrm{sign}(\beta_j) = \mathrm{sign}(\beta_j^\star)}}^D \beta_j^\star \beta_j\\
  &- \left(\frac{\sqrt{d}}{4\pi R_w R} + 2\pi D d\epsilon_1 R\right) \sum_{\substack{j=1\\\mathrm{sign}(\beta_j) \neq \mathrm{sign}(\beta_j^\star)}}^D \beta_j^\star \beta_j.
\end{align}
Here, in the second line, we split the sum depending on if the signs of the $\beta_j, \beta_j^\star$ match.
In the fourth line, since $\mathrm{sign}(\beta_j) \neq \mathrm{sign}(\beta_j^\star)$, then $\beta_j\beta_j^\star \leq 0$ so that the last term has a positive coefficient overall. Thus, we can use an upper bound on the integral, where we use \Cref{lem:integral-upper}.
Also, since $\mathrm{sign}(\beta_j) = \mathrm{sign}(\beta_j^\star)$, then $\beta_j\beta_j^\star \geq 0$ so that the first term has a negative coefficient overall.
Thus, we can use a lower bound on the integral, where we use \Cref{lem:integral}.
In the last equality, we combined the summations over $j$ again.
Plugging this into the expression we had before, we have
\begin{align}
  \mathcal{L}_{w^\star}(\beta) &\leq \frac{1}{2}\norm{\beta - \beta^\star}_2^2 + \left(\frac{\sqrt{d}}{4\pi R_w R} + \frac{10\pi^2 D^2 dR^2\epsilon_1}{3}\right)\sum_{\substack{j=1\\\mathrm{sign}(\beta_j) = \mathrm{sign}(\beta_j^\star)}}^D \beta_j^\star \beta_j\\
  &- \left(\frac{\sqrt{d}}{4\pi R_w R} + 2\pi D d\epsilon_1 R\right) \sum_{\substack{j=1\\\mathrm{sign}(\beta_j) \neq \mathrm{sign}(\beta_j^\star)}}^D \beta_j^\star \beta_j + \frac{5\sqrt{d}}{8\pi R_w R} + \frac{(4D^2 + D)\sqrt{d}}{2\pi R(R_w - \sqrt{d}\epsilon_1)} + \frac{2D\sqrt{d}}{\pi R(R_w - D\sqrt{d}\epsilon_1)}
\end{align}
We can further bound this by taking the absolute value to get
\begin{align}
  \mathcal{L}_{w^\star}(\beta) &\leq \frac{1}{2}\norm{\beta - \beta^\star}_2^2 + \left(\frac{\sqrt{d}}{4\pi R_w R} + \frac{10\pi^2 D^2 dR^2\epsilon_1}{3}\right)\sum_{\substack{j=1\\\mathrm{sign}(\beta_j) = \mathrm{sign}(\beta_j^\star)}}^D |\beta_j^\star| |\beta_j|\\
  &+ \left(\frac{\sqrt{d}}{4\pi R_w R} + 2\pi D d\epsilon_1 R\right) \sum_{\substack{j=1\\\mathrm{sign}(\beta_j) \neq \mathrm{sign}(\beta_j^\star)}}^D |\beta_j^\star| |\beta_j| + \frac{5\sqrt{d}}{8\pi R_w R} + \frac{(4D^2 + D)\sqrt{d}}{2\pi R(R_w - \sqrt{d}\epsilon_1)} + \frac{2D\sqrt{d}}{\pi R(R_w - D\sqrt{d}\epsilon_1)}.
\end{align}
Using \Cref{lem:param-bound} with our choice of $R, \epsilon_1$ and $\norm{\beta^\star}_1 = 1$, then we have
\begin{align}
  \mathcal{L}_{w^\star}(\beta) &\leq \frac{1}{2}\norm{\beta - \beta^\star}_2^2 + \frac{\sqrt{d}}{2\pi R_w R} + \frac{20\pi^2 D^2 d R^2\epsilon_1}{3} + \frac{\sqrt{d}}{2\pi R_w R} + 4\pi Dd \epsilon_1 R\\
  &+ \frac{5\sqrt{d}}{8\pi R_w R} + \frac{(4D^2 + D)\sqrt{d}}{2\pi R(R_w - \sqrt{d}\epsilon_1)} + \frac{2D\sqrt{d}}{\pi R(R_w - D\sqrt{d}\epsilon_1)}\\
  &= \frac{1}{2}\norm{\beta - \beta^\star}_2^2 + \frac{13\sqrt{d}}{8\pi R_w R} + \frac{20\pi^2 D^2 d R^2\epsilon_1}{3} + 4\pi Dd \epsilon_1 R + \frac{(4D^2 + D)\sqrt{d}}{2\pi R(R_w - \sqrt{d}\epsilon_1)} + \frac{2D\sqrt{d}}{\pi R(R_w - D\sqrt{d}\epsilon_1)}\\
  &\leq \frac{1}{2}\norm{\beta - \beta^\star}_2^2 + \frac{13\sqrt{d}}{8\pi R_w R} + \frac{32\pi^2 D^2 dR^2 \epsilon_1}{3} + \frac{(4D^2 + D)\sqrt{d}}{2\pi R(R_w - \sqrt{d}\epsilon_1)} + \frac{2D\sqrt{d}}{\pi R(R_w - D\sqrt{d}\epsilon_1)}\\
  &\leq \frac{1}{2}\norm{\beta - \beta^\star}_2^2 + \frac{13\sqrt{d}}{8\pi R_w R} + \frac{32\pi^2 D^2 dR^2 \epsilon_1}{3} + \frac{9D^2\sqrt{d}}{2\pi R(R_w - D\sqrt{d}\epsilon_1)}.
\end{align}
Here, in the first inequality, we use \Cref{lem:param-bound} and $\norm{\beta^\star}_1 = 1$.
In the second inequality, we use that $D \geq 1$ so that $D^2 \geq D$ and $R^2 \geq R$.
In the last line, we use that $D \epsilon_1 \geq \epsilon_1$ and $D^2 \geq D$.
The claim then follows from~\Cref{lem:grad-converge} and \Cref{coro:grad-accuracy}, which says that using gradient descent, after a sufficient number of steps, we reach $\hat{\beta} = \beta^{(t+1)}$ such that $\norm{\beta^{(t+1)} - \beta^\star}_2 \leq \epsilon_3$.
\end{proof}

Finally, we can choose $\epsilon_3$ and adjust our choices for $R,\epsilon_1$ to show that the loss function is indeed bounded by $\epsilon$ for our predictor $\hat{\beta}$ found via gradient descent.

\begin{proof}[Proof of \Cref{thm:g-tilde-uniform}]
Let $\epsilon > 0$.
By \Cref{lem:loss-bound}, taking $\epsilon_3 = \sqrt{\epsilon}$, we have that
\begin{equation}
  \label{eq:loss-to-bound}
  \mathcal{L}_{w^\star}(\hat{\beta}) \leq \frac{\epsilon}{2} + \frac{13\sqrt{d}}{8\pi R_w R} + \frac{32\pi^2 D^2 dR^2 \epsilon_1}{3} + \frac{9D^2}{2\pi R(R_w/\sqrt{d} - D\epsilon_1)}.
\end{equation}
for our choice of $R,\epsilon_1$.
Here, recall that $\epsilon_1$ is the accuracy with which we can estimate $w^\star$, i.e., $|\hat{w}_i - w_i^\star| \leq \epsilon_1$.
We want to show that $\mathcal{L}_{w^\star}(\hat{\beta}) \leq \epsilon$.
This can be satisfied by taking
\begin{align}
  R &\geq \max\left(\frac{39 \sqrt{d}}{4\pi R_w \epsilon}, \frac{D^2}{\epsilon}, \frac{54D^2\sqrt{d}}{\pi R_w \epsilon}, \frac{2\eta D^{5/2}\log(2\sqrt{D}/\sqrt{\epsilon})}{\log(1/c)\sqrt{\epsilon}}, \frac{32\eta D^{3/2} \sqrt{d} \log(2\sqrt{D}/\sqrt{\epsilon})}{\pi R_w \log(1/c)\sqrt{\epsilon}}\right)\\
  &= \max\left(\frac{D^2}{\epsilon}, \frac{54D^2\sqrt{d}}{\pi R_w \epsilon}, \frac{2\eta D^{5/2}\log(2\sqrt{D}/\sqrt{\epsilon})}{\log(1/c)\sqrt{\epsilon}}, \frac{32\eta D^{3/2} \sqrt{d} \log(2\sqrt{D}/\sqrt{\epsilon})}{\pi R_w \log(1/c)\sqrt{\epsilon}}\right)
\end{align}
\begin{align}
  \epsilon_1 \leq \min\left(\frac{\epsilon^3}{64\pi^2 D^6 d}, \frac{3\log^3(1/c) \epsilon^{3/2}}{80 \pi^2 \eta D^{13/2} d \log(2\sqrt{D}/\sqrt{\epsilon})}, \frac{R_w}{2D\sqrt{d}}\right),
\end{align}
where $\eta = \mathcal{O}(1)$ is the step size of gradient descent and $c = 1-\eta$.
Note that the last two terms in the maximum for $R$ come from \Cref{coro:grad-accuracy} and similarly for the last two terms in the minimum for $\epsilon_1$.

For the second term in \Cref{eq:loss-to-bound}, since $R \geq 39\sqrt{d}/(4\pi R_w\epsilon)$, we have
\begin{equation}
  \frac{13\sqrt{d}}{8\pi R_w R} \leq \frac{\epsilon}{6}.
\end{equation}
For the third term in \Cref{eq:loss-to-bound}, using $R \geq D^2/\epsilon$ and $\epsilon_1 \leq 1/(64 \pi^2 R^3d) \leq \epsilon^3/ ( 64 \pi^2 D^6 d)$, then
\begin{equation}
  \frac{32\pi^2 D^2 dR^2 \epsilon_1 }{3} \leq \frac{D^2}{6R} \leq \frac{\epsilon}{6}.
\end{equation}
Finally, for the last term in \Cref{eq:loss-to-bound}, using $\epsilon_1 \leq R_w/(2D\sqrt{d})$ and $R \geq 54D^2 \sqrt{d}/(\pi R_w \epsilon)$, we have
\begin{equation}
  \frac{9D^2}{2\pi R (R_w/\sqrt{d} - D\epsilon_1)} \leq \frac{9D^2 \sqrt{d}}{\pi R R_w} \leq \frac{\epsilon}{6}.
\end{equation}
Thus, we have shown that 
\begin{equation}
  \mathcal{L}_{w^\star}(\hat{\beta}) \leq \frac{\epsilon}{2} + \frac{\epsilon}{6}+ \frac{\epsilon}{6}+ \frac{\epsilon}{6} = \epsilon,
\end{equation}
proving the claim.
Moreover, we have the following simplified scaling of $R, \epsilon_1$ by hiding the constants and logarithmic factors:
\begin{align}
  R &= \tilde{\Omega}\left(\max\left(\frac{D^2}{\epsilon}, \frac{D^2\sqrt{d}}{R_w \epsilon}, \frac{D^{5/2}}{\sqrt{\epsilon}}, \frac{D^{3/2}\sqrt{d}}{R_w \sqrt{\epsilon}}\right)\right),
\end{align}
\begin{equation}
  \epsilon_1 = \tilde{\mathcal{O}}\left(\min\left(\frac{\epsilon^3}{D^6 d}, \frac{\epsilon^{3/2}}{D^{13/2}d}, \frac{R_w}{D\sqrt{d}}\right)\right).
\end{equation}
The bound on the number of iterations of gradient descent used simply comes from 
\begin{equation}
  t = \Theta\left(\log(\sqrt{D}/\epsilon_3)\right)
\end{equation}
from \Cref{coro:grad-steps} and the choice $\epsilon_3 = \sqrt{\epsilon}$.
\end{proof}

\subsection{Integral bounds}
\label{sec:int-bounds}

The following technical lemmas for bounding integrals will be useful in the proofs of \Cref{thm:verification} and \Cref{thm:g-tilde-uniform}.

First, we have a bound on a complex exponential that will be useful in several of the other lemmas in this section.

\begin{lemma}
\label{lem:complex-exp}
Let $\varphi^2$ be the uniform density over $[-R,R]^d \subseteq \mathbb{R}^d$.
Let $w^\star \in \mathbb{R}^d$ be unknown with norm $R_w > 0$, and let $\hat{w} \in \mathbb{R}^d$ be an approximation of $w^\star$ with $\norm{\hat{w} - w^\star}_\infty \leq \epsilon_1$.
Let $1 \leq j,j'\leq D$ be integers with $j \neq j'$, for $D \in \mathbb{N}$ from \Cref{eq:g-tilde}.
Then,
\begin{equation}
  \left|\int\limits_{x \sim\varphi^2} e^{2\pi i x^\intercal \hat{w}(j-j')}\,dx \right| \leq \frac{1}{2\pi R}\frac{\sqrt{d}}{R_w - \sqrt{d}\epsilon_1}.
\end{equation}
\end{lemma}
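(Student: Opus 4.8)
The plan is to exploit the fact that the uniform density over the box $[-R,R]^d$ is a product measure, so the integral factorizes across coordinates. Writing $\int_{x\sim\varphi^2}(\cdot)\,dx = \frac{1}{(2R)^d}\int_{[-R,R]^d}(\cdot)\,dx$ and $a_k \triangleq \hat w_k (j-j')$, we have
\begin{equation}
  \int\limits_{x\sim\varphi^2} e^{2\pi i x^\intercal \hat w (j-j')}\,dx = \prod_{k=1}^d \frac{1}{2R}\int_{-R}^{R} e^{2\pi i a_k x_k}\,dx_k = \prod_{k=1}^d \frac{\sin(2\pi a_k R)}{2\pi a_k R},
\end{equation}
a product of sinc factors, each of modulus at most $1$. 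First I would bound all but one factor by $1$, keeping only a single well-chosen factor to extract the decay.

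The choice of the distinguished coordinate is the one real idea: since $\norm{w^\star}_2 = R_w$ and $\norm{w^\star}_2 \leq \sqrt{d}\,\norm{w^\star}_\infty$, there is some index $k^\star$ with $|w_{k^\star}^\star| \geq R_w/\sqrt{d}$. By the reverse triangle inequality and $\norm{\hat w - w^\star}_\infty \leq \epsilon_1$ this gives $|\hat w_{k^\star}| \geq R_w/\sqrt{d} - \epsilon_1 = (R_w - \sqrt{d}\,\epsilon_1)/\sqrt{d}$. For the $k^\star$-th factor I use $|\sin(\cdot)| \leq 1$ together with $|j - j'| \geq 1$ (both $j,j'$ are distinct positive integers), so that
\begin{equation}
  \left|\frac{\sin(2\pi a_{k^\star} R)}{2\pi a_{k^\star} R}\right| \leq \frac{1}{2\pi |j-j'|\,|\hat w_{k^\star}|\,R} \leq \frac{1}{2\pi R}\cdot\frac{\sqrt{d}}{R_w - \sqrt{d}\,\epsilon_1}.
\end{equation}
Combining this with the trivial bound of $1$ on the remaining $d-1$ factors yields the claimed inequality.

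There is essentially no serious obstacle here; the statement is a routine Riemann--Lebesgue-type decay estimate. The only points requiring a little care are (i) keeping track of the normalization $1/(2R)^d$ versus the product of the $1/(2R)$ factors, (ii) making sure the pigeonhole argument that produces a coordinate of $w^\star$ with magnitude $\geq R_w/\sqrt d$ is applied to $w^\star$ and then transferred to $\hat w$ (and not the other way around, which would lose a factor), and (iii) noting that the bound is only meaningful when $\epsilon_1 < R_w/\sqrt d$, which is the regime in which the lemma is invoked. None of these is a genuine difficulty, so the proof should be short.
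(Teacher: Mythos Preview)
Your proposal is correct and follows essentially the same argument as the paper: factorize the integral over the product uniform measure, bound all but one coordinate trivially, pick the distinguished coordinate via the pigeonhole estimate $|w^\star_{k^\star}|\ge R_w/\sqrt d$, transfer to $\hat w$ by the $\ell_\infty$ approximation, and then use $|\sin|\le 1$ together with $|j-j'|\ge 1$. The only cosmetic difference is that you write the one-dimensional factors directly as $\operatorname{sinc}$ values and bound the non-distinguished ones by $|\operatorname{sinc}|\le 1$, whereas the paper keeps the normalization $1/(2R)^d$ separate and bounds each non-distinguished integral by $2R$ via $|e^{2\pi i z}|=1$; the two are equivalent.
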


\begin{proof}
Using that $\varphi^2$ is the uniform density:
\begin{align}
  \left|\int_{x \sim \varphi^2} e^{2\pi i x^\intercal  \hat{w}(j-j')}\,dx\right| &= \left|\frac{1}{(2R)^d} \int_{x_1=-R}^{+R}\cdots \int_{x_d=-R}^{+R} e^{2\pi i \sum_{k=1}^d x_k \hat{w}_k (j-j')}\,dx_d\cdots dx_1\right|\\
  &= \left|\frac{1}{(2R)^d} \prod_{k=1}^d \int_{x_k=-R}^{+R} e^{2\pi i x_k \hat{w}_k (j-j')}\,dx_k \right|\label{eq:prod-int}.
\end{align}
Here, notice that we can bound each of these integrals by $2R$:
\begin{equation}
  \label{eq:triv-int-bound}
  \left|\int_{x_k=-R}^{+R} e^{2\pi i x_k\hat{w}_k(j-j')}\,dx_k \right| \leq \int_{x_k=-R}^{+R} \left|e^{2\pi i x_k \hat{w}_k (j-j')}\right| \,dx_k \leq \int_{x_k=-R}^{+R}\,dx_k = 2R.
\end{equation}
We also notice that because $\norm{w^\star}_2^2 = \sum_{i=1}^d |w_i^\star|^2 = R_w^2$, then there must exist some $k \in [d]$ such that $|w_k^\star| \geq R_w/\sqrt{d}$. Here, equality is satisfied for the case when $w_i = R_w/\sqrt{d}$ for all $i \in [d]$.
We will bound each integral in the product in \Cref{eq:prod-int} using \Cref{eq:triv-int-bound} except for this $k$ such that $|w_k^\star| \geq R_w/\sqrt{d}$:
\begin{align}
  \left|\int_{x \sim \varphi^2} e^{2\pi i x^\intercal  \hat{w}(j-j')}\,dx\right| &= \left|\frac{1}{(2R)^d} \prod_{k=1}^d \int_{x_k=-R}^{+R} e^{2\pi i x_k \hat{w}_k (j-j')}\,dx_k \right|\\
  &\leq \frac{1}{2R} \left|\int_{x_k=-R}^{+R} e^{2\pi i x_k \hat{w}_k(j-j')}\,dx_k\right|\\
  &= \frac{1}{2R} \left|\int_{x_k=-R}^{+R} \cos(2\pi x_k \hat{w}_k(j-j'))\,dx_k\right|\\
  &= \frac{1}{2R}\left|\frac{\sin(2\pi(j-j')R \hat{w}_k)}{\pi(j-j')\hat{w}_k}\right|\\
  &\leq \frac{1}{2R}\frac{1}{\pi|j-j'||\hat{w}_k|}\label{eq:exp-bound}\\
  &\leq \frac{1}{2R}\frac{1}{\pi |\hat{w}_k|}.
\end{align}
Here, in the second line, we use \Cref{eq:triv-int-bound}.
In the third line, because we are integrating over a symmetric interval, the sine contribution vanishes.
In the fifth line, we use that $|\sin(x)| \leq 1$, and in the last line we used that $j \neq j'$ so that $|j - j'| \geq 1$.
Now, because we chose $k$ such that $|w_k^\star| \geq R_w/\sqrt{d}$ and $|\hat{w}_i - w_i^\star| \leq \epsilon_1$ for all $i$, we have
\begin{equation}
  \epsilon_1 \geq |\hat{w}_k - w_k^\star| \geq ||\hat{w}_k| - |w_k^\star|| \geq \left|\left|\hat{w}_k\right| - \frac{R_w}{\sqrt{d}}\right|
\end{equation}
so that rearranging, we have
\begin{equation}
  |\hat{w}_k| \geq \frac{R_w}{\sqrt{d}} - \epsilon_1.
\end{equation}
Plugging this back into the above, we have
\begin{equation}
  \left|\int_{x \sim \varphi^2} e^{2\pi i x^\intercal  \hat{w}(j-j')}\,dx\right| \leq \frac{1}{2\pi R}\frac{\sqrt{d}}{R_w - \sqrt{d}\epsilon_1}.
\end{equation}
\end{proof}

\begin{corollary}
\label{coro:complex-exp-wstar}
Let $\varphi^2$ be the uniform density over $[-R,R]^d \subseteq \mathbb{R}^d$.
Let $w^\star \in \mathbb{R}^d$ be unknown with norm $R_w > 0$, and let $\hat{w} \in \mathbb{R}^d$ be an approximation of $w^\star$ with $\norm{\hat{w} - w^\star}_\infty \leq \epsilon_1$.
Let $1 \leq j,j'\leq D$ be integers with $j \neq j'$, for $D \in \mathbb{N}$ from \Cref{eq:g-tilde}.
Then,
\begin{equation}
  \left|\int\limits_{x \sim\varphi^2} e^{2\pi i x^\intercal w^\star(j-j')}\,dx \right| \leq \frac{1}{2\pi R}\frac{\sqrt{d}}{R_w}.
\end{equation}
\end{corollary}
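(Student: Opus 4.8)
The plan is to deduce this immediately from \Cref{lem:complex-exp}, since the statement is exactly the special case in which the approximation $\hat w$ is taken to be $w^\star$ itself. Concretely, I would apply \Cref{lem:complex-exp} with the vector $\hat w$ replaced by $w^\star$: then $\norm{\hat w - w^\star}_\infty = \norm{w^\star - w^\star}_\infty = 0$, so the hypothesis $\norm{\hat w - w^\star}_\infty \leq \epsilon_1$ holds with $\epsilon_1 = 0$. Plugging $\epsilon_1 = 0$ into the conclusion of \Cref{lem:complex-exp} gives
\begin{equation}
  \left|\int\limits_{x \sim\varphi^2} e^{2\pi i x^\intercal w^\star(j-j')}\,dx \right| \leq \frac{1}{2\pi R}\frac{\sqrt{d}}{R_w - \sqrt{d}\cdot 0} = \frac{1}{2\pi R}\frac{\sqrt{d}}{R_w},
\end{equation}
which is precisely the claimed bound.

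Alternatively, if one prefers a self-contained argument rather than a black-box appeal, I would rerun the short computation in the proof of \Cref{lem:complex-exp} verbatim with $\hat w = w^\star$: factor the integral over the product of coordinates, bound all but one factor trivially by $2R$, pick the coordinate $k$ with $|w_k^\star| \geq R_w/\sqrt d$ (which exists since $\norm{w^\star}_2 = R_w$), use that the sine part vanishes over the symmetric interval, and estimate $|\sin(2\pi(j-j')R w_k^\star)/(\pi(j-j') w_k^\star)| \leq 1/(\pi|w_k^\star|) \leq \sqrt d/(\pi R_w)$ using $|j - j'| \geq 1$. There is no genuine obstacle here — the only point worth a sentence is noting that the step in \Cref{lem:complex-exp} where $|\hat w_k| \geq R_w/\sqrt d - \epsilon_1$ is used becomes simply $|w_k^\star| \geq R_w/\sqrt d$ when $\epsilon_1 = 0$, so the final denominator is $R_w$ rather than $R_w - \sqrt d\,\epsilon_1$. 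I would present the one-line reduction to \Cref{lem:complex-exp} as the proof and omit the recomputation.
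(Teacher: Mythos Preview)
Your proposal is correct and matches the paper's approach: the paper's proof simply notes that the corollary follows by the same argument as \Cref{lem:complex-exp} with $w^\star$ in place of $\hat w$, so the $\epsilon_1$ term vanishes. Your black-box reduction (set $\hat w = w^\star$, $\epsilon_1 = 0$) is a slightly cleaner way to say the same thing.
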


\begin{proof}
This is true by the same proof as \Cref{lem:integral2}.
Because this is for $w^\star$ instead of $\hat{w}$, we no longer have the $\epsilon_1$ term.
\end{proof}

Now, we can use this to obtain a lower bound for an integral of a product of cosines.

\begin{lemma}
\label{lem:integral}
Let $\varphi^2$ be the uniform density over $[-R,R]^d \subseteq \mathbb{R}^d$.
Let $w^\star \in \mathbb{R}^d$ be unknown with norm $R_w > 0$, and let $\hat{w} \in \mathbb{R}^d$ be an approximation of $w^\star$ with $\norm{\hat{w} - w^\star}_\infty \leq \epsilon_1$.
Let $1 \leq j\leq D$ be an integer, for $D \in \mathbb{N}$ from \Cref{eq:g-tilde}.
Then,
\begin{equation}
  \int\limits_{x \sim \varphi^2} \cos(2\pi jx^\intercal \hat{w}) \cos(2\pi jx^\intercal w^\star)\,dx \geq \frac{1}{2} - \frac{\sqrt{d}}{8\pi R_w R} - \frac{5\pi^2 D^2 d R^2 \epsilon_1}{3}.
\end{equation}
\end{lemma}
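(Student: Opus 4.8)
The plan is to expand the integrand around $\cos^2(2\pi j x^\intercal w^\star)$, treating the mismatch between $\hat w$ and $w^\star$ through the small angle $\delta(x) \triangleq 2\pi j x^\intercal(\hat w - w^\star)$. Writing $2\pi j x^\intercal \hat w = 2\pi j x^\intercal w^\star + \delta(x)$ and applying the angle-addition formula for cosine gives
\[
\cos(2\pi j x^\intercal\hat w)\cos(2\pi j x^\intercal w^\star) = \cos^2(2\pi j x^\intercal w^\star)\cos\delta(x) - \sin(2\pi j x^\intercal w^\star)\cos(2\pi j x^\intercal w^\star)\sin\delta(x).
\]
Integrating over $\varphi^2$, I will lower bound the first term using $\cos\delta \geq 1 - \frac12\delta^2$ together with $0 \leq \cos^2 \leq 1$, so that $\int \cos^2(2\pi j x^\intercal w^\star)\cos\delta(x)\,dx \geq \int\cos^2(2\pi j x^\intercal w^\star)\,dx - \frac12\int\delta(x)^2\,dx$, and I will bound the second term in absolute value by $\int|\delta(x)|\,dx$, using $|\sin\cos|\leq 1$ and $|\sin\delta|\leq|\delta|$.

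For the leading term I will establish $\int_{x\sim\varphi^2}\cos^2(2\pi j x^\intercal w^\star)\,dx \geq \frac12 - \frac{\sqrt d}{8\pi R_w R}$ by the same device as in \Cref{lem:complex-exp}: write $\cos^2 = \frac12(1 + \cos(4\pi j x^\intercal w^\star))$, express $\int\cos(4\pi j x^\intercal w^\star)\,dx$ as the real part of a product of one-dimensional complex-exponential integrals, bound all but one factor trivially by $2R$, and isolate the coordinate $k$ with $|w_k^\star| \geq R_w/\sqrt d$ (such a $k$ exists since $\norm{w^\star}_2 = R_w$), for which $\left|\int_{-R}^R e^{4\pi i j x_k w_k^\star}\,dx_k\right| \leq \frac{1}{2\pi j|w_k^\star|} \leq \frac{\sqrt d}{2\pi R_w}$. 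For the perturbation terms I will use that $\varphi^2$ is uniform on $[-R,R]^d$, so the coordinates are independent with $\E[x_i] = 0$ and $\E[x_i^2] = R^2/3$; hence $\int\delta(x)^2\,dx = (2\pi j)^2\,\E_{x\sim\varphi^2}[(x^\intercal(\hat w - w^\star))^2] = (2\pi j)^2\frac{R^2}{3}\norm{\hat w - w^\star}_2^2 \leq (2\pi j)^2\frac{R^2}{3}d\epsilon_1^2$, and by Cauchy--Schwarz $\int|\delta(x)|\,dx \leq 2\pi j\sqrt{\E_{x\sim\varphi^2}[(x^\intercal(\hat w - w^\star))^2]} \leq \frac{2\pi j R}{\sqrt 3}\sqrt d\,\epsilon_1$. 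Using $j \leq D$, $0 < \epsilon_1 < 1$ (so $\epsilon_1^2 \leq \epsilon_1$), and $D, R, d \geq 1$ to absorb the lower-order $DR\sqrt d\,\epsilon_1$ contribution into the $D^2R^2 d\,\epsilon_1$ one, the two perturbation contributions together are at most $\frac{5\pi^2 D^2 d R^2\epsilon_1}{3}$, and combining with the leading bound yields the claim.

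The trigonometric identities and the second-moment computation are routine; the only two places that need care are (i) selecting the coordinate $k$ with $|w_k^\star| \geq R_w/\sqrt d$ and peeling it out of the product of one-dimensional integrals — exactly as in \Cref{lem:complex-exp} — and (ii) the bookkeeping that folds the genuinely $O(\epsilon_1^2)$ Taylor remainder and the $O(\epsilon_1)$ sine term into the single error term $\frac{5\pi^2 D^2 d R^2\epsilon_1}{3}$. Since the stated constant has generous slack, no delicate estimate is required, so I expect the main (mild) obstacle to be simply tracking the $\sqrt d$, $R$, and $D$ factors consistently across the three pieces.
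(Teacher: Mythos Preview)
Your proposal is correct and follows essentially the same approach as the paper: the angle-addition decomposition, the $\cos\delta \ge 1-\delta^2/2$ and $|\sin\delta|\le|\delta|$ bounds, the complex-exponential argument isolating the coordinate with $|w_k^\star|\ge R_w/\sqrt d$, and the second-moment computation $\E[(x^\intercal(\hat w-w^\star))^2]=\tfrac{R^2}{3}\|\hat w-w^\star\|_2^2$ all match. The only cosmetic difference is that the paper bounds $\int|x^\intercal(\hat w-w^\star)|\,dx$ directly via the triangle inequality (getting $\epsilon_1 dR/2$) whereas you use Cauchy--Schwarz (getting $R\sqrt d\,\epsilon_1/\sqrt3$); either estimate is easily absorbed into the final $\tfrac{5\pi^2 D^2 d R^2\epsilon_1}{3}$ term.
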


\begin{proof}
Using the sum formulas for cosine, we have
\begin{align}
  &\int_{x \sim \varphi^2} \cos(2\pi j x^\intercal \hat{w})\cos(2\pi j x^\intercal w^\star)\,dx\\
  &= \int_{x \sim \varphi^2} \cos(2\pi j x^\intercal(w^\star + (\hat{w} - w^\star)))\cos(2\pi j x^\intercal w^\star)\,dx\\
  &= \int_{x \sim \varphi^2} \left(\cos(2\pi j x^\intercal w^\star)\cos(2\pi j x^\intercal (\hat{w} - w^\star)) - \sin(2\pi j x^\intercal w^\star) \sin(2\pi j x^\intercal (\hat{w} - w^\star))\right)\cos(2\pi j x^\intercal w^\star)\,dx\\
  &\geq \int_{x\sim\varphi^2} \cos^2(2\pi j x^\intercal w^\star) \left(1 - \frac{1}{2}(2\pi jx^\intercal (\hat{w} - w^\star))^2\right) - \sin(2\pi jx^\intercal w^\star)\sin(2\pi j x^\intercal (\hat{w}-w^\star))\cos(2\pi jx^\intercal w^\star)\,dx\\
  &\geq \int_{x \sim \varphi^2}\cos^2(2\pi j x^\intercal w^\star) \,dx - 2\pi^2 j^2 \int_{x \sim \varphi^2} \left(x^\intercal (\hat{w} - w^\star)\right)^2\,dx - 2\pi j\int_{x \sim \varphi^2} |x^\intercal (\hat{w} - w^\star)|\,dx.\label{eq:three-terms}
\end{align}
In the third line, we use the sum formula for cosines.
In the fourth line, we use that $\cos(y) \geq 1 - y^2/2$.
In the fifth line, we use that $\sin(y), \cos(y) \leq 1$ and $\sin(y) \leq |y|.$
We want to lower bound the first term and upper bound the second two.

First, we will lower bound the first term in~\Cref{eq:three-terms}.
We can expand the first term in terms of complex exponentials:
\begin{align}
  \int_{x \sim \varphi^2} \cos^2(2\pi j x^\intercal w^\star)\,dx &= \frac{1}{4}\int_{x \sim \varphi^2}\left(e^{2\pi i jx^\intercal w^\star} + e^{-2\pi ij x^\intercal w^\star}\right)^2\,dx\\
  &= \frac{1}{2} + \frac{1}{4}\int_{x \sim \varphi^2}e^{4\pi i jx^\intercal w^\star}\,dx + \frac{1}{4}\int_{x \sim \varphi^2}e^{-4\pi ij x^\intercal w^\star}\,dx.
\end{align}
Now, we can bound the absolute value of these complex exponentials via \Cref{coro:complex-exp-wstar}.
Note that \Cref{coro:complex-exp-wstar} applies because we only needed to use that $j \neq j'$ to lower bound $|j-j'| \geq 1$.
This already clearly holds for $j \geq 1$.
Thus, we have
\begin{equation}
  \label{eq:cos-abs}
  \left|\int_{x \sim \varphi^2} \cos^2(2\pi j x^\intercal w^\star)\,dx - \frac{1}{2}\right| \leq \frac{1}{2}\left|\int_{x \sim \varphi^2} e^{4\pi i j x^\intercal w^\star}\,dx \right|\leq \frac{1}{8\pi R} \frac{\sqrt{d}}{R_w}.
\end{equation}
Rearranging, we have
\begin{equation}
  \label{eq:three-terms-1}
  \int_{x \sim \varphi^2} \cos^2(2\pi j x^\intercal w^\star)\,dx \geq \frac{1}{2} - \frac{\sqrt{d}}{8\pi R_w R}.
\end{equation}
This gives a lower bound on the first term in \Cref{eq:three-terms}.
We still need to upper bound the other terms in \Cref{eq:three-terms}.
For the second term, we can first directly evaluate the integral.
\begin{align}
&\int_{x \sim \varphi^2} (x^\intercal  (\hat{w} - w^\star))^2\,dx\\
&= \frac{1}{(2R)^d} \int_{x_1=-R}^{+R} \cdots \int_{x_d=-R}^{+R}  \left(\sum_{i=1}^d x_i \hat{w}_i - x_iw^\star_i\right)^2\,dx_d\cdots dx_1\\
&= \frac{1}{(2R)^d} \int_{x_1=-R}^{+R} \cdots \int_{x_d=-R}^{+R}  \left(\sum_{i,i'=1}^d x_i x_{i'}\hat{w}_i \hat{w}_{i'} + x_i x_{i'}w^\star_i w^\star_{i'} - x_ix_{i'}\hat{w}_i w_{i'}^\star - x_ix_{i'}w_i^\star \hat{w}_{i'}\right)\,dx_d\cdots dx_1.
\end{align}
Here, notice that
\begin{align}
\frac{1}{(2R)^d} \int_{x_1=-R}^{+R} \cdots \int_{x_d=-R}^{+R} x_i x_{i'}\,dx_d\cdots dx_1 &= \frac{1}{(2R)^2} \int_{x_i=-R}^{+R} \int_{x_{i'}=-R}^{+R} x_i x_{i'}\,dx_{i'}\,dx_i\\
&= \frac{\delta_{ii'}}{2R} \int_{x=-R}^{+R}x^2\,dx\\
&= \frac{R^2}{3} \delta_{ii'},
\end{align}
where the second line follows because if $i\neq i'$, we are integrating an odd function $x_{i'}$ over a symmetric interval. Plugging this into our previous expression, we have
\begin{align} 
\int_{x \sim \varphi^2} (x^\intercal(\hat{w} - w^\star))^2\,dx &= \frac{R^2}{3}\sum_{i=1}^d (\hat{w}_i)^2 + \left(w_i^\star\right)^2 - 2\hat{w}_i w_i^\star\\
&= \frac{R^2}{3} \norm{\hat{w} - w^\star}_2^2\\
&\leq \frac{R^2}{3}d\epsilon_1^2.\label{eq:three-terms-2}
\end{align}
In the last line, we used that $|\hat{w}_i - w_i^\star| \leq \epsilon_1$ for all $i \in [d]$.

Finally, we can similarly upper bound the last term in \Cref{eq:three-terms}.
\begin{align}
  \int_{x\sim\varphi^2}|x^\intercal(\hat{w} - w^\star)|\,dx &= \frac{1}{(2R)^d} \int_{x_1=-R}^{+R} \cdots \int_{x_d = -R}^{+R} \left|\sum_{i=1}^d x_i(\hat{w}_i - w_i^\star)\right| \,dx_d \cdots dx_1\\
  &\leq \frac{1}{(2R)^d} \int_{x_1=-R}^{+R} \cdots \int_{x_d = -R}^{+R} \sum_{i=1}^d \left|x_i (\hat{w}_i - w_i^\star)\right| \,dx_d \cdots dx_1\\
  &= \frac{1}{2R}\left(\sum_{i=1}^d |\hat{w}_i - w_i^\star| \int_{x_i=-R}^{+R} |x_i|\,dx_i\right)\\
  &\leq \frac{\epsilon_1}{2R}\sum_{i=1}^d \int_{x_i=-R}^{+R}|x_i|\,dx_i\\
  &= \frac{\epsilon_1d}{2R} R^2\\
  &= \frac{\epsilon_1 d R}{2}.\label{eq:three-terms-3}
\end{align}
In the second line, we use the triangle inequality.
In the fourth line, we use that $|\hat{w}_i - w_i^\star| \leq \epsilon_1$ for all $i \in [d]$.
Now, combining \Cref{eq:three-terms-1,eq:three-terms-2,eq:three-terms-3} in \Cref{eq:three-terms}, we have
\begin{align}
  \int\limits_{x \sim \varphi^2} \cos(2\pi jx^\intercal \hat{w}) \cos(2\pi jx^\intercal w^\star)\,dx &\geq \frac{1}{2} - \frac{\sqrt{d}}{8\pi R_w R} - \frac{2\pi^2j^2 R^2 d\epsilon_1^2}{3} - \pi j \epsilon_1 dR\\
  &\geq \frac{1}{2} - \frac{\sqrt{d}}{8\pi R_w R} - \frac{2\pi^2j^2 R^2 d\epsilon_1}{3} - \pi^2 j^2 \epsilon_1 dR^2\\
  &\geq \frac{1}{2} - \frac{\sqrt{d}}{8\pi R_w R} - \frac{5\pi^2 D^2 R^2 d\epsilon_1}{3},
\end{align}
where in the second line, we use that $j,R \geq 1$ so that $j^2 \geq j$ and $R^2 \geq R$ and $\epsilon_1 < 1$ so that $\epsilon_1^2\leq \epsilon_1$.
In the last line, we use that $j \leq D$.
\end{proof}

\begin{corollary}
\label{coro:integral}
Let $\varphi^2$ be the uniform density over $[-R,R]^d \subseteq \mathbb{R}^d$.
Let $w^\star \in \mathbb{R}^d$ be unknown with norm $R_w > 0$, and let $\hat{w} \in \mathbb{R}^d$ be an approximation of $w^\star$ with $\norm{\hat{w} - w^\star}_\infty \leq \epsilon_1$.
Let $1 \leq j \leq D$ be an integer, for $D \in \mathbb{N}$ from \Cref{eq:g-tilde}.
Then,
\begin{equation}
  \int_{x\sim\varphi^2} \cos^2(2\pi jx^\intercal \hat{w})\,dx \geq \frac{1}{2} - \frac{\sqrt{d}}{8\pi R(R_w - \sqrt{d}\epsilon_1)}. 
\end{equation}
\end{corollary}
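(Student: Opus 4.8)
The plan is to follow the derivation of \Cref{eq:three-terms-1} in the proof of \Cref{lem:integral} essentially verbatim, but with $\hat{w}$ playing the role of $w^\star$ and with \Cref{lem:complex-exp} invoked in place of \Cref{coro:complex-exp-wstar} so that the $\epsilon_1$-dependent denominator $R_w-\sqrt{d}\,\epsilon_1$ appears instead of $R_w$.

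First I would rewrite the integrand using complex exponentials. Since $\cos^2(\theta)=\tfrac14\bigl(e^{i\theta}+e^{-i\theta}\bigr)^2=\tfrac12+\tfrac14 e^{2i\theta}+\tfrac14 e^{-2i\theta}$, taking $\theta=2\pi j x^\intercal\hat{w}$ and using that $\varphi^2$ is a normalized density (so $\int_{x\sim\varphi^2}\,dx=1$) gives
\[
\int_{x\sim\varphi^2}\cos^2(2\pi j x^\intercal\hat{w})\,dx = \frac12 + \frac14\int_{x\sim\varphi^2} e^{4\pi i j x^\intercal\hat{w}}\,dx + \frac14\int_{x\sim\varphi^2} e^{-4\pi i j x^\intercal\hat{w}}\,dx.
\]

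Next I would bound the two exponential integrals. The exponent $4\pi i j x^\intercal\hat{w}$ is of the form $2\pi i x^\intercal\hat{w}\cdot m$ with $m=2j$ a nonzero integer, so the argument in the proof of \Cref{lem:complex-exp} applies with $m$ in place of $j-j'$: picking a coordinate $k$ with $|w_k^\star|\ge R_w/\sqrt{d}$ collapses the product over coordinates, \Cref{eq:exp-bound} becomes $\tfrac{1}{2\pi R m|\hat{w}_k|}$, and $\|\hat{w}-w^\star\|_\infty\le\epsilon_1$ gives $|\hat{w}_k|\ge R_w/\sqrt{d}-\epsilon_1$. Since $m=2j\ge 2$, this yields $\bigl|\int_{x\sim\varphi^2} e^{\pm 4\pi i j x^\intercal\hat{w}}\,dx\bigr|\le \tfrac{1}{4\pi R}\cdot\tfrac{\sqrt{d}}{R_w-\sqrt{d}\,\epsilon_1}$ (alternatively one may simply cite \Cref{lem:complex-exp}, noting its proof only used $|j-j'|\ge 1$, which holds for $j\ge 1$ here). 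Substituting both bounds into the display and combining the two $\tfrac14$-weighted terms gives
\[
\int_{x\sim\varphi^2}\cos^2(2\pi j x^\intercal\hat{w})\,dx \ge \frac12 - \frac14\cdot 2\cdot\frac{1}{4\pi R}\cdot\frac{\sqrt{d}}{R_w-\sqrt{d}\,\epsilon_1} = \frac12 - \frac{\sqrt{d}}{8\pi R(R_w-\sqrt{d}\,\epsilon_1)},
\]
which is the claim.

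There is no real obstacle here; the only point requiring slight care is tracking the factor $m=2j\ge 2$ in the denominator of the exponential bound, which is exactly what produces the constant $8\pi$ (rather than $4\pi$) in the final estimate and makes the corollary consistent with the $w^\star$ version in \Cref{eq:cos-abs}.
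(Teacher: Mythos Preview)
Your proposal is correct and follows essentially the same approach as the paper: expand $\cos^2$ via complex exponentials, then bound the oscillatory integrals using \Cref{lem:complex-exp} in place of \Cref{coro:complex-exp-wstar}. Your explicit tracking of the factor $m=2j\ge 2$ is in fact slightly more careful than the paper's exposition, which invokes the lemma citing only $|j-j'|\ge 1$ but nonetheless arrives at the $8\pi$ constant.
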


\begin{proof}
The proof follows from the lower bound of the first term in \Cref{eq:three-terms} in the proof of \Cref{lem:integral}.
We can expand the first term in terms of complex exponentials:
\begin{align}
  \int_{x \sim \varphi^2} \cos^2(2\pi j x^\intercal \hat{w})\,dx &= \frac{1}{4}\int_{x \sim \varphi^2}\left(e^{2\pi i jx^\intercal\hat{w}} + e^{-2\pi ij x^\intercal \hat{w}}\right)^2\,dx\\
  &= \frac{1}{2} + \frac{1}{4}\int_{x \sim \varphi^2}e^{4\pi i jx^\intercal \hat{w}}\,dx + \frac{1}{4}\int_{x \sim \varphi^2}e^{-4\pi ij x^\intercal \hat{w}}\,dx.
\end{align}
Now, we can bound the absolute value of these complex exponentials via \Cref{lem:complex-exp} (instead of \Cref{coro:complex-exp-wstar}).
Note that \Cref{lem:complex-exp} applies because we only needed to use that $j \neq j'$ to lower bound $|j-j'| \geq 1$.
This already clearly holds for $j \geq 1$.
Thus, we have
\begin{equation}
  \label{eq:cos-abs-what}
  \left|\int_{x \sim \varphi^2} \cos^2(2\pi j x^\intercal \hat{w})\,dx - \frac{1}{2}\right| \leq \frac{1}{2}\left|\int_{x \sim \varphi^2} e^{4\pi i j x^\intercal \hat{w}}\,dx \right|\leq \frac{1}{8\pi R} \frac{\sqrt{d}}{R_w - \sqrt{d}\epsilon_1}.
\end{equation}
Rearranging, we have
\begin{equation}
  \int_{x \sim \varphi^2} \cos^2(2\pi j x^\intercal \hat{w})\,dx \geq \frac{1}{2} - \frac{\sqrt{d}}{8\pi R(R_w - \sqrt{d}\epsilon_1)}.
\end{equation}
\end{proof}

\begin{lemma}
\label{lem:integral-upper}
Let $\varphi^2$ be the uniform density over $[-R,R]^d \subseteq \mathbb{R}^d$.
Let $w^\star \in \mathbb{R}^d$ be unknown with norm $R_w > 0$, and let $\hat{w} \in \mathbb{R}^d$ be an approximation of $w^\star$ with $\norm{\hat{w} - w^\star}_\infty \leq \epsilon_1$.
Let $1 \leq j\leq D$ be an integer, for $D \in \mathbb{N}$ from \Cref{eq:g-tilde}.
Then,
\begin{equation}
  \int\limits_{x \sim \varphi^2} \cos(2\pi jx^\intercal \hat{w}) \cos(2\pi jx^\intercal w^\star)\,dx \leq \frac{1}{2} + \frac{\sqrt{d}}{8\pi R_w R} + \pi D d\epsilon_1 R.
\end{equation}
\end{lemma}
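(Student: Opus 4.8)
The plan is to mirror the argument used for the lower bound in \Cref{lem:integral}, but to replace each inequality of the form $\cos(y)\geq 1-y^2/2$ by the trivial bound $\cos(y)\leq 1$ and to re-use the error estimates already established there. Writing $\hat{w}=w^\star+\delta$ with $\delta\triangleq\hat{w}-w^\star$ and $\norm{\delta}_\infty\leq\epsilon_1$, the angle-sum formula for cosine gives
\begin{equation*}
  \int\limits_{x\sim\varphi^2}\cos(2\pi jx^\intercal\hat{w})\cos(2\pi jx^\intercal w^\star)\,dx = \int\limits_{x\sim\varphi^2}\cos^2(2\pi jx^\intercal w^\star)\cos(2\pi jx^\intercal\delta)\,dx - \int\limits_{x\sim\varphi^2}\sin(2\pi jx^\intercal w^\star)\cos(2\pi jx^\intercal w^\star)\sin(2\pi jx^\intercal\delta)\,dx,
\end{equation*}
and the goal is to bound each of the two terms from above.

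First I would handle the first integral. Since $\cos^2(2\pi jx^\intercal w^\star)\geq 0$ pointwise and $\cos(2\pi jx^\intercal\delta)\leq 1$, the integrand is at most $\cos^2(2\pi jx^\intercal w^\star)$, so this term is bounded above by $\int_{x\sim\varphi^2}\cos^2(2\pi jx^\intercal w^\star)\,dx$, which by \Cref{eq:cos-abs} (equivalently by \Cref{coro:complex-exp-wstar}, using $j\geq 1$) is at most $\frac{1}{2}+\frac{\sqrt{d}}{8\pi R_w R}$. Next I would bound the second integral in absolute value using $|\sin|\leq 1$, $|\cos|\leq 1$, and $|\sin(y)|\leq|y|$, which reduces it to $2\pi j\int_{x\sim\varphi^2}|x^\intercal\delta|\,dx$; invoking the estimate $\int_{x\sim\varphi^2}|x^\intercal\delta|\,dx\leq\epsilon_1 dR/2$ already derived in \Cref{eq:three-terms-3} and using $j\leq D$, this term is at most $\pi D d\epsilon_1 R$. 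Summing the two bounds yields exactly the claimed inequality.

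I do not expect a genuine obstacle here: every ingredient has already been assembled for \Cref{lem:integral} and its corollary. The only point requiring a moment's care is the sign in the first term — one must note that the nonnegative factor $\cos^2(2\pi jx^\intercal w^\star)$ lets us discard $\cos(2\pi jx^\intercal\delta)$ by monotonicity rather than having to control its oscillation, which is precisely why the upper bound comes out cleaner (in particular with no $\epsilon_1^2R^2$-type term) than the lower bound in \Cref{lem:integral}.
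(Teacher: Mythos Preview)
Your proposal is correct and matches the paper's proof essentially step for step: the same angle-sum decomposition, the same use of $\cos(y)\leq 1$ on the first term (together with $\cos^2\geq 0$), the same $|\sin y|\leq|y|$ bound on the second term, and the same appeals to \Cref{eq:cos-abs} and \Cref{eq:three-terms-3} to close. Your handling of the cross term via absolute values is, if anything, a hair cleaner than the paper's phrasing, but the argument is the same.
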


\begin{proof}
The proof of this is similar to that of~\Cref{lem:integral}.
Using the sum formulas for cosine, we have
\begin{align}
  &\int_{x \sim \varphi^2} \cos(2\pi j x^\intercal \hat{w})\cos(2\pi j x^\intercal w^\star)\,dx\\
  &= \int_{x \sim \varphi^2} \cos(2\pi j x^\intercal(w^\star + (\hat{w} - w^\star)))\cos(2\pi j x^\intercal w^\star)\,dx\\
  &= \int_{x \sim \varphi^2} \left(\cos(2\pi j x^\intercal w^\star)\cos(2\pi j x^\intercal (\hat{w} - w^\star)) - \sin(2\pi j x^\intercal w^\star) \sin(2\pi j x^\intercal (\hat{w} - w^\star))\right)\cos(2\pi j x^\intercal w^\star)\,dx\\
  &\leq \int_{x\sim\varphi^2} \cos^2(2\pi j x^\intercal w^\star) -\sin(2\pi jx^\intercal w^\star)\sin(2\pi j x^\intercal (\hat{w}-w^\star))\cos(2\pi jx^\intercal w^\star)\,dx\\
  &\leq \int_{x\sim\varphi^2} \cos^2(2\pi j x^\intercal w^\star) +\sin(2\pi j x^\intercal (\hat{w}-w^\star))\,dx\\
  &\leq \int_{x\sim\varphi^2} \cos^2(2\pi j x^\intercal w^\star)\,dx + 2\pi j\int_{x \sim\varphi^2} |x^\intercal (\hat{w}-w^\star)|\,dx\label{eq:two-terms}.
\end{align}
In the fourth line, we use that $\cos(y) \leq 1$.
In the fifth line, we use that $-\sin(y) \cos(y) \leq 1$.
In the last line, we use that $\sin(y) \leq |y|$.
We want to upper bound both of these terms, which is simple given the proof of \Cref{lem:integral}.

Namely, in \Cref{eq:cos-abs}, we showed that
\begin{equation}
  \left|\int_{x \sim \varphi^2} \cos^2(2\pi j x^\intercal w^\star)\,dx - \frac{1}{2}\right| \leq \frac{1}{8\pi R} \frac{\sqrt{d}}{R_w}.
\end{equation}
Thus, we can upper bound
\begin{equation}
  \label{eq:two-terms-1}
  \int_{x\sim \varphi^2} \cos^2 (2\pi j x^\intercal w^\star)\,dx \leq \frac{1}{2} + \frac{\sqrt{d}}{8\pi R_w R}
\end{equation}
Note that we have already upper bounded the third term in~\Cref{eq:three-terms-3}:
\begin{equation}
  \label{eq:two-terms-2}
  2\pi j\int_{x \sim\varphi^2} |x^\intercal (\hat{w}-w^\star)|\,dx \leq \pi j d \epsilon_1 R \leq \pi D d\epsilon_1 R.
\end{equation}
Combining \Cref{eq:two-terms-1} and \Cref{eq:two-terms-2} in \Cref{eq:two-terms}, we have
\begin{equation}
  \int_{x \sim \varphi^2} \cos(2\pi j x^\intercal \hat{w})\cos(2\pi j x^\intercal w^\star)\,dx \leq \frac{1}{2} + \frac{\sqrt{d}}{8\pi R_w R} + \pi D d\epsilon_1 R.
\end{equation}
\end{proof}

\begin{corollary}
\label{coro:integral-upper}
Let $\varphi^2$ be the uniform density over $[-R,R]^d \subseteq \mathbb{R}^d$.
Let $w^\star \in \mathbb{R}^d$ be unknown with norm $R_w > 0$, and let $\hat{w} \in \mathbb{R}^d$ be an approximation of $w^\star$ with $\norm{\hat{w} - w^\star}_\infty \leq \epsilon_1$.
Let $1 \leq j\leq D$ be an integer, for $D \in \mathbb{N}$ from \Cref{eq:g-tilde}.
Then,
\begin{equation}
  \int\limits_{x \sim \varphi^2} \cos^2(2\pi j x^\intercal \hat{w})\,dx \leq \frac{1}{2} + \frac{\sqrt{d}}{8\pi R (R_w - \sqrt{d}\epsilon_1)}.
\end{equation}
\end{corollary}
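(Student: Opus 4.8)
The plan is to mirror the proof of \Cref{coro:integral}: in fact the estimate \eqref{eq:cos-abs-what} established there is already a two-sided bound, so this corollary is obtained by reading off the upper direction. Concretely, I would start from the double-angle identity $\cos^2(2\pi j x^\intercal \hat{w}) = \tfrac12 + \tfrac14 e^{4\pi i j x^\intercal \hat{w}} + \tfrac14 e^{-4\pi i j x^\intercal \hat{w}}$, integrate term by term against the uniform density on $[-R,R]^d$, and thereby reduce the claim to bounding $\left|\int_{x\sim\varphi^2} e^{4\pi i j x^\intercal \hat{w}}\,dx\right|$.

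For that bound I would rerun the short argument behind \Cref{lem:complex-exp}. The integral factorizes over the $d$ coordinates; bound all but one coordinate trivially by $2R$; and on the coordinate $k$ for which $|w_k^\star| \geq R_w/\sqrt{d}$, the imaginary part vanishes over the symmetric interval and the real part is $\sin(4\pi j R \hat{w}_k)/(2\pi j \hat{w}_k)$, whose absolute value is at most $1/(2\pi j |\hat{w}_k|) \leq 1/(2\pi |\hat{w}_k|)$ since $j \geq 1$. Using $\norm{\hat{w} - w^\star}_\infty \leq \epsilon_1$ to get $|\hat{w}_k| \geq R_w/\sqrt{d} - \epsilon_1$, this yields $\left|\int_{x\sim\varphi^2} e^{4\pi i j x^\intercal \hat{w}}\,dx\right| \leq \frac{\sqrt{d}}{4\pi R(R_w - \sqrt{d}\epsilon_1)}$, hence $\left|\int_{x\sim\varphi^2} \cos^2(2\pi j x^\intercal \hat{w})\,dx - \tfrac12\right| \leq \frac{\sqrt{d}}{8\pi R(R_w - \sqrt{d}\epsilon_1)}$, and dropping the lower side gives exactly the stated inequality.

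There is essentially no obstacle; the only point worth watching is that the exponent carries frequency $2j$ rather than $j$, so one must use $j \geq 1$ (equivalently $2j \geq 2$) to land on the constant $\tfrac18$ in the denominator instead of $\tfrac14$ — the same bookkeeping already present in \eqref{eq:cos-abs-what}. Accordingly, the cleanest writeup is simply to point to \Cref{lem:complex-exp} (applied with the frequency-$2j$ version of its estimate, valid since $j \geq 1$) and rearrange.
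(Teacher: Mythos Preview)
Your proposal is correct and matches the paper's approach exactly: the paper's proof is the one-line observation that the claim follows directly from \eqref{eq:cos-abs-what}, which is precisely the two-sided estimate you reconstruct via the double-angle identity and \Cref{lem:complex-exp}. Your remark about the frequency $2j$ and the resulting constant $\tfrac18$ is the right bookkeeping.
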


\begin{proof}
This follows directly from \Cref{eq:cos-abs-what}.
\end{proof}

\begin{lemma}
\label{lem:integral2}
Let $\varphi^2$ be the uniform density over $[-R,R]^d \subseteq \mathbb{R}^d$.
Let $w^\star \in \mathbb{R}^d$ be unknown with norm $R_w > 0$, and let $\hat{w} \in \mathbb{R}^d$ be an approximation of $w^\star$ with $\norm{\hat{w} - w^\star}_\infty \leq \epsilon_1$.
Let $1 \leq j,j' \leq D$ be integers with $j \neq j'$, for $D \in \mathbb{N}$ from \Cref{eq:g-tilde}.
Then,
\begin{equation}
  \left|\int\limits_{x\sim \varphi^2} \cos(2\pi j x^\intercal \hat{w}) \cos(2\pi j' x^\intercal \hat{w}) \,dx\right|\leq \frac{1}{2\pi R}\frac{\sqrt{d}}{R_w - \sqrt{d}\epsilon_1}.
\end{equation}
\end{lemma}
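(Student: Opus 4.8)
The plan is to decompose the product of cosines into pure complex exponentials and then invoke \Cref{lem:complex-exp} on each piece. Writing $\cos a = \tfrac12(e^{ia}+e^{-ia})$ with $a = 2\pi j x^\intercal\hat{w}$ and $b = 2\pi j' x^\intercal\hat{w}$, one has $\cos a\cos b = \tfrac14\big(e^{i(a-b)}+e^{-i(a-b)}+e^{i(a+b)}+e^{-i(a+b)}\big)$, so that
\begin{equation}
  \int\limits_{x\sim\varphi^2}\cos(2\pi j x^\intercal\hat{w})\cos(2\pi j' x^\intercal\hat{w})\,dx = \frac14\!\sum_{s\in\{+1,-1\}}\left(\int\limits_{x\sim\varphi^2}\!e^{2\pi i s(j-j')x^\intercal\hat{w}}\,dx + \int\limits_{x\sim\varphi^2}\!e^{2\pi i s(j+j')x^\intercal\hat{w}}\,dx\right).
\end{equation}
So the task reduces to bounding the four complex-exponential integrals on the right.

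Next I would bound each of the four terms. The term with frequency $j-j'$ and its conjugate are bounded in absolute value by $\tfrac{1}{2\pi R}\tfrac{\sqrt d}{R_w-\sqrt d\epsilon_1}$ directly by \Cref{lem:complex-exp}. For the two terms with frequency $j+j'$, I would point out that the proof of \Cref{lem:complex-exp} uses the hypothesis $j\neq j'$ only to conclude $|j-j'|\ge 1$: it selects a coordinate $k$ with $|w^\star_k|\ge R_w/\sqrt d$, reduces the $d$-dimensional integral to the single one-dimensional integral $\tfrac1{2R}\big|\int_{-R}^{R}\cos(2\pi(j-j')x_k\hat w_k)\,dx_k\big|$, and bounds it using $|\sin|\le 1$ and $|\hat w_k|\ge R_w/\sqrt d - \epsilon_1$. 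Replacing $j-j'$ by $j+j'\ge 1$ throughout, the identical argument yields the same bound $\tfrac{1}{2\pi R}\tfrac{\sqrt d}{R_w-\sqrt d\epsilon_1}$ for $\big|\int_{x\sim\varphi^2}e^{\pm 2\pi i(j+j')x^\intercal\hat{w}}\,dx\big|$.

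Finally, applying the triangle inequality to the four terms, each bounded by $\tfrac{1}{2\pi R}\tfrac{\sqrt d}{R_w-\sqrt d\epsilon_1}$, the prefactor $\tfrac14$ exactly cancels the factor $4$, giving the claimed inequality $\big|\int_{x\sim\varphi^2}\cos(2\pi j x^\intercal\hat w)\cos(2\pi j' x^\intercal\hat w)\,dx\big|\le\tfrac{1}{2\pi R}\tfrac{\sqrt d}{R_w-\sqrt d\epsilon_1}$. There is no genuine difficulty here; the only point needing care is the observation that \Cref{lem:complex-exp}, although stated for the frequency difference $j-j'$, applies verbatim to the frequency sum $j+j'$ (since the argument only needs the frequency to have absolute value at least $1$), and that the bookkeeping of the four $\tfrac14$'s makes the final constant tight rather than larger by a constant factor.
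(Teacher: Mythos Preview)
Your proposal is correct and follows essentially the same approach as the paper: both reduce the product of cosines to complex exponentials with frequencies $j\pm j'$ and invoke \Cref{lem:complex-exp} on each, noting that the proof of that lemma applies to any nonzero integer frequency. The only cosmetic difference is that the paper first applies the product-to-sum identity $\cos a\cos b=\tfrac12(\cos(a-b)+\cos(a+b))$ and then splits each cosine into two exponentials (and remarks that $|j+j'|\ge 3$ gives a slightly sharper bound on that term, which is then discarded), whereas you expand directly into four exponentials with the $\tfrac14$ prefactor; the arithmetic is identical either way.
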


\begin{proof}
Using the product formulas for cosine, we can write the integral as
\begin{equation}
  \label{eq:sum-prod2}
  \left|\int\limits_{x\sim \varphi^2} \cos(2\pi j x^\intercal \hat{w}) \cos(2\pi j' x^\intercal \hat{w}) \,dx\right| = \left|\frac{1}{2} \int_{x\sim \varphi^2} \cos(2\pi x^\intercal \hat{w}(j -j')) + \cos(2\pi x^\intercal \hat{w} (j + j'))\,dx\right|.
\end{equation}
We can bound each of the integrals on the right hand side similarly.
Starting with the first term, we can write it in terms of complex exponentials
\begin{align}
  \label{eq:complex-sum}
  \left|\int_{x\sim \varphi^2} \cos(2\pi x^\intercal \hat{w}(j -j'))\,dx\right| \leq \frac{1}{2}\left| \int_{x \sim \varphi^2} e^{2\pi i x^\intercal \hat{w} (j-j')}\,dx \right| + \frac{1}{2}\left| \int_{x \sim \varphi^2} e^{2\pi i x^\intercal \hat{w} (j' - j)}\,dx \right|
\end{align}
Both terms in \Cref{eq:exp-sum} can be bounded via \Cref{lem:complex-exp}.
Thus, this bounds the first term in \Cref{eq:sum-prod2} as
\begin{equation}
  \label{eq:sum-prod2-1}
  \left|\int_{x\sim \varphi^2} \cos(2\pi x^\intercal \hat{w}(j -j'))\,dx\right| \leq \frac{1}{2\pi R}\frac{\sqrt{d}}{R_w - \sqrt{d}\epsilon_1}.
\end{equation}
We can similarly bound the second term in \Cref{eq:sum-prod2}.
Namely, the argument is the same as in \Cref{lem:complex-exp}, but in \Cref{eq:exp-bound}, we have
\begin{align}
  \left|\int_{x \sim \varphi^2} e^{2\pi i x^\intercal  \hat{w}(j+j')}\,dx\right| &\leq \frac{1}{2R}\frac{1}{\pi |j + j'||\hat{w}_k|}\\
  &\leq \frac{1}{6R}\frac{1}{\pi |\hat{w}_k|},
\end{align}
where since $j \neq j'$ and $j,j' \geq 1$, then $|j + j'| \geq 3$. The rest of the bound follows the same argument.
Then, we obtain
\begin{equation}
  \left|\int_{x\sim \varphi^2} \cos(2\pi x^\intercal \hat{w}(j +j'))\,dx\right| \leq \frac{1}{6\pi R}\frac{\sqrt{d}}{R_w - \sqrt{d}\epsilon_1} \leq \frac{1}{2\pi R}\frac{\sqrt{d}}{R_w - \sqrt{d}\epsilon_1}.
\end{equation}
Thus, combined with \Cref{eq:sum-prod2-1} in \Cref{eq:sum-prod2}, we have
\begin{equation}
  \left|\int\limits_{x\sim \varphi^2} \cos(2\pi j x^\intercal \hat{w}) \cos(2\pi j' x^\intercal \hat{w}) \,dx\right| \leq \frac{1}{2\pi R}\frac{\sqrt{d}}{R_w - \sqrt{d}\epsilon_1}.
\end{equation}
\end{proof}

By essentially the same proof, we can obtain a similar upper bound replacing $\hat{w}$ with $w^\star$.
This follows by applying \Cref{coro:complex-exp-wstar} instead of \Cref{lem:complex-exp}.

\begin{corollary}
\label{coro:integral2-wstar}
Let $\varphi^2$ be the uniform density over $[-R,R]^d \subseteq \mathbb{R}^d$.
Let $w^\star \in \mathbb{R}^d$ be unknown with norm $R_w > 0$, and let $\hat{w} \in \mathbb{R}^d$ be an approximation of $w^\star$ with $\norm{\hat{w} - w^\star}_\infty \leq \epsilon_1$.
Let $1 \leq j,j' \leq D$ be integers with $j \neq j'$, for $D \in \mathbb{N}$ from \Cref{eq:g-tilde}.
Then,
\begin{equation}
  \left|\int\limits_{x\sim \varphi^2} \cos(2\pi j x^\intercal w^\star) \cos(2\pi j' x^\intercal w^\star) \,dx\right|\leq \frac{1}{2\pi R}\frac{\sqrt{d}}{R_w}.
\end{equation}
\end{corollary}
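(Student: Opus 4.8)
The plan is to repeat the proof of \Cref{lem:integral2} essentially verbatim, replacing every appeal to \Cref{lem:complex-exp} by the corresponding appeal to \Cref{coro:complex-exp-wstar}; since we now work with the exact vector $w^\star$ rather than an approximation $\hat w$, the $\sqrt d\,\epsilon_1$ correction in the denominator simply disappears.

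First I would apply the product-to-sum identity $\cos a\cos b = \tfrac12\bigl(\cos(a-b)+\cos(a+b)\bigr)$ to write
\begin{equation}
\int_{x\sim\varphi^2}\cos(2\pi j x^\intercal w^\star)\cos(2\pi j' x^\intercal w^\star)\,dx = \frac12\int_{x\sim\varphi^2}\cos\bigl(2\pi x^\intercal w^\star(j-j')\bigr)\,dx + \frac12\int_{x\sim\varphi^2}\cos\bigl(2\pi x^\intercal w^\star(j+j')\bigr)\,dx.
\end{equation}
Next I would expand each cosine via $\cos\theta=\tfrac12(e^{i\theta}+e^{-i\theta})$ and use the triangle inequality, reducing everything to bounding integrals of the form $\bigl|\int_{x\sim\varphi^2}e^{\pm 2\pi i x^\intercal w^\star(j-j')}\,dx\bigr|$ and $\bigl|\int_{x\sim\varphi^2}e^{\pm 2\pi i x^\intercal w^\star(j+j')}\,dx\bigr|$.

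For the $(j-j')$ exponentials, \Cref{coro:complex-exp-wstar} applies directly (its hypothesis $j\neq j'$ is exactly what gives $|j-j'|\ge1$) and yields the bound $\tfrac{1}{2\pi R}\tfrac{\sqrt d}{R_w}$. For the $(j+j')$ exponentials, the proof of \Cref{lem:complex-exp}/\Cref{coro:complex-exp-wstar} goes through unchanged: one selects a coordinate $k$ with $|w^\star_k|\ge R_w/\sqrt d$ (which exists since $\|w^\star\|_2=R_w$), bounds the other $d-1$ one-dimensional integrals trivially by $2R$ each, and directly evaluates the $k$-th integral to obtain a factor $\tfrac{1}{2R}\cdot\tfrac{1}{\pi|j+j'|\,|w^\star_k|}$; since $j,j'\ge1$ we have $|j+j'|\ge1$ (in fact $\ge3$), so this term is also at most $\tfrac{1}{2\pi R}\tfrac{\sqrt d}{R_w}$. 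Because both sub-integrals are bounded by the target quantity, so is their $\tfrac12$-weighted sum, which is the claim. I do not anticipate any genuine obstacle here---the argument is a routine transcription of \Cref{lem:integral2}---and the only point worth double-checking is that \Cref{coro:complex-exp-wstar} as stated is applicable to both the $(j-j')$ and $(j+j')$ frequencies, which it is, since its proof uses only that the integer coefficient of $w^\star$ has absolute value at least one.
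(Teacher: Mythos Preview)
Your proposal is correct and matches the paper's approach exactly: the paper states the corollary follows by essentially the same proof as \Cref{lem:integral2}, replacing \Cref{lem:complex-exp} with \Cref{coro:complex-exp-wstar}, which is precisely what you outline.
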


\begin{corollary}
\label{coro:integral2-wstar-sin}
Let $\varphi^2$ be the uniform density over $[-R,R]^d \subseteq \mathbb{R}^d$.
Let $w^\star \in \mathbb{R}^d$ be unknown with norm $R_w > 0$, and let $\hat{w} \in \mathbb{R}^d$ be an approximation of $w^\star$ with $\norm{\hat{w} - w^\star}_\infty \leq \epsilon_1$.
Let $1 \leq j,j' \leq D$ be integers with $j \neq j'$, for $D \in \mathbb{N}$ from \Cref{eq:g-tilde}.
Then,
\begin{equation}
  \left|\int\limits_{x\sim \varphi^2} \cos(2\pi j x^\intercal w^\star) \sin(2\pi j' x^\intercal w^\star) \,dx\right|\leq \frac{1}{2\pi R}\frac{\sqrt{d}}{R_w}.
\end{equation}
\end{corollary}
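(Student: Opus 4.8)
The plan is to recognize that, just as \Cref{coro:integral2-wstar} follows from \Cref{lem:integral2} with $\hat w$ replaced by $w^\star$, this corollary follows by the same template but with the product-to-sum identity for $\cos A\sin B$ in place of the one for $\cos A\cos B$. Concretely, I would write
$\cos(2\pi j x^\intercal w^\star)\sin(2\pi j' x^\intercal w^\star) = \tfrac{1}{2}\bigl(\sin(2\pi x^\intercal w^\star(j+j')) - \sin(2\pi x^\intercal w^\star(j-j'))\bigr)$,
express each $\sin(2\pi c\,x^\intercal w^\star)$ in terms of the exponentials $e^{\pm 2\pi i c\,x^\intercal w^\star}$, bound the absolute value of each exponential integral with \Cref{coro:complex-exp-wstar} (which applies since $|j-j'|\geq 1$ because $j\neq j'$, and $j+j'\geq 3$, exactly the observation used in \Cref{lem:integral2}), and assemble the pieces with the triangle inequality. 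This reproduces the stated bound $\tfrac{1}{2\pi R}\tfrac{\sqrt d}{R_w}$ verbatim.

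However, the route I would actually present first is shorter: the integrand $\cos(2\pi j x^\intercal w^\star)\sin(2\pi j' x^\intercal w^\star)$ is \emph{odd} under the reflection $x\mapsto -x$, since cosine is even, sine is odd, and $(-x)^\intercal w^\star = -x^\intercal w^\star$. Because the truncation box $[-R,R]^d$ is invariant under $x\mapsto -x$, the integral is identically zero, so the claimed inequality holds trivially (indeed with a great deal of slack, matching how this bound is only invoked as a loose upper bound in \Cref{eq:h-o-expval3} and the proofs of \Cref{claim:ver-if,claim:ver-only-if}). I would state this as the proof and keep the exponential computation as a remark so that the corollary reads in the same style as \Cref{coro:integral2-wstar}.

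There is essentially no obstacle here. The one point worth a sentence of care is the $d$-dimensional version of the symmetry argument: unlike the product $\prod_k e^{2\pi i x_k\hat w_k(j-j')}$ in \Cref{lem:complex-exp}, the function $\sin(2\pi c\,x^\intercal w^\star)$ does not factor across the coordinates, so the vanishing cannot be read off coordinatewise; it follows globally from the single change of variables $x\mapsto -x$ on $\mathbb{R}^d$, which sends $\varphi^2$ to itself and negates $x^\intercal w^\star$. Once that reflection (or, alternatively, the reduction to \Cref{coro:complex-exp-wstar} via the product-to-sum identity) is invoked, the proof is immediate.
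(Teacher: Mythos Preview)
Your proposal is correct. The paper's proof follows exactly your first route: it applies the product-to-sum identity $\cos A\sin B=\tfrac12(\sin(A+B)-\sin(A-B))$, expands each sine in complex exponentials, and bounds each exponential integral via \Cref{coro:complex-exp-wstar}, mirroring \Cref{lem:integral2}. Your preferred symmetry argument---that the integrand is odd under $x\mapsto -x$ and the box $[-R,R]^d$ is symmetric, so the integral vanishes---is correct and strictly sharper; amusingly, the paper itself invokes precisely this ``odd integrand over a symmetric domain'' observation a few lines away when deriving \Cref{eq:h-o-expval3}, but chooses not to use it here, presumably to keep the corollary's proof stylistically parallel to \Cref{coro:integral2-wstar}. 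Either argument is fine; your caveat that the $d$-dimensional oddness must be seen via the global reflection $x\mapsto -x$ (not coordinatewise) is well taken.
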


\begin{proof}
This follows by the same proof as \Cref{lem:integral2} and \Cref{coro:integral2-wstar}.
In particular, using the sum-product formulas for sine and cosine, we have
\begin{equation}
  \left|\int_{x\sim \varphi^2} \cos(2\pi j x^\intercal w^\star) \sin(2\pi j' x^\intercal w^\star) \,dx\right| = \left|\frac{1}{2}\int_{x\sim\varphi^2} \sin(2\pi (j+j') x^\intercal w^\star) + \sin(2\pi (j'-j)x^\intercal w^\star)\,dx \right|.
\end{equation}
Then, writing in terms of complex exponentials, we have
\begin{equation}
  \left|\int_{x\sim\varphi^2} \sin(2\pi (j'-j)x^\intercal w^\star)\,dx \right| \leq \frac{1}{|2i|}\left|\int_{x\sim\varphi^2} e^{2\pi ix^\intercal w^\star (j'-j)}\,dx \right| + \frac{1}{|2i|}\left|\int_{x\sim\varphi^2} e^{2\pi ix^\intercal w^\star (j-j')}\,dx \right|.
\end{equation}
The rest of the proof is the same as \Cref{lem:integral2}, using \Cref{coro:complex-exp-wstar} instead of \Cref{lem:complex-exp} to bound the complex exponential terms.
\end{proof}

Finally, we need another integral bound similar to \Cref{lem:integral2}.

\begin{corollary}
\label{coro:integral2-wstar-hat}
Let $\varphi^2$ be the uniform density over $[-R,R]^d \subseteq \mathbb{R}^d$.
Let $w^\star \in \mathbb{R}^d$ be unknown with norm $R_w > 0$, and let $\hat{w} \in \mathbb{R}^d$ be an approximation of $w^\star$ with $\norm{\hat{w} - w^\star}_\infty \leq \epsilon_1$, where $\epsilon_1 \leq R_w/(D\sqrt{d})$.
Let $1 \leq j,j' \leq D$ be integers with $j \neq j'$, for $D \in \mathbb{N}$ from \Cref{eq:g-tilde}.
Then,
\begin{equation}
  \left|\int\limits_{x\sim \varphi^2} \cos(2\pi j x^\intercal w^\star) \cos(2\pi j' x^\intercal \hat{w}) \,dx\right|\leq \frac{1}{2\pi R}\frac{\sqrt{d}}{R_w - D\sqrt{d}\epsilon_1}.
\end{equation}
\end{corollary}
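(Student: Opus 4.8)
The plan is to mirror the proof of \Cref{lem:integral2}, the only new feature being that the two cosine factors now involve different frequency vectors, $jw^\star$ and $j'\hat{w}$. First I would apply the product-to-sum identity to write
\begin{equation}
  \cos(2\pi j x^\intercal w^\star)\cos(2\pi j' x^\intercal \hat{w}) = \tfrac{1}{2}\cos\!\big(2\pi x^\intercal(jw^\star - j'\hat{w})\big) + \tfrac{1}{2}\cos\!\big(2\pi x^\intercal(jw^\star + j'\hat{w})\big),
\end{equation}
so that it suffices to bound $\left|\int_{x\sim\varphi^2}\cos(2\pi x^\intercal v)\,dx\right|$ for each of the two vectors $v_- \triangleq jw^\star - j'\hat{w}$ and $v_+ \triangleq jw^\star + j'\hat{w}$, and then combine with the two factors of $\tfrac12$.

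For a fixed vector $v$, I would follow exactly the argument inside \Cref{lem:complex-exp}: expand $\cos(2\pi x^\intercal v) = \tfrac12(e^{2\pi i x^\intercal v} + e^{-2\pi i x^\intercal v})$, note that the integral of $e^{\pm 2\pi i x^\intercal v}$ over $[-R,R]^d$ factorizes across coordinates, bound $d-1$ of these one-dimensional integrals trivially by $2R$, and on the remaining coordinate $k$ evaluate $\left|\int_{-R}^{R} e^{2\pi i x_k v_k}\,dx_k\right| = \left|\sin(2\pi R v_k)/(\pi v_k)\right| \le 1/(\pi|v_k|)$. Combined with the $\tfrac{1}{(2R)^d}$ normalization this yields $\left|\int_{x\sim\varphi^2}\cos(2\pi x^\intercal v)\,dx\right| \le \tfrac{1}{2\pi R|v_k|}$.

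The key quantitative step — and the main place anything could go wrong — is choosing $k$ and lower bounding $|v_k|$. As in the earlier lemmas, since $\norm{w^\star}_2 = R_w$ there is a coordinate $k$ with $|w_k^\star| \ge R_w/\sqrt d$; fixing that $k$, the reverse triangle inequality together with $|j-j'|\ge 1$ (distinct positive integers), $j,j'\le D$, and $|\hat w_k - w_k^\star| \le \epsilon_1$ gives
\begin{equation}
  |v_{-,k}| = |(j-j')w_k^\star - j'(\hat w_k - w_k^\star)| \ge |j-j'|\,|w_k^\star| - j'\,|\hat w_k - w_k^\star| \ge \frac{R_w}{\sqrt d} - D\epsilon_1 = \frac{R_w - D\sqrt d\,\epsilon_1}{\sqrt d},
\end{equation}
and, since $jw_k^\star$ and $j'w_k^\star$ have the same sign, $|v_{+,k}| \ge (j+j')|w_k^\star| - j'\epsilon_1 \ge 3R_w/\sqrt d - D\epsilon_1$, which is at least the same quantity; the hypothesis $\epsilon_1 \le R_w/(D\sqrt d)$ ensures this denominator is nonnegative so the bound is meaningful. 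Plugging $|v_k| \ge (R_w - D\sqrt d\,\epsilon_1)/\sqrt d$ into the per-vector estimate gives $\tfrac{1}{2\pi R}\cdot\tfrac{\sqrt d}{R_w - D\sqrt d\,\epsilon_1}$ for each of the two cosine terms, and the factors $\tfrac12$ from the product-to-sum expansion make the two contributions sum to exactly the claimed bound. I expect no real obstacle beyond bookkeeping; the only point requiring care is the sign/cancellation control in the lower bound on $|v_{-,k}|$, which the $|j-j'|\ge 1$ frequency gap and the smallness of $\epsilon_1$ resolve.
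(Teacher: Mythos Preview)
Your proposal is correct and follows essentially the same route as the paper: product-to-sum, factorize the uniform integral over coordinates, bound $d-1$ factors trivially, evaluate the remaining one as $|\sin(2\pi R v_k)/(\pi v_k)|$, and lower bound $|v_k|$ via $|j-j'|\ge 1$, $j'\le D$, and $|\hat w_k-w_k^\star|\le\epsilon_1$. The only cosmetic difference is that the paper first fixes $j'>j$ and argues the quantity is nonnegative before taking absolute values (then notes the other case is symmetric), whereas your direct use of the reverse triangle inequality on $(j-j')w_k^\star - j'(\hat w_k - w_k^\star)$ handles both cases at once.
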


\begin{proof}
The proof is similar to that of \Cref{lem:integral2}, but we write it out fully to keep track of the differences.
Using the product formulas for cosine, we can write the integral as
\begin{equation}
  \label{eq:sum-prod2-wstar-hat}
  \left|\int\limits_{x\sim \varphi^2} \cos(2\pi j x^\intercal w^\star) \cos(2\pi j' x^\intercal \hat{w}) \,dx\right| = \left|\frac{1}{2} \int_{x\sim \varphi^2} \cos(2\pi (jx^\intercal w^\star - j'x^\intercal \hat{w})) + \cos(2\pi (jx^\intercal w^\star + j'x^\intercal \hat{w}))\,dx\right|.
\end{equation}
We can bound each of the integrals on the right hand side similarly.
Starting with the first term, we can write it in terms of complex exponentials
\begin{align}
  \label{eq:complex-sum-wstar-hat}
  \left|\int_{x\sim \varphi^2} \cos(2\pi (jx^\intercal w^\star - j'x^\intercal \hat{w}))\,dx\right| \leq \frac{1}{2}\left| \int_{x \sim \varphi^2} e^{2\pi i (jx^\intercal w^\star - j'x^\intercal \hat{w})}\,dx \right| + \frac{1}{2}\left| \int_{x \sim \varphi^2} e^{2\pi i (j'x^\intercal \hat{w} - jx^\intercal w^\star)}\,dx \right|
\end{align}
Each of these complex exponentials can be bounded by an argument similar to \Cref{lem:complex-exp}.
Using that $\varphi^2$ is the uniform density:
\begin{equation}
\left|\int_{x \sim \varphi^2} e^{2\pi i(j'x^\intercal \hat{w}-jx^\intercal w^\star)}\,dx\right| = \left|\frac{1}{(2R)^d} \prod_{k=1}^d \int_{x_k=-R}^{+R} e^{2\pi i (j'x_k\hat{w}_k -jx_kw_k^\star)}\,dx_k \right|.
\end{equation}
Again, we can bound each of these integrals by $2R$ as in \Cref{eq:triv-int-bound}.
Notice that because $\norm{w^\star}_2^2 = \sum_{i=1}^d |w_i^\star|^2 = R_w^2$, then there must exist some $k \in [d]$ such that $|w_k^\star| \geq \sqrt{R_w/d}$.
Then, we will bound each integral in the product above using \Cref{eq:triv-int-bound} except for this $k$ such that $|w_k^\star|\geq R_w/\sqrt{d}$:
\begin{align}
  \left|\int_{x \sim \varphi^2} e^{2\pi i(j'x^\intercal \hat{w}-jx^\intercal w^\star)}\,dx\right| &= \left|\frac{1}{(2R)^d} \prod_{k=1}^d \int_{x_k=-R}^{+R} e^{2\pi i (j'x_k\hat{w}_k -jx_kw_k^\star)}\,dx_k \right|\\
  &\leq \frac{1}{2R} \left|\int_{x_k=-R}^{+R} e^{2\pi i (j'x_k\hat{w}_k -jx_kw_k^\star)}\,dx_k\right|\\
  &= \frac{1}{2R} \left|\int_{x_k=-R}^{+R} \cos(2\pi (j'x_k\hat{w}_k -jx_kw_k^\star))\,dx_k\right|\\
  &= \frac{1}{2R}\left|\frac{\sin(2\pi R(j'\hat{w}_k-jw_k^\star))}{\pi(j'\hat{w}_k-jw_k^\star)}\right|\\
  &\leq \frac{1}{2\pi R}\frac{1}{|j'\hat{w}_k - jw_k^\star|}.\label{eq:exp-bound-wstar-hat}
\end{align}
Here, in the second line, we used \Cref{eq:triv-int-bound}.
In the third line, because we are integrating over a symmetric interval, the sine contribution vanishes.
In the fifth line, we use that $|\sin(x)| \leq 1$.
We wish to lower bound $|j'\hat{w}_k - jw_k^\star|$.
Recall that we chose $k$ such that $|w_k^\star| \geq R_w/\sqrt{d}$ and $|\hat{w}_i - w_i^\star| \leq \epsilon_1$ for all $i \in [d]$ with $\epsilon_1 \leq R_w/(D\sqrt{d})$.
Without loss of generality, for $D \geq j' > j \geq 1$, then note that
\begin{equation}
  j'\hat{w}_k - jw_k^\star \geq (j' - j)w_k^\star - j'\epsilon_1 \geq \frac{R_w}{\sqrt{d}} - D\epsilon_1 \geq 0.
\end{equation}
Here, in the first inequality, we use that $\hat{w}_k \geq w_k^\star -\epsilon_1$.
In the second inequality, we use that $w_k^\star \geq R_w/\sqrt{d}$ for our choice of $k$, $j' - j \geq 1$ (since $j' > j$ in this case), and $j' \leq D$.
The last inequality holds due to our choice of $\epsilon \leq R_w/(D\sqrt{d})$.
Thus, since these terms are nonnegative, then taking the absolute value, we have
\begin{equation}
  |j'\hat{w}_k - jw_k^\star| \geq |(j'-j)w_k^\star - j'\epsilon_1|.
\end{equation}
We can further lower bound this using the reverse triangle inequality:
\begin{equation}
  |j'\hat{w}_k - jw_k^\star| \geq ||j' - j| |w_k^\star| - j'\epsilon_1|.
\end{equation}
One can arrive at the same inequality for $j' \leq j$ as well by a similar argument.
Thus, from here, we can simply consider any $j' \neq j$.
Since $j\neq j'$, then $|j' - j| \geq 1$, and we also know that $j' \leq D$.
Thus, we have
\begin{equation}
  |j'-j||w_k^\star| - j'\epsilon_1 \geq |w_k^\star| - D\epsilon_1 \geq 0,
\end{equation}
where the last inequality follows again by our choice of $k$ with $|w_k^\star| \geq R_w/\sqrt{d}$ and the choice of $\epsilon_1$.
Thus, taking the absolute value, we have
\begin{equation}
  |j'\hat{w}_k - jw_k^\star| \geq ||w_k^\star| - D\epsilon_1|.
\end{equation}
Finally, using that $|w_k^\star| \geq R_w/\sqrt{d}$, we have
\begin{equation}
  |j'\hat{w}_k - jw_k^\star| \geq \left|\frac{R_w}{\sqrt{d}} - D\epsilon_1 \right| \geq \frac{R_w}{\sqrt{d}} - D\epsilon_1.
\end{equation}
Plugging this back into \Cref{eq:exp-bound-wstar-hat},  then we have
\begin{equation}
  \left|\int_{x \sim \varphi^2} e^{2\pi i(j'x^\intercal \hat{w}-jx^\intercal w^\star)}\,dx\right| \leq \frac{1}{2\pi R}\frac{\sqrt{d}}{R_w - D\sqrt{d}\epsilon_1}.
\end{equation}
Putting this together with \Cref{eq:complex-sum-wstar-hat}, we can bound the first term in \Cref{eq:sum-prod2-wstar-hat}
\begin{equation}
  \left|\int_{x\sim \varphi^2} \cos(2\pi (jx^\intercal w^\star - j'x^\intercal \hat{w}))\,dx\right| \leq \frac{1}{2\pi R}\frac{\sqrt{d}}{R_w - D\sqrt{d}\epsilon_1}.
\end{equation}
We can similarly bound the second term in \Cref{eq:sum-prod2-wstar-hat} using the same approach.
Thus, with \Cref{eq:sum-prod2-wstar-hat}, we have the desired bound.
\end{proof}

\section{Non-uniform distributions}
\label{sec:non-unif}

In this section, we repeat the steps of Appendix~\ref{sec:uniform} when instead given QSQ access to quantum example states with respect to a non-uniform distribution satisfying some technical assumptions.

Recall that we want to learn the target function $g_{w^\star}(x) = \tilde{g}(x^\intercal w^\star)$ for some unknown $w^\star \in \mathbb{R}^d$ and $\tilde{g}$ a function given in \Cref{eq:g-tilde}.
We refer to the definitions in Appendix~\ref{sec:detail-prob} for the precise problem statement.
Again, the overall idea of the algorithm is to apply period finding to find the unknown vector $w^\star$ one component at a time.
Then, given the form of $\tilde{g}$ from \Cref{eq:g-tilde}, we can find the unknown parameters $\beta^\star_j$ via gradient methods.

As in Appendix~\ref{sec:uniform}, we need to suitable discretize and truncate our target function.
In addition, we also need to introduce a suitable discretization of our distribution.
For the discretization of the target function, the results from Appendix~\ref{sec:linear-uniform} carry over.
Thus, we refer to the discretized function as $h_{w^\star, M_1, M_2}$, which has discretization parameters $M_1, M_2 \in \mathbb{Z}$.

Now, we define our discretized distribution and state our assumptions.
Consider a nonnegative function $p: \mathbb{R}^d \to [0,1]$ that can be written as
\begin{equation}
  p(x) \triangleq \prod_{j=1}^d p_j(x_j)
\end{equation}
for some nonnegative function $p_j: \mathbb{R} \to [0,1]$.
Let $\varphi^2$ denote the probability distribution defined by $p^2$, suitably normalized.
In particular, we consider the quantum example state
\begin{equation}
  \ket{h_{w^\star, M_1, M_2}} = \frac{1}{\tilde{G}}\sum_{x_1,\dots, x_d =-R}^{R-1} p(x)\ket{x}\ket{h_{w^\star, M_1, M_2}(x)},\quad \tilde{G} \triangleq \sum_{x_1,\dots, x_d=-R}^{R-1} p^2(x),
\end{equation}
where $\tilde{G}$ is a normalization constant and $x = x_1\cdots x_d$.
Throughout the rest of this section, we suppose that we are given access to quantum statistical queries with respect to this example state and discretization/truncation parameters $M_1, M_2, R$.
We note that one can consider preparing example states by first preparing $\sum_x p_\Sigma(x) \ket{x}$ and then evaluating the function $h_{w^\star, M_1, M_2}$ coherently.
to Algorithms to prepare this superposition over all inputs $x$ for, e.g., discrete Gaussian distributions, has been well-studied~\cite{chen2021quantum,chen2024quantum,grover2002creating,rattew2021efficient,mcardle2022quantum,holmes2020efficient,iaconis2024quantum}.

We consider the following additional assumptions on the probability distributions:
\begin{enumerate}
  \item\label{assum:fourier-conc} $\varphi^2$ is Fourier concentrated (\Cref{def:cont_fourier_conc}).
  \item\label{assum:pointwise-close} For a chosen truncation parameter $R$, for all $x \in [-R, R]$ and $j \in [d]$, then $|1 - p_j^2(x)| \leq 1/10$.
  \item\label{assum:bounded-1} The functions $p_j:\mathbb{R} \to [0,1]$ are nonnegative and bounded by $1$.
  \item\label{assum:even} $p^2$ is an even function.
  \item\label{assum:deriv} Let $M_1$ be a chosen discretization parameter, and let the truncation parameter be $R \triangleq \tilde{R}M_1$ for some suitably chosen $\tilde{R} \geq 1$. Then, the derivative of $p_j'$ is bounded: $|p_j'(M_1 x)| \leq \frac{\pi D R_w}{2M_1}$ and $|p_j'(M_1x + T)| \leq \frac{\pi D R_w}{2M_1}$ for all $x \in [-\tilde{R}, \tilde{R}]$ and $j \in [d]$, where $T$ is a guess for the period from Hallgren's algorithm (Appendix~\ref{sec:hallgren}).
  \item\label{assum:crit-points} $p^2$ has a constant number of critical points.
\end{enumerate}

Note that Assumption~\ref{assum:fourier-conc} is necessary in order for classical hardness to hold~\cite{shamir2018distribution}.
Also, one can think of Assumption~\ref{assum:deriv} as just needing this bound on the absolute value of the derivative for all inputs.
We state it more specifically in the form we require for the proofs.
While these assumptions may seem restrictive at first, we show later in this section that they are satisfied by several natural distributions when taking the scale parameter large enough, such as Gaussians, generalized Gaussians~\cite{subbotin1923law}, and logistic distributions.
With these assumptions on the input distribution, we can efficiently learn the target functions $g_{w^\star}$ using QSQs.

\begin{theorem}[Guarantee; Non-Uniform Case]
\label{thm:non-unif-guarantee}
Let $\epsilon, \delta > 0, \tau \geq 0$.
Let $\varphi^2 \propto \prod_{j=1}^d p_j^2$ be a probability distribution over $[-R, R]^d$ satisfying Assumptions~\ref{assum:fourier-conc}-\ref{assum:crit-points} for the parameters specified shortly.
Let $w^\star \in \mathbb{R}^d$ be unknown with norm $R_w > 0$ and $w_j^\star \geq R_w/d^2$, for all $j \in [d]$.
Let $g_{w^\star}: \mathbb{R}^d \to [-1,1]$ be defined as $g_{w^\star}(x) = \tilde{g}(x^\intercal w^\star)$, where $\tilde{g}: \mathbb{R} \to [-1,1]$ is a function defined in \Cref{eq:g-tilde}.
Consider parameters $M_1 = \max(70\pi d^2 D^3 R_w, R_w^2/\epsilon_1)$, $M_2 = c M_1$, where $c$ is any constant such that $M_2$ is an integer and $c < 1/(8\pi DR_w)$, and
\begin{equation}
    \tilde{R} = \tilde{\Omega}\left(\max\left(\frac{\tau M_1^2d^4}{R_w^2}, \frac{D^2}{\epsilon}, \frac{D^2\sqrt{d}}{R_w \epsilon}, \frac{D^{5/2}}{\sqrt{\epsilon}}, \frac{D^{3/2}\sqrt{d}}{R_w \sqrt{\epsilon}}, \frac{d^2D}{R_w^2}\right)\right),\quad \epsilon_1 = \tilde{\mathcal{O}}\left(\min\left(\frac{\epsilon^3}{D^6 d}, \frac{\epsilon^{3/2}}{D^{13/2}d}, \frac{R_w}{D\sqrt{d}}\right)\right).
\end{equation}
Suppose we have QSQ access (see \Cref{def:qsq}) with respect to discretization parameters $M_{1,m} \triangleq m M_1$, $M_{2,m} \triangleq m M_2$ and a truncation parameter $R \geq \tilde{R}$, for $m \in \{1,\dots, D\}$.
Then, there exists a quantum algorithm with this QSQ access that can efficiently find parameters $\hat{\beta} \in \mathbb{R}^D$ such that $\mathcal{L}_{w^\star}(\hat{\beta}) \leq \epsilon$ with probability at least $1-\delta$.
Moreover, this algorithm uses
\begin{equation}
   N = \mathcal{O}\left(d D \log\left(\frac{1}{\delta}\right) \log^5\left(\frac{M_1 d^2}{R_w}\right)\right)
\end{equation}
quantum statistical queries with tolerance $\tau \leq \min\left(\frac{1}{M_2^2}\left(\frac{5}{42} - \frac{3}{2M_2}\right), \frac{1}{2D^2M_2^2}\left(\frac{2}{9} - \frac{1}{8}\left(\frac{2\pi R_w}{M_1}\right)^2 + \frac{3D^2}{M_2}\right)\right)$ and
\begin{equation}
  t = \Theta\left(\log\left(\sqrt{\frac{D}{\epsilon}}\right)\right)
\end{equation}
iterations of gradient descent.
\end{theorem}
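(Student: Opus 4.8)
The plan is to mirror the two-step structure of the uniform analysis in \Cref{sec:uniform}: first recover an approximation $\hat w$ of $w^\star$ with $\norm{\hat w - w^\star}_\infty \le \epsilon_1$ via a period-finding subroutine adapted to non-uniform superpositions (Step (1), carried out in \Cref{sec:linear-non-unif}), and then run classical gradient descent on $\mathcal L_{w^\star}(\beta)$ to fit the outer coefficients $\beta^\star$ (Step (2), in \Cref{sec:outer-non-unif}). Crucially, the discretization is unchanged: \Cref{lem:discrete-general-unif} uses only properties of $g_{w^\star}$ and not of the input distribution, so $h_{w^\star, M_1, M_2}$ is still $(33/35)$-pseudoperiodic in each coordinate with period $M_1/w_j^\star$; only the amplitudes $p(x)/\tilde G$ differ from the flat comb.

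For Step (1), the quantum part is again twice quantum Fourier sampling (same observable as in the uniform case) plus the classical continued-fraction postprocessing of \Cref{alg:hallgren} and calls to a verification oracle. The obstacle is that \Cref{thm:hallgren} is stated for uniform superpositions, so I would redo its internal Fourier-sampling analysis with the weights $p_j$: after measuring the output register one gets a state $\propto \sum_k p_j\!\left(x_0 + [kS]\right)\ket{x_0 + [kS]}$, and Assumption~\ref{assum:pointwise-close} ($|1-p_j^2|\le 1/10$) sandwiches the resulting measurement probabilities between constant multiples of their uniform-case values, preserving the $\Omega(1/S)$ lower bound on hitting a useful frequency, while Assumption~\ref{assum:deriv} (bounded $p_j'$) controls the sum-to-integral errors introduced by the weighting. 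This yields the analogue of \Cref{thm:linear-uniform} with the same $\Omega(\eta^2/\log^4 A)$ success probability up to constants. The verification oracle is the other place that changes: keeping $O_{k,m}$ from \Cref{eq:ov} so that $\expval{O_{k,m}}{h_{w^\star,M_{1,m},M_{2,m}}}$ still computes the ($p^2$-weighted) autocorrelation of $h$ at shift $T$, one re-derives the integral estimates of \Cref{sec:int-bounds} with the extra factor $\prod_j p_j^2(x)$ — using Assumption~\ref{assum:even} to kill odd cross terms, Assumption~\ref{assum:pointwise-close} to pin the diagonal contribution near $1/2$, Assumption~\ref{assum:deriv} to pass from sums to integrals, and Assumption~\ref{assum:crit-points} to keep the fraction of monotone intervals large. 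The thresholds shift slightly (hence the constants $5/42$, $2/9$, $3D^2/M_2$ in the stated $\tau$ bounds in place of $7/40$, $2/15$, $2D^2/M_2$), but the same if-and-only-if structure survives, so the non-uniform analogue of \Cref{alg:verification} still decides $|T - \ell M_1/w_k^\star|\le 1$ using $D$ QSQs (this is \Cref{thm:verification-non-unif}). Repeating over the $d$ coordinates and amplifying the success probability gives $\hat w$ using $\mathcal O(dD\log(1/\delta)\log^5(M_1 d^2/R_w))$ QSQs.

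Step (2) is purely classical and very close to \Cref{thm:g-tilde-uniform}: every integral $\int_{x\sim\varphi^2}(\cdot)\,dx$ now carries the weight $\propto\prod_j p_j^2(x)$, but since $|1-p_j^2|\le 1/10$ pointwise and $p^2$ has bounded derivative and finitely many critical points, the informative-gradients lemma, the parameter-boundedness lemma, and the loss bound all go through with their constants inflated by $O(1)$ factors, and $\tilde G$ stays within a constant of its uniform value. Gradient descent with constant step size from $\beta^{(0)}=0$ thus still reaches $\norm{\beta^{(t)}-\beta^\star}_2\le\sqrt{\epsilon}$ in $t=\Theta(\log(D/\epsilon))$ steps; the slightly larger $M_1 = \max(70\pi d^2 D^3 R_w, R_w^2/\epsilon_1)$ and the extra $d^2 D/R_w^2$ term in $\tilde R$ absorb the mild $d$-blowup from the weighted bounds, forcing $\mathcal L_{w^\star}(\hat\beta)\le\epsilon$. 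Combining the two steps with a union bound over their failure probabilities yields the stated $N$ and $t$.

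The hard part is Step (1): redoing the Fourier-sampling and verification analyses for a non-flat amplitude profile without losing the constant-probability guarantee — this is exactly where the ``sufficiently flat'' Assumptions~\ref{assum:pointwise-close}--\ref{assum:crit-points} are indispensable, and it has no uniform-case argument to quote verbatim. Once it is in place, the corollary for Gaussians, generalized Gaussians, and logistic distributions follows by verifying Assumptions~\ref{assum:fourier-conc}--\ref{assum:crit-points} for each (with \Cref{def:cont_fourier_conc} supplying Fourier-concentration), and \Cref{lem:5-new} then supplies the matching classical gradient lower bounds.
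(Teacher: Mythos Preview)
Your proposal is correct and mirrors the paper's two-step structure: non-uniform period finding via the analogue of Hallgren's analysis plus a weighted verification procedure (\Cref{thm:verification-non-unif}), followed by gradient descent justified by the weighted integral bounds of \Cref{sec:int-bounds-non-unif}. One small correction on where the assumptions bite: the Fourier-sampling step is analyzed purely on the discrete sum via a reverse-triangle-inequality splitting $|\sum_k e^{i\theta_k} p(x_k)| \ge |\sum_k e^{i\theta_k}| - \sum_k |1-p(x_k)|$ and needs only Assumptions~\ref{assum:pointwise-close}--\ref{assum:bounded-1} (there is no sum-to-integral conversion in that step), Assumption~\ref{assum:deriv} enters only in the verification's sum-to-integral bound, and Step~(2) requires only Assumptions~\ref{assum:pointwise-close}--\ref{assum:even}.
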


As in the uniform case, our algorithm uses QSQs with different choices of discretization/truncation parameters for the two subroutines of Hallgren's algorithm (Appendix~\ref{sec:hallgren}): quantum Fourier sampling and the verification procedure.
In the quantum Fourier sampling part, we use QSQs with respect to discretization parameters $M_1, M_2$ and truncation parameter $\tilde{R} = R$.
For verification, we use discretization parameters $M_{1,m} \triangleq m M_1, M_{2,m} \triangleq mM_2$ and truncation parameter $\tilde{R} = RM_{1,m}$ for $m \in \{1,\dots, D\}$.

Note that for non-uniform distributions, Hallgren's algorithm does not immediately apply.
To remedy this, we give a new analysis for Hallgren's algorithm for non-uniform distributions satisfying our assumptions.
In particular, the quantum Fourier sampling part of this algorithm only requires Assumptions~\ref{assum:pointwise-close} and \ref{assum:bounded-1}.

While it may seem like we have many (potentially restrictive) assumptions, the next few results show that it still captures natural classes of distributions.
In particular, the next proposition shows that generalized Gaussian distributions~\cite{subbotin1923law} with a large enough scale parameter satisfy all of them.
As a corollary, Gaussians with large enough variance also satisfy the assumptions.

\begin{prop}[Generalized Gaussians satisfy assumptions]
    \label{prop:gen-gauss-satisfies-assum}
    Let $\tau \geq 0$.
    Let $\alpha_j \geq 2$ be even shape parameters, and let $s_j > 0$ be scale parameters specified later, for $j \in [d]$.
    Let $p^2(x) = \prod_{j=1}^d \exp(-(x_j/s_j)^{\alpha_j})$ and $\varphi^2 \propto p^2$ suitably normalized so that $\varphi^2$ is a generalized Gaussian distribution.
    Let $M_1, M_2$ be discretization parameters with $M_1 \geq R_w$.
    Let
    \begin{equation}
        \tilde{R} = \tilde{\Omega}\left(\max\left(\frac{\tau M_1^2 d^4}{R_w^2}, \frac{D^2\sqrt{d}}{R_w}, \max_j\left(\frac{d^2}{R_w}\right)^{a_j - 1} \frac{D}{R_w}\right)\right)
    \end{equation}
    and let $R \geq \tilde{R}$ be the truncation parameter.
    Then, if $s_j \geq 2R\sqrt{\pi}$ for all $j \in [d]$, Assumptions~\ref{assum:fourier-conc}-\ref{assum:crit-points} are satisfied for $\varphi^2$ for truncation parameter $R$.
    In particular, $\varphi^2$ is $\epsilon(r)$-Fourier-concentrated with $\epsilon(r)$ decaying superpolynomially in $r$.
\end{prop}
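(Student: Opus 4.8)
The plan is to verify Assumptions~\ref{assum:fourier-conc}--\ref{assum:crit-points} one at a time for $\varphi^2 \propto \prod_{j=1}^d \exp(-(x_j/s_j)^{\alpha_j})$ with $\alpha_j \geq 2$ even, writing $p_j(x) = \exp(-(x/s_j)^{\alpha_j}/2)$. Three of the assumptions are essentially immediate. Assumption~\ref{assum:bounded-1} holds since the exponent is nonpositive, so $0 \leq p_j \leq 1$. Assumption~\ref{assum:even} holds because each $\alpha_j$ is even, so every factor $\exp(-(x_j/s_j)^{\alpha_j})$ is even in $x_j$. Assumption~\ref{assum:crit-points} holds because $\partial_{x_j} p^2(x) = -\tfrac{\alpha_j}{s_j}(x_j/s_j)^{\alpha_j-1}\,p^2(x)$ vanishes only when $x_j = 0$ (using $\alpha_j \geq 2$), so the unique critical point of $p^2$ is the origin.

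Next I would check the two quantitative pointwise conditions, which is exactly where the hypothesis $s_j \geq 2R\sqrt{\pi}$ enters. For Assumption~\ref{assum:pointwise-close}: if $|x| \leq R$ then $(|x|/s_j)^{\alpha_j} \leq (2\sqrt{\pi})^{-\alpha_j} \leq (4\pi)^{-1} < \ln(10/9)$, where $\alpha_j \geq 2$ is used so that the exponent is at most the square; hence $p_j^2(x) \geq 9/10$, and combined with $p_j^2 \leq 1$ this gives $|1 - p_j^2(x)| \leq 1/10$. For Assumption~\ref{assum:deriv}: differentiating gives $|p_j'(y)| \leq \tfrac{\alpha_j}{2 s_j}(|y|/s_j)^{\alpha_j-1}$, and on the truncated window $|y| \leq R \leq s_j/(2\sqrt{\pi})$ this is at most $\tfrac{\alpha_j}{2 s_j}(2\sqrt{\pi})^{-(\alpha_j-1)}$; since $s_j \geq 2R\sqrt{\pi} = 2\tilde{R}M_1\sqrt{\pi}$, the right-hand side is a bounded multiple of $(\tilde{R}M_1)^{-1}$, and choosing $\tilde{R}$ as large as stated (this is precisely where the $\max_j (d^2/R_w)^{\alpha_j-1} D/R_w$ term in the lower bound on $\tilde{R}$ is needed, absorbing the $\alpha_j$-dependence and reconciling it with the period upper bound $M_1 d^2/R_w$ that the non-uniform discretization step uses) forces $|p_j'(y)| \leq \tfrac{\pi D R_w}{2 M_1}$. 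The bound on $|p_j'(M_1 x + T)|$ follows identically, since the cyclic shift by a Hallgren guess $T$ keeps the argument in the same truncated window.

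The crux is Assumption~\ref{assum:fourier-conc}. Since $\varphi$ is a normalized product $Z^{-1}\prod_j q_j$ with $q_j(x) = \exp(-(x/s_j)^{\alpha_j}/2)$, its Fourier transform factorizes as $\hat{\varphi}(y) = Z^{-1}\prod_j \hat{q}_j(y_j)$, and $\norm{\hat{\varphi}}_2 = \norm{\varphi}_2 = 1$. Using the inclusion $\{\norm{y}_2 \geq r\} \subseteq \bigcup_j \{|y_j| \geq r/\sqrt{d}\}$, a union bound, and Parseval on each coordinate, $\norm{\hat{\varphi}\cdot\mathbf{1}_{\geq r}}_2^2$ is at most $\sum_j \big(\int_{|y_j|\geq r/\sqrt{d}}|\hat{q}_j(y_j)|^2\,dy_j\big)\big/\norm{q_j}_2^2$, so it suffices to bound one one-dimensional Fourier tail. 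Rescaling via $\hat{q}_j(y) = s_j\,\hat{g}_{\alpha_j}(s_j y)$ with $g_\alpha(u) = \exp(-u^\alpha/2)$ reduces this to the tail of $\hat{g}_\alpha$; and because $g_\alpha$ is a Schwartz function for even $\alpha$ (it is $C^\infty$ and each derivative lies in $L^1$, the exponent tending to $-\infty$ polynomially), $N$-fold integration by parts gives $|\hat{g}_\alpha(v)| \leq \norm{g_\alpha^{(N)}}_1/(2\pi|v|)^N$ for every $N$, whence $\int_{|v|\geq \rho}|\hat{g}_\alpha|^2 = \mathcal{O}_N(\rho^{-2N+1})$. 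Unwinding the rescaling (larger $s_j$ only shrinks the tail) yields $\norm{\hat{\varphi}\cdot\mathbf{1}_{\geq r}}_2 \leq \norm{\hat{\varphi}}_2\,\epsilon(r)$ with $\epsilon(r)$ decaying faster than any polynomial, which also certifies $\varphi \in L^2(\mathbb{R}^d)$ and gives the claimed superpolynomial Fourier concentration.

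The main obstacle is this last step: one must obtain a tail bound on $\hat{g}_\alpha$ that is uniform in the shape parameter $\alpha_j$ (so the constants in $\epsilon(r)$ do not blow up with $\alpha_j$), and then track the normalization constants $Z$ and $\norm{q_j}_2$ carefully when stitching the $d$ one-dimensional estimates into the multivariate bound. The only other delicate point is matching the exponent in the stated lower bound on $\tilde{R}$ to exactly what Assumption~\ref{assum:deriv} requires; everything else is routine bookkeeping. Finally, the ``in particular'' Gaussian corollary is the special case $\alpha_j = 2$, for which $g_2$ is an ordinary Gaussian and all of the above estimates become explicit.
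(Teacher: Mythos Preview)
Your verification of Assumptions~\ref{assum:bounded-1}, \ref{assum:even}, \ref{assum:crit-points}, and~\ref{assum:pointwise-close} is correct and essentially identical to the paper's. Your treatment of Assumption~\ref{assum:fourier-conc} is more explicit than the paper's (which simply observes that each $p_j$ is Schwartz, the product of Schwartz functions is Schwartz, and the Fourier transform of a Schwartz function is Schwartz, so the tail decays superpolynomially); your explicit union bound plus integration-by-parts route is fine and proves more, but is not needed here.

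There is, however, a genuine gap in your handling of Assumption~\ref{assum:deriv}. You write that the bound on $|p_j'(M_1 x + T)|$ ``follows identically, since the cyclic shift by a Hallgren guess $T$ keeps the argument in the same truncated window.'' This is not true: the shift $S_{k,-T}$ is cyclic only on the index register, not on the argument of $p_k$, and $M_1 x + T$ need not lie in $[-\tilde{R}M_1, \tilde{R}M_1]$. In the paper's proof the guess $T$ produced in Step~4 of \Cref{alg:hallgren-non-unif} is bounded explicitly, using the structure of the Fourier-sampling outputs and the lower bound $w_k^\star \geq R_w/d^2$, by $|T| \leq 2M_1 d^2 \tilde{R}/R_w$. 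Plugging this into $|p_j'(M_1 x + T)| \leq \tfrac{\alpha_j}{2 s_j^{\alpha_j}}(M_1\tilde{R} + |T|)^{\alpha_j-1}$ produces the extra factor $(1 + 2d^2/R_w)^{\alpha_j-1}$, and absorbing it is exactly what forces the $\max_j (d^2/R_w)^{\alpha_j-1}\,D/R_w$ term in the lower bound on~$\tilde{R}$. So you have misattributed that term: it is not needed for the unshifted bound $|p_j'(M_1 x)|$ (where $|M_1 x| \leq \tilde{R}M_1$ already suffices), but is precisely what controls the shifted one. Without bounding $|T|$ you cannot close the argument for $|p_j'(M_1 x + T)|$.
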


The corollary follows easily because the Gaussian distribution is a special case of the generalized Gaussian for shape parameter $\alpha_j = 2$.
Fourier concentration follows by standard Gaussian concentration arguments.

\begin{corollary}[Gaussians satisfy assumptions]
  \label{coro:gauss-satisfies-assum}
  Let $\tau \geq 0$.
  Let $\sigma_j > 0$ be standard deviations to be specified later.
  Let $p(x) \triangleq \exp(- x^\intercal \Sigma^{-1} x/2) = \prod_{j=1}^d \exp(-x_j^2/(2\sigma_j^2))$.
  Let $\varphi^2 \propto p^2$ suitably normalized so that $\varphi^2$ is a Gaussian distribution with a diagonal covariance matrix $\Sigma = \mathrm{diag}(\sigma_1^2, \dots, \sigma_d^2)$.
  Let $M_1, M_2$ be discretization parameters, with $M_1 \geq R_w$.
  Let
  \begin{equation}
    \tilde{R} = \tilde{\Omega}\left(\max\left(\frac{\tau M_1^2 d^4}{R_w^2}, \frac{D^2\sqrt{d}}{R_w}, \frac{d^2D}{R_w^2}\right)\right)
  \end{equation}
  and let $R \geq \tilde{R}$ be the truncation parameter.
  Then, if $\sigma_j \geq 2R\sqrt{\pi}$ for all $j \in [d]$, Assumptions~\ref{assum:fourier-conc}-\ref{assum:crit-points} are satisfied for $\varphi^2$ for truncation parameter $R$.
  In particular, $\varphi^2$ is $\epsilon(r)$-Fourier-concentrated with $\epsilon(r) = \exp(-\Omega(r^2))$.
\end{corollary}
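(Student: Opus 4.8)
The plan is to obtain this corollary essentially for free from \Cref{prop:gen-gauss-satisfies-assum}, using that a Gaussian is exactly a generalized Gaussian with shape parameter $\alpha_j = 2$, which is even and satisfies $\alpha_j \geq 2$. Concretely, I would set $\alpha_j = 2$ and $s_j = \sigma_j$ for every $j \in [d]$, so that the generalized-Gaussian density $\prod_j \exp(-(x_j/s_j)^{\alpha_j})$ collapses to $\prod_j \exp(-x_j^2/\sigma_j^2) = p^2(x)$ with $p(x) = \exp(-x^\intercal\Sigma^{-1}x/2)$. Under this identification the scale hypothesis $s_j \geq 2R\sqrt{\pi}$ in \Cref{prop:gen-gauss-satisfies-assum} becomes exactly $\sigma_j \geq 2R\sqrt{\pi}$, and the requirement $M_1 \geq R_w$ is inherited verbatim. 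The only place the shape parameter enters the truncation bound in the proposition is through the term $\max_j (d^2/R_w)^{\alpha_j - 1}\,D/R_w$; substituting $\alpha_j = 2$ collapses this to $(d^2/R_w)(D/R_w) = d^2 D/R_w^2$, while the remaining terms $\tau M_1^2 d^4/R_w^2$ and $D^2\sqrt{d}/R_w$ are unchanged, so the $\tilde{R}$ in the corollary is precisely the specialization of the one in the proposition. Hence Assumptions~\ref{assum:fourier-conc}--\ref{assum:crit-points} hold by direct invocation of \Cref{prop:gen-gauss-satisfies-assum}.

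The second step is to sharpen the Fourier-concentration \emph{rate} from the generic superpolynomial decay asserted in \Cref{prop:gen-gauss-satisfies-assum} to the claimed $\epsilon(r) = \exp(-\Omega(r^2))$. For this I would use the explicit Fourier transform of a Gaussian: with $\varphi(x) \propto \exp(-\tfrac12 x^\intercal\Sigma^{-1}x)$ one has $\hat\varphi(y) \propto \exp(-2\pi^2 y^\intercal\Sigma y)$, so $|\hat\varphi|^2$ is itself proportional to a centered Gaussian density whose covariance is a constant multiple of $\Sigma^{-1}$. Therefore $\norm{\hat\varphi\cdot\mathbf{1}_{\geq r}}_2^2/\norm{\hat\varphi}_2^2 = \Pr_{Y}[\norm{Y}_2 \geq r]$ for such a centered Gaussian vector $Y$, and a standard Gaussian tail bound (e.g.\ a Chernoff/Bernstein estimate on $\norm{Y}_2^2$, or the sub-Gaussian norm of $\norm{Y}_2$) bounds this probability by $\exp(-\Omega(r^2))$ with an implicit constant depending only on the $\sigma_j$ (and which only improves as the $\sigma_j$ grow). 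Taking square roots gives $\epsilon(r) = \exp(-\Omega(r^2))$, as claimed, and this is qualitatively consistent with the fact that widening $\varphi$ narrows $\hat\varphi$.

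I do not anticipate a genuine obstacle here, since this is a corollary: the proof is just a specialization plus a classical tail estimate. The only points that warrant a little care are (i) the parameter bookkeeping of the substitution $\alpha_j = 2$, $s_j = \sigma_j$, in particular verifying that each of the three terms in the $\tilde{R}$-maximum specializes to the stated expression and that the normalization of $\varphi^2$ is handled consistently with the proposition; and (ii) stating precisely which Gaussian tail inequality is invoked so that the $\Omega(r^2)$ constant is manifestly independent of $r$ (depending on $\Sigma$ only through, say, $\min_j \sigma_j^2$). Everything else follows directly from \Cref{prop:gen-gauss-satisfies-assum}.
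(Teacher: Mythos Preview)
Your proposal is correct and matches the paper's approach exactly: the paper proves the corollary in two sentences, noting that the Gaussian is the $\alpha_j=2$ case of the generalized Gaussian (so \Cref{prop:gen-gauss-satisfies-assum} applies directly) and that the sharper $\epsilon(r)=\exp(-\Omega(r^2))$ rate follows from standard Gaussian concentration. Your parameter bookkeeping (the identification $s_j=\sigma_j$ and the specialization of the $\tilde R$ terms) and your explicit Fourier-transform argument for the tail are precisely the details the paper leaves implicit.
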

Note here that we distinguish between $R$ and $\tilde{R}$.
This is because our algorithm, as explained above, uses different discretization/truncation parameters for different subroutines, and some assumptions are only relevant for particular subroutines.
Assumption~\ref{assum:deriv}, in particular, is used in the verification subroutine from Hallgren's algorithm (Appendix~\ref{sec:hallgren}), which is why we consider the truncation parameter $R = \tilde{R}M_1$.
Note that for these different choices of truncation parameters, the required lower bound on the scale parameter $s_j$ also changes.
To satisfy all conditions simultaneously, one may take $s_j \geq R D M_1 \sqrt{5\pi \sqrt{18}}$.

\begin{proof}[Proof of Proposition~\ref{prop:gen-gauss-satisfies-assum}]
We consider the functions $p_j^2(x) = \exp(-(x/s_j)^{\alpha_j})$.
For even $\alpha_j \geq 2$, then $p_j$ is a Schwartz function.
Moreover, since the product of Schwartz functions is still a Schwartz function, then $\varphi$ is also a Schwartz function.
The Fourier transform of a Schwartz function is also a Schwartz function (see, e.g., Proposition 11.25 of~\cite{hunter2001applied}).
Thus, $\varphi^2$ is $\epsilon(r)$-Fourier-concentrated with superpolynomially decaying $\epsilon(r)$, satisfying Assumption~\ref{assum:fourier-conc}.

Moreover, Assumptions~\ref{assum:bounded-1}, \ref{assum:even}, and \ref{assum:crit-points} are clearly satisfied.
For Assumption~\ref{assum:pointwise-close}, we have
\begin{equation}
    |1 - p_j^2(x)| = \left|1 - e^{-\left(\frac{x}{s_j}\right)^{\alpha_j}} \right| \leq \left|\left(\frac{x}{s_j}\right)^{\alpha_j}\right|\leq \frac{R^{\alpha_j}}{s_j^{\alpha_j}} \leq \frac{1}{(4\pi)^{\alpha_j/2}} \leq \frac{1}{4\pi}\leq \frac{1}{10},
\end{equation}
where in the first inequality, we use that $|e^{-x^\alpha} - 1| \leq x^{\alpha}$ for all $x$.
In the second inequality, we use that $x \in [-R,R]$.
In the third inequality, we use that $s_j \geq 2R\sqrt{\pi}$.
In the last inequality, we use that $\alpha_j \geq 2$.
Thus, Assumption~\ref{assum:pointwise-close} is satisfied.

Finally, we need to show that Assumption~\ref{assum:deriv} is satisfied as well.
Consider truncation parameter $R = \tilde{R}M_1$ for $\tilde{R}$ and $M_1$ defined in the proposition statement.
First, we show that $|p_j'(M_1x)| \leq \pi DR_w/(2M_1)$ for all $x \in [-\tilde{R}, \tilde{R}]$.
Taking the derivative, we have
\begin{equation}
  p_j'(x) = -\frac{\alpha_j}{2}\frac{x^{\alpha_j - 1}}{s_j^{\alpha_j}} e^{-\frac{1}{2}\left(\frac{x}{s_j}\right)^{\alpha_j}}.
\end{equation}
Plugging in $M_1x$, we have
\begin{equation}
  \label{eq:deriv-bound}
  |p_j'(M_1x)| \leq \frac{\alpha_j}{2 s_j^{\alpha_j}}|M_1 x|^{\alpha_j - 1} \leq \frac{\alpha_j}{2 s_j^{\alpha_j}}(M_1\tilde{R} )^{\alpha_j - 1} \leq \frac{\alpha_j}{2 (4\pi)^{\alpha_j/2}}\frac{1}{M_1 \tilde{R}} \leq \frac{\pi DR_w}{360 M_1},
\end{equation}
where in the first inequality, we use that $e^{-\pi z}\leq 1$.
In the second inequality, we use that $x \in [-\tilde{R}, \tilde{R}]$.
In the third inequality, we use that $s_j \geq 2R \sqrt{\pi} = 2\tilde{R}M_1 \sqrt{\pi}$.
In the last inequality, we use $\tilde{R} \geq 54D^2 \sqrt{d}/(\pi R_w) \geq 54/(\pi DR_w)$ by our choice of $\tilde{R}$ and the maximum of the function $x/(2(4\pi)^{x/2})$.
Thus, $|p_j'(M_1x)|$ satisfies the required bound.
For $|p_j'(M_1x + T)|$, we have
\begin{equation}
    \label{eq:deriv-bound2}
    |p_j'(M_1x + T)| \leq \left|\frac{\alpha_j}{2}\frac{(M_1 x + T)^{\alpha_j - 1}}{s_j^{\alpha_j}}\right| \leq \frac{\alpha_j}{2 s_j^{\alpha_j}}(M_1 \tilde{R} + |T|)^{\alpha_j - 1}.
\end{equation}
In the last inequality, we use triangle inequality and $x \in [-\tilde{R}, \tilde{R}]$.
To bound $T$, we need to appeal to the specifics of the problem, namely how the guess $T$ is produced from Hallgren's algorithm (\Cref{alg:hallgren}).
Note we will later show that Hallgren's algorithm works for non-uniform distributions, but the following analysis is the same regardless so it suffices to recall the analysis for the uniform case from \Cref{thm:linear-uniform}.
In particular, consider Step 4 of \Cref{alg:hallgren}.
Here, $T$ is either $\lfloor \alpha_i \tilde{R}/\alpha \rfloor$ or $\lceil \alpha_i \tilde{R}/\alpha \rceil$, where $\alpha, \beta$ are the outputs of running quantum Fourier sampling using QSQs and $\alpha_i/\beta_i$ are the convergents of the continued fraction expansion of $\alpha/\beta$.
In the proof of \Cref{thm:linear-uniform}, we show that $e/f$ for $1\leq e,f\leq M_1/w^\star_k$ are convergents of the continued fraction expansion of $\alpha/\beta$.
We also showed that
\begin{equation}
  |\alpha - b| \leq \tau, \quad \left| b - \frac{e\tilde{R}w^\star_k}{M_1}\right| \leq \frac{1}{2},
\end{equation}
for an integer $e \geq 1$, see, e.g., \Cref{eq:hallgren-output1}.
Then, we have
\begin{equation}
  \alpha \geq b - \tau \geq \frac{\tilde{R}w^\star_k}{M_1}- \tau - \frac{1}{2} \geq \frac{R_w \tilde{R}}{M_1 d^2} - \left(\tau - \frac{1}{2}\right) \geq 6\left(\frac{1}{2} + \tau\right)\frac{M_1d^2}{R_w} - \left(\tau + \frac{1}{2}\right) \geq 5\left(\frac{1}{2} + \tau\right) \geq 1.
\end{equation}
Here, in the third inequality, we use that $w_k^\star \geq R_w/d^2$.
In the fourth inequality, we use that $\tilde{R} \geq 6(1/2 + \tau)M_1^2d^4/R_w^2$.
In the fifth inequality, we use that $M_1/w^\star_k \geq R_w/w^\star_k \geq 1$ by our choice of $M_1$ so that $M_1d^2/R_w \geq M_1/w^\star_k \geq 1$ as well.
Finally, in the last inequality, we use that the QSQ tolerance is $\tau \geq 0$.
We can use this to bound $|T|$. Let $\lfloor \cdot \rceil$ denote either $\lfloor \cdot \rfloor$ or $\lceil \cdot \rceil$.
\begin{equation}
  |T| = \left| \left\lfloor \frac{e\tilde{R}}{\alpha} \right\rceil \right| \leq \frac{e \tilde{R}}{\alpha} + 1 \leq \frac{M_1 \tilde{R}}{w^\star_k \alpha} + 1 \leq \frac{M_1 d^2\tilde{R}}{R_w} + 1 \leq \frac{2M_1 d^2 \tilde{R}}{R_w}.
\end{equation}
In the second inequality, we use that $e \leq M_1/w^\star_k$.
In the third inequality, we use that $w_k^\star \geq R_w/d^2$ and $\alpha \geq 1$.
Putting everything together with \Cref{eq:deriv-bound2}, then we have
\begin{align}
    |p_j'(M_1x + T)| &\leq \frac{\alpha_j}{2 s_j^{\alpha_j}}\left(M_1 \tilde{R} + \frac{2M_1 d^2 \tilde{R}}{R_w} \right)^{\alpha_j - 1}\\
    &= \frac{\alpha_j}{2 s_j^{\alpha_j}}\left(1 + \frac{2d^2}{R_w}\right)^{\alpha_j - 1}(M_1\tilde{R})^{\alpha_j - 1}\\
    &\leq \frac{\alpha_j}{2 (4\pi)^{\alpha_j/2}}\left(1 + \frac{2d^2}{R_w}\right)^{\alpha_j - 1} \frac{1}{M_1 \tilde{R}}\\
    &\leq \frac{3}{20}\frac{\pi DR_w}{M_1},
\end{align}
where in the third line, we use $s_j \geq 2 R \sqrt{\pi} = 2\tilde{R}M_1 \sqrt{\pi}$.
In the last line, we use that $\tilde{R} \geq (d^2/R_w)^{\alpha_j - 1} D/(\pi R_w) \geq (d^2/R_w)^{\alpha_j - 1} /(\pi D R_w)$.
This gives the desired bound on $|p_j'(M_1x + T)|$ as well, completing the proof.
\end{proof}

As another example, our assumptions are also satisfied by logistic distributions.

\begin{prop}[Logistic distributions satisfy assumptions]
    \label{prop:logistic-satisfies-assum}
    Let $s_j > 0$ be scale parameters specified later for $j \in [d]$.
    Let $p^2(x) = \prod_{j=1}^d \sech^2(x_j/(2s_j))$ and $\varphi^2 \propto p^2$ suitably normalized so that $\varphi^2$ is a logistic distribution.
    Let $M_1, M_2$ be discretization parameters.
    Let $\tilde{R} = \tilde{\Omega}(\tau M_1^2 d^4/R_w^2, \sqrt{d})$ and let $R \geq \tilde{R}$ be the truncation parameter.
    Then, if $s_j \geq \max(4\pi R, M_1/(\pi D R_w))$ for all $j \in [d]$, Assumptions~\ref{assum:fourier-conc}-\ref{assum:crit-points} are satisfied for $\varphi^2$ for truncation parameter $R$.
    In particular, $\varphi^2$ is $\epsilon(r)$-Fourier-concentrated with $\epsilon(r) = \exp(-\Omega(rd))$.
\end{prop}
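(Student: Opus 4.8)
The plan is to check Assumptions~\ref{assum:fourier-conc}-\ref{assum:crit-points} in turn for $\varphi^2 \propto \prod_{j=1}^d \sech^2(x_j/(2s_j))$, i.e., $p_j(x) = \sech(x/(2s_j))$. Assumptions~\ref{assum:bounded-1} (each $p_j$ nonnegative and bounded by $1$), \ref{assum:even} ($p^2$ even), and \ref{assum:crit-points} (constantly many critical points) are immediate: $\sech$ is even and takes values in $(0,1]$, so the same holds for every $p_j$ and for the product $p^2$; and since $\tfrac{d}{dx}\sech^2(x/(2s_j)) = -\tfrac{1}{s_j}\sech^2(x/(2s_j))\tanh(x/(2s_j))$ vanishes only at $x=0$ while each $p_j^2>0$ everywhere, the gradient of $p^2$ vanishes only at the origin. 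For Assumption~\ref{assum:pointwise-close} I would use the identity $1-\sech^2(y)=\tanh^2(y)$ together with $|\tanh(y)|\le|y|$ to bound $|1-p_j^2(x)| = \tanh^2(x/(2s_j)) \le (R/(2s_j))^2 \le (1/(8\pi))^2 \le 1/10$ for $x\in[-R,R]$, using $s_j\ge 4\pi R$.

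Assumption~\ref{assum:deriv} is the cleanest point for the logistic family. Since $p_j' = -\tfrac{1}{2s_j}\sech(\cdot/(2s_j))\tanh(\cdot/(2s_j))$ and $\sup_y|\sech(y)\tanh(y)| = \tfrac12$, we have $|p_j'(y)| \le \tfrac{1}{4s_j}$ uniformly on $\mathbb{R}$. Hence both $|p_j'(M_1 x)|$ and $|p_j'(M_1 x + T)|$ are at most $\tfrac{1}{4s_j} \le \tfrac{\pi D R_w}{2M_1}$ once $s_j \ge M_1/(2\pi D R_w)$ (implied by $s_j \ge M_1/(\pi D R_w)$), and — in contrast to the generalized-Gaussian argument — this requires no control over the range of $x$ or over the Hallgren guess $T$ (see \Cref{alg:hallgren}) at all, since the bound on $p_j'$ is global.

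The substantive step is Assumption~\ref{assum:fourier-conc}, Fourier concentration. I would first compute $\hat\varphi$ explicitly, using the (essentially self-dual) pair $\mathcal{F}[\sech](\xi) = \pi\sech(\pi^2\xi)$ under the paper's convention $\hat f(y) = \int e^{-2\pi i x^\intercal y}f(x)\,dx$, together with $\int\sech^2(x/(2s))\,dx = 4s$; this gives $|\hat\varphi(\xi)|^2 = \prod_{j=1}^d \pi^2 s_j\sech^2(2\pi^2 s_j\xi_j)$, which integrates to $1$, consistent with $\|\hat\varphi\|_2 = \|\varphi\|_2 = 1$. I would then estimate the tail mass $\int_{\|\xi\|_2\ge r}|\hat\varphi(\xi)|^2\,d\xi$ using $\sech^2(y)\le 4e^{-2|y|}$: on $\{\|\xi\|_2\ge r\}$ at least one coordinate has $|\xi_k|\ge r/\sqrt d$, so a union bound over $k$ plus integrating out the remaining (probability-one) one-dimensional marginals reduces this to $\sum_{k}\int_{|\xi_k|\ge r/\sqrt d}\pi^2 s_k\sech^2(2\pi^2 s_k\xi_k)\,d\xi_k \le 2d\,e^{-4\pi^2 s_{\min} r/\sqrt d}$; substituting $s_j\ge 4\pi R$ and then the lower bound on $R$ (which carries the relevant factors of $d$) yields the claimed $\epsilon(r) = \exp(-\Omega(rd))$. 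The main obstacle is entirely in this last step: pinning down the Fourier pair in the paper's normalization, carrying the constants through the tail integral, and threading the lower bounds on $s_j$ and $\tilde R$ through to extract the stated decay rate — the remaining assumptions being routine.
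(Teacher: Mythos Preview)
Your checks of Assumptions~\ref{assum:pointwise-close}--\ref{assum:crit-points} are correct and in places tidier than the paper's: using $1-\sech^2=\tanh^2$ and $|\tanh y|\le|y|$ for Assumption~\ref{assum:pointwise-close}, and the global bound $|p_j'|\le\tfrac{1}{4s_j}$ for Assumption~\ref{assum:deriv}, are both cleaner than the paper's direct computations (the paper uses the cruder $|p_j'|\le\tfrac{1}{2s_j}$ and an explicit $e^{x/s_j}$ expansion).

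The gap is in your Fourier-concentration step. Your union bound is mathematically sound and gives
\[
\|\hat\varphi\cdot\mathbbm{1}_{\ge r}\|_2^2 \;\le\; 2d\,\exp\!\Bigl(-4\pi^2\,\tfrac{s_{\min}}{\sqrt d}\,r\Bigr),
\]
but the subsequent claim that ``the lower bound on $R$ carries the relevant factors of $d$'' to produce $\exp(-\Omega(rd))$ does not go through. The hypothesis $s_j\ge 4\pi R$ together with $R\ge\tilde R=\tilde\Omega(\max(\tau M_1^2 d^4/R_w^2,\sqrt d))$ only guarantees $s_{\min}/\sqrt d=\Omega(1)$ in general (the $d^4$ term carries a factor of $\tau$, which may vanish, and of $R_w^{-2}$, which may be small). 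So your argument yields $\epsilon(r)=\exp(-\Omega(r))$, not $\exp(-\Omega(rd))$; indeed a union bound over coordinates cannot do better, since the tail mass already exceeds $\tfrac12(1-\tanh(2\pi^2 s_1 r))\asymp e^{-4\pi^2 s_1 r}$ from the single slab $\{\xi_1\ge r\}$.

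The paper takes a different route here: rather than a \emph{sum} over the coordinate it enlarges $\{\|\xi\|_2\ge r\}$ to the complement of the inscribed cube $S=[-r/\sqrt d,r/\sqrt d]^d$ and then writes the tail as the \emph{product} $\prod_{j=1}^d\bigl(1-\tanh(2\pi^2 s_j r/\sqrt d)\bigr)\le 2^d e^{-4\pi^2 r\sum_j s_j/\sqrt d}$; with $s_j\ge\sqrt d$ this produces the factor of $d$ in the exponent. You should be aware, however, that this product identity is itself not valid as written, since $\mathbb{R}^d\setminus S$ is not the product of the one-dimensional complements; so while the paper's mechanism for obtaining $\exp(-\Omega(rd))$ is different from yours, it is not obviously correct either.
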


\begin{proof}
    We consider the functions $p_j(x) = \sech(x/(2s_j))$.
    Assumptions~\ref{assum:bounded-1}, \ref{assum:even}, and \ref{assum:crit-points} are clearly satisfied.

    Consider Assumption~\ref{assum:fourier-conc}.
    Properly normalized, we have $\varphi_j^2(x) = \sech^2(x/(2s_j))/(4s_j)$, where $\varphi^2(x) = \prod_{j=1}^d \varphi_j^2(x_j)$.
    Thus, $\varphi_j(x) = \sech(x/(2s_j))/(2\sqrt{s_j})$.
    Then, the Fourier transform of $\varphi_j$ is
    \begin{equation}
        \hat{\varphi}_j(y) = \frac{1}{2\sqrt{s_j}}\int_{-\infty}^{+\infty} e^{-2\pi i x y} \sech\left(\frac{x}{2s_j}\right)\,dx = \pi \sqrt{s_j}\sech(2s_j \pi^2 y).
    \end{equation}
    We can use this to compute the Fourier transform of $\varphi$:
    \begin{equation}
        \hat{\varphi}(y) = \int_{\mathbb{R}^d} e^{-2\pi i x^\intercal y} \varphi(x)\,dx = \prod_{j=1}^d \int_{-\infty}^{+\infty} e^{-2\pi i x_j y_j} \varphi_j(x_j) = \prod_{j=1}^d \hat{\varphi}_j(y_j) = \pi^d \prod_{j=1}^d \sqrt{s_j} \sech(2s_j \pi^2 y_j)
    \end{equation}
    Because $\norm{\varphi}_2 = \norm{\hat{\varphi}}_2$, then $\norm{\hat{\varphi}}_2 = 1$.
    Then, to show Fourier concentration (\Cref{def:cont_fourier_conc}), we want to show that $\norm{\hat{\varphi} \cdot \mathbbm{1}_{\geq r}}_2 \leq \epsilon(r)$ for some function $\epsilon(r)$ and $\mathbbm{1}_{\geq r}$ is the indicator function for $\{x : \norm{x}_2 \geq r\}$.
    We have
    \begin{equation}
        \norm{\hat{\varphi} \cdot \mathbbm{1}_{\geq r}}_2^2 = \int_{\norm{y}_2 \geq r} \hat{\varphi}^2(y)\,dy = \pi^{2d} \int_{\norm{y}_2 \geq r} \prod_{j=1}^d s_j \sech^2(2s_j \pi^2 y_j)\,dy
    \end{equation}
    Consider the hypercube inscribed in the hypersphere $\norm{y}_2 \leq r$:
    \begin{equation}
        S \triangleq \left\{y \in \mathbb{R}^d : -\frac{r}{\sqrt{d}} \leq y_1,\dots, y_d \leq \frac{r}{\sqrt{d}} \right\} \subseteq \{y \in \mathbb{R}^d : \norm{y}_2 \leq r\}.
    \end{equation}
    Thus, $\{y : \norm{y}_2 \geq r\} \subseteq \mathbb{R}^d \setminus S$ so that we can bound the integral over this domain
    \begin{equation}
        \norm{\hat{\varphi} \cdot \mathbbm{1}_{\geq r}}_2^2 \leq \pi^{2d} \int_{\mathbb{R}^d \setminus S} \prod_{j=1}^d s_j \sech^2(2s_j \pi^2 y_j)\,dy = \prod_{j=1}^d \left(2\pi^2 s_j \int_{r/\sqrt{d}}^{+\infty} \sech^2(2s_j \pi^2 y_j)\,dy_j\right),
    \end{equation}
    where we also used that $\sech^2$ is an even function.
    Evaluating the integral, we have
    \begin{align}
        \norm{\hat{\varphi} \cdot \mathbbm{1}_{\geq r}}_2^2 &\leq \prod_{j=1}^d \left(\int_{2s_j \pi^2 r/\sqrt{d}}^{+\infty} \sech^2(u_j)\,du_j \right)\\
        &= \prod_{j=1}^d \left(1 - \tanh\left(\frac{2s_j \pi^2 r}{\sqrt{d}}\right)\right)\\
        &= \prod_{j=1}^d \left(1 - \frac{e^{4s_j \pi^2 r/\sqrt{d}}-1}{e^{4s_j \pi^2 r/\sqrt{d}} + 1}\right)\\
        &= \prod_{j=1}^d \left(\frac{2}{e^{4s_j \pi^2 r/\sqrt{d}} + 1}\right)\\
        &\leq \prod_{j=1}^d \left(2e^{-4s_j \pi^2 r/\sqrt{d}}\right)\\
        &\leq 2^d e^{-4\pi^2 r d}\\
        &= e^{-\Omega(rd)}.
    \end{align}
    In the first line, we use the change of variables $u_j = 2s_j \pi^2 y_j$.
    In the second to last line, we use that $s_j \geq 4\pi R \geq 4\pi \sqrt{d} \geq \sqrt{d}$.
    Thus, Assumption~\ref{assum:fourier-conc} is satisfied with $\epsilon(r) = e^{-\Omega(rd)}$.

    For Assumption~\ref{assum:pointwise-close}, we have
    \begin{equation}
        |1 - p_j^2(x)| = \left|1 - \sech^2\left(\frac{x}{2s_j}\right)\right| = 1 - \frac{4e^{x/s_j}}{(e^{x/s_j} + 1)^2}.
    \end{equation}
    For $x \in [0, R]$, $e^x \leq e^R$ and $e^x \geq 1$ so that
    \begin{equation}
        |1 - p^2_j(x)| \leq 1 - \frac{4}{(e^{R/s_j} + 1)^2} \leq 1 - \frac{4}{(e^{1/(4\pi)} + 1)^2} \leq \frac{1}{10},
    \end{equation}
    where in the second inequality, we used that $s_j \geq 4\pi R$.
    Since $p_j^2$ is symmetric, the same holds for $x \in [-R, 0]$.

    Finally, we need to check Assumption~\ref{assum:deriv}.
    Taking the derivative, we have
    \begin{equation}
        p_j'(x) = -\frac{1}{2s_j}\tanh\left(\frac{x}{2s_j}\right)\sech\left(\frac{x}{2s_j}\right).
    \end{equation}
    Plugging in $M_1x$ for $x \in [-\tilde{R}, \tilde{R}]$, then
    \begin{equation}
        |p_j'(M_1 x)| = \left|\frac{1}{2s_j}\tanh\left(\frac{M_1x}{2s_j}\right)\sech\left(\frac{M_1x}{2s_j}\right) \right| \leq \frac{1}{2s_j} \leq \frac{\pi DR_w}{2M_1},
    \end{equation}
    where the last inequality comes from $s_j \geq M_1/(\pi D R_w)$.
    A similar calculation holds for $p_j'(M_1 x + T)$ so that Assumption~\ref{assum:deriv} holds.
\end{proof}

As a corollary of \Cref{thm:non-unif-guarantee}, we obtain the same complexity for learning over generalized Gaussians, Gaussians, and logistic distributions with large enough scale parameters.
Note that in the generalized Gaussian case, one may need to take a larger truncation parameter (as specified in Proposition~\ref{prop:gen-gauss-satisfies-assum}), but the sample complexity remains the same.
Thus, using the Fourier-concentration in Propositions~\ref{prop:gen-gauss-satisfies-assum} and \ref{prop:logistic-satisfies-assum} and Corollary~\ref{coro:gauss-satisfies-assum}, we see that Gaussian, generalized Gaussian, and logistic distributions achieve an exponential sample complexity quantum advantage.

The next sections are dedicated to proving \Cref{thm:non-unif-guarantee}.
In Appendix~\ref{sec:linear-non-unif}, we discuss the non-uniform period finding algorithm.
In Appendix~\ref{sec:outer-non-unif}, similarly to Appendix~\ref{sec:outer-uniform}, we show how one can use gradient descent to learn the outer function $\tilde{g}$ given knowledge of $w^\star$.
In Appendix~\ref{sec:int-bounds-non-unif}, we prove some integral bounds which are useful for both Appendices~\ref{sec:linear-non-unif} and \ref{sec:outer-non-unif}.

\subsection{Learning the linear function}
\label{sec:linear-non-unif}

In this section, we discuss how to use period finding to learn the inner linear function, i.e., how to find the vector of coefficients $w^\star \in \mathbb{R}^d$, when given QSQ access to an example state with non-uniform amplitudes, when the non-uniform distribution satisfies Assumptions~\ref{assum:fourier-conc}-\ref{assum:crit-points}.
We also utilize the results regarding pseudoperiodicity from Appendix~\ref{sec:linear-uniform}.

In Appendix~\ref{sec:warmup-non-unif}, we consider a simple special case for pedagogical purposes to demonstrate how our non-uniform period finding algorithm works.
In Appendix~\ref{sec:general-non-unif}, we prove the general case.

\subsubsection{Warmup}
\label{sec:warmup-non-unif}

As a warmup, let us consider the simple case where $1/w^\star_j \in \mathbb{Z}$ for all $j \in [d]$.
In fact, we prove a general guarantee on period finding for states with non-uniform amplitudes.
Then, the result in our setting, i.e., for learning $w^\star$ from access to $g_{w^\star}$, is a special case.
Note that in this simple case, we only need Assumptions~\ref{assum:fourier-conc}-\ref{assum:bounded-1} to hold for our non-uniform distributions.
Moreover, Assumption~\ref{assum:fourier-conc} is only needed for classical hardness and is not required for the correctness/complexity of our quantum algorithm.

\begin{prop}[Non-Uniform Period Finding; Simple Case]
\label{prop:warmup-non-unif}
Let $\varphi^2 \propto \prod_{j=1}^d p_j^2$ be a probability distribution over $[-R,R]^d$ satisfying Assumptions~\ref{assum:fourier-conc}-\ref{assum:bounded-1} for the truncation parameter $R$ specified shortly and discretization parameter $1$.
Let $\tau \geq 0$. Let $f: \mathbb{Z}^d \to \mathbb{Z}$ be a periodic function with period $S_j$ in each component.
Suppose that we know an upper bound $A_j$ on the period $S_j$.
Let $A = \max_j A_j$.
Let $R\geq (1 + 2\tau)A^2$ be the truncation parameter.
Then, there exists an algorithm that can learn each $S_1,\dots, S_d$ exactly with constant probability using
\begin{equation}
  N = d
\end{equation}
quantum statistical queries with tolerance $\tau$ (with respect to the truncated example state).
\end{prop}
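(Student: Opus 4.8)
The plan is to follow the proof of \Cref{prop:warmup-linear-uniform} (the uniform warmup) and of \Cref{thm:linear-uniform}, replacing only the place where uniformity of the amplitudes is invoked. As there, I would first treat $d = 1$ and then reduce general $d$ to $d$ runs of the single-coordinate procedure.

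For $d = 1$: the example state is $\frac{1}{Z}\sum_{x=-R}^{R-1} p(x)\ket{x}\ket{f(x)}$ with $Z$ a normalization and $f$ periodic of integer period $S \le A$. I would encode quantum Fourier sampling into the observable $O$ of \Cref{eq:o}, query the QSQ oracle once, and read off the noisy outcome exactly as in \Cref{prop:warmup-linear-uniform}: with constant probability it is a number $\alpha$ such that $|\alpha/R - k/S| \le (\tau + \tfrac12)/R$ for some nonnegative integer $k$. Because $S \le A$ and $R \ge (1+2\tau)A^2$, the right-hand side is at most $1/(2A^2)$, so $k/S$ is the unique fraction of denominator below $A$ within that distance of $\alpha/R$ and is recovered from the continued-fraction expansion of $\alpha/R$; a verification query (trivial given QSQ access to $f$) then pins down $S$ exactly, just as in the uniform case. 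Everything in this chain is literally the argument of \Cref{prop:warmup-linear-uniform} except the single claim that Fourier sampling of the non-uniform state still produces a ``good'' outcome with constant probability.

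That claim is the new ingredient and the main obstacle. After measuring the output register, the input register holds $\ket{\psi} \propto \sum_{\ell} p(x_0 + \ell S)\ket{x_0 + \ell S}$, supported on an arithmetic progression inside $[-R, R-1]$ (with at least $\Omega(S)$ terms, since $R \ge A^2 \ge S^2$). Using Assumption~\ref{assum:pointwise-close} to write $p^2 = 1 + \eta$ with $|\eta|\le 1/10$, and Assumption~\ref{assum:bounded-1} for positivity, I would bound $\|\ket{\psi} - \ket{\psi_{\mathrm{unif}}}\|$, where $\ket{\psi_{\mathrm{unif}}}$ is the uniform superposition over the same progression: since $|p(x)-1|\le 1/10$ pointwise, the un-normalized difference has norm at most $\tfrac1{10}$ times that of $\ket{\psi_{\mathrm{unif}}}$, and, accounting also for the normalization mismatch, $\|\ket{\psi}-\ket{\psi_{\mathrm{unif}}}\|$ is at most an absolute constant strictly below $1/5$. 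Since for any projector the probability assigned by a state to its range changes by at most $2\|\cdot\|$ under a perturbation of the state, the probability that $\ket{\psi}$ lands in the set of good Fourier outcomes differs from the corresponding uniform-case probability --- a fixed positive constant of order $4/\pi^2$ by the standard period-finding analysis recalled for the uniform case --- by less than that constant, hence stays bounded away from $0$. The delicate point is that this estimate must be uniform in $R$, $S$, and the distribution, so it has to be phrased purely in terms of the pointwise bound $1/10$, with no integrated quantity entering.

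Finally, for general $d$ I would use the observable $O_j$ of \Cref{eq:oj}, which applies the QFT on the register holding coordinate $j$ and the identity elsewhere. Because the amplitudes factor as $p(x) = \prod_i p_i(x_i)$ and $f$ is periodic in coordinate $j$, the expectation of $O_j$ is an average --- over the remaining coordinates $x_{-j}$, weighted by a conditional distribution summing to one --- of single-coordinate quantities of exactly the form analyzed above, now for $x_j \mapsto f(x_j, x_{-j})$ with amplitude $p_j$. Since the bound of Assumption~\ref{assum:pointwise-close} is per-coordinate, every one of these conditional instances keeps a constant good-event probability, so the average does too, and the single-coordinate argument recovers $S_j$. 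Running this for $j = 1,\dots,d$ recovers $S_1,\dots,S_d$ using $N = d$ queries; the overall success probability is handled as in \Cref{prop:warmup-linear-uniform}. Note that Assumption~\ref{assum:fourier-conc} plays no role in this argument: it is needed only for the classical hardness of the task.
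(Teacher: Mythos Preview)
Your approach is correct and reaches the same conclusion, but the key non-uniform step is handled by a genuinely different mechanism than in the paper. You bound the normalized-state distance $\|\ket{\psi}-\ket{\psi_{\mathrm{unif}}}\|$ from the pointwise bound on $p$ and then invoke a generic perturbation inequality for measurement probabilities (which survives the unitary QFT). The paper instead works directly with the Fourier amplitude: it writes
\[
\Bigl|\sum_{k} e^{2\pi i kSy/(2R)}\,p(x+kS)\Bigr| \;\ge\; \Bigl|\sum_{k} e^{2\pi i kSy/(2R)}\Bigr| \;-\; \sum_{k}\bigl|1-p(x+kS)\bigr|,
\]
bounds the first term below by $\tfrac{2}{\sqrt{18}}B$ via the standard comb lemma, and the second above by $B/5$ using Assumptions~\ref{assum:pointwise-close} and~\ref{assum:bounded-1}, yielding an explicit $\Omega(1/S)$ lower bound on each good outcome. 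Your route is more modular and would transfer unchanged to other measurements; the paper's is more explicit and avoids a numerical pitfall in your write-up: from $|1-p^2|\le 1/10$ you only get $|1-p|\le 1/10$, which gives $\|\ket{\psi}-\ket{\psi_{\mathrm{unif}}}\|\le 2/9$, not ``strictly below $1/5$'' --- you need the sharper $|1-p|\le 1-\sqrt{9/10}\approx 0.051$ to make the constants close comfortably. Two minor remarks: the ``verification query'' you invoke is not present in \Cref{prop:warmup-linear-uniform} and is not needed here either (the standard coprimality/continued-fraction argument suffices for integer periods), and your observation that Assumption~\ref{assum:fourier-conc} plays no algorithmic role matches the paper's.
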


In our case, by \Cref{lem:discrete-simple-unif}, our target function can be suitably discretized to be periodic with period $S = M/w^\star$.
Moreover, note that we know an upper bound on the period $A = Md^2/R_w$ due to \Cref{eq:sw}.
Thus, the previous proposition readily applies, giving us the following corollary.

\begin{corollary}[Linear Function Guarantee; Simple Non-Uniform Case]
Let $\varphi^2 \propto \prod_{j=1}^d p_j^2$ be a probability distribution over $[-R,R]^d$ satisfying Assumptions~\ref{assum:fourier-conc}-\ref{assum:bounded-1} for the parameters specified shortly.
Let $\tau \geq 0$. Let $w^\star \in \mathbb{R}^d$ be unknown with $1/w^\star_j \in \mathbb{Z}$ for all $j \in [d]$.
Suppose also that $w_j^\star \geq R_w/d^2$, for all $j \in [d]$.
Let $g_{w^\star}:\mathbb{R}^d \to [-1,1]$ be defined as $g_{w^\star}(x) = \tilde{g}(x^\intercal w^\star)$, where $\tilde{g}: \mathbb{R} \to [-1,1]$ is a function with period $1$ which has bounded variation on every finite interval.
Let $M \geq 1$ be any choice of discretization parameter and let $R \geq (1+2\tau)M^2d^{4}/R_w^2$ be the truncation parameter.
Then, there exists an algorithm that can learn $w^\star$ exactly with constant probability using
\begin{equation}
  N = d
\end{equation}
quantum statistical queries with tolerance $\tau$ (with respect to the discretized and truncated example state).
\end{corollary}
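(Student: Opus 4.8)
The plan is to derive this corollary directly from the general non-uniform period finding result (Proposition ``Non-Uniform Period Finding; Simple Case'') together with the discretization lemma (Lemma ``Discretization; Simple Case''), exactly mirroring how the simple uniform linear-function guarantee was obtained from its uniform counterparts. First I would discretize $g_{w^\star}$ in the same way as the warmup, setting $h_{w^\star, M}(k) = \lfloor g_{w^\star}(k/M)\rfloor_M$ for $k \in \mathbb{Z}^d$. Since $1/w^\star_j \in \mathbb{Z}$ and $M \in \mathbb{Z}$, the argument of the discretization lemma (which, as already noted for the uniform case, extends coordinate-wise to arbitrary $d \geq 1$) shows that $h_{w^\star,M}$ is in fact \emph{periodic} — not merely pseudoperiodic — in each coordinate $j$ with integer period $M e_j/w^\star_j$. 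A harmless rescaling of the output register by $M$ then puts $h_{w^\star,M}$ into the form $\mathbb{Z}^d \to \mathbb{Z}$ required by the Proposition.

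Next I would record the period upper bound: by the definition of $\mathcal{S}_w$ we have $w^\star_j \geq R_w/d^2$, hence the period $M/w^\star_j$ in coordinate $j$ satisfies $M/w^\star_j \leq M d^2/R_w$, so we may take $A_j = M d^2/R_w$ and $A = \max_j A_j = M d^2 / R_w$. The hypothesis $R \geq (1 + 2\tau) M^2 d^4/R_w^2$ is then precisely the condition $R \geq (1+2\tau) A^2$ needed by the Proposition. Applying Proposition ``Non-Uniform Period Finding; Simple Case'' to $f = h_{w^\star,M}$, with the non-uniform amplitudes inherited from $\varphi^2$ and with Assumptions~\ref{assum:fourier-conc}--\ref{assum:bounded-1} supplied by the hypothesis (Assumption~\ref{assum:fourier-conc} being needed only to invoke the classical hardness result, not for correctness of the quantum algorithm), yields the exact values $M/w^\star_j$ for all $j \in [d]$ with constant probability using $N = d$ quantum statistical queries of tolerance $\tau$. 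Since $M$ is known, inverting gives $w^\star_j = M/(M/w^\star_j)$ exactly, which is the claim.

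The only point requiring any care is matching the output space $\tfrac{1}{M}\mathbb{Z}$ of the discretized function to the $\mathbb{Z}$-valued hypothesis of the Proposition; this is resolved by the trivial rescaling above, and it is harmless because the QSQ observable used in the period-finding step acts only on the input registers (a quantum Fourier transform on the coordinate being learned) and is the identity on the output register, so relabeling the output values changes nothing. I do not expect any genuinely new technical obstacle here: all of the substantive work — the pseudoperiodicity/periodicity analysis of the discretization and the noise-robust period recovery for non-uniform amplitudes — is already contained in the two results being invoked.
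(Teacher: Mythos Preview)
Your proposal is correct and mirrors exactly the paper's derivation: the paper states this corollary as an immediate consequence of the non-uniform simple period finding proposition together with the simple discretization lemma, using the period upper bound $A = Md^2/R_w$ from $w_j^\star \geq R_w/d^2$. The only extra remark you make about rescaling the output space from $\tfrac{1}{M}\mathbb{Z}$ to $\mathbb{Z}$ is a harmless bookkeeping point that the paper leaves implicit.
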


Again, because a discrete Gaussian distribution with large enough variance satisfies the assumptions on our non-uniform distribution (\Cref{coro:gauss-satisfies-assum}), then this corollary holds for discrete Gaussian distributions as a special case.
It is instructive to note that one could obtain a similar guarantee for discrete Gaussian distributions by leveraging a discrete Gaussian phase estimation subroutine from~\cite{chen2024quantum} (Theorem 4.2).
However, one drawback of this approach is that it does not generalize as easily to the pseudoperiodic case when the period is irrational.
This seems to stem from the issue that the Fourier transform of a pseudoperiodic function does not have a simple exact form as is the case for periodic functions.
Our result in Proposition~\ref{prop:warmup-non-unif} holds for more general distributions than just discrete Gaussians and generalizes to the pseudoperiodic case, as we will see in the next section.
First, we prove Proposition~\ref{prop:warmup-non-unif}.

\begin{proof}[Proof of Proposition~\ref{prop:warmup-non-unif}]
Because $\varphi^2$ is a product distribution, we can write the quantum example state as
\begin{equation}
  \ket{f} = \bigotimes_{j=1}^d \left(\frac{1}{\tilde{G}_j} \sum_{x_j=-R}^{R-1} p_j(x_j)\ket{x_j}\right)\ket{f(x)},\quad \tilde{G}_j \triangleq \sum_{x_j=-R}^{R-1} p_j^2(x_j),
\end{equation}
where $\tilde{G}_j$ are normalization constants, and we define $\tilde{G} \triangleq \prod_{j=1}^d \tilde{G}_j$.
Because of this factorization, by the same argument as in Appendix~\ref{sec:warmup-uniform}, it suffices to consider $d = 1$, as we can perform period finding one component at a time to find each $S_j$.
Thus, from now on, we consider the case of $d = 1$, where we are given QSQ access to the example state
\begin{equation}
  \ket{f} = \frac{1}{\sqrt{\tilde{G}}}\sum_{x=-R}^{R-1}p(x) \ket{x}\ket{f(x)},
\end{equation}
where $R$ is chosen such that $R \geq (1+2\tau)A^2$ and $p$ satisfies Assumptions~\ref{assum:fourier-conc}-\ref{assum:bounded-1}.

Our algorithm is as before: apply the QFT over $q = 2R$ and measure.
This can be encoded into a QSQ by querying the observable
\begin{equation}
  O = \left(\mathsf{QFT}_q \sum_{\ell \in [M]}\frac{\ell}{M}\ketbra{\ell} \mathsf{QFT}_q^{-1}\right) \otimes I,
\end{equation}
where $\mathsf{QFT}_q$ denotes the QFT in dimension $q = 2R$, and $I$ is the identity operator acting on the qubits encoding the output $f(x)$.
This is exactly the same observable as in Appendix~\ref{sec:warmup-uniform}.
However, because of the non-uniform amplitudes, a standard analysis of this algorithm does not apply, so we analyze it in the following.

First, notice that by periodicity we can rewrite our example state as
\begin{equation}
  \ket{f} = \frac{1}{\sqrt{\tilde{G}}}\sum_{x=-R}^{R-1} p(x)\ket{x}\ket{f(x)} = \frac{1}{\sqrt{\tilde{G}}} \sum_{x=0}^{S-1} \sum_{k=-B}^{B-1} p(x + kS) \ket{x + kS}\ket{f(x)},
\end{equation}
where we denote the period of $f$ as $S$ and we write $B \triangleq \lfloor R/ S \rfloor$.
Applying the QFT over $q = 2R$ to this state, we have
\begin{equation}
  \mathsf{QFT}_q \ket{f} = \frac{1}{\sqrt{2R\tilde{G}}} \sum_{x=0}^{S-1}\sum_{y=0}^{2R-1}\sum_{k=-B}^{B-1} e^{2\pi i(x + kS)y/(2R)} p(x + kS) \ket{y}\ket{f(x)}.
\end{equation}
The probability of measuring some outcome $y$ is then
\begin{align}
  \Pr(\text{measure } y) &= \frac{1}{2R\tilde{G}}\norm{\sum_{x=0}^{S-1}\sum_{k=-B}^{B-1} e^{2\pi i(x + kS)y/(2R)} p(x + kS) \ket{f(x)}}^2\\
  &= \frac{1}{2R\tilde{G}}\sum_{x,z=0}^{S-1}\sum_{k,\ell=-B}^{B-1} e^{2\pi i(x + kS)y/(2R)} e^{-2\pi i(z + \ell S)y/(2R)} p(x + k S) p(z + \ell S) \braket{f(z)}{f(x)}\\
  &= \frac{1}{2R\tilde{G}}\sum_{x =0}^{S-1}\sum_{k,\ell=-B}^{B-1} e^{2\pi i(x + kS)y/(2R)} e^{-2\pi i(x + \ell S)y/(2R)} p(x + kS) p(x + \ell S)\\
  &= \frac{1}{2R\tilde{G}}\sum_{x=0}^{S-1}\left|\sum_{k=-B}^{B-1} e^{2\pi i(x + kS)y/(2R)} p(x + kS)\right|^2\\
  &= \frac{1}{2R\tilde{G}}\sum_{x=0}^{S-1}\left|\sum_{k=-B}^{B-1} e^{2\pi i kS y/(2R)} p(x + kS)\right|^2.
\end{align}
We want to lower bound this probability for $y$ satisfying
\begin{equation}
  \left|y - \frac{a R}{S}\right| \leq \frac{1}{2}
\end{equation}
for $a \in \mathbb{Z}$, i.e., $y = \lfloor a R/S \rceil$, where $\lfloor x \rceil$ denotes the nearest integer above or below $x$.
We start by lower bounding the term in absolute value via the reverse triangle inequality:
\begin{align}
  &\left|\sum_{k=-B}^{B-1} e^{2\pi i k S y/(2R)} p(x+kS)\right|\\
  &= \left|\sum_{k=-B}^{B-1} e^{2\pi i k S y/(2R)} + \sum_{k=-B}^{B-1} e^{2\pi i k S y/(2R)} p(x+kS) - \sum_{k=-B}^{B-1} e^{2\pi i k S y/(2R)}\right|\\
  &\geq \left|\left|\sum_{k=-B}^{B-1} e^{2\pi i k S y/(2R)} \right| - \left|\sum_{k=-B}^{B-1} e^{2\pi i k S y/(2R)} p(x+kS) - \sum_{k=-B}^{B-1} e^{2\pi i k S y/(2R)}\right|\right|.
\end{align}
To lower bound this further, we lower bound the first term and upper bound the second.
First, to lower bound the first term, we can change the index of summation to see that
\begin{equation}
  \left|\sum_{k=-B}^{B-1} e^{2\pi i k S y/(2R)} \right| = \left|\sum_{\ell=0}^{2B-1} e^{2\pi i (\ell-B) S y/(2R)} \right| = \left|\sum_{\ell=0}^{2B-1} e^{2\pi i \ell S y/(2R)} \right|,
\end{equation}
where we set $\ell = k + B$. Then, for $y = aR/S + \epsilon$, where $|\epsilon| \leq 1/2$, then this is equal to
\begin{equation}
  \left|\sum_{\ell=0}^{2B-1} e^{2\pi i \ell S \epsilon/(2R)} \right| = \left|\sum_{\ell=0}^{2B-1} e^{2\pi i C \ell /(2B)} \right|.
\end{equation}
Here, we define $C \triangleq B S \epsilon/R$.
Because $B = \lfloor R/S\rfloor$, then $S B \leq R$ and hence $|C| \leq |\epsilon| \leq 1/2$.
Then, by Lemma 3 in~\cite{jozsa2003notes} (or Claim 3.1 in~\cite{hallgren2007polynomial}), we obtain the desired lower bound:
\begin{equation}
  \left|\sum_{k=-B}^{B-1} e^{2\pi i k S y/(2R)} \right| \geq \frac{2}{\sqrt{18}}B.
\end{equation}
Now, we consider the other term.
\begin{align}
  \left|\sum_{k=-B}^{B-1} e^{2\pi i k S y/2R} p(x + kS) - \sum_{k=-B}^{B-1} e^{2\pi i k S y/2R}\right| &\leq \sum_{k=-B}^{B-1}\left|p(x+kS)-1\right|\\
  &= \sum_{k=-B}^{B-1} \left(1-p(x+kS)\right)\\
  &\leq \frac{B}{5}\\
  &\leq \frac{B}{\sqrt{18}}.
\end{align}
Here, in the second line, we use Assumption~\ref{assum:bounded-1}.
In the third line, we use Assumption~\ref{assum:pointwise-close}.
Namely, because $p(x) \leq 1$ by Assumption~\ref{assum:bounded-1}, then $1-p(x+kS) \leq 1 - p^2(x+kS)$, which is in turn less than $1/10$ by Assumption~\ref{assum:pointwise-close}.
Note that Assumption~\ref{assum:pointwise-close} applies because for the range of $x,k$ considered, then $x+kS \in [-R, R]$.
Putting everything together, we thus see that
\begin{equation}
  \left|\sum_{k=-B}^{B-1} e^{2\pi ik S y/2R} p(x + kS)\right| \geq \frac{2}{\sqrt{18}}B - \frac{1}{\sqrt{18}}B = \frac{1}{\sqrt{18}}B.
\end{equation}
Then, plugging this back into our original expression, the probability that we obtain some output $y = \lfloor aR/S\rceil$ is
\begin{align}
  \Pr\left(y = \left\lfloor \frac{a R}{S} \right\rceil\right) &= \frac{1}{2R\tilde{G}}\sum_{x=0}^{S-1}\left|\sum_{k=-B}^{B-1} e^{2\pi ik S y/2R}p(x +kS)\right|\\
  &\geq \frac{1}{2R\tilde{G}}\sum_{x=0}^{S-1}\frac{1}{18}B^2\\
  &\geq \frac{1}{72}\frac{1}{R^2}SB^2\\
  &= \Omega\left(\frac{SB^2}{R^2}\right)\\
  &= \Omega\left(\frac{1}{S}\right).
\end{align}
The third line follows because
\begin{equation}
  \tilde{G} = \sum_{x=-R}^{R-1} p(x)^2 \leq 2R,
\end{equation}
using Assumption~\ref{assum:bounded-1}.
The last line follows because $B = \lfloor R/S\rfloor = \Theta(R/S)$.
Instead of measuring $y = \lfloor a R/S\rfloor$ exactly, we obtain some estimate due to the noisy QSQs.
From here, the analysis is the same as that of Proposition~\ref{prop:warmup-linear-uniform}.
In particular, by the choice of $R \geq (1 + 2\tau)A^2$, so by the same analysis as Proposition~\ref{prop:warmup-linear-uniform}, we can recover $S$ with constant probability.
This only required one QSQ.
To learn each period $S_1,\dots, S_d$, it thus requires $d$ QSQs.
\end{proof}

\subsubsection{General case}
\label{sec:general-non-unif}

In the previous section, we showed that $w^\star$ can be recovered exactly in a simple case when $1/w^\star \in \mathbb{Z}$ and our example state has amplitudes distributed according to a non-uniform distribution satisfying Assumptions~\ref{assum:fourier-conc}-\ref{assum:bounded-1}.
In general, $1/w^\star$ may not be an integer, but nevertheless we can again prove a general guarantee on period finding given an example state with non-uniform amplitudes.
This is in contrast to standard period finding guarantees which only hold for uniform amplitudes.
This then implies that we can find the period of $g_{w^\star}$ as a simple corollary.

We require that our non-uniform distributions satisfy Assumptions~\ref{assum:fourier-conc}-\ref{assum:crit-points}.
As in Appendix~\ref{sec:warmup-non-unif}, Assumption~\ref{assum:fourier-conc} is only needed to ensure classical hardness and is not required for the correctness/complexity of our quantum algorithm.
Our algorithm is the same as Hallgren's algorithm~\cite{hallgren2007polynomial} (see Appendix~\ref{sec:hallgren}) but requires a new analysis due to the non-uniform amplitudes.
This analysis is similar to that of Appendix~\ref{sec:warmup-non-unif}.

\begin{theorem}[Non-Uniform Period Finding]
\label{thm:non-unif-period}
Let $\varphi^2 \propto \prod_{k=1}^d p_k^2$ be a probability distribution over $[-R, R]^d$ satisfying Assumptions~\ref{assum:fourier-conc}-\ref{assum:bounded-1} for a truncation parameter $R$ specified shortly and discretization parameter $1$.
Let $\tau, \eta \geq 0$. Let $f : \mathbb{Z}^d \to \mathbb{Z}$ be an $\eta$-pseudoperiodic function with period $S_j \geq 1$ in each component.
Suppose that, given an integer $T$, we can efficiently check (in time $\mathrm{polylog}(S)$) whether or not $|kS - T| < 1$ for some $k \in \mathbb{Z}$.
Suppose that we know an upper bound $A_j$ on the period $S_j$.
Let $A = \max_j A_j$.
Let $R \geq 6(1/2 + \tau)A^2$ be the truncation parameter.
Then, there exists an algorithm that outputs integers $a_j$ such that $|S_j - a_j| \leq 1$ with probability $\Omega(\eta^2/\log^4 A)$ for all $j \in [d]$ using
\begin{equation}
  N = 2d
\end{equation}
quantum statistical queries with tolerance $\tau$ (with respect to the truncated example state).
\end{theorem}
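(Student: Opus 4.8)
The plan is to follow Hallgren's algorithm (\Cref{alg:hallgren}) and the proof sketch of \Cref{thm:hallgren}, but redo the quantum Fourier sampling analysis to accommodate the non-uniform amplitudes, generalizing the warmup Proposition~\ref{prop:warmup-non-unif} from periodic to $\eta$-pseudoperiodic functions. Only Assumptions~\ref{assum:pointwise-close} and~\ref{assum:bounded-1} will be used by the quantum part; Assumption~\ref{assum:fourier-conc} is needed only for the separate classical hardness statement.

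First I would use the product structure: since $\varphi^2 \propto \prod_k p_k^2$, the example state factorizes as $\ket{f} = \bigotimes_{k=1}^d\big(\tilde G_k^{-1/2}\sum_{x_k=-R}^{R-1} p_k(x_k)\ket{x_k}\big)\ket{f(x)}$, so it suffices to recover each $S_j$ separately by applying $\mathsf{QFT}_q$ (with $q = 2R$) to the $j$-th input register and measuring, encoded into a single observable as in \Cref{eq:oj}. Two such queries per coordinate (Step~2 of \Cref{alg:hallgren}) give $N = 2d$; from here on fix $d = 1$.

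The core step is the Fourier-sampling probability bound. Measuring the output register of $\ket{f}$ collapses the input register onto a superposition whose amplitude on $x_0 + [\ell S]$ is $\propto p(x_0 + [\ell S])$, where $[\ell S]$ denotes $\lfloor \ell S\rfloor$ or $\lceil \ell S\rceil$ and $x_0$ ranges over the $\eta$-fraction of residues $0 \le x_0 \le \lfloor S\rfloor$ satisfying the pseudoperiodicity condition; applying $\mathsf{QFT}_q$, the joint probability of outcome $y$ is $\frac{1}{q\tilde G}\sum_{x_0}\big|\sum_\ell e^{2\pi i[\ell S]y/q}\,p(x_0+[\ell S])\big|^2$. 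For $y = \lfloor aq/S\rceil$ with $y < q/\log A$, I would lower-bound the inner sum by the reverse triangle inequality as $\big|\sum_\ell e^{2\pi i[\ell S]y/q}\big| - \sum_\ell\big|1 - p(x_0+[\ell S])\big|$: the first term is Hallgren's pseudoperiodic phase sum, $\Omega(B)$ with $B = \lfloor R/S\rfloor$ (by Claim~3.1 of~\cite{hallgren2007polynomial} / Lemma~3 of~\cite{jozsa2003notes}, where $y < q/\log A$ keeps the rounding phases $e^{2\pi i\delta_\ell y/q}$ controlled), while the second is at most $\sum_\ell\big(1 - p^2(x_0+[\ell S])\big) \le B/5$ by Assumptions~\ref{assum:bounded-1} and~\ref{assum:pointwise-close} (using $1 - p \le 1 - p^2$ and $x_0 + [\ell S] \in [-R,R]$). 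With the constants chosen so the positive term dominates, this is $\Omega(B)$, hence $\Pr\big(y = \lfloor aq/S\rceil,\ y < q/\log A\big) = \Omega(SB^2/q^2) = \Omega(1/S)$ using $\tilde G \le 2R = q$ and $B = \Theta(R/S)$ (treating $f$ as pseudoperiodic on the whole domain). Incorporating the $\eta$-fraction of valid residues and the coprimality/small-convergent requirement across the two independent samples then gives success probability $\Omega(\eta^2/\log^4 A)$, exactly as in the sketch of \Cref{thm:hallgren}.

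Finally, the classical postprocessing is identical to that in the proof of \Cref{thm:linear-uniform}: writing the two noisy QSQ outputs as $\alpha = kR/S + \epsilon_k + \tau_k$ and $\beta = \ell R/S + \epsilon_\ell + \tau_\ell$ with $|\epsilon_\bullet| \le 1/2$, $|\tau_\bullet| \le \tau$, the choice $R \ge 6(1/2+\tau)A^2$ forces $|\alpha/\beta - k/\ell| \le 1/(2\ell^2)$, so $k/\ell$ is a convergent of the continued fraction expansion of $\alpha/\beta$, and the same estimate shows $\lfloor kR/\alpha\rceil$ lies within $1$ of $S$. The assumed efficient test for $|kS - T| < 1$ is precisely the verification used in Step~4 of \Cref{alg:hallgren} to pick out the correct convergent, and together with the upper bound $A$ this recovers $a_j$ with $|S_j - a_j| \le 1$; running this for each coordinate costs $2d$ QSQs in total. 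I expect the Fourier-sampling bound to be the main obstacle — specifically, bookkeeping the two error sources (the pseudoperiodic rounding, from Hallgren, and the amplitude deviation $1 - p$, which is new) so that the constants in the reverse-triangle-inequality step and in $R \ge 6(1/2+\tau)A^2$ are mutually consistent; the rest is a transcription of \Cref{sec:warmup-non-unif} and the uniform analysis.
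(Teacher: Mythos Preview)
Your proposal is correct and follows essentially the same approach as the paper: reduce to $d=1$ via the product structure, lower-bound the Fourier-sampling probability by splitting (via reverse triangle inequality) into Hallgren's pseudoperiodic phase sum $\ge 2B/\sqrt{18}$ and the amplitude-deviation term $\sum_\ell(1-p)\le B/5\le B/\sqrt{18}$ controlled by Assumptions~\ref{assum:pointwise-close}--\ref{assum:bounded-1}, then invoke the identical continued-fraction postprocessing from the proof of \Cref{thm:linear-uniform} under $R\ge 6(1/2+\tau)A^2$. The paper's proof matches your outline step-for-step, including the specific constants and the handling of the $\eta$-fraction and coprimality for the $\Omega(\eta^2/\log^4 A)$ success probability.
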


In the case of $d = 1$, our algorithm is the same as Hallgren's algorithm, which we present in \Cref{alg:hallgren-non-unif}.

\begin{algorithm}
   \caption{Non-Uniform Period Finding} 
   \label{alg:hallgren-non-unif}
   \begin{algorithmic}[1]
   \State Choose a truncation parameter $R \geq 6(1/2 + \tau)A^2$.
   \State Apply quantum Fourier sampling to the function $f$ over $\mathbb{Z}_{q}$, $q = 2R$ twice. Let $\alpha,\beta$ be the outputs.
   \State Compute the continued fraction expansion of $\alpha/\beta$.
   \State For each convergent $\alpha_i/\beta_i$ in the continued fractions expansion, use the verification procedure to check whether $\lfloor \alpha_i R/\alpha \rfloor$ or $\lceil \alpha_i R / \alpha \rceil$ is an integer multiple of the period $S$.
   \State \Return the smallest value that passed the test from the previous step.
   \end{algorithmic}
\end{algorithm}

As a corollary, we can apply \Cref{thm:non-unif-period} to our particular setting to obtain guarantees.

\begin{corollary}[Linear Function Guarantee; Non-Uniform Case]
\label{coro:linear-non-unif}
Let $1 > \epsilon_1 > 0, \delta > 0, \tau \geq 0$.
Let $\varphi^2 \propto \prod_{k=1}^d p_k^2$ be a probability distribution over $[-R, R]^d$ satisfying Assumptions~\ref{assum:fourier-conc}-\ref{assum:crit-points} for the parameters specified shortly.
Let $w^\star \in \mathbb{R}^d$ be unknown with norm $R_w > 0$ and $w_j^\star \geq R_w/d^2$, for all $j \in [d]$.
Let $g_{w^\star}: \mathbb{R}^d \to [-1,1]$ be defined as $g_{w^\star}(x) = \tilde{g}(x^\intercal w^\star)$, where $\tilde{g}: \mathbb{R} \to [-1,1]$ is given in \Cref{eq:g-tilde}.
Consider parameters $M_1 = \max(70\pi d^2D^3 R_w, R_w^2/\epsilon_1), M_2 = c M_1$, where $c$ is any constant such that $M_2$ is an integer and $c < 1/(8\pi D R_w)$, and
\begin{equation}
    \tilde{R} = \tilde{\Omega}\left(\max\left(\frac{\tau M_1^2d^4}{R_w^2}, \frac{D^2}{\epsilon}, \frac{D^2\sqrt{d}}{R_w \epsilon}, \frac{D^{5/2}}{\sqrt{\epsilon}}, \frac{D^{3/2}\sqrt{d}}{R_w \sqrt{\epsilon}}, \frac{d^2D}{R_w^2}\right)\right).
\end{equation}
Suppose we have QSQ access (see \Cref{def:qsq}) with respect to discretization parameters $M_{1,m} \triangleq m M_1$, $M_{2,m} \triangleq m M_2$ and a truncation parameter $R \geq \tilde{R}$, for $m \in \{1,\dots, D\}$.
Then, there exists a quantum algorithm with this QSQ access that can learn an approximation $\hat{w}$ of $w^\star$ such that $\norm{\hat{w} - w^\star}_\infty \leq \epsilon_1$ with probability at least $1-\delta$ using 
\begin{equation}
   N = \mathcal{O}\left(d D \log\left(\frac{1}{\delta}\right) \log^5\left(\frac{M_1 d^2}{R_w}\right)\right)
\end{equation}
quantum statistical queries with tolerance $\tau \leq \min\left(\frac{1}{M_2^2}\left(\frac{5}{42} - \frac{3}{2M_2}\right), \frac{1}{2D^2M_2^2}\left(\frac{2}{9} - \frac{1}{8}\left(\frac{2\pi R_w}{M_1}\right)^2 + \frac{3D^2}{M_2}\right)\right)$ (with respect to the discretized and truncated state).
\end{corollary}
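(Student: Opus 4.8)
The plan is to follow the proof of \Cref{thm:linear-uniform} from the uniform case essentially line by line, substituting the non-uniform versions of the two quantum subroutines of Hallgren's algorithm. First I would reduce to $d=1$: since $\varphi^2 \propto \prod_k p_k^2$ is a product distribution and the coordinate-$j$ observable (the analogue of $O_j$ from \Cref{eq:oj}, which collapses all input registers except the $j$-th) lets us perform period finding one coordinate at a time, the effective one-dimensional function in coordinate $j$ satisfies the hypotheses of \Cref{lem:discrete-general-unif}, and the upper bound $A \triangleq M_1 d^2/R_w$ on each period $M_1/w_j^\star$ comes from $w_j^\star \geq R_w/d^2$. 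The extra factors of $d$ propagate through and are tracked explicitly.

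For a single coordinate, I would discretize $g_{w^\star}$ via \Cref{lem:discrete-general-unif} (which is distribution-independent) to obtain $h_{w^\star, M_1, M_2}$ that is $(33/35)$-pseudoperiodic with period $M_1/w^\star$, for $M_1, M_2$ chosen as in the statement. With truncation parameter $\tilde{R} \geq 6(1/2+\tau)A^2$ — this is exactly what the $\tau M_1^2 d^4/R_w^2$ term in the stated bound on $\tilde{R}$ ensures, since $A^2 = M_1^2 d^4/R_w^2$ — the hypotheses of \Cref{thm:non-unif-period} are met once we supply its verification subroutine. That subroutine is the non-uniform verification procedure, whose correctness (\Cref{thm:verification-non-unif}) rests on the flatness Assumptions~\ref{assum:pointwise-close}--\ref{assum:crit-points} and the weighted integral estimates of \Cref{sec:int-bounds-non-unif}. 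The noise analysis then transfers essentially verbatim from the proof of \Cref{thm:linear-uniform}: the outputs $\alpha, \beta$ of the two Fourier-sampling QSQs satisfy $|\alpha - b| \leq \tau$ with $|b - kRw^\star/M_1| \leq 1/2$ (and similarly for $\beta$), and the choice $\tilde{R} \geq 6(1/2+\tau)A^2$ forces both that $k/\ell$ is a convergent of the continued fraction expansion of $\alpha/\beta$ and that $\lfloor kR/\alpha \rceil$ lies within $1$ of the period $M_1/w^\star$.

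Recovering an integer $a$ with $|a - M_1/w^\star| \leq 1$, the relative-error identity for reciprocals gives $\hat{w} \triangleq M_1/a$ with $|\hat{w} - w^\star| \leq (w^\star)^2/M_1 \leq R_w^2/M_1 \leq \epsilon_1$ by $M_1 \geq R_w^2/\epsilon_1$. For the query count: quantum Fourier sampling costs $2$ QSQs, each call to verification costs $D$ QSQs by \Cref{thm:verification-non-unif}, and since the QSQ outputs are rational, the continued fraction expansion of $\alpha/\beta$ has $\mathcal{O}(\log A) = \mathcal{O}(\log(M_1 d^2/R_w))$ convergents by Euclid's algorithm, so one run uses $\mathcal{O}(D\log(M_1 d^2/R_w))$ QSQs and succeeds with probability $\Omega(\eta^2/\log^4 A)$ for $\eta = 33/35$. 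Amplifying the success probability to $1-\delta$ by verifying each candidate period and repeating $\mathcal{O}(\log(1/\delta)\log^4(M_1 d^2/R_w))$ times, then running over all $d$ coordinates, yields $N = \mathcal{O}(dD\log(1/\delta)\log^5(M_1 d^2/R_w))$, as claimed.

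The main obstacle is the non-uniform verification procedure \Cref{thm:verification-non-unif}: the observable $O_{k,m}$ still approximates an inner product between $h_{w^\star, M_1, M_2}$ and its shift by the guess $T$, but the expectation is now weighted by $\prod_k p_k^2$, so the uniform integral bounds of \Cref{sec:int-bounds} must be replaced by the weighted estimates of \Cref{sec:int-bounds-non-unif}; the additional error from the non-flat amplitudes — controlled through the pointwise-closeness Assumption~\ref{assum:pointwise-close} and the derivative bound Assumption~\ref{assum:deriv} — is precisely what shifts the admissible thresholds and tolerances (e.g.\ $\tfrac{5}{42}$ and $\tfrac{2}{9}$ in place of the uniform-case constants) and forces the slightly enlarged requirements on $M_1$ and $\tilde{R}$. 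A secondary check is that the Fourier-sampling step of \Cref{thm:non-unif-period} tolerates the non-uniform amplitudes, which reduces to Assumption~\ref{assum:pointwise-close} exactly as in the warmup Proposition~\ref{prop:warmup-non-unif}.
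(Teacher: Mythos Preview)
Your proposal is correct and follows essentially the same route as the paper's proof: invoke \Cref{lem:discrete-general-unif} for pseudoperiodicity, use the upper bound $A = M_1 d^2/R_w$, supply the verification subroutine via \Cref{thm:verification-non-unif}, apply \Cref{thm:non-unif-period}, and then repeat the recovery and query-counting argument from the proof of \Cref{thm:linear-uniform}. The paper's own proof is in fact terser than yours---it simply checks the hypotheses of \Cref{thm:non-unif-period} and defers the remainder to \Cref{thm:linear-uniform}---so your additional detail on the continued-fraction noise analysis and the query accounting is consistent but not strictly needed here.
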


As stated before, our algorithm has two subroutines.
In one, we use QSQs with respect to discretization parameters $M_1, M_2$ and truncation parameter $R = \tilde{R}$.
In the other, we use discretization parameters $M_{1,m} \triangleq m M_1, M_{2,m} \triangleq mM_2$ and truncation parameter $R = \tilde{R}M_{1,m}$ for $m \in \{1,\dots, D\}$.

Also, notice that \Cref{thm:non-unif-period} requires the distribution to satisfy only Assumptions~\ref{assum:fourier-conc}-\ref{assum:bounded-1} while \Cref{coro:linear-non-unif} requires Assumptions~\ref{assum:fourier-conc}-\ref{assum:crit-points}.
This is because Assumptions~\ref{assum:even}-\ref{assum:crit-points} are only needed to instantiate the verification procedure for checking if a given integer is close to an integer multiple of the true period.
This is assumed in \Cref{thm:non-unif-period} whereas we need to construct this algorithm in order to obtain \Cref{coro:linear-non-unif}.
Again, we remark that Assumption~\ref{assum:fourier-conc} is only needed for classical hardness and does not affect the correctness/complexity of our quantum algorithm.

We first prove the correctness of \Cref{alg:hallgren-non-unif} as stated in \Cref{thm:non-unif-period}.
Then, we will prove \Cref{coro:linear-non-unif} using this.

\begin{proof}[Proof of~\Cref{thm:non-unif-period}]
Because $\varphi^2$ is a product distribution, we can write the quantum example state as
\begin{equation}
  \ket{f} = \bigotimes_{j=1}^d \left(\frac{1}{\sqrt{\tilde{G}_j}} \sum_{x_j=-R}^{R-1} p_j(x_j) \ket{x_j}\right) \ket{f(x)},\quad \tilde{G}_j \triangleq \sum_{x_j=-R}^{R - 1}p_j^2(x_j),
\end{equation}
where $\tilde{G}_j$ are normalization constants with $\tilde{G} = \prod_j \tilde{G}_j$.
Because of this factorization, by the same argument as in Appendix~\ref{sec:warmup-uniform}, it suffices to consider $d = 1$, as we can perform period finding one component at a time to find each $S_j$.
Thus, from now on, we consider the case of $d = 1$, where we are given QSQ access to the example state
\begin{equation}
  \ket{f} = \frac{1}{\sqrt{\tilde{G}}}\sum_{x=-R}^{R-1} p(x)\ket{x}\ket{f(x)},
\end{equation}
where $R$ is chosen such that $R \geq 6(1/2 + \tau)A^2$ and $p$ satisfies Assumptions~\ref{assum:fourier-conc}-\ref{assum:bounded-1}.
To apply quantum Fourier sampling in Step 2 of \Cref{alg:hallgren-non-unif}, we can encode this into a QSQ by querying the observable 
\begin{equation}
  O = \left(\mathsf{QFT}_q \sum_{\ell \in [M]}\frac{\ell}{M}\ketbra{\ell} \mathsf{QFT}_q^{-1}\right) \otimes I,
\end{equation}
where $\mathsf{QFT}_q$ denotes the QFT in dimension $q = 2R$, and $I$ is the identity operator acting on the qubits encoding the output $f(x)$.
This is exactly the same observable as in Appendix~\ref{sec:warmup-uniform}.
However, because of the non-uniform amplitudes, a standard analysis of this algorithm does not apply, so we analyze it in the following.
The analysis is similar to Proposition~\ref{prop:warmup-non-unif} but is a bit more complicated due to pseudoperiodicity.

For simplicity, suppose that $f$ is pseudoperiodic on the whole domain rather than an $\eta$-fraction.
This only affects the probability of success of the algorithm, which we will reintroduce in at the end.
Our argument is still valid for only an $\eta$-fraction but would lead to unnecessary complications.
This is also how the proofs of~\cite{jozsa2003notes,hallgren2007polynomial} proceed.

First, notice that by pseudoperiodicity, we can rewrite our example state as
\begin{equation}
  \ket{f} = \frac{1}{\sqrt{\tilde{G}}}\sum_{x=-R}^{R-1} p(x)\ket{x}\ket{f(x)} = \frac{1}{\sqrt{\tilde{G}}}\sum_{x=0}^{S-1}\sum_{k=-B}^{B-1} p(x + [kS])\ket{x + [kS]}\ket{f(x)},
\end{equation}
where we denote the period of $f$ as $S$ and we write $B \triangleq \lfloor R/ S \rfloor$.
We also use $[x]$ to denote a chosen one of the two values $\lfloor x \rfloor$ or $\lceil x \rceil$.
Note that this is different from $\lfloor x \rceil$, which denotes rounding to the nearest integer above or below $x$.
Applying the QFT over $q = 2R$ to this state as in Step 2 of the algorithm, we have
\begin{equation}
  \mathsf{QFT}_q \ket{f} = \frac{1}{\sqrt{2R\tilde{G}}} \sum_{x=0}^{S-1}\sum_{y=0}^{2R-1}\sum_{k=-B}^{B-1} e^{2\pi i(x + [kS])y/(2R)} p(x + [kS]) \ket{y}\ket{f(x)}.
\end{equation}
The probability of measuring some outcome $y$ is then
\begin{align}
  \Pr(\text{measure } y) &= \frac{1}{2R\tilde{G}}\norm{\sum_{x=0}^{S-1}\sum_{k=-B}^{B-1} e^{2\pi i(x + [kS])y/(2R)} p(x + [kS]) \ket{f(x)}}^2\\
  &= \frac{1}{2R\tilde{G}}\sum_{x,z=0}^{S-1}\sum_{k,\ell=-B}^{B-1} e^{2\pi i(x + [kS])y/(2R)} e^{-2\pi i(z + [\ell S])y/(2R)} p(x + [k S]) p(z + [\ell S]) \braket{f(z)}{f(x)}\\
  &= \frac{1}{2R\tilde{G}}\sum_{x =0}^{S-1}\sum_{k,\ell=-B}^{B-1} e^{2\pi i(x +[ kS])y/(2R)} e^{-2\pi i(x + [\ell S])y/(2R)} p(x + [kS]) p(x + [\ell S])\\
  &= \frac{1}{2R\tilde{G}}\sum_{x=0}^{S-1}\left|\sum_{k=-B}^{B-1} e^{2\pi i(x + [kS])y/(2R)} p(x + [kS])\right|^2\\
  &= \frac{1}{2R\tilde{G}}\sum_{x=0}^{S-1}\left|\sum_{k=-B}^{B-1} e^{2\pi i [kS] y/(2R)} p(x + [kS])\right|^2.
\end{align}
We want to lower bound this probability for $y = \lfloor a R/S \rceil$ for $a \in \mathbb{Z}$ and $y < R / \log A$.
We start by lower bounding the term in absolute value via the reverse triangle inequality:
\begin{align}
  &\left|\sum_{k=-B}^{B-1} e^{2\pi i [k S] y/(2R)} p(x + [kS])\right|\\
  &= \left|\sum_{k=-B}^{B-1} e^{2\pi i [k S] y/(2R)} + \sum_{k=-B}^{B-1} e^{2\pi i [k S] y/(2R)} p(x+[kS]) - \sum_{k=-B}^{B-1} e^{2\pi i [k S] y/(2R)}\right|\\
  &\geq \left|\left|\sum_{k=-B}^{B-1} e^{2\pi i [k S] y/(2R)} \right| - \left|\sum_{k=-B}^{B-1} e^{2\pi i [k S] y/(2R)} p(x+[kS]) - \sum_{k=-B}^{B-1} e^{2\pi i [k S] y/(2R)}\right|\right|.
\end{align}
To lower bound this further, we lower bound the first term and upper bound the second.
First, to lower bound the first term, we can change the index of summation to see that
\begin{equation}
  \left|\sum_{k=-B}^{B-1} e^{2\pi i [k S] y/(2R)} \right| = \left|\sum_{k=-B}^{B-1} e^{2\pi i (kS + \delta_k) y/(2R)} \right| = \left|\sum_{\ell=0}^{2B-1} e^{\frac{2\pi i(\ell - B) Sy + 2\pi i\delta_{\ell - B}y}{2R}} \right| = \left|\sum_{\ell=0}^{2B-1} e^{2\pi i(\ell S + \delta_{\ell - B})y/(2R)} \right|.
\end{equation}
Here, we wrote $[kS] = kS = \delta_k$, where $|\delta_k| < 1$.
Then, for $y = aR/S + \epsilon$, where $|\epsilon| \leq 1/2$, then this is equal to
\begin{equation}
  \left|\sum_{\ell=0}^{2B-1} e^{2\pi i(\ell S + \delta_{\ell - B})(aR/S + \epsilon)/2R} \right| = \left|\sum_{\ell=0}^{2B-1} e^{2\pi i\left(\frac{\epsilon \ell S}{2R} + \frac{a \delta_{\ell - B}}{S} + \frac{\epsilon \delta_{\ell - B}}{2R}\right)} \right|.
\end{equation}
Now, define $C \triangleq BS \epsilon/R$.
Because $B = \lfloor R/S\rfloor$, then $S B \leq R$ and hence $|C| \leq |\epsilon| \leq 1/2$.
Also, note that because we are considering $y = a R/S + \epsilon < R/\log S$ and $|\epsilon| \leq 1/2$, then $a/S < 1/\log S + 1/(2R)$.
Since $|\delta_k| < 1$ and $R \geq 6(1/2 + \tau)A^2 \geq 3S^2$, then
\begin{equation}
  \left|\frac{a \delta_{\ell - B}}{S} + \frac{\epsilon\delta_{\ell - B}}{2R}\right| < \frac{1}{\log S} + \frac{1}{2R} + \frac{1}{4R} \leq \frac{2}{\log S}.
\end{equation}
Thus, we can write our summation as
\begin{equation}
  \left|\sum_{\ell=0}^{2B-1} e^{2\pi i(C\ell/(2B) + \xi(\ell))}\right|,
\end{equation}
where $|\xi(\ell)| \leq 2/\log S$.
By Lemma 3 in~\cite{jozsa2003notes} (or Claim 3.1 in~\cite{hallgren2007polynomial}), we obtain the desired lower bound:
\begin{equation}
   \left|\sum_{k=-B}^{B-1} e^{2\pi i [k S] y/(2R)} \right| \geq \frac{2}{\sqrt{18}}B
\end{equation} 
if $y = \lfloor a R/S \rceil$ and $y < R/\log S$.
Now, we consider the other term.
\begin{align}
\left|\sum_{k=-B}^{B-1} e^{2\pi i [k S] y/(2R)} p(x+[kS]) - \sum_{k=-B}^{B-1} e^{2\pi i [k S] y/(2R)}\right| &\leq \sum_{k=-B}^{B-1} \left|p(x + [kS]) - 1 \right|\\
&= \sum_{k=-B}^{B-1} (1 - p(x+[kS]))\\
&\leq \frac{B}{5}\\
&\leq \frac{B}{\sqrt{18}}.
\end{align}
Here, in the second line, we use Assumption~\ref{assum:bounded-1}.
In the third line, we use Assumption~\ref{assum:pointwise-close}.
Namely, because $p(x) \leq 1$ by Assumption~\ref{assum:bounded-1}, then $1 - p(x+[kS]) \leq 1 - p^2(x + [kS])$, which is in turn less than $1/10$ by Assumption~\ref{assum:pointwise-close}.
Note that Assumption~\ref{assum:pointwise-close} applies because for the range of $x,k$ considered, then $x + [kS]\in[-R,R]$.

Putting everything together, we thus see that
\begin{equation}
  \left|\sum_{k=-B}^{B-1} e^{2\pi i[k S] y/(2R)}p(x + [kS])\right| \geq \frac{2}{\sqrt{18}}B - \frac{1}{\sqrt{18}}B = \frac{1}{\sqrt{18}}B
\end{equation}
for $y = \lfloor a R/S \rceil$ and $y < R/\log S$.
Then, plugging this back into our original expression, the probability that we obtain some output $y = \lfloor aR/S\rceil$ and $y < R/\log S$ is
\begin{align}
  \Pr\left(y = \left\lfloor \frac{a R}{S} \text{ and } y < R/\log S\right\rceil\right) &= \frac{1}{2R\tilde{G}}\sum_{x=0}^{S-1}\left|\sum_{k=-B}^{B-1} e^{2\pi i[k S] y/2R}p(x + [kS])\right|^2\\
  &\geq \frac{1}{2R\tilde{G}}\sum_{x=0}^{S-1}\frac{1}{18}B^2\\
  &\geq \frac{1}{72}\frac{1}{R^2}SB^2\\
  &= \Omega\left(\frac{SB^2}{R^2}\right)\\
  &= \Omega\left(\frac{1}{S}\right).
\end{align}
Here, the second line follows from our above argument.
The third line follows because
\begin{equation}
  \tilde{G} = \sum_{x=-R}^{R-1} p^2(x) \leq 2R,
\end{equation}
where we used Assumption~\ref{assum:bounded-1}.
The last line follows because $B = \lfloor R/S\rfloor = \Theta(R/S)$.

There are $S / \log A$ integer multiples of $R/S$ less than $R/\log A$ (and hence less than $R / \log S$).
Thus, the probability of measuring two values less than $R/ \log A$ (as in Step 2 of \Cref{alg:hallgren-non-unif}) is $\Omega(1/\log^2 A)$.
Furthermore, the probability that the two values are relatively prime is at least $\Omega(1/\log(S/\log A))^2$ by the prime number theorem.
The probability of measuring two such values satisfying all the conditions (including pseudoperiodicity) is $\Omega(\eta^2/\log^4 A)$.

Steps 3-5 of \Cref{alg:hallgren-non-unif} are analyzed in the case that we obtain a noisy estimate with tolerance $\tau$ in the same way as \Cref{thm:linear-uniform}.
Thus, we obtain the claim.
\end{proof}

Now, we can prove \Cref{coro:linear-non-unif} using \Cref{thm:non-unif-period}.
We need to show that the condition about checking whether a guess for the period is close or not is satisfied.
We design such a verification procedure in \Cref{alg:verification-non-unif} and analyze it in \Cref{thm:verification-non-unif}.
This is analogous to \Cref{thm:verification}.
As before, in \Cref{alg:verification-non-unif}, we must restrict the noise tolerance of our QSQs to be inverse polynomial in some of our parameters.
Classically, the hardness results have access to gradients that are exponentially accurate, so requiring the tolerance parameter to scale inverse polynomially is not particularly strong.

\begin{algorithm}
   \caption{Verification Procedure; Non-Uniform Case} 
   \label{alg:verification-non-unif}
   \begin{algorithmic}[1]
   \State Choose parameters $M_1 = \max(70\pi d^2 D^3 R_w, R_w^2/\epsilon_1)$, $M_2 = cM_1$ for $c$ any constant such that $M_2 \in \mathbb{Z}$ and $c < 1/(8\pi DR_w)$, and $\tilde{R} = \tilde{\Omega}\left(\max\left(\frac{\tau M_1^2d^4}{R_w^2},\frac{D^2}{\epsilon}, \frac{D^2\sqrt{d}}{R_w \epsilon}, \frac{D^{5/2}}{\sqrt{\epsilon}}, \frac{D^{3/2}\sqrt{d}}{R_w \sqrt{\epsilon}}, \frac{d^2 D}{R_w^2}\right)\right)$.
   \State For $m \in \{1,\dots, D\}$, query the QSQ oracle with observable $O_{k,m}$ (defined in \Cref{eq:ov}), discretization parameters $M_{1,m} \triangleq m M_1$, $M_{2,m} \triangleq mM_2$, truncation parameter $R \triangleq \tilde{R}M_{1,m}$, and tolerance $\tau \leq \min\left(\frac{1}{M_2^2}\left(\frac{5}{42} - \frac{3}{2M_2}\right), \frac{1}{2D^2M_2^2}\left(\frac{2}{9} - \frac{1}{8}\left(\frac{2\pi R_w}{M_1}\right)^2 + \frac{3D^2}{M_2}\right)\right)$ to obtain values $\alpha_m$.
   \State Check if $\alpha_1 \geq \frac{1}{M_2^2}\left(\frac{5}{14} - \frac{9}{2M_2}\right)$.
   \State Check if $\sum_{m=1}^D \alpha_m \leq \frac{1}{M_2^2}\left(\frac{13}{25}D + \frac{1}{2D}\left(\frac{2}{9} - \frac{1}{8}\left(\frac{2\pi R_w}{M_1}\right)^2 +\frac{3D^2}{M_2}\right)\right)$.
   \State \Return ``yes'' iff both conditions in Steps 3 and 4 are satisfied.
   \end{algorithmic}
\end{algorithm}

\begin{theorem}[Verification Procedure; Non-Uniform Case]
\label{thm:verification-non-unif}
Let $\varphi^2 \propto \prod_{k=1}^d p_k^2$ be a probability distribution over $[-R, R]^d$ satisfying Assumptions~\ref{assum:fourier-conc}-\ref{assum:crit-points} for a truncation parameter $R$ specified later.
Let $1 > \epsilon_1 > 0$.
Let $w^\star \in \mathbb{R}^d$ be unknown with norm $R_w > 0$ and $w_j^\star \geq R_w/d^2$ for all $j \in [d]$.
Let $g_{w^\star} : \mathbb{R}^d \to [-1,1]$ be defined as $g_{w^\star}(x) = \tilde{g}(x^\intercal w^\star)$ for $\tilde{g}$ given in \Cref{eq:g-tilde}.
Consider parameters $M_1 = \max(70\pi d^2 D^3 R_w, R_w^2/\epsilon_1)$, $M_2 = cM_1$, where $c$ is any constant such that $M_2 \in \mathbb{Z}$ and $c < 1/(8\pi DR_w)$, and 
\begin{equation}
    \tilde{R} = \tilde{\Omega}\left(\max\left(\frac{\tau M_1^2d^4}{R_w^2},\frac{D^2}{\epsilon}, \frac{D^2\sqrt{d}}{R_w \epsilon}, \frac{D^{5/2}}{\sqrt{\epsilon}}, \frac{D^{3/2}\sqrt{d}}{R_w \sqrt{\epsilon}}, \frac{d^2 D}{R_w^2}\right)\right).
\end{equation}
Suppose we have QSQ access (see \Cref{def:qsq}) with respect to discretization parameters $M_{1,m} \triangleq m M_1$, $M_{2,m} \triangleq m M_2$ and a truncation parameter $R \triangleq \tilde{R}M_{1,m}$, for $m \in \{1,\dots, D\}$.
Then, given an integer $T$ found as in \Cref{alg:hallgren-non-unif} and $k \in [d]$, \Cref{alg:verification-non-unif} can check whether or not $|T - \frac{\ell M_1}{w_k^\star}| \leq 1$ for some integer $\ell$ using $D$ QSQs with tolerance $\tau \leq \min\left(\frac{1}{M_2^2}\left(\frac{5}{42} - \frac{3}{2M_2}\right), \frac{1}{2D^2M_2^2}\left(\frac{2}{9} - \frac{1}{8}\left(\frac{2\pi R_w}{M_1}\right)^2 + \frac{3D^2}{M_2}\right)\right)$.
\end{theorem}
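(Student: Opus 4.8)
The plan is to mirror the proof of the uniform verification procedure (\Cref{thm:verification}), carrying through the modifications forced by the non-uniform amplitudes $p(x) = \prod_{j} p_j(x_j)$. I would use the very same observable $O_{k,m} \triangleq A_m S_{k,-T}$ from \Cref{eq:ov}, query it with the $m$-th choice of parameters $M_{1,m} = m M_1$, $M_{2,m} = m M_2$ and truncation $R = \tilde R M_{1,m}$ for $m \in \{1,\dots,D\}$, so the query count is exactly $D$. The first step is the analogue of \Cref{claim:inner-prod}: a direct computation (the sign register handled by the $2\ketbra{-}$ factor as before) gives that $\expval{O_{k,m}}{h_{w^\star, M_{1,m}, M_{2,m}}}$ now equals $\frac{1}{\tilde G M_{2,m}^2}\sum_{x} p^2(x)\, h_{w^\star, M_{1,m}, M_{2,m}}(x)\, h_{w^\star, M_{1,m}, M_{2,m}}(x + T e_k)$, the $\varphi^2$-weighted discrete inner product of the discretized function with its $T$-shift in coordinate $k$. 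Expanding $h$ via \Cref{lem:discrete-general-unif} and $\tilde g$ via \Cref{eq:g-tilde} and applying the cosine sum formula reduces this, as in the uniform case, to a weighted sum of $\cos^2$ and cross terms plus a rounding error $\epsilon_d$.

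Next I would replace the discrete weighted sums by integrals $\int_{x \sim \varphi^2}(\cdot)\,dx$. This requires re-deriving the sum-to-integral estimate of \Cref{claim:sum-to-int} for integrands carrying the extra factor $p^2(x)$: by the product rule $|\partial_{x_i}(\cos(2\pi j x^\intercal w^\star) p^2(x))| \le 4\pi D R_w + O(|p_i'|)$, where Assumption~\ref{assum:bounded-1} gives $p_j \le 1$ and Assumption~\ref{assum:deriv} controls $|p_j'|$. This is precisely why the discretization parameter is enlarged to $M_1 = \max(70\pi d^2 D^3 R_w,\, R_w^2/\epsilon_1)$ (an extra factor of $d$ over the uniform case) and why $\tilde R$ acquires the additional term $d^2 D/R_w^2$. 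The rounding error $\epsilon_d$ is bounded exactly as in \Cref{claim:eps-d}, since $|\cos| \le 1$, $p^2 \le 1$, and the rounding gap is $\le 1/M_{2,m}$. Finally, Assumption~\ref{assum:even} (evenness of $p^2$) kills the odd integrals $\int \cos\sin\,\varphi^2\,dx$ over the symmetric domain, reproducing the clean form of \Cref{eq:h-o-expval3}.

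With the expectation expressed through weighted integrals, I would invoke the non-uniform integral bounds of \Cref{sec:int-bounds-non-unif} — the analogues of \Cref{coro:integral}, \Cref{coro:integral-upper}, \Cref{coro:integral2-wstar} and \Cref{coro:integral2-wstar-sin} — to show $\int \cos^2(2\pi j x^\intercal w^\star)\,\varphi^2\,dx = \frac12 \pm O(\text{small})$ and that the $j \ne j'$ cross terms are $O(\sqrt d/(R_w \tilde R))$, now with slightly weaker constants reflecting the $1/10$ pointwise slack. The rest of the argument follows \Cref{claim:ver-if} and \Cref{claim:ver-only-if} verbatim in structure: when $|T - \ell M_1/w_k^\star| \le 1$ the term $\sum_j (\beta_j^\star)^2 \cos(2\pi j T w_k^\star/M_1)$ is $\ge 1/D - O(1/D^4) \ge \frac{2449}{2450 D}$ (using $M_1 \ge 70\pi d^2 D^3 R_w$ so the cosine argument is tiny), while if no such $\ell$ exists it is $\le 1 - \frac18(2\pi R_w/M_1)^2$ for at least the term $m$ with $M_{1,m} = M_1$; combining with $\norm{\beta^\star}_1 = 1$, the discretization and sum-to-integral errors, and the QSQ tolerance $\tau \le \min(\cdot,\cdot)$ shows the thresholds in Steps~3 and~4 of \Cref{alg:verification-non-unif} are satisfied if and only if $T$ is within $1$ of an integer multiple of $M_1/w_k^\star$.

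I expect the main obstacle to be the non-uniform integral bounds of \Cref{sec:int-bounds-non-unif}, especially showing that $\int_{x\sim\varphi^2}\cos^2(2\pi j x^\intercal w^\star)\,dx$ stays close to $\frac12$. In the uniform case this was an elementary $\sin(x)/x$ estimate (\Cref{lem:complex-exp}); here one must integrate an oscillatory $\cos$ against the smooth weight $\prod_j p_j^2$. The strategy is to single out a coordinate $k$ with $|w_k^\star| \ge R_w/\sqrt d$, write $p_k^2 = 1 + (p_k^2 - 1)$ so that the $\frac12$ arises from the would-be uniform piece while $|p_k^2 - 1| \le 1/10$ (Assumption~\ref{assum:pointwise-close}) controls the correction, and integrate the oscillatory factor in $x_k$ by parts using $|p_k'| = O(R_w/M_1)$ (Assumption~\ref{assum:deriv}) together with Assumption~\ref{assum:crit-points} to split into monotone pieces; the normalization $\tilde G$ must likewise be pinned to $\approx (2R)^d$ via Assumption~\ref{assum:pointwise-close}. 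Propagating all of these new error terms so that the specific numerical thresholds ($5/42$, $5/14$, $\frac{13}{25}D$, $2/9$, etc.) in \Cref{alg:verification-non-unif} still separate the ``good guess'' and ``bad guess'' regimes is where essentially all the technical bookkeeping lies.
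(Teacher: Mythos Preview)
Your overall strategy is right and matches the paper's: reuse the observable $O_{k,m}$, expand via the cosine sum formula, pass to integrals, and then replay \Cref{claim:ver-if}/\Cref{claim:ver-only-if} with the non-uniform integral bounds. However, there is a concrete error in your first step that propagates through the rest of the outline.

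When you compute $\expval{O_{k,m}}{h_{w^\star,M_{1,m},M_{2,m}}}$ for the non-uniform example state, the shift $S_{k,-T}$ acts on only one factor of the inner product, so the amplitude on the bra side is $p(x)$ while on the ket side (after relabeling) it is $p(x+Te_k)$. The resulting weight is therefore
\[
p_T^2(x) \;\triangleq\; p_1^2(x_1)\cdots p_k(x_k)\,p_k(x_k+T)\cdots p_d^2(x_d),
\]
\emph{not} $p^2(x)$ as you wrote. This is the content of the paper's \Cref{claim:inner-prod-non-unif}, and it is precisely why Assumption~\ref{assum:deriv} asks for a bound on $|p_j'(M_1 x + T)|$ in addition to $|p_j'(M_1 x)|$: the sum-to-integral estimate (\Cref{claim:sum-to-int-non-unif}) must differentiate $p_k(M_{1,m}x_k)\,p_k(M_{1,m}x_k+T)$ in the $k$-th coordinate. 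Your product-rule bound $|\partial_{x_i}(\cos\cdot p^2)|\le 4\pi D R_w + O(|p_i'|)$ therefore misses a term.

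Two further consequences you have not addressed. First, because the weight is $p_T^2$ and not $p^2$, the ``integrals against $\varphi^2$'' you plan to invoke are not quite what appears; the paper handles this via \Cref{coro:complex-exp-wstar-verif-non-unif}, a variant of the complex-exponential bound tailored to the $p_T^2$ weight. Second, the discrete normalization $\tilde G_d$ is \emph{not} the same as the continuous $M_{1,m}^d G_d$ appearing after the sum-to-integral step; relating the two within a factor $6804/6805$ (respectively $6803/6804$) is where Assumption~\ref{assum:crit-points} is actually used, via a monotone sum--integral comparison (see \Cref{eq:G-tilde-upper-non-unif} and \Cref{eq:G-tilde-lower-non-unif}), rather than via integration by parts as you suggest. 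The $\epsilon_d$ bound also shifts to $3/M_{2,m}^3$ (not $2/M_{2,m}^3$) because $\sum_x p_T^2(x)/\tilde G_d$ is no longer exactly $1$. Once you correct the weight to $p_T^2$, the remaining structure of your argument goes through as in the paper.
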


\begin{proof}
This proof is similar to \Cref{thm:verification}, so we omit some details when they follow straightforwardly from \Cref{thm:verification}.
Explicitly, the example state for our QSQ access is
\begin{equation}
  \label{eq:example-detail-non-unif}
  \ket{h_{w^\star, M_{1,m}, M_{2,m}}} = \frac{1}{\sqrt{\tilde{G}_d}} \sum_{x_1,\dots, x_d = -\tilde{R}M_{1,m}}^{\tilde{R}M_{1,m} - 1} p_1(x_1) \cdots p_d(x_d) \ket{x}\ket{h_{w^\star, M_{1,m}, M_{2,m}}(x)},
\end{equation}
where
\begin{equation}
  \tilde{G}_d \triangleq \sum_{x_1,\dots, x_d=-\tilde{R}M_{1,m}}^{\tilde{R}M_{1,m}-1} p_1^2(x_1) \cdots p_d^2(x_d)
\end{equation}
is a normalization constant.
Also, $h_{w^\star, M_{1,m}, M_{2,m}}$ is a discretization of $g_{w^\star}$ from \Cref{lem:discrete-general-unif}.
Note that $p_j$ satisfies Assumptions~\ref{assum:fourier-conc}-\ref{assum:crit-points} for the truncation parameter $R = \tilde{R}M_{1,m}$, not $\tilde{R}$.
We query $D$ QSQs, each with the different parameters indexed by $m$ as specified previously.

As in \Cref{thm:verification}, the main idea behind our verification procedure is to compute the inner product between $h_{w^\star, M_{1,m}, M_{2,m}}$ and this function with its input shifted by the guess $T$ for the period.
This inner product should be large for a good guess.
We again consider the observable $O_{k,m}$ defined in \Cref{eq:ov} which computes the inner product between $h_{w^\star, M_{1,m}, M_{2,m}}$ and this function with its input shifted by $T$.

\begin{claim}[Approximating inner product; Non-uniform case]
\label{claim:inner-prod-non-unif}
For $m \in \{1,\dots, D\}$, consider parameters $M_{1,m}, M_{2,m}$ as defined above.
Also consider a parameter $\tilde{R}$ and an observable $O_{k,m}$ as defined above.
Then, the expectation value of $O_m$ with respect to the example state in \Cref{eq:example-detail-non-unif} is given by
\begin{align}
  &\expval{O_{k,m}}{h_{w^\star, M_{1,m}, M_{2,m}}}\\
  &= \frac{1}{\tilde{G}_d M_{2,m}^2} \sum_{x_1,\dots, x_d = -\tilde{R}M_{1,m}}^{\tilde{R}M_{1,m}-1} p_1^2(x_1)\cdots p_k(x_k)p_k(x_k + T) \cdots p_d^2(x_d) h_{w^\star, M_{1,m}, M_{2,m}}(x) h_{w^\star, M_{1,m}, M_{2,m}}(x + Te_k),
\end{align}
where $e_k$ denotes the unit vector with a single one in the $k$th coordinate.
\end{claim}

We omit the proof of this claim, as it follows in the same way as \Cref{claim:inner-prod}.
Now, we want to show that the conditions checked in Steps 3 and 4 in \Cref{alg:verification-non-unif} are satisfied if and only if $|T - \ell M_1/w_k^\star| \leq 1$.
To do so, we first simplify our approximate inner product from \Cref{claim:inner-prod-non-unif} further using the particular form of $h_{w^\star, M_{1,m}, M_{2,m}}$ from \Cref{lem:discrete-general-unif} and $\tilde{g}$ from \Cref{eq:g-tilde}.
This follows in the same way as \Cref{thm:verification}, just carrying along the extra discrete Gaussian terms.
For simplicity, denote
\begin{equation}
  p^2_T(x) \triangleq p_1^2(x_1)\cdots p_k(x_k)p_k(x_k + T) \cdots p_d^2(x_d).
\end{equation}
Then, we have
\begin{align}
  &\expval{O_{k,m}}{h_{w^\star, M_{1,m}, M_{2,m}}}\\
  &\begin{aligned}
    =\frac{1}{\tilde{G}_d M_{2,m}^2} \sum_{x_1,\dots, x_d=-\tilde{R}M_{1,m}}^{\tilde{R}M_{1,m}-1} \sum_{j=1}^D p^2_T\left(x\right) (\beta_j^\star)^2 &\left(\cos^2\left(\frac{2\pi j x^\intercal w^\star}{M_{1,m}}\right)\cos\left(\frac{2\pi j T w_k^\star}{M_{1,m}}\right)\right.\\
    &\left.- \cos\left(\frac{2\pi j x^\intercal w^\star}{M_{1,m}}\right)\sin\left(\frac{2\pi j x^\intercal w^\star}{M_{1,m}}\right)\sin\left(\frac{2\pi j T w_k^\star}{M_{1,m}}\right) \right)
    \end{aligned}\label{eq:h-o-expval-non-unif}\\
  &\begin{aligned}
    +\frac{1}{\tilde{G}_d M_{2,m}^2} \sum_{x_1,\dots, x_d=-\tilde{R}M_{1,m}}^{\tilde{R}M_{1,m}-1} \sum_{\substack{j,j'=1\\j\neq j'}}^D p^2_T\left(x\right) &\beta_j^\star \beta_{j'}^\star\left(\cos\left(\frac{2\pi j x^\intercal w^\star}{M_{1,m}}\right)\cos\left(\frac{2\pi j' x^\intercal w^\star}{M_{1,m}}\right)\cos\left(\frac{2\pi j' Tw_k^\star}{M_{1,m}}\right)\right.\\
    &\left.- \cos\left(\frac{2\pi j x^\intercal w^\star}{M_{1,m}}\right)\sin\left(\frac{2\pi j' x^\intercal w^\star}{M_{1,m}}\right)\sin\left(\frac{2\pi j' T w_k^\star}{M_{1,m}}\right)\right) + \epsilon_d
    \label{eq:h-o-expval2-non-unif}
  \end{aligned}
\end{align}

We want to upper and lower bound this expression.
To do so, we find it easier to work with integrals over $x$ instead of these discrete sums.
We can then bound the integrals, which we relegate to Appendix~\ref{sec:int-bounds-non-unif}.
To this end, we first need to bound the error from approximating our summation by an integral.

\begin{claim}[Sum-to-integral error; Non-uniform case]
\label{claim:sum-to-int-non-unif}
For $m \in \{1,\dots, D\}$, consider parameters $M_{1,m}, M_{2,m}$ as defined above. Also consider a parameter $R$ defined above. Then, for an integer $1 \leq j \leq D$,
\begin{align}
  &\frac{1}{G_d}\left|\int_{[-\tilde{R},\tilde{R}]^d} p^2_T(M_{1,m}x) \cos^2\left(2\pi j x^\intercal w^\star\right)\,dx - \frac{1}{M_{1,m}^d} \sum_{x_1,\dots, x_d = -\tilde{R}M_{1,m}}^{\tilde{R}M_{1,m} - 1} p^2_T\left(x\right)\cos^2\left(\frac{2\pi j x^\intercal w^\star}{M_{1,m}}\right) \right|\\
  &\leq \frac{6805}{6804}\frac{1}{21 D^2},
\end{align}
where
\begin{equation}
  G_d \triangleq \prod_{i=1}^d \left(\int_{-\tilde{R}}^{+\tilde{R}} p^2_j(M_{1,m} x_j)\,dx_j\right).
\end{equation}
\end{claim}

\begin{proof}[Proof of \Cref{claim:sum-to-int-non-unif}]
The proof is similar to that of \Cref{claim:sum-to-int}, so we omit some details.
As in \Cref{claim:sum-to-int}, we prove this by induction on the dimension $d$.
Denoting $f(x) \triangleq \cos^2(2\pi j x^\intercal w^\star)$, we will prove
\begin{equation}
  \label{eq:induct-non-unif}
  \frac{1}{G_d}\left|\int_{[-\tilde{R},\tilde{R}]^d} p^2_T(M_{1,m}x) f(x)\,dx - \frac{1}{M_{1,m}^d} \sum_{x_1,\dots, x_d = -\tilde{R}M_{1,m}}^{\tilde{R}M_{1,m} - 1} p^2_T\left(x\right)f\left(\frac{x}{M_{1,m}}\right) \right| \leq \frac{6805}{6804}\frac{10\pi d DR_w}{3 M_{1,m}}.
\end{equation}
Note that this implies our claim by our choice of $M_{1,m} = m M_1 \geq 70m \pi d^2 D^3 R_w \geq 70 \pi d D^3 R_w$.
Thus, it suffices to prove \Cref{eq:induct-non-unif}.
In fact, we will use induction to prove that
\begin{align}
  \label{eq:induct2-non-unif}
  &\frac{1}{G_{d-1}} \left|\int_{[-\tilde{R},\tilde{R}]^{d-1}} p^2_T(M_{1,m}x) f(x,y)\,dx - \frac{1}{M_{1,m}^{d-1}} \sum_{x_1,\dots, x_{d-1}=-\tilde{R}M_{1,m}}^{\tilde{R}M_{1,m}-1} p^2_T\left(x\right) f\left(\frac{x}{M_{1,m}}, y\right) \right|\\
  &\leq \frac{6805}{6804}\frac{10\pi (d-1) DR_w}{3M_{1,m}}
\end{align}
for some fixed $y$.
In the process, we show that \Cref{eq:induct-non-unif} follows from this.
First, consider the base case. We want to prove
\begin{equation}
  \label{eq:induct-base-non-unif}
  \frac{1}{G_1}\left|\int_{-\tilde{R}}^{+\tilde{R}} p(M_{1,m}x) p(M_{1,m} x + T)f(x)\,dx - \frac{1}{M_{1,m}} \sum_{x=-\tilde{R}M_{1,m}}^{\tilde{R}M_{1,m}-1} p(x) p(x + T) f\left(\frac{x}{M_{1,m}}\right)\,dx\right| \leq \frac{6805}{6804} \frac{10\pi D R_w}{3M_{1,m}}
\end{equation}
and
\begin{align}
  \label{eq:induct-base2-non-unif}
  &\frac{1}{G_1}\left|\int_{-\tilde{R}}^{+\tilde{R}} p(M_{1,m}x) p(M_{1,m}x + T) f(x,y)\,dx - \frac{1}{M_{1,m}} \sum_{x=-\tilde{R}M_{1,m}}^{\tilde{R}M_{1,m}-1} p(x)p(x+T) f\left(\frac{x}{M_{1,m}}, y\right)\,dx\right|\\
  &\leq \frac{6805}{6804} \frac{10\pi D R_w}{3M_{1,m}}
\end{align}
for some fixed $y$.
First, for \Cref{eq:induct-base-non-unif}, the error can be bounded by standard results in approximating integrals by Riemann sums:
\begin{equation}
  \label{eq:riemann-approx-non-unif}
  \left|\int_{-\tilde{R}}^{+\tilde{R}} p(M_{1,m}x) p(M_{1,m} x +T)f(x)\,dx - \frac{1}{M_{1,m}} \sum_{x=-\tilde{R}M_{1,m}}^{\tilde{R}M_{1,m}-1} p(x)p(x+T) f\left(\frac{x}{M_{1,m}}\right)\,dx\right| \leq \frac{L\tilde{R}}{M_{1,m}},
\end{equation}
where $L \triangleq \max_{x \in [-\tilde{R}, \tilde{R}]} |\tilde{f}'(x)|$ and $\tilde{f}(x) \triangleq p(M_{1,m}x) p(M_{1,m}x +T)f(x)$.
By definition, $f(x) = \cos^2(2\pi j xw^\star)$, so
\begin{align}
  \tilde{f}'(x) &= M_{1,m}p'(M_{1,m} x) p(M_{1,m}x + T)\cos^2(2\pi j x w^\star) + M_{1,m}p(M_{1,m} x)p'(M_{1,m}x + T)\cos^2(2\pi j xw^\star) \\
  &- 2p(M_{1,m}x) p(M_{1,m}x + T)\cos(2\pi j xw^\star)\sin(2\pi j xw^\star) \cdot 2\pi j w^\star.
\end{align}
Then,
\begin{align}
  \label{eq:grad-bound-non-unif}
  |\tilde{f}'(x)| \leq M_{1,m}|p'(M_{1,m}x)| + M_{1,m}|p'(M_{1,m}x + T)| + 4\pi D R_w \leq 5\pi DR_w.
\end{align}
In the first inequality, we used Assumption~\ref{assum:bounded-1} so that $p(x) \leq 1$.
In the second inequality, we used Assumption~\ref{assum:deriv}, which bounds the derivative of $p$ by $\pi DR_w/(2M_{1,m})$ since we chose our discretization parameter as $M_{1,m}$.
Thus, we can conclude that
\begin{equation}
  \left|\int_{-\tilde{R}}^{+\tilde{R}} p(M_{1,m}x) p(M_{1,m}x+T)f(x)\,dx - \frac{1}{M_{1,m}} \sum_{x=-\tilde{R}M_{1,m}}^{\tilde{R}M_{1,m}-1} p(x)p(x+T) f\left(\frac{x}{M_{1,m}}\right)\,dx\right| \leq \frac{5\pi DR_w \tilde{R}}{M_{1,m}}.
\end{equation}
When dividing both sides by $G_1$, note that
\begin{equation}
  \label{eq:g1-lower-non-unif}
  G_1 = \int_{-\tilde{R}}^{+\tilde{R}}p^2(M_{1,m}z)\,dz \geq \frac{9\tilde{R}}{5} \geq \frac{5\tilde{R}}{3},
\end{equation}
where we used Assumption~\ref{assum:pointwise-close}.
In particular, by Assumption~\ref{assum:pointwise-close}, we have
\begin{equation}
  \int_{-\tilde{R}}^{+\tilde{R}} p^2(M_{1,m} z)\,dz - 2\tilde{R} = \int_{-\tilde{R}}^{+\tilde{R}} (p^2(M_{1,m} z) - 1)\,dz \geq -\frac{\tilde{R}}{5}.
\end{equation}
This implies that $\int_{-\tilde{R}}^{+\tilde{R}}p^2(M_{1,m} z)\,dz \geq 2\tilde{R} - \tilde{R}/5 \geq 9\tilde{R}/5 \geq 5\tilde{R}/3$.
Note that Assumption~\ref{assum:pointwise-close} applies here because the truncation parameter is $R = \tilde{R}M_{1,m}$ so that $M_{1,m}z \in [-\tilde{R}M_{1,m}, \tilde{R}M_{1,m}] = [-R,R]$.
Thus, we have that $\tilde{R}/G_1 \leq 3/5$ so that
\begin{align}
  &\frac{1}{G_1}\left|\int_{-\tilde{R}}^{+\tilde{R}} p(M_{1,m}x) p(M_{1,m}x + T)f(x)\,dx - \frac{1}{M_{1,m}} \sum_{x=-\tilde{R}M_{1,m}}^{\tilde{R}M_{1,m}-1} p(x) p(x +T) f\left(\frac{x}{M_{1,m}}\right)\,dx\right|\\
  &\leq \frac{3\pi DR_w}{M_{1,m}}\\
  &\leq \frac{6805}{6804}\frac{10\pi DR_w}{3M_{1,m}}.
\end{align}
The proof of \Cref{eq:induct-base2-non-unif} follows similarly.

Now, for the inductive step, suppose for $\ell$ such that $d - 1 \geq \ell \geq 1$ that
\begin{equation}
  \frac{1}{G_\ell}\left|\int_{[-\tilde{R},\tilde{R}]^\ell} p^2_T(M_{1,m}x)f(x, y)\,dx - \frac{1}{M_{1,m}^\ell} \sum_{x_1,\dots, x_\ell=-\tilde{R}M_{1,m}}^{\tilde{R}M_{1,m}-1} p_T^2(x) f\left(\frac{x}{M_{1,m}}, y\right)\right| \leq \frac{6805}{6804}\frac{10\pi \ell DR_w}{3M_{1,m}}
\end{equation}
for some fixed $y$ and where $f(x_1,\dots, x_\ell,y) = \cos^2(2\pi j (x_1 w_1^\star + \cdots + x_\ell w_\ell^\star + yw_{\ell+1}^\star))$.
We first show that \Cref{eq:induct-non-unif} holds for $\ell + 1$.
Suppose that $\ell + 1 > k$ for now.
\begin{align}
  &\frac{1}{G_{\ell + 1}} \int_{[-\tilde{R}, \tilde{R}]^{\ell + 1}} p^2_T(M_{1,m}x) f(x)\,dx\\
  &\begin{aligned}
    = \frac{1}{\int_{-\tilde{R}}^{+\tilde{R}} p_{\ell+1}^2(M_{1,m} z)\,dz} \int_{-\tilde{R}}^{+\tilde{R}} &\left(\frac{1}{G_\ell} \int_{[-\tilde{R}, \tilde{R}]^\ell} p^2_T(M_{1,m}x_1,\dots, M_{1,m}x_\ell) f(x_1,\dots, x_{\ell+1}) \,dx_1 \cdots dx_\ell\right)\\
    &\cdot p_{\ell+1}^2(M_{1,m} x_{\ell+1})\,dx_{\ell+1}
  \end{aligned}\\
  &\leq \frac{1}{M_{1,m}^\ell G_{\ell+1}} \sum_{x_1,\dots, x_\ell=-\tilde{R}M_{1,m}}^{\tilde{R}M_{1,m} -1} \int_{-\tilde{R}}^{+\tilde{R}} p^2_T\left(x_1,\dots, x_\ell, M_{1,m}x_{\ell+1}\right) f\left(\frac{x_1}{M_{1,m}},\dots, \frac{x_\ell}{M_{1,m}}, x_{\ell+1}\right)\,dx_{\ell+1}\\
  &+ \frac{1}{\int_{-\tilde{R}}^{+\tilde{R}} p^2_{\ell+1}(M_{1,m} z)\,dz} \int_{-\tilde{R}}^{+\tilde{R}} p_{\ell+1}^2(M_{1,m}x_{\ell+1}) \frac{6805}{6804}\frac{10\pi \ell D R_w}{3M_{1,m}}\,dx_{\ell+1}\\
  &= \frac{1}{M_{1,m}^\ell G_{\ell+1}} \sum_{x_1,\dots, x_\ell=-\tilde{R}M_{1,m}}^{\tilde{R}M_{1,m} -1} p^2_T\left(x_1,\dots, x_{\ell}\right) \int_{-\tilde{R}}^{+\tilde{R}} p_{\ell+1}^2(M_{1,m} x_{\ell+1}) f\left(\frac{x_1}{M_{1,m}},\dots, \frac{x_\ell}{M_{1,m}}, x_{\ell+1}\right)\,dx_{\ell+1}\\
  &+ \frac{6805}{6804}\frac{10\pi \ell D R_w}{3M_{1,m}}.
\end{align}
In the inequality, we use the inductive hypothesis.
In the last equality, we rearrange and simplify.
Now, we can approximate this last integral by a Riemann sum for the function $\tilde{f}(y) \triangleq p_{\ell+1}^2(M_{1,m}y) f(x_1/M_{1,m},\dots, x_\ell/M_{1,m}, y)$, with error bounded similarly to \Cref{eq:riemann-approx-non-unif}:
\begin{align}
  &\frac{1}{G_{\ell + 1}} \int_{[-\tilde{R}, \tilde{R}]^{\ell + 1}} p^2_T(M_{1,m}x) f(x)\,dx\\
  &\leq \frac{1}{M_{1,m}^\ell G_{\ell+1}} \sum_{x_1,\dots, x_\ell=-\tilde{R}M_{1,m}}^{\tilde{R}M_{1,m} -1} p^2_T\left(x_1,\dots, x_\ell\right) \left(\frac{1}{M_{1,m}} \sum_{x_{\ell+1} = -\tilde{R}M_{1,m}}^{\tilde{R}M_{1,m} - 1} p_{\ell+1}^2(x_{\ell+1}) f\left(\frac{x}{M_{1,m}}\right) + \frac{L'\tilde{R}}{M_{1,m}}\right)\\
  &+ \frac{6805}{6804}\frac{10\pi \ell D R_w}{3M_{1,m}}.
\end{align}
Here, $L' \triangleq \max_{y \in [-\tilde{R},\tilde{R}]} |\tilde{f}'(y)|$.
Since $f(x) = \cos^2(2\pi j x^\intercal w^\star)$, then
\begin{align}
  \tilde{f}'(y) &= 2M_{1,m}p_{\ell+1}(M_{1,m}y)p'_{\ell+1}(M_{1,m}y) \cos^2\left(2\pi j\left(yw_{\ell+1}^\star + \sum_{i=1}^\ell \frac{x_i w_i^\star}{M_{1,m}}\right)\right)\\
  &- 2p_{\ell+1}^2(M_{1,m}y)\cos\left(2\pi j \left(yw_{\ell+1}^\star + \sum_{i=1}^\ell \frac{x_iw_i^\star}{M_{1,m}}\right)\right)\sin\left(2\pi j \left(yw_{\ell+1}^\star + \sum_{i=1}^\ell \frac{x_iw_i^\star}{M_{1,m}}\right)\right)\cdot 2\pi j w_{\ell+1}^\star.
\end{align}
Thus, for $y \in [-\tilde{R},\tilde{R}]$, then
\begin{equation}
  |\tilde{f}'(y)| \leq 2M_{1,m}|p_{\ell+1}'(M_{1,m}y)| + 4\pi D R_w \leq 5\pi DR_w.
\end{equation}
In the first inequality, we use Assumption~\ref{assum:bounded-1} so that $p(z)\leq 1$ and $j \leq D$.
In the second inequality, we use Assumption~\ref{assum:deriv} so that $|p_{\ell+1}'(M_{1,m}y)| \leq \pi DR_w/(2M_{1,m})$ since we used discretization parameter $M_{1,m}$.
Also, note that this applies because we chose our truncation parameter as $R = \tilde{R}M_{1,m}$ so that $M_{1,m}y \in [-\tilde{R}M_{1,m}, \tilde{R}M_{1,m}] = [-R,R]$.
Then, $L' \leq 5\pi DR_w$.
Plugging this back in,
\begin{align}
  \frac{1}{G_{\ell + 1}} \int_{[-\tilde{R}, \tilde{R}]^{\ell + 1}} p^2_T(M_{1,m} x) f(x)\,dx &\leq \frac{1}{M_{1,m}^{\ell+1} G_{\ell+1}} \sum_{x_1,\dots, x_{\ell+1}=-\tilde{R}M_{1,m}}^{\tilde{R}M_{1,m} -1} p^2_T(x) f\left(\frac{x}{M_{1,m}}\right)\\
  &+ \frac{5\pi DR_w}{M_{1,m}}\frac{\tilde{R}}{\int_{-\tilde{R}}^{+\tilde{R}} p_{\ell+1}^2(M_{1,m}z)\,dz} \frac{1}{M_{1,m}^\ell G_\ell} \sum_{x_1,\dots, x_\ell=-\tilde{R}M_{1,m}}^{\tilde{R}M_{1,m} -1} p^2_T(x)\\
  &+ \frac{6805}{6804}\frac{10\pi \ell DR_w}{3M_{1,m}}.
\end{align}
We previously showed that $\tilde{R}/G_1 \leq 3/5$ (see around \Cref{eq:g1-lower-non-unif}).
By the same argument here, then we can bound
\begin{align}
  \frac{1}{G_{\ell + 1}} \int_{[-\tilde{R}, \tilde{R}]^{\ell + 1}} p^2_T(M_{1,m}x) f(x)\,dx &\leq \frac{1}{M_{1,m}^{\ell+1} G_{\ell+1}} \sum_{x_1,\dots, x_{\ell+1}=-\tilde{R}M_{1,m}}^{\tilde{R}M_{1,m} -1} p^2_T(x) f\left(\frac{x}{M_{1,m}}\right)\\
  &+ \frac{3\pi DR_w}{M_{1,m}} \frac{1}{M_{1,m}^\ell G_\ell} \sum_{x_1,\dots, x_\ell=-\tilde{R}M_{1,m}}^{\tilde{R}M_{1,m} -1} p^2_T(x) + \frac{6805}{6804}\frac{10\pi \ell DR_w}{3M_{1,m}}.
\end{align}
Thus, it is clear to that to complete our argument, we need to show that
\begin{equation}
  \frac{3\pi DR_w}{M_{1,m}} \frac{1}{M_{1,m}^\ell G_\ell} \sum_{x_1,\dots, x_\ell=-\tilde{R}M_{1,m}}^{\tilde{R}M_{1,m} -1} p^2_T\left(x\right) \leq \frac{6805}{6804} \frac{10\pi DR_w}{3M_{1,m}}.
\end{equation}
To see this, first note that
\begin{equation}
  \label{eq:6879-bound-non-unif}
  M_{1,m}^\ell G_\ell \geq M_{1,m}^\ell \prod_{i=1}^\ell \left(\frac{9\tilde{R}}{5}\right) = \left(\frac{9}{5} M_{1,m} \tilde{R}\right)^\ell \geq \left(\frac{9}{5} \cdot 54 \cdot 70\right)^\ell \geq 6804.
\end{equation}
In the first inequality, we use the same argument as \Cref{eq:g1-lower-non-unif}, which relies on Assumption~\ref{assum:pointwise-close}.
In the second inequality, we use that $M_{1,m} \geq M_1 \geq 70\pi d^2 D^3 R_w \geq 70\pi R_w$ and $\tilde{R} \geq 54D^2\sqrt{d}/(\pi R_w) \geq 54/(\pi R_w)$.
In the last inequality, we use that $\ell \geq 1$ and simplify.

Suppose for now that $p^2$ has at most one critical point at $a \in (-\tilde{R}M_{1,m}, \tilde{R}M_{1,m})^\ell$.
Without loss of generality, since $p^2$ is even by Assumption~\ref{assum:even}, then we can assume that the critical point occurs at $a = 0$.
Also suppose without loss of generality that $p^2$ is nondecreasing for $x \leq 0$ and nonincreasing for $x \geq 0$.
The argument is the same for other cases.
By the above argument, we have
\begin{equation}
  \frac{1}{M_{1,m}^\ell G_\ell} = \frac{M_{1,m}^\ell G_\ell + 1}{M_{1,m}^\ell G_\ell} \cdot \frac{1}{M_{1,m}^\ell G_\ell + 1} \leq \frac{6805}{6804}\frac{1}{M_{1,m}^\ell G_\ell + 1}.
\end{equation}
Moreover, by standard results bounding sums in terms of integrals for monotone functions,
\begin{align}
  \tilde{G}_\ell &= \sum_{x_1,\dots, x_\ell =-\tilde{R}M_{1,m}}^{\tilde{R}M_{1,m}-1} p^2_1(x_1)\cdots p_\ell^2(x_\ell)\\
  &\leq \sum_{x_1,\dots, x_\ell =-\tilde{R}M_{1,m}}^{-1} p_1^2(x_1)\cdots p_\ell^2(x_\ell) + \sum_{x_1,\dots, x_\ell =1}^{\tilde{R}M_{1,m}} p_1^2(x_1)\cdots p_\ell^2(x_\ell) + 1\\
  &\leq \int_{[-\tilde{R}M_{1,m}, 0]^\ell} p_1^2(x_1)\cdots p_\ell^2(x_\ell)\,dx + \int_{[0,\tilde{R}M_{1,m}]^d} p_1^2(x_1)\cdots p_\ell^2(x_\ell)\,dx + 1\\
  &= \prod_{i=1}^\ell \left(\int_{-\tilde{R}M_{1,m}}^{+\tilde{R}M_{1,m}} p_i^2(x_i)\,dx_i\right) + 1\\
  &= M_{1,m}^\ell G_{\ell} + 1,
\end{align}
where in the second line, we use Assumption~\ref{assum:bounded-1} that $p_j \leq 1$.
In the last line, we use a change of variables.
Combining this with the above, we have
\begin{equation}
  \label{eq:G-tilde-upper-non-unif}
  \frac{1}{M_{1,m}^\ell G_\ell} \leq \frac{6805}{6804}\frac{1}{M_{1,m}^\ell G_\ell + 1} \leq \frac{6805}{6804}\frac{1}{\tilde{G}_\ell}.
\end{equation}
Earlier, we considered the case when $p^2$ has at most one critical point.
If we instead consider $p^2$ with a constant number of critical points, as in Assumption~\ref{assum:crit-points}, the above argument only changes the constant factor $6805/6804$.
We carry the factor of $6805/6804$ through the analysis, but changing this only affects some of the constants in the overall verification procedure and not the sample complexity.

Putting everything together,
\begin{align}
  \frac{3\pi DR_w}{M_{1,m}} \frac{1}{M_{1,m}^\ell G_\ell} \sum_{x_1,\dots, x_\ell=-\tilde{R}M_{1,m}}^{\tilde{R}M_{1,m} -1} p^2_T\left(x\right)&\leq \frac{6805}{6804}\frac{3\pi DR_w}{M_{1,m}} \frac{1}{\tilde{G}_\ell} \sum_{x_1,\dots, x_\ell=-\tilde{R}M_{1,m}}^{\tilde{R}M_{1,m} -1} p^2_T\left(x\right)\\
  &= \frac{6805}{6804}\frac{3\pi DR_w}{M_{1,m}} \frac{1}{\sum_{x_k=-\tilde{R}M_{1,m}}^{\tilde{R}M_{1,m}-1} p_k^2(x_k)}\sum_{x_k=-\tilde{R}M_{1,m}}^{\tilde{R}M_{1,m}-1} p_k(x_k)p_k(x_k + T)\\
  &\leq \frac{6805}{6804}\frac{3\pi DR_w}{M_{1,m}}\frac{1}{\sum_{x_k=-\tilde{R}M_{1,m}}^{\tilde{R}M_{1,m}-1} p_k^2(x_k)} (2\tilde{R}M_{1,m})\\
  &\leq \frac{6805}{6804}\frac{3\pi DR_w}{M_{1,m}}\frac{5}{9\tilde{R}M_{1,m}}\cdot 2\tilde{R}M_{1,m}\\
  &= \frac{6805}{6804}\frac{10\pi DR_w}{3M_{1,m}},
\end{align}
as required.
In the first line, we use \Cref{eq:G-tilde-upper-non-unif}.
In the third line, we use Assumption~\ref{assum:bounded-1} that $p_k \leq 1$.
In the fourth line, we use Assumption~\ref{assum:pointwise-close}.
In particular, by Assumption~\ref{assum:pointwise-close}, we have
\begin{equation}
  \sum_{x_k=-\tilde{R}M_{1,m}}^{\tilde{R}M_{1,m}-1}p_k^2(x_k) - 2\tilde{R}M_{1,m} = \sum_{x_k=-\tilde{R}M_{1,m}}^{\tilde{R}M_{1,m}-1}(p_k^2(x_k) - 1) \geq -\frac{\tilde{R}}{5}.
\end{equation}
Thus, this implies that $\sum_{x_k=-\tilde{R}M_{1,m}}^{\tilde{R}M_{1,m}-1} p_k^2(x_k) \geq 2\tilde{R} - \tilde{R}/5 = 9\tilde{R}/5$.
Note that we assumed throughout this analysis that $\ell + 1 > k$.
If $\ell + 1 = k$, the only part affected is when we bound $L'$, which would instead be a bound on the derivative of $\tilde{f}(y) = p_{\ell+1}(M_{1,m}y) p_{\ell+1}(M_{1,m}y + T) f(x_1/M_{1,m},\dots, x_\ell /M_{1,m}, y)$.
The derivative of $\tilde{f}(y)$ now has a term depending on $|T|$, which can be bounded again using Assumption~\ref{assum:deriv}, as we did in the base case, resulting in the same bound $L' \leq 5\pi D R_w$.
One can do the same argument for the lower bound, so this concludes the proof that \Cref{eq:induct-non-unif} holds for $\ell + 1$.

To complete the induction, one should also show that \Cref{eq:induct2-non-unif} holds for $\ell + 1$.
This follows by the same argument as above, and we refer to \Cref{claim:sum-to-int} for a sketch of how the argument is modified.
This completes the proof.
\end{proof}

The same result can be shown for the cross terms $\cos(2\pi j x^\intercal w^\star/M_{1,m})\cos(2\pi j' x^\intercal w^\star)$ and\\$\cos(2\pi j x^\intercal w^\star/M_{1,m}) \sin(2\pi j' x^\intercal w^\star/M_{1,m})$ by the same argument.
This is clear because these terms have the same bound on their gradients.

We can also bound the discretization error $\epsilon_d$.
Note that this discretization error is defined as
\begin{align}
  \label{eq:eps-d-non-unif}
  \epsilon_d \triangleq\frac{1}{\tilde{G}_d M_{2,m}^2} \sum_{x_1,\dots, x_d=-\tilde{R}M_{1,m}}^{\tilde{R}M_{1,m}-1} &\sum_{j,j'=1}^D p_T^2\left(x\right)\beta_j^\star \beta_{j'}^\star \left(\cos\left(\frac{2\pi j x^\intercal w^\star}{M_{1,m}}\right)\cos\left(\frac{2\pi j'(x + Te_k)^\intercal w^\star}{M_{1,m}}\right)\right.\\
  &- \left.\left\lfloor \cos\left(\frac{2\pi j x^\intercal w^\star}{M_{1,m}}\right) \right\rfloor_{M_{2,m}} \left\lfloor \cos\left(\frac{2\pi j'(x+Te_k)^\intercal w^\star}{M_{1,m}}\right) \right\rfloor_{M_{2,m}}\right).
\end{align}

\begin{claim}[Discretization error; Non-uniform case]
  \label{claim:eps-d-non-unif}
  For $m \in \{1,\dots, D\}$, consider parameters $M_{1,m}, M_{2,m}$ as defined above.
  Also, consider a parameter $\tilde{R}$ defined above.
  Then, we can bound the discretization error $\epsilon_d$ defined in \Cref{eq:eps-d-non-unif} as
  \begin{equation}
    |\epsilon_d| \leq \frac{3}{M_{2,m}^3}.
  \end{equation}
\end{claim}

\begin{proof}[Proof of \Cref{claim:eps-d-non-unif}]
This follows by a simple calculation and is similar to \Cref{claim:eps-d}.
Following the same steps as the proof of \Cref{claim:eps-d}, we can arrive at
\begin{align}
  |\epsilon_d| &\leq \frac{1}{\tilde{G}_d M_{2,m}^2} \sum_{x_1,\dots, x_d=-\tilde{R}M_{1,m}}^{\tilde{R}M_{1,m} - 1}\sum_{j,j'=1}^D p_T^2\left(x\right) |\beta_j^\star| |\beta_{j'}^\star|\frac{2}{M_{2,m}}\\
  &= \frac{2}{\tilde{G}_d M_{2,m}^3} \sum_{x_1,\dots, x_d=-\tilde{R}M_{1,m}}^{\tilde{R}M_{1,m} - 1} p_T^2\left(x\right),
\end{align}
where in the second line, we use $\norm{\beta^\star}_1 = 1$.
We can simplify this further using the definition of $p^2_T(x)$:
\begin{align}
  |\epsilon_d| &\leq \frac{2}{M_{2,m}^3} \frac{1}{\tilde{G}_d} \left(\sum_{x_k = -\tilde{R}M_{1,m}}^{\tilde{R}M_{1,m}-1} p_k(x_k) p_k(x_k + T)\right) \prod_{\substack{i=1\\i\neq k}}^d \left(\sum_{x_i = -\tilde{R}M_{1,m}}^{\tilde{R}M_{1,m}-1} p_i^2(x_i)\right)\\
  &= \frac{2}{M_{2,m}^3} \frac{1}{\sum_{x_k=-\tilde{R}M_{1,m}}^{\tilde{R}M_{1,m} - 1} p_k^2(x_k)} \sum_{x_k = -\tilde{R}M_{1,m}}^{\tilde{R}M_{1,m}-1} p_k(x_k) p_k(x_k+T)\\
  &\leq \frac{2}{M_{2,m}^3}\frac{1}{\sum_{x_k=-\tilde{R}M_{1,m}}^{\tilde{R}M_{1,m} - 1} p_k^2(x_k)} (2\tilde{R}M_{1,m})\\
  &\leq \frac{2}{M_{2,m}^3} \frac{5}{9\tilde{R}M_{1,m}}\cdot 2\tilde{R}M_{1,m}\\
  &\leq \frac{3}{M_{2,m}^3}.
\end{align}
In the third line, we use Assumption~\ref{assum:bounded-1} that $p_k \leq 1$.
In the fourth line, we use Assumption~\ref{assum:pointwise-close}.
In particular, by Assumption~\ref{assum:pointwise-close}, we have
\begin{equation}
  \sum_{x_k=-\tilde{R}M_{1,m}}^{\tilde{R}M_{1,m}-1}p_k^2(x_k) - 2\tilde{R}M_{1,m} = \sum_{x_k=-\tilde{R}M_{1,m}}^{\tilde{R}M_{1,m}-1}(p_k^2(x_k) - 1) \geq -\frac{\tilde{R}}{5}.
\end{equation}
Thus, this implies that $\sum_{x_k=-\tilde{R}M_{1,m}}^{\tilde{R}M_{1,m}-1} p_k^2(x_k) \geq 2\tilde{R} - \tilde{R}/5 = 9\tilde{R}/5$.
\end{proof}

With this, we can finally move on to show that the conditions checked in Steps 3 and 4 of \Cref{alg:verification-non-unif} are satisfied if and only if $|T - \ell M_1/w_k^\star| \leq 1$.
To do so, we use \Cref{claim:sum-to-int-non-unif} and \Cref{claim:eps-d-non-unif} in \Cref{eq:h-o-expval-non-unif,eq:h-o-expval2-non-unif} and leverage integral bounds from Appendix~\ref{sec:int-bounds-non-unif}.
The following two claims show this for each direction of the if and only if.

\begin{claim}[Correctness of Step 3 in \Cref{alg:verification-non-unif}]
  \label{claim:ver-if-non-unif}
  Consider parameters $M_1, M_2, R$ defined above and the observable $O_{k,1}$ defined in \Cref{eq:ov}.
  Let $\alpha_1$ denote the result of querying the QSQ oracle with observable $O_{k,1}$ with discretization parameters $M_1, M_2$, truncation parameter $R = \tilde{R}M_1$, and tolerance $\tau \leq \frac{1}{M_2^2}\left(\frac{5}{42D} - \frac{3}{2M_2}\right)$.
  If $|T - \ell M_1/w_k^\star| \leq 1$ for some integer $\ell$, then
  \begin{equation}
    \alpha_1 \geq \frac{1}{M_2^2}\left(\frac{5}{14D} - \frac{9}{2M_2}\right).
  \end{equation}
\end{claim}

\begin{claim}[Correctness of Step 4 in \Cref{alg:verification-non-unif}]
  \label{claim:ver-only-if-non-unif}
  For $m \in \{1,\dots, D\}$, consider parameters $M_{1,m}, M_{2,m}, R$ defined above and the observables $O_{k,m}$ defined in \Cref{eq:ov}.
  Let $\alpha_m$ denote the result of querying the QSQ oracle with observable $O_{k,m}$ with discretization parameters $M_{1,m}, M_{2,m}$, truncation parameter $R = \tilde{R}M_{1,m}$, and tolerance $\tau \leq \frac{1}{2D^2M_2^2}\left(\frac{2}{9} - \frac{1}{8}\left(\frac{2\pi R_w}{M_1}\right)^2 + \frac{3D^2}{M_2}\right)$.
  If $|T - \ell M_1 / w_k^\star|$ is not less than $1$ for any integer $\ell$, then
  \begin{equation}
    \sum_{m=1}^D \alpha_m \leq \frac{1}{M_2^2}\left(\frac{13}{25}D + \frac{1}{2D}\left(\frac{2}{9} - \frac{1}{8}\left(\frac{2\pi R_w}{M_1}\right)^2 +\frac{3D^2}{M_2}\right)\right).
  \end{equation}
\end{claim}

It suffices to prove these two claims to finish the proof.
Our starting point for both proofs is \Cref{eq:h-o-expval-non-unif,eq:h-o-expval2-non-unif}.

\begin{proof}[Proof of \Cref{claim:ver-if-non-unif}]
We want to lower bound $\expval{O_{k,1}}{h_{w^\star, M_1, M_2}}$.
As in \Cref{eq:G-tilde-upper-non-unif}, one can show that $\tilde{G}_d \leq M_1^d G_d + 1$.
Recall that this uses Assumption~\ref{assum:crit-points}.
Using this along with \Cref{eq:6879-bound-non-unif}, we have
\begin{equation}
  \frac{1}{\tilde{G}_d} \geq \frac{1}{M_1^d G_d + 1} = \frac{M_1^dG_d}{M_1^d G_d + 1} \frac{1}{M_1^d G_d} \geq \frac{6804}{6805}\frac{1}{M_1^d G_d}.
\end{equation}
Plugging this into \Cref{eq:h-o-expval-non-unif,eq:h-o-expval2-non-unif}, we have
\begin{align}
  &\expval{O_{k,1}}{h_{w^\star, M_{1}, M_{2}}}\\
  &\begin{aligned}
    \geq \frac{6804}{6805}\frac{1}{M_{2}^2} \frac{1}{M_1^d G_d} \sum_{x_1,\dots, x_d=-\tilde{R}M_{1}}^{\tilde{R}M_{1}-1} \sum_{j=1}^D p^2_T\left(x\right) &(\beta_j^\star)^2 \left(\cos^2\left(\frac{2\pi j x^\intercal w^\star}{M_{1}}\right)\cos\left(\frac{2\pi j T w_k^\star}{M_{1}}\right)\right.\\
    &\left.- \cos\left(\frac{2\pi j x^\intercal w^\star}{M_{1}}\right)\sin\left(\frac{2\pi j x^\intercal w^\star}{M_{1}}\right)\sin\left(\frac{2\pi j T w_k^\star}{M_{1}}\right) \right)
    \end{aligned}\\
  &\begin{aligned}
    +\frac{6804}{6805}\frac{1}{M_{2}^2} \frac{1}{M_1^d G_d} \sum_{x_1,\dots, x_d=-\tilde{R}M_{1}}^{\tilde{R}M_{1}-1} \sum_{\substack{j,j'=1\\j\neq j'}}^D p^2_T&\left(x\right) \beta_j^\star \beta_{j'}^\star\left(\cos\left(\frac{2\pi j x^\intercal w^\star}{M_{1}}\right)\cos\left(\frac{2\pi j' x^\intercal w^\star}{M_{1}}\right)\cos\left(\frac{2\pi j' Tw_k^\star}{M_{1}}\right)\right.\\
    &\left.- \cos\left(\frac{2\pi j x^\intercal w^\star}{M_{1}}\right)\sin\left(\frac{2\pi j' x^\intercal w^\star}{M_{1}}\right)\sin\left(\frac{2\pi j' T w_k^\star}{M_{1}}\right)\right) + \epsilon_d.
  \end{aligned}
\end{align}
Applying \Cref{claim:sum-to-int-non-unif}, then
\begin{align}
  &\expval{O_{k,1}}{h_{w^\star, M_{1}, M_{2}}}\\
  &\begin{aligned}
    \geq \frac{6804}{6805}\frac{1}{M_{2}^2} \frac{1}{G_d} \sum_{j=1}^D (\beta_j^\star)^2 \int_{[-\tilde{R},\tilde{R}]^d}p^2_T(M_{1,m}x) &\left(\cos^2\left(2\pi j x^\intercal w^\star\right)\cos\left(\frac{2\pi j T w_k^\star}{M_1}\right)\right. \\
    &\left.- \cos\left(2\pi j x^\intercal w^\star\right)\sin\left(2\pi j x^\intercal w^\star\right)\sin\left(\frac{2\pi j T w_k^\star}{M_{1}}\right) \right)\,dx
    \end{aligned}\\
  &\begin{aligned}
    +\frac{6804}{6805}\frac{1}{M_{2}^2} \frac{1}{G_d} \sum_{\substack{j,j'=1\\j\neq j'}}^D \beta_j^\star \beta_{j'}^\star \int_{[-\tilde{R},\tilde{R}]^d}&p^2_T\left(M_{1,m}x\right) \left(\cos\left(2\pi j x^\intercal w^\star\right)\cos\left(2\pi j' x^\intercal w^\star\right)\cos\left(\frac{2\pi j' Tw_k^\star}{M_{1}}\right)\right.\\
    &\left.- \cos\left(2\pi j x^\intercal w^\star\right)\sin\left(2\pi j' x^\intercal w^\star\right)\sin\left(\frac{2\pi j' T w_k^\star}{M_{1}}\right)\right) + \epsilon_d + \frac{6804}{6805}\frac{4}{M_2^2}\epsilon_{\mathrm{int}}.
  \end{aligned}
\end{align}
We can simplify this using the fact that an integral of an odd function, e.g., $\sin(x) \cos(x)$, over an even interval is zero.
This also uses Assumption~\ref{assum:even} that $p^2$ is an even function.
\begin{align}
  &\expval{O_{k,1}}{h_{w^\star, M_{1}, M_{2}}}\\
  &\begin{aligned}
    \geq \frac{6804}{6805}\frac{1}{M_{2}^2} \sum_{j=1}^D (\beta_j^\star)^2 \cos\left(\frac{2\pi j T w_k^\star}{M_1}\right) \frac{1}{G_d}\int_{[-\tilde{R},\tilde{R}]^d}p^2_T(M_{1,m}x) &\left(\cos^2\left(2\pi j x^\intercal w^\star\right)\right)\,dx + \epsilon_d + \frac{6804}{6805}\frac{4}{M_2^2}\epsilon_{\mathrm{int}}
    \end{aligned}\\
  &\begin{aligned}
    +\frac{6804}{6805}\frac{1}{M_{2}^2} \sum_{\substack{j,j'=1\\j\neq j'}}^D \beta_j^\star \beta_{j'}^\star &\left(\cos\left(\frac{2\pi j' T w_k^\star}{M_1}\right) \frac{1}{G_d} \int_{[-\tilde{R},\tilde{R}]^d} p_T^2(M_{1,m}x)\cos(2\pi j x^\intercal w^\star)\cos(2\pi j' x^\intercal w^\star)\,dx\right.\\
    &\left. - \sin\left(\frac{2\pi j' T w_k^\star}{M_1}\right)\frac{1}{G_d} \int_{[-\tilde{R},\tilde{R}]^d} p_T^2(M_{1,m}x)\cos(2\pi j x^\intercal w^\star)\sin(2\pi j' x^\intercal w^\star)\,dx\right).
  \end{aligned}
\end{align}
Using \Cref{coro:integral-non-unif,coro:integral2-wstar-non-unif,coro:integral2-wstar-sin-non-unif},
\begin{align}
  \expval{O_{k,1}}{h_{w^\star, M_{1}, M_{2}}} &\geq \frac{6804}{6805}\frac{1}{M_2^2} \sum_{j=1}^D (\beta_j^\star)^2 \left(\frac{1}{2} - \frac{3\sqrt{d}}{16\pi R_w \tilde{R}}\right)\cos\left(\frac{2\pi j T w_k^\star}{M_1}\right)\\
  &- \frac{6804}{6805}\frac{1}{M_2^2}\sum_{\substack{j,j'=1\\j\neq j'}}^D \beta_j^\star \beta_{j'}^\star \left(\frac{3\sqrt{d}}{2\pi R_w \tilde{R}}\right) + \epsilon_d +\frac{6804}{6805}\frac{4}{M_2^2}\epsilon_{\mathrm{int}}.
\end{align}
Note that \Cref{coro:integral-non-unif,coro:integral2-wstar-non-unif,coro:integral2-wstar-sin-non-unif} apply when integrating with respect to the non-uniform density, which we don't quite have here.
However, using \Cref{coro:complex-exp-wstar-verif-non-unif} instead of \Cref{coro:complex-exp-wstar-non-unif} in their proofs, we see that the results still hold for integrating with respect to $p^2_T$.
Using our choice of $\tilde{R} \geq \max(39\sqrt{d}/(4\pi R_w),54D^2\sqrt{d}/(\pi R_w))$, we have
\begin{equation}
  \expval{O_{k,1}}{h_{w^\star, M_{1}, M_{2}}} \geq \frac{1}{M_2^2}\left(\frac{6804}{6805}\frac{25}{52} \sum_{j=1}^D (\beta_j^\star)^2 \cos\left(\frac{2\pi j T w_k^\star}{M_1}\right) - \frac{6804}{6805} \frac{1}{36D^2} - \frac{3}{M_2} - \frac{4}{21D^2}\right)
\end{equation}
We also use that $\norm{\beta^\star}_2^2 \leq 1$ since $\norm{\beta^\star}_1 = 1$, $|\epsilon_d| \leq 3/M_2^3$ by \Cref{claim:eps-d-non-unif}, and $|\epsilon_{\mathrm{int}}| \leq 6805/(21 \cdot 6804 D^2)$ by \Cref{claim:sum-to-int-non-unif}.
We can lower bound the summation term by \Cref{eq:cos-T-bound} to obtain
\begin{align}
  \expval{O_{k,1}}{h_{w^\star, M_{1}, M_{2}}} &\geq \frac{1}{M_2^2}\left(\frac{6804}{6805} \frac{25}{52} \frac{2449}{2550D} - \frac{6879}{6880} \frac{1}{36D^2} - \frac{3}{M_2} - \frac{4}{21D^2}\right)\\
  &\geq \frac{1}{M_2^2}\left(\frac{5}{21D} - \frac{3}{M_2}\right).
\end{align}
Thus, we see that if $|T - \ell M_1/w_k^\star| \leq 1$, then this lower bound on the expectation value must be satisfied.
Finally, our choice of $\tau$ and the condition on $\alpha_1$ guarantees that this the lower bound on the expectation value also holds, as required.
\end{proof}

\begin{proof}[Proof of \Cref{claim:ver-only-if-non-unif}]
This time, we want to upper bound $\expval{O_{k,m}}{h_{w^\star, M_{1,m}, M_{2,m}}}$ for any $m \in \{1,\dots, D\}$.
Similarly to \Cref{eq:G-tilde-upper-non-unif}, we can show that $\tilde{G}_d \geq M_{1,m}^d G_d$.

Suppose for now that $p^2$ has at most one critical point at $a \in (-\tilde{R}M_{1,m}, \tilde{R}M_{1,m})^d$.
Without loss of generality, since $p^2$ is even by Assumption~\ref{assum:even}, then we can assume that the critical point occurs at $a = 0$.
Also, suppose without loss of generality that $p^2$ is nondecreasing for $x \leq 0$ and nonincreasing for $x \geq 0$.
The argument is the same for other cases.
By \Cref{eq:6879-bound-non-unif}, we have
\begin{equation}
  \frac{1}{M_{1,m}^d G_d} = \frac{M_{1,m}^d G_d -1}{M_{1,m}^d G_d} \cdot \frac{1}{M_{1,m}^d G_d - 1} \geq \frac{6803}{6804} \frac{1}{M_{1,m}^d G_d - 1}.
\end{equation}
Moreover, by standard results bounding sums in terms of integrals for monotone functions, we have
\begin{align}
  \tilde{G}_d &= \sum_{x_1,\dots, x_d = -\tilde{R}M_{1,m}}^{\tilde{R}M_{1,m} - 1} p_1^2(x_1)\cdots p_d^2(x_d)\\
  &= \sum_{x_1,\dots, x_d=-\tilde{R}M_{1,m}}^{0} p_1^2(x_1)\cdots p_d^2(x_d) + \sum_{x_1,\dots, x_d = 0}^{\tilde{R}M_{1,m}-1} p_1^2(x_1)\cdots p_d^2(x_d) - p^2(0)\\
  &\geq \int_{[-\tilde{R}M_{1,m} - 1,0]^d} p^2(x)\,dx + \int_{[0, \tilde{R}M_{1,m}]^d} p^2(x)\,dx - 1\\
  &\geq \int_{[-\tilde{R}M_{1,m}, \tilde{R}M_{1,m}]^d} p^2(x)\,dx - 1\\
  &= M_{1,m}^d G_d - 1.
\end{align}
In the third line, we use Assumption~\ref{assum:bounded-1} tahat $p_j^2 \leq 1$.
In the last line, we use a change of variables.
Combining this with the above, we have
\begin{equation}
  \label{eq:G-tilde-lower-non-unif}
  \frac{1}{M_{1,m}^d G_d} \geq \frac{6803}{6804}\frac{1}{M_{1,m}^d G_d - 1} \geq \frac{6803}{6804}\frac{1}{\tilde{G}_d}.
\end{equation}
Earlier, we considered the case when $p^2$ has at most one critical point.
If we instead consider $p^2$ with a constant number of critical points, as in Assumption~\ref{assum:crit-points}, the above argument only changes the constant factor $6803/6804$.
We carry the factor of $6803/6804$ through the analysis, but changing this only affects some of the constants in the overall verification procedure and not the sample complexity.

Using this along with \Cref{claim:sum-to-int-non-unif}, plugging into \Cref{eq:h-o-expval-non-unif,eq:h-o-expval2-non-unif}, we have
\begin{align}
  &\expval{O_{k,m}}{h_{w^\star, M_{1,m}, M_{2,m}}}\\
  &\begin{aligned}
    \leq \frac{6804}{6803}\frac{1}{M_{2,m}^2} \frac{1}{G_d} \sum_{j=1}^D (\beta_j^\star)^2 \int_{[-\tilde{R},\tilde{R}]^d}p^2_T(M_{1,m}x) &\left(\cos^2\left(2\pi j x^\intercal w^\star\right)\cos\left(\frac{2\pi j T w_k^\star}{M_{1,m}}\right)\right. \\
    &\left.- \cos\left(2\pi j x^\intercal w^\star\right)\sin\left(2\pi j x^\intercal w^\star\right)\sin\left(\frac{2\pi j T w_k^\star}{M_{1,m}}\right) \right)\,dx
    \end{aligned}\\
  &\begin{aligned}
    +\frac{6804}{6803}\frac{1}{M_{2,m}^2} \frac{1}{G_d} \sum_{\substack{j,j'=1\\j\neq j'}}^D \beta_j^\star \beta_{j'}^\star&\int_{[-\tilde{R},\tilde{R}]^d}p^2_T\left(M_{1,m}x\right) \left(\cos\left(2\pi j x^\intercal w^\star\right)\cos\left(2\pi j' x^\intercal w^\star\right)\cos\left(\frac{2\pi j' Tw_k^\star}{M_{1,m}}\right)\right.\\
    &\left.- \cos\left(2\pi j x^\intercal w^\star\right)\sin\left(2\pi j' x^\intercal w^\star\right)\sin\left(\frac{2\pi j' T w_k^\star}{M_{1,m}}\right)\right) + \epsilon_d + \frac{6804}{6803}\frac{4}{M_{2,m}^2}\epsilon_{\mathrm{int}}.
  \end{aligned}
\end{align}
Now, we use that an integral of an odd function, e.g., $\sin(x)\cos(x)$, over an even interval is zero (also using Assumption~\ref{assum:even} that $p^2$ is even).
We also use \Cref{coro:integral-upper-non-unif,coro:integral2-wstar-non-unif,coro:integral2-wstar-sin-non-unif} so that we have
\begin{align}
  \expval{O_{k,m}}{h_{w^\star, M_{1,m}, M_{2,m}}} &\leq \frac{6804}{6803}\frac{1}{M_{2,m}^2}\sum_{j=1}^D (\beta_j^\star)^2 \left(\frac{1}{2} + \frac{3\sqrt{d}}{16\pi R_w \tilde{R}}\right)\cos\left(\frac{2\pi j Tw_k^\star}{M_{1,m}}\right)\\
  &+ \frac{6804}{6803}\frac{1}{M_{2,m}^2}\sum_{\substack{j,j'=1\\j\neq j'}}^D \beta_j^\star \beta_{j'}^\star \left(\frac{3\sqrt{d}}{2\pi R_w \tilde{R}}\right) + \epsilon_d + \frac{6804}{6803}\frac{4}{M_{2,m}^2}\epsilon_{\mathrm{int}}.
\end{align}
Note that \Cref{coro:integral-upper-non-unif,coro:integral2-wstar-non-unif,coro:integral2-wstar-sin-non-unif} apply when integrating with respect to the Gaussian density.
Using \Cref{coro:complex-exp-wstar-verif-non-unif} instead of \Cref{coro:complex-exp-wstar-non-unif} in their proofs, we see the results still hold when integrating with respect to $p^2_T$.
Using our choice of $\tilde{R} \geq \max(39\sqrt{d}/(4\pi R_w), 54D^2\sqrt{d}/(\pi R_w))$, we have
\begin{align}
  &\expval{O_{k,m}}{h_{w^\star, M_{1,m}, M_{2,m}}}\\
  &\leq \frac{1}{M_{2,m}^2}\left(\frac{6804}{6803}\frac{27}{52}\sum_{j=1}^D (\beta_j^\star)^2 \cos\left(\frac{2\pi j T w_k^\star}{M_{1,m}}\right) + \frac{6804}{6803}\frac{1}{36D^2} + \frac{3}{M_{2,m}} + \frac{6805}{6803}\frac{4}{21D^2}\right)\\
  &\leq \frac{1}{M_{2,m}^2}\left(\frac{6804}{6803}\frac{27}{52}(\beta_m^\star)^2 \cos\left(\frac{2\pi T w_k^\star}{M_1}\right) + \frac{6804}{6803}\frac{27}{52}\sum_{\substack{j=1\\j\neq m}}^D (\beta_j^\star)^2 + \frac{6804}{6803}\frac{1}{36D^2} +\frac{3}{M_{2,m}} + \frac{6805}{6803}\frac{4}{21D^2}\right)
\end{align}
In the first line, we also use $\norm{\beta^\star}_2^2 \leq 1$ since $\norm{\beta^\star}_1 = 1$.
In addition, we use $|\epsilon_d| \leq 3/M_{2,m}^3$ by \Cref{claim:eps-d-non-unif}, and $|\epsilon_{\mathrm{int}}| \leq 6805/(21 \cdot 6804 D^2)$ by \Cref{claim:sum-to-int-non-unif}.
In the second line, we use $M_{1,m} = mM_1$.
We can further bound the cosine term using \Cref{eq:cos-upper}:
\begin{align}
  &\expval{O_{k,m}}{h_{w^\star, M_{1,m}, M_{2,m}}}\\
  &\leq \frac{1}{M_{2,m}^2}\left(\frac{6804}{6803}\frac{27}{52}(\beta_m^\star)^2 \left(1 - \frac{1}{8}\left(\frac{2\pi R_w}{M_1}\right)^2\right) + \frac{6804}{6803}\frac{27}{52}\sum_{\substack{j=1\\j\neq m}}^D (\beta_j^\star)^2 + \frac{6804}{6803}\frac{1}{36D^2} + \frac{3}{M_{2,m}} + \frac{6805}{6803}\frac{4}{21D^2}\right)\\
  &\leq \frac{1}{M_{2,m}^2}\left(\frac{13}{25}\left(1 - \frac{1}{8}\left(\frac{2\pi R_w}{M_1}\right)^2 (\beta_m^\star)^2\right) + \frac{2}{9D^2} + \frac{3}{M_{2,m}}\right).
\end{align}
In the last line, we use that $\norm{\beta^\star}_2^2 \leq 1$ since $\norm{\beta^\star}_1 = 1$.
Summing over all $m \in \{1,\dots, D\}$, then we have
\begin{align}
  \sum_{m=1}^D\expval{O_{k,m}}{h_{w^\star, M_{1,m}, M_{2,m}}} &\leq \sum_{m=1}^D \frac{1}{M_{2,m}^2}\left(\frac{13}{25}\left(1 - \frac{1}{8}\left(\frac{2\pi R_w}{M_1}\right)^2 (\beta_m^\star)^2\right) + \frac{2}{9D^2} + \frac{3}{M_{2,m}}\right)\\
  &\leq \frac{1}{M_2^2}\sum_{m=1}^D \left(\frac{13}{25}\left(1 - \frac{1}{8}\left(\frac{2\pi R_w}{M_1}\right)^2 (\beta_m^\star)^2\right) + \frac{2}{9D^2} + \frac{3}{M_2}\right)\\
  &\leq \frac{1}{M_2^2}\left(\frac{13}{25}D - \frac{1}{8D}\left(\frac{2\pi R_w}{M_1}\right)^2 + \frac{2}{9D} + \frac{3D}{M_2}\right).
\end{align}
In the second line, we use $M_{2,m} = m M_2$ by definition and $m \geq 1$.
In the last line, we use $\norm{\beta^\star}_2^2 \geq 1/D$.
Thus, we see that if $|T - \ell M_1/w_k^\star| \not\leq 1$ for any integer $\ell$, then this upper bound on the sum of expectation values must be satisfied.
Finally, our choice of $\tau$ and the condition on $\sum_{m=1}^D \alpha_m$ guarantees that this upper bound on also holds, as required.
\end{proof}
\end{proof}

Finally, using \Cref{thm:verification-non-unif} and \Cref{thm:non-unif-period}, we can prove \Cref{coro:linear-non-unif}.

\begin{proof}[Proof of \Cref{coro:linear-non-unif}]
Choose the discretization parameter to be $M_1 = \max(70 \pi dD^3 R_w, R_w^2/\epsilon_1)$.
By \Cref{lem:discrete-general-unif}, we know that there exists a discretization $h_{w^\star, M_1, M_2}$ of the target function $g_{w^\star}$ such that $h_{w^\star, M_1, M_2}$ is $(33/35)$-pseudoperiodic with period $S_j = M_1/w^\star_j$ in each component.
Note that $S_j \geq 1$ by our choice of discretization parameter.
Moreover, we know an upper bound on the period $A = M_1d^2/R_w$ by \Cref{eq:sw}.
Finally, we have an efficient verification procedure by \Cref{thm:verification-non-unif}.
Thus, we satisfy all of the conditions of \Cref{thm:non-unif-period}, so applying its result, we can find integers $a_j$ such that $|a_j - S_j| \leq 1$ with probability $\Omega(1/\log^4(M_1d^2/R_w))$.
The rest of the proof then follows in the same way as \Cref{thm:linear-uniform} by our choice of $M_1 \geq R_w^2/\epsilon_1$.
\end{proof}

\subsection{Learning the outer function via gradient methods}
\label{sec:outer-non-unif}

As in Appendix~\ref{sec:outer-uniform}, now that we have an approximation $\hat{w}$ of $w^\star$ such that $\norm{\hat{w} - w^\star}_\infty \leq \epsilon_1$, we want to learn the outer periodic function $\tilde{g}: \mathbb{R} \to [-1,1]$ via classical gradient methods.
Again, this portion of the algorithm is purely classical.
The difference with Appendix~\ref{sec:outer-uniform} is that the density $\varphi^2$ is now not a uniform density.
In particular, we consider a probability distribution $\varphi^2 \propto \prod_{k=1}^d p_k^2$ over $[-R,R]^d$, where $R$ is the truncation parameter.
We also consider that $\varphi^2$ satisfies Assumptions~\ref{assum:fourier-conc}-\ref{assum:crit-points}.
In particular for this part of the algorithm, we only need $\varphi^2$ to satisfy Assumptions~\ref{assum:pointwise-close}-\ref{assum:even}.

Explicitly, we consider a density function
\begin{equation}
  \varphi^2(x) = \frac{1}{\prod_{j=1}^d \left(\int_{-R}^{+R} p_j^2(z)\,dz\right)} \prod_{j=1}^d p_j^2(x_j).
\end{equation}
Recall that our target function is
\begin{equation}
  g_{w^\star}(x) = \tilde{g}(x^\intercal w^\star) = \sum_{j=1}^D \beta_j^\star \cos(2\pi j x^\intercal w^\star),
\end{equation}
and we want to find a good predictor 
\begin{equation}
  f_\beta(x) = \sum_{j=1}^D \beta_j \cos(2\pi j x^\intercal \hat{w}),
\end{equation}
that minimizes the objective function
\begin{align}
  \mathcal{L}_{w^\star}(\beta) &= \mathop{\mathbb{E}}_{x \sim \varphi^2}[(f_\beta(x) - g_{w^\star}(x))^2] = \int\limits_{x\sim \varphi^2}\left(\sum_{j=1}^D \beta_j^\star \cos(2\pi j x^\intercal w^\star) - \sum_{j=1}^D \beta_j \cos(2\pi j x^\intercal \hat{w})\right)^2\,dx,
\end{align}
where $\hat{w}$ is our approximation of $w^\star$ from \Cref{coro:linear-non-unif}.
As in the classical hardness result~\cite{shamir2018distribution}, our algorithm is given access to this loss function and its gradients.
Using this, we design a classical algorithm that can effiicently find a predictor specified by parameters $\hat{\beta}$ such that $\mathcal{L}_{w^\star}(\hat{\beta}) \leq \epsilon$ for a given precision $\epsilon > 0$.

In Appendix~\ref{sec:outer-uniform}, we proved that for an appropriate choice of truncation parameter $R$ and accuracy $\epsilon_1$ such that $\norm{\hat{w} - w^\star}_\infty \leq \epsilon_1$, then we can achieve this small loss (\Cref{thm:g-tilde-uniform}).
In fact, we can achieve the same guarantee for non-uniform distributions.

\begin{theorem}[Learning $\tilde{g}$ Guarantee; Non-Uniform Case]
\label{thm:g-tilde-non-unif}
Let $\varphi^2 \propto \prod_{k=1}^d p_k^2$ be a probability distribution over $[-R, R]^d$ satisfying Assumptions~\ref{assum:pointwise-close}-\ref{assum:even} for $R$ defined shortly
Let $\epsilon > 0$.
Let $w^\star \in \mathbb{R}^d$ be unknown with norm $R_w > 0$, and let $g_{w^\star}:\mathbb{R}^d \to [-1,1]$ be defined as $g_{w^\star}(x) = \tilde{g}(x^\intercal w^\star)$ for $\tilde{g}$ given in \Cref{eq:g-tilde}.
Choose
\begin{equation}
  \label{eq:final-R-bounds-non-unif}
  R = \tilde{\Omega}\left(\max\left(\frac{D^2}{\epsilon}, \frac{D^2\sqrt{d}}{R_w \epsilon}, \frac{D^{5/2}}{\sqrt{\epsilon}}, \frac{D^{3/2}\sqrt{d}}{R_w \sqrt{\epsilon}}\right)\right),
\end{equation}
\begin{equation}
  \label{eq:final-eps1-bounds-non-unif}
  \epsilon_1 = \tilde{\mathcal{O}}\left(\min\left(\frac{\epsilon^3}{D^6 d}, \frac{\epsilon^{3/2}}{D^{13/2}d}, \frac{R_w}{D\sqrt{d}}\right)\right).
\end{equation}
Suppose we have an approximation $\hat{w} \in \mathbb{R}^d$ such that $\norm{\hat{w} - w^\star}_\infty \leq \epsilon_1$.
Then, there exists a classical algorithm with access to the loss function from \Cref{eq:loss} and its derivatives that can efficiently find a parameters $\hat{\beta} \in \mathbb{R}^d$ such that $\mathcal{L}_{w^\star}(\hat{\beta}) \leq \epsilon$.
Moreover, this algorithm requires at most
\begin{equation}
t = \Theta\left(\log\left(\sqrt{\frac{D}{\epsilon}}\right)\right)
\end{equation}
iterations of gradient descent.
\end{theorem}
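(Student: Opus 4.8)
The plan is to mirror \Cref{sec:outer-uniform} essentially line by line, since the only difference from \Cref{thm:g-tilde-uniform} is the input density $\varphi^2 \propto \prod_{j=1}^d p_j^2$. Concretely, I would first re-establish the three structural lemmas of the uniform proof with this density in place of the uniform one: (i) a non-uniform version of \Cref{lem:grad-inform}, showing $\partial \mathcal{L}_{w^\star}/\partial \beta_k$ lies within a small, $R$- and $\epsilon_1$-controlled error of $\beta_k - \beta_k^\star$; (ii) a non-uniform version of the parameter bound \Cref{lem:param-bound}, so the gradient-descent iterates stay in $|\beta_k^{(t)}| < 2$; and (iii) a non-uniform version of the loss bound \Cref{lem:loss-bound}. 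The convergence statements \Cref{lem:grad-converge} and \Cref{coro:grad-steps} are distribution-agnostic — they use only informativeness of the gradients and the initialization $\beta^{(0)} = 0$ — so the iteration count $t = \Theta(\log\sqrt{D/\epsilon})$ with $\epsilon_3 = \sqrt{\epsilon}$ carries over verbatim. Finally I would tune $R$ and $\epsilon_1$ exactly as in the proof of \Cref{thm:g-tilde-uniform}, forcing each residual error term below $\epsilon/6$, to get $\mathcal{L}_{w^\star}(\hat\beta) \le \epsilon$ with the stated scalings, taking $\hat w$ from \Cref{coro:linear-non-unif}.

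The substantive input is the family of weighted integral estimates in \Cref{sec:int-bounds-non-unif}, which play the role that \Cref{lem:integral,lem:integral-upper,lem:integral2,coro:integral,coro:integral-upper,coro:integral2-wstar,coro:integral2-wstar-sin,coro:integral2-wstar-hat} play in the uniform argument; for this part only Assumptions~\ref{assum:pointwise-close}--\ref{assum:even} are needed. Flatness (Assumption~\ref{assum:pointwise-close}) together with boundedness (Assumption~\ref{assum:bounded-1}) confines each per-coordinate normalization $\int_{-R}^{R} p_j^2(z)\,dz$ to $[9R/5, 2R]$, so the normalized integrals of $\cos^2(2\pi j x^\intercal w^\star)$ and of the off-diagonal products $\cos(2\pi j x^\intercal w^\star)\cos(2\pi j' x^\intercal \hat w)$ agree with their uniform counterparts up to terms that vanish as $R \to \infty$; this preserves the quadratic structure $\mathcal{L}_{w^\star}(\beta) = \tfrac12\|\beta - \beta^\star\|_2^2 + (\text{lower order})$ on which the gradient-descent analysis rests. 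Evenness (Assumption~\ref{assum:even}) is used precisely where the uniform proof used symmetry of the box: it kills the odd $\cos\cdot\sin$ cross integrals, so no new error terms of that type appear, and the three lemmas above then go through with only the numerical constants altered.

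I expect the main obstacle to be exactly the claim that the weighted oscillatory integrals still decay in $R$. Writing $p_j^2(x) = 1 - (1 - p_j^2(x))$ peels off the uniform integral, but the naive estimate $\int_{-R}^{R} |1 - p_j^2(x)|\,dx$ is only $O(R)$, which against the $\Theta(R)$ normalization gives a merely constant error — not small enough to push $\mathcal{L}_{w^\star}(\hat\beta)$ below an arbitrarily small $\epsilon$. Obtaining an $o(1)$ remainder here is the real content of \Cref{sec:int-bounds-non-unif}, and is where the product structure and the regularity hypotheses on $p_j^2$ (finite total variation on the truncated domain, via its having only constantly many monotone pieces) must be invoked to bound $\int_{-R}^{R}(1 - p_j^2(x))\cos(2\pi a x)\,dx$ after integration by parts. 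Once those estimates are recorded, everything downstream — the choices $R = \tilde{\Omega}(\max(D^2/\epsilon,\, D^2\sqrt{d}/(R_w\epsilon),\, D^{5/2}/\sqrt{\epsilon},\, D^{3/2}\sqrt{d}/(R_w\sqrt{\epsilon})))$, $\epsilon_1 = \tilde{\mathcal{O}}(\min(\epsilon^3/(D^6 d),\, \epsilon^{3/2}/(D^{13/2}d),\, R_w/(D\sqrt{d})))$, $\epsilon_3 = \sqrt{\epsilon}$, and $t = \Theta(\log\sqrt{D/\epsilon})$ — is identical to the uniform bookkeeping, completing the proof.
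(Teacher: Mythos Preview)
Your high-level plan is the paper's: observe that the proof of \Cref{thm:g-tilde-uniform} depends on $\varphi^2$ only through the integral bounds of \Cref{sec:int-bounds}, replace those by their non-uniform analogues in \Cref{sec:int-bounds-non-unif} (which differ only in constant factors under Assumptions~\ref{assum:pointwise-close}--\ref{assum:even}), and carry over \Cref{lem:grad-inform}, \Cref{lem:param-bound}, \Cref{lem:loss-bound}, and the final parameter tuning verbatim; \Cref{lem:grad-converge} and \Cref{coro:grad-steps} are distribution-free, and the choice $\epsilon_3=\sqrt{\epsilon}$ gives the stated iteration count.

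Where you diverge is your description of the ``main obstacle.'' The paper's \Cref{lem:complex-exp-non-unif} does not use integration by parts or the critical-point structure of Assumption~\ref{assum:crit-points}. It exploits the product form directly: for the $d-1$ non-special coordinates the trivial bound $\bigl|\int e^{i\theta x_\ell}p_\ell^2\,dx_\ell\bigr|\le\int p_\ell^2\,dx_\ell$ cancels exactly against the per-coordinate normalization, and on the single special coordinate $k$ (the one with $|w_k^\star|\ge R_w/\sqrt{d}$) the paper passes from $\bigl|\int e^{i\theta x_k}p_k^2\,dx_k\bigr|$ to $\bigl|\int e^{i\theta x_k}\,dx_k\bigr|$ directly, citing $p_k^2\le 1$ (Assumption~\ref{assum:bounded-1}), then lower-bounds the remaining normalization $\int p_k^2$ via Assumption~\ref{assum:pointwise-close}. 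So the paper never peels off $1-p_j^2$ at all and stays within Assumptions~\ref{assum:pointwise-close}--\ref{assum:bounded-1}; your integration-by-parts route via bounded variation would instead pull in Assumption~\ref{assum:crit-points}, which the theorem statement does not impose. Your route is more robust --- the concern you raise about a non-decaying constant remainder is exactly the delicate point --- but it is not how the paper argues here.
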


The proof of this theorem is simple given what we have already proven in Appendix~\ref{sec:outer-uniform}.
There, notice that the proof only depends on the distribution $\varphi^2$ through \Cref{lem:grad-inform,lem:loss-bound}.
In fact, notice that in these lemmas, their proofs only depend on $\varphi^2$ via the integral bounds in Appendix~\ref{sec:int-bounds}.
Thus, to prove \Cref{thm:g-tilde-non-unif}, we only need to obtain similar integral bounds when $\varphi^2$ is a non-uniform distribution satisfying Assumptions~\ref{assum:pointwise-close}-\ref{assum:even}.
We achieve this in Appendix~\ref{sec:int-bounds-non-unif}.
These integral bounds differ from the uniform case only in constant factors, thus immediately giving the result.

\subsection{Integral bounds}
\label{sec:int-bounds-non-unif}

Similarly to Appendix~\ref{sec:int-bounds}, we need the following technical lemmas for bounding integrals when the integral is taken with respect to a non-uniform distribution instead.
We require that the distribution satisfies Assumptions~\ref{assum:pointwise-close}-\ref{assum:even} in order for all of the bounds to hold.
Some bounds only require Assumptions~\ref{assum:pointwise-close} and \ref{assum:bounded-1}.

\begin{lemma}
\label{lem:complex-exp-non-unif}
Let $\varphi^2 \propto \prod_{k=1}^d p_k^2$ be a probability distribution over $[-R, R]^d$ satisfying Assumptions~\ref{assum:pointwise-close} and \ref{assum:bounded-1} for a truncation parameter $R$.
Let $w^\star \in \mathbb{R}^d$ be unknown with norm $R_w > 0$, and let $\hat{w} \in \mathbb{R}^d$ be an approximation of $w^\star$ with $\norm{\hat{w} - w^\star}_\infty \leq \epsilon_1$.
Let $1 \leq j,j' \leq D$ be integers with $j \neq j'$, for $D \in \mathbb{N}$ from \Cref{eq:g-tilde}.
Then,
\begin{equation}
  \left|\int\limits_{x\sim \varphi^2} e^{2\pi i x^\intercal \hat{w}(j-j')}\,dx\right|\leq \frac{3}{4\pi R}\frac{\sqrt{d}}{R_w - \sqrt{d}\epsilon_1}.
\end{equation}
\end{lemma}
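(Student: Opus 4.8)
The plan is to mimic the proof of the uniform analogue (Lemma~\ref{lem:complex-exp}), exploiting that $\varphi^2$ is a product distribution. Write $\varphi^2(x) = \prod_{k=1}^d \varphi_k^2(x_k)$ with $\varphi_k^2(x_k) = p_k^2(x_k)/N_k$ and $N_k \triangleq \int_{-R}^{R} p_k^2(z)\,dz$. Since the integrand $e^{2\pi i x^\intercal \hat w(j-j')}$ also factorizes over coordinates, $\int_{x\sim\varphi^2} e^{2\pi i x^\intercal \hat w(j-j')}\,dx = \prod_{k=1}^d \int_{-R}^{R} e^{2\pi i x_k \hat w_k(j-j')}\varphi_k^2(x_k)\,dx_k$. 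Each one–dimensional factor is an expectation of a unit–modulus quantity and hence has absolute value at most $1$, so it suffices to exhibit one coordinate $k$ whose factor is at most $\tfrac{3}{4\pi R}\tfrac{\sqrt d}{R_w - \sqrt d\epsilon_1}$ and bound the remaining $d-1$ factors trivially by $1$.

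For the choice of $k$: since $\sum_i (w_i^\star)^2 = R_w^2$, some coordinate satisfies $|w_k^\star| \ge R_w/\sqrt d$, and then $\norm{\hat w - w^\star}_\infty \le \epsilon_1$ gives $|\hat w_k| \ge R_w/\sqrt d - \epsilon_1 = (R_w - \sqrt d\epsilon_1)/\sqrt d$. It then suffices to show $\big|\int_{-R}^{R} e^{2\pi i x_k \hat w_k(j-j')}\varphi_k^2(x_k)\,dx_k\big| \le \tfrac{3}{4\pi R|\hat w_k|}$, since dividing by $|\hat w_k|$ and substituting the lower bound yields exactly the claim. Decompose $\varphi_k^2 = \tfrac{1}{2R} + g_k$, where $g_k \triangleq \varphi_k^2 - \tfrac{1}{2R}$ has $\int_{-R}^{R} g_k = 0$. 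Assumptions~\ref{assum:pointwise-close} and \ref{assum:bounded-1} give $\tfrac{9}{10}\le p_k^2 \le 1$ and $N_k \in [\tfrac{9}{5}R, \tfrac{11}{5}R]$, hence $\norm{g_k}_\infty \le \tfrac{1}{9R}$, i.e. $\varphi_k^2$ is pointwise within a $(1\pm\tfrac15)$ factor of the uniform density. For the constant piece, the exact computation from the proof of Lemma~\ref{lem:complex-exp} gives $\big|\tfrac{1}{2R}\int_{-R}^{R} e^{2\pi i x_k \hat w_k(j-j')}\,dx_k\big| = \tfrac{1}{2R}\big|\tfrac{\sin(2\pi(j-j')R\hat w_k)}{\pi(j-j')\hat w_k}\big| \le \tfrac{1}{2\pi R|\hat w_k|}$, using $|j-j'|\ge 1$.

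The remaining piece, $\int_{-R}^{R} e^{2\pi i x_k \hat w_k(j-j')}g_k(x_k)\,dx_k$, must be bounded by $\tfrac{1}{4\pi R|\hat w_k|}$ so that the two contributions sum to $\tfrac{3}{4\pi R|\hat w_k|}$, and this is the main obstacle: the naive bound $\le \norm{g_k}_1 = O(1)$ is far too weak, because a sup–small but rapidly oscillating perturbation could in principle resonate with the exponential. The fix is to use that $g_k$ is slowly varying, not merely small: integrating by parts with $c = \hat w_k(j-j')$, $\int_{-R}^{R} e^{2\pi i cx}g_k(x)\,dx = \big[\tfrac{e^{2\pi i cx}}{2\pi i c}g_k(x)\big]_{-R}^{R} - \tfrac{1}{2\pi i c}\int_{-R}^{R} e^{2\pi i cx}g_k'(x)\,dx$, the boundary term is at most $\tfrac{\norm{g_k}_\infty}{\pi|c|}\le\tfrac{1}{9\pi R|c|}$, and the remaining integral is at most $\tfrac{1}{2\pi|c|}\int_{-R}^{R}|g_k'| = \tfrac{1}{2\pi|c|}\int_{-R}^{R}\tfrac{2|p_k p_k'|}{N_k}$, which is controlled by the pointwise derivative bound on $p_k$ (the continuous form of Assumption~\ref{assum:deriv}) together with $N_k \ge \tfrac95 R$; since $|c|\ge|\hat w_k|$ and $R$ is taken large relative to the other parameters, this is at most $\tfrac{1}{4\pi R|\hat w_k|} - \tfrac{1}{9\pi R|\hat w_k|}$, closing the estimate. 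Thus the delicate point is purely bookkeeping — tracking the constants and ensuring a usable smoothness bound on $p_k$ is available in this (undiscretized) regime — and the structural argument is a direct transfer of the uniform proof.
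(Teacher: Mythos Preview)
Your approach diverges from the paper's. After the same factorization and choice of distinguished coordinate $k$, the paper does not decompose $\varphi_k^2$ at all: it bounds the other $d-1$ factors trivially by $1$ via $\bigl|\int e^{i\theta}p_\ell^2\bigr|\le\int p_\ell^2$, and for the $k$th factor passes directly from $\tfrac{1}{N_k}\bigl|\int_{-R}^{R} e^{2\pi i x_k\hat w_k(j-j')}p_k^2(x_k)\,dx_k\bigr|$ to $\tfrac{1}{N_k}\bigl|\int_{-R}^{R} e^{2\pi i x_k\hat w_k(j-j')}\,dx_k\bigr|$, citing $p_k^2\le 1$ (Assumption~\ref{assum:bounded-1}). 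It then lower bounds $N_k\ge\tfrac{4}{3}R$ from Assumption~\ref{assum:pointwise-close} and reuses the uniform computation verbatim, which is where the constant $\tfrac{3}{4}$ (in place of the uniform $\tfrac{1}{2}$) appears. No derivative information on $p_k$ is used anywhere.

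Your route has a genuine gap relative to the stated hypotheses. The integration-by-parts step that controls $\int_{-R}^{R} e^{2\pi i cx}g_k(x)\,dx$ needs a pointwise bound on $|p_k'|$, and you invoke Assumption~\ref{assum:deriv} for it. But the lemma assumes only Assumptions~\ref{assum:pointwise-close} and~\ref{assum:bounded-1}; Assumption~\ref{assum:deriv} is not a hypothesis here (and is in any case phrased relative to a discretization scale $M_1$ and a shift $T$ that do not appear in this lemma --- it is introduced specifically for the verification subroutine, not for these integral bounds). Under the stated assumptions alone, $g_k$ is $L^\infty$-small but carries no a~priori control on its variation, so --- exactly as you yourself diagnose --- sup-smallness is insufficient to extract $O(1/(R|\hat w_k|))$ decay for the perturbation term. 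Your closing remark that one must merely ``ensure a usable smoothness bound on $p_k$ is available'' names the missing ingredient but does not supply it from the given hypotheses; as written your argument proves a strictly weaker statement that requires an extra smoothness assumption.
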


\begin{proof}
The proof follows similarly to that of \Cref{lem:complex-exp}.
Denote the normalization constant by
\begin{equation}
  G \triangleq \prod_{k=1}^d \left(\int_{-R}^{+R} p_k^2(z)\,dz\right).
\end{equation}
We can bound this integral using
\begin{align}
  &\left|\int_{x \sim \varphi^2} e^{2\pi i x^\intercal  \hat{w}(j-j')}\,dx\right|\\
  &= \left|\frac{1}{G} \int_{x_1=-R}^{+R}\cdots \int_{x_d=-R}^{+R} e^{2\pi i \sum_{k=1}^d x_k \hat{w}_k (j-j')} p_1^2(x_1) \cdots p_d^2(x_d)\,dx_d\cdots dx_1 \right|\\
  &= \left|\frac{1}{G} \prod_{k=1}^d \int_{x_k=-R}^{+R} e^{2\pi i x_k \hat{w}_k(j-j')} p^2_k(x_k) \,dx_k \right|\label{eq:prod-int-non-unif}.
\end{align}
Here, notice that we can bound each of these integrals trivially
\begin{equation}
  \label{eq:triv-int-bound-non-unif}
  \left|\int_{x_k=-R}^{+R} e^{2\pi i x_k\hat{w}_k(j-j')} p^2_k(x_k)\,dx_k \right| \leq \int_{x_k=-R}^{+R} p_k^2(x_k) \,dx_k,
\end{equation}
where we use that $|e^{2\pi i z}| \leq 1$.
We also notice that because $\norm{w^\star}_2^2 = \sum_{i=1}^d |w_i^\star|^2 = R_w^2$, then there must exist some $k \in [d]$ such that $|w_k^\star| \geq R_w/\sqrt{d}$.
Here, equality is satisfied for the case when $w_i = R_w/\sqrt{d}$ for all $i \in [d]$.
We will bound each integral in the product in \Cref{eq:prod-int-non-unif} using \Cref{eq:triv-int-bound-non-unif} except for this $k$ such that $|w_k^\star| \geq R_w/\sqrt{d}$:
\begin{align}
  \left|\int_{x \sim \varphi^2} e^{2\pi i x^\intercal  \hat{w}(j-j')}\,dx\right| &= \left|\frac{1}{G} \prod_{k=1}^d \int_{x_k=-R}^{+R} e^{2\pi i x_k \hat{w}_k(j-j')} p^2_k(x_k) \,dx_k \right|\\
  &\leq \frac{1}{\left(\int_{-R}^{+R} p_k^2(z)\,dz\right)} \left|\int_{x_k=-R}^{+R} e^{2\pi i x_k \hat{w}_k(j-j')} p_k^2(x_k)\,dx_k\right|\\
  &\leq \frac{1}{\left(\int_{-R}^{+R} p_k^2(z)\,dz\right)} \left|\int_{x_k=-R}^{+R} e^{2\pi i x_k \hat{w}_k(j-j')} \,dx_k\right|\\
  &\leq \frac{3}{4R}\left|\int_{x_k=-R}^{+R} e^{2\pi i x_k \hat{w}_k(j-j')} \,dx_k\right|,
\end{align}
where in the second line, we use \Cref{eq:triv-int-bound-non-unif}.
In the third line, we use that $p^2_k(x) \leq 1$ by Assumption~\ref{assum:bounded-1}.
In the fourth line, we use Assumption~\ref{assum:pointwise-close}.
In particular, by Assumption~\ref{assum:pointwise-close}, we have
\begin{equation}
  \int_{-R}^{+R} p_k^2(z)\,dz - 2R = \int_{-R}^{+R} (p_k^2(z) - 1)\,dz \geq -\frac{R}{5}.
\end{equation}
Thus, this implies that $\int_{-R}^{+R}p_k^2(z)\,dz \geq 2R - R/5 \geq 4R/3$.
From here, the proof is the same as that of \Cref{lem:complex-exp}, just carrying through a constant factor of $3/4$ instead of $1/2$.
\end{proof}

By essentially the same proof, we can obtain a similar upper bound replacing $\hat{w}$ with $w^\star$.

\begin{corollary}
\label{coro:complex-exp-wstar-non-unif}
Let $\varphi^2 \propto \prod_{k=1}^d p_k^2$ be a probability distribution over $[-R, R]^d$ satisfying Assumptions~\ref{assum:pointwise-close} and \ref{assum:bounded-1} for a truncation parameter $R$.
Let $w^\star \in \mathbb{R}^d$ be unknown with norm $R_w > 0$, and let $\hat{w} \in \mathbb{R}^d$ be an approximation of $w^\star$ with $\norm{\hat{w} - w^\star}_\infty \leq \epsilon_1$.
Let $1 \leq j,j' \leq D$ be integers with $j \neq j'$, for $D \in \mathbb{N}$ from \Cref{eq:g-tilde}.
Then,
\begin{equation}
\left|\int_{x\sim \varphi^2} e^{2\pi i x^\intercal w^\star(j-j')}\,dx\right|\leq \frac{3}{4\pi R}\frac{\sqrt{d}}{R_w}.
\end{equation}
\end{corollary}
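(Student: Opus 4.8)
The plan is to repeat the argument of Lemma~\ref{lem:complex-exp-non-unif} with $\hat w$ replaced by $w^\star$ throughout, noting that the hypothesis $\|\hat w - w^\star\|_\infty \le \epsilon_1$ is not actually used in this corollary — it is carried along only for uniformity with the lemma statement. The single place where $\epsilon_1$ entered the proof of Lemma~\ref{lem:complex-exp-non-unif} was in lower bounding $|\hat w_k|$ for a well-chosen coordinate $k$; working directly with $w^\star$ makes that estimate exact, which is precisely why the $\epsilon_1$ term disappears from the bound.

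Concretely, first I would write the normalization constant $G \triangleq \prod_{k=1}^d \int_{-R}^{+R} p_k^2(z)\,dz$ and factor the integral over the product density $\varphi^2 \propto \prod_k p_k^2$ into a product of one-dimensional integrals exactly as in \Cref{eq:prod-int-non-unif}, so that $\bigl|\int_{x\sim\varphi^2} e^{2\pi i x^\intercal w^\star(j-j')}\,dx\bigr| = \bigl|\tfrac{1}{G}\prod_{k=1}^d \int_{-R}^{+R} e^{2\pi i x_k w_k^\star(j-j')} p_k^2(x_k)\,dx_k\bigr|$. Since $\sum_{i=1}^d |w_i^\star|^2 = R_w^2$, there is a coordinate $k$ with $|w_k^\star| \ge R_w/\sqrt d$. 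For every coordinate other than this $k$, I would bound the corresponding one-dimensional integral by $\int_{-R}^{+R} p_k^2(z)\,dz$ using $|e^{2\pi i z}| \le 1$ (as in \Cref{eq:triv-int-bound-non-unif}); these factors cancel against the matching factors in $G$.

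That leaves $\frac{1}{\int_{-R}^{+R} p_k^2(z)\,dz}\bigl|\int_{-R}^{+R} e^{2\pi i x_k w_k^\star(j-j')} p_k^2(x_k)\,dx_k\bigr|$. I would drop $p_k^2(x_k)$ from the numerator using Assumption~\ref{assum:bounded-1} ($p_k^2 \le 1$), and lower bound the denominator using Assumption~\ref{assum:pointwise-close}: since $\int_{-R}^{+R}(p_k^2(z)-1)\,dz \ge -R/5$ we get $\int_{-R}^{+R} p_k^2(z)\,dz \ge 4R/3$, so the quantity is at most $\frac{3}{4R}\bigl|\int_{-R}^{+R} e^{2\pi i x_k w_k^\star(j-j')}\,dx_k\bigr|$. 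Finally, the imaginary part of the integrand is odd in $x_k$ and integrates to zero over the symmetric interval $[-R,R]$, so the remaining integral equals $\frac{\sin(2\pi R(j-j')w_k^\star)}{\pi(j-j')w_k^\star}$; bounding $|\sin|\le 1$, $|j-j'|\ge 1$, and $1/|w_k^\star| \le \sqrt d/R_w$ yields the claimed $\frac{3}{4\pi R}\frac{\sqrt d}{R_w}$.

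The main (and essentially only) point worth flagging is that the argument rests on the $L^2$-mass constraint $\|w^\star\|_2 = R_w$ forcing one coordinate to be at least $R_w/\sqrt d$, together with Assumptions~\ref{assum:pointwise-close}-\ref{assum:bounded-1} keeping the normalization from degenerating; there is no substantive obstacle beyond checking that these steps transfer verbatim from Lemma~\ref{lem:complex-exp-non-unif}, which they do since the density is unchanged and only the frequency vector differs. Indeed the estimate is slightly cleaner here than in the lemma, since no perturbation bound on $|w_k^\star|$ is needed.
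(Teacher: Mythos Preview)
Your proposal is correct and mirrors the paper's own proof exactly: the paper simply states that the corollary follows ``by essentially the same proof'' as Lemma~\ref{lem:complex-exp-non-unif} with $\hat w$ replaced by $w^\star$, which is precisely what you have spelled out. Your observation that the $\epsilon_1$ term vanishes because no perturbation bound on $|w_k^\star|$ is needed is the only substantive difference from the lemma, and it is the right one.
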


We also have a similar corollary, where the integral is taken over a slightly different distribution.
This is useful in Appendix~\ref{sec:general-non-unif}.

\begin{corollary}
\label{coro:complex-exp-wstar-verif-non-unif}
Consider the space $[-R, R]^d \subseteq \mathbb{R}^d$ and nonnegative functions $p_i:\mathbb{R} \to [0,1]$ satisfying Assumptions~\ref{assum:pointwise-close} and \ref{assum:bounded-1} for $i \in [d]$.
Let $w^\star \in \mathbb{R}^d$ be unknown with norm $R_w > 0$, and let $\hat{w} \in \mathbb{R}^d$ be an approximation of $w^\star$ with $\norm{\hat{w} - w^\star}_\infty \leq \epsilon_1$.
Let $1 \leq j,j'\leq D$ be integers with $j \neq j'$, for $D \in \mathbb{N}$ from \Cref{eq:g-tilde}.
Let $M_1, T$ and $k \in [d]$ be integers.
Then,
\begin{equation}
  \left|\frac{1}{G} \int_{[-R,R]^d} p_1^2(x_1) \cdots p_k(x_k)p_k(M_1 x_k + T) \cdots p_d^2(x_d) e^{2\pi i x^\intercal w^\star(j-j')}\,dx\right| \leq \frac{3}{4\pi R}\frac{\sqrt{d}}{R_w},
\end{equation}
where
\begin{equation}
  G \triangleq \prod_{i=1}^d \left(\int_{-R}^{+R} p^2(x_j)\,dx_j\right).
\end{equation}
\end{corollary}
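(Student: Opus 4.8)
\textbf{Proof plan for Corollary~\ref{coro:complex-exp-wstar-verif-non-unif}.} The plan is to mimic the proof of \Cref{lem:complex-exp-non-unif} essentially verbatim, with the only modification being that one of the $d$ factors in the product distribution is now $p_k(x_k) p_k(M_1 x_k + T)$ rather than $p_k^2(x_k)$. First I would write out the multi-dimensional integral as a product of one-dimensional integrals, exploiting the fact that the weight $p_1^2(x_1)\cdots p_k(x_k)p_k(M_1 x_k + T)\cdots p_d^2(x_d)$ factorizes across coordinates and that the complex exponential $e^{2\pi i x^\intercal w^\star(j-j')} = \prod_{i=1}^d e^{2\pi i x_i w_i^\star (j-j')}$ also factorizes. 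The normalization $G$ is already a product over coordinates, so the integral becomes $\left|\prod_{i=1}^d \frac{1}{\int_{-R}^{+R} p_i^2(x_i)\,dx_i}\int_{-R}^{+R} (\text{weight}_i)(x_i) e^{2\pi i x_i w_i^\star(j-j')}\,dx_i\right|$.

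Next I would bound all but one of the $d$ one-dimensional factors trivially. For indices $i \ne k$, the factor is exactly $1$ in absolute value after dividing by the normalization, since $\left|\int_{-R}^{+R} p_i^2(x_i) e^{2\pi i x_i w_i^\star(j-j')}\,dx_i\right| \le \int_{-R}^{+R} p_i^2(x_i)\,dx_i$ using $|e^{2\pi i z}| \le 1$. For the $i = k$ factor, the numerator is $\int_{-R}^{+R} p_k(x_k)p_k(M_1 x_k + T) e^{2\pi i x_k w_k^\star(j-j')}\,dx_k$; here I would use Assumption~\ref{assum:bounded-1} to bound $p_k(x_k)p_k(M_1 x_k + T) \le 1$ pointwise (since both factors lie in $[0,1]$), reducing this to $\left|\int_{-R}^{+R} e^{2\pi i x_k w_k^\star(j-j')}\,dx_k\right|$. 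Then, exactly as in \Cref{lem:complex-exp}, since $\|w^\star\|_2 = R_w$ there is some coordinate $k'$ with $|w_{k'}^\star| \ge R_w/\sqrt{d}$; if that coordinate happens to be $k$, great, otherwise I would instead leave the $k'$ factor to be the nontrivial one and bound the $k$ factor trivially too (the argument is symmetric, and it only matters that at least one large coordinate survives). In either case the surviving exponential integral evaluates to $\left|\frac{\sin(2\pi R w_{k'}^\star(j-j'))}{\pi w_{k'}^\star(j-j')}\right| \le \frac{1}{\pi |j-j'| |w_{k'}^\star|} \le \frac{1}{\pi |w_{k'}^\star|}$ using $|j-j'|\ge 1$. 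Finally, Assumption~\ref{assum:pointwise-close} gives $\int_{-R}^{+R} p_i^2(x_i)\,dx_i \ge 2R - R/5 \ge 4R/3$, contributing the constant $3/4$, and $|w_{k'}^\star| \ge R_w/\sqrt{d}$ yields the factor $\sqrt{d}/R_w$; combining gives the claimed bound $\frac{3}{4\pi R}\frac{\sqrt{d}}{R_w}$.

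The only genuine subtlety—and the thing I would be careful about rather than the "main obstacle"—is the bookkeeping around which coordinate plays the role of the surviving nontrivial factor when the heavy coordinate $k'$ coincides with or differs from the shifted coordinate $k$. Since the shifted weight $p_k(x_k)p_k(M_1 x_k + T)$ is still bounded by $1$ and nonnegative, it can always be absorbed into a trivial bound, so the argument never actually needs $k' = k$; one just needs to state this case split cleanly. Beyond that, everything is a direct transcription of the proof of \Cref{lem:complex-exp-non-unif}, and the result drops out with the same constants.
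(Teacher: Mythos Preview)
Your proposal is correct and follows essentially the same approach as the paper's proof: factorize the integral coordinate-wise, keep the factor corresponding to a heavy coordinate $k'$ with $|w_{k'}^\star|\ge R_w/\sqrt{d}$ as the nontrivial one, and bound all others using $|e^{2\pi i z}|\le 1$ and Assumption~\ref{assum:bounded-1}. The paper carries out exactly the case split $k'=k$ versus $k'\neq k$ that you flag as the only subtlety.

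One small point worth tightening: in the $k'\neq k$ case, the ``trivial'' bound on the $k$-th factor is not quite $\le 1$, because the numerator weight $p_k(x_k)p_k(M_1x_k+T)$ does not match the normalization $\int p_k^2$. Bounding the numerator by $2R$ (via $p_k\le 1$) and the denominator by your own computation $\int p_k^2 \ge 2R - R/5 = 9R/5$ gives a factor of $10/9$, and combining with the $k'$ factor yields $\tfrac{50}{81\pi R}\cdot\tfrac{\sqrt{d}}{R_w}\le \tfrac{3}{4\pi R}\cdot\tfrac{\sqrt{d}}{R_w}$, exactly as the paper does. If instead you weaken to $4R/3$ as you wrote, the $k$-th factor becomes $3/2$ and the final constant overshoots to $9/(8\pi R)$. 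So just retain the sharper $9R/5$ you already derived when doing the $k'\neq k$ bookkeeping.
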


\begin{proof}
We can rewrite the integral as
\begin{align}
  &\left|\frac{1}{G} \int_{[-R,R]^d} p_1^2(x_1)\cdots p_k(x_k)p_k(M_1x_k + T)\cdots p_d^2(x_d) e^{2\pi i x^\intercal w^\star(j-j')}\,dx\right|\\
  &= \left|\frac{1}{G}\left(\prod_{\substack{\ell=1\\\ell \neq k}}^d \int_{-R}^{+R} p_\ell^2(x_\ell) e^{2\pi i x_\ell w_\ell^\star(j-j')}\,dx_\ell\right)\left(\int_{-R}^{+R} p_k(x_k) p_k(M_1x_k + T) e^{2\pi i x_kw_k^\star (j-j')}\,dx_k\right)\right|.\label{eq:prod-int-non-unif2}
\end{align}
Notice that we can bound each of these integrals trivially as in \Cref{eq:triv-int-bound-non-unif}.
Also, notice that because $\norm{w^\star}_2^2 = R_w^2$, then there must exist some $k' \in [d]$ such that $|w_k^\star| \geq R_w/\sqrt{d}$.
We will bound each integral in the product in \Cref{eq:prod-int-non-unif2} using \Cref{eq:triv-int-bound-non-unif} except for this $k'$ such that $|w_{k'}^\star| \geq R_w/\sqrt{d}$.
If $k = k'$, then
\begin{align}
  &\left|\frac{1}{G} \int_{[-R,R]^d} p_1^2(x_1)\cdots p_k(x_k)p_k(M_1 x_k + T)\cdots p_d^2(x_d) e^{2\pi i x^\intercal w^\star(j-j')}\,dx\right|\\
  &\leq \frac{1}{\int_{-R}^{+R} p_k^2(z)\,dz}\left|\int_{x_k=-R}^{+R} p_k(x_k)p_k(M_1x_k + T) e^{2\pi x_kw_k^\star(j-j')}\,dx_k \right|\\
  &\leq \frac{1}{\int_{-R}^{+R} p_k^2(z)\,dz}\left|\int_{x_k=-R}^{+R} e^{2\pi i x_k w_k^\star(j-j')}\,dx_k \right|.
\end{align}
Here, in the last line, we used Assumption~\ref{assum:bounded-1} that $p_k(z) \leq 1$.
From here, the proof is the same as \Cref{lem:complex-exp-non-unif} and \Cref{coro:complex-exp-wstar-non-unif}.
If $k \neq k'$, then
\begin{align}
  &\left|\frac{1}{G} \int_{[-R,R]^d} p_1^2(x_1)\cdots p_k(x_k)p_k(M_1 x_k + T)\cdots p_d^2(x_d) e^{2\pi i x^\intercal w^\star(j-j')}\,dx\right|\\
  &\leq \frac{1}{\left(\int_{-R}^{+R} p_k^2(x_k)\,dx_k\right)\left(\int_{-R}^{+R} p_{k'}^2(x_{k'})\,dx_{k'}\right)} \left|\int_{-R}^{+R} p_k(x_k)p_k(M_1x_k + T) e^{2\pi i x_k w_k^\star (j-j')}\,dx_k \right|\\
  &\cdot\left|\int_{-R}^{+R} p_{k'}^2(x_{k'}) e^{2\pi i x_{k'} w_{k'}^\star (j-j')}\,dx_{k'} \right|\\
  &\leq \frac{2R}{\left(\int_{-R}^{+R} p_k^2(x_k)\,dx_k\right)\left(\int_{-R}^{+R} p_{k'}^2(x_{k'})\,dx_{k'}\right)}\left|\int_{-R}^{+R} e^{2\pi i x_{k'} w_{k'}^\star (j-j')}\,dx_{k'} \right|\\
  &\leq \frac{50}{81R}\left|\int_{-R}^{+R} e^{2\pi i x_{k'} w_{k'}^\star (j-j')}\,dx_{k'} \right|.\\
  &\leq \frac{3}{4R}\left|\int_{-R}^{+R} e^{2\pi i x_{k'} w_{k'}^\star (j-j')}\,dx_{k'} \right|
\end{align}
In the second inequality, we use Assumption~\ref{assum:bounded-1} that $p_k(z) \leq 1$.
In the next to last inequality, we use Assumption~\ref{assum:pointwise-close}.
In particular, by Assumption~\ref{assum:pointwise-close}, we have
\begin{equation}
  \int_{-R}^{+R} p_k^2(z)\,dz - 2R = \int_{-R}^{+R} (p_k^2(z) - 1)\,dz \geq -\frac{R}{5}.
\end{equation}
Thus, this implies that $\int_{-R}^{+R}p_k^2(z)\,dz \geq 2R - R/5 = 9R/5$.
From here, again, the proof is the same as \Cref{lem:complex-exp-non-unif} and \Cref{coro:complex-exp-wstar-non-unif}.
\end{proof}

Now, we can use this to obtain a lower bound for an integral of a product of cosines, as in \Cref{lem:integral}.
In this next integral bound, we also require Assumption~\ref{assum:even}.

\begin{lemma}
\label{lem:integral-non-unif}
Let $\varphi^2 \propto \prod_{k=1}^d p_k^2$ be a probability distribution over $[-R, R]^d$ satisfying Assumptions~\ref{assum:pointwise-close}-\ref{assum:even} for a truncation parameter $R$.
Let $w^\star \in \mathbb{R}^d$ be unknown with norm $R_w > 0$, and let $\hat{w} \in \mathbb{R}^d$ be an approximation of $w^\star$ with $\norm{\hat{w} - w^\star}_\infty \leq \epsilon_1$.
Let $1 \leq j \leq D$ be an integer, for $D \in \mathbb{N}$ from \Cref{eq:g-tilde}.
Then,
\begin{equation}
  \int\limits_{x \sim \varphi^2} \cos(2\pi j x^\intercal \hat{w})\cos(2\pi j x^\intercal w^\star)\,dx \geq \frac{1}{2} - \frac{3\sqrt{d}}{16\pi R_w R} - \frac{5\pi^2 D^2 R^2 d \epsilon_1}{2}.
\end{equation}
\end{lemma}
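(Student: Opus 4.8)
The plan is to follow the proof of \Cref{lem:integral} almost line for line, substituting the non-uniform integral bounds established in this subsection for the uniform ones and tracking how the constants change. First I would write $\hat{w} = w^\star + (\hat{w}-w^\star)$ inside the first cosine, apply the product-to-sum identity, and then use $\cos y \geq 1 - y^2/2$, $|\sin y| \leq \min(1,|y|)$, and $|\cos y|\leq 1$ exactly as in the derivation of \Cref{eq:three-terms} to reduce the claim to
\[
\int_{x\sim\varphi^2}\!\cos(2\pi j x^\intercal\hat{w})\cos(2\pi j x^\intercal w^\star)\,dx \;\geq\; \int_{x\sim\varphi^2}\!\cos^2(2\pi j x^\intercal w^\star)\,dx \;-\; 2\pi^2 j^2\!\!\int_{x\sim\varphi^2}\!\!(x^\intercal(\hat{w}-w^\star))^2\,dx \;-\; 2\pi j\!\!\int_{x\sim\varphi^2}\!\!|x^\intercal(\hat{w}-w^\star)|\,dx .
\]
This step is purely trigonometric and does not see the distribution, so it is identical to the uniform case; what remains is to lower bound the first integral and upper bound the last two with respect to the non-uniform density.

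For the first integral I would expand $\cos^2$ into complex exponentials as in \Cref{eq:cos-abs} and bound the two resulting exponential integrals (which carry frequency $2j\geq 2$) using \Cref{coro:complex-exp-wstar-non-unif} in place of \Cref{coro:complex-exp-wstar}; the $3/4$ prefactor of the non-uniform bound, together with the $1/(2j)\leq 1/2$ frequency gain, yields $\bigl|\int\cos^2(2\pi j x^\intercal w^\star)\,dx - \tfrac12\bigr| \leq 3\sqrt{d}/(16\pi R_w R)$, which accounts for the first two terms of the claimed bound. For the quadratic term I would use that $\varphi^2 = \prod_j p_j^2/\int_{-R}^{R}p_j^2$ is a product of even marginals (Assumption~\ref{assum:even}, as used throughout this section at the level of the individual $p_j^2$), so that the cross contributions $\int_{-R}^{R} p_j^2(z)\,z\,dz$ vanish, and then bound each diagonal second moment by $\int_{-R}^{R}p_j^2(z)z^2\,dz/\int_{-R}^{R}p_j^2(z)\,dz \leq (2R^3/3)/(9R/5) = 10R^2/27$, invoking $p_j^2\leq 1$ (Assumption~\ref{assum:bounded-1}) and $\int_{-R}^{R}p_j^2 \geq 9R/5$ (which follows from Assumption~\ref{assum:pointwise-close}); hence this integral is at most $(10R^2/27)\norm{\hat{w}-w^\star}_2^2 \leq (10R^2/27)\,d\epsilon_1^2$. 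Similarly, by the triangle inequality and the same two facts, $\int_{x\sim\varphi^2}|x^\intercal(\hat{w}-w^\star)|\,dx \leq \epsilon_1\sum_i\int_{-R}^{R}\varphi_i^2(z)|z|\,dz \leq 5Rd\epsilon_1/9$.

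Finally I would assemble the three estimates, replace $j$ by $D$ in the error prefactors, and use $\epsilon_1<1$, $R\geq 1$, $\pi\leq\pi^2$ to collapse $2\pi^2 D^2(10R^2/27)d\epsilon_1^2 + 2\pi D(5R/9)d\epsilon_1 \leq (50/27)\pi^2 D^2 R^2 d\epsilon_1 \leq (5/2)\pi^2 D^2 R^2 d\epsilon_1$, which gives precisely the stated inequality. I do not expect a genuine obstacle here: every ingredient is already available, and the only care required is the constant bookkeeping together with remembering that \Cref{lem:complex-exp-non-unif} and \Cref{coro:complex-exp-wstar-non-unif} are being applied at frequency $2j$ rather than $j$, which is exactly what turns the uniform $1/8$ coefficient into the $3/16$ appearing in the statement.
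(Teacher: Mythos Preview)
Your proposal is correct and follows essentially the same approach as the paper: the same trigonometric decomposition into three terms, the same use of \Cref{coro:complex-exp-wstar-non-unif} for the $\cos^2$ term, the same evenness/product structure for killing the cross second moments, and the same simplification via $\epsilon_1<1$, $R\geq 1$, $j\leq D$ at the end. The only cosmetic difference is that you use the tighter bound $\int_{-R}^{R}p_j^2 \geq 9R/5$ for the second and third terms (yielding $10R^2/27$ and $5R/9$), whereas the paper relaxes to $\int_{-R}^{R}p_j^2 \geq 4R/3$ (yielding $R^2/2$ and $3R/4$); both routes land at the same final constant $5\pi^2 D^2 R^2 d\epsilon_1/2$.
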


\begin{proof}
The proof follows similarly to that of \Cref{lem:integral}.
Using the sum formulas for cosine, we have
\begin{align}
  &\int_{x \sim \varphi^2} \cos(2\pi j x^\intercal \hat{w})\cos(2\pi j x^\intercal w^\star)\,dx\\
  &= \int_{x \sim \varphi^2} \cos(2\pi j x^\intercal(w^\star + (\hat{w} - w^\star)))\cos(2\pi j x^\intercal w^\star)\,dx\\
  &= \int_{x \sim \varphi^2} \left(\cos(2\pi j x^\intercal w^\star)\cos(2\pi j x^\intercal (\hat{w} - w^\star)) - \sin(2\pi j x^\intercal w^\star) \sin(2\pi j x^\intercal (\hat{w} - w^\star))\right)\cos(2\pi j x^\intercal w^\star)\,dx\\
  &\geq \int_{x\sim\varphi^2} \cos^2(2\pi j x^\intercal w^\star) \left(1 - \frac{1}{2}(2\pi jx^\intercal (\hat{w} - w^\star))^2\right) - \sin(2\pi jx^\intercal w^\star)\sin(2\pi j x^\intercal (\hat{w}-w^\star))\cos(2\pi jx^\intercal w^\star)\,dx\\
  &\geq \int_{x \sim \varphi^2}\cos^2(2\pi j x^\intercal w^\star) \,dx - 2\pi^2 j^2 \int_{x \sim \varphi^2} \left(x^\intercal (\hat{w} - w^\star)\right)^2\,dx - 2\pi j\int_{x \sim \varphi^2} |x^\intercal (\hat{w} - w^\star)|\,dx.\label{eq:three-terms-non-unif}
\end{align}
In the third line, we use the sum formula for cosines.
In the fourth line, we use that $\cos(y) \geq 1 - y^2/2$.
In the fifth line, we use that $\sin(y), \cos(y) \leq 1$ and $\sin(y) \leq |y|.$
We want to lower bound the first term and upper bound the second two.

First, we will lower bound the first term in~\Cref{eq:three-terms-non-unif}.
We can expand the first term in terms of complex exponentials:
\begin{align}
  \int_{x \sim \varphi^2} \cos^2(2\pi j x^\intercal w^\star)\,dx &= \frac{1}{4}\int_{x \sim \varphi^2}\left(e^{2\pi i jx^\intercal w^\star} + e^{-2\pi ij x^\intercal w^\star}\right)^2\,dx\\
  &= \frac{1}{2} + \frac{1}{4}\int_{x \sim \varphi^2}e^{4\pi i jx^\intercal w^\star}\,dx + \frac{1}{4}\int_{x \sim \varphi^2}e^{-4\pi ij x^\intercal w^\star}\,dx.
\end{align}
Now, we can bound the absolute value of these complex exponentials via \Cref{coro:complex-exp-wstar-non-unif}.
Note that \Cref{coro:complex-exp-wstar-non-unif} applies because we only needed to use that $j \neq j'$ to lower bound $|j-j'| \geq 1$.
This already clearly holds for $j \geq 1$.
Thus, we have
\begin{equation}
  \label{eq:cos-abs-non-unif}
  \left|\int_{x \sim \varphi^2} \cos^2(2\pi j x^\intercal w^\star)\,dx - \frac{1}{2}\right| \leq \frac{1}{2}\left|\int_{x \sim \varphi^2} e^{4\pi i j x^\intercal w^\star}\,dx \right|\leq \frac{3}{16\pi R} \frac{\sqrt{d}}{R_w}.
\end{equation}
Rearranging, we have
\begin{equation}
  \label{eq:three-terms-1-non-unif}
  \int_{x \sim \varphi^2} \cos^2(2\pi j x^\intercal w^\star)\,dx \geq \frac{1}{2} - \frac{3\sqrt{d}}{16\pi R_w R}.
\end{equation}
This gives a lower bound on the first term in \Cref{eq:three-terms-non-unif}.
We still need to upper bound the other terms in \Cref{eq:three-terms-non-unif}.
For the second term, we can first directly evaluate the integral.

For $\varphi^2 \propto \prod_{k=1}^d p_k^2$ over $[-R,R]^d$, we have
\begin{equation}
\int_{x \sim \varphi^2} \,dx = \frac{1}{\prod_{k=1}^d \left(\int_{-R}^{+R} p_k^2(z)\,dz\right)}\int_{x_1=-R}^{+R}\cdots \int_{x_d=-R}^{+R} p_1^2(x_1) \cdots p_d^2(x_d) \,dx_d\cdots dx_1 = 1.
\end{equation}
For simplicity, from here on, we denote the normalizing factor by $G$.
Then,
\begin{align}
  &\int_{x \sim \varphi^2} (x^\intercal (\hat{w} - w^\star))^2 \,dx\\
  &= \frac{1}{G}\int_{x_1=-R}^{+R}\cdots \int_{x_d=-R}^{+R}\left(\sum_{i=1}^d x_i \hat{w}_i - x_iw^\star_i\right)^2 p_1^2(x_1)\cdots p_d^2(x_d)\,dx_d \cdots \,dx_1\\
  &= \frac{1}{G}\int_{x_1=-R}^{+R}\cdots \int_{x_d=-R}^{+R}\left(\sum_{i,i'=1}^d x_i x_{i'}\hat{w}_i \hat{w}_{i'} + x_i x_{i'}w^\star_i w^\star_{i'} - x_ix_{i'}\hat{w}_i w_{i'}^\star - x_ix_{i'}w_i^\star \hat{w}_{i'}\right)\\
  & \cdot p_1^2(x_1)\cdots p_d^2(x_d)\,dx_d \cdots \,dx_1.
\end{align}
Here, notice that
\begin{align}
  &\frac{1}{G}\int_{x_1=-R}^{+R}\cdots \int_{x_d=-R}^{+R} x_i x_{i'} p_1^2(x_1)\cdots p_d^2(x_d) \, dx_d \cdots\, dx_1\\
  &= \frac{1}{\left(\int_{-R}^{+R} p_i^2(z)\,dz\right)\left(\int_{-R}^{+R} p_{i'}^2(z)\,dz\right)} \int_{x_i = -R}^{+R} \int_{x_{i'} =-R}^{+R} x_i x_{i'} p_i^2(x_i) p_{i'}^2(x_{i'}) \,dx_{i'}\,dx_i\\
  &= \frac{\delta_{ii'}}{\left(\int_{-R}^{+R} p_i^2(z)\,dz\right)} \int_{x=-R}^{+R} x^2 p_i^4(x)\,dx\\
  &\leq \frac{\delta_{ii'}}{\left(\int_{-R}^{+R} p_i^2(z)\,dz\right)} \int_{x=-R}^{+R}x^2\,dx\\
  &= \frac{\delta_{ii'}}{\left(\int_{-R}^{+R} p_i^2(z)\,dz\right)} \frac{2R^3}{3}\\
  &\leq \frac{R^2}{2} \delta_{ii'},
\end{align}
where the third line follows because if $i \neq i'$, we are integrating an odd function over a symmetric interval since $p^2$ is even by Assumption~\ref{assum:even}.
The fourth line follows by Assumption~\ref{assum:bounded-1} that $p_i(z) \leq 1$.
The last line follows by Assumption~\ref{assum:pointwise-close}.
In particular, by Assumption~\ref{assum:pointwise-close}, we have
\begin{equation}
  \int_{-R}^{+R} p_k^2(z)\,dz - 2R = \int_{-R}^{+R} (p_k^2(z) - 1)\,dz \geq -\frac{R}{5}.
\end{equation}
Thus, this implies that $\int_{-R}^{+R}p_k^2(z)\,dz \geq 2R - R/5 \geq 4R/3$.
Plugging this into our previous expression, we have
\begin{align}
  \int_{x \sim \varphi^2} (x^\intercal (\hat{w} - w^\star))^2 \,dx &\leq \frac{R^2}{2} \left(\sum_{i=1}^d (\hat{w}_i)^2 + \left(w_i^\star\right)^2 - 2\hat{w}_i w_i^\star\right)\\
  &= \frac{R^2}{2}\norm{\hat{w} - w^\star}_2^2\\
  &\leq \frac{R^2}{2}d\epsilon_1^2,\label{eq:three-terms-2-non-unif}
\end{align}
where in the last line, we used $|\hat{w}_i - w_i^\star| \leq \epsilon_1$ for all $i \in [d]$.
Finally, we can similarly upper bound the last term in \Cref{eq:three-terms-non-unif}.
\begin{align}
  &\int_{x\sim \varphi^2} |x^\intercal (\hat{w} - w^\star)| \,dx\\
  &= \frac{1}{G}\int_{x_1=-R}^{+R} \cdots \int_{x_d =-R}^{+R} \left|\sum_{i=1}^d x_i(\hat{w}_i - w_i^\star)\right| p_1^2(x_1)\cdots p_d^2(x_d) \,dx_d\cdots dx_1\\
  &\leq \frac{1}{G}\int_{x_1=-R}^{+R} \cdots \int_{x_d =-R}^{+R}\sum_{i=1}^d |x_i(\hat{w}_i - w_i^\star)| p_1^2(x_1)\cdots p_d^2(x_d)\,dx_d \cdots dx_1\\
  &= \sum_{i=1}^d \frac{1}{\int_{-R}^{+R} p_i^2(z)\,dz} |\hat{w}_i - w^\star_i| \int_{x_i=-R}^{+R}|x_i| p_i^2(x_i)\,dx_i\\
  &\leq \sum_{i=1}^d \frac{1}{\int_{-R}^{+R} p_i^2(z)\,dz} |\hat{w}_i - w^\star_i| \int_{x_i=-R}^{+R}|x_i|\,dx_i\\
  &\leq R^2 \epsilon_1 \sum_{i=1}^d \frac{1}{\int_{-R}^{+R} p_i^2(z)\,dz}\\
  &\leq \frac{3\epsilon_1 dR}{4}.\label{eq:three-terms-3-non-unif}
\end{align}
In the third line, we use triangle inequality.
In the fifth line, we use Assumption~\ref{assum:bounded-1} that $p_k(z) \leq 1$.
In the sixth line, we use that $|\hat{w}_i - w_i^\star| \leq \epsilon_1$ for all $i \in [d]$ and evaluate the integral.
In the last line, we use Assumption~\ref{assum:pointwise-close}.

Combining \Cref{eq:three-terms-1-non-unif,eq:three-terms-2-non-unif,eq:three-terms-3-non-unif} in \Cref{eq:three-terms-non-unif}, we have
\begin{align}
  \int_{x \sim \varphi^2}\cos(2\pi j x^\intercal \hat{w}) \cos(2\pi j x^\intercal w^\star)\,dx &\geq \frac{1}{2} - \frac{3\sqrt{d}}{16\pi R_w R} - \pi^2 j^2 R^2 d\epsilon_1^2 - \frac{3\pi j \epsilon_1 d R}{2}\\
  &\geq \frac{1}{2} - \frac{3\sqrt{d}}{16\pi R_w R} - \pi^2 j^2 R^2 d\epsilon_1 - \frac{3\pi^2 j^2 \epsilon_1 d R^2}{2}\\
  &\geq \frac{1}{2} - \frac{3\sqrt{d}}{16\pi R_w R} - \frac{5\pi^2 D^2 R^2 d \epsilon_1}{2},
\end{align}
where in the second line we use that $j, R \geq 1$ so that $j^2 \geq j$ and $R^2 \geq R$ and $\epsilon_1 < 1$ so that $\epsilon_1^2 \leq \epsilon_1$.
In the last line, we use that $j \leq D$.
\end{proof}

\begin{corollary}
\label{coro:integral-non-unif}
Let $\varphi^2 \propto \prod_{k=1}^d p_k^2$ be a probability distribution over $[-R, R]^d$ satisfying Assumptions~\ref{assum:pointwise-close} and \ref{assum:bounded-1} for a truncation parameter $R$.
Let $w^\star \in \mathbb{R}^d$ be unknown with norm $R_w > 0$, and let $\hat{w} \in \mathbb{R}^d$ be an approximation of $w^\star$ with $\norm{\hat{w} - w^\star}_\infty \leq \epsilon_1$.
Let $1 \leq j\leq D$ be an integer, for $D \in \mathbb{N}$ from \Cref{eq:g-tilde}.
Then,
\begin{equation}
  \int_{x \sim \varphi^2}\cos^2(2 \pi j x^\intercal \hat{w})\,dx \geq \frac{1}{2} - \frac{3\sqrt{d}}{16\pi R(R_w - \sqrt{d}\epsilon_1)}.
\end{equation}
\end{corollary}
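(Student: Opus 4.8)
The plan is to mirror the proof of the uniform version, \Cref{coro:integral}, replacing the uniform complex-exponential estimate \Cref{lem:complex-exp} by its non-uniform counterpart \Cref{lem:complex-exp-non-unif}. First I would expand the integrand in complex exponentials,
\begin{equation}
  \cos^2(2\pi j x^\intercal \hat{w}) = \frac14\left(e^{2\pi i j x^\intercal \hat{w}} + e^{-2\pi i j x^\intercal \hat{w}}\right)^2 = \frac12 + \frac14 e^{4\pi i j x^\intercal \hat{w}} + \frac14 e^{-4\pi i j x^\intercal \hat{w}},
\end{equation}
and integrate against $\varphi^2$. Since $\varphi^2$ is a normalized probability density, $\int_{x\sim\varphi^2}\,dx = 1$, so
\begin{equation}
  \int_{x\sim\varphi^2}\cos^2(2\pi j x^\intercal \hat{w})\,dx = \frac12 + \frac14\int_{x\sim\varphi^2} e^{4\pi i j x^\intercal \hat{w}}\,dx + \frac14\int_{x\sim\varphi^2} e^{-4\pi i j x^\intercal \hat{w}}\,dx.
\end{equation}

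Next I would bound the two remaining exponential integrals using \Cref{lem:complex-exp-non-unif}. Although that lemma is phrased for frequencies of the form $j-j'$ with $1\le j,j'\le D$ and $j\neq j'$, inspecting its proof shows that the only property of the frequency that is used is that its absolute value is at least $1$; here the frequency is $\pm 2j$, and $2j\ge 2\ge 1$, so the estimate applies. Carrying the factor $|2j|\ge 2$ through the proof of \Cref{lem:complex-exp-non-unif} (it enters at the step analogous to \Cref{eq:exp-bound}) sharpens the bound by a factor of two, giving $\bigl|\int_{x\sim\varphi^2} e^{\pm 4\pi i j x^\intercal \hat{w}}\,dx\bigr| \le \frac{3}{8\pi R}\frac{\sqrt{d}}{R_w - \sqrt{d}\epsilon_1}$. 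Combining the two terms with their coefficients $\tfrac14$ then yields
\begin{equation}
  \left|\int_{x\sim\varphi^2}\cos^2(2\pi j x^\intercal \hat{w})\,dx - \frac12\right| \le \frac{3\sqrt{d}}{16\pi R(R_w - \sqrt{d}\epsilon_1)},
\end{equation}
and rearranging gives the claimed lower bound. This mirrors \Cref{eq:cos-abs-what} in the uniform case.

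I do not expect a genuine obstacle here: the statement only invokes Assumptions~\ref{assum:pointwise-close} and \ref{assum:bounded-1}, which are exactly the hypotheses of \Cref{lem:complex-exp-non-unif}, so the argument is purely mechanical once that lemma is available. The one point requiring a word of care is the remark (already made for the uniform analogue in the proof of \Cref{coro:integral}) that \Cref{lem:complex-exp-non-unif} remains valid with the frequency $2j$ in place of a difference $j-j'$, since its proof uses only $|j-j'|\ge 1$; and the precise numerical constant $3/16$ (rather than $3/8$) depends on additionally exploiting $|2j|\ge 2$ in that estimate.
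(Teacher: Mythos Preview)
Your proposal is correct and follows essentially the same route as the paper: expand $\cos^2$ in complex exponentials, invoke \Cref{lem:complex-exp-non-unif} (noting that only $|j-j'|\ge 1$ is used, and here the frequency is $2j$), and rearrange. You are in fact slightly more explicit than the paper in flagging that the constant $3/16$ arises by exploiting $|2j|\ge 2$ inside the proof of \Cref{lem:complex-exp-non-unif}; the paper uses this silently.
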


\begin{proof}
The proof follows from the lower bound of the first term in \Cref{eq:three-terms-non-unif} in the proof of \Cref{lem:integral-non-unif}.
We can expand the first term in terms of complex exponentials:
\begin{align}
  \int_{x \sim \varphi^2} \cos^2(2\pi j x^\intercal \hat{w})\,dx &= \frac{1}{4}\int_{x \sim \varphi^2}\left(e^{2\pi i jx^\intercal\hat{w}} + e^{-2\pi ij x^\intercal \hat{w}}\right)^2\,dx\\
  &= \frac{1}{2} + \frac{1}{4}\int_{x \sim \varphi^2}e^{4\pi i jx^\intercal \hat{w}}\,dx + \frac{1}{4}\int_{x \sim \varphi^2}e^{-4\pi ij x^\intercal \hat{w}}\,dx.
\end{align}
Now, we can bound the absolute value of these complex exponentials via \Cref{lem:complex-exp-non-unif} (instead of \Cref{coro:complex-exp-wstar-non-unif}).
Note that \Cref{lem:complex-exp-non-unif} applies because we only needed to use that $j \neq j'$ to lower bound $|j-j'| \geq 1$.
This already clearly holds for $j \geq 1$.
Thus, we have
\begin{equation}
  \label{eq:cos-abs-what-non-unif}
  \left|\int_{x \sim \varphi^2} \cos^2(2\pi j x^\intercal \hat{w})\,dx - \frac{1}{2}\right| \leq \frac{1}{2}\left|\int_{x \sim \varphi^2} e^{4\pi i j x^\intercal \hat{w}}\,dx \right|\leq \frac{3}{16\pi R} \frac{\sqrt{d}}{R_w - \sqrt{d}\epsilon_1}.
\end{equation}
Rearranging, we have
\begin{equation}
  \int_{x \sim \varphi^2} \cos^2(2\pi j x^\intercal \hat{w})\,dx \geq \frac{1}{2} - \frac{3\sqrt{d}}{16\pi R(R_w - \sqrt{d}\epsilon_1)}.
\end{equation}
\end{proof}

\begin{lemma}
\label{lem:integral-upper-non-unif}
Let $\varphi^2 \propto \prod_{k=1}^d p_k^2$ be a probability distribution over $[-R, R]^d$ satisfying Assumptions~\ref{assum:pointwise-close} and \ref{assum:bounded-1} for a truncation parameter $R$.
Let $w^\star \in \mathbb{R}^d$ be unknown with norm $R_w > 0$, and let $\hat{w} \in \mathbb{R}^d$ be an approximation of $w^\star$ with $\norm{\hat{w} - w^\star}_\infty \leq \epsilon_1$.
Let $1 \leq j\leq D$ be an integer, for $D \in \mathbb{N}$ from \Cref{eq:g-tilde}.
Then,
\begin{equation}
  \int\limits_{x \sim \varphi^2} \cos(2\pi jx^\intercal \hat{w}) \cos(2\pi jx^\intercal w^\star)\,dx \leq \frac{1}{2} + \frac{3\sqrt{d}}{16\pi R_w R} + \frac{3\pi D d\epsilon_1 R}{2}.
\end{equation}
\end{lemma}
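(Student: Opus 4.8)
The plan is to mirror the proof of \Cref{lem:integral-upper} verbatim, substituting the non-uniform integral estimates of this section for their uniform counterparts. First I would apply the angle-addition formula to write $\cos(2\pi j x^\intercal \hat{w}) = \cos(2\pi j x^\intercal w^\star)\cos(2\pi j x^\intercal(\hat{w}-w^\star)) - \sin(2\pi j x^\intercal w^\star)\sin(2\pi j x^\intercal(\hat{w}-w^\star))$, multiply through by $\cos(2\pi j x^\intercal w^\star)$, and bound the resulting integrand crudely via $\cos(y)\le 1$, $-\sin(y)\cos(y)\le 1$, and $\sin(y)\le|y|$. This collapses the target integral to
\begin{equation}
  \int_{x\sim\varphi^2}\cos^2(2\pi j x^\intercal w^\star)\,dx + 2\pi j\int_{x\sim\varphi^2}|x^\intercal(\hat{w}-w^\star)|\,dx,
\end{equation}
exactly as in \Cref{eq:two-terms}.

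The second step is to control the two terms separately using facts already proved inside \Cref{lem:integral-non-unif}. For the first term I would reuse \Cref{eq:cos-abs-non-unif}, obtained there from \Cref{coro:complex-exp-wstar-non-unif}, which gives $\int_{x\sim\varphi^2}\cos^2(2\pi j x^\intercal w^\star)\,dx \le \tfrac12 + \tfrac{3\sqrt{d}}{16\pi R_w R}$; here the constant $3/16$ in place of the uniform $1/8$ is exactly the price of the normalization bound $\int_{-R}^{R}p_k^2(z)\,dz \ge \tfrac{9R}{5}\ge\tfrac{4R}{3}$ supplied by Assumption~\ref{assum:pointwise-close}. For the second term I would invoke \Cref{eq:three-terms-3-non-unif}, namely $\int_{x\sim\varphi^2}|x^\intercal(\hat{w}-w^\star)|\,dx \le \tfrac{3\epsilon_1 dR}{4}$, and then use $j\le D$ to conclude $2\pi j\int_{x\sim\varphi^2}|x^\intercal(\hat{w}-w^\star)|\,dx \le \tfrac{3\pi D d\epsilon_1 R}{2}$.

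Adding the two bounds yields the claimed inequality. Note the argument uses only Assumptions~\ref{assum:pointwise-close} and \ref{assum:bounded-1}, consistent with the hypotheses: Assumption~\ref{assum:even} is not needed here, since the constant term in the $\cos^2$ expansion is handled by $\int_{x\sim\varphi^2}1\,dx = 1$ and the absolute-value term is bounded coordinatewise. I do not expect any genuine obstacle — the only point requiring attention is bookkeeping the non-uniform normalization constant $\prod_{k}\int_{-R}^{R}p_k^2$, and that has already been absorbed into the constants of \Cref{coro:complex-exp-wstar-non-unif} and \Cref{eq:three-terms-3-non-unif}, so the proof is essentially a one-to-one translation of \Cref{lem:integral-upper} with $1/8 \mapsto 3/16$ and $1\mapsto 3/2$ in the two error constants.
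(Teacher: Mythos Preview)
Your proposal is correct and follows the paper's proof essentially verbatim: the same angle-addition decomposition, the same crude bounds $\cos(y)\le 1$, $-\sin(y)\cos(y)\le 1$, $\sin(y)\le|y|$, and the same appeal to \Cref{eq:cos-abs-non-unif} and \Cref{eq:three-terms-3-non-unif} for the two resulting terms. Your remark that Assumption~\ref{assum:even} is unnecessary here is also made explicitly in the paper's proof.
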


\begin{proof}
The proof of this is similar to that of~\Cref{lem:integral,lem:integral-non-unif}.
Using the sum formulas for cosine, we have
\begin{align}
  &\int_{x \sim \varphi^2} \cos(2\pi j x^\intercal \hat{w})\cos(2\pi j x^\intercal w^\star)\,dx\\
  &= \int_{x \sim \varphi^2} \cos(2\pi j x^\intercal(w^\star + (\hat{w} - w^\star)))\cos(2\pi j x^\intercal w^\star)\,dx\\
  &= \int_{x \sim \varphi^2} \left(\cos(2\pi j x^\intercal w^\star)\cos(2\pi j x^\intercal (\hat{w} - w^\star)) - \sin(2\pi j x^\intercal w^\star) \sin(2\pi j x^\intercal (\hat{w} - w^\star))\right)\cos(2\pi j x^\intercal w^\star)\,dx\\
  &\leq \int_{x\sim\varphi^2} \cos^2(2\pi j x^\intercal w^\star) -\sin(2\pi jx^\intercal w^\star)\sin(2\pi j x^\intercal (\hat{w}-w^\star))\cos(2\pi jx^\intercal w^\star)\,dx\\
  &\leq \int_{x\sim\varphi^2} \cos^2(2\pi j x^\intercal w^\star) +\sin(2\pi j x^\intercal (\hat{w}-w^\star))\,dx\\
  &\leq \int_{x\sim\varphi^2} \cos^2(2\pi j x^\intercal w^\star)\,dx + 2\pi j\int_{x \sim\varphi^2} |x^\intercal (\hat{w}-w^\star)|\,dx\label{eq:two-terms-non-unif}.
\end{align}
In the fourth line, we use that $\cos(y) \leq 1$.
In the fifth line, we use that $-\sin(y) \cos(y) \leq 1$.
In the last line, we use that $\sin(y) \leq |y|$.
We want to upper bound both of these terms, which is simple given the proof of \Cref{lem:integral-non-unif}.

Namely, in \Cref{eq:cos-abs-non-unif}, we showed that
\begin{equation}
  \left|\int_{x \sim \varphi^2} \cos^2(2\pi j x^\intercal w^\star)\,dx - \frac{1}{2}\right| \leq \frac{3}{16\pi R} \frac{\sqrt{d}}{R_w}.
\end{equation}
Thus, we can upper bound
\begin{equation}
  \label{eq:two-terms-1-non-unif}
  \int_{x\sim \varphi^2} \cos^2 (2\pi j x^\intercal w^\star)\,dx \leq \frac{1}{2} + \frac{3\sqrt{d}}{16\pi R_w R}
\end{equation}
Note that we have already upper bounded the third term in~\Cref{eq:three-terms-3-non-unif}:
\begin{equation}
  \label{eq:two-terms-2-non-unif}
  2\pi j\int_{x \sim\varphi^2} |x^\intercal (\hat{w}-w^\star)|\,dx \leq \frac{3 \pi j d \epsilon_1 R}{2}\leq \frac{3\pi D d\epsilon_1 R}{2}.
\end{equation}
Note that this part of the proof did not require Assumption~\ref{assum:even}.
Combining \Cref{eq:two-terms-1-non-unif} and \Cref{eq:two-terms-2-non-unif} in \Cref{eq:two-terms-non-unif}, we have
\begin{equation}
  \int_{x \sim \varphi^2} \cos(2\pi j x^\intercal \hat{w})\cos(2\pi j x^\intercal w^\star)\,dx \leq \frac{1}{2} + \frac{3\sqrt{d}}{16\pi R_w R} + \frac{3\pi D d\epsilon_1 R}{2}.
\end{equation}
\end{proof}

\begin{corollary}
\label{coro:integral-upper-non-unif}
Let $\varphi^2 \propto \prod_{k=1}^d p_k^2$ be a probability distribution over $[-R, R]^d$ satisfying Assumptions~\ref{assum:pointwise-close} and \ref{assum:bounded-1} for a truncation parameter $R$.
Let $w^\star \in \mathbb{R}^d$ be unknown with norm $R_w > 0$, and let $\hat{w} \in \mathbb{R}^d$ be an approximation of $w^\star$ with $\norm{\hat{w} - w^\star}_\infty \leq \epsilon_1$.
Let $1 \leq j\leq D$ be an integer, for $D \in \mathbb{N}$ from \Cref{eq:g-tilde}.
Then,
\begin{equation}
  \int\limits_{x \sim \varphi^2} \cos^2(2\pi j x^\intercal \hat{w})\,dx \leq \frac{1}{2} + \frac{3\sqrt{d}}{16\pi R (R_w - \sqrt{d}\epsilon_1)}.
\end{equation}
\end{corollary}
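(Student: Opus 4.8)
The plan is to reduce this to the complex-exponential estimate already established for the non-uniform setting. First I would expand
\[
\cos^2(2\pi j x^\intercal \hat w) = \tfrac12 + \tfrac14 e^{4\pi i j x^\intercal \hat w} + \tfrac14 e^{-4\pi i j x^\intercal \hat w},
\]
integrate term by term against $\varphi^2$, and observe that the two exponential integrals are complex conjugates, so the triangle inequality collapses their contribution to $\tfrac12$ times a single magnitude:
\[
\left|\int_{x\sim\varphi^2}\cos^2(2\pi j x^\intercal \hat w)\,dx - \tfrac12\right| \leq \tfrac12 \left|\int_{x\sim\varphi^2} e^{4\pi i j x^\intercal \hat w}\,dx\right|.
\]

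Next I would invoke \Cref{lem:complex-exp-non-unif}. Although that lemma is stated with exponent $2\pi i x^\intercal \hat w (j-j')$ for $j \neq j'$, its proof uses only that the frequency is an integer of magnitude at least $1$ (to bound $|j-j'|\,|\hat w_k|$ from below); here the effective frequency is $2j \geq 2$, so the identical argument gives $\bigl|\int_{x\sim\varphi^2} e^{4\pi i j x^\intercal \hat w}\,dx\bigr| \leq \tfrac{3}{4\pi R}\tfrac{\sqrt d}{R_w - \sqrt d\,\epsilon_1}$. Substituting into the previous display yields $\bigl|\int \cos^2 - \tfrac12\bigr| \leq \tfrac{3}{16\pi R}\tfrac{\sqrt d}{R_w - \sqrt d\,\epsilon_1}$, and discarding the lower-bound half produces the claimed inequality. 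In fact this identity-plus-bound is exactly \Cref{eq:cos-abs-what-non-unif}, already derived inside the proof of \Cref{coro:integral-non-unif}, so the cleanest write-up just cites that equation and rearranges, in direct parallel with how \Cref{coro:integral-upper} follows from \Cref{eq:cos-abs-what} in the uniform case.

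I do not anticipate a real obstacle: the only inputs are Assumptions~\ref{assum:pointwise-close} and \ref{assum:bounded-1} (both already consumed by \Cref{lem:complex-exp-non-unif}, through $p_k^2 \leq 1$ and $\int_{-R}^{R} p_k^2 \geq \tfrac{9R}{5}$), and Assumption~\ref{assum:even} is not needed for an upper bound. The only mildly delicate point is bookkeeping — confirming the constant $3/16$ and that replacing $j-j'$ by $2j$ does not weaken the estimate — but since that substitution only increases the suppressed frequency, the bound can only improve, so the stated constant is safe. Hence the corollary is immediate from \Cref{eq:cos-abs-what-non-unif}.
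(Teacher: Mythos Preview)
Your proposal is correct and matches the paper's approach exactly: the paper's proof consists of the single sentence ``This follows directly from \Cref{eq:cos-abs-what-non-unif},'' and you have recovered precisely that reasoning, including the explicit identification of \Cref{eq:cos-abs-what-non-unif} and the parallel with \Cref{coro:integral-upper}/\Cref{eq:cos-abs-what} in the uniform case.
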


\begin{proof}
This follows directly from \Cref{eq:cos-abs-what-non-unif}.
\end{proof}

We also have a non-uniform analogue of \Cref{lem:integral2}.
This is similar in spirit to the previous lemmas.

\begin{lemma}
\label{lem:integral2-non-unif}
Let $\varphi^2 \propto \prod_{k=1}^d p_k^2$ be a probability distribution over $[-R, R]^d$ satisfying Assumptions~\ref{assum:pointwise-close} and \ref{assum:bounded-1} for a truncation parameter $R$.
Let $w^\star \in \mathbb{R}^d$ be unknown with norm $R_w > 0$, and let $\hat{w} \in \mathbb{R}^d$ be an approximation of $w^\star$ with $\norm{\hat{w} - w^\star}_\infty \leq \epsilon_1$.
Let $1 \leq j,j' \leq D$ be integers with $j \neq j'$, for $D \in \mathbb{N}$ from \Cref{eq:g-tilde}.
Then,
\begin{equation}
  \left|\int\limits_{x\sim \varphi^2} \cos(2\pi j x^\intercal \hat{w}) \cos(2\pi j' x^\intercal \hat{w}) \,dx\right|\leq \frac{3}{4\pi R}\frac{\sqrt{d}}{R_w - \sqrt{d}\epsilon_1}.
\end{equation}
\end{lemma}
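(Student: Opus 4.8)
The plan is to mirror the proof of \Cref{lem:integral2} exactly, substituting the non-uniform complex-exponential bound \Cref{lem:complex-exp-non-unif} for its uniform counterpart \Cref{lem:complex-exp}. First I would apply the product-to-sum identity $\cos a\cos b = \tfrac12(\cos(a-b) + \cos(a+b))$ to write
\begin{equation}
  \left|\int_{x\sim\varphi^2} \cos(2\pi j x^\intercal\hat{w})\cos(2\pi j' x^\intercal\hat{w})\,dx\right| = \left|\frac12\int_{x\sim\varphi^2}\cos(2\pi x^\intercal\hat{w}(j-j')) + \cos(2\pi x^\intercal\hat{w}(j+j'))\,dx\right|,
\end{equation}
and then bound the two resulting integrals separately by the triangle inequality.

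For the $(j-j')$ term, I would expand $\cos(2\pi x^\intercal\hat{w}(j-j'))$ into the two complex exponentials $e^{\pm 2\pi i x^\intercal\hat{w}(j-j')}$ and apply \Cref{lem:complex-exp-non-unif} to each, using only that $j\neq j'$ forces $|j-j'|\geq 1$; this yields the bound $\tfrac{3}{4\pi R}\cdot\tfrac{\sqrt d}{R_w - \sqrt d\epsilon_1}$ for that term. For the $(j+j')$ term, the same complex-exponential expansion applies, but now $j,j'\geq 1$ and $j\neq j'$ give $|j+j'|\geq 3$, so rerunning the argument inside the proof of \Cref{lem:complex-exp-non-unif} (where the denominator $|j-j'||\hat w_k|$ gets replaced by $|j+j'||\hat w_k|\geq 3|\hat w_k|$) produces the strictly smaller bound $\tfrac{3}{6\pi R}\cdot\tfrac{\sqrt d}{R_w - \sqrt d\epsilon_1} \leq \tfrac{3}{4\pi R}\cdot\tfrac{\sqrt d}{R_w - \sqrt d\epsilon_1}$. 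Adding the two contributions with the overall factor $\tfrac12$ (which cancels the extra factor arising from the $e^{\pm}$ split) recovers exactly the claimed bound.

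There is essentially no real obstacle here: the only care needed is bookkeeping of constants, since the non-uniform setting carries a factor $3/4$ (from $\int_{-R}^{R} p_k^2 \geq 9R/5 \geq 4R/3$ via \Cref{assum:pointwise-close}) where the uniform proof had $1/2$, and one must check that the $(j+j')$ contribution is indeed dominated by the $(j-j')$ contribution so that the final bound does not accrue an extra constant. I would note explicitly, as in \Cref{lem:integral2}, that \Cref{lem:complex-exp-non-unif} only invokes \Cref{assum:pointwise-close} and \Cref{assum:bounded-1}, so this lemma holds under those two assumptions alone, consistent with the hypotheses stated. The corresponding $w^\star$ and sine variants would follow by the same argument using \Cref{coro:complex-exp-wstar-non-unif}, as in the uniform case.
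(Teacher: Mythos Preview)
Your proposal is correct and follows essentially the same approach as the paper: product-to-sum identity, split into $(j-j')$ and $(j+j')$ terms, bound each via complex exponentials using \Cref{lem:complex-exp-non-unif}, and exploit $|j+j'|\geq 3$ to show the second term is dominated. The paper's intermediate constant for the $(j+j')$ cosine term is actually $\tfrac{1}{4\pi R}\cdot\tfrac{\sqrt d}{R_w-\sqrt d\epsilon_1}$ rather than your $\tfrac{3}{6\pi R}$, but since both are $\leq \tfrac{3}{4\pi R}\cdot\tfrac{\sqrt d}{R_w-\sqrt d\epsilon_1}$ the final bound is unaffected.
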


\begin{proof}
The proof follows similarly to that of \Cref{lem:integral2}.
Using the product formulas for cosine, we can write the integral as
\begin{equation}
  \label{eq:sum-prod2-non-unif}
  \left|\int\limits_{x\sim \varphi^2} \cos(2\pi j x^\intercal \hat{w}) \cos(2\pi j' x^\intercal \hat{w}) \,dx\right| = \left|\frac{1}{2} \int_{x\sim \varphi^2} \cos(2\pi x^\intercal \hat{w}(j -j')) + \cos(2\pi x^\intercal \hat{w} (j + j'))\,dx\right|.
\end{equation}
We can bound each of the integrals on the right hand side similarly.
Starting with the first term, we can write it in terms of complex exponentials
\begin{align}
  \label{eq:complex-sum-non-unif}
  \left|\int_{x\sim \varphi^2} \cos(2\pi x^\intercal \hat{w}(j -j'))\,dx\right| \leq \frac{1}{2}\left| \int_{x \sim \varphi^2} e^{2\pi i x^\intercal \hat{w} (j-j')}\,dx \right| + \frac{1}{2}\left| \int_{x \sim \varphi^2} e^{2\pi i x^\intercal \hat{w} (j' - j)}\,dx \right|
\end{align}
Both terms in \Cref{eq:complex-sum-non-unif} can be bounded via \Cref{lem:complex-exp-non-unif}.
Thus, this bounds the first term in \Cref{eq:sum-prod2-non-unif} as
\begin{equation}
  \label{eq:sum-prod2-1-non-unif}
  \left|\int_{x\sim \varphi^2} \cos(2\pi x^\intercal \hat{w}(j -j'))\,dx\right| \leq \frac{3}{4\pi R}\frac{\sqrt{d}}{R_w - \sqrt{d}\epsilon_1}.
\end{equation}
We can similarly bound the second term in \Cref{eq:sum-prod2-non-unif}.
Namely, the argument is the same as the above and \Cref{lem:integral2} so that we have
\begin{align}
  \left|\int_{x \sim \varphi^2} e^{2\pi i x^\intercal  \hat{w}(j+j')}\,dx\right| &\leq \frac{3}{4R}\frac{1}{\pi |j + j'||\hat{w}_k|}\\
  &\leq \frac{1}{4R}\frac{1}{\pi |\hat{w}_k|},
\end{align}
where since $j \neq j'$ and $j,j' \geq 1$, then $|j + j'| \geq 3$. The rest of the bound follows the same argument.
Then, we obtain
\begin{equation}
  \left|\int_{x\sim \varphi^2} \cos(2\pi x^\intercal \hat{w}(j +j'))\,dx\right| \leq \frac{1}{4\pi R}\frac{\sqrt{d}}{R_w - \sqrt{d}\epsilon_1} \leq \frac{3}{4\pi R}\frac{\sqrt{d}}{R_w - \sqrt{d}\epsilon_1}.
\end{equation}
Thus, combined with \Cref{eq:sum-prod2-1-non-unif} in \Cref{eq:sum-prod2-non-unif}, we have
\begin{equation}
  \left|\int\limits_{x\sim \varphi^2} \cos(2\pi j x^\intercal \hat{w}) \cos(2\pi j' x^\intercal \hat{w}) \,dx\right| \leq \frac{3}{4\pi R}\frac{\sqrt{d}}{R_w - \sqrt{d}\epsilon_1}.
\end{equation}
\end{proof}

By essentially the same proof, we can obtain a similar upper bound replacing $\hat{w}$ with $w^\star$.

\begin{corollary}
\label{coro:integral2-wstar-non-unif}
Let $\varphi^2 \propto \prod_{k=1}^d p_k^2$ be a probability distribution over $[-R, R]^d$ satisfying Assumptions~\ref{assum:pointwise-close} and \ref{assum:bounded-1} for a truncation parameter $R$.
Let $w^\star \in \mathbb{R}^d$ be unknown with norm $R_w > 0$, and let $\hat{w} \in \mathbb{R}^d$ be an approximation of $w^\star$ with $\norm{\hat{w} - w^\star}_\infty \leq \epsilon_1$.
Let $1 \leq j,j' \leq D$ be integers with $j \neq j'$, for $D \in \mathbb{N}$ from \Cref{eq:g-tilde}.
Then,
\begin{equation}
  \left|\int\limits_{x\sim \varphi^2} \cos(2\pi j x^\intercal w^\star) \cos(2\pi j' x^\intercal w^\star) \,dx\right|\leq \frac{3}{4\pi R}\frac{\sqrt{d}}{R_w}.
\end{equation}
\end{corollary}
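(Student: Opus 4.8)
The plan is to carry over the proof of Lemma~\ref{lem:integral2-non-unif} essentially verbatim, making only two substitutions: every $\hat{w}$ becomes $w^\star$, and every appeal to Lemma~\ref{lem:complex-exp-non-unif} is replaced by an appeal to Corollary~\ref{coro:complex-exp-wstar-non-unif}. Since $w^\star$ is exact rather than an $\epsilon_1$-perturbation of the true vector, this drops the $\sqrt{d}\epsilon_1$ correction from the denominator, which is precisely the difference between the two stated bounds.

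Concretely, first I would apply the product-to-sum identity to write $\cos(2\pi j x^\intercal w^\star)\cos(2\pi j' x^\intercal w^\star) = \tfrac12\bigl(\cos(2\pi x^\intercal w^\star(j-j')) + \cos(2\pi x^\intercal w^\star(j+j'))\bigr)$, so that the integral to be bounded splits into a difference-frequency piece and a sum-frequency piece. Next I would expand each of these two cosine integrals into a pair of complex-exponential integrals, e.g. $\int_{x\sim\varphi^2}\cos(2\pi x^\intercal w^\star(j-j'))\,dx = \tfrac12\int_{x\sim\varphi^2} e^{2\pi i x^\intercal w^\star(j-j')}\,dx + \tfrac12\int_{x\sim\varphi^2} e^{2\pi i x^\intercal w^\star(j'-j)}\,dx$, and bound each of the four resulting exponential integrals with Corollary~\ref{coro:complex-exp-wstar-non-unif}. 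For the difference-frequency exponentials the corollary gives $\tfrac{3}{4\pi R}\tfrac{\sqrt d}{R_w}$ directly (using $|j-j'|\ge 1$); for the sum-frequency exponentials the same argument yields a bound carrying $|j+j'|\ge 3$ in the denominator, hence at most $\tfrac{1}{4\pi R}\tfrac{\sqrt d}{R_w}\le \tfrac{3}{4\pi R}\tfrac{\sqrt d}{R_w}$. Reassembling the $\tfrac12$ factors from the two expansions, the difference-frequency cosine integral is at most $\tfrac{3}{4\pi R}\tfrac{\sqrt d}{R_w}$ and likewise for the sum-frequency one, and the outer $\tfrac12$ from the product-to-sum step then delivers the claimed inequality.

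I do not expect a genuine obstacle here; the proof is routine bookkeeping. The only point that warrants a sentence of justification is that Corollary~\ref{coro:complex-exp-wstar-non-unif} is itself proved for integration against $\varphi^2 \propto \prod_{k=1}^d p_k^2$ and uses only Assumptions~\ref{assum:pointwise-close} and \ref{assum:bounded-1} (no evenness), so the hypotheses of the present corollary are exactly what is needed. It is worth recording explicitly in the write-up which coordinate $k$ one selects inside that corollary — a coordinate with $|w^\star_k|\ge R_w/\sqrt d$, which exists because $\norm{w^\star}_2 = R_w$ — since all other coordinates are bounded trivially by their per-coordinate normalizing integrals $\int_{-R}^{+R}p_k^2(z)\,dz$, and these are in turn lower-bounded by $\tfrac43 R$ via Assumption~\ref{assum:pointwise-close}, which is what converts the naive $\tfrac1{2}$ prefactor into the $\tfrac3{4}$ appearing in the bound.
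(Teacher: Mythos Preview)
Your proposal is correct and matches the paper's approach exactly: the paper simply states that the corollary follows ``by essentially the same proof'' as Lemma~\ref{lem:integral2-non-unif}, replacing $\hat{w}$ with $w^\star$ (and hence invoking Corollary~\ref{coro:complex-exp-wstar-non-unif} in place of Lemma~\ref{lem:complex-exp-non-unif}). Your write-up actually supplies more detail than the paper does.
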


\begin{corollary}
\label{coro:integral2-wstar-sin-non-unif}
Let $\varphi^2 \propto \prod_{k=1}^d p_k^2$ be a probability distribution over $[-R, R]^d$ satisfying Assumptions~\ref{assum:pointwise-close} and \ref{assum:bounded-1} for a truncation parameter $R$.
Let $w^\star \in \mathbb{R}^d$ be unknown with norm $R_w > 0$, and let $\hat{w} \in \mathbb{R}^d$ be an approximation of $w^\star$ with $\norm{\hat{w} - w^\star}_\infty \leq \epsilon_1$.
Let $1 \leq j,j' \leq D$ be integers with $j \neq j'$, for $D \in \mathbb{N}$ from \Cref{eq:g-tilde}.
Then,
\begin{equation}
  \left|\int\limits_{x\sim \varphi^2} \cos(2\pi j x^\intercal w^\star) \sin(2\pi j' x^\intercal w^\star) \,dx\right|\leq \frac{3}{4 \pi R}\frac{\sqrt{d}}{R_w}.
\end{equation}
\end{corollary}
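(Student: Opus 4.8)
The plan is to reduce the statement to the oscillatory-integral bound already established as \Cref{coro:complex-exp-wstar-non-unif}, exactly mirroring how the uniform analogue \Cref{coro:integral2-wstar-sin} was reduced to \Cref{coro:complex-exp-wstar}. First I would apply the product-to-sum identity $\cos\theta\,\sin\phi = \tfrac12\bigl(\sin(\theta+\phi) + \sin(\phi-\theta)\bigr)$ with $\theta = 2\pi j\, x^\intercal w^\star$ and $\phi = 2\pi j'\, x^\intercal w^\star$, so that the integral in question is $\tfrac12$ times the sum of $\int_{x\sim\varphi^2}\sin(2\pi(j+j')x^\intercal w^\star)\,dx$ and $\int_{x\sim\varphi^2}\sin(2\pi(j'-j)x^\intercal w^\star)\,dx$. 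Then I would rewrite each sine through $\sin\psi = \tfrac{1}{2i}(e^{i\psi}-e^{-i\psi})$, turning everything into a linear combination, with coefficients of modulus $\tfrac12$, of the integrals $\int_{x\sim\varphi^2} e^{2\pi i\, x^\intercal w^\star m}\,dx$ for the frequencies $m \in \{\pm(j+j'),\,\pm(j'-j)\}$.

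The key step is then to invoke \Cref{coro:complex-exp-wstar-non-unif} on each of these four integrals. Since $j\neq j'$ and $j,j'\ge 1$, every such $m$ has $|m|\ge 1$, so the corollary applies; moreover $|j+j'|\ge 3$, so by replacing the bound $|j-j'|\ge 1$ used inside that corollary's proof with $|j+j'|\ge 3$ one gains an extra factor $1/3$ on the two $j+j'$ terms, exactly as in \Cref{lem:integral2} and \Cref{lem:integral2-non-unif}. Collecting: the $\pm(j'-j)$ contribution is at most $\tfrac{3}{4\pi R}\tfrac{\sqrt d}{R_w}$ and the $\pm(j+j')$ contribution is at most $\tfrac{1}{4\pi R}\tfrac{\sqrt d}{R_w}$, and combining these with the $\tfrac12$ coefficients and the outer $\tfrac12$ from the product-to-sum step leaves the total dominated by $\tfrac{3}{4\pi R}\tfrac{\sqrt d}{R_w}$, which is the claimed bound. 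The slack (the $j+j'$ piece being a third of the size) is precisely what absorbs the constant bookkeeping, so no sharper estimate is needed.

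There is essentially no substantive obstacle here: the only care required is to confirm that \Cref{coro:complex-exp-wstar-non-unif} is legitimately applicable to each frequency that appears — which holds because its statement and proof only use $|j-j'|\ge 1$, valid for all of $\pm(j'-j)$ and $\pm(j+j')$ — and to track the constants so that the final bound lands at $\tfrac{3}{4\pi R}\tfrac{\sqrt d}{R_w}$ rather than something marginally larger. In particular, note that \Cref{assum:even} is \emph{not} needed for this bound, in contrast to \Cref{lem:integral-non-unif}; only \Cref{assum:pointwise-close} and \Cref{assum:bounded-1} are invoked, and those are exactly the hypotheses already consumed inside \Cref{coro:complex-exp-wstar-non-unif}. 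Thus the whole argument is ``the same proof as \Cref{lem:integral2-non-unif} and \Cref{coro:integral2-wstar-non-unif}, using the product-to-sum identity for $\cos\theta\,\sin\phi$ in place of the one for $\cos\theta\,\cos\phi$ and writing the resulting sines, rather than cosines, as complex exponentials.''
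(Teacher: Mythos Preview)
Your proposal is correct and follows essentially the same approach as the paper: apply the product-to-sum identity to turn $\cos\cdot\sin$ into a sum of sines, rewrite those sines as complex exponentials, and then invoke \Cref{coro:complex-exp-wstar-non-unif} on each exponential term (using $|j+j'|\ge 3$ for the extra slack). The paper's own proof is precisely this, stated as ``the same proof as \Cref{lem:integral2-non-unif} and \Cref{coro:integral2-wstar-non-unif}, using \Cref{coro:complex-exp-wstar-non-unif} instead of \Cref{lem:complex-exp-non-unif}.''
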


\begin{proof}
This follows by the same proof as \Cref{lem:integral2-non-unif} and \Cref{coro:integral2-wstar-non-unif}.
In particular, using the sum-product formulas for sine and cosine, we have
\begin{equation}
  \left|\int_{x\sim \varphi^2} \cos(2\pi j x^\intercal w^\star) \sin(2\pi j' x^\intercal w^\star) \,dx\right| = \left|\frac{1}{2}\int_{x\sim\varphi^2} \sin(2\pi (j+j') x^\intercal w^\star) + \sin(2\pi (j'-j)x^\intercal w^\star)\,dx \right|.
\end{equation}
Then, writing in terms of complex exponentials, we have
\begin{equation}
  \left|\int_{x\sim\varphi^2} \sin(2\pi (j'-j)x^\intercal w^\star)\,dx \right| \leq \frac{1}{|2i|}\left|\int_{x\sim\varphi^2} e^{2\pi ix^\intercal w^\star (j'-j)}\,dx \right| + \frac{1}{|2i|}\left|\int_{x\sim\varphi^2} e^{2\pi ix^\intercal w^\star (j-j')}\,dx \right|.
\end{equation}
The rest of the proof is the same as \Cref{lem:integral2-non-unif}, using \Cref{coro:complex-exp-wstar-non-unif} instead of \Cref{lem:complex-exp-non-unif} to bound the complex exponential terms.
\end{proof}

Finally, we need another integral bound that is also similar to \Cref{lem:integral2-non-unif}.
This is the non-uniform analogue of \Cref{coro:integral2-wstar-hat}.
The proof of this result follows easily following the steps of \Cref{coro:integral2-wstar-non-unif} and \Cref{coro:integral2-wstar-hat}.

\begin{corollary}
\label{coro:integral2-wstar-hat-non-unif}
Let $\varphi^2 \propto \prod_{k=1}^d p_k^2$ be a probability distribution over $[-R, R]^d$ satisfying Assumptions~\ref{assum:pointwise-close} and \ref{assum:bounded-1} for a truncation parameter $R$.
Let $w^\star \in \mathbb{R}^d$ be unknown with norm $R_w > 0$, and let $\hat{w} \in \mathbb{R}^d$ be an approximation of $w^\star$ with $\norm{\hat{w} - w^\star}_\infty \leq \epsilon_1$, where $\epsilon_1 \leq R_w/(D\sqrt{d})$.
Let $1 \leq j,j' \leq D$ be integers with $j \neq j'$, for $D \in \mathbb{N}$ from \Cref{eq:g-tilde}.
Then,
\begin{equation}
  \left|\int\limits_{x\sim \varphi^2} \cos(2\pi j x^\intercal w^\star) \cos(2\pi j' x^\intercal \hat{w}) \,dx\right|\leq \frac{3}{4\pi R}\frac{\sqrt{d}}{R_w - D\sqrt{d}\epsilon_1}.
\end{equation}
\end{corollary}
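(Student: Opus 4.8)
The plan is to mirror the uniform-case proof of \Cref{coro:integral2-wstar-hat}, substituting the non-uniform exponential estimates developed earlier in this subsection for their uniform counterparts, exactly as the paper's remark suggests. First I would apply the product-to-sum identity $\cos(2\pi j x^\intercal w^\star)\cos(2\pi j' x^\intercal \hat{w}) = \tfrac{1}{2}\cos(2\pi(j x^\intercal w^\star - j' x^\intercal \hat{w})) + \tfrac{1}{2}\cos(2\pi(j x^\intercal w^\star + j' x^\intercal \hat{w}))$, so that the integral in question splits into a ``difference'' piece and a ``sum'' piece, each to be bounded separately. Writing each remaining cosine as $\tfrac{1}{2}(e^{i\theta}+e^{-i\theta})$, it then suffices to bound quantities of the form $\big|\int_{x\sim\varphi^2} e^{2\pi i(j x^\intercal w^\star - j' x^\intercal \hat{w})}\,dx\big|$ and the analogous sum-phase version.

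Next I would use that $\varphi^2 \propto \prod_k p_k^2$ is a product measure, so each such exponential integral factors over coordinates as $\tfrac{1}{G}\prod_{k=1}^d \int_{-R}^{+R} p_k^2(x_k)\,e^{2\pi i(j w_k^\star - j'\hat{w}_k)x_k}\,dx_k$ with $G = \prod_k \int_{-R}^{+R} p_k^2(z)\,dz$. Picking a coordinate $k$ with $|w_k^\star| \ge R_w/\sqrt{d}$ (which exists since $\norm{w^\star}_2 = R_w$), I would bound every other coordinate factor trivially by $\int_{-R}^{+R} p_k^2$, and for the distinguished coordinate use $p_k^2 \le 1$ (Assumption~\ref{assum:bounded-1}) together with the lower bound $\int_{-R}^{+R} p_k^2(z)\,dz \ge 4R/3$ coming from Assumption~\ref{assum:pointwise-close} to reduce to the unweighted integral $\int_{-R}^{+R} e^{2\pi i(j w_k^\star - j'\hat{w}_k)x_k}\,dx_k$, whose modulus equals $|\sin(2\pi R(j w_k^\star - j'\hat{w}_k))|/(\pi|j w_k^\star - j'\hat{w}_k|) \le 1/(\pi|j w_k^\star - j'\hat{w}_k|)$. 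This is precisely the maneuver already executed in \Cref{lem:complex-exp-non-unif}, \Cref{coro:complex-exp-wstar-non-unif}, and \Cref{coro:complex-exp-wstar-verif-non-unif}.

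It then remains to lower bound $|j w_k^\star - j'\hat{w}_k|$, which I would handle exactly as in \Cref{coro:integral2-wstar-hat}: combining $|\hat{w}_k - w_k^\star| \le \epsilon_1$, $|w_k^\star| \ge R_w/\sqrt{d}$, $1 \le j,j' \le D$, $j \ne j'$, and $\epsilon_1 \le R_w/(D\sqrt{d})$, a reverse-triangle-inequality argument (run separately for $j' > j$ and $j' < j$, using $j \ne j'$ to keep $|j-j'| \ge 1$) yields $|j w_k^\star - j'\hat{w}_k| \ge R_w/\sqrt{d} - D\epsilon_1$, and the constraint on $\epsilon_1$ guarantees this is positive. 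The sum-phase term is easier still because $|j+j'| \ge 3$, so it contributes a strictly smaller quantity. Assembling the constant factors ($\tfrac{1}{2}$ from the product-to-sum identity, $\tfrac{1}{2}$ from expanding each cosine into two exponentials, and $\le \tfrac{3}{4R}$ from $1/\int_{-R}^{+R} p_k^2$) then gives the stated bound $\tfrac{3}{4\pi R}\cdot\tfrac{\sqrt{d}}{R_w - D\sqrt{d}\epsilon_1}$.

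I do not expect a genuine obstacle here: the statement is a routine adaptation of the uniform \Cref{coro:integral2-wstar-hat} to the product-distribution setting, and every non-uniform ingredient (factoring over the product structure, lower-bounding $\int_{-R}^{+R} p_k^2$) has already been established in \Cref{lem:complex-exp-non-unif}, \Cref{coro:complex-exp-wstar-non-unif}, and \Cref{coro:complex-exp-wstar-verif-non-unif}. The only point requiring mild care is the case split in the lower bound on $|j w_k^\star - j'\hat{w}_k|$, which must use $j \ne j'$ together with $\epsilon_1 \le R_w/(D\sqrt{d})$ to keep the denominator from vanishing, and this is identical to what is already done in the uniform analogue.
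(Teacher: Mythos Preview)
Your proposal is correct and follows essentially the same approach as the paper, which simply remarks that the result follows by combining the steps of \Cref{coro:integral2-wstar-non-unif} and \Cref{coro:integral2-wstar-hat}. Your write-up is a faithful and more detailed instantiation of exactly that combination.
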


\end{document}